\definecolor{vermelho}{RGB}{208,2,27}    
\definecolor{verde}{RGB}{126,211,33} 
\newcommand{\Z}{\mathbb{Z}}
\newcommand{\E}{\mathbb{E}}
\newcommand{\V}{\mathbb{V}}
\newcommand{\calW}{\mathcal{W}}
\newcommand{\lab}{\mathrm{lab}}     
\newcommand{\supp}{\text{supp}\;}
\newcommand{\diam}{\mathrm{diam}}
\newcommand{\dis}{\mathrm{dist}}
\newcommand{\I}{\mathrm{I}}
\newcommand{\rmh}{\mathrm{h}}
\newcommand{\Sp}{\mathrm{sp}}
\newcommand{\R}{\mathcal{R}}
\newcommand{\fint}{\partial_{\mathrm{in}}}
\newcommand{\fext}{\partial_{\mathrm{ex}}}
\newcommand{\C}{\mathscr{C}}
\newcommand{\calP}{\mathcal{P}}
\newcommand{\lambdadec}{\widehat{\bm{\lambda}}}
\newcommand{\IS}{\mathrm{IS}}
\newcommand{\vertiii}[1]{{\left\vert\kern-0.25ex\left\vert\kern-0.25ex\left\vert #1 
    \right\vert\kern-0.25ex\right\vert\kern-0.25ex\right\vert}}
\def\H+{{\mathbb{H}^d_+}}
\title{}
\def\supp{\mathop{\textrm{\rm supp}}\nolimits}            
\def\d{\mathop{\textrm{\rm d}}\nolimits}                  
\def\Int{\mathop{\textrm{\rm Int}}\nolimits}                
\def\Ext{\mathop{\textrm{\rm Ext}}\nolimits}                  
\newcommand{\be}{\begin{equation}}
\newcommand{\ee}{\end{equation}}
\numberwithin{equation}{section}
\newtheorem*{theorem*}        {Theorem}
\newtheorem*{conjecture*}   {Conjecture}
\newtheorem{theorem}           {Theorem}[section]
\newtheorem{lemma}[theorem]{Lemma}
\newtheorem*{lemma*}          {Lemma}
\newtheorem{definition}[theorem]{Definition}
\newtheorem{corollary}[theorem]{Corollary}
\newtheorem{proposition}[theorem]{Proposition}
\newtheorem{remark}[theorem]{Remark}
\DeclareMathOperator{\h}{\bm{h}}
\DeclareMathOperator{\s}{\mathrm{sp}}
\DeclarePairedDelimiter\ceil{\lceil}{\rceil} 
\DeclarePairedDelimiter\floor{\lfloor}{\rfloor} 
\begin{document}
\frontmatter 
\fancyhead[RO]{{\footnotesize\rightmark}\hspace{2em}\thepage}
\setcounter{tocdepth}{2}
\fancyhead[LE]{\thepage\hspace{2em}\footnotesize{\leftmark}}
\fancyhead[RE,LO]{}
\fancyhead[RO]{{\footnotesize\rightmark}\hspace{2em}\thepage}

\onehalfspacing  

\thispagestyle{empty}
\begin{center}
    \vspace*{2.3cm}
    \textbf{\Large{Phase Transitions in Ising models: the Semi-infinite with decaying field and the Random Field Long-range}}\\
    
    \vspace*{1.2cm}
    \Large{Jo\~{a}o Maia}
    
    \vskip 2cm
    \textsc{
    Thesis presented\\[-0.25cm] 
    to the \\[-0.25cm]
    Institute of Mathematics and Statistics\\[-0.25cm]
    of the\\[-0.25cm]
    University of S\~ao Paulo\\[-0.25cm]
    in partial fulfillment of the requirements\\[-0.25cm]
    for the degree\\[-0.25cm]
    of\\[-0.25cm]
    Doctor of Science}
    
    \vskip 1.5cm
    Program: Applied Mathematics\\
    Advisor: Prof. Dr. Rodrigo Bissacot\\ 

   	\vskip 1cm
    \normalsize{During the development of this work the author was supported by FAPESP grant 
2018/26698-8.}
    
    \vskip 0.5cm
    \normalsize{S\~ao Paulo, March 2024}
\end{center}

\newpage
\thispagestyle{empty}
    \begin{center}
        \vspace*{2.3 cm}
        \textbf{\Large{Phase Transitions in Ising models: the Semi-infinite with decaying field and the Random Field Long-range}}\\
        \vspace*{2 cm}
    \end{center}

    \vskip 2cm

    \begin{FlushRight}
     This is the final version of this thesis and it contains corrections \\
     and changes suggested by the examiner committee during the\\
     defense of the original work realized on February 7th, 2024.\\
     A copy of the original version of this text is available at the \\
     Institute of Mathematics and Statistics of the University of S\~ao Paulo.
    
    \vskip 2cm

   \end{FlushRight}
    \vskip 4.2cm

    \begin{quote}
    \noindent Examination Board:
    
    \begin{itemize}
		\item Prof. Dr. Rodrigo Bissacot - IME-USP (President)
		\item Prof. Dr. Luiz Renato Goncalves Fontes  - IME-USP
		\item Prof. Dr. Christof Kuelske - Ruhr University Bochum
        \item Prof. Dr. Pierre Picco - CNRS
        \item Prof. Dr. Leandro Chiarini Medeiros - Durham University
    \end{itemize}
      
    \end{quote}
\pagebreak

\pagenumbering{roman}     



\chapter*{Acknowledgements}
\epigraph{Someone once asked me what I would do if I had only 30 days left to live. I promptly answer: for 29 days, I would live my life as usual, going to work, and coming back home. On the last day, I would gather all my friends and family in my house and throw them a barbecue. 
}{\textbf{Antônio Maia}}

To understand the meaning of the quotation opening this section, we need to know more about its author. Antônio Maia was a man with humble beginnings. Born in a village in the Northeast part of Brazil, his childhood was summarized by working as a farmer, in a land where nothing grows, under the scorching sun of the semiarid climate of the region. At the time, in the 70's, this whole region of Brazil was very poor. In the city where he was born, for example, there were not any hospitals, schools, not even electric energy. At 18 years old, he moved to São Paulo, an emerging city at the time. Arriving here, he moved to a small house, living together with around 10 other people, relatives of his that had moved to the big city earlier. Living with no money in a city like São Paulo was no easy task, so he had to accept any kind of work available, from being a waiter, a construction worker, salesman. 

This is to say that most of the days of his life were a combination of hard work and tough decisions. Despite that, in the quote, he claimed he would gladly live all of those days as if they were his last. These words taught me to embrace the difficulties of life, to find happiness in them, and to not see them as the means to an end. This mindset resonated with me throughout my Ph.D. As you may imagine, concluding a Ph.D. in mathematics, especially in a country like Brazil, is not an easy task. The frustrations and uncertainties of this path are tremendous, and so are the effort and the dedication needed to accomplish it. This beautiful idea of finding joy in the difficulties of life was what helped me push forward, in my journey from an undergraduate student to a researcher, a scientist. 

The author of the quotation was my father, who tragically passed away a couple of months before my Ph.D. defense. To him, I own who I am.

My family was the pillar that supported me all those years, and to them, I am profoundly grateful. To my mother Ivoneide Teixeira, who teaches me every day, through example, the importance of dedicating and sacrificing yourself to your work, a value I carry deeply in me. She, like my father, does not come from a privileged background and had to struggle and work hard all her life. Her dedication to work and her unbreakable spirit, she never stops fighting for what she wants, are core values that I try to carry with me. She is the person I respect and own the most. I also thank my sisters, Fernanda Maia and Natália Maia, for the support they always gave me, for being an example to me in several ways, and for the joy you always brought to my life. It is a privilege to have you as my siblings. 

The last member of my family is my girlfriend, Jessica Dantas. We have been together since high school, so we have grown beside one another. Through the hardships and ups and downs of a relationship, I loved and admired her every step of the way. Her life was, in many ways, much tougher than mine. That never made her shine away from any challenge thrown at her. And after overcoming them, she still finds the will to pursue her future goals, which are plenty. She is the person I admire the most, and It is a pleasure and a privilege to have beside me such an amazing person. Her love and care are, to me, my most valuable gifts. I am eager to have you by my side for many more years to come.  

I would like also to thank all the friends I made along the way. My undergraduate colleagues Amanda Barros, Aline Alves, and Diego Mucciolo, as well as the friends I met after like Lucas Affonso, Thiago Raszeja, Rodrigo Cabral, Kelvyn Welsch, and João Rodrigues. All of those are tremendous mathematicians and researchers, with whom I had the pleasure to work and share the most interesting discussions. 

A special thanks goes to my advisor, Rodrigo Bissacot. We have spent many nights at the University, together with Lucas Affonso, working on the content of this thesis, discussing interesting mathematical problems, and also not-so-interesting problems regarding doing science in Brazil. The word "advisor" barely scratches the actual work that needs to be put into forming a Ph.D. in a country like Brazil. The scientific advising is just a fraction of the actual work. While helping an undergrad student become a fully formed scientist by learning the theory about a subject, and producing new and exciting results, you also have to serve as a therapist, a family counselor, and a founding agency even. Rodrigo does all that, and much more. More impressively, he does all that for all his students, in equal measure. He does so thanks to his incessant hard work, driven by his love for science and the university. For all those reasons, I am glad to have Rodrigo not only as an advisor but as a close friend. 

I would also like to thank the Examination Board for accepting to come to Brazil to participate in my Ph.D. defense, and also for the improvements suggested to this thesis. 

Finally, I want to thank the National Centre of Competence in Research SwissMAP, for accepting me in the SwissMAP Masterclass and founding my stay in Switzerland. I also thank the Fundação ao Amparo à Pesquisa do Estado de São Paulo (FAPESP) for the financial support during the development of the work presented in this thesis through the grant 2018/26698-8.
\chapter*{Resumo}

    \noindent MAIA, J. \textbf{Transição de fase em modelos de Ising: o semi-infinito com campo decaindo e o longo-alcance com campo aleatório}. 
2023. 
Tese (Doutorado) - Instituto de Matem\'atica e Estat\'istica,
Universidade de S\~ao Paulo, S\~ao Paulo, 2023.
\\

Nesta tese apresentamos resultados referentes ao problema de transição de fase para dois modelos: o modelo de Ising semi-infinito com um campo decaindo e o modelo de Ising de longo-alcance com um campo aleatório. 

No modelo de Ising semi-infinito, o parâmetro relevante na existência de transição de fase é $\lambda$, a interação entre os spins do sistema e a parede que divide o \textit{lattice}. Introduzindo um campo magnético que da forma $h_i = \lambda |i_d|^{-\delta}$ com $\delta>1$, que decai conforme se afasta da parede, conseguimos mostrar que, em baixas temperaturas, o modelo ainda apresenta um ponto de criticalidade $0< \lambda_c(J,\delta)$ satisfazendo: para $0\leq \lambda <\lambda_c(J,\delta)$ existem múltiplos estados de Gibbs e para $\lambda>\lambda_c(J,\delta)$ temos unicidade.  Mostramos ainda que quando $\delta<1$, $\lambda_c(J, \delta)=0$ e portanto temos sempre unicidade. 

No modelo de Ising de longo-alcance com campo aleatório, estendemos um argumento de Ding e Zhuang do modelo de primeiros vizinhos para o modelo com interação de longo-alcance. Combinando uma generalização dos contornos de Fr\"ohlich-Spencer, proposta por Affonso, Bissacot, Endo e Handa, com um procedimento de \textit{coarse-graining} introduzido por Fisher, Fr\"ohlich, and Spencer, conseguimos provar que o modelo de Ising com interação $J_{xy}=|x-y|^{- \alpha}$ com $\alpha > d$ em dimensão $d\geq 3$ apresenta transição de fase. Consideramos um campo aleatório dado por uma coleção i.i.d com distribuição Gaussiana ou Bernoulli. Nossa prova constitui uma prova alternativa que não usa grupos de renormalização (GR), uma vez que Bricmont e Kupiainen afirmaram que seus resultados usando GR funcionam para qualquer modelo que possua um sistema de contornos. \\

\noindent \textbf{Palavras-chave:} Transição de fase, modelo de Ising semi-infinito, energia livre de superfície, campo externo não-homogêneo, modelo de Ising longo-alcance, campo aleatório, contornos, análise multiescala, coarse-graining, mecânica estatística clássica

\chapter*{Abstract}

    \noindent MAIA, J. \textbf{Phase Transitions in Ising models: the Semi-infinite with decaying field and the Random Field Long-range}. 
2023. 
PhD Thesis - Institute of Mathematics and Statistics,
University of S\~ao Paulo, S\~ao Paulo, 2023. \\

In this thesis, we present results on phase transition for two models: the semi-infinite Ising model with a decaying field, and the long-range Ising model with a random field.

We study the semi-infinite Ising model with an external field $h_i = \lambda |i_d|^{-\delta}$, $\lambda$ is the wall influence, and $\delta>0$. This external field decays as it gets further away from the wall. We are able to show that when $\delta>1$ and $\beta > \beta_c(d)$, there exists a critical value $0< \lambda_c:=\lambda_c(\delta,\beta)$ such that, for $\lambda<\lambda_c$ there is phase transition and for $\lambda>\lambda_c$ we have uniqueness of the Gibbs state. In addition, when $\delta<1$ we have only one Gibbs state for any positive $\beta$ and $\lambda$.

For the model with a random field, we extend the recent argument by Ding and Zhuang from nearest-neighbor to long-range interactions and prove the phase transition in the class of ferromagnetic random field Ising models. Our proof combines a generalization of Fr\"ohlich-Spencer contours to the multidimensional setting proposed by Affonso, Bissacot, Endo and Handa, with the coarse-graining procedure introduced by Fisher, Fr\"ohlich, and Spencer. Our result shows that the Ding-Zhuang strategy is also useful for interactions $J_{xy}=|x-y|^{- \alpha}$ when $\alpha > d$ in dimension $d\geq 3$ if we have a suitable system of contours, yielding an alternative proof that does not use the Renormalization Group Method (RGM), since Bricmont and Kupiainen claimed that the RGM should also work on this generality. We can consider i.i.d. random fields with Gaussian or Bernoulli distributions. \\

\noindent \textbf{Keywords:} Phase transition, semi-infinite Ising model, surface free energy, inhomogeneous external field, Long-range random field Ising model, random field, contours, multiscale analysis, coarse-graining, classical statistical mechanics.

\tableofcontents    



\listoffigures            

\mainmatter

\singlespacing              

\chapter*{Introduction}
\addcontentsline{toc}{chapter}{Introduction}
\markboth{INTRODUCTION}{}
The problem of the presence or absence of phase transitions is key in statistical mechanics. The Ising model is one of the most studied ones on this matter. It was introduced by Lenz and studied by Ising \cite{Ising_25}. In the Ising model, we represent the positions of the particles by the points of the lattice $\Z^d$. Each site is associated with a spin $\sigma_i$, taking values $+1$ or $-1$. Configurations are elements of $\Omega\coloneqq\{-1,+1\}^{\mathbb{Z}^d}$. The parameters of importance are the interaction $\bm{J}=(J_{ij})_{i,j\in\mathbb{Z}^d}$, the external field $\bm{h}=(h_i)_{i\in\mathbb{Z}^d}$, and the inverse temperature $\beta$. The family $\bm{J}$ represents that the energy of the interaction of two sites $i$ and $j$ is $-J_{ij}$ if the spins are $\sigma_i$ and $\sigma_j$ aligned, i.e. if having the same sign, and is $J_{ij}$ otherwise. The most studied Ising model is the one with nearest neighbor ferromagnetic interaction, for which $J_{ij}= J$ whenever $i$ and $j$ are neighbors on the graph and are zero otherwise. The external field $\bm{h}$ represents a quantity that favors each spin to align in the same direction as the field, and so the formal Hamiltonian of the model is given by
\begin{equation*}
H_{J, \bm{h}}(\sigma) = -\sum_{\substack{i,j\in\mathbb{Z}^d \\ |i-j|=1}}J\sigma_i\sigma_j - \sum_{i\in\mathbb{Z}^d} h_i\sigma_i,
\end{equation*}
where $J>0$ and $|i-j|=1$ means that the distance in the graph is one. The role of the temperature is expressed in the formal Gibbs measure, given by
\begin{equation*}
\mu_{\bm{J}, \bm{h}}^{\beta}(\sigma)=\frac{e^{-\beta H_{J,\bm{h}}(\sigma)}}{\mathcal{Z}^{\beta}_{\bm{J},\bm{h}}}     
\end{equation*}
where $\mathcal{Z}^{\beta}_{\bm{J},\bm{h}}$ is the partition function
\begin{equation*}
    \mathcal{Z}^{\beta}_{\bm{J},\bm{h}} = \sum_{\sigma\in\Omega} {e^{-\beta H_{J,\bm{h}}(\sigma)}}.
\end{equation*}

In this thesis, we will study the phase diagram of two variations of this model. First, we will study the \textit{semi-infiinite Ising model with non-homogeneous external fields}. After that, we focus on proving phase transition for the \textit{long-range random field Ising model} for $d\geq 3$.  \\

The semi-infinite Ising model is a variation of the Ising model where, instead of $\Z^d$ ($d\geq 2$), the lattice is $\H+=\mathbb{Z}^{d-1}\times\mathbb{N}$ and the configurations space is $\Omega \coloneqq\{-1,+1\}^{\H+}$. In the semi-infinite model, the sites in the wall $\mathcal{W}=\mathbb{Z}^{d-1}\times \{1\}$ are in contact with a substrate favoring one of the spins. This influence is represented by an external field, with intensity $\lambda\in\mathbb{R}$, acting only on spins at the wall $\calW$. 
The other parameters of the model are the interaction $\bm{J}=(J_{ij})_{i,j\in\H+}$, the external field $\bm{h}=(h_i)_{i\in\H+}$ and the inverse temperature $\beta$. We will always consider nearest neighbor ferromagnetic interaction, hence $J_{ij}= J > 0$ whenever $|i-j|=1$ and are zero otherwise. The distance here is taken concerning the $\ell_1$-norm. The interaction $\bm{J}$ and the external field $\bm{h}$ play the same role in the energy as in the standard Ising model, so the formal Hamiltonian is 
\begin{equation*}
H_{J,\lambda, \bm{h}}(\sigma) = -\sum_{\substack{i,j\in\H+ \\ |i-j|=1}}J\sigma_i\sigma_j - \sum_{i\in\H+} h_i\sigma_i - \sum_{i\in\mathcal{W}}\lambda\sigma_i.
\end{equation*}
The role of the temperature is expressed in the formal Gibbs measure, given by
\begin{equation*}
\mu_{\bm{J},\lambda, \bm{h}}^{\beta}(\sigma)=\frac{e^{-\beta H_{J,\lambda,\bm{h}}(\sigma)}}{\mathcal{Z}^{\beta}_{\bm{J}, \lambda, \bm{h}}}     
\end{equation*}
where $\mathcal{Z}^{\beta}_{\bm{J}, \lambda, \bm{h}}$ is the partition function, a normalizing weight. Throughout this thesis, we will assume $\lambda\geq0$ for simplicity. The extension of our statements for $\lambda\leq 0$ will follow from spin-flip symmetry.

The semi-infinite Ising model was extensively studied by Fr{\"o}hlich and Pfister in \cite{FP-I, FP-II}, where they presented a wide range of results. Regarding the macroscopic behavior of the system, it was shown that, for $\beta\leq \beta_c$, the system behaves exactly as the Ising model and the spins align independently, where $\beta_c$ denotes the critical inverse temperature of the Ising model. Below the critical temperature, when there is no external field, there exists a critical value $\lambda_c>0$ that determines the behavior of the spins near the wall. When the influence of the wall is bigger than $\lambda_c$, the influence of the substrate "penetrates" the model and we see a thick layer of spins near the wall aligned with the substrate phase:

\begin{figure}[htbp]
    \centering
    \includegraphics[scale=0.4]{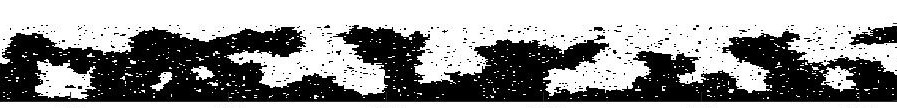}
    \caption{Complete wetting: $\lambda = 1 > \lambda_c$, $\beta=0.5$, $J=1$. The $+$ spins are black and the $-$ are white.}
\end{figure}

This regime is then called \textit{complete wetting}. When $0\leq \lambda<\lambda_c$, the influence of the substrate is only capable of creating disconnected clusters on the wall, so we say there is \textit{partial wetting}:

\begin{figure}[htbp]
    \centering
    \includegraphics[scale=0.5]{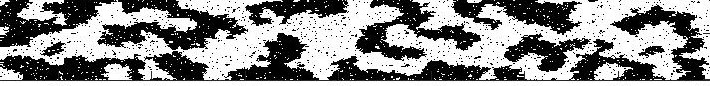}
    \caption{Partial wetting: $\lambda = 0.03 <\lambda_c$, $\beta=0.5$, $J=1$. The $+$ spins are black and the $-$ are white.}
\end{figure}

 In \cite{FP-II}, they also showed that this critical value is related to the existence of multiple Gibbs states, in the sense that, for $0\leq\lambda<\lambda_c$ there are multiple Gibbs states, and for $\lambda>\lambda_c$ we have uniqueness. The existence of this critical value $\lambda_c$ is proved using a notion of wall-free energy, defined formally as 
\begin{equation*}
    \tau_w(J,\lambda, \bm{h}) = \lim_{\Lambda\to\Z^d} \frac{1}{2|\mathcal{W}\cap\Lambda|}\ln\left[ \frac{(\mathcal{Z}^{-, J}_{\Lambda\cap\H+; \lambda,\bm{h}})^2}{Q^{-, J}_{\Lambda; \bm{h}}} \right] - \lim_{\Lambda\to\Z^d}\frac{1}{2|\mathcal{W}\cap \Lambda|}\ln\left[ \frac{(\mathcal{Z}^{+, J}_{\Lambda\cap \H+; \lambda,\bm{h}})^2}{Q^{+, J}_{\Lambda; \bm{h}}} \right],
\end{equation*}
where $Q^{\pm, J}_{\Lambda; \bm{h}}$ is the partition function of the Ising model (on $\Z^d$) with $\pm$ boundary condition on $\Lambda$, a finite set such that $\mathcal{W}\cap\Lambda \neq \emptyset$ and $\mathcal{Z}^{\pm, J}_{\Lambda\cap\H+; \lambda,\bm{h}}$ is the partition function of the semi-infinite Ising model with $\pm$ boundary condition on $\Lambda\cap\H+$. This quantity is suitable for measuring the wall influence when there is no external field since the partition functions of the Ising model cancel out, and the remaining terms can be written as an integral of the difference of the magnetization with respect to the wall influence $\lambda$, see Proposition \ref{Prop: Tilde_tau_as_integral}. 

\begin{figure}[H]
\centering
\includegraphics[scale=0.2]{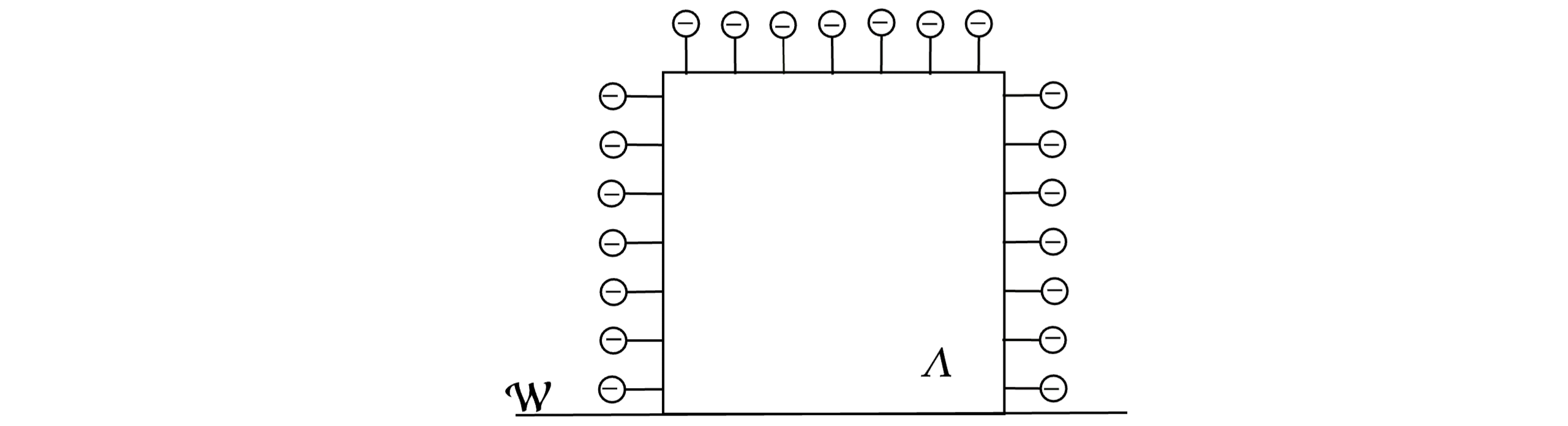}
\caption{The \textcolor{green}{-} boundary condition of the semi-infinite model.}
\end{figure}

In the Ising model, adding a nonnull constant external field disrupts the phase transition at every temperature, as a consequence of Lee-Yang theorem \cite{FV-Book, Lee-Yang.II.1952}. However, it was shown in \cite{Bissacot_Cioletti_10} that we can add an external field that decays as it goes to infinity and still preserves phase transition. This paper started a streak of new results on models with decaying fields \cite{Affonso.2021, Bissacot_Cass_Cio_Pres_15, Bissacot_Endo_Enter_2017, Bissacot.Endo.18}.

One particular result \cite{Bissacot_Cass_Cio_Pres_15}, states that we can consider an intermediate external field $\bm{h}^* = (h_i^*)_{i\in\Z^d}$ given by
\begin{equation*}
    h_i^* = \begin{cases}
            h^* &\text{ if }i=0,\\
            \frac{h^*}{|i|^\delta} &\text{ otherwise}.\\
            \end{cases}
\end{equation*}
that preserves phase transition for low temperatures when $\delta>1$ and induces uniqueness at low temperatures when $\delta<1$. In the critical value $\delta=1$, there is phase transition for $h^*$ small enough. The proof of uniqueness when $\delta<1$ was extended to all temperatures in \cite{Cioletti_Vila_2016}. The argument in \cite{Bissacot_Cass_Cio_Pres_15} involves contour arguments and Peierls' bounds techniques for low temperatures, while \cite{Cioletti_Vila_2016} uses a generalization of the Edwards-Sokal representation. Both techniques are fairly distinct and complement each other, which makes the complete proof of uniqueness involved. There is no standard strategy to prove uniqueness, with each model requiring particular techniques.

For the semi-infinite Ising model, a more natural choice of the external field is one decaying as it gets further from the wall, that is, $h_i \leq h_j$ whenever $j_d\leq i_d$. Given $h\in \mathbb{R}$, one such external field is
$\widehat{\bm{h}}=(h_i)_{i\in\H+}$ with 
\begin{equation*}
    h_i=\frac{h}{i_d^\delta}
\end{equation*}

for all $i\in\H+$.  Figures \ref{EF.on.Lambda} and \ref{graph.EF} shows how this external field behaves. 
   \begin{figure}[h]
            \centering
            \includegraphics[scale=0.12]{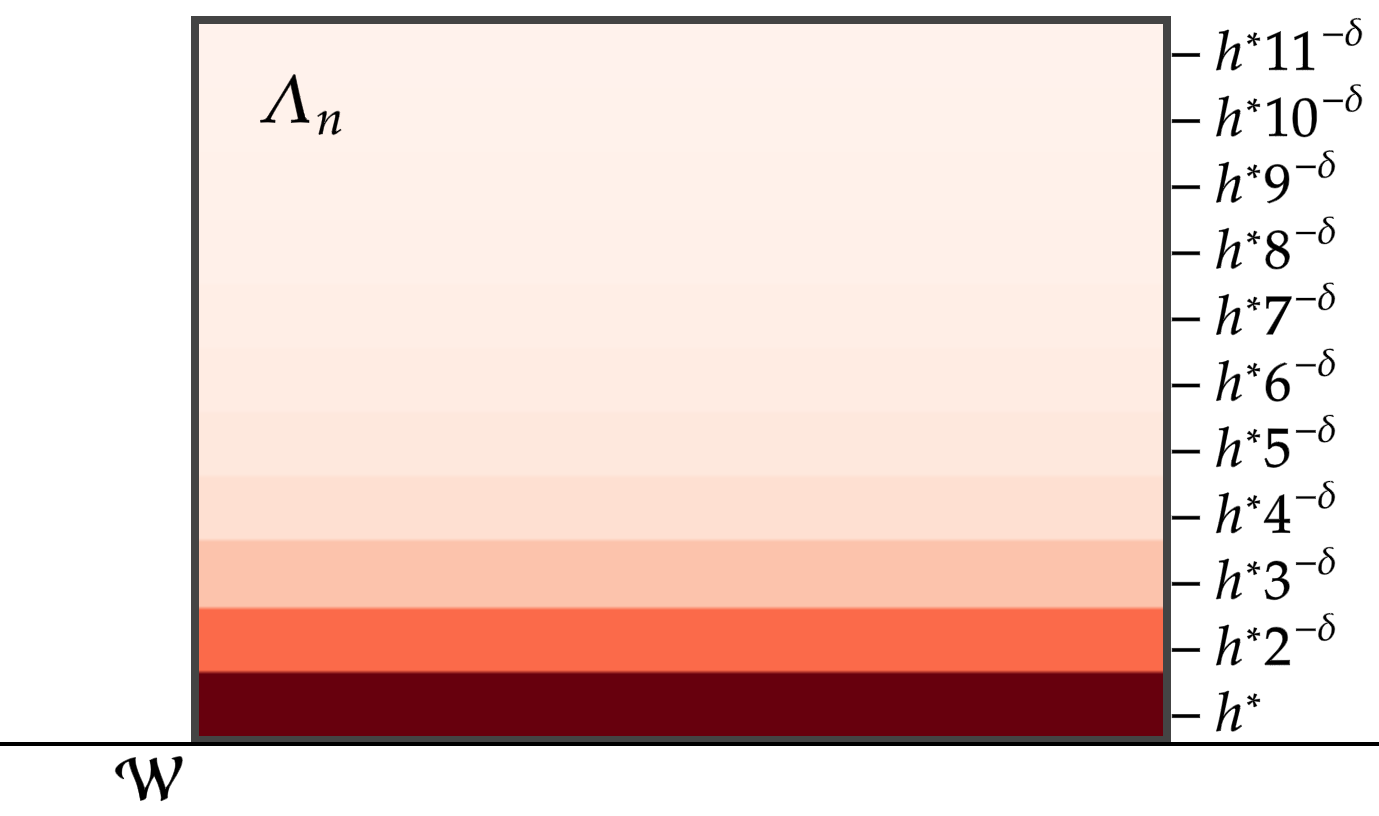}
            \caption{The influence of the external field in a box $\Lambda_n$.}
            \label{EF.on.Lambda}
    \end{figure}  
    \begin{figure}[h]
            \centering
            \includegraphics[scale=0.12]{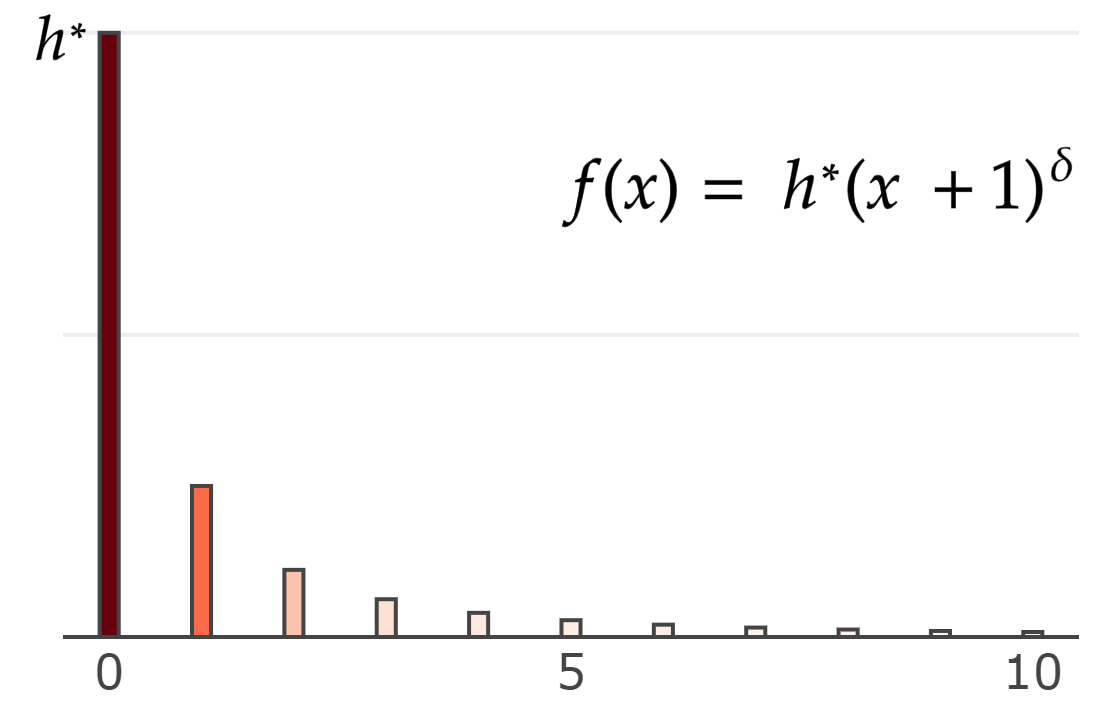}
            \caption{The external field w.r.t. the distance of a spin to the wall.}
            \label{graph.EF}
    \end{figure}
A particularly interesting choice of $h$ is $h=\lambda$, so we link the wall influence and the field. This particular case will be denoted  $\widehat{\bm{\lambda}}=(\lambda_i)_i\in\H+$, with 
\begin{equation*}
    \lambda_i=\frac{\lambda}{i_d^\delta}.
\end{equation*}
We can prove that, when $\delta>1$, the semi-infinite model with external field $\lambdadec$ behaves as the model with no field, so if we fix $\beta =1$, there exists a critical value $\overline{\lambda}_c(J)$ such that there are multiple Gibbs states when $0\leq\lambda<\overline{\lambda}_c(J)$, and we have uniqueness when $\overline{\lambda}_c(J)<\lambda$. At last, we show that when $\delta<1$, the semi-infinity Ising model with this choice of external field presents only one Gibbs state for any $J>0$. To simplify the notation, we choose to fix $\beta=1$ and let $J$ vary, as in the previous papers about the semi-infinite Ising model. Our results are summarized in the following theorem. 

\begin{theorem*}
    Let $d\geq 2$, and let $J_c$ be the critical value of the Ising model in $\Z^d$ at $\beta=1$. Given any $\delta>0$, there exists a critical value $\overline{\lambda}_c=\overline{\lambda}_c(J,\delta)\geq 0$ such that the semi-infinite Ising model with external field $\lambdadec$ presents phase transition for all $0\leq \lambda < \overline{\lambda}_c$ and uniqueness for $\overline{\lambda}_c <\lambda$. Moreover, for $\delta>1$ and $J>J_c$ we have $0<\overline{\lambda}_c$. When $\delta<1$, $\overline{\lambda}_c=0$ and there is uniqueness for all $J>0$. 
\end{theorem*}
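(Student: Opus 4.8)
The plan is to split the theorem into three regimes and treat each with a different tool, tying them all together with the wall free energy $\tau_w$ introduced above.

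\textbf{Existence of the critical value $\overline{\lambda}_c$.} First I would establish monotonicity: for fixed $J,\delta$, the set of $\lambda\geq 0$ for which uniqueness holds is an up-set. This should follow from a stochastic-domination/GKS-type comparison, since increasing $\lambda$ increases both the wall coupling and the field $\lambdadec$ pointwise, both of which push the $-$ and $+$ boundary-condition states closer together. Concretely, one shows that the difference of magnetizations $\langle \sigma_i\rangle^{+} - \langle \sigma_i\rangle^{-}$ is nonincreasing in $\lambda$ (correlation inequalities, as in Proposition~\ref{Prop: Tilde_tau_as_integral}), so once the two extremal states coincide they coincide for all larger $\lambda$. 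Defining $\overline{\lambda}_c \coloneqq \inf\{\lambda\geq 0 : \text{uniqueness holds}\}$ then gives phase transition for $\lambda<\overline{\lambda}_c$ and uniqueness for $\lambda>\overline{\lambda}_c$ automatically; the value $\overline{\lambda}_c$ could a priori be $0$ or $+\infty$, and the remaining parts of the theorem pin it down.

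\textbf{The case $\delta>1$, $J>J_c$: $\overline{\lambda}_c>0$.} Here the goal is to show that a small wall influence cannot destroy the phase transition that is already present in the bulk (since $J>J_c$ at $\beta=1$). The strategy is a Peierls/contour argument in the semi-infinite box with $-$ boundary condition, where the field $\lambdadec$ and the wall term act as perturbations that are \emph{summable}: the total field energy a contour must overcome, $\sum_{i} \lambda/i_d^{\delta}$ summed over the sites enclosed near the wall, is controlled because $\delta>1$ makes $\sum_{i_d\geq 1} i_d^{-\delta}<\infty$, so the field cost per unit contour length is bounded by $C\lambda$ with $C=C(\delta)$ finite. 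For $\lambda$ small enough this is dominated by the surface tension gain at $J>J_c$, giving a Peierls bound and hence $\langle\sigma_0\rangle^{-}<0$, i.e. phase transition. Equivalently and more in the spirit of the excerpt: show $\tau_w(J,\lambdadec)>0$ for small $\lambda$ by comparing with $\tau_w(J,0)>0$ (Fröhlich–Pfister) and bounding the perturbation $|\tau_w(J,\lambdadec)-\tau_w(J,0)|$ by the integral $\int_0^\lambda$ of a magnetization difference that is itself $O(\lambda)$ because of summability of the field. A nonzero wall free energy then forces multiple Gibbs states.

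\textbf{The case $\delta<1$: $\overline{\lambda}_c=0$, uniqueness for all $J>0$.} This is where I expect the main obstacle. The point is that although $h\to 0$ at infinity, the decay $i_d^{-\delta}$ with $\delta<1$ is so slow that the field still dominates any contour: a contour that separates a region of the wall of linear size $L$ from the $+$ phase must flip spins against a field whose total strength over the enclosed wall-adjacent region grows like $\sum_{i_d=1}^{L} L^{d-1} i_d^{-\delta}\sim L^{d-1}\cdot L^{1-\delta}=L^{d-\delta}$, which beats the contour's surface cost $\sim L^{d-1}$ since $d-\delta>d-1$. Making this rigorous is the delicate part: one wants a contour-counting argument à la \cite{Bissacot_Cass_Cio_Pres_15} adapted to the half-space geometry, showing that for \emph{every} $\lambda>0$ and every $J$, $\langle\sigma_i\rangle^{+}=\langle\sigma_i\rangle^{-}$ for all $i$, hence a unique Gibbs state. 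The subtlety is uniformity in $J$ (large $J$ makes surface tension large too, so one must check the field still wins — it does, because the field gain is a strictly higher power of $L$, but the low-temperature contour expansion must be set up carefully), and handling contours far from the wall where the field is weak but the geometry is essentially that of the full-space Ising model. I would either push the Peierls estimate through directly with the field providing the decisive $L^{d-\delta}$ gain, or invoke an Edwards–Sokal / random-cluster comparison in the spirit of \cite{Cioletti_Vila_2016} to get uniqueness at all temperatures at once; combining the low-temperature contour bound with a high-temperature argument (for $\beta=1$ this is just $J$ small, already covered by Fröhlich–Pfister since $J<J_c$ gives uniqueness trivially) then closes the case. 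Finally, assembling the three parts: monotonicity gives the dichotomy, $\delta>1,J>J_c$ gives $\overline{\lambda}_c>0$, and $\delta<1$ gives $\overline{\lambda}_c=0$, which is exactly the statement.
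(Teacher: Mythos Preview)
Your overall architecture (monotonicity $\Rightarrow$ critical value; separate arguments for $\delta>1$ and $\delta<1$) matches the paper, and your definition of $\overline{\lambda}_c$ as the infimum of the uniqueness set is equivalent to the paper's definition via the maximum of the modified wall free energy $\tilde\tau_w$. Two points need correction, one minor and one substantial.

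\textbf{Minor, $\delta>1$.} Your ``equivalently'' sentence is off: $\tau_w(J,0)=0$, not $>0$ (Theorem~\ref{PF.2}(a)), so you cannot compare to it. What the paper actually does is introduce a \emph{modified} wall free energy $\tilde\tau_w(J,\lambdadec)$ (where the Ising reference partition functions carry no field), prove it is concave in $\lambda$ and dominates the no-field wall free energy $\tau_w(J,\lambda)$, and then use $\tau_w(J,\lambda_c)=\tau(J)>0$ for $J>J_c$ to conclude $\tilde\tau_w(J,\widehat{\lambda_c})>0$, hence $\overline\lambda_c>0$. Your Peierls alternative would also work here, since summability of $\sum_k k^{-\delta}$ bounds the field cost by $C(\delta)\lambda|\gamma|$.

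\textbf{Substantial gap, $\delta<1$.} Your contour heuristic ($L^{d-\delta}$ field gain beats $L^{d-1}$ surface cost) is correct for contours touching the wall, but it breaks down for contours far from the wall: a contour of diameter $L$ sitting at height $h\gg L$ sees field strength only $\sim\lambda h^{-\delta}$, so its field cost is $\sim L^d h^{-\delta}$, which is \emph{smaller} than the surface cost $L^{d-1}$ whenever $L<h^{\delta}$. For such contours the argument gives nothing, and indeed far from the wall the model looks like the bulk Ising model at $J>J_c$ with a vanishing field, which \emph{does} have two phases. So a direct Peierls estimate in the half-space cannot close the argument, and your fallback ``combine with high-temperature / $J<J_c$'' does not cover the problematic regime $J>J_c$.

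The paper circumvents this by a comparison step you are missing: using the duplicated-variables inequality (Proposition~\ref{Consequence.of.DVI}), it bounds the magnetization gap $\langle\sigma_{e_d}\rangle^+_{\lambdadec}-\langle\sigma_{e_d}\rangle^-_{\lambdadec}$ in the half-space by the corresponding gap for a \emph{full-space} Ising model on $\Z^d$ with a slightly modified interaction $\bm J_\lambda$ and the radially decaying field $h^*_i=\tfrac{\lambda}{2}|i|^{-\delta}$ of \cite{Bissacot_Cass_Cio_Pres_15}. For that full-space model, the Edwards--Sokal/random-cluster machinery of \cite{Cioletti_Vila_2016} (extended via FKG monotonicity in the coupling constants to handle the non-constant $\bm J_\lambda$) gives $P_\infty=0$ and hence uniqueness at every $J$ and every temperature. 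The reduction to full space is the key idea; invoking Edwards--Sokal directly on the half-space, as you suggest, would face the same geometric obstruction as the contour argument.
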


In the second half of this thesis, we study the phase transition in the long-range random field Ising model at dimensions $d\geq 3$. The problem of the presence or absence of phase transition is central in statistical mechanics. To prove the existence of phase transition, the standard idea is to define a notion of contour and use \textit{Peierls' argument} \cite{Peierls.1936}. In the Ising model \cite{Ising_25}, particles of the system interact only with their nearest neighbors. On ferromagnetic long-range Ising models \cite{Anderson_Yuval_69}, there is interaction between each pair of spins in the lattice. The Hamiltonian of the model is given formally by
\begin{equation*}
    H(\sigma) = - \sum_{x,y\in \Z^d}J_{xy}\sigma_x\sigma_y,
\end{equation*}
where $J_{xy}=J|x-y|^{-\alpha}$, $J>0$, $\alpha > d$. It is well-known phase transition in dimension 2 for Ising models with nearest-neighbors implies phase transition for long-range interactions when $d\geq 2$, as a consequence of correlation inequalities. For the one-dimensional lattice, it is known that short-range models do not present phase transition \cite{georgii.gibbs.measures}. In the long-range case, a different behavior was conjectured depending on the exponent $\alpha$ (see \cite{Kac_Thompson_69}), but the problem was challenging.

In dimension $d=1$, phase transition was proved first in 1969 by Dyson \cite{Dyson.69}, for $\alpha \in (1,2)$, by proving phase transition in an auxiliary model, known nowadays as the \emph{Dyson model} or hierarchical model. Dyson's approach fails exactly on the critical exponent $\alpha=2$. It was already known that for $\alpha>2$ uniqueness holds \cite{georgii.gibbs.measures}. In 1982, Fr{\"o}hlich and Spencer \cite{Frohlich.Spencer.82} introduced a notion of one-dimensional contours and then applied Peierls' argument to show phase transition for the critical value $\alpha = 2$. These contours were inspired by the multiscale techniques previously introduced to study the Berezinskii-Kosterlitz-Thouless transition in two-dimensional continuous spin systems \cite{FS81}. Later, Cassandro, Ferrari, Merola and Presutti  \cite{Cassandro.05} extended the contour argument previously available for $\alpha=2$ to exponents $\alpha\in (3-\frac{\ln 3}{\ln 2}, 2]$, with the additional restriction that the nearest-neighbor interaction is strong, i.e.,  ${J(1)\gg 1}$; this restriction was removed for a subclass of interactions in \cite{Bissacot.Endo.18}. Further results were obtained using contour arguments, such as the decay of correlations, cluster expansions, phase transition with random interactions, etc; some references with these results are \cite{Cassandro.Merola.Picco.17, Cassandro.Merola.Picco.Rozikov.14, Imbrie.82, Imbrie.Newman.88, Johansson.91}. 

In the multidimensional setting ($d\geq 2$), Ginibre, Grossmann, and Ruelle, in \cite{Ginibre.Grossmann.Ruelle.66}, proved the phase transition for $\alpha > d+1$, using an enhanced version of Peierls' argument and the usual contours. Park used a different notion of contour for long-range systems in \cite{Park.88.I, Park.88.II}, extending the Pirogov-Sinai theory available for short-range interactions assuming $\alpha > 3d+1$, although he can also consider Potts models and models without symmetry with his methods. Some results in the literature suggest that truly long-range effects appear only when $d < \alpha \leq d+1$, see \cite{Biskup_Chayes_Kivelson_07}. Recently, Affonso, Bissacot, Endo, and Handa \cite{Affonso.2021}, inspired by the ideas from Fr{\"o}hlich and Spencer in \cite{FS81, Frohlich.Spencer.82}, introduced a version of multiscale multidimensional contour and proved phase transition by a contour argument in the whole region $\alpha > d$. They can consider long-range Ising models with deterministic decaying fields, first introduced in the context of nearest-neighbor interactions in \cite{Bissacot_Cioletti_10}. For such models, the lack of analyticity of the free energy does not imply phase transition since these models have the same free energy as the models with zero field. It is expected that slowly decaying fields imply uniqueness. In this setting, a contour argument is useful for proofs of phase transitions as well as for uniqueness, some papers with models with deterministic decaying fields are \cite{Aoun_Ott_Velenik_23, Bissacot_Cass_Cio_Pres_15, Bissacot.Endo.18, Cioletti_Vila_2016}.

The Random Field Ising model (RFIM) \cite{Imry.Ma.75} is the nearest-neighbor Ising model with an additional external field acting on each site $(h_x)_{x\in\Z^d}$ that is a family of i.i.d. Gaussian random variable with mean 0 and variance 1. Formally, the Hamiltonian of the model is given by
\begin{equation*}
    H(\sigma) = - \sum_{\substack{x,y\in \Z^d \\|x-y|=1}}J\sigma_x\sigma_y  - \varepsilon\sum_{x\in\Z^d}h_x\sigma_x,
\end{equation*}
where $J>0$, $\varepsilon>0$, and $d \geq 1$. A detailed account of the history of the phase transition problem for this model, as well as detailed proofs, was given in \cite{Bovier.06}. Here we present a brief overview.

During the 1980s, the question of the specific dimension where phase transition for the RFIM should happen attracted much attention and was a topic of heated debate. Two convincing arguments divided the physics community. One of them, due to Imry and Ma \cite{Imry.Ma.75}, was a non-rigorous application of the Peierls' argument together with the use of the isoperimetric inequality. The key idea of Peierls' argument is to define a notion of contour and calculate the energy cost of "erasing" each contour, i.e., the energy cost of flipping all spins inside the contour. When there is no external field, the energy necessary to flip the spins in a region $A\subset \Z^d$ is of the order of the boundary $|\partial A|$. When we add an external field, we get an extra cost depending on this field. Imry and Ma argued that this cost should be approximately $\sqrt{|A|}$. By the isoperimetric inequelity, $\sqrt{|A|}\leq |\partial A|^{\frac{d}{2(d-1)}}$, which is strictly smaller than $|\partial A|$ for all regions only when $d\geq 3$, so this should be the region where phase transition occurs. The other argument, due to Parisi and Sourlas \cite{Parisi.Sourlas.79}, based on dimensional reduction \cite{Aharony_Imry_Ma_76} and supersymmetry arguments, predicted that the $d$-dimensional RFIM would behave like the $d-2$-dimensional nearest-neighbor Ising model, therefore presenting phase transition only when $d\geq 4$. 

The question was settled by two celebrated papers showing that Imry and Ma's prediction was correct. First, in 1988, Bricmont and Kupiainen \cite{Bricmont.Kupiainen.88} showed that there is phase transition almost surely in $d\geq3$, for low temperatures and $\varepsilon$ small enough. Their proof uses a rigorous renormalization group analysis and it is considered involved. Still, they claimed that the result works for any model with a suitable contour representation and centered sub-gaussian external field. Later on, Aizenman and Wehr \cite{Aizenman.Wehr.90} proved uniqueness for $d\leq 2$. For detailed proofs of these results, we refer the reader to \cite{Bovier.06} (see also \cite{Berretti.85, Camia.18, Frohlich.Imbre.84,  Klein.Masooman.97} for more uniqueness results). 

Recently, Ding and Zhuang \cite{Ding2021}, provided a simpler proof of the phase transition, not using RGM. In addition, Ding, Liu, and Xia \cite{Ding.Liu.Xia.22} proved that if $\beta_c(d)$ is the critical inverse of the temperature of the Ising model with no field, for all $\beta>\beta_c(d)$ there exists a critical value $\varepsilon_0(d, \beta)$ such that the RFIM with $\varepsilon \leq \varepsilon_0$ presents phase transition. 

In the present paper, we are considering a long-range Ising model with a random field, whose Hamiltonian is given formally by
\begin{equation*}
    H(\sigma) = - \sum_{x,y\in \Z^d}J_{xy}\sigma_x\sigma_y - \varepsilon\sum_{x\in\Z^d}h_x\sigma_x,
\end{equation*}
where $J_{xy}=J|x-y|^{-\alpha}$, $J, \varepsilon>0$, $\alpha > d$, $d\geq 3$, and $(h_x)_{x\in\Z^d}$ that is a family of i.i.d. Gaussian random variable with mean 0 and variance 1. The only rigorous result on phase transition in the long-range setting is for the one-dimensional long-range Ising model with a random field, by Cassandro, Orlandi, and Picco \cite{Cassandro.Picco.09}. They used the contours of \cite{Cassandro.05} to show the phase transition for the model when $\alpha\in (3-\frac{\ln 3}{\ln 2}, \frac{3}{2})$, under the assumption $J(1) \gg 1$. We stress that, as remarked by Aizenman, Greenblatt, and Lebowitz \cite{Aizenman_Greenblatt_Lebowitz_2012}, although their argument does not work for the whole region of the exponent $\alpha$, the phase transition holds for values close to the critical value $\alpha=3/2$, since by the Aizenman-Wehr theorem we know that there is uniqueness for $\alpha\geq 3/2$.

The argument from Ding and Zhuang in \cite{Ding2021}, for $d\geq3$, involves controlling the probability of a bad event, which is related to controlling the quantity $$\sup_{\substack{0\in A\subset\Z^d \\ A \text{ connected }}}\frac{\sum_{x\in A}h_x}{|\partial A|},$$ known as the greedy animal lattice normalized by the boundary. The greedy animal lattice normalized by the size, instead of the boundary, was extensively studied for general distributions of $(h_x)_{x\in\Z^d}$, see \cite{Cox_Gandolfi_Griffin_Kesten_93, Gandolfi_Kesten_94, Hammond_06, Martin_02}. When we normalize by the boundary, an argument by Fisher, Fr\"{o}hlich and Spencer \cite{FFS84} shows that the expected value of the greedy animal lattice is finite. In dimension $d=2$, the expected value is not finite, see \cite{Ding.Wirth.20}. The supremum is taken over connected regions containing the origin since the interiors of the usual Peierls contours are of this form.

For the long-range model, the interior of contours is not necessarily connected. In fact, long-range contours may have considerably large diameters with respect to their size, so their interiors can be very sparse. Our definition of the contours is strongly inspired by the $(M,a,r)$-partition in \cite{Affonso.2021}, that are constructed using a multiscaled procedure that assures that the contours have no cluster with small density. With them, we generalize the arguments by Fisher-Fr\"{o}hlich-Spencer \cite{FFS84}, and prove that the expected value of the greedy animal lattice is finite, even considering regions not necessarily connected. Then, we prove the phase transition for $d\geq 3$. Our main result can be stated as
\begin{theorem*}Given $d\geq 3$, $\alpha>d$, there exists $\beta_c\coloneqq\beta(d, \alpha)$ and $\varepsilon_c\coloneqq\varepsilon(d, \alpha)$ such that, for $\beta> \beta_c$ and $\varepsilon\leq \varepsilon_c$, the extremal Gibbs measures $\mu_{\beta, \varepsilon}^+$ and $\mu_{\beta, \varepsilon}^-$ are distinct, that is, $\mu_{\beta, \varepsilon}^+ \neq \mu_{\beta, \varepsilon}^-$ $\mathbb{P}$-almost surely. Therefore the long-range random field Ising model presents phase transition.
\end{theorem*}

\chapter{Semi-infinite Ising Model}

\fancyhead[RE,LO]{\thesection}

In this chapter, we study the semi-infinite Ising model following \cite{FP-I, FP-II} closely. The model represents the usual Ising model but now with a constraining wall that absorbs some state, that is, it has a preference for aligning with the $+$ or $-$ spin. We first prove phase transition when there is no external field. Then we show that the macroscopic picture described in \cite{FP-II} does not change even if we add a non-homogeneous external field, that depends only on the distance to the wall and is small in a sense.

\section{The model and important definitions}   To represent a wall in the usual Ising model we change the graph where the model takes place. Instead of $\mathbb{Z}^d$, we consider the graph
$$\mathbb{H}^d_+ \coloneqq \{ i=(i_1, i_2, \dots, i_d)\in \mathbb{Z}^d : i_d\geq 0 \},$$
and the wall is denoted by $\mathcal{W}\coloneqq \{ i\in \mathbb{H}^d_+ : i_d = 0\}$. To represent the influence of the wall in its neighbors, we introduce a parameter $\lambda$ and, for the finite sets $\Lambda \Subset \mathbb{H}^d_+$, we define the local Hamiltonian

\begin{equation}\label{SI.Hamiltonian}
    \mathcal{H}_{\Lambda; \lambda, \bm{h}}^{\bm{J}}(\sigma) \coloneqq -\sum_{\substack{i \sim j \\ \{i,j \} \cap \Lambda \neq \emptyset}} J_{i,j}\sigma_i\sigma_j - \sum_{i\in\Lambda}h_i\sigma_i - \sum_{i\in \Lambda\cap \mathcal{W}} \lambda\sigma_i.
\end{equation}
Here, ${\bm{h}=(h_i)_{i\in \mathbb{H}^d_+}}$ is the \textit{external field} and the interaction ${\bm{J}=(J_{i,j})_{i,j\in\mathbb{H}^d_+}}$ is non-negative for all $i,j\in\mathbb{Z}^d$, so we say the model is \textit{ferromagnetic}. The configurations in $\Lambda\Subset\mathbb{Z}^d$ with boundary condition $\eta\in\Omega$ are the elements of $\Omega_\Lambda^\eta \coloneqq \{ \omega\in \Omega : \omega_i=\eta_i \text{ for all } i \notin \Lambda \}$. The \textit{finite Gibbs measure in $\Lambda$ with $\eta$-boundary condition} is  
\begin{equation}\label{eq:def.local.gibbs.measure}
    \mu_{\Lambda; \lambda, \bm{h}}^\eta (\sigma) \coloneqq \mathbbm{1}_{\{  \sigma\in\Omega_{\Lambda}^\eta\}}\frac{e^{-\beta  \mathcal{H}_{\Lambda; \lambda, \bm{h}}^{\bm{J}}(\sigma)}}{\mathcal{Z}^{\eta, \bm{J}}_{\Lambda; \lambda,\bm{h}}},
\end{equation}
 where $\mathcal{Z}^{\eta, \bm{J}}_{\Lambda; \lambda,\bm{h}}\coloneqq \sum_{\sigma \in\Omega_\Lambda^\eta}e^{-\beta  H_{\Lambda; \bm{h}}^{\eta}(\sigma)}$ is the usual \textit{partition function}. This measure is in the $\sigma$-algebra generated by the cylinder set, which coincides with the Borel $\sigma$-algebra when we consider the product topology on $\Omega$, a compact space. Then, the set of probability measures defined over the Borel sets is a weak* compact set. 
 
To construct the infinite measures we consider sequences of finite subsets $(\Lambda_n)_{n\in\mathbb{N}}$  such that, for any subset $\Lambda\subset\H+$, there exists $N=N(\Lambda)>0$ such that $\Lambda\subset\Lambda_n$ for every $n>N$. We say such sequences \textit{invades} $\mathbb{H}^d_+$ and we denote it by $\Lambda_n\nearrow\H+$. A particularly important sequence that invades $\H+$ is the finite boxes 
$$\Lambda_{n, m} \coloneqq \{i\in \mathbb{H}^d_+ : i_d\leq m, -n \leq i_k \leq n \text{ for } k=1,\dots,d-1 \},$$ with $n,m\geq 0$. We define also $\mathcal{W}_{n,m} \coloneqq \mathcal{W}\cap \Lambda_{n, m}$ the restriction of the wall for this boxes. The set of \textit{Gibbs measures} is the closed convex hull of all the weak* limits obtained by sequences invading $\H+$:
\begin{equation}
\mathcal{G}_{\bm{J}} \coloneqq \overline{\text{conv}}\{\mu_{\bm{J}}: \mu_{\bm{J}} = w^*\text{-}\lim_{\Lambda^\prime \nearrow \H+}\mu_{\Lambda^\prime; \lambda, \bm{h}}^{\omega}\}.
\end{equation}
To simplify the notation, we are omitting the dependency on $\beta$, $\bm{h}$ and $\lambda$ in the definition of $\mathcal{G}_{\bm{J}}$. Moreover, throughout this chapter and the next, we fix $\beta=1$ and let $J$ vary, to simplify the notation. Most of this chapter will be dedicated to investigating whether $|\mathcal{G}_{\bm{J}}|>1$, hence there is a phase transition or $|\mathcal{G}_{\bm{J}}|=1$, so there is uniqueness. 

Replacing the semi-infinite lattice $\H+$ by the whole lattice $\Z^d$ in all definitions above, we can define the set of \textit{Ising Gibbs measures} with ferromagnetic interaction $\bm{J}=(J_{i,j})_{i,j\in \Z^d}$ and external field $\bm{h}=(h_i)_{i\in\Z^d}$ as
\begin{equation*}
    \mathcal{G}_{\bm{J}}^{IS} \coloneqq \overline{\text{conv}}\{\mu_{\bm{J}}: \mu_{\bm{J}} = w^*\text{-}\lim_{\Lambda^\prime \nearrow \Z^d}\mu_{\Lambda^\prime; 0, \bm{h}}^{\omega}\}.
\end{equation*}
As an alternative for working with Gibbs measures, we can use the Gibbs states. These are linear functional defined on the space of local functions. A function $f:\Omega\mapsto\mathbb{R}$ is said \textit{local} when there exists $\Lambda\Subset\H+$ such that, $\forall\sigma,\omega \in\Omega$, $f(\sigma) = f(\omega)$ whenever $\sigma_i = \omega_i$ for all $i\in\Lambda^c$. The smallest such $\Lambda$, with respect to the inclusion, is called the \textit{support} of $f$. Moreover, $f$ is \textit{quasilocal} if there is a sequence $(g_n)_{n}$ of local functions such that $\lim_{n\to\infty}\| g_n - f \|_\infty = 0$. To each finite Gibbs measure, we associate the \textit{local Gibbs state in $\Lambda$ with $\eta$-boundary condition}, defined by 

\begin{equation*}
    \langle f\rangle^{\eta}_{\Lambda, \lambda, \bm{h}} \coloneqq \sum_{\sigma\in\Omega} f(\sigma)\mu_{\Lambda; \lambda, \bm{h}}^\eta (\sigma).
\end{equation*}
Moreover, the Gibbs state with $\eta$ boundary condition is
\begin{equation*}
     \langle f\rangle^{\eta}_{ \lambda, \bm{h}} \coloneqq \lim_{\Lambda^\prime \nearrow \H+} \langle f\rangle^{\eta}_{\Lambda^\prime; \lambda, \bm{h}},
\end{equation*}
when the limit exists. The semi-infinite model inherits several properties from the usual Ising model in $\mathbb{Z}^d$ once the Hamiltonian (\ref{SI.Hamiltonian}) can be seen as a particular case of the Ising model. The usual Ising Hamiltonian in $\Lambda\Subset \mathbb{Z}^d$ is 
\begin{equation}\label{Hamiltonian.Ising}
    \mathcal{H}_{\Lambda; \bm{h}}^{\bm{J}}(\sigma) = -\sum_{\substack{i \sim j \\ \{i,j \} \cap \Lambda \neq \emptyset}} J_{i,j}\sigma_i\sigma_j - \sum_{i\in\Lambda}h_i\sigma_i
\end{equation}
where ${\bm{J}=(J_{i,j})_{i,j\in\mathbb{Z}^d}}$ a family of non-negative real number and  ${\bm{h}=(h_i)_{i\in \mathbb{Z}^d}}$ is the external field with $h_i\in\mathbb{R}$ for all $i\in\mathbb{Z}^d$.
The \textit{Ising local state with boundary condition $\eta\in\Omega$} is, for any local function $f$, 
\begin{equation*}
    \langle f \rangle_{\Lambda; \bm{h}}^\eta \coloneqq (Q^{\eta, \bm{J}}_{\Lambda; \bm{h}})^{-1} \sum_{\sigma\in\Omega_\Lambda^\eta} f(\sigma)e^{-\mathcal{H}_{\Lambda; \bm{h}}^{\bm{J}}(\sigma)}, 
\end{equation*}
where $Q^{\eta, \bm{J}}_{\Lambda; \bm{h}} \coloneqq \sum_{\sigma\in\Omega_\Lambda^\eta} e^{-\mathcal{H}_{\Lambda; \bm{h}}^{\bm{J}}(\sigma)}$ is the usual partition function. So, for $\Lambda\Subset\mathbb{H}^d_+$, the state $\langle f\rangle^{\eta}_{\Lambda; \lambda, \bm{h}}$ is the Ising state with interaction $(J^\lambda_{i,j})_{i,j\in\mathbb{Z}^d}$ given by
\begin{equation}\label{J.for.the.semi.infinite}
J^\lambda_{i,j}=\begin{cases} J_{i,j} &\text{ if } \{i,j\}\subset\mathbb{H}^d_+, \\ \lambda &\text{ otherwise,} \end{cases}
\end{equation}
 and boundary condition $\eta^+$ given by, for all $i\in\mathbb{Z}^d$,
 \begin{equation*}\label{eta^+}
     \eta^+_i = \begin{cases} \eta &\text{ if } i\in\mathbb{H}^d_+, \\ +1 &\text{ otherwise.} \end{cases}
 \end{equation*}

We are interested in phase transition results for a uniform interaction $\bm{J} \equiv J>0$, $\lambda>0$ and $h_i\geq 0$ for all $i\in\H+$. For the Ising model with no external field, the existence of two distinct states translates to the fact that, on a macroscopic scale, the spins will align in the same direction. If we consider the plus state, when we look at huge boxes, we will see a sea of pluses and some rare occurrences of minus.
        
        In the semi-infinite Ising model, again with no field, the macroscopic consequence of the phase transition is different, it has to do with the existence of a layered phase separating the wall from the bulk. Writing the semi-infinite model interaction as in (\ref{J.for.the.semi.infinite}), the minus state in a box is given by the boundary condition like 
       in  Figure \ref{minus.b.c},
\begin{figure}[ht]
\centering
\includegraphics[scale=0.5]{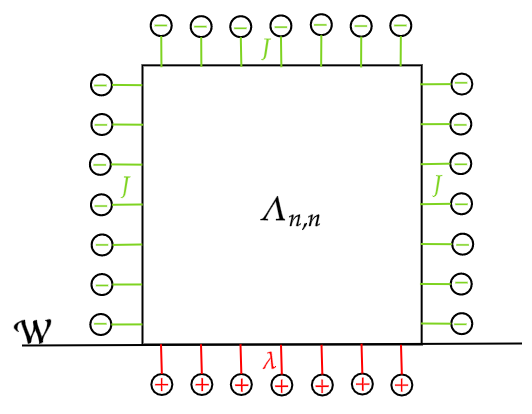}
\caption{The $\textcolor{red}{+}$ spins in the wall compete with the $\textcolor{green}{-}$ on the boundary.}
\label{minus.b.c}
\end{figure}
        so if $\lambda$ is big enough we have the phenomenon of complete wetting, where the wall forces to spin to align in the plus direction.

\begin{sloppypar}
To better understand this type of phenomenon, we recommend the survey paper \cite{IV18}. The surface free energy of the wall is a quantity that tries to identify whether or not we have complete wetting and is defined as follows. Consider the sequence that invades $\H+$ given by $\Lambda_n \coloneqq \Lambda_{n,n^\alpha}$ for some $0<\alpha<1$. Take  $\Lambda^\prime_n$ as the reflection of $\Lambda_n$ with respect to the line ${\mathcal{L} \coloneqq \{ (i_1,\dots,i_d)\in \mathbb{Z}^d : i_d = -\frac{1}{2}\}}$. Similarly, define the reflection of the walls $\mathcal{W}_n$ as $\mathcal{W}^\prime_n \coloneqq [-n,n]\times\{-1\}$. Denoting $\Delta_n \coloneqq \Lambda_n \cup \Lambda^\prime_n$ and extending $\bm{h}$ to $\mathbb{Z}^d$ by choosing $h_i = h_{(i_1,\dots, -i_d -1)}$ (the reflection of $i$ through line $\mathcal{L}$), the partition function of the usual Ising model in $\Delta_n$ with $\eta$-boundary condition is
\end{sloppypar}

\begin{equation*}
    Q^{\eta, J}_{\Delta_n; \bm{h}} = \sum_{\sigma\in \Omega^{\eta}_{\Delta_n}}\exp{ \{ \sum_{\substack{i \sim j \\ \{i,j \} \cap \Delta_n \neq\emptyset}} J\sigma_i\sigma_j + \sum_{i\in\Delta_n} h_i\sigma_i \} }
\end{equation*}

and the \textit{free surface energy} for the $+$  b.c. and $-$ b.c. are, respectively,

\begin{equation}\label{Def.F+}
    F^{+}( J, \lambda, \bm{h}) \coloneqq \lim_{n\to \infty} -\frac{1}{2|\mathcal{W}_n|}\ln\left[ \frac{(\mathcal{Z}^{+, J}_{n; \lambda,\bm{h}})^2}{Q^{+, J}_{\Delta_n; \bm{h}}} \right] 
\end{equation}
and
\begin{equation}\label{Def.F-}
    F^{-}( J, \lambda, \bm{h}) \coloneqq \lim_{n\to \infty} -\frac{1}{2|\mathcal{W}_n|}\ln\left[ \frac{(\mathcal{Z}^{-, J}_{n; \lambda,\bm{h}})^2}{Q^{-, J}_{\Delta_n; \bm{h}}} \right].\\
\end{equation}
Here, $\mathcal{Z}^{+, J}_{n; \lambda,\bm{h}}$ (respectively $\mathcal{Z}^{-, J}_{n; \lambda,\bm{h}}$) is the partition function of the semi-infinite model in the box $\Lambda_n$ with plus boundary condition. (respectively minus boundary condition).
\\

All results in this chapter follows \cite{FP-I, FP-II} closely, with minor generalizations. We first prove that these limits exists.  As we do not fix a constant external field, this shows that the proof in \cite{FP-I} extends trivially for space dependent external fields. After that, we will study the wetting transition when $\bm{h}=0$. Using the \textit{surface free energy} between the $+$ and $-$ b.c., defined as 
\begin{equation*}
\tau_w(J, \lambda, \bm{h}) \coloneqq F^{-}(J, \lambda, \bm{h}) - F^{+}(J, \lambda, \bm{h})
\end{equation*}
we characterize the presence or absence of phase transition. Moreover, we will compare this quantity to the \textit{interface free energy for the Ising model}, defined as
\begin{equation}\label{tau(J)}
    \tau(J) = \lim_{n,m\to\infty} -\frac{1}{|\mathcal{W}_n|} \ln\left[ \frac{Q^{\mp, J}_{\Delta_{m,n}; 0}}{Q^{+, J}_{\Delta_{m,n}; 0}} \right],
\end{equation}
where the $\mp$-boundary condition denotes the configuration $\sigma\in\Omega$ defined by 
\begin{equation*}
    \sigma_i=\begin{cases}
                -1, & \text{ if } i\in \mathbb{H}^d_+, \\
                +1 & \text{ if } i\in \mathbb{H}_d^-
                \end{cases}
\end{equation*}
and 
\begin{equation*}
    \Delta_{m,n} = \left[-m,m\right]^{d-1}\times\left[-n,n\right].
\end{equation*}
At least, we present some characterizations of the macroscopic picture in both the uniqueness and phase transition regime. In this last part, we consider the model with an external field that depend only on the distance to the wall. We are able to show that the characterization presented in \cite{FP-II} is preserved as long as the sum of the external field in a line perpendicular to the wall is small enough. 

\section{Correlation inequalities and the limiting states}  Before we introduce the correlation inequalities, we define the \textit{free boundary condition} and the state associated with it. This is an important state, for which all of the correlation inequalities apply to. For $\Lambda\Subset\mathbb{Z}^d$ and a configuration $\sigma\in\Omega$,
\begin{equation}\label{Hamiltonian.Ising.free.bc}
\mathcal{H}_{\Lambda; \bm{h}}^{\bm{J}, \mathrm{f}}(\sigma) \coloneqq -\sum_{\substack{i \sim j \\ \{i,j \} \subset \Lambda}} J_{i,j}\sigma_i\sigma_j - \sum_{i\in\Lambda}h_i\sigma_i,
\end{equation}
where, again, $\bm{h}=(h_i)_{i\in\mathbb{Z}^d}$ is a family of real numbers and $\bm{J}=(J_{i,j})_{i,j\in\mathbb{Z}^d}$ are all positive real numbers. The difference between this Hamiltonian and the usual one defined in (\ref{Hamiltonian.Ising}) is that there is no interaction between the interior and the exterior of $\Lambda$. The state defined by such Hamiltonian is 

\begin{equation}\label{state.free.bc}
\langle f \rangle_{\Lambda; \bm{h}}^{\mathrm{f}} \coloneqq (\mathcal{Z}_{\Lambda; \bm{h}}^{\mathrm{f}})^{-1}\sum_{\sigma\in\{-1,+1\}^\Lambda} f(\sigma)e^{\mathcal{H}_{\Lambda; \bm{h}}^{\bm{J}, \mathrm{f}}(\sigma)},
\end{equation}
where $f$ is any local function and $\mathcal{Z}_{\Lambda; \bm{h}}^{\mathrm{f}}$ is the usual partition function
\begin{align*}
    \mathcal{Z}_{\Lambda; \bm{h}}^{\mathrm{f}}\coloneqq \sum_{\sigma\in\{-1,+1\}^\Lambda} e^{\mathcal{H}_{\Lambda; \bm{h}}^{\bm{J},\mathrm{f}}(\sigma)}.
\end{align*}

The correlation inequalities we will need are the GKS, FKG and duplicated variables inequalities. They are used to prove the existence and some essential properties of the limiting states. A proof for the FKG and GKS inequalities can be found in \cite{FV-Book}. Since both inequalities are classical results, the proof of them will not be presented here. The duplicated variable inequalities were proven in \cite{lebowitz1974ghs}. One important observation is that all of these inequalities are stated for the finite-volume Ising states in $\mathbb{Z}^d$, but are easily translated to the semi-infinite states due to the particularization (\ref{J.for.the.semi.infinite}). We start with the GKS inequalities:

\begin{proposition}[GKS inequalities]Let $\bm{J} = (J_{i,j})_{i,j\in\mathbb{Z}^d}$ and $\bm{h}=(h_i)_{i\in\mathbb{H}_d^+}$ be two collections of non-negative real numbers and $\Lambda \Subset \mathcal{Z}^d$. Then for any $A,B\Subset \Lambda$ we have
\begin{align}
    \langle \sigma_A \rangle^{+}_{\Lambda;\bm{h}} &\geq 0\\
    \langle \sigma_A\sigma_B \rangle^{+}_{\Lambda;\bm{h}} &\geq \langle \sigma_A \rangle^{+}_{\Lambda;\bm{h}}\langle \sigma_B \rangle^{+}_{\Lambda;\bm{h}}.
\end{align}
Both inequalities also hold for the free boundary condition. 
\end{proposition}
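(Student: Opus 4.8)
The plan is to reduce both assertions, together with their free-boundary-condition versions, to a single statement about the Ising model with free boundary condition and an arbitrary \emph{non-negative} external field, and to prove that statement by the classical high-temperature (Taylor) expansion, combined for the second inequality with a duplication-of-variables change of variables.

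\emph{Reduction.} Fix $\Lambda\Subset\mathbb{Z}^d$. For $\sigma\in\Omega_\Lambda^+$ the spins outside $\Lambda$ are frozen to $+1$, so every edge $\{i,j\}$ with $i\in\Lambda$, $j\notin\Lambda$ contributes $J_{i,j}\sigma_i$ instead of $J_{i,j}\sigma_i\sigma_j$; hence the $+$-boundary weight of $\sigma$ coincides, up to a multiplicative constant that cancels in the state, with the free-boundary weight of $\sigma$ for the field $h_i'\coloneqq h_i+\sum_{j\sim i,\,j\notin\Lambda}J_{i,j}\geq 0$. Therefore $\langle\sigma_A\rangle^{+}_{\Lambda;\bm h}=\langle\sigma_A\rangle^{\mathrm f}_{\Lambda;\bm h'}$ and $\langle\sigma_A\sigma_B\rangle^{+}_{\Lambda;\bm h}=\langle\sigma_A\sigma_B\rangle^{\mathrm f}_{\Lambda;\bm h'}$, and it suffices to prove both inequalities for the free state with $\bm h\geq 0$ arbitrary (which at the same time settles the free-boundary part of the statement).

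\emph{The two inequalities for free boundary condition.} Write the (ferromagnetic) weight as $\exp\big(\sum_{b\in\mathcal B}K_b\,\sigma^b\big)$, where $\mathcal B$ ranges over the pairs $\{i,j\}\subset\Lambda$ with $K_{\{i,j\}}=J_{i,j}$ and the singletons $\{i\}$, $i\in\Lambda$, with $K_{\{i\}}=h_i$, all $K_b\geq 0$, and $\sigma^b\coloneqq\prod_{i\in b}\sigma_i$. Expanding $\exp\big(\sum_b K_b\sigma^b\big)=\sum_{(n_b)}\prod_b\tfrac{K_b^{n_b}}{n_b!}\prod_b(\sigma^b)^{n_b}$ and summing over $\sigma\in\{-1,1\}^\Lambda$, the elementary identity $\sum_{s=\pm1}s^n\in\{0,2\}$ gives $\sum_\sigma\sigma_A\prod_b(\sigma^b)^{n_b}=2^{|\Lambda|}\prod_{i\in\Lambda}\mathbbm{1}\big[\,\mathbbm{1}[i\in A]+\sum_{b\ni i}n_b\ \text{even}\,\big]\geq 0$; since all expansion coefficients are non-negative and the partition function is positive, the first inequality follows. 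For the second, I would take two independent copies $\sigma,\sigma'$ with weights $w(\sigma)w(\sigma')$; symmetrizing in the two copies gives $Z^{2}\big(\langle\sigma_A\sigma_B\rangle^{\mathrm f}-\langle\sigma_A\rangle^{\mathrm f}\langle\sigma_B\rangle^{\mathrm f}\big)=\tfrac12\sum_{\sigma,\sigma'}(\sigma_A-\sigma'_A)(\sigma_B-\sigma'_B)w(\sigma)w(\sigma')$. Now apply the change of variables $(\sigma,\sigma')\mapsto(\sigma,\eta)$ with $\eta_i\coloneqq\sigma_i\sigma'_i$: then $\sigma_A-\sigma'_A=\sigma_A(1-\eta_A)$, $\sigma_B-\sigma'_B=\sigma_B(1-\eta_B)$, and $w(\sigma)w(\sigma')=\exp\big(\sum_b K_b\,\sigma^b(1+\eta^b)\big)$. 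Expanding this exponential as before, its coefficients $\prod_b\big(K_b(1+\eta^b)\big)^{n_b}/n_b!$ are $\geq 0$ because $1+\eta^b\geq 0$, the sum over $\sigma$ again yields a non-negative parity indicator, and the remaining prefactor $(1-\eta_A)(1-\eta_B)$ is $\geq 0$ because $\eta_A,\eta_B\in\{-1,1\}$. Hence every summand is non-negative and the second inequality follows.

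\emph{Main obstacle.} Everything is mechanical once one knows the duplicated-variables substitution $\eta_i=\sigma_i\sigma'_i$; this is the only step requiring an idea, since it is precisely what converts the sign-indefinite factor $\sigma_B-\sigma'_B$ into $\sigma_B(1-\eta_B)$ with $1-\eta_B\in\{0,2\}$, while keeping the weight $\exp\big(\sum_b K_b\sigma^b(1+\eta^b)\big)$ termwise positive under the Taylor expansion. After that the work is bookkeeping with the multi-indices $(n_b)$ and parities, relying only on $\sum_{s=\pm1}s^n\in\{0,2\}$; one small point to keep in mind is that the reduction above already absorbs the $+$ boundary condition into a non-negative field, so no extra hypothesis is needed.
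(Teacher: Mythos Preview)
Your proof is correct and follows the classical Griffiths/Ginibre route: the reduction of the $+$ boundary to free boundary via an augmented non-negative field, the Taylor expansion with the parity identity $\sum_{s=\pm1}s^n\in\{0,2\}$ for GKS~I, and the duplication $\eta_i=\sigma_i\sigma'_i$ reducing GKS~II to a termwise non-negative sum are all standard and sound. The paper itself does not give a proof of this proposition; it explicitly states that the GKS (and FKG) inequalities are classical and refers the reader to \cite{FV-Book} and the original papers \cite{G, Kelly_Sherman_68}, so there is nothing to compare your argument against beyond noting that it is exactly the textbook proof one finds in those references.
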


This inequality, named after Griffiths, Kelly, and Sherman, was first proved in \cite{G, Kelly_Sherman_68}. The next inequality is the FKG, one of the most important in Statistical Mechanics and it is related to the notion of non-decreasing function. Given two configurations $\omega,\sigma\in \Omega$, we write $\sigma\leq \omega$ if $\sigma_i\leq \omega_i$ for all $i\in\mathbb{H}_d^+$. We say that a local function $f$ is \textit{non-decreasing} if, for all $\sigma\leq \omega$, $f(\sigma)\leq f(\omega).$ The FKG inequality, named after Fortuin–Kasteleyn–Ginibre \cite{FKG}, is  

\begin{theorem}[FKG Inequality]
Let $\bm{J} = (J_{i,j})_{i,j\in\mathbb{Z}^d}$ be a collection of non-negative real numbers and $\bm{h}=(h_i)_{i\in\mathbb{H}_d^+}$ be a collection of arbitrary real numbers. Then for any $\Lambda\Subset \mathbb{Z}^d$ and any non-decreasing functions $f$ and $g$ we have
\begin{equation}\label{Eq: FKG_Inequality_Spins}
    \langle fg \rangle^{\eta}_{\Lambda;\bm{h}}\geq \langle f \rangle^{\eta}_{\Lambda; \bm{h}}\langle g \rangle^{\eta}_{\Lambda; \bm{h}}
\end{equation}
for an arbitrary boundary condition $\eta$, including the free boundary condition.
\end{theorem}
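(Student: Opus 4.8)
The plan is to derive the inequality from the classical FKG theorem for a strictly positive probability measure on a finite distributive lattice, whose single hypothesis---the lattice (Holley) condition---is immediate from the ferromagnetic assumption. Since $f$ and $g$ are local, we may enlarge $\Lambda$ so that it contains both their supports; then the finite-volume state $\langle\,\cdot\,\rangle^{\eta}_{\Lambda;\bm{h}}$ is integration against a probability measure $\mu$ on $\Omega^{\eta}_{\Lambda}$, which we identify with the product lattice $L\coloneqq\{-1,+1\}^{\Lambda}$ ordered coordinatewise, with meet $\wedge$ and join $\vee$ given by the pointwise $\min$ and $\max$; concretely $\mu(\sigma)\propto e^{-\mathcal{H}(\sigma)}$ with $\mathcal{H}$ the ferromagnetic Hamiltonian (and the frozen boundary spins absorbed into it). Because $\mathcal{H}$ is finite, $\mu$ charges every element of $L$. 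The free boundary condition is covered at no extra cost: it only amounts to deleting the bonds across $\partial\Lambda$, which keeps all couplings non-negative.

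The first real step is to verify the lattice condition
\[
    \mu(\sigma\vee\omega)\,\mu(\sigma\wedge\omega)\ \geq\ \mu(\sigma)\,\mu(\omega)\qquad\text{for all }\sigma,\omega\in L .
\]
Writing $\mu(\sigma)\propto e^{-\mathcal{H}(\sigma)}$, this is equivalent to the supermodularity $-\mathcal{H}(\sigma\vee\omega)-\mathcal{H}(\sigma\wedge\omega)\geq-\mathcal{H}(\sigma)-\mathcal{H}(\omega)$. The field terms (and the contribution of the frozen boundary spins, which enter linearly in the free spins) satisfy $(\sigma\vee\omega)_i+(\sigma\wedge\omega)_i=\sigma_i+\omega_i$, hence contribute an equality. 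For the interaction, since each $J_{i,j}\geq 0$ it suffices to check the single-bond inequality $\max(a_1,b_1)\max(a_2,b_2)+\min(a_1,b_1)\min(a_2,b_2)\geq a_1a_2+b_1b_2$ for $a_1,a_2,b_1,b_2\in\{-1,+1\}$, a short finite case analysis expressing that $(x,y)\mapsto xy$ is supermodular on the two-point chain. This is the only place the hypothesis $J_{i,j}\geq 0$ is used.

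Next I would invoke, and prove, the FKG theorem: a strictly positive measure on a finite distributive lattice satisfying the lattice condition is positively associated. I would run the classical induction on $|\Lambda|$. The base case $|\Lambda|=1$ is Chebyshev's sum inequality---two non-decreasing functions on a two-point chain are non-negatively correlated under any law. For the inductive step, fix $k\in\Lambda$, put $\Lambda'\coloneqq\Lambda\setminus\{k\}$, and let $\mu_s$ be the conditional law on $\{-1,+1\}^{\Lambda'}$ given $\sigma_k=s$. One checks that (i) each $\mu_s$ inherits the lattice condition on $\{-1,+1\}^{\Lambda'}$ (the restriction of a supermodular weight to a sublattice stays supermodular), so by induction $\mathbb{E}_{\mu_s}[fg]\geq\mathbb{E}_{\mu_s}[f]\,\mathbb{E}_{\mu_s}[g]$; and (ii) $s\mapsto\mathbb{E}_{\mu_s}[f]$ is non-decreasing for every non-decreasing $f$, i.e. $\mu_{+1}$ stochastically dominates $\mu_{-1}$. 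Granting these, $\langle fg\rangle=\mathbb{E}_{s}\mathbb{E}_{\mu_s}[fg]\geq\mathbb{E}_{s}\!\big(\mathbb{E}_{\mu_s}[f]\,\mathbb{E}_{\mu_s}[g]\big)\geq\big(\mathbb{E}_{s}\mathbb{E}_{\mu_s}[f]\big)\big(\mathbb{E}_{s}\mathbb{E}_{\mu_s}[g]\big)=\langle f\rangle\langle g\rangle$, the last inequality being the base case applied to the non-decreasing maps $s\mapsto\mathbb{E}_{\mu_s}[f]$ and $s\mapsto\mathbb{E}_{\mu_s}[g]$ under the marginal law of $\sigma_k$.

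The main obstacle is step (ii), the stochastic monotonicity $\mu_{+1}\succeq\mu_{-1}$: this is the one ingredient that is not mere bookkeeping, and it is exactly Holley's inequality, proved by a monotone coupling (equivalently, by comparing two monotone Glauber chains). In fact one can bypass the induction entirely by the Glauber-dynamics route: build a monotone Markov chain reversible with respect to $\mu$, start two copies from the bottom and top configurations, and combine a monotone coupling with the fact that non-decreasing functions of a monotone chain stay positively correlated; this yields positive association directly from the lattice condition. Either way, the whole argument ultimately rests on the single-bond supermodularity check of the second step, which is where ferromagnetism enters.
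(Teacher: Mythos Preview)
Your proposal is correct and follows the standard route to FKG for ferromagnetic spin systems: verify the lattice (Holley) condition via supermodularity of $-\mathcal{H}$, then deduce positive association either by the induction-on-sites argument or by Holley's monotone-coupling theorem. The paper, however, does not prove this theorem at all; it simply cites it as a classical result (referring to \cite{FV-Book}) and moves on to use it. So there is nothing to compare at the level of proof strategy---you have supplied a genuine argument where the paper only gives a reference. Your sketch is self-contained and identifies correctly that the ferromagnetic hypothesis $J_{i,j}\geq 0$ enters precisely through the single-bond supermodularity check, and that the nontrivial analytic content is the stochastic monotonicity step (Holley's inequality).
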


Now we use some consequences of this inequality to characterize the limiting states and get equivalences of phase transition. The first one is:
\begin{lemma}\label{extremality.of.+.and.-.bc}
Let $f$ be a non-decreasing function, $\beta\geq0$, $\bm{J}=(J_{x,y})_{i,j\in\mathbb{Z}^d}$ a family of positive real numbers and $\bm{h}=(h_i)_{i\in\mathbb{H}_d^+}$ be any external field. Then, for any boundary condition $\eta\in\Omega$ and $\Lambda\Subset\mathbb{Z}^d$,
\begin{equation*}
   \langle f \rangle^{-}_{\Lambda; \bm{h}} \leq \langle f \rangle^{\eta}_{\Lambda; \bm{h}} \leq \langle f \rangle^{+}_{\Lambda; \bm{h}}.
\end{equation*}

Moreover, if $\omega\in\Omega$ is such that $\eta\leq\omega$, then
\begin{equation*}
    \langle f \rangle^{\eta}_{\Lambda; \bm{h}}\leq \langle f \rangle^{\omega}_{\Lambda; \bm{h}}.
\end{equation*}

If $f$ is also local satisfying $supp(f)\subset\Lambda$, then
\begin{equation*}
     \langle f \rangle^{-}_{\Lambda; \bm{h}} \leq \langle f \rangle^{\mathrm{f}}_{\Lambda; \bm{h}} \leq \langle f \rangle^{+}_{\Lambda; \bm{h}}.
\end{equation*}
\end{lemma}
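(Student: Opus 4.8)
The strategy is to derive everything from the FKG inequality (\ref{Eq: FKG_Inequality_Spins}) together with a standard boundary-condition comparison trick. First I would prove the middle chain of inequalities, i.e. the monotonicity statement ``$\eta \le \omega$ implies $\langle f\rangle^{\eta}_{\Lambda;\bm h}\le \langle f\rangle^{\omega}_{\Lambda;\bm h}$'' for non-decreasing $f$, since the two-sided bound by the $-$ and $+$ states is an immediate special case (every $\eta\in\Omega$ satisfies $-\mathbf 1\le\eta\le+\mathbf 1$, where $\pm\mathbf 1$ denote the constant configurations). To compare two boundary conditions $\eta\le\omega$, the classical device is to realize both finite-volume states inside a single larger system: enlarge the volume to some $\Lambda'\Supset\Lambda$ carrying $\omega$-boundary condition, and inside $\Lambda'$ condition on the event that the spins in $\Lambda'\setminus\Lambda$ take the prescribed values. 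Equivalently, and more cleanly here, use that imposing a boundary condition is the same as sending an external field to $\pm\infty$ on the boundary sites; concretely, $\langle f\rangle^{\eta}_{\Lambda;\bm h}$ is the $t\to+\infty$ limit (for the $+1$-sites of $\eta$) and $t\to-\infty$ limit (for the $-1$-sites) of the free-boundary state on a slightly larger box with an added field of strength $t\cdot\eta_i$ at the relevant sites. Since the indicator of a decreasing/increasing half-space of configurations, and the exponential weight $e^{t\eta_i\sigma_i}$, are monotone in $\sigma$, FKG (applied to the free-boundary state, which by the FKG theorem satisfies (\ref{Eq: FKG_Inequality_Spins})) yields that $\langle f\rangle$ is non-decreasing in each such field parameter; taking $\eta\le\omega$ componentwise and passing to the limits gives the claim.

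Next, for the two-sided bound I simply instantiate the monotonicity with $\eta$ replaced by $-\mathbf 1$ on the left and $\omega$ replaced by $+\mathbf 1$ on the right: for arbitrary $\eta$ we have $-\mathbf 1\le\eta\le+\mathbf 1$, hence $\langle f\rangle^{-}_{\Lambda;\bm h}\le\langle f\rangle^{\eta}_{\Lambda;\bm h}\le\langle f\rangle^{+}_{\Lambda;\bm h}$. For the final assertion involving the free boundary condition, the point is that when $\mathrm{supp}(f)\subset\Lambda$ the free-boundary state on $\Lambda$ equals the $t=0$ member of the same monotone family described above (no field on $\partial\Lambda$), so it is sandwiched between the $t\to-\infty$ state ($=\langle\cdot\rangle^{-}_{\Lambda;\bm h}$) and the $t\to+\infty$ state ($=\langle\cdot\rangle^{+}_{\Lambda;\bm h}$) by the same monotonicity in $t$; this uses that adding the extra interactions between $\Lambda$ and its exterior with a $+$ (resp.\ $-$) configuration outside only increases (resp.\ decreases) $\langle f\rangle$, again by FKG since those extra interaction terms $J_{ij}\sigma_i\sigma_j$ with $\sigma_j=\pm1$ fixed are monotone in the $\Lambda$-spins.

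The one subtlety — and the step I expect to require the most care — is the passage from ``adding a large boundary field'' to ``imposing a hard boundary condition,'' i.e.\ justifying that the finite-volume state with field $t\eta_i$ on the boundary sites converges as $t\to\pm\infty$ to the state with the corresponding frozen boundary condition, and that FKG survives this limit. This is routine: the finite sum over $\Omega_\Lambda$ makes the state a ratio of polynomials in $e^{t}$, so the limit exists and equals the conditional expectation given the boundary values; and (\ref{Eq: FKG_Inequality_Spins}) is preserved under pointwise limits of expectations. One must also check that all the auxiliary states (free boundary, enlarged-volume, field-modified) fall under the hypotheses of the GKS/FKG propositions quoted above, which they do since all interactions $J_{ij}$ remain non-negative and the $\bm h$-field is allowed to be arbitrary real in the FKG theorem. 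With these observations the lemma follows.
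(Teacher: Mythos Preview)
The paper does not actually prove this lemma; it is stated without proof as a standard consequence of the FKG inequality (the text introduces it with ``Now we use some consequences of this inequality\ldots''), so there is no argument to compare against. Your plan is correct and is essentially the standard derivation. One simplification worth noting: you do not need the $t\to\pm\infty$ limiting device at all. Since the Hamiltonian with boundary condition $\eta$ on $\Lambda$ differs from the free Hamiltonian only by the boundary coupling $\sum_{i\in\Lambda,\,j\notin\Lambda}J_{ij}\sigma_i\eta_j$, one has exactly $\langle f\rangle^{\eta}_{\Lambda;\bm h}=\langle f\rangle^{\mathrm f}_{\Lambda;\bm h^{\eta}}$ with $h_i^{\eta}=h_i+\sum_{j\notin\Lambda}J_{ij}\eta_j$; then $\eta\le\omega$ and $J_{ij}\ge 0$ give $h^{\eta}\le h^{\omega}$ componentwise, and FKG in the form $\partial_{h_i}\langle f\rangle^{\mathrm f}=\langle f\sigma_i\rangle^{\mathrm f}-\langle f\rangle^{\mathrm f}\langle\sigma_i\rangle^{\mathrm f}\ge 0$ yields the monotonicity directly. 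The free-boundary sandwich is then immediate since $\bm h^{-}\le \bm h\le \bm h^{+}$. This avoids the limit-justification step you flagged as the main subtlety.
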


Another important consequence of FKG is that it allow us to define precisely the limiting states.

 \begin{lemma}\label{decreasing.states} For a non-decreasing local function $f$, $\beta>0$, $\bm{J}=(J_{i,j})_{i,j\in\mathbb{Z}^d}$ a family of positive real numbers and $\bm{h}=(h_i)_{i\in\mathbb{H}_d^+}$ any external field, if $\Lambda_1\subset\Lambda_2\Subset \mathbb{Z}^d$ we have
\begin{equation}
    \langle f \rangle_{\Lambda_2; \bm{h}}^{+} \leq \langle f \rangle_{\Lambda_1; \bm{h}}^{+}.
\end{equation}

The same inequality holds for the $-$b.c when $f$ is non-increasing.
\end{lemma}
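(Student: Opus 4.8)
The plan is to derive the volume-monotonicity from the finite-volume Markov property of the Gibbs specification together with the comparison inequality of Lemma~\ref{extremality.of.+.and.-.bc}. Fix $\Lambda_1\subset\Lambda_2\Subset\mathbb{Z}^d$ and a non-decreasing local function $f$. First I would record the Markov (DLR consistency) identity for the finite-volume state: for any configuration $\sigma$ with $\sigma_{\Lambda_1^c}=\omega_{\Lambda_1^c}$ the energy $\mathcal{H}^{\bm J}_{\Lambda_2;\bm h}(\sigma)$ equals $\mathcal{H}^{\bm J}_{\Lambda_1;\bm h}(\sigma)$ plus a quantity depending only on $\omega$ (the interactions crossing $\partial\Lambda_1$ are precisely those retained in $\mathcal{H}^{\bm J}_{\Lambda_1;\bm h}$, every other term factoring out of the conditional normalization). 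Hence conditioning $\mu^{+}_{\Lambda_2;\bm h}$ on the spins outside $\Lambda_1$ produces the measure $\mu^{\omega}_{\Lambda_1;\bm h}$, and therefore
\begin{equation*}
 \langle f\rangle^{+}_{\Lambda_2;\bm h}=\sum_{\omega}\mu^{+}_{\Lambda_2;\bm h}\big(\sigma_{\Lambda_1^c}=\omega_{\Lambda_1^c}\big)\,\langle f\rangle^{\omega}_{\Lambda_1;\bm h},
\end{equation*}
the sum running over configurations $\omega$ that coincide with the all-$+$ configuration off $\Lambda_2$.

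Next I would invoke Lemma~\ref{extremality.of.+.and.-.bc}: since $f$ is non-decreasing, for \emph{every} boundary condition $\omega$ we have $\langle f\rangle^{\omega}_{\Lambda_1;\bm h}\le\langle f\rangle^{+}_{\Lambda_1;\bm h}$ (the all-$+$ configuration dominates any $\omega$, positivity of $\bm J$ is what makes this FKG-type comparison available, and the field $\bm h$ may be arbitrary). Substituting this bound into each term of the displayed convex combination and using $\sum_\omega\mu^{+}_{\Lambda_2;\bm h}(\sigma_{\Lambda_1^c}=\omega_{\Lambda_1^c})=1$ yields $\langle f\rangle^{+}_{\Lambda_2;\bm h}\le\langle f\rangle^{+}_{\Lambda_1;\bm h}$, which is the claim. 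Note this argument needs no hypothesis relating $\mathrm{supp}(f)$ to $\Lambda_1$: if $f$ happens to depend on spins outside $\Lambda_1$, those are simply part of the boundary data $\omega$ and the inequality of Lemma~\ref{extremality.of.+.and.-.bc} still applies to $f$ viewed as a function on the full configuration space.

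For the statement about the $-$ boundary condition I would use spin-flip symmetry. Writing $\theta\sigma\coloneqq-\sigma$, one has $\mathcal{H}^{\bm J}_{\Lambda;\bm h}(\theta\sigma)=\mathcal{H}^{\bm J}_{\Lambda;-\bm h}(\sigma)$, since the $\bm J$-term is even and the $\bm h$-term is odd, and $\theta$ carries the $-$ boundary condition to the $+$ boundary condition; hence $\langle f\rangle^{-}_{\Lambda;\bm h}=\langle f\circ\theta\rangle^{+}_{\Lambda;-\bm h}$. If $f$ is non-increasing then $f\circ\theta$ is non-decreasing, so applying the first part with external field $-\bm h$ (still an arbitrary field) gives $\langle f\circ\theta\rangle^{+}_{\Lambda_2;-\bm h}\le\langle f\circ\theta\rangle^{+}_{\Lambda_1;-\bm h}$, i.e. $\langle f\rangle^{-}_{\Lambda_2;\bm h}\le\langle f\rangle^{-}_{\Lambda_1;\bm h}$.

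The only genuinely delicate point is the first step, the finite-volume Markov decomposition — checking that conditioning the $+$-state in the larger box on the exterior spins reproduces exactly the finite-volume state in the smaller box with the conditioned boundary condition. This is elementary given the explicit form~(\ref{Hamiltonian.Ising}) of the Hamiltonian, and once it is in place the proof is just a termwise application of Lemma~\ref{extremality.of.+.and.-.bc} followed, for the $-$ case, by the spin-flip bijection. Everything transfers verbatim to the semi-infinite states through the particularization~(\ref{J.for.the.semi.infinite}).
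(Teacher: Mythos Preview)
Your proof is correct and is the standard argument for this classical monotonicity result: DLR consistency expresses the larger-volume $+$-state as a mixture of smaller-volume states with random boundary data, and Lemma~\ref{extremality.of.+.and.-.bc} then bounds each term of the mixture by the $+$-state in the smaller volume. The paper does not actually supply its own proof of this lemma; it lists Lemmas~\ref{extremality.of.+.and.-.bc}--\ref{correlation.behaviour.wrt.Lambda} as standard consequences of the FKG and GKS inequalities and refers the reader to~\cite{FV-Book}, so there is nothing to compare against beyond noting that your argument is exactly the one that reference would give.
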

Similarly, we can use the GKS inequalities to prove:

\begin{lemma}\label{correlation.behaviour.wrt.Lambda}Let $\bm{J}$ and $\bm{h}$ be as in the hypothesis of the GKS Inequalities, and $\Lambda_1\subset\Lambda_2\Subset\mathbb{Z}^d$. Then, for any $A\subset\Lambda_1$ 
\begin{equation}
    \langle \sigma_A \rangle_{\Lambda_2; \bm{h}}^{+} \leq  \langle \sigma_A \rangle_{\Lambda_1; \bm{h}}^{+}
\end{equation}
and 
\begin{equation}
    \langle \sigma_A \rangle_{\Lambda_2; \bm{h}}^{\mathrm{f}} \geq  \langle \sigma_A \rangle_{\Lambda_1; \bm{h}}^{\mathrm{f}}.
\end{equation}

\end{lemma}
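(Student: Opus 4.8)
The plan is to deduce both inequalities from the second Griffiths inequality (GKS-II), exploiting that adding ferromagnetic bonds or positive single-site fields can only increase a correlation $\langle\sigma_A\rangle$. Concretely, viewing the coupling of a bond $b=\{i,j\}$ as a variable $J_b\ge 0$, one has $\frac{\partial}{\partial J_b}\langle\sigma_A\rangle = \langle\sigma_A\sigma_i\sigma_j\rangle - \langle\sigma_A\rangle\langle\sigma_i\sigma_j\rangle\ge 0$ by GKS-II, and similarly $\frac{\partial}{\partial h_i}\langle\sigma_A\rangle = \langle\sigma_A\sigma_i\rangle-\langle\sigma_A\rangle\langle\sigma_i\rangle\ge 0$; hence $\langle\sigma_A\rangle$ is nondecreasing in every coupling and every field, provided all of them stay non-negative. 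Note that $\sigma_A=\prod_{i\in A}\sigma_i$ is not a monotone function of the configuration, so FKG (Lemma \ref{decreasing.states}) does not apply directly and GKS is the appropriate tool here.

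For the free boundary condition, I would realize the $\Lambda_1$-system inside the $\Lambda_2$-system: starting from $\langle\cdot\rangle^{\mathrm{f}}_{\Lambda_2;\bm{h}}$, set to zero every coupling $J_{i,j}$ with at least one endpoint in $\Lambda_2\setminus\Lambda_1$. Since $A\subset\Lambda_1$ and the sites of $\Lambda_2\setminus\Lambda_1$ are then decoupled from $\Lambda_1$, the resulting expectation of $\sigma_A$ equals $\langle\sigma_A\rangle^{\mathrm{f}}_{\Lambda_1;\bm{h}}$. Raising these couplings back to their nominal non-negative values only increases $\langle\sigma_A\rangle$ by the monotonicity above, giving $\langle\sigma_A\rangle^{\mathrm{f}}_{\Lambda_1;\bm{h}}\le\langle\sigma_A\rangle^{\mathrm{f}}_{\Lambda_2;\bm{h}}$.

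For the $+$ boundary condition, I would use the standard identification of a pinned $+$ spin with an infinitely strong positive field. The state $\langle\cdot\rangle^{+}_{\Lambda_1;\bm{h}}$ coincides with $\langle\cdot\rangle^{+}_{\Lambda_2;\bm{h}}$ modified by additionally forcing every spin in $\Lambda_2\setminus\Lambda_1$ to equal $+1$; equivalently, by replacing the field $h_i$ at each $i\in\Lambda_2\setminus\Lambda_1$ with $h_i+t$ and letting $t\to+\infty$. Since $h_i\ge 0$, along this path all fields remain non-negative, so $\langle\sigma_A\rangle$ is nondecreasing in $t$, and in the limit we obtain $\langle\sigma_A\rangle^{+}_{\Lambda_2;\bm{h}}\le\langle\sigma_A\rangle^{+}_{\Lambda_1;\bm{h}}$; the limit is justified because the measure concentrates on configurations with $\sigma_i=+1$ on $\Lambda_2\setminus\Lambda_1$.

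The main technical point to get right is this last monotone-limit step: verifying that sending the auxiliary fields to $+\infty$ indeed recovers the $\Lambda_1$ state with $+$ boundary condition, and checking that GKS-II is legitimately applicable at each finite $t$ because every coupling and every field stays non-negative (including the effective boundary fields produced by the frozen $+$ spins outside $\Lambda_2$, which are sums of terms $J_{i,j}\ge 0$). Everything else is a routine application of Griffiths' second inequality, parallel to the FKG-based argument used for Lemma \ref{decreasing.states}.
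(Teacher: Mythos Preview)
Your proposal is correct and follows exactly the approach the paper indicates: the paper does not give an explicit proof but simply says ``Similarly, we can use the GKS inequalities to prove'' this lemma, and what you have written is the standard way to unpack that sentence---monotonicity of $\langle\sigma_A\rangle$ in couplings and fields via GKS-II, combined with realizing $\langle\cdot\rangle^{\mathrm{f}}_{\Lambda_1}$ by switching off bonds and $\langle\cdot\rangle^{+}_{\Lambda_1}$ by sending auxiliary fields to $+\infty$. Your remark that $\sigma_A$ is not monotone, so FKG does not apply and GKS is the right tool, is also the correct diagnosis.
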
\label{decreasing.states.GKS}
Both these lemmas are important to prove some fundamental properties of the limit states in both the usual and the semi-infinite Ising models, including its existence. Below we enunciate some of these properties for the semi-infinite model.

\begin{proposition}\label{properties.from.fkg} Let $\beta\geq0$, $\bm{J}=(J_{x,y})_{i,j\in\mathbb{Z}^d}$ a family of positive real numbers, $\bm{h}=(h_i)_{i\in\mathbb{H}_d^+}$ any external field and $\lambda\in \mathbb{R}$ be the wall influence. Then

\begin{enumerate}
    \item{Extremality:} The states $\langle \cdot \rangle_{\lambda, \bm{h}}^{+}$ is extremal, in the sense that it is not a convex combination of other states.  
    
    \item{Tranlation invariance:} For any local function $f$ and $a\in \mathcal{W}$ we have 
        \begin{equation*}
            \langle f\circ\Theta_a \rangle_{\lambda, \bm{h}}^{+} = \langle f \rangle_{\lambda, \bm{h}}^{+},
        \end{equation*}
        where $\Theta_a$ is the translation of configurations defined by
        \begin{align}
            (\Theta_a(\omega))_i = \omega_{i-a}, \hspace{1cm} \forall i\in \mathbb{H}_d^+.
        \end{align}
        
    \item{Short-range correlations:} Given two local functions $f$ and $g$ we have
        \begin{equation*}
            \lim_{\mid a \mid\to\infty} \langle (f\circ \Theta_a)g \rangle_{\lambda, \bm{h}}^{+}=\langle f \rangle_{\lambda, \bm{h}}^{+}\langle g \rangle_{\lambda, \bm{h}}^{+}.
        \end{equation*}
\end{enumerate}
Moreover, all of these statements are also true for the minus state.
\end{proposition}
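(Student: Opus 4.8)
The plan is to read off all three statements from the correlation inequalities already in place — FKG and its consequences in Lemmas~\ref{extremality.of.+.and.-.bc} and~\ref{decreasing.states} — together with the DLR consistency of the limiting state. The preliminary step is to record that $\langle\cdot\rangle^+_{\lambda,\bm h}$ is well defined and independent of the invading sequence: for a non-decreasing local $f$, Lemma~\ref{decreasing.states} makes $\Lambda\mapsto\langle f\rangle^+_{\Lambda;\lambda,\bm h}$ non-increasing, so it converges to $\inf_\Lambda\langle f\rangle^+_{\Lambda;\lambda,\bm h}$, a number that does not depend on the chosen cofinal sequence; since every local function is a difference of two non-decreasing ones (add $C\sum_{i\in\operatorname{supp}f}\sigma_i$ with $C>\|f\|_\infty$), the limit exists for all local $f$ and then for all quasilocal $f$ by uniform approximation. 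Passing to the limit in the finite-volume consistency relation $\langle\varphi\rangle^+_{\Lambda;\lambda,\bm h}=\big\langle\langle\varphi\rangle^{(\cdot)}_{D;\lambda,\bm h}\big\rangle^+_{\Lambda;\lambda,\bm h}$ for $D\subset\Lambda$, which is legitimate because $\langle\varphi\rangle^{(\cdot)}_{D;\lambda,\bm h}$ is a bounded local function of the boundary spins, shows that the limit state obeys the DLR equations; this is what I will need for item~3.

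For \emph{extremality}, I would pass to the limit in Lemma~\ref{extremality.of.+.and.-.bc} to get $\mu(f)\le\langle f\rangle^+_{\lambda,\bm h}$ for every Gibbs measure $\mu$ and every non-decreasing local $f$, i.e.\ $\langle\cdot\rangle^+_{\lambda,\bm h}$ is maximal in FKG order; then if $\langle\cdot\rangle^+_{\lambda,\bm h}=t\mu_1+(1-t)\mu_2$ with $0<t<1$, for non-decreasing $f$ both $\mu_i(f)\le\langle f\rangle^+_{\lambda,\bm h}$ while their convex combination equals $\langle f\rangle^+_{\lambda,\bm h}$, which forces $\mu_1(f)=\mu_2(f)$ for every non-decreasing $f$, hence for every local $f$, hence $\mu_1=\mu_2$. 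For \emph{translation invariance}, the point is that for $a\in\mathcal{W}$ the map $\Theta_a$ preserves $\mathbb{H}^d_+$, the wall, the nearest-neighbour interaction, and the field $\bm h$ (which is $\Theta_a$-invariant, being a function of the distance to the wall), so $\Theta_a$ is a symmetry of the Hamiltonian~\eqref{SI.Hamiltonian} and a change of variables gives the exact finite-volume identity $\langle f\circ\Theta_a\rangle^+_{\Lambda;\lambda,\bm h}=\langle f\rangle^+_{\Theta_a\Lambda;\lambda,\bm h}$; letting $\Lambda\nearrow\mathbb{H}^d_+$, noting $\Theta_a\Lambda$ also invades $\mathbb{H}^d_+$, and invoking sequence-independence of the limit yields $\langle f\circ\Theta_a\rangle^+_{\lambda,\bm h}=\langle f\rangle^+_{\lambda,\bm h}$.

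For \emph{short-range correlations} — which I expect to be the main obstacle — I would first reduce, by bilinearity in $(f,g)$ and by adding constants, to non-decreasing local $f,g$ with $f\ge 0$. The lower bound $\langle(f\circ\Theta_a)g\rangle^+_{\lambda,\bm h}\ge\langle f\rangle^+_{\lambda,\bm h}\langle g\rangle^+_{\lambda,\bm h}$ is then immediate from FKG in infinite volume together with item~2. For the matching upper bound I would decouple through DLR: fix a finite box $D\supset\operatorname{supp}g$; once $|a|$ is large, $\Theta_a(\operatorname{supp}f)$ is disjoint from $D$, so $f\circ\Theta_a$ is measurable off $D$ and $\langle(f\circ\Theta_a)g\rangle^+_{\lambda,\bm h}=\big\langle(f\circ\Theta_a)\,\langle g\rangle^{(\cdot)}_{D;\lambda,\bm h}\big\rangle^+_{\lambda,\bm h}\le\langle g\rangle^+_{D;\lambda,\bm h}\,\langle f\circ\Theta_a\rangle^+_{\lambda,\bm h}=\langle g\rangle^+_{D;\lambda,\bm h}\,\langle f\rangle^+_{\lambda,\bm h}$, using $\langle g\rangle^{(\cdot)}_{D;\lambda,\bm h}\le\langle g\rangle^+_{D;\lambda,\bm h}$ pointwise, $f\circ\Theta_a\ge0$, and item~2. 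Letting first $|a|\to\infty$ and then $D\nearrow\mathbb{H}^d_+$, so that $\langle g\rangle^+_{D;\lambda,\bm h}\downarrow\langle g\rangle^+_{\lambda,\bm h}$ by Lemma~\ref{decreasing.states}, gives $\limsup_{|a|\to\infty}\langle(f\circ\Theta_a)g\rangle^+_{\lambda,\bm h}\le\langle f\rangle^+_{\lambda,\bm h}\langle g\rangle^+_{\lambda,\bm h}$, and with the lower bound this is the assertion. The minus-state versions follow mutatis mutandis, swapping ``non-decreasing'' for ``non-increasing'' and maxima for minima and using the $-$-parts of Lemmas~\ref{extremality.of.+.and.-.bc} and~\ref{decreasing.states}. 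The delicate part throughout is the bookkeeping in item~3: making sure the limit state really is DLR, choosing the decoupling box around $\operatorname{supp}g$ (not around $\Theta_a(\operatorname{supp}f)$, which would not stay put as $|a|\to\infty$), and ordering the two limits correctly; everything else is soft.
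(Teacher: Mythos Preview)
The paper does not prove this proposition; it is stated as a list of standard consequences of the FKG machinery (Lemmas~\ref{extremality.of.+.and.-.bc} and~\ref{decreasing.states}) and left without argument. Your proof is the standard textbook derivation and is correct: maximality in FKG order gives extremality immediately; the finite-volume symmetry plus sequence-independence of the limit gives translation invariance; and the DLR-conditioning sandwich, with FKG supplying the lower bound and the monotone comparison $\langle g\rangle^{(\cdot)}_{D}\le\langle g\rangle^{+}_{D}$ the upper bound, gives the mixing statement after the double limit $|a|\to\infty$ then $D\nearrow\mathbb{H}^d_+$.

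One remark on hypotheses rather than on your argument: for item~2 you correctly note that $\Theta_a$ is a symmetry of the Hamiltonian only because the field is invariant under wall-parallel translations. The proposition as written allows an arbitrary field $\bm h$, under which item~2 (and hence the right-hand side of item~3) would be false in general; you have implicitly, and rightly, imposed the restriction the paper makes explicit a few paragraphs later, namely that $h_i$ depends only on $i_d$. Similarly, your use of item~2 inside the proof of item~3 tacitly takes $a\in\mathcal{W}$, which is presumably the intended reading of $|a|\to\infty$ here.
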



At last, we discuss the so-called duplicate variable inequalities. This set of inequalities was proved in \cite{lebowitz1974ghs} as a consequence of the GKS and a more general form of the FKG inequalities. In a subset $\Lambda\Subset\mathbb{Z}^d$, we consider a Hamiltonian of two independent systems with free boundary condition
\begin{equation}\label{hamiltonian.two.states.free.bc}
    \mathcal{H}_{\Lambda; \bm{h}, \bm{h}^\prime}^{\bm{J}}(\sigma,\sigma^\prime) 
                \coloneqq \mathcal{H}_{\Lambda; \bm{h}}^{\bm{J}, \mathrm{f}}(\sigma) +  \mathcal{H}_{\Lambda; \bm{h^\prime}}^{\bm{J}, \mathrm{f}}(\sigma^\prime) \\[2ex]
                = -\sum_{\substack{i\sim j\\ \{ i,j \}\subset \Lambda}}  J_{i,j}\sigma_i\sigma_j - \sum_{\substack{i\sim j\\ \{ i,j \}\subset \Lambda}}  J_{i,j}\sigma_i^\prime\sigma_j^\prime - \sum_{i\in\Lambda} h_i\sigma_i - \sum_{i\in\Lambda} h_i^\prime\sigma_i^\prime.
\end{equation}

With this we can define the state
\begin{equation}
    \langle f \rangle_{\Lambda; \bm{h}, \bm{h}^\prime}^{\mathrm{f}, \mathrm{f}} \coloneqq (\mathcal{Z}^{\mathrm{f},\mathrm{f}}_{\Lambda;\bm{h},\bm{h^\prime}})^{-1}\sum_{\sigma,\sigma^\prime\in\{-1,+1\}^\Lambda} f(\sigma,\sigma^\prime)\exp{-\beta\mathcal{H}_{\Lambda; \bm{h}, \bm{h}^\prime}^{J}(\sigma,\sigma^\prime)},
\end{equation}
where $f$ is any local function and
\begin{equation*}
    \mathcal{Z}^{\mathrm{f},\mathrm{f}}_{\Lambda;\bm{h},\bm{h^\prime}} \coloneqq \sum_{\sigma,\sigma^\prime\in\{-1,+1\}^\Lambda} \exp{-\beta\mathcal{H}_{\Lambda; \bm{h}, \bm{h}^\prime}^{J}(\sigma,\sigma^\prime)}.
\end{equation*}

With this definition, we see that if $f(\sigma,\sigma^\prime)=f(\sigma)$ and $g(\sigma,\sigma^\prime)=g(\sigma^\prime)$, that is, $f$ depends only of the first variable and $g$ depends only on the second, then
\begin{align*}
    \langle f \rangle_{\Lambda; \bm{h}, \bm{h}^\prime} = \langle f \rangle_{\Lambda; \bm{h}}^\mathrm{f} \hspace{1cm}\text{and}\hspace{1cm}  \langle g \rangle_{\Lambda; \bm{h}, \bm{h}^\prime}=\langle g \rangle_{\Lambda; \bm{h}^\prime}^\mathrm{f},
\end{align*}
so we say that the marginal distributions of $\langle \cdot \rangle_{\Lambda; \bm{h}, \bm{h}^\prime}$ are $\langle \cdot \rangle_{\Lambda; \bm{h}}^\mathrm{f}$ and $\langle  \rangle_{\Lambda; \bm{h}^\prime}^\mathrm{f}$. 

Introducing the random variables $s_i=\sigma_i + \sigma^\prime_i$ and $t_i=\sigma_i - \sigma^\prime_i$, for all $i\in\mathbb{Z}^d$, as well as 
\begin{align*}
    s_A=\prod_{j\in A}s_j \hspace{1cm}\text{and}\hspace{1cm} t_A=\prod_{j\in A} t_j
\end{align*}
for all $A\Subset\mathbb{Z}^d$, the duplicate variable inequalities are:
\begin{theorem}[Duplicate variable inequalities] 
Let $\bm{J} = (J_{i,j})_{i,j\in\mathbb{Z}^d}$ be a collection of non-negative real numbers, $\bm{h}=(h_i)_{i\in\mathbb{H}_d^+}$ and $\bm{h^\prime}=(h_i)^\prime_{i\in\mathbb{H}_d^+}$ be two collection of arbitrary real numbers satisfying $h_i \pm h^\prime_i\geq 0$ for all $i\in\mathbb{Z}^d$. Then, for any $A,B\Subset \mathbb{Z}^d$, we have

\begin{align}
    & 0\leq \langle t_As_B \rangle_{\Lambda; \bm{h}, \bm{h}^\prime}^{\mathrm{f},\mathrm{f}}\leq \langle t_A \rangle_{\Lambda; \bm{h}, \bm{h}^\prime}^{\mathrm{f},\mathrm{f}}\langle s_B \rangle_{\Lambda; \bm{h}, \bm{h}^\prime}^{\mathrm{f},\mathrm{f}}, \label{DVI.1} \\ 
    &\langle t_At_B\rangle_{\Lambda; \bm{h}, \bm{h}^\prime}^{\mathrm{f},\mathrm{f}}\geq \langle t_A\rangle_{\Lambda; \bm{h}, \bm{h}^\prime}^{\mathrm{f},\mathrm{f}}\langle t_B \rangle_{\Lambda; \bm{h}, \bm{h}^\prime}^{\mathrm{f},\mathrm{f}}, \label{DVI.2}\\
    &\langle s_As_B \rangle_{\Lambda; \bm{h}, \bm{h}^\prime}^{\mathrm{f},\mathrm{f}} \geq \langle s_A \rangle_{\Lambda; \bm{h}, \bm{h}^\prime}^{\mathrm{f},\mathrm{f}}\langle s_B \rangle_{\Lambda; \bm{h}, \bm{h}^\prime}^{\mathrm{f},\mathrm{f}}. \label{DVI.3}\\ \notag
\end{align}

\end{theorem}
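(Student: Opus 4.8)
The plan is to run the classical Griffiths--Ginibre duplication method: diagonalize the doubled system by the change of variables $(\sigma_i,\sigma_i')\mapsto(s_i,t_i)$ and then reduce each of \eqref{DVI.1}--\eqref{DVI.3} to a positivity statement about a single-site moment. Concretely, I would first rewrite the doubled weight $\exp\{-\beta\mathcal H^{\bm J}_{\Lambda;\bm h,\bm h'}\}$ from \eqref{hamiltonian.two.states.free.bc} in the variables $s_i,t_i$. Since $\sigma_i\sigma_j+\sigma_i'\sigma_j'=\tfrac12(s_is_j+t_it_j)$ and $h_i\sigma_i+h_i'\sigma_i'=\tfrac12\big((h_i+h_i')s_i+(h_i-h_i')t_i\big)$, that weight equals
\[
\prod_{\{i,j\}\subset\Lambda}\exp\!\Big[\tfrac{\beta J_{i,j}}{2}\big(s_is_j+t_it_j\big)\Big]\ \prod_{i\in\Lambda}\exp\!\Big[\tfrac{\beta(h_i+h_i')}{2}\,s_i+\tfrac{\beta(h_i-h_i')}{2}\,t_i\Big],
\]
so that, thanks to the hypothesis $h_i\pm h_i'\ge0$, every bond coupling and both effective fields are nonnegative, while each pair $(s_i,t_i)$ ranges over the four values $\{(\pm2,0),(0,\pm2)\}$.

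Next I would obtain the first (nonnegativity) bounds. Expanding every exponential into its power series produces an expansion, with nonnegative coefficients, in monomials $\prod_i s_i^{p_i}t_i^{q_i}$; and the single-site moment $M(p,q)\coloneqq\sum_{(s,t)}s^pt^q$ over the four admissible pairs is always $\ge0$ (it vanishes unless one of $p,q$ is zero and the other is even). Hence $\mathcal Z^{\mathrm f,\mathrm f}_{\Lambda;\bm{h},\bm{h}'}>0$, and after inserting $t_As_B$ the numerator of $\langle t_As_B\rangle^{\mathrm f,\mathrm f}_{\Lambda;\bm{h},\bm{h}'}$ is a nonnegative-coefficient combination of products $\prod_iM\big(p_i+\mathbbm 1_B(i),\,q_i+\mathbbm 1_A(i)\big)\ge0$. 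This gives $\langle t_As_B\rangle^{\mathrm f,\mathrm f}_{\Lambda;\bm{h},\bm{h}'}\ge0$, and the same computation gives $\langle s_B\rangle^{\mathrm f,\mathrm f}_{\Lambda;\bm{h},\bm{h}'}\ge0$ and $\langle t_A\rangle^{\mathrm f,\mathrm f}_{\Lambda;\bm{h},\bm{h}'}\ge0$, i.e.\ the left inequality of \eqref{DVI.1}.

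For the remaining (correlation) inequalities I would duplicate once more: take two independent copies $(s,t)$ and $(\tilde s,\tilde t)$ of the measure just obtained and use $\langle XY\rangle-\langle X\rangle\langle Y\rangle=\E\big[X\,(Y-\tilde Y)\big]$. Passing to $S_i=s_i+\tilde s_i$, $S_i'=s_i-\tilde s_i$, $T_i=t_i+\tilde t_i$, $T_i'=t_i-\tilde t_i$ turns the coupling into $\tfrac{\beta J_{i,j}}{4}(S_iS_j+S_i'S_j'+T_iT_j+T_i'T_j')$ and makes both external fields couple only to the sum variables $S_i,T_i$; expanding in power series once more leaves a single-site moment $\mathcal M(P,P',Q,Q')\coloneqq\sum S^P(S')^{P'}T^Q(T')^{Q'}$ (the sum over the sixteen admissible values). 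Splitting those sixteen configurations into four blocks according to which coordinate axis each of $(s_i,t_i)$, $(\tilde s_i,\tilde t_i)$ lies on, a direct computation should show that $\mathcal M$ vanishes unless $P,P'$ have the same parity, $Q,Q'$ have the same parity, and $P',Q'$ have the same parity, and that then $\sign\mathcal M=(-1)^{P'}=(-1)^{Q'}$. I would combine this with the parity remark that, in the expanded weight, the total $S'$-degree and the total $T'$-degree are both even (the fields ignore $S',T'$, and each bond adds the same power of $S'$ to both of its endpoints, and likewise for $T'$). In the identity for $\langle t_At_B\rangle-\langle t_A\rangle\langle t_B\rangle$ the factor $t_A(t_B-\tilde t_B)$ alters only $T$- and $T'$-degrees, so reading off the sign of each surviving $\mathcal M$-factor through its $S'$-exponent gives $(-1)^{\text{even}}=+1$ for every term, which is \eqref{DVI.2}; the mirror argument (through the $T'$-exponents) gives \eqref{DVI.3}. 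For $\langle t_As_B\rangle-\langle t_A\rangle\langle s_B\rangle$ one writes $s_B-\tilde s_B=2^{\,1-|B|}\sum_{C\subseteq B,\ |C|\ \text{odd}}\big(\prod_{i\in B\setminus C}S_i\big)\big(\prod_{i\in C}S_i'\big)$, which shifts the total $S'$-degree by an \emph{odd} amount, so every surviving term has sign $(-1)^{\text{odd}}=-1$; that gives the upper bound in \eqref{DVI.1} and completes the proof.

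The bulk of the work, and the one genuinely delicate point, is the sixteen-state evaluation of $\mathcal M$ together with the sign rule $\sign\mathcal M=(-1)^{P'}$, which is precisely what makes the mixed $t$--$s$ inequality run in the direction opposite to the $t$--$t$ and $s$--$s$ ones; once that rule and the evenness of the weight's $S'$- and $T'$-degrees are established, everything else is a routine parity count. (Alternatively one could invoke Ginibre's general inequality, but the explicit computation keeps the argument self-contained and makes the role of the hypothesis $h_i\pm h_i'\ge0$ transparent.)
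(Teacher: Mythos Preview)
The paper does not actually prove this theorem: it is stated as a classical result and attributed to \cite{lebowitz1974ghs}, with no argument given. Your sketch is essentially the Lebowitz--Ginibre proof, carried out in full: rewrite the doubled weight in the $(s,t)$ variables so that all couplings and effective fields are nonnegative, use the nonnegativity of the single-site moments $M(p,q)$ to get $\langle t_As_B\rangle\ge0$, and then duplicate once more and reduce the three correlation inequalities to a parity/sign analysis of the sixteen-state single-site moment $\mathcal M(P,P',Q,Q')$.

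Your argument is correct, including the one step you flagged as delicate. The sign rule $\operatorname{sign}\mathcal M=(-1)^{P'}=(-1)^{Q'}$ (when $\mathcal M\neq0$) follows exactly from the four-block decomposition you describe: on the two ``mixed'' blocks (one copy on the $s$-axis, the other on the $t$-axis) the contribution is $2^{P+P'+Q+Q'+2}(-1)^{P'}$ and $2^{P+P'+Q+Q'+2}(-1)^{Q'}$ respectively, which agree since $P'\equiv Q'\pmod 2$; the two ``pure'' blocks contribute only when $P'$ and $Q'$ are even, and then nonnegatively. The parity bookkeeping you use afterwards---that the doubled weight has even total $S'$- and $T'$-degree because the fields couple only to $S,T$ and each bond term $(S_i'S_j')^k$ contributes $2k$ to the total---is exactly what is needed to read off the sign of every surviving term. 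So your proposal is a complete self-contained proof; the paper simply omits one and cites the original reference.
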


One important remark is that we can make a change in the external field so that the marginal distributions of $\langle \cdot \rangle_{\Lambda; \bm{h}, \bm{h}^\prime}^{\mathrm{f},\mathrm{f}}$ became $\langle \cdot \rangle_{\Lambda; \bm{h}}^\eta$ for some $\eta\in\Omega$. Indeed, defining an altered external field $\bm{h}^\eta$ as
\begin{equation*}
    h_i^\eta = h_i + \sum_{\substack{j\sim i\\ j\in\mathbb{Z}^d\setminus \Lambda}} J_{ij}\eta_j,
\end{equation*}
for any configuration $\eta\in\Omega$, we have
\begin{equation*}
    \langle f \rangle_{\Lambda; \bm{h}^\eta, \bm{h}^{\prime\omega}} = (\mathcal{Z}_{\Lambda; \bm{h},\bm{h}^\prime}^{\eta,\omega})^{-1} \sum_{\substack{\sigma\in\Omega_\Lambda^\eta \\ \sigma^\prime\in\Omega_\Lambda^\omega}} f(\sigma,\sigma^\prime)e^{\mathcal{H}_{\Lambda;\bm{h}}^{\bm{J}}(\sigma) + \mathcal{H}_{\Lambda;\bm{h}^\prime}^{\bm{J}}(\sigma^\prime)}.
    \end{equation*}
With this, it is straightforward that, for functions $f(\sigma,\sigma^\prime)=f(\sigma)$, ${g(\sigma,\sigma^\prime) = g(\sigma^\prime)}$ and $\eta,\omega\in\Omega\cup\{\mathrm{f}\}$,  
  \begin{align*}
    \langle f \rangle_{\Lambda; \bm{h}^\eta, \bm{h}^{\prime\omega}} = \langle f \rangle_{\Lambda; \bm{h}}^\mathrm{\eta} \hspace{1cm}\text{and}\hspace{1cm}  \langle g \rangle_{\Lambda; \bm{h}^\eta, \bm{h}^{\prime\omega}}=\langle g \rangle_{\Lambda; \bm{h}^\prime}^\mathrm{\omega}.
\end{align*}
For consistency, we are defining $\bm{h}^\mathrm{f} = \bm{h}$. To stress this properties we define the state, for $\eta,\omega\in\Omega\cup\{\mathrm{f}\}$ and $\bm{h} = \bm{h}^\prime$,
\begin{align}
    \langle \cdot \rangle_{\Lambda;\bm{h},\bm{h}}^{\eta,\omega}\coloneqq \langle \cdot \rangle_{\Lambda; \bm{h}^\eta, \bm{h}^{\omega}},
\end{align}
and we have the following:
\begin{corollary}\label{DVI.for.+.bc}
Let $\bm{J} = (J_{i,j})_{i,j\in\mathbb{Z}^d}$ and $\bm{h}=(h_i)_{i\in\mathbb{Z}^d}$ be a collection of non-negative real numbers.  Then, for any $A,B\Subset \mathbb{Z}^d$ and any $\eta\in\Omega\cup\{\mathrm{f}\}$, we have
\begin{align}
    & 0\leq \langle t_As_B \rangle_{\Lambda; \bm{h}}^{+,\eta}\leq \langle t_A \rangle_{\Lambda; \bm{h}}^{+,\eta}\langle s_B \rangle_{\Lambda; \bm{h}}^{+,\eta}, \label{DVI.1.+} \\ 
    &\langle t_At_B\rangle_{\Lambda; \bm{h}}^{+,\eta}\geq \langle t_A\rangle_{\Lambda; \bm{h}}^{+,\eta}\langle t_B \rangle_{\Lambda; \bm{h}}^{+,\eta}, \label{DVI.2.+}\\
    &\langle s_As_B \rangle_{\Lambda; \bm{h}}^{+,\eta} \geq \langle s_A \rangle_{\Lambda; \bm{h}}^{+,\eta}\langle s_B \rangle_{\Lambda; \bm{h}}^{+,\eta}. \label{DVI.3.+}\\ \notag
\end{align}
\end{corollary}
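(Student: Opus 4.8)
The plan is to obtain Corollary~\ref{DVI.for.+.bc} as an immediate consequence of the Duplicate variable inequalities, by means of the external-field modification discussed just before the statement. Recall that, by definition, $\langle\cdot\rangle_{\Lambda;\bm{h}}^{+,\eta}=\langle\cdot\rangle_{\Lambda;\bm{h}^+,\bm{h}^\eta}$, i.e.\ it is the \emph{free--free} duplicated state in which the two copies carry the altered external fields $\bm{h}^+$ and $\bm{h}^\eta$, where $h_i^\zeta=h_i+\sum_{j\sim i,\,j\in\mathbb{Z}^d\setminus\Lambda}J_{ij}\zeta_j$ for $\zeta\in\Omega$ and $\bm{h}^{\mathrm{f}}=\bm{h}$; this is exactly the content of the identity $\mathcal{H}^{\bm{J}}_{\Lambda;\bm{h}}(\sigma)=\mathcal{H}^{\bm{J},\mathrm{f}}_{\Lambda;\bm{h}^\eta}(\sigma)$ for $\sigma\in\Omega_\Lambda^\eta$ recorded above. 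Hence it suffices to apply the Duplicate variable inequalities with $(\bm{h}^+,\bm{h}^\eta)$ in the role of $(\bm{h},\bm{h}^\prime)$, provided that pair meets the required positivity hypothesis $h_i^+\pm h_i^\eta\ge 0$ for every $i$.

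To check the hypothesis I would just compute
\begin{align*}
h_i^+-h_i^\eta&=\sum_{\substack{j\sim i\\ j\notin\Lambda}}J_{ij}(1-\eta_j)\ \ge\ 0,\\
h_i^++h_i^\eta&=2h_i+\sum_{\substack{j\sim i\\ j\notin\Lambda}}J_{ij}(1+\eta_j)\ \ge\ 0,
\end{align*}
the inequalities using that $\bm{J}$ and $\bm{h}$ are non-negative and $\eta_j\in\{-1,+1\}$ (so $1\mp\eta_j\ge 0$). The case $\eta=\mathrm{f}$ is identical: then $\bm{h}^{\mathrm{f}}=\bm{h}$, and the two combinations become $\sum_{j\sim i,\,j\notin\Lambda}J_{ij}\ge 0$ and $2h_i+\sum_{j\sim i,\,j\notin\Lambda}J_{ij}\ge 0$.

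Once this is in place, inequalities \eqref{DVI.1}--\eqref{DVI.3} applied to $\langle\cdot\rangle_{\Lambda;\bm{h}^+,\bm{h}^\eta}^{\mathrm{f},\mathrm{f}}$ are verbatim \eqref{DVI.1.+}--\eqref{DVI.3.+}, because the duplicated variables $s_i=\sigma_i+\sigma_i^\prime$ and $t_i=\sigma_i-\sigma_i^\prime$ are unaffected by the choice of external field — only the underlying measure changes. I do not anticipate any genuine obstacle here: the only thing requiring attention is the bookkeeping around the field modification, namely confirming that absorbing the boundary couplings of the $+$ (resp.\ free) copy into its external field really produces the free-boundary Hamiltonian with field $\bm{h}^+$ (resp.\ $\bm{h}^\eta$), together with the tacit understanding that $A,B\subset\Lambda$, so that $s_A$ and $t_B$ depend only on the fluctuating spins.
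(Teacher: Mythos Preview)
Your proposal is correct and follows exactly the approach the paper sets up: the corollary is presented as an immediate consequence of the preceding remark, which explains that $\langle\cdot\rangle_{\Lambda;\bm{h}}^{+,\eta}$ is by definition the free--free duplicated state with altered fields $(\bm{h}^+,\bm{h}^\eta)$, so one only has to verify the hypothesis $h_i^+\pm h_i^\eta\ge 0$ of the Duplicate variable inequalities, which you do correctly. The paper does not write out this verification explicitly, but your computation is precisely what is intended.
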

The most important consequence of the duplicated variables inequalities is
\begin{proposition}\label{basta.comparar.magnetizacoes}
Let $\bm{J}=(J_{ij})_{i,j\in\mathbb{Z}^d}$ be a non-negative interaction satisfying $J_{ij}>0$ if $|i-j|=1$, $\bm{h}=(h_i)_{i\in\mathbb{Z}^d}$ be a non-negative external field and $\beta>0$. Then, if $\langle \sigma_i \rangle^{-}_{\beta, \bm{h}} = \langle \sigma_i \rangle^{+}_{\beta, \bm{h}}$ for some $i\in\mathbb{Z}^d$, there is a unique Gibbs state.  
\end{proposition}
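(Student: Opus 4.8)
The overall strategy is to deduce from the hypothesis that the two extremal states coincide, $\mu^+_{\bm h}=\mu^-_{\bm h}$, and then use the FKG sandwich to squeeze every Gibbs state between them. \textbf{Reduction to the extremal states.} By Lemma~\ref{extremality.of.+.and.-.bc}, for every $\Lambda\Subset\mathbb{Z}^d$, every boundary condition $\eta$ and every non-decreasing local function $f$ we have $\langle f\rangle^{-}_{\Lambda;\bm h}\le\langle f\rangle^{\eta}_{\Lambda;\bm h}\le\langle f\rangle^{+}_{\Lambda;\bm h}$. The limiting states $\langle\cdot\rangle^{\pm}_{\bm h}$ exist by Lemma~\ref{decreasing.states}, and by the DLR equations any $\mu\in\mathcal{G}_{\bm J}$ satisfies $\langle f\rangle^{-}_{\bm h}\le\mu(f)\le\langle f\rangle^{+}_{\bm h}$ for all such $f$. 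Since non-decreasing local functions determine a measure on $\Omega$, it suffices to prove $\langle f\rangle^{+}_{\bm h}=\langle f\rangle^{-}_{\bm h}$ for all of them, i.e. $\mu^+_{\bm h}=\mu^-_{\bm h}$; this immediately forces $|\mathcal{G}_{\bm J}|=1$.

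\textbf{Stochastic domination and a monotone coupling.} The same sandwich gives $\mu^-_{\bm h}\preceq\mu^+_{\bm h}$, so by Strassen's theorem there is a coupling $\mathbf{P}$ of $(\sigma,\sigma')$ with $\sigma\sim\mu^+_{\bm h}$, $\sigma'\sim\mu^-_{\bm h}$ and $\sigma\ge\sigma'$ $\mathbf{P}$-a.s. Hence it is enough to show that $m_j:=\langle\sigma_j\rangle^{+}_{\bm h}=\langle\sigma_j\rangle^{-}_{\bm h}$ for \emph{every} $j$: then each non-negative variable $\sigma_j-\sigma'_j$ has zero $\mathbf{P}$-mean, hence vanishes a.s., and a countable union gives $\sigma=\sigma'$ a.s., i.e. $\mu^+_{\bm h}=\mu^-_{\bm h}$.

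\textbf{Propagating the equality of magnetisations (the crux).} It remains to upgrade the single-site hypothesis at $i$ to all sites. I will prove the implication ``$\langle\sigma_i\rangle^{+}_{\bm h}=\langle\sigma_i\rangle^{-}_{\bm h}\Rightarrow\langle\sigma_l\rangle^{+}_{\bm h}=\langle\sigma_l\rangle^{-}_{\bm h}$ for every neighbour $l\sim i$'' and then iterate along nearest-neighbour paths, using connectedness of $\mathbb{Z}^d$. Applying the DLR equation at the single site $i$ gives $\langle\sigma_i\rangle^{\pm}_{\bm h}=\big\langle\tanh\!\big(\beta(h_i+J{\textstyle\sum_{l\sim i}}\sigma_l)\big)\big\rangle^{\pm}_{\bm h}$, so $\langle\sigma_i\rangle^{+}_{\bm h}-\langle\sigma_i\rangle^{-}_{\bm h}=\E^{+}[\phi(S)]-\E^{-}[\phi(S)]$ with $S=\sum_{l\sim i}\sigma_l$ and $\phi(s)=\tanh(\beta(h_i+Js))$ \emph{strictly} increasing. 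Restricting the domination $\mu^-_{\bm h}\preceq\mu^+_{\bm h}$ to the finite neighbour set of $i$ and using a monotone coupling there, the equality $\langle\sigma_i\rangle^{+}_{\bm h}=\langle\sigma_i\rangle^{-}_{\bm h}$ forces $\phi(S)$, hence $S$, to have the same law under both states, and the coupled boundary spins to coincide a.s.; in particular $\langle\sigma_l\rangle^{+}_{\bm h}=\langle\sigma_l\rangle^{-}_{\bm h}$ for each $l\sim i$. (Alternatively, this comparison can be carried out through the duplicate-variable inequalities of Corollary~\ref{DVI.for.+.bc}, which is presumably the reason they were recorded just above.)

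\textbf{Where the difficulty lies.} The reduction steps are soft; the substance is turning a \emph{pointwise} equality of magnetisations at one site into equality everywhere. The argument above exploits the locality of the specification (DLR at a single site) together with strict monotonicity of $\tanh$; the only points requiring care are that $h_i$ is finite, so that $\phi$ is genuinely strictly increasing---this is built into the hypothesis---and that the stochastic domination indeed restricts to the relevant finite marginals, which is immediate.
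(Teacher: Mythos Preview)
Your argument is correct, and your parenthetical guess is right: the paper does take the duplicate-variable route you allude to, not the DLR/Strassen route you actually carry out. Concretely, the paper works in the duplicated system $\langle\cdot\rangle^{+,-}$ and first proves the strict positivity $\langle t_it_j\rangle^{+,-}_{\Lambda}\ge\langle t_it_j\rangle^{+,+}_{\{i,j\}}>0$ for any nearest-neighbour pair $i\sim j$, by showing (via \eqref{DVI.1.+}) that $\langle t_it_j\rangle^{+,-}$ is monotone decreasing in each $h_k$ and then sending $h_k\to+\infty$ on $\Lambda\setminus\{i,j\}$ to freeze those spins at $+1$. The propagation step is then purely algebraic: from $t_j^2+s_j^2=4$ and \eqref{DVI.1.+}--\eqref{DVI.2.+} one gets
\[
\langle t_i\rangle^{+,-}=\tfrac14\langle t_i(t_j^2+s_j^2)\rangle^{+,-}\ge \tfrac14\langle t_it_j\rangle^{+,-}\langle t_j\rangle^{+,-},
\]
so $\langle t_i\rangle^{+,-}=0$ forces $\langle t_j\rangle^{+,-}=0$ for every neighbour $j$, and connectedness finishes.

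Your route replaces this with the single-site DLR identity plus a monotone coupling: under the Strassen coupling, $\phi(S)\ge\phi(S')$ pointwise with equal means, hence $S=S'$ a.s.\ by strict monotonicity of $\tanh$, hence $\sigma_l=\sigma'_l$ a.s.\ for every $l$ with $J_{il}>0$. This is cleaner conceptually and sidesteps the DVIs entirely, at the cost of importing Strassen's theorem (not otherwise used in the paper). Two small points of presentation: write $\sum_l J_{il}\sigma_l$ rather than $J\sum_{l\sim i}\sigma_l$, since the proposition allows non-uniform and possibly non-nearest-neighbour couplings; and your remark about $h_i$ being finite is unnecessary, since $\tanh$ is strictly increasing on all of $\mathbb{R}$ regardless. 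The paper's approach, by contrast, is self-contained given the DVI machinery just assembled and yields a quantitative lower bound on $\langle t_i\rangle$ in terms of $\langle t_j\rangle$.
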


\begin{proof}
Fix $i,j\in\mathbb{Z}^d$ with $|i-j|=1$, and $\Lambda\Subset\mathbb{Z}^d$ containing $i$ and $j$. For the duplicated variable system with plus and minus boundary conditions, we will prove that 
\begin{equation}
    \langle t_it_j\rangle_{\Lambda; \bm{h}}^{+,-} \geq \langle t_it_j\rangle_{\{i,j\}; \bm{h}}^{+,+}>0.
\end{equation}
Start by rewriting the Hamiltonian of this duplicated system as 
\begin{align*}
    \mathcal{H}_{\Lambda; \bm{h}}^{+,-}(\sigma,\sigma^\prime) &= \sum_{\{i,j\}\subset\Lambda}J_{ij}(\sigma_i\sigma_j + \sigma_i^\prime\sigma_j^\prime) + \sum_{i\in\Lambda}h_i(\sigma_i+\sigma^\prime_i) + \sum_{i\in\Lambda,j\in\Lambda^c}J_{ij}(\sigma_i-\sigma^\prime_i).
\end{align*}
For any $u\in\Lambda$, we can differentiate w.r.t. $h_u$,
\begin{equation*}
   \frac{d}{dh_u}(\langle t_it_j\rangle_{\Lambda; \bm{h}}^{+,-}) = \beta \langle t_it_js_u\rangle_{\Lambda; \bm{h}}^{+,-} - \langle t_it_j\rangle_{\Lambda; \bm{h}}^{+,-}\langle s_u\rangle_{\Lambda; \bm{h}}^{+,-} 
\end{equation*}
which is negative by (\ref{DVI.1.+}), hence $\langle t_it_j\rangle_{\Lambda; \bm{h}}^{+,-}$ is decreasing in $h_k$ for all $k\in\Lambda$. Given $\mu\geq 0$, define the external field $\bm{h}+\bm{\mu}_{i,j}$ as $(\bm{h}+\bm{\mu}_{i,j})_k\coloneqq h_i+\mu\mathbbm{1}_{k\in\Lambda\setminus\{i,j\}}$, for all $k\in\Lambda$. Then,
\begin{equation}
    \langle t_it_j\rangle_{\Lambda; \bm{h}}^{+,-}\geq \langle t_it_j\rangle_{\Lambda; \bm{h} + \bm{\mu}_{i,j}}^{+,-}.
\end{equation}
To take the limit as $\mu$ goes to infinity, write the Hamiltonian as
\begin{align*}
    \langle t_it_j\rangle_{\Lambda; \bm{h}+ \bm{\mu}_{i,j}}^{+,-} &= \frac{\sum_{\sigma,\sigma^\prime\in\Omega_\Lambda}t_it_j\exp\left\{\beta \mathcal{H}_{\Lambda; \bm{h}, \bm{h}^\prime}^{+,-}(\sigma,\sigma^\prime) + \beta\sum_{u\in\Lambda\setminus\{i,j\}}\mu(\sigma_u + \sigma_u^\prime)\right\}}{\sum_{\sigma,\sigma^\prime\in\Omega_\Lambda}\exp\left\{\beta \mathcal{H}_{\Lambda; \bm{h}, \bm{h}^\prime}^{+,-}(\sigma,\sigma^\prime) + \beta\sum_{u\in\Lambda\setminus\{i,j\}}\mu(\sigma_u + \sigma_u^\prime)\right\}}\\
    &=\frac{\sum_{\sigma,\sigma^\prime\in\Omega_\Lambda}t_it_j\exp\left\{\beta \mathcal{H}_{\Lambda; \bm{h}, \bm{h}^\prime}^{+,-}(\sigma,\sigma^\prime) - \beta\sum_{u\in\Lambda\setminus\{i,j\}}\mu[2 - (\sigma_u + \sigma_u^\prime)]\right\}}{\sum_{\sigma,\sigma^\prime\in\Omega_\Lambda}\exp\left\{\beta \mathcal{H}_{\Lambda; \bm{h}, \bm{h}^\prime}^{+,-}(\sigma,\sigma^\prime) - \beta\sum_{u\in\Lambda\setminus\{i,j\}}\mu[2-(\sigma_u + \sigma_u^\prime)]\right\}}.
\end{align*}
As $e^{- \beta\mu[2-(\sigma_u + \sigma_u^\prime)]}$ does not converge to zero if and only if $\sigma_u=\sigma_u^\prime=1$, the limit is
\begin{equation*}
    \lim_{\mu\to+\infty}\langle t_it_j\rangle_{\Lambda; \bm{h}+ \bm{\mu}_{i,j}}^{+,-} = \langle t_it_j\rangle_{\{i,j\}; \bm{h}}^{+,+}.
\end{equation*}
The positivity of this limit comes from writing its Hamiltonian in terms of the $s$ and $t$ variables
\begin{equation*}
    \mathcal{H}_{\{i,j\}; \bm{h}}^{+,+}(\sigma,\sigma^\prime) = -\frac{1}{2}J_{ij}(s_is_j + t_it_j) - (h_i +2d-1)s_i - (h_j + 2d-1)s_j.
\end{equation*}
Uniqueness follows from the simple calculation
\begin{align*}
    \langle t_i\rangle_{\Lambda; \bm{h}}^{+,-} &= \frac{1}{4}\langle t_i(t_j^2 + s_j^2)\rangle_{\Lambda; \bm{h}}^{+,-}\\
    &\geq \frac{1}{4}\langle t_it_j^2\rangle_{\Lambda; \bm{h}}^{+,-}\geq \frac{1}{4}\langle t_it_j\rangle_{\Lambda; \bm{h}}^{+,-}\langle t_j\rangle_{\Lambda; \bm{h}}^{+,-},
\end{align*}
where in the first inequality we use (\ref{DVI.1.+}) and in the second we use (\ref{DVI.2.+}). Taking the limit as $\Lambda\nearrow\mathbb{Z}^d$ we conclude that if $\langle t_i\rangle_{\Lambda; \bm{h}}^{+,-}=0$, then $\langle t_j\rangle_{\Lambda; \bm{h}}^{+,-}=0$ for all $j$ neighbour of $i$. As the graph is connected and $i,j$ are arbitrary, this yields $\langle t_j\rangle_{\Lambda; \bm{h}}^{+,-}$ for all sites $j\in\mathbb{Z}^d$. 
\end{proof}
The last result needed is
\begin{proposition}\label{Consequence.of.DVI}
Let $\lambda\geq0$ and $\bm{h}=(h_{i})_{i\in\mathbb{H}_d^+}$ be a family of non-negative real numbers. Then, for all $\Lambda\Subset \mathbb{H}_d^+$ and all $i,j\in\mathbb{H}_d^+$
\begin{equation}
    \langle \sigma_i\sigma_j \rangle_{\Lambda; \lambda,\bm{h}}^-\leq  \langle \sigma_i\sigma_j \rangle_{\Lambda; \lambda,\bm{h}}^+ \label{increasing.correlations}
\end{equation}
and
\begin{equation}
     \langle \sigma_i\sigma_j \rangle_{\Lambda; \lambda,\bm{h}}^- -  \langle \sigma_i \rangle_{\Lambda; \lambda,\bm{h}}^- \langle \sigma_j \rangle_{\Lambda; \lambda,\bm{h}}^- \geq \langle \sigma_i\sigma_j \rangle_{\Lambda; \lambda,\bm{h}}^+ -  \langle \sigma_i \rangle_{\Lambda; \lambda,\bm{h}}^+ \langle \sigma_j \rangle_{\Lambda; \lambda,\bm{h}}^+
\end{equation}
\end{proposition}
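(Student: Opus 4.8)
The plan is to deduce both inequalities from the duplicated-variable inequalities of Corollary~\ref{DVI.for.+.bc}. First I would use the particularization \eqref{J.for.the.semi.infinite} to realize the two semi-infinite states as Ising states on $\mathbb{Z}^d$: for $\Lambda\Subset\mathbb{H}^d_+$, $\langle\cdot\rangle^{+}_{\Lambda;\lambda,\bm h}$ is the Ising state for the interaction $(J^\lambda_{ij})_{i,j\in\mathbb{Z}^d}$ of \eqref{J.for.the.semi.infinite} (non-negative because $\lambda\geq 0$) and the non-negative field $\bm h$ (extended to $\mathbb{Z}^d$ by $h_i:=0$ on $\mathbb{H}^d_-$, which alters no state since those sites are frozen) under the all-plus boundary condition, while $\langle\cdot\rangle^{-}_{\Lambda;\lambda,\bm h}$ is the same Ising state under the boundary condition $\xi$ equal to $-1$ on $\mathbb{H}^d_+$ and $+1$ on $\mathbb{H}^d_-$. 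In particular, the duplicated system whose first copy carries the all-plus boundary condition and whose second copy carries $\xi\in\Omega$ is exactly of the form handled by Corollary~\ref{DVI.for.+.bc}.

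Next I would pass to this duplicated state $\langle\cdot\rangle$, with variables $s_k=\sigma_k+\sigma_k'$ and $t_k=\sigma_k-\sigma_k'$, recalling that the two copies $\sigma,\sigma'$ are independent under $\langle\cdot\rangle$ (the duplicated partition function factorizes) with marginals $\langle\cdot\rangle^{+}_{\Lambda;\lambda,\bm h}$ and $\langle\cdot\rangle^{-}_{\Lambda;\lambda,\bm h}$, respectively. Writing $\langle\cdot\rangle^{\pm}$ for these marginals and expanding with independence, for any $i,j\in\mathbb{H}^d_+$ one finds
\[
\langle t_i s_j\rangle=\bigl(\langle\sigma_i\sigma_j\rangle^{+}-\langle\sigma_i\sigma_j\rangle^{-}\bigr)+\bigl(\langle\sigma_i\rangle^{+}\langle\sigma_j\rangle^{-}-\langle\sigma_i\rangle^{-}\langle\sigma_j\rangle^{+}\bigr),
\]
together with the same identity for $\langle t_j s_i\rangle$ in which the second bracket changes sign.

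For \eqref{increasing.correlations} I would add the two identities: the second brackets cancel, giving $\langle t_i s_j\rangle+\langle t_j s_i\rangle=2\bigl(\langle\sigma_i\sigma_j\rangle^{+}-\langle\sigma_i\sigma_j\rangle^{-}\bigr)$, and since $\langle t_i s_j\rangle\geq 0$ and $\langle t_j s_i\rangle\geq 0$ by the left-hand inequality in \eqref{DVI.1.+}, it follows that $\langle\sigma_i\sigma_j\rangle^{-}\leq\langle\sigma_i\sigma_j\rangle^{+}$. For the covariance inequality I would instead use the right-hand inequality in \eqref{DVI.1.+}, namely $\langle t_i s_j\rangle\leq\langle t_i\rangle\langle s_j\rangle$, and expand $\langle t_i\rangle\langle s_j\rangle=\bigl(\langle\sigma_i\rangle^{+}-\langle\sigma_i\rangle^{-}\bigr)\bigl(\langle\sigma_j\rangle^{+}+\langle\sigma_j\rangle^{-}\bigr)$; subtracting the displayed identity for $\langle t_i s_j\rangle$, the cross-terms $\langle\sigma_i\rangle^{+}\langle\sigma_j\rangle^{-}$ and $\langle\sigma_i\rangle^{-}\langle\sigma_j\rangle^{+}$ cancel and one is left with $\langle\sigma_i\sigma_j\rangle^{+}-\langle\sigma_i\sigma_j\rangle^{-}\leq\langle\sigma_i\rangle^{+}\langle\sigma_j\rangle^{+}-\langle\sigma_i\rangle^{-}\langle\sigma_j\rangle^{-}$, which rearranges exactly to the desired inequality. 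These manipulations hold for all $i,j\in\mathbb{H}^d_+$ irrespective of whether they lie in $\Lambda$, since the factorization identities remain valid when a spin is frozen by the boundary condition.

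The only point requiring genuine care is the translation of the semi-infinite model to an Ising model on $\mathbb{Z}^d$ through \eqref{J.for.the.semi.infinite}: one must check that the second copy of the duplicated system is the \emph{mixed} configuration $\xi$ (minus on the half-space, plus below the wall) rather than the all-minus configuration, that the first copy is literally the all-plus Ising boundary condition so that Corollary~\ref{DVI.for.+.bc} applies with $\eta=\xi$, and that extending $\bm h$ by zero below the wall preserves non-negativity without affecting any state. Once this bookkeeping is in place the rest is just the elementary $s/t$-variable algebra together with independence of the two copies, so there is no substantial obstacle.
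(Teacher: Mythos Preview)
Your proof is correct and follows essentially the same route as the paper: couple the $+$ and $-$ semi-infinite states in a duplicated system, invoke Corollary~\ref{DVI.for.+.bc}, and read off both conclusions from the $s/t$-variable algebra. The only cosmetic difference is in how the semi-infinite model is realized as an Ising model so that the corollary applies: the paper absorbs the wall term into the external field by setting $(\bm h_\lambda)_i=h_i+\lambda\mathbbm{1}_{\{i\in\mathcal W\}}$ and works directly on $\mathbb H^d_+$ with genuine $+$ and $-$ boundary conditions, whereas you push the wall into the interaction via $J^\lambda$ and work on $\mathbb Z^d$ with the mixed boundary $\xi$. Both realizations are valid; the paper's is marginally cleaner since the second copy carries a pure $-$ rather than a mixed configuration, but your bookkeeping is fine. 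For the covariance inequality the paper symmetrizes (adding the bounds for $\langle t_is_j\rangle$ and $\langle s_it_j\rangle$ and halving), while you extract it from a single inequality $\langle t_is_j\rangle\leq\langle t_i\rangle\langle s_j\rangle$; your version is in fact slightly more direct.
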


\begin{proof}
Consider the duplicated state $\langle \cdot \rangle_{\Lambda; \bm{h}_\lambda}^{+,-}$ where $$(h_\lambda)_i=h_i +\lambda\mathbbm{1}_{\{i\in\mathcal{W}\}}.$$ 
The marginal distributions of this state are $\langle \cdot \rangle_{\Lambda;\lambda,\bm{h}}^{+}$ and $\langle \cdot \rangle_{\Lambda;\lambda,\bm{h}}^{-}$. So, 
\begin{align*}
   \frac{1}{2} \Big( \langle s_it_j  & \rangle_{\Lambda; \bm{h}_\lambda}^{+,-} +\langle t_is_j \rangle_{\Lambda; \bm{h}_\lambda}^{+,-}\Big) = \\
   &= \frac{1}{2}\left( \langle (\sigma_i + \sigma_i^\prime)(\sigma_j - \sigma_j^\prime) \rangle_{\Lambda; \bm{h}_\lambda}^{+,-} +\langle (\sigma_i - \sigma_i^\prime)(\sigma_j + \sigma_j^\prime) \rangle_{\Lambda; \bm{h}_\lambda}^{+,-}\right) \\
   &=\langle \sigma_i\sigma_j\rangle_{\Lambda; \bm{h}_\lambda}^{+,-} - \langle \sigma_i^\prime\sigma_j^\prime \rangle_{\Lambda; \bm{h}_\lambda}^{+,-} = \langle \sigma_i\sigma_j\rangle_{\Lambda; \lambda, \bm{h}}^{+} - \langle \sigma_i\sigma_j \rangle_{\Lambda; \lambda, \bm{h}}^{-}.
\end{align*}
The positivity of the RHS of this equation comes from (\ref{DVI.1.+}), and from the other part of the inequality we get
\begin{align*}
    \langle \sigma_i\sigma_j\rangle_{\Lambda; \lambda, \bm{h}}^{+} - \langle \sigma_i  \sigma_j \rangle_{\Lambda; \lambda, \bm{h}}^{-} &\leq \frac{1}{2} \left( \langle s_i\rangle_{\Lambda; \bm{h}_\lambda}^{+,-} \langle t_j   \rangle_{\Lambda; \bm{h}_\lambda}^{+,-} +\langle t_i \rangle_{\Lambda; \bm{h}_\lambda}^{+,-} \langle s_j \rangle_{\Lambda;\bm{h}_\lambda}^{+,-}\right)\\
    &= \langle \sigma_i \rangle_{\Lambda; \bm{h}_\lambda}^{+,-}\langle \sigma_j \rangle_{\Lambda; \bm{h}_\lambda}^{+,-} - \langle \sigma_i^\prime \rangle_{\Lambda; \bm{h}_\lambda}^{+,-}\langle \sigma_j^\prime \rangle_{\Lambda; \bm{h}_\lambda}^{+,-}\\
    &=\langle \sigma_i \rangle_{\Lambda; \lambda, \bm{h}}^{+}\langle \sigma_j \rangle_{\Lambda; \lambda, \bm{h}}^{+} - \langle \sigma_i \rangle_{\Lambda; \lambda, \bm{h}}^{-}\langle \sigma_j \rangle_{\Lambda; \lambda, \bm{h}}^{-},
\end{align*}
what concludes the proof.
\end{proof}

We finish this section by proving the existence of the limits (\ref{Def.F+}) and (\ref{Def.F-}). From now on we always assume that the external field depends only on the distance to the wall, and when taking a field $\mathrm{h}=(\mathrm{h}_k)_{k=0}^\infty$ we hope it is clear that, in the Hamiltonian, $h_i = \mathrm{h}_{i_d}$ for all $i\in\mathbb{H}_d^+$. Moreover, we will only consider the uniform interaction $J_{i,j}=J>0$ for all $i,j\in\mathbb{H}_d^+$.

\begin{theorem}
Given $J>0$, $\lambda \in \mathbb{R}$, an external field $\bm{h}$ induced by $\mathrm{h}=(\mathrm{h}_i)_{i=1}^\infty$ and $\Lambda_n\coloneqq\Lambda_{n,n^\alpha}$, with $0<\alpha<1$, the limits 

\begin{equation}
    \lim_{n\to \infty} \frac{1}{2|\mathcal{W}_n|}\ln\left[ \frac{(\mathcal{Z}^{+, J}_{n; \lambda,\bm{h}})^2}{Q^{+, J}_{\Delta_n; \bm{h}}} \right] 
\end{equation}

and

\begin{equation}
    \lim_{n\to \infty} \frac{1}{2|\mathcal{W}_n|}\ln\left[ \frac{(\mathcal{Z}^{-, J}_{n; \lambda,\bm{h}})^2}{Q^{-, J}_{\Delta_n; \bm{h}}} \right]
\end{equation}
exists for all $0<\alpha<1$.
\end{theorem}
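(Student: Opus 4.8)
The approach is to follow Fröhlich--Pfister \cite{FP-I}; since the field $\bm{h}$ depends only on the height $i_d$ and is extended to $\Delta_n$ symmetrically about $\mathcal{L}$, the argument is essentially insensitive to it. The key structural remark is that $(\mathcal{Z}^{\pm,J}_{n;\lambda,\bm{h}})^{2}$ is the partition function of two independent copies of the semi-infinite model, on $\Lambda_n$ and on its mirror $\Lambda'_n$, each carrying the wall field $\lambda$ on $\mathcal{W}_n$, respectively $\mathcal{W}'_n$; whereas $Q^{\pm,J}_{\Delta_n;\bm{h}}$ is the Ising partition function on $\Delta_n=\Lambda_n\cup\Lambda'_n$, which differs from the decoupled pair only through the $J$-bonds joining $\mathcal{W}_n$ to $\mathcal{W}'_n$ across $\mathcal{L}$ and through the presence/absence of the $\lambda$ fields on $\mathcal{W}_n\cup\mathcal{W}'_n$. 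Each of these is a local modification of one bond or one field, and there are $O(|\mathcal{W}_n|)$ of them, so inserting or deleting them changes the logarithm of a partition function by at most $\beta(2J+\lambda)\,|\mathcal{W}_n|+O(1)$; in particular the quantities in the statement are $O(1)$ and the whole ratio is governed by the region near $\mathcal{L}$.

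\textbf{Step 1 (approximate additivity along the wall, fixed thickness).} Fix a thickness $m$ and consider bricks $\Lambda_{\vec n,m}$ with rectangular base $\vec n\in\mathbb{N}^{d-1}$, writing $G^{\pm}_{\vec n,m}\coloneqq-\ln\bigl[(\mathcal{Z}^{\pm}_{\vec n,m;\lambda,\bm{h}})^{2}/Q^{\pm}_{\Delta_{\vec n,m};\bm{h}}\bigr]$. Because the interaction and the field $\mathrm{h}_{i_d}$ are translation invariant in the directions parallel to $\mathcal{W}$, splitting $\Lambda_{\vec n,m}$ (together with its mirror and with $\Delta_{\vec n,m}$) along a hyperplane parallel to the $x_d$-axis and deleting the $O(m\,n^{d-2})$ bonds crossing it multiplies each partition function by a factor in $[e^{-C\beta J m n^{d-2}},e^{C\beta J m n^{d-2}}]$, so $G^{\pm}_{\vec n,m}$ is additive up to an error $O(m\,n^{d-2})$. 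A standard Fekete-type argument then gives $G^{\pm}_{\vec n,m}/|\mathcal{W}_{\vec n,m}|\to\phi^{\pm}(m)$ as $\vec n\to\infty$, with the quantitative rate $\bigl|\,G^{\pm}_{n,\dots,n,m}/(2|\mathcal{W}_{n,m}|)-\tfrac12\phi^{\pm}(m)\,\bigr|\leq C\,m/n$; and $\tfrac12\phi^{\pm}(m)$ is bounded uniformly in $m$ by the estimate in the previous paragraph. (With $\pm$ boundary conditions one may replace the crude bond-deletion bounds by sharper monotone comparisons via GKS/FKG, but this is not needed here.)

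\textbf{Step 2 (limit in the thickness).} It remains to show $\phi^{\pm}(m)$ converges as $m\to\infty$; the limit, denoted $F^{\pm}$, is then independent of $\alpha$. Taking the $\vec n\to\infty$ limit in $G^{\pm}_{\vec n,m+1}-G^{\pm}_{\vec n,m}$, the increment $\phi^{\pm}(m+1)-\phi^{\pm}(m)$ is a combination of (normalized) free-energy costs of adjoining one extra layer: twice the cost of adjoining the layer at height $m+1$ to the semi-infinite slab, minus the cost of adjoining the layers at heights $m+1$ and $-(m+2)$ to $\Delta_{\vec n,m}$; the symmetric extension of $\bm{h}$ assigns the same field $\mathrm{h}_{m+1}$ to each of these layers. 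As $m\to\infty$ the added layer moves far from the region near $\mathcal{L}$ (the $\lambda$-wall, respectively the cross-$\mathcal{L}$ bonds) and from both slab boundaries, so all these costs converge to the same Ising bulk value and $\phi^{\pm}(m+1)-\phi^{\pm}(m)\to 0$. Upgrading this to convergence of $\phi^{\pm}(m)$ itself uses that the influence of the bottom of the slab on the layer at height $m+1$ decays fast enough to make the increments summable (as in \cite{FP-I}, where one may instead invoke the monotonicity of $m\mapsto\phi^{\pm}(m)$ supplied by the correlation inequalities on the reflected system). Combined with the uniform bound from Step 1, this yields $\phi^{\pm}(m)\to F^{\pm}$.

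\textbf{Conclusion and main difficulty.} Combining the two steps with $m=n^{\alpha}$,
\[
 \Bigl|\,\frac{1}{2|\mathcal{W}_n|}\ln\frac{(\mathcal{Z}^{\pm,J}_{n;\lambda,\bm{h}})^{2}}{Q^{\pm,J}_{\Delta_n;\bm{h}}}-\tfrac12 F^{\pm}\,\Bigr|
 \;\leq\; C\,\frac{n^{\alpha}}{n}\;+\;\Bigl|\,\tfrac12\phi^{\pm}(n^{\alpha})-\tfrac12 F^{\pm}\,\Bigr|\;\xrightarrow[\;n\to\infty\;]{}\;0,
\]
the first term vanishing precisely because $\alpha<1$ and the second by Step 2; this proves that both limits exist for every $0<\alpha<1$ (and exhibits them as $\alpha$-independent). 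I expect the genuine obstacle to be Step 2 --- obtaining true convergence of $\phi^{\pm}(m)$, not merely that consecutive increments tend to $0$ --- which is where the specific structure of the Ising model must be used, through the monotonicity coming from the correlation inequalities on the doubled (reflected) system. A secondary, purely bookkeeping matter is to verify that the constant in the Step-1 error bound is independent of $m$.
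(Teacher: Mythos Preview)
Your approach is genuinely different from the paper's, and the gap you flag in Step~2 is real and is precisely what the paper avoids.

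The paper does not decouple base and thickness. Instead it writes
$(\mathcal{Z}^{\pm}_{n})^{2}=\sum_{\sigma\in\Omega^{\pm}_{\Delta_n}}\exp\{-H_{Is,n}(\sigma)-\tilde{\mathcal H}_n(\sigma)\}$
with
\[
\tilde{\mathcal H}_n(\sigma)=\sum_{i\in\mathcal W_n,\,j\in\mathcal W'_n}J\sigma_i\sigma_j-\sum_{i\in\mathcal W_n\cup\mathcal W'_n}\lambda\sigma_i,
\]
and interpolates: with $\Xi_n(t)=\sum_{\sigma}\exp\{-H_{Is,n}-t\tilde{\mathcal H}_n\}$ one has
$\ln\!\bigl[(\mathcal Z^{\pm}_n)^2/Q^{\pm}_{\Delta_n}\bigr]=\int_0^1\frac{d}{dt}\ln\Xi_n(t)\,dt$, and the derivative is a sum over $|\mathcal W_n|$ local observables (a cross-$\mathcal L$ correlation $\langle\sigma_i\sigma_j\rangle^{\pm}_n(t)$ and a one-point function $\langle\sigma_i\rangle^{\pm}_n(t)$) in the interpolated state. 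After normalizing by $|\mathcal W_n|$, translation invariance parallel to the wall reduces this to a single correlation and a single magnetization; the existence of the thermodynamic limits $\langle\cdot\rangle^{\pm}(t)$ follows from FKG-monotonicity of the $\pm$ states in the volume (no sign assumption on $\bm h$ is needed here, since one writes $\sigma_i\sigma_j$ in terms of monotone functions); dominated convergence then gives the limit as $\int_0^1\bigl[\tfrac{J}{2}\langle\sigma_i\sigma_j\rangle^{\pm}(t)+\lambda\langle\sigma_i\rangle^{\pm}(t)\bigr]dt$.

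Compared with your route, this bypasses Step~2 completely: the interpolation localizes the ratio to a finite number of observables \emph{per wall site}, whose convergence is a direct consequence of the standard monotonicity of the extremal states. Your Step~1 is fine, but in Step~2 ``increments tend to zero'' does not give convergence, you have not proved summability, and the fallback on monotonicity of $m\mapsto\phi^{\pm}(m)$ via correlation inequalities is unjustified in this generality: the theorem allows an arbitrary height-dependent field $\mathrm h=(\mathrm h_i)$, so GKS/GHS-type inequalities are not available, and there is no evident FKG-based reason for $\phi^{\pm}(m)$ to be monotone. The interpolation argument is what makes the limit drop out cleanly.
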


\begin{proof}
As the parameters $J, \lambda$, and $\bm{h}$ are fixed, we omit them from the notation. Also, in all the sums we are going to omit $i\sim j$ since it is always the case. Start by noticing that 
\begin{equation*}
    (\mathcal{Z}^{+}_{n})^2 = \sum_{\sigma\in\Omega^+_{\Delta_n}} \exp{ \{ \sum_{\substack{i,j\in \mathbb{H}_d^+  \{i,j\} \cap \Lambda_n \neq \emptyset}} J\sigma_i\sigma_j + \sum_{\substack{i,j\in \mathbb{H}_d^-  \{i,j\} \cap \Lambda_n^\prime \neq \emptyset}} J\sigma_i\sigma_j +  \sum_{i \in \Delta_n} h_i\sigma_i  + \sum_{i\in \mathcal{W}_n \cup \mathcal{W}^\prime_n} \lambda \sigma_i \} }
\end{equation*}

Defining 

$$\Tilde{\mathcal{H}}_n(\sigma) \coloneqq \sum_{i\in \mathcal{W}_n, j\in \mathcal{W}_n^\prime} J\sigma_i\sigma_j -\sum_{i\in \mathcal{W}_n\cup \mathcal{W}_n^\prime} \lambda \sigma_i$$ 

we have that 
\begin{equation*}
     (\mathcal{Z}^{+}_{n})^2 = \sum_{\sigma\in\Omega^+_{\Delta_n}} \exp{- \{ H_{Is, n}(\sigma) + \Tilde{\mathcal{H}}_n (\sigma)\}}
\end{equation*}
where $H_{Is, n}$ is the usual Ising Hamiltonian such that

\begin{equation*}
    Q^{+, J}_{\Delta_n; \bm{h}} = \sum_{\omega \in \Omega^+_{\Delta_n}}\exp{\{- H_{Is, n}\}}. 
\end{equation*}

Now, defining 
\begin{equation*}
    \Xi_n(t) = \sum_{\sigma\in\Omega^+_{\Delta_n}} \exp{- \{ H_{Is, n}(\sigma) + t\Tilde{\mathcal{H}}_n(\sigma)\}} 
\end{equation*}
we get
\begin{align*}
    \ln\left[ \frac{(\mathcal{Z}^{+, J}_{n; \lambda,\bm{h}})^2}{Q^{+, J}_{\Delta_n; \bm{h}}} \right] &= \ln\left[ \frac{\Xi_n(1)}{\Xi_n(0)}\right] = \int_0^1 (\frac{d}{dt} \ln\left[ \Xi_n(t) \right]) dt
\end{align*}

As
\begin{align*}
    \frac{d}{dt} \ln\left[ \Xi_n(t) \right] &= \frac{1}{\Xi_n(t)}\sum_{\sigma\in\Omega^+_{\Delta_n}} \Tilde{\mathcal{H}}_n \exp{- \{ H_{Is, n}(\sigma) + t\Tilde{\mathcal{H}}_n\}}\\
    &= \sum_{i\in \mathcal{W}_n, j\in \mathcal{W}_n^\prime} J\langle\sigma_i\sigma_j\rangle^+_n(t) -\sum_{i\in \mathcal{W}_n\cup \mathcal{W}_n^\prime} \lambda \langle\sigma_i\rangle^+_n(t),
\end{align*}
where $\langle\cdot\rangle^+_n(t)$ are the Gibbs states with Hamiltonian $ H_{Is, n}(\sigma) + t\Tilde{\mathcal{H}}_n$ and $+$ boundary condition. Taking now the limiting states ${\langle\cdot\rangle^+(t) = \lim_{n \to \infty} \langle\cdot\rangle^+_n(t)}$, these are invariant under translations parallel to the wall since the local states are, and therefore

\begin{equation*}
    \lim_{n\to\infty} \frac{1}{2|\mathcal{W}_n|}\sum_{i\in \mathcal{W}_n, j\in \mathcal{W}_n^\prime} J\langle\sigma_i\sigma_j\rangle^t_n -\sum_{i\in \mathcal{W}_n\cup \mathcal{W}_n^\prime} \lambda \langle\sigma_i\rangle^t_n = \frac{J}{2}\langle\sigma_i\sigma_j\rangle^t + \lambda \langle\sigma_i\rangle^t 
\end{equation*}
for any $i,j\in \mathcal{W}_n$. The factor $\frac{1}{2}$ vanishes on the second term since the states are also invariant under reflection through the line $\mathcal{L}$.  Now we just use the dominated convergence theorem to get
\begin{equation*}
    \lim_{n\to \infty} \frac{1}{2|\mathcal{W}_n|}\ln\left[ \frac{(\mathcal{Z}^{+, J}_{n; \lambda,\bm{h}})^2}{Q^{+, J}_{\Delta_n; \bm{h}}} \right] = \int_0^1 \frac{J}{2}\langle\sigma_i\sigma_j\rangle^t + \lambda \langle\sigma_i\rangle^t dt
\end{equation*}

\end{proof}

\section{The wetting transition with no field} In this section, we characterize the picture of the wetting transition with no external field. As $\bm{h}\equiv 0$, we are omitting $\bm{h}$ from the notation through this whole subsection. We also assume $\lambda\geq 0$, since the other case is equivalent by spin-flip symmetry. Also, as the external field plays no role, we opt to emphasize the interaction $J$ in the limiting states, so we write
\begin{equation*}
    \langle f\rangle^{\eta}_{J, \lambda} \coloneqq \lim_{n\to\infty} \langle f\rangle^{\eta}_{\Lambda_n; \lambda, 0}.
\end{equation*}

With the previously defined wall free energy, $\tau_w(J, \lambda, \bm{h}) \coloneqq F^{-}(J, \lambda, \bm{h}) - F^{+}(J, \lambda, \bm{h})$ and the interface free energy for the Ising model
\begin{equation*}
    \tau(J) = \lim_{n,m\to\infty} -\frac{1}{|\mathcal{W}_n|} \ln\left[ \frac{Q^{\mp, J}_{\Delta_{m,n}; 0}}{Q^{+, J}_{\Delta_{m,n}; 0}} \right],
\end{equation*}
our goal will be to show the following results, proved in \cite{FP-II}:

\begin{proposition}\label{PF.1}
For $\bm{h}\equiv 0$, the wall free energy $\tau_w(J,\lambda)$ can be written as 

\begin{equation}\label{tau_w.as.integral}
    \tau_w(J,\lambda)=\int_0^\lambda \langle \sigma_0 \rangle^+_{J,s} - \langle \sigma_0 \rangle^-_{J,s} ds.
\end{equation}
\end{proposition}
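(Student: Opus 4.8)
The plan is to express $\tau_w(J,\lambda)$ as an integral over the coupling-strength parameter $\lambda$, exploiting the fact that with no external field the Ising partition functions $Q^{\pm,J}_{\Delta_n;0}$ are independent of $\lambda$, so differentiating $F^{\pm}(J,\lambda)$ with respect to $\lambda$ only sees the semi-infinite partition functions $\mathcal{Z}^{\pm,J}_{n;\lambda}$. First I would recall that $\tau_w(J,\lambda) = F^-(J,\lambda) - F^+(J,\lambda)$, and use the representation from the existence theorem in Section~1.2: for each boundary condition $\pm$,
\begin{equation*}
    -F^{\pm}(J,\lambda) = \lim_{n\to\infty}\frac{1}{2|\mathcal{W}_n|}\ln\left[\frac{(\mathcal{Z}^{\pm,J}_{n;\lambda})^2}{Q^{\pm,J}_{\Delta_n;0}}\right].
\end{equation*}
Subtracting, the $Q^{\pm}$ terms do not cancel (they have different boundary conditions), so instead I would differentiate directly in $\lambda$ at finite volume before comparing.

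The key computation is that, since the $\lambda$-dependence of $\mathcal{Z}^{\pm,J}_{n;\lambda}$ enters only through the wall term $\lambda\sum_{i\in\mathcal{W}_n}\sigma_i$ in the Hamiltonian (and its reflected copy $\mathcal{W}'_n$), we have
\begin{equation*}
    \frac{d}{d\lambda}\ln\left[\frac{(\mathcal{Z}^{\pm,J}_{n;\lambda})^2}{Q^{\pm,J}_{\Delta_n;0}}\right] = 2\sum_{i\in\mathcal{W}_n}\langle\sigma_i\rangle^{\pm}_{\Lambda_n;\lambda},
\end{equation*}
using that each of the two factors of $\mathcal{Z}^{\pm}_{n;\lambda}$ contributes $\sum_{i\in\mathcal{W}_n}\langle\sigma_i\rangle^{\pm}_{\Lambda_n;\lambda}$ and that $Q$ is $\lambda$-free. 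Dividing by $2|\mathcal{W}_n|$ and using translation invariance parallel to the wall of the limiting state (Proposition~\ref{properties.from.fkg}), the spatial average converges to $\langle\sigma_0\rangle^{\pm}_{J,\lambda}$. Then I would integrate in $\lambda$ from $0$ to $\lambda$, invoking the dominated convergence theorem (the magnetizations are bounded by $1$) to exchange the limit $n\to\infty$ with the integral, exactly as in the proof of the existence theorem. This gives
\begin{equation*}
    -F^{\pm}(J,\lambda) + F^{\pm}(J,0) = \int_0^\lambda \langle\sigma_0\rangle^{\pm}_{J,s}\,ds,
\end{equation*}
and subtracting the $+$ equation from the $-$ equation yields $\tau_w(J,\lambda) - \tau_w(J,0) = \int_0^\lambda(\langle\sigma_0\rangle^+_{J,s} - \langle\sigma_0\rangle^-_{J,s})\,ds$.

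To finish, I need $\tau_w(J,0) = 0$, i.e. that when $\lambda=0$ the wall exerts no influence and the surface free energies for $+$ and $-$ boundary conditions coincide. At $\lambda=0$ the semi-infinite model is just the Ising model on $\H+$ with free boundary on the wall; by spin-flip symmetry (and the reflection symmetry through $\mathcal{L}$ built into the definition of $\Delta_n$), $\mathcal{Z}^{+,J}_{n;0} = \mathcal{Z}^{-,J}_{n;0}$ and $Q^{+,J}_{\Delta_n;0} = Q^{-,J}_{\Delta_n;0}$, so $F^+(J,0) = F^-(J,0)$ and hence $\tau_w(J,0)=0$. Combining, $\tau_w(J,\lambda) = \int_0^\lambda(\langle\sigma_0\rangle^+_{J,s} - \langle\sigma_0\rangle^-_{J,s})\,ds$, as claimed.

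The main obstacle I anticipate is the careful justification of interchanging the thermodynamic limit $n\to\infty$ with the $\lambda$-integral and with differentiation: one must check that the finite-volume derivative identity holds uniformly enough, that the convergence of the spatial average $\frac{1}{|\mathcal{W}_n|}\sum_{i\in\mathcal{W}_n}\langle\sigma_i\rangle^{\pm}_{\Lambda_n;\lambda}$ to $\langle\sigma_0\rangle^{\pm}_{J,\lambda}$ is valid (this uses translation invariance of the limiting state together with a Cesàro-type argument, since translation invariance at finite volume is only approximate near the boundary of $\mathcal{W}_n$), and that dominated convergence applies — which it does, since all magnetizations lie in $[-1,1]$ and the box $\Lambda_n = \Lambda_{n,n^\alpha}$ has $|\mathcal{W}_n|$ growing polynomially so the boundary corrections are negligible. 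A secondary point is to make sure that the derivative of $F^\pm$ in $\lambda$ exists, which follows from monotonicity/convexity of $F^\pm$ in $\lambda$ (convexity in $\lambda$ is standard for finite-volume pressures and passes to the limit) together with the explicit formula for the derivative, so that the fundamental theorem of calculus applies.
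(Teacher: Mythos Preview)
Your proposal is correct and follows essentially the same route as the paper: differentiate the finite-volume log partition functions in $\lambda$ to produce wall magnetizations, then pass to the limit via the Ces\`aro argument (FKG monotonicity from Lemma~\ref{decreasing.states} for the upper bound, translation invariance for the lower bound) and dominated convergence. The only cosmetic difference is that the paper notes up front that $Q^{+,J}_{\Delta_n;0}=Q^{-,J}_{\Delta_n;0}$ by spin-flip symmetry---so contrary to what you write the $Q^{\pm}$ terms \emph{do} cancel in $F^{-}-F^{+}$, giving $\tau_w(J,\lambda)=\lim_n -|\mathcal{W}_n|^{-1}\ln[\mathcal{Z}^{-}_{n;\lambda}/\mathcal{Z}^{+}_{n;\lambda}]$ directly---whereas you treat each $F^{\pm}$ separately and invoke the same symmetry only at the end to obtain $\tau_w(J,0)=0$.
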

and
\begin{theorem}\label{PF.2}
When $\bm{h}\equiv 0$ and $J>0$ we have
\begin{itemize}
    \item[(a)] $0 \leq \tau_w(J,\lambda) \leq \tau(J)$ for all $\lambda\geq 0$. Also, $\tau_w(J,0) = 0$;
    \item[(b)] $\tau_w$ is an monotonic non-decreasing function of $J$ and $\lambda \geq 0$;
    \item[(c)] $\tau_w(J,\lambda)$ is a concave function of $\lambda\geq 0$; 
    \item[(d)] If $\lambda\geq J$ then $\tau_w(J,\lambda)=\tau(J)$.\\
\end{itemize}
\end{theorem}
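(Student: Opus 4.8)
The plan is to route everything through the identity
$\tau_w(J,\lambda)=\lim_{n\to\infty}\frac{1}{|\mathcal{W}_n|}\ln\frac{\mathcal{Z}^{+,J}_{n;\lambda}}{\mathcal{Z}^{-,J}_{n;\lambda}}$,
which follows from (\ref{Def.F+})--(\ref{Def.F-}) and the spin-flip identity $Q^{+,J}_{\Delta_n;0}=Q^{-,J}_{\Delta_n;0}$ (valid since $\bm{h}\equiv 0$), together with the representation (\ref{tau_w.as.integral}) of Proposition \ref{PF.1}. From (\ref{tau_w.as.integral}), $\tau_w(J,0)=0$ is immediate, and since $\sigma_0$ is nondecreasing, $\langle\sigma_0\rangle^+_{J,s}\ge\langle\sigma_0\rangle^-_{J,s}$ for every $s$ by Lemma \ref{extremality.of.+.and.-.bc}; hence $\tau_w\ge 0$ and $\tau_w(J,\cdot)$ is nondecreasing, which takes care of $\tau_w(J,0)=0$ and the $\lambda$-part of (b). For concavity (c), by (\ref{tau_w.as.integral}) it suffices to show $g(\lambda):=\langle\sigma_0\rangle^+_{J,\lambda}-\langle\sigma_0\rangle^-_{J,\lambda}$ is nonincreasing; in a finite box $\Lambda_n$ the $\lambda$-derivative of $\langle\sigma_0\rangle^{+}-\langle\sigma_0\rangle^{-}$ equals $\sum_{k\in\mathcal{W}_n}\big[(\langle\sigma_0\sigma_k\rangle^{+}-\langle\sigma_0\sigma_k\rangle^{-})-(\langle\sigma_0\rangle^{+}\langle\sigma_k\rangle^{+}-\langle\sigma_0\rangle^{-}\langle\sigma_k\rangle^{-})\big]$ (all states the finite-volume ones in $\Lambda_n$), and each summand is $\le 0$ by the second inequality of Proposition \ref{Consequence.of.DVI}; letting $n\to\infty$, $g$ is nonincreasing, so $\tau_w(J,\cdot)$ is concave.

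For the $J$-part of (b) I would differentiate $\ln\mathcal{Z}^{+,J}_{n;\lambda}-\ln\mathcal{Z}^{-,J}_{n;\lambda}$ in $J$ at fixed $n,\lambda$: each bulk bond contributes $\langle\sigma_i\sigma_j\rangle^+_{n;\lambda}-\langle\sigma_i\sigma_j\rangle^-_{n;\lambda}\ge 0$ by (\ref{increasing.correlations}), and each boundary bond $\{i,j\}$ (with $j$ pinned to $\pm1$) contributes $\langle\sigma_i\rangle^+_{n;\lambda}+\langle\sigma_i\rangle^-_{n;\lambda}$, which is nonnegative since flipping all spins turns the $-$ state with wall field $\lambda$ into the $+$ state with wall field $-\lambda$, so that $\langle\sigma_i\rangle^+_{n;\lambda}+\langle\sigma_i\rangle^-_{n;\lambda}=\langle\sigma_i\rangle^{+}_{n;\lambda}-\langle\sigma_i\rangle^{+}_{n;-\lambda}\ge 0$ by monotonicity of $\langle\sigma_i\rangle$ in the wall field (a consequence of Lemma \ref{extremality.of.+.and.-.bc} after rewriting the field as a coupling to a pinned ghost spin). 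Hence $\frac{1}{|\mathcal{W}_n|}\ln\frac{\mathcal{Z}^{+,J}_{n;\lambda}}{\mathcal{Z}^{-,J}_{n;\lambda}}$ is nondecreasing in $J$ for every $n$, so the limit $\tau_w(J,\lambda)$ is too.

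For the upper bound in (a) and for (d) I compare semi-infinite partition functions with honest Ising ones. For (a): at fixed $n$, $\lambda\mapsto\frac{1}{|\mathcal{W}_n|}\ln\frac{\mathcal{Z}^{+,J}_{n;\lambda}}{\mathcal{Z}^{-,J}_{n;\lambda}}$ is nondecreasing (derivative $\frac{1}{|\mathcal{W}_n|}\sum_{i\in\mathcal{W}_n}\big(\langle\sigma_i\rangle^+_{n;\lambda}-\langle\sigma_i\rangle^-_{n;\lambda}\big)\ge 0$), and as $\lambda\to\infty$ the wall field pins $\mathcal{W}_n$ to $+$, so this ratio increases to $\frac{1}{|\mathcal{W}_n|}\ln\frac{Q^{+,J}_{\Lambda_n^{>0}}}{Q^{\mp,J}_{\Lambda_n^{>0}}}$, where $\Lambda_n^{>0}$ is the part of $\Lambda_n$ at height $\ge 1$ and the $\mp$ boundary condition produces one flat interface near its bottom; as $n\to\infty$ this tends to $\tau(J)$, so $\tau_w(J,\lambda)\le\tau(J)$. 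For (d): writing $\mathcal{Z}^{\pm,J}_{n;\lambda}=\mathcal{Z}^{\pm,J}_{n;J}\,\langle e^{(\lambda-J)\sum_{i\in\mathcal{W}_n}\sigma_i}\rangle^{\pm}_{n;J}$ and observing that at $\lambda=J$ the wall coupling (\ref{J.for.the.semi.infinite}) equals the bulk coupling — so $\mathcal{Z}^{+,J}_{n;J}$ and $\mathcal{Z}^{-,J}_{n;J}$ are the Ising partition functions on a box with, respectively, $+$ boundary condition and the mixed boundary condition producing one flat interface, whence $\frac{1}{|\mathcal{W}_n|}\ln\frac{\mathcal{Z}^{+,J}_{n;J}}{\mathcal{Z}^{-,J}_{n;J}}\to\tau(J)$ — one gets $\tau_w(J,\lambda)=\tau(J)+\lim_{n\to\infty}\frac{1}{|\mathcal{W}_n|}\ln\frac{\langle e^{(\lambda-J)\sum_{i\in\mathcal{W}_n}\sigma_i}\rangle^+_{n;J}}{\langle e^{(\lambda-J)\sum_{i\in\mathcal{W}_n}\sigma_i}\rangle^-_{n;J}}$. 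When $\lambda\ge J$, $\sigma\mapsto e^{(\lambda-J)\sum_{i\in\mathcal{W}_n}\sigma_i}$ is nondecreasing, so the remaining ratio is $\ge 1$ by Lemma \ref{extremality.of.+.and.-.bc} and $\tau_w(J,\lambda)\ge\tau(J)$; combined with (a) this yields $\tau_w(J,\lambda)=\tau(J)$.

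The main obstacle is the two places where a semi-infinite partition function must be recognized as an Ising partition function with a single flat interface — pinning $\mathcal{W}_n$ to $+$ in (a), and setting $\lambda=J$ in (d) — and then identifying the resulting interface free energy (computed with the interface sitting near the bottom of a box $\Lambda_{n,n^\alpha}$, $0<\alpha<1$) with $\tau(J)$ as defined in (\ref{tau(J)}) on $\Delta_{m,n}$; this is the standard but somewhat tedious fact that the interface free energy does not depend on the aspect ratio of the box or on where the interface is placed, and it absorbs most of the technical work. The rest is routine bookkeeping: checking that the monotonicity and concavity statements established in finite volume survive the $n\to\infty$ limit, and that the derivatives in $\lambda$ and in $J$ may be taken before passing to the limit (which here is avoided altogether by arguing monotonicity at each finite $n$).
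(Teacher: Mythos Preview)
Your proposal is correct and follows the same route as the paper: the integral representation (\ref{tau_w.as.integral}) for the lower bound in (a), the $\lambda$-part of (b) and (c); direct differentiation plus correlation inequalities for the $J$-part of (b); the $\lambda\to\infty$ pinning argument for the upper bound in (a); and the identification $\tau_w(J,J)=\tau(J)$ for (d). Two small remarks. First, for (d) the paper is more economical: having already shown $\tau_w$ nondecreasing in $\lambda$ and bounded above by $\tau(J)$, it suffices to check $\tau_w(J,J)=\tau(J)$, which is immediate since at $\lambda=J$ the wall coupling equals the bulk coupling and $\mathcal{Z}^{\pm,J}_{n;J}$ are translated Ising partition functions; your FKG ratio argument is correct but unnecessary. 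Second, the ``main obstacle'' you flag (identifying the single-interface free energy in an asymmetric box with $\tau(J)$) is handled in the paper by first replacing the $-$ boundary condition with a $\mp$ boundary condition (equal to $+$ on the bottom half of the lateral boundary, $-$ on the top half), showing this change is negligible at the $|\mathcal{W}_n|^{-1}$ scale, and then taking $\lambda\to\infty$; after pinning, the resulting ratio is literally a subsequence of the defining limit (\ref{tau(J)}).
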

With these results, we see that 
\begin{equation*}\label{lambda_c.by.free.enegies}
    \lambda_c \coloneqq \inf \{\lambda\geq 0 : \tau_w(J,\lambda,0)=\tau(J) \}
\end{equation*}
 is finite. Moreover, from (\ref{tau_w.as.integral}) we get
 
 \begin{equation*}
     \tau(J)=\int_0^{\lambda_c} \langle \sigma_0 \rangle^+_{J,s} - \langle \sigma_0 \rangle^-_{J,s} ds.
 \end{equation*}
 
Let $J_c$ be the critical value for the phase transition for the Ising model. It was proved in \cite{BLP.1980} that $\tau(J)=0$ for all $J<J_c$ and in \cite{Lebowitz_Pfister_81} that $\tau(J)>0$ for all $J>J_c$. This, together with \mbox{Theorem \ref{PF.2}(a)}, shows that the semi-infinite Ising model has the same critical temperature as the usual Ising model, independent of the wall influence $\lambda$.\\

The most import consequence of this results is that get a criteria for uniqueness of the states once $\langle \sigma_0 \rangle^+_{J,\lambda} = \langle \sigma_0 \rangle^-_{J,\lambda}$ for $\lambda > \lambda_c$ and for $\lambda<\lambda_c$ this equality does not hold. This will be proved at the end of the section, completing the phase transition picture. We now proceed to the proof of Proposition \ref{PF.1}.\\

\begin{proof}[Proof of Proposition \ref{PF.1}]
We start by noting that, as we don't have an external field, $Q_{\Delta_n}^+ = Q_{\Delta_n}^-$.This simplifies the surface tension to 
\begin{equation}\label{tau_w.with.no.field}
    \tau_w(J,\lambda) = F^-(J,\lambda) - F^+(J,\lambda) = \lim_{n \to \infty} -\frac{1}{|\mathcal{W}_n|}\ln\left[ \frac{\mathcal{Z}^{-}_{n; \lambda}}{\mathcal{Z}^{+}_{n; \lambda}} \right].
\end{equation}

Differentiating each term in the limit w.r.t. $\lambda$ we get
\begin{align*}
    -\partial_\lambda \left( \ln\left[ \frac{Z_{n;\lambda}^{-, J}}{Z_{n;\lambda}^{+, J}} \right] \right) 
        &= \partial_\lambda \left( \ln Z_{n;\lambda}^{+, J} - \ln{Z_{n;\lambda}^{-, J}} \right) \\
        &= \frac{1}{Z_{n;\lambda}^{+, J}} \partial_\lambda \left( Z_{n;\lambda}^{+, J} \right) - \frac{1}{Z_{n;\lambda}^{-, J}} \partial_\lambda \left( Z_{n;\lambda}^{-, J} \right).
\end{align*}
As 
\begin{equation*}
    \partial_\lambda\left(Z_{n;\lambda}^{+, J}\right)=\sum_{i\in \mathcal{W}_n}\sum_{\sigma\in\Sigma_{\Lambda_n}^{+}} \sigma_i e^{-\mathcal{H}_{\Lambda_n; \lambda, 0}^J(\sigma)},
\end{equation*}
we conclude that
\begin{equation}\label{d.lambda.of.ln[Z+/Z-]}
     -\partial_\lambda \left( \ln\left[ \frac{Z_{n;\lambda}^{-, J}}{Z_{n;\lambda}^{+, J}} \right] \right) = \sum_{i\in \mathcal{W}_n} \langle \sigma_i \rangle^{+,J}_{n;\lambda} - \langle \sigma_i \rangle^{-,J}_{n;\lambda}.
\end{equation}

All of the above functions are continuous and bounded since they are the logarithm of positive polynomials. Therefore we can apply the fundamental theorem of calculus to get 
\begin{equation}\label{tau_w.as.lim.of.integral}
    \tau_w(J,\lambda) = \lim_{n\to \infty} \frac{1}{|\mathcal{W}_n|}\sum_{i\in \mathcal{W}_n}\int_0^\lambda \langle \sigma_i \rangle^{+,J}_{n;s} - \langle \sigma_i \rangle^{-,J}_{n;s}ds,
\end{equation}

so the proof will be finished once we prove that, for any fixed $s>0$,
\begin{equation}\label{lim.to.magnetization.plus}
    \lim_{n\to \infty} \frac{1}{|\mathcal{W}_n|}\sum_{i\in \mathcal{W}_n} \langle \sigma_i \rangle^{+,J}_{n;s} = \langle \sigma_0 \rangle^{+,J}_{s}
\end{equation}

and

\begin{equation}\label{lim.to.magnatization.minus}
    \lim_{n\to \infty} \frac{1}{|\mathcal{W}_n|}\sum_{i\in \mathcal{W}_n} \langle \sigma_i \rangle^{-,J}_{n;s} = \langle \sigma_0 \rangle^{-,J}_{s},
\end{equation}
then, by the dominated convergence theorem we conclude (\ref{tau_w.as.integral}).\\

We will prove only (\ref{lim.to.magnetization.plus}) since the proof for the other limit is analogous. Start by noticing that, for any $i\in \mathcal{W}_n$, $\langle \sigma_i \rangle^{+,J}_{n;s} \xrightarrow{n \uparrow \infty} \langle \sigma_0 \rangle^{+,J}_{s}$ by the translation invariance of the limit state (Proposition \ref{properties.from.fkg}). 

The rest of the proof consists of bounding from above and below the terms in the limit (\ref{lim.to.magnetization.plus}). For the upper bound, fix an $m\in\mathbb{N}$. For any $n\geq m$
\begin{align*}
    \frac{1}{|\mathcal{W}_n|}\sum_{i\in \mathcal{W}_n} \langle \sigma_i \rangle^{+,J}_{n;s} &= \frac{1}{|\mathcal{W}_n|}\sum_{i\in W_{n-m}} \langle \sigma_i \rangle^{+,J}_{n;s} + \frac{1}{|\mathcal{W}_n|}\sum_{i\in \mathcal{W}_n\setminus W_{n-m}} \langle \sigma_i \rangle^{+,J}_{n;s}.
\end{align*}

If $i\in W_{n-m}$ we have that $i+\Lambda_m \subset \Lambda_n$ and by Lemma \ref{decreasing.states} $\langle \sigma_i \rangle^{+,J}_{n;s}\leq \langle \sigma_i \rangle^{+,J}_{\Lambda_m + i;s}= \langle \sigma_0 \rangle^{+,J}_{m;s}$. Therefore
\begin{equation}
    \frac{1}{|\mathcal{W}_n|}\sum_{i\in W_{n-m}} \langle \sigma_i \rangle^{+,J}_{n;s} \leq \frac{1}{|W_{n-m}|}\sum_{i\in W_{n-m}} \langle \sigma_0 \rangle^{+,J}_{m;s} = \langle \sigma_0 \rangle^{+,J}_{m;s}.
\end{equation}

If $i\in \mathcal{W}_n\setminus W_{n-m}$, then $i + \Lambda_m\not\subset \Lambda_n$ and this set intersects the boundary of the wall $$\partial\mathcal{W}_n\coloneqq \{ i\in\mathcal{W}_n : \exists j\in \mathcal{W}_{n+1}\setminus \mathcal{W}_{n} \text{ s.t. }i\sim j  \},$$ so we can bound the number of such vertex by $|\Lambda_m||\partial\Lambda_n|$. Since $|\langle \sigma_i \rangle^{+,J}_{n;s}|\leq 1$, we have
\begin{equation*}
    \frac{1}{|\mathcal{W}_n|}\sum_{i\in \mathcal{W}_n\setminus W_{n-m}} \langle \sigma_i \rangle^{+,J}_{n;s} \leq \frac{2|\Lambda_m||\partial\mathcal{W}_n|}{|\mathcal{W}_n|}
\end{equation*}
which goes to zero as $n$ increases. Putting both bounds together we get 
\begin{equation*}
    \limsup_{n}  \frac{1}{|\mathcal{W}_n|}\sum_{i\in W_{n}} \langle \sigma_i \rangle^{+,J}_{n;s} \leq \langle \sigma_0 \rangle^{+,J}_{m;s}.
\end{equation*}
As $m$ is arbitrary, we can take the limit to get the upper bound in (\ref{lim.to.magnetization.plus}). The lower bound is a direct consequence of the translation invariance and Lemma \ref{decreasing.states} since

\begin{equation*}
    \langle \sigma_0 \rangle^{+,J}_{s} = \frac{1}{|\mathcal{W}_n|}\sum_{i\in \mathcal{W}_n} \langle \sigma_i \rangle^{+,J}_{s} \leq \frac{1}{|\mathcal{W}_n|}\sum_{i\in \mathcal{W}_n} \langle \sigma_i \rangle^{+,J}_{n;s},
\end{equation*}
therefore $\liminf_{n}\frac{1}{|\mathcal{W}_n|}\sum_{i\in \mathcal{W}_n} \langle \sigma_i \rangle^{+,J}_{n;s} \geq \langle \sigma_0 \rangle^{+,J}_{s}$.

\end{proof}

\begin{remark}
One fundamental difference when we have a non zero external field is that the integral in (\ref{tau_w.as.integral})  from $0$ to $\lambda$ leaves one extra term, that does not vanish when the external field is not zero.
\end{remark}

\begin{proof}[Proof of Theorem \ref{PF.2}] \text{  }\\

\textit{Proof of (a):} We define the set $\Omega_n^\mp$ of configurations with $\mp$-boundary condition as the configurations $\omega$ such that, for all $i\in\mathbb{H}^d_+$,  
\begin{equation*}
    \omega_i=\begin{cases}
                +1, & \text{ if } i_d\leq n^\alpha/2 \\
                -1 & \text{ if } i_d > n^\alpha/2.
                \end{cases}
\end{equation*}

Observe that 
\begin{equation}\label{excahnge.for.mp}
    \lim_{n\to\infty} \frac{1}{|\mathcal{W}_n|}\ln\left[ \frac{\mathcal{Z}^{-}_{n; \lambda,\bm{h}}}{\mathcal{Z}^{\mp}_{n; \lambda,\bm{h}}} \right] = 0
\end{equation}
since (omitting $i\sim j$ in the sums)
\begin{align*}
    \mathcal{Z}^{-}_{n; \lambda,\bm{h}} &= \sum_{\sigma\in\Omega^-_n} \exp{ \{ \sum_{\substack{i,j\in \Lambda_n }} J\sigma_i\sigma_j +  \sum_{i \in \Lambda_n} h_i\sigma_i  + \sum_{i\in \mathcal{W}_n } \lambda \sigma_i - \sum_{\substack{i\in \Lambda_n \\ j\in \mathbb{H}^d_+ \cap \Lambda_n^c}} J\sigma_i  \}} \\
        &= \sum_{\sigma\in\Omega^\mp_n} \exp{ \{ - \mathcal{H}_{n; \lambda, \bm{h}}^{J}(\sigma) - 2\sum_{\substack{i\in \Lambda_n \\ j\in \mathbb{H}^d_+ \cap \Lambda_n^c \\ i_d\leq n^\alpha}} J\sigma_i  \}}\\
        &\leq  \mathcal{Z}^{\mp}_{n; \lambda,\bm{h}}\exp{ \{ 2\sum_{\substack{i\in \Lambda_n j\in \mathbb{H}^d_+ \cap \Lambda_n \\ i_d\leq n^\alpha}} J  \} } = \mathcal{Z}^{\mp}_{n; \lambda,\bm{h}} \exp{ \{ 2Jdn^\alpha (2n+1)^{d-2} \}}.
\end{align*}
If in the last inequality we take $\sigma_i = +1$ instead, we get an similar lower bound, thus
\begin{equation*}
    \mid \ln\left[ \frac{1}{|\mathcal{W}_n|} \frac{\mathcal{Z}^{-}_{n; \lambda,\bm{h}}}{\mathcal{Z}^{\mp}_{n; \lambda,\bm{h}}} \right] \mid \leq  2Jdn^\alpha \frac{(2n+1)^{d-2}}{|\mathcal{W}_n|},
\end{equation*}
 from which follows (\ref{excahnge.for.mp}), since $|\mathcal{W}_n| = \mathcal{O}((2n)^{d-1})$ and we choose $0<\alpha< 1$.\\
 
 This gives us
 \begin{equation}\label{taw_w.with.mp-b.c.}
     \tau_w(J,\lambda,\h)=  \lim_{n \to \infty} -\frac{1}{|\mathcal{W}_n|}\ln\left[ \frac{\mathcal{Z}^{\mp}_{n; \lambda,\bm{h}}}{\mathcal{Z}^{+}_{n; \lambda,\bm{h}}} \right] - \frac{1}{2|\mathcal{W}_n|}\ln\left[ \frac{Q^{-, J}_{\Delta_n; \bm{h}}}{Q^{+, J}_{\Delta_n; \bm{h}}} \right].
 \end{equation}

For a fixed $n<\infty$, if we take the derivative we get
\begin{equation}
   \partial_\lambda(\tau_w(J,\lambda,\h)) = -\partial_\lambda\left(\frac{1}{|\mathcal{W}_n|}\ln\left[ \frac{\mathcal{Z}^{\mp}_{n; \lambda,\bm{h}}}{\mathcal{Z}^{+}_{n; \lambda,\bm{h}}} \right] \right) = |\mathcal{W}_n|^{-1} \sum_{i\in \mathcal{W}_n}  \langle \sigma_i\rangle^{+}_{n; \lambda, \bm{h}} - \langle \sigma_i\rangle^{\mp}_{n; \lambda, \bm{h}}
\end{equation}
which is positive by the second part of Lemma \ref{extremality.of.+.and.-.bc}. Therefore, we bound each term of the sequence by its limit when $\lambda$ goes to infinity, getting
\begin{align*}
    -\frac{1}{|\mathcal{W}_n|}\ln\left[ \frac{\mathcal{Z}^{\mp}_{n; \lambda,\bm{h}}}{\mathcal{Z}^{+}_{n; \lambda,\bm{h}}} \right] 
        &\leq \lim_{\lambda\to\infty} -\frac{1}{|\mathcal{W}_n|}\ln{\left[ \dfrac{\sum_{\sigma\in\Omega^\mp_n} \exp{ \{ \sum_{\substack{i,j\in \Lambda_n}} J\sigma_i\sigma_j + \lambda (\sum_{i\in \mathcal{W}_n } \sigma_i - |\mathcal{W}_n|)\} } }{\sum_{\sigma\in\Omega^+_n} \exp{ \{ \sum_{\substack{i,j\in \Lambda_n}} J\sigma_i\sigma_j + \lambda (\sum_{i\in \mathcal{W}_n } \sigma_i - |\mathcal{W}_n|)\} }} \right]}\\
        &=-\frac{1}{|\mathcal{W}_n|}\ln{\left[ \dfrac{\sum_{\sigma\in\Omega^\mp_n} \exp{ \{ \sum_{\substack{i,j\in \Lambda_n}} J\sigma_i\sigma_j \}\mathbbm{1}_{\{\sigma_i= +1,\text{ } \forall i\in \mathcal{W}_n \}}} }{\sum_{\sigma\in\Omega^+_n} \exp{ \{ \sum_{\substack{i,j\in \Lambda_n}} J\sigma_i\sigma_j \}\mathbbm{1}_{\{\sigma_i= +1,\text{ } \forall i\in \mathcal{W}_n \}}}} \right]}\\
        &=-\frac{1}{|\mathcal{W}_n|}\ln{\left[ \dfrac{\sum_{\sigma\in\Omega^\mp_{\Lambda_n\setminus \mathcal{W}_n}} \exp{ \{ \sum_{\substack{i,j\in \Lambda_n\setminus \mathcal{W}_n}} J\sigma_i\sigma_j \}} }{\sum_{\sigma\in\Omega^+_{\Lambda_n\setminus \mathcal{W}_n}} \exp{ \{ \sum_{\substack{i,j\in \Lambda_n\setminus \mathcal{W}_n}} J\sigma_i\sigma_j \}}} \right]}.
\end{align*}
Notice that, when $\bm{h}\equiv 0$, this last term is equal to 
\begin{equation*}
    -\frac{1}{|\mathcal{W}_n|}\ln{\left[\frac{Q^{\mp, J}_{[-n,n]^{d-1}\times[-n^\alpha/2,n^\alpha/2] , 0}}{Q^{+, J}_{[-n,n]^{d-1}\times[-n^\alpha/2,n^\alpha/2]  , 0}} \right]}
\end{equation*}
that coincides with the definition (\ref{tau(J)}).\\

\textit{Proof of (b): } Having in mind the simplification (\ref{tau_w.with.no.field}), the non-decreasing property of $\tau_w$ w.r.t. $\lambda$ comes directly from the positivity of (\ref{d.lambda.of.ln[Z+/Z-]}), that is a consequence of Lemma \ref{extremality.of.+.and.-.bc}. Analogously, if we differentiate the RHS of \eqref{tau_w.with.no.field} w.r.t. $J$ we get
\begin{equation}
    -\partial_J\left(\frac{1}{|\mathcal{W}_n|}\ln\left[ \frac{\mathcal{Z}^{-}_{n; \lambda,\bm{h}}}{\mathcal{Z}^{+}_{n; \lambda,\bm{h}}} \right] \right) = |\mathcal{W}_n|^{-1} \sum_{\substack{i\sim J \\ \{ i,j \} \cap \Lambda_n \neq \emptyset}}   \langle \sigma_i\sigma_j \rangle_{n; \lambda}^+ -  \langle \sigma_i\sigma_j \rangle_{n; \lambda}^-,
\end{equation}
that is positive by Proposition \ref{Consequence.of.DVI}, a consequence of the duplicate variables inequalities. \\

\textit{Proof of (c): }To see that $\tau_w$ is a concave function of $\lambda$, we use (\ref{d.lambda.of.ln[Z+/Z-]}) to get
\begin{multline}\label{d2.lambda.of.ln[Z+/Z-]}
     -\partial^2_\lambda \left( \ln\left[ \frac{Z_{n;\lambda}^{-, J}}{Z_{n;\lambda}^{+, J}} \right] \right) =  \sum_{i,j\in \mathcal{W}_n} \langle \sigma_i\sigma_j \rangle^{+,J}_{n;\lambda} - \langle \sigma_i \rangle^{+,J}_{n;\lambda}\langle \sigma_j \rangle^{+,J}_{n;\lambda}
            - \langle \sigma_i\sigma_j \rangle^{-,J}_{n;\lambda} + \langle \sigma_i \rangle^{-,J}_{n;\lambda}\langle \sigma_j \rangle^{-,J}_{n;\lambda},
\end{multline}
which is negative by Proposition \ref{Consequence.of.DVI}. So $\tau_w$ is the limit of concave functions, therefore it is concave.\\

\textit{Proof of (d): } For $\lambda\geq J$, since $\tau_w(J, \lambda)$ is non-decreasing in $\lambda$, we have that 
\begin{equation*}
    \tau_w(J, \lambda) \geq \tau_w(J, J).
\end{equation*}
But, by definition, 
\begin{equation*}
    \tau_w(J, J) = \lim_{n \to \infty} -\frac{1}{|\mathcal{W}_n|}\ln\left[ \frac{\mathcal{Z}^{\mp}_{n; J}}{\mathcal{Z}^{+}_{n; J,}} \right] = \lim_{n \to \infty} -\frac{1}{|\mathcal{W}_n|}\ln\left[ \frac{\mathcal{Q}^{\mp}_{[-n,n]^{d-1}\times[-n^\alpha/2, n^\alpha/2]; 0}}{\mathcal{Q}^{+}_{[-n,n]^{d-1}\times[-n^\alpha/2, n^\alpha/2];0}} \right],
\end{equation*}
where in the last equality we just used the representation of the semi-infinite model as the usual one and the translation invariance of the latter. Going back to the definition of $\tau(J)$, the last term in the sequence above is just the sub-sequence $m=n/2$, which concludes the proof. 
\end{proof}

We proceed to prove that such critical value  coincides with the critical value for non-uniqueness of the states, that is, $\lambda_c$ defined as in (\ref{lambda_c.by.free.enegies}) satisfies
\begin{equation}\label{lambda_c.by.states}
    \lambda_c = \inf \{\lambda\geq 0 : \langle \sigma_0 \rangle^{+}_{\lambda} = \langle \sigma_0 \rangle^{-}_{\lambda} \}.
\end{equation}

\begin{proposition}[Uniqueness with $\bm{h}\equiv 0$]\label{Uniqueness.for.h.zero}
For $\lambda>\lambda_c$, we have  ${\langle \sigma_0 \rangle^{+}_{\lambda} = \langle \sigma_0 \rangle^{-}_{\lambda}}$.
\end{proposition}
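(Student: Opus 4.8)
The plan is to read off the result from the integral representation of Proposition~\ref{PF.1}, the qualitative properties of $\tau_w$ in Theorem~\ref{PF.2}, and one monotonicity fact about the integrand that comes from the duplicate variable inequalities. Write
$g(s) \coloneqq \langle \sigma_0 \rangle^{+}_{J,s} - \langle \sigma_0 \rangle^{-}_{J,s}$,
so that $\tau_w(J,\lambda) = \int_0^\lambda g(s)\, ds$ by Proposition~\ref{PF.1}. Two features of $g$ will be needed. First, $g \geq 0$: this is immediate from Lemma~\ref{extremality.of.+.and.-.bc} (the $+$ state dominates the $-$ state). Second, $g$ is non-increasing on $[0,\infty)$. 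I would get this by descending to finite volume: set $g_n(s) \coloneqq |\mathcal{W}_n|^{-1}\sum_{i\in\mathcal{W}_n}\big(\langle \sigma_i \rangle^{+,J}_{n;s} - \langle \sigma_i \rangle^{-,J}_{n;s}\big)$, which by \eqref{d.lambda.of.ln[Z+/Z-]} equals $-|\mathcal{W}_n|^{-1}\partial_s \ln[Z^{-}_{n;s}/Z^{+}_{n;s}]$; differentiating once more and invoking \eqref{d2.lambda.of.ln[Z+/Z-]} together with Proposition~\ref{Consequence.of.DVI} gives $\partial_s g_n(s) = |\mathcal{W}_n|^{-1}\big(-\partial_s^2 \ln[Z^{-}_{n;s}/Z^{+}_{n;s}]\big) \leq 0$, so each $g_n$ is non-increasing in $s$. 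Since the proof of Proposition~\ref{PF.1} already establishes $g_n(s) \to g(s)$ pointwise (via \eqref{lim.to.magnetization.plus}--\eqref{lim.to.magnatization.minus}), the function $g$, being a pointwise limit of non-increasing functions, is itself non-increasing.

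Next I would show that $\tau_w(J,\lambda) = \tau(J)$ for every $\lambda > \lambda_c$. The set $S \coloneqq \{\lambda \geq 0 : \tau_w(J,\lambda) = \tau(J)\}$ is nonempty because it contains $[J,\infty)$ by Theorem~\ref{PF.2}(d), so $\lambda_c = \inf S < \infty$. Given $\lambda > \lambda_c$, by definition of the infimum there is some $\tilde\lambda \in S$ with $\lambda_c \leq \tilde\lambda < \lambda$; then, using monotonicity in $\lambda$ (Theorem~\ref{PF.2}(b)) and the upper bound (Theorem~\ref{PF.2}(a)), $\tau(J) = \tau_w(J,\tilde\lambda) \leq \tau_w(J,\lambda) \leq \tau(J)$. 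Hence $\tau_w(J,\cdot)$ is constant, equal to $\tau(J)$, on the whole interval $(\lambda_c,\infty)$.

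To finish: for any $\lambda_c < s_1 < s_2$ the integral representation gives $0 = \tau_w(J,s_2) - \tau_w(J,s_1) = \int_{s_1}^{s_2} g(s)\, ds$, and since $g \geq 0$ this forces $g = 0$ Lebesgue-almost everywhere on $(s_1,s_2)$; taking a union over a sequence $s_1 \downarrow \lambda_c$, $s_2 \uparrow \infty$ shows $g = 0$ a.e.\ on $(\lambda_c,\infty)$. Now fix any $\lambda > \lambda_c$ and pick $s \in (\lambda_c,\lambda)$ with $g(s) = 0$, which is possible since the exceptional set in $(\lambda_c,\lambda)$ is null; monotonicity of $g$ then gives $0 \leq g(\lambda) \leq g(s) = 0$, i.e.\ $\langle \sigma_0 \rangle^{+}_{J,\lambda} = \langle \sigma_0 \rangle^{-}_{J,\lambda}$, which is the claim (and, combined with Proposition~\ref{basta.comparar.magnetizacoes} translated to the semi-infinite setting, yields uniqueness of the Gibbs state for $\lambda > \lambda_c$).

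I expect the only genuinely delicate step to be the passage from ``$g = 0$ almost everywhere'' to ``$g = 0$ everywhere on $(\lambda_c,\infty)$'': the integral representation by itself only pins down $g$ up to a null set, and it is precisely the pointwise monotonicity of the integrand --- which in turn rests on the second-order duplicate variable inequality $\partial_s^2 \ln[Z^{-}_{n;s}/Z^{+}_{n;s}] \geq 0$ surviving the infinite-volume limit --- that upgrades this to an everywhere statement. Everything else is soft bookkeeping with the monotonicity, concavity, and boundedness of $\tau_w$ already recorded in Theorem~\ref{PF.2}.
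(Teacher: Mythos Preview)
Your proof is correct and reaches the same conclusion as the paper, but the passage from ``$\tau_w$ is constant on $(\lambda_c,\infty)$'' to ``$g(\lambda)=0$ for every $\lambda>\lambda_c$'' is handled differently. The paper observes that $\tau_w$ is the pointwise limit of the concave functions $\lambda\mapsto |\mathcal{W}_n|^{-1}\ln[Z^{+}_{n;\lambda}/Z^{-}_{n;\lambda}]$ and is differentiable (indeed constant) on $(\lambda_c,\infty)$, then invokes a Griffiths-type lemma for convex functions (their reference is \cite[Theorem B.12]{FV-Book}) to conclude that the derivatives $g_n$ converge to $\partial_\lambda\tau_w=0$, hence $g(\lambda)=0$. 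You instead pass the monotonicity of $g_n$ to the limit to get $g$ non-increasing, and then upgrade ``$g=0$ a.e.'' (from the vanishing integral) to ``$g=0$ everywhere'' via this monotonicity. Your route is more elementary in that it avoids the convex-analysis black box, at the cost of spelling out the a.e.\ argument; the paper's route is one line once the lemma is in hand. Both ultimately rest on the same second-derivative inequality \eqref{d2.lambda.of.ln[Z+/Z-]} and Proposition~\ref{Consequence.of.DVI}.
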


\begin{proof}
Indeed, since $\tau_w$ is non-decreasing in $\lambda$ and bounded by $\tau(J)$, we get that $\tau_w(J,\lambda)$ is constant equal to $\tau(J)$ for all $\lambda>\lambda_c$. Therefore, for any given $\lambda>\lambda_c$, $\tau_w$ is differentiable in $\lambda$ and $\partial_\lambda \tau_w(J,\lambda) = 0$. \\

So, $\tau_w$ is differentiable in $\lambda$ and is the point-wise limit of the sequence $|\calW_n|^{-1}\partial_\lambda \left( \ln Z_{n;\lambda}^{+, J} - \ln{Z_{n;\lambda}^{-, J}} \right)$, which is concave. We can then use a known theorem for convex functions, see for example \cite[Theorem B.12 ]{FV-Book}, to conclude that
\begin{multline*}
    0 = \partial_\lambda\tau_w(J,\lambda) = \lim_{n\to\infty}\frac{1}{|\calW_n|} \partial_\lambda \left( \ln Z_{n;\lambda}^{+, J} - \ln{Z_{n;\lambda}^{-, J}} \right) = \lim_{n\to\infty} \frac{1}{|\mathcal{W}_n|}\sum_{i\in \mathcal{W}_n}\langle \sigma_i \rangle^{+}_{n;\lambda} - \langle \sigma_i \rangle^{-}_{n;\lambda} = \langle \sigma_0 \rangle^{+}_{\lambda} - \langle \sigma_0 \rangle^{-}_{\lambda},
\end{multline*}
which is what we wanted to prove. The second equality is just (\ref{d.lambda.of.ln[Z+/Z-]}) and the last was proved during the demonstration of Proposition \ref{PF.1}.
\end{proof}

\begin{proposition}[Non-uniqueness with $\bm{h}\equiv 0$] 
For $\lambda<\lambda_c$, $\langle \sigma_0 \rangle^{+}_{\lambda} > \langle \sigma_0 \rangle^{-}_{\lambda}$.
\end{proposition}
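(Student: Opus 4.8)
The plan is to argue by contradiction, combining three facts already proved in this section: the integral representation $\tau_w(J,\lambda)=\int_0^\lambda\langle\sigma_0\rangle^{+}_{J,s}-\langle\sigma_0\rangle^{-}_{J,s}\,ds$ from Proposition~\ref{PF.1}, the concavity of $\tau_w$ in $\lambda$ from Theorem~\ref{PF.2}(c), and the saturation $\tau_w(J,\lambda)=\tau(J)$ for $\lambda\geq J$ from Theorem~\ref{PF.2}(d). Write $g(s)\coloneqq\langle\sigma_0\rangle^{+}_{J,s}-\langle\sigma_0\rangle^{-}_{J,s}$, so that $g\geq 0$ by Lemma~\ref{extremality.of.+.and.-.bc} and $\tau_w(J,\lambda)=\int_0^\lambda g(s)\,ds$. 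The statement to be proved is exactly that $g(\lambda)>0$ for every $\lambda<\lambda_c$, and since $g\geq0$ always, it suffices to rule out $g(\lambda)=0$.

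The structural input I would isolate first is that $g$ is non-increasing on $[0,\infty)$. For each $n$ put $g_n(s)\coloneqq|\mathcal{W}_n|^{-1}\sum_{i\in\mathcal{W}_n}\bigl(\langle\sigma_i\rangle^{+}_{n;s}-\langle\sigma_i\rangle^{-}_{n;s}\bigr)$. By \eqref{d.lambda.of.ln[Z+/Z-]}, $g_n$ is the $\lambda$-derivative of the finite-volume function $\tau_{w,n}(\lambda)\coloneqq-|\mathcal{W}_n|^{-1}\ln[\mathcal{Z}^{-}_{n;\lambda}/\mathcal{Z}^{+}_{n;\lambda}]$, which is smooth and concave because its second derivative is given by \eqref{d2.lambda.of.ln[Z+/Z-]} and is $\leq 0$ by Proposition~\ref{Consequence.of.DVI}. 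Hence each $g_n$ is non-increasing, and since $g_n\to g$ pointwise (this pointwise convergence is precisely what was established inside the proof of Proposition~\ref{PF.1}, via \eqref{lim.to.magnetization.plus} and \eqref{lim.to.magnatization.minus}), the limit $g$ is non-increasing as well.

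Now fix $\lambda<\lambda_c$ and suppose, for contradiction, that $\langle\sigma_0\rangle^{+}_{\lambda}=\langle\sigma_0\rangle^{-}_{\lambda}$, i.e.\ $g(\lambda)=0$. Since $g\geq 0$ is non-increasing, $g\equiv 0$ on $[\lambda,\infty)$, hence $\tau_w(J,\mu)=\int_0^{\mu}g=\int_0^{\lambda}g=\tau_w(J,\lambda)$ for every $\mu\geq\lambda$. Choosing $\mu=\max\{\lambda,J\}\geq J$ and invoking Theorem~\ref{PF.2}(d) gives $\tau_w(J,\lambda)=\tau_w(J,\mu)=\tau(J)$, so $\lambda$ belongs to the set $\{\lambda'\geq 0:\tau_w(J,\lambda',0)=\tau(J)\}$ defining $\lambda_c$, whence $\lambda_c\leq\lambda$ — contradicting $\lambda<\lambda_c$. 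Therefore $g(\lambda)>0$, that is, $\langle\sigma_0\rangle^{+}_{\lambda}>\langle\sigma_0\rangle^{-}_{\lambda}$.

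The only genuinely delicate point is the monotonicity of the limiting integrand $g$: I expect that step to be the main obstacle, but it is disposed of cleanly by transferring monotonicity from the smooth concave approximants $\tau_{w,n}$ (whose derivatives $g_n$ are therefore non-increasing) to the limit, using that non-increasingness is preserved under pointwise limits. Everything else is bookkeeping with properties of $\tau_w$ proved earlier in this section. (One could also circumvent passing monotonicity to the limit by observing that $\tau_w$ concave with $\tau_w=\int_0^{\cdot}g$ forces $g$ to coincide a.e.\ with a non-increasing function, which is enough for the contradiction argument, but the route above is more transparent.)
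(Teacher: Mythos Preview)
Your proof is correct and follows essentially the same approach as the paper: both arguments establish that $g(s)=\langle\sigma_0\rangle^{+}_{s}-\langle\sigma_0\rangle^{-}_{s}$ is non-increasing via \eqref{d2.lambda.of.ln[Z+/Z-]} and Proposition~\ref{Consequence.of.DVI}, then argue by contradiction that $g(\lambda)=0$ would force $\tau_w(J,\lambda)=\tau(J)$ and hence $\lambda\geq\lambda_c$. The only cosmetic difference is that the paper integrates up to $\lambda_c$ and uses $\tau_w(J,\lambda_c)=\tau(J)$ (by continuity of the concave $\tau_w$), whereas you integrate up to some $\mu\geq J$ and invoke Theorem~\ref{PF.2}(d) directly; your route is arguably cleaner since it avoids appealing to continuity at the infimum.
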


\begin{proof}
By equation (\ref{d2.lambda.of.ln[Z+/Z-]}), we see that $\langle \sigma_0 \rangle^{+}_{\lambda} - \langle \sigma_0 \rangle^{-}_{\lambda}$ is a decreasing function of $\lambda$, since the finite states also are. Then 
\begin{equation}
    \langle \sigma_0 \rangle^{+}_{\lambda^\prime} - \langle \sigma_0 \rangle^{-}_{\lambda^\prime} \leq \langle \sigma_0 \rangle^{+}_{\lambda} - \langle \sigma_0 \rangle^{-}_{\lambda} \hspace{1cm} \forall\lambda^\prime\geq \lambda.
\end{equation}
Suppose that $\langle \sigma_0\rangle^{+}_{\lambda} = \langle \sigma_0 \rangle^{-}_{\lambda}$. Then, the RHS of the equation above is zero and so is the LHS. We then have 
\begin{equation*}
    \tau_w(J,\lambda) = \int_0^\lambda \langle \sigma_0 \rangle^{+}_{s} - \langle \sigma_0 \rangle^{-}_{s}ds = \int_0^{\lambda_c} \langle \sigma_0 \rangle^{+}_{s} - \langle \sigma_0 \rangle^{-}_{s}ds = \tau(J),
\end{equation*}
so $\lambda\geq\lambda_c$ which is a contradiction.  
\end{proof} Lastly, we have some bounds for $\lambda_c$. A lower bound comes easily from (\ref{tau_w.as.integral}), just by bounding $\langle \sigma_0 \rangle^+_{J,s}$ and  $\langle \sigma_0 \rangle^-_{J,s}$ we get

\begin{equation}\label{Eq: Lower_bound_lambda_c}
    \lambda_c\geq \frac{\tau(J)}{2}.
\end{equation}
In particular, $\lambda_c(J)>0$ for all $J>J_c$, since  $\tau(J)>0$ for $J>J_c$, see \cite{Lebowitz_Pfister_81}. An upper bound is a direct consequence of Theorem \ref{PF.2}(d):
\begin{equation*}
    \lambda_c \leq J.
\end{equation*}

\section{The macroscopic phenomenon of phase transition}    To define precisely what is the layer described in the first subsection, we need to use contours. We start this section by defining the \textit{low-temperature representation} and the \textit{Peierls contours}, and then we show how the existence or absence of multiple states determines the wetting transition.

\subsection{Low-temperature representation and Peierls contours}

Looking back at the definition of the Ising model, we see that a low temperature ($\beta>>0$) favors the configurations with spins aligned, so we rewrite the Hamiltonian trying to emphasize the non-aligned spins. Remember that we are considering a uniform interaction $\bm{J}\equiv J$. Again, we can see the semi-infinite model as the Ising model with interaction \emph{\textbf{\~{J}}} as in (\ref{J.for.the.semi.infinite}). 

Using the graph structure of $\mathbb{Z}^d$, we define $\mathcal{E}_\Lambda= {\{ \{x,y\}\in\mathbb{H}^+_d : x\sim y, x\in\Lambda \}}$, the set of edges with at least one vertex in $\Lambda\subset \mathbb{H}^+_d$ and no vertices in the wall, and $\mathcal{E}^{\mathcal{W}}_\Lambda = {\{ \{x,y\}\in\mathbb{Z}^d : x\sim y, x\in\Lambda\cap \mathcal{W}\}}$ so we have that

\begin{align*}
    -\sum_{\substack{i\sim j \\ \{i,j\}\cap \Lambda \neq \emptyset}} \tilde{J}_{i,j} \sigma_i\sigma_j &= -\sum_{\{i,j\}\in \mathcal{E}_\Lambda} J\sigma_i\sigma_j - \sum_{\{i,j\}\in \mathcal{E}^\mathcal{W}_\Lambda} \lambda\sigma_i\sigma_j \\
    &=-J|\mathcal{E}_\Lambda| - \lambda|\mathcal{E}^{\mathcal{W}}_\Lambda| + \sum_{\{i,j\}\in \mathcal{E}_\Lambda} J(1-\sigma_i\sigma_j) + \sum_{\{i,j\}\in \mathcal{E}^\mathcal{W}_\Lambda} \lambda(1-\sigma_i\sigma_j).\\
\end{align*}
and the \textit{low temperature representation} of the Hamiltonian is 
\begin{align}\label{loe.temp.representation}
     \mathcal{H}_{\Lambda; \lambda, \bm{h}}^{\bm{J}}(\sigma) = -J|\mathcal{E}_\Lambda| - \lambda&|\mathcal{E}^{\mathcal{W}}_\Lambda| + 2J |\{\{ i,j\} \in\mathcal{E}_\Lambda : \sigma_i\neq \sigma_j\}|\\ &+ 2\lambda |\{\{ i,j\} \in\mathcal{E}^\mathcal{W}_\Lambda : \sigma_i\neq \sigma_j\}| - \sum_{i\in\Lambda} h_i\sigma_i. \notag
\end{align}

The Peierls contours are defined in $\mathbb{Z}^d_*$, the dual graph of $\mathbb{Z}^d$. Such graph is constructed in the following way: for each $x\in\mathbb{Z}^d$, the closed unit cube with the center in $x$ is $C_x\subset\mathbb{R}^d$, and $\mathbb{Z}^d_*$ is the union of all faces $C_x\cap C_y$, for $x$ and $y$ nearest neighbors in $\mathbb{Z}^d$. With this we define the \textit{interface} of a configuration $\omega \in\Omega$ as
\begin{equation}
\Gamma(\omega)= \bigcup_{\substack{x\sim y\\ \omega_x\neq \omega_y}}C_x\cap C_y.    
\end{equation}
Each maximal connected component of $\Gamma(\omega)$ is called a \textit{contour}, which are usually denoted by $\gamma$. Each one of the contours $\gamma\in\Gamma(\omega)$ separates the vertices of $\mathbb{Z}^d$ into two subsets, the interior and the exterior of $\gamma$. The interior of gamma, denoted $Int(\gamma)$, are the vertices that are connected to infinity only by paths that cross $\gamma$, and the exterior is just $Ext(\gamma)\coloneqq\mathbb{Z}^d\setminus Int(\gamma)$. With these definitions, we see that given a configuration $\omega\in\Omega$ there is a one-to-one correspondence between non-aligned spins and the faces of $\Gamma(\omega)$.\\

\begin{figure}[ht]
    \centering
    \includegraphics[scale=0.4]{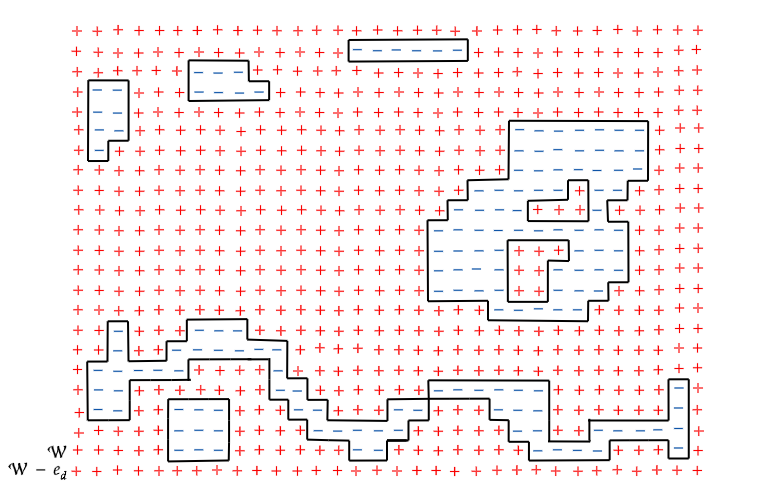}
    \caption{An example of a configuration with plus boundary condition and its contours.}
    \label{Conf+d2}
\end{figure}
The last observation is that, as the interaction between vertices in the bulk and at the wall differs, it is natural to differentiate faces separating the wall and the layer below it, so we  consider  ${\mathcal{W}^* \coloneqq \{ C_x \cap C_y: x\in \mathcal{W}, y\in \mathbb{H}_d^-\}}$ the wall on the dual lattice and we rewrite the measure $\mu^{\eta}_{\Lambda; \lambda, \bm{h}}$ as
\begin{equation}
    \mu^{\eta}_{\Lambda; \lambda, \bm{h}}(\sigma)=\frac{\mathbbm{1}_{\{\sigma\in\Omega_{\Lambda}^\eta\}} \exp{\left\{ -2\beta \sum_{\gamma\in\Gamma(\sigma)}\left( J|\gamma\setminus\mathcal{W}^*|  + \lambda|\gamma \cap \mathcal{W}^*|\right)  + \sum_{i\in\Lambda} h_i\sigma_i\right\}}}{\sum\limits_{\omega\in\Omega_\Lambda^\eta}\exp{ \left\{ -2\beta \sum_{\gamma\in\Gamma(\omega)}\left( J|\gamma\setminus\mathcal{W}^*|  + \lambda|\gamma \cap \mathcal{W}^*|\right) + \sum_{i\in\Lambda} h_i\omega_i \right\}}}.
\end{equation}
At least, if we fix a contour $\gamma^*$, the event that this contour occurs for some configuration has probability 
\begin{equation}\label{eq:low.temp.rep.}
    \mu^{\eta}_{\Lambda; \lambda, \bm{h}}(\gamma^*) = \frac{ \sum_{\substack{\omega\in\Omega_\Lambda^\eta\\ \gamma^*\in\Gamma(\omega)}} \exp{\left\{ -2\beta \sum_{\gamma\in\Gamma(\omega)}\left( J|\gamma\setminus\mathcal{W}^*|  + \lambda|\gamma \cap \mathcal{W}^*|\right)  + \sum_{i\in\Lambda}h_i\omega_i\right\}}}{\sum\limits_{\omega\in\Omega_\Lambda^\eta}\exp{ \left\{ -2\beta \sum_{\gamma\in\Gamma(\omega)}\left( J|\gamma\setminus\mathcal{W}^*|  + \lambda|\gamma \cap \mathcal{W}^*|\right) + \sum_{i\in\Lambda} h_i\omega_i \right\}}}.
\end{equation}

This is the basic setup of the famous Peierls' argument, one of the most important tools in the study of phase transition in lower temperatures. One application will be seen in the next subsection.

\subsection{The wetting transition in terms of contours}
    To define precisely what it means to appear a thick layer of pluses in the wall we define the $\omega^L$ boundary condition, where $L\in\mathbb{N}$ and, for $i\in\mathbb{Z}^d$, 
    \begin{equation*}
        (\omega^L)_i = \begin{cases}
                            +1, \text{ if }i\in \mathcal{W}_L - e_d,\\
                            -1, \text{ otherwise,}
                        \end{cases}
    \end{equation*}
where $e_d = (0,\dots,0,1)$. For a fixed $L\in\mathbb{N}$ and $\mathcal{W}_L\subset \Lambda \Subset\mathbb{Z}^d$, if we pick any configuration $\omega\in\Omega_\Lambda^{\omega_L}$, there will be one open contour $\gamma_L\in\Gamma(\omega)$ that is induced by the defect in the layer right below the wall, that is, $\gamma_L$ is the contour that separates $\mathcal{W}_L - e_d$ from $\mathcal{W}_{L+1}\setminus \mathcal{W}_L - e_d$, as in Figure \ref{fig:gamma_L}.\\

\begin{figure}[htbp]
    \centering
    \includegraphics[scale=0.2]{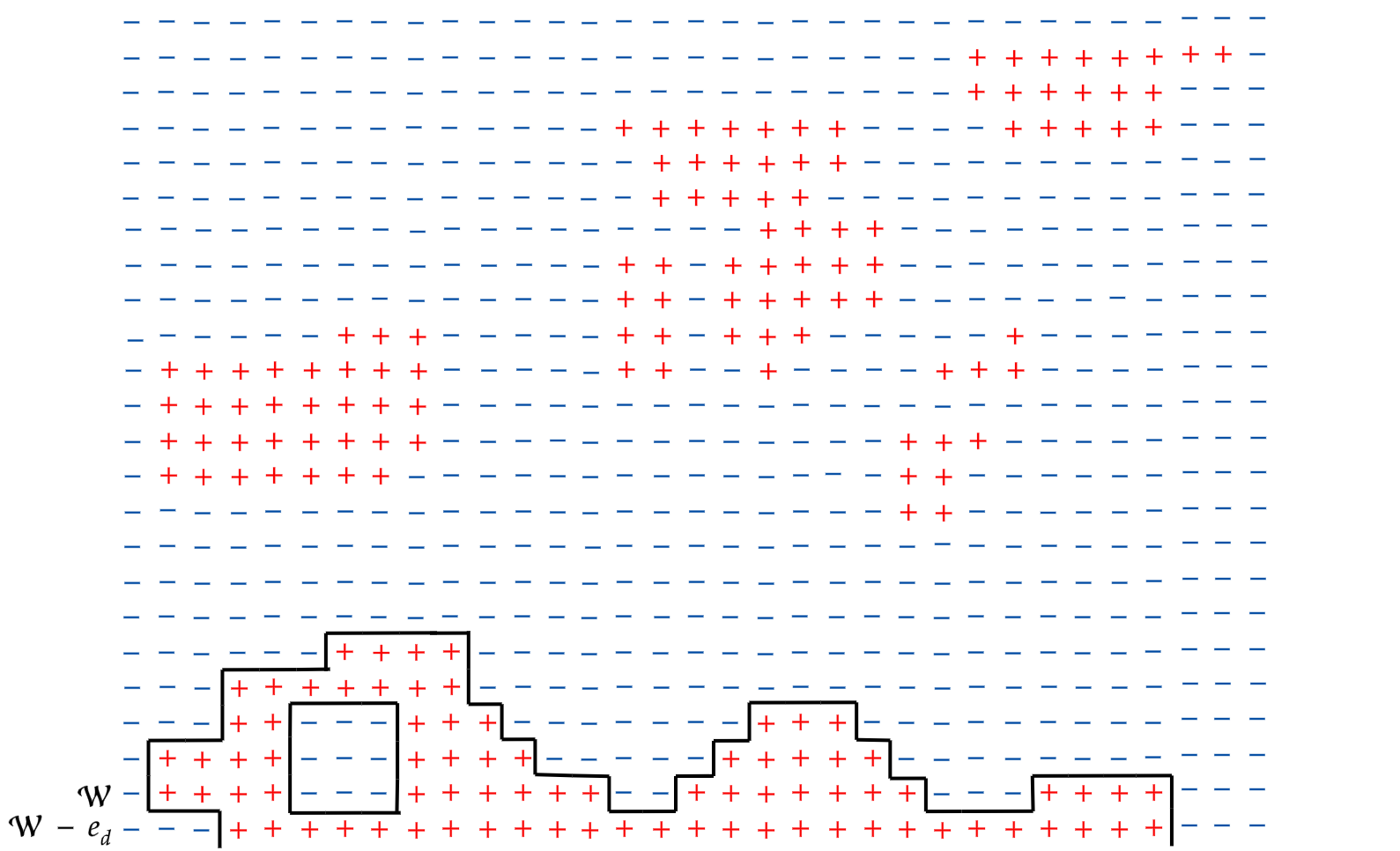}
    \caption{The $\gamma_L$ contour of a configuration in $\Lambda_{L+3}$.}
    \label{fig:gamma_L}
\end{figure}
This boundary condition induces the limiting state 
\begin{equation}\label{eq:infinite.L.state}
    \langle \cdot \rangle^L_{\lambda,\bm{h}} \coloneqq \lim_{\Lambda\nearrow\H+} \langle \cdot \rangle^{\omega^L}_{\Lambda;\lambda,\bm{h}},
\end{equation}
which exists since, for any $\mathcal{W}_L\subset \Lambda_1\subset\Lambda_2\Subset\H+$ and any local non-decreasing function $f$, $\langle f \rangle^{\omega^L}_{\Lambda_1;\lambda,\bm{h}} \leq \langle f \rangle^{\omega^L}_{\Lambda_2;\lambda,\bm{h}}$. The proof of this is identical to the proof of Lemma \ref{decreasing.states}. As expected, the state \eqref{eq:infinite.L.state} converges, as L diverges, to the minus state. Indeed, by Lemma \ref{extremality.of.+.and.-.bc}, if $L_1\leq L_2$ and $\mathcal{W}_{L_2}\subset \Lambda$,
\begin{equation*}
    \langle f \rangle^{\omega^{L_1}}_{\Lambda;\lambda,\bm{h}} \leq \langle f \rangle^{\omega^{L_2}}_{\Lambda;\lambda,\bm{h}}
\end{equation*}
for any local non-decreasing function $f$. Therefore, $ \langle \cdot \rangle^{L_1}_{\lambda,\bm{h}}\leq  \langle \cdot \rangle^{L_2}_{\lambda,\bm{h}}$ and we can then take the limit as $L\to\infty$, which converges to the minus state since $ \langle \cdot \rangle^-_{\lambda,\bm{h}}=\lim_{L\to\infty} \langle \cdot \rangle^{\omega_L}_{\Lambda_L;\lambda,\bm{h}}$. Hence
\begin{equation*}
    \lim_{L\to\infty}  \langle \cdot \rangle^L_{\lambda,\bm{h}} =  \langle \cdot \rangle^-_{\lambda,\bm{h}}.
\end{equation*}

From now on, we go back to the notation of Gibbs measures \eqref{eq:def.local.gibbs.measure} since it is more intuitive to use probabilities to deal with contours. Two key events are the configurations for which $\gamma_L$ separates $0$ and $-e_d$, denoted by $\{\gamma_L\in\underbar{0}\}$ and its complementary, denoted $\{\gamma_L\in\overline{0}\}$. For example, the configuration in Figure \ref{fig:gamma_L} belongs to $\{\gamma_L\in\overline{0}\}$. 

\begin{proposition} Consider the nearest neighbour semi-infinite Ising model with interaction $J>0$, wall influence $\lambda>0$ and external field $\bm{h}=(h_i)_{i\in\H+}$ induced by a non-negative summable sequence $\mathrm{h}=(\mathrm{h}_k)_{k=0}^\infty$, that is, $h_i=\mathrm{h}_{i_d}$ for all $i\in\H+$. Then, if there is phase transition, $\mu_{\lambda,\bm{h}}^{\omega_L}({\gamma_L\in\underbar{0}})>0$ for every $L$ large enough. Conversely, when $\mu_{\lambda,\bm{h}}^{+} = \mu_{\lambda,\bm{h}}^{-}$,  $||\mathrm{h}||_1<2\min\{J,\lambda\}$ and $\beta$ is large enough, we have that $\lim_{L\to\infty} \mu_{\lambda,\bm{h}}^{L}({\gamma_L\in\underbar{0}})=0$.
\end{proposition}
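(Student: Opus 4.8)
I would prove the two implications separately, the forward one first since it is the softer. Assume there is phase transition; since the semi-infinite model is, through \eqref{J.for.the.semi.infinite}, a nearest-neighbour Ising model on $\mathbb{Z}^d$ with a non-negative external field, Proposition~\ref{basta.comparar.magnetizacoes} gives $\langle\sigma_0\rangle^{+}_{\lambda,\bm h}>\langle\sigma_0\rangle^{-}_{\lambda,\bm h}$ for the wall spin at the origin. Fix $\calW_L\subset\Lambda\Subset\mathbb{Z}^d$ and condition $\mu^{\omega^L}_{\Lambda;\lambda,\bm h}$ on the open contour $\gamma_L$. On the event $\{\gamma_L\in\overline{0}\}$ the origin lies in $\Int(\gamma_L)$ with a layer of $+$ spins along the inner side of $\gamma_L$, so, since a contour decouples the configuration on its two sides, the conditional law of $\sigma_0$ given $\gamma_L$ is the finite-volume semi-infinite $+$ state in $\Int(\gamma_L)$, and Lemma~\ref{decreasing.states} yields $\langle\sigma_0\mid\gamma_L\rangle\ge\langle\sigma_0\rangle^{+}_{\lambda,\bm h}$; on $\{\gamma_L\in\underbar{0}\}$ I only use $\langle\sigma_0\mid\gamma_L\rangle\ge-1$. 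With $p_{\Lambda,L}\coloneqq\mu^{\omega^L}_{\Lambda;\lambda,\bm h}(\gamma_L\in\underbar{0})$, averaging gives
\begin{equation*}
\langle\sigma_0\rangle^{\omega^L}_{\Lambda;\lambda,\bm h}\ge(1-p_{\Lambda,L})\langle\sigma_0\rangle^{+}_{\lambda,\bm h}-p_{\Lambda,L},\qquad\text{so}\qquad p_{\Lambda,L}\ge\frac{\langle\sigma_0\rangle^{+}_{\lambda,\bm h}-\langle\sigma_0\rangle^{\omega^L}_{\Lambda;\lambda,\bm h}}{1+\langle\sigma_0\rangle^{+}_{\lambda,\bm h}}.
\end{equation*}
By the monotonicities recalled just above the statement, $\langle\sigma_0\rangle^{\omega^L}_{\Lambda;\lambda,\bm h}\le\langle\sigma_0\rangle^{L}_{\lambda,\bm h}\le\langle\sigma_0\rangle^{-}_{\lambda,\bm h}$, so the right-hand side is at least the positive constant $(\langle\sigma_0\rangle^{+}_{\lambda,\bm h}-\langle\sigma_0\rangle^{-}_{\lambda,\bm h})/(1+\langle\sigma_0\rangle^{+}_{\lambda,\bm h})$ uniformly in $\Lambda$ and $L$; passing to the limit $\Lambda\nearrow\mathbb{H}^d_+$ gives the first claim, in fact for every $L$.

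For the converse, assume $\mu^{+}_{\lambda,\bm h}=\mu^{-}_{\lambda,\bm h}$, $\|\mathrm{h}\|_1<2\min\{J,\lambda\}$ and $\beta$ large. Up to null sets $\{\gamma_L\in\underbar{0}\}$ is the event that the dual face between $0$ and $-e_d$ lies in $\gamma_L$, i.e.\ that $\sigma_0=-1$ and the $-$ cluster of the origin reaches, inside $\mathbb{H}^d_+$, the contour anchored at $\partial(\calW_L-e_d)$. The plan is a Peierls-type estimate in the low-temperature representation \eqref{eq:low.temp.rep.}: write $\mu^{\omega^L}_{\Lambda;\lambda,\bm h}(\gamma_L\in\underbar{0})=Z^{\underbar{0}}/(Z^{\underbar{0}}+Z^{\overline{0}})$ and, to each configuration counted in $Z^{\underbar{0}}$, associate one counted in $Z^{\overline{0}}$ by flipping to $+1$ the region trapped between $\gamma_L$ and the wall over the patch. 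The weight ratio produced is governed by the area of the roof of $\gamma_L$, which passes from $\calW^{*}$-faces of weight $\lambda$ to bulk faces of weight $J$ while the $\lambda$-bonds below it become satisfied, balanced against the field collected by the filled region; since $\mathrm{h}$ is summable in the direction normal to the wall, this field is at most $\|\mathrm{h}\|_1$ per unit horizontal area, and $\|\mathrm{h}\|_1<2\min\{J,\lambda\}$ is precisely the smallness condition (Bissacot--Cioletti type, in the wall geometry) under which the field does not upset the contour bounds. With $\beta$ large this yields geometric decay of $Z^{\underbar{0}}/Z^{\overline{0}}$ in $|\calW_L|$, hence $\mu^{L}_{\lambda,\bm h}(\gamma_L\in\underbar{0})\to0$ --- provided filling in the droplet is favourable at the level of the free energy, which is precisely the complete-wetting regime; and by the characterization of the wetting transition (Proposition~\ref{PF.2}, Proposition~\ref{Uniqueness.for.h.zero}, extended from $\bm h\equiv0$ to the summable-field setting, where $\|\mathrm{h}\|_1$ small enters once more) this regime coincides with the hypothesis $\mu^{+}_{\lambda,\bm h}=\mu^{-}_{\lambda,\bm h}$, equivalently $\tau_w(J,\lambda,\bm h)=\tau(J)$.

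The forward half is essentially bookkeeping with the conditioning on $\gamma_L$ and the monotonicity lemmas. The hard part will be the converse: one must turn ``$\tau_w=\tau$'' into the quantitative statement that a $+$ patch of side $L$ nucleates a $+$ droplet whose horizontal section fills $\calW_L$, and run the accompanying surface-free-energy / Peierls estimate uniformly in the inhomogeneous field; this is where low temperature and $\|\mathrm{h}\|_1<2\min\{J,\lambda\}$ are genuinely needed, because for $\lambda<J$ the droplet is favoured through entropy rather than energy, so a naive ground-state surgery does not suffice.
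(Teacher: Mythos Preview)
Your forward direction is correct and matches the paper's argument almost exactly: condition on $\gamma_L$, use the domain Markov property on $\{\gamma_L\in\overline{0}\}$ to get $\langle\sigma_0\mid\gamma_L\rangle\ge\langle\sigma_0\rangle^{+}_{\lambda,\bm h}$, bound trivially on $\{\gamma_L\in\underbar{0}\}$, and rearrange. The paper in fact uses the exact value $\sigma_0=-1$ on $\{\gamma_L\in\underbar{0}\}$ (the face between $0$ and $-e_d$ is then a defect face in $\gamma_L$), which gives the constant $\tfrac12$ in place of your $(1+\langle\sigma_0\rangle^{+})^{-1}$, but this is cosmetic.

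Your converse, however, takes the wrong route and you correctly sense the trouble at the end. The paper does \emph{not} attempt a direct surgery on $\gamma_L$ or any quantitative estimate of $Z^{\underbar 0}/Z^{\overline 0}$. Instead it uses the soft identity
\[
\mu^{\omega_L}_{\Lambda_n}(\sigma_0=-1)=\mu^{\omega_L}_{\Lambda_n}(\gamma_L\in\underbar{0})+\sum_{\gamma:\,0\in\Int(\gamma)}\mu^{\omega_L}_{\Lambda_n}(\gamma),
\]
where the sum is over closed contours surrounding the origin. The Peierls bound \eqref{eq:peierls} with the layer estimate $|\mathrm{layer}_{\max}(\gamma)|\le|\gamma|/2$ gives $\mu^{\omega_L}_{\Lambda_n}(\gamma)\le e^{(-2\beta(J\wedge\lambda)+\beta\|\mathrm h\|_1)|\gamma|}$, and this is the \emph{only} place the hypotheses $\|\mathrm h\|_1<2\min\{J,\lambda\}$ and $\beta$ large enter: they make the contour sum summable, so one can pass to $n\to\infty$ and then interchange $\lim_{L\to\infty}$ with $\sum_\gamma$ by dominated convergence. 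Since the analogous identity for the $+$ state has no open-contour term, $\mu^{+}(\sigma_0=-1)=\sum_\gamma\mu^{+}(\gamma)$, and since $\mu^L\to\mu^-=\mu^+$ by the uniqueness hypothesis, subtracting gives $\lim_L\mu^L(\gamma_L\in\underbar{0})=\mu^-(\sigma_0=-1)-\mu^+(\sigma_0=-1)=0$.

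Your proposed map (fill the region between $\gamma_L$ and the wall) would need the filled droplet to be energetically, not just free-energetically, favourable, and as you note this fails for $\lambda<J$. Invoking ``$\tau_w=\tau$'' to rescue it is circular, because that equality is the uniqueness hypothesis rephrased and does not by itself yield a surgery with controlled weight ratio. The paper's point is that no such estimate is needed: uniqueness alone forces the two representations of $\mu(\sigma_0=-1)$ to coincide, and the Peierls input is purely a summability device.
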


\begin{remark}
The condition $||\mathrm{h}||_1<2\min\{J,\lambda\}$ may seem very restrictive at first, but it is not so much. Given $\mathrm{h}=(\mathrm{h}_k)_{k=0}^\infty$, consider the sequence $\mathrm{h}^\prime$ that is equal to $\mathrm{h}$, except at $0$, where $\mathrm{h}^\prime_0=0$. Then, for all $\eta\in\Omega$, $\mu_{\lambda,\bm{h}}^{\eta} = \mu_{\lambda + \mathrm{h}_1,\bm{h}^\prime}^{\eta}$, where $\bm{h}^\prime$ is the field induced by $\mathrm{h}^\prime$. The condition above, hence, becomes $||\mathrm{h}^\prime||_1 < 2\min\{\lambda+\mathrm{h}_1, J\}$. 
\end{remark}

\begin{proof}Notice that
\begin{align}\label{eq:mag.as.contours.prob}
    \mu_{\Lambda_n;\lambda,\bm{h}}^{\omega_L}(\sigma_0) &=  \mu_{\Lambda_n;\lambda,\bm{h}}^{\omega_L}(\sigma_0\mathbbm{1}_{\{\gamma_L\in\underbar{0}\}}) + \mu_{\Lambda_n;\lambda,\bm{h}}^{\omega_L}(\sigma_0\mathbbm{1}_{\{\gamma_L\in\overline{0}\}}) \nonumber \\
    &= \mu_{\Lambda_n;\lambda,\bm{h}}^{\omega_L}({\{\gamma_L\in\underbar{0}\}}) + \sum_{\gamma\in\overline{0}}\mu_{\Lambda_n;\lambda,\bm{h}}^{\omega_L}(\sigma_0\mathbbm{1}_{\{\gamma=\gamma_L\}}) \nonumber \\
    &=\mu_{\Lambda_n;\lambda,\bm{h}}^{\omega_L}({\{\gamma_L\in\underbar{0}\}}) + \sum_{\gamma\in\overline{0}}\mu_{\Lambda_n;\lambda,\bm{h}}^{\omega_L}(\sigma_0|{\gamma=\gamma_L})\mu_{\Lambda_n;\lambda,\bm{h}}^{\omega_L}({\gamma=\gamma_L}).
\end{align}

If $\gamma\in\overline{0}$, it means that the origin is surrounded by a set with plus boundary condition. More precisely, since the model is short range, for $\gamma\in\overline{0}$,
\begin{equation*}
    \mu_{\Lambda_n;\lambda,\bm{h}}^{\omega_L}(\sigma_0|{\gamma=\gamma_L}) = \mu_{Int(\gamma);\lambda,\bm{h}}^{+}(\sigma_0),
\end{equation*}
and by Lemma \ref{decreasing.states}, $\mu_{\Lambda_n;\lambda,\bm{h}}^{\omega_L}(\sigma_0|{\gamma=\gamma_L})\geq \mu_{\lambda,\bm{h}}^{+}(\sigma_0)$. Together with \eqref{eq:mag.as.contours.prob}, this implies that

\begin{equation*}
    \mu_{\Lambda_n;\lambda,\bm{h}}^{\omega_L}({\gamma_L\in\underbar{0}}) \geq \mu_{\lambda,\bm{h}}^{+}(\sigma_0) (1- \mu_{\Lambda_n;\lambda,\bm{h}}^{\omega_L}({\gamma_L\in\underbar{0}})) - \mu_{\Lambda_n;\lambda,\bm{h}}^{\omega_L}(\sigma_0).
\end{equation*}
Finally, taking the limit as $n\to\infty$, 
\begin{equation*}
      \mu_{\lambda,\bm{h}}^{L}({\gamma_L\in\underbar{0}})\geq \frac{1}{2}(\mu_{\lambda,\bm{h}}^{+}(\sigma_0) - \mu_{\lambda,\bm{h}}^{L}(\sigma_0)) = \frac{1}{2}(\mu_{\lambda,\bm{h}}^{+}(\sigma_0) - \mu_{\lambda,\bm{h}}^{-}(\sigma_0)).
\end{equation*}

This shows that, as soon as we have a phase transition, there is a positive probability of not seeing a layer of pluses on the wall. To get a converse, we express the magnetization as $\mu_{\Lambda_n;\lambda,\bm{h}}^{\omega_L}(\sigma_0) = 1 -2\mu_{\Lambda_n;\lambda,\bm{h}}^{\omega_L}(\sigma_0=-1)$. When $\sigma_0=-1$, either $\gamma_L\in\underbar{0}$ or there exists a contour surrounding $0$, hence
\begin{equation}\label{eq:prob.of.minus.one}
    \mu_{\Lambda_n;\lambda,\bm{h}}^{\omega_L}(\sigma_0=-1) =  \mu_{\Lambda_n;\lambda,\bm{h}}^{\omega_L}({\gamma_L\in\underbar{0}}) + \sum_{o\in Int(\gamma)} \mu_{\Lambda_n;\lambda,\bm{h}}^{\omega_L}({\gamma}).
\end{equation}
We proceed to prove that we can take the limit as $n\to\infty$ in the equation above, or equivalently, that $\sum_{0\in Int(\gamma)}\mu_{\lambda,\bm{h}}^{L}({\gamma})<+\infty$. We will do so by using a Peierls-type argument. For $n>L$ and $\gamma^*$ a contour in $\Lambda_n$ with $0\in Int(\gamma)$,  the low temperature representation \eqref{eq:low.temp.rep.} yields
\begin{align}\label{eq:peierls}
     \mu_{\Lambda_n;\lambda,\bm{h}}^{\omega_L}(\gamma^*) &= \frac{1}{\mathcal{Z}^{\omega_L, \bm{J}}_{n; \lambda,\bm{h}}}\sum_{\substack{\omega\in\Omega_{\omega_L} \\ \gamma^*\in\Gamma(\omega)}} \prod_{\gamma\in\Gamma(\omega)}\exp{\{-2K(\gamma) + \sum_{i\in\Lambda_n}\beta h_i\omega_i\}} \nonumber \\
     &\leq \frac{e^{\{-2K(\gamma^*) + 2\sum_{i\in Int(\gamma^*)}h_i\}}}{\mathcal{Z}^{\omega_L, \bm{J}}_{n; \lambda,\bm{h}}}\sum_{\substack{\omega\in\Omega_{\omega_L} \\ \gamma^*\in\Gamma(\omega)}} \prod_{\gamma\in\Gamma(\omega)\setminus \{\gamma^*\} }\exp{\{-2K(\gamma) + \sum_{i\in\Lambda_n}\beta h_i\omega_i(1 - 2\mathbbm{1}_{\{i\in Int(\gamma^*)\}} )\}} \nonumber \\
     &\leq \exp{\{-2\beta (J\wedge\lambda)|\gamma^*| + 2\sum_{i\in Int(\gamma^*)}h_i\}},
\end{align}
where $K(\gamma) \coloneqq \beta J|\gamma\setminus\mathcal{W}^*|  + \beta\lambda|\gamma \cap \mathcal{W}^*|$ is a bounded value that depends on $\gamma$ and $x\wedge y\coloneqq\min\{x,y\}$. To relate the influence of the field in the interior of a contour with its size, we define the \textit{layers of} $\gamma$, $layer_k(\gamma) = \{i\in Int(\gamma): i_d=k\}$. The largest layer is $layer_{max}(\gamma)$, that is, $|layer_{max}(\gamma)| = \max_{k=1,\dots, n^\alpha}|layer_k(\gamma)|$. Hence
\begin{equation*}
    \sum_{i\in Int(\gamma)}h_i \leq \sum_{k=0}^{n^\alpha}\mathrm{h}_k|layer_{max}(\gamma)|.
\end{equation*}
Finally, for each vertex $v=(v_1,\dots,v_d)\in layer_{max}(\gamma)$, there are two distinct vertices $a_v,b_v\in\{(v_1,\dots,v_{d-1}, s)\in\H+: 0\leq s \leq n^\alpha +1\}$ for which $C_v\cap C_{a_v},C_v\cap C_{b_v}\in\gamma$. This is because $v$ is surrounded by $\gamma$, so there is one plaquette of $\gamma$ above $v$ and one below it. Hence $|layer_{max}(\gamma)|\leq |\gamma|/2$. This, together with \eqref{eq:peierls} yields
\begin{equation*}
    \mu_{\Lambda_n;\lambda,\bm{h}}^{\omega_L}(\gamma^*) \leq \exp{\{(-2\beta(J\wedge\lambda) + \beta ||\mathrm{h}||_1})|\gamma^*|\},
\end{equation*}
Therefore $\sum_{0\in Int(\gamma)}\mu_{\lambda,\bm{h}}^{L}({\gamma})<+\infty$ whenever $||\mathrm{h}||_1<2(J\wedge \lambda)$ and $\beta$ is large enough. Hence, taking $n\to\infty$, \eqref{eq:prob.of.minus.one} yields
\begin{equation*}
    \mu_{\lambda,\bm{h}}^{L}(\sigma_0=-1) =  \mu_{\lambda,\bm{h}}^{L}({\gamma_L\in\underbar{0}}) + \sum_{o\in Int(\gamma)} \mu_{\lambda,\bm{h}}^{L}({\gamma}).
\end{equation*}
Since for the plus state $\mu_{\lambda,\bm{h}}^{+}(\sigma_0=-1) = \sum_{o\in Int(\gamma)} \mu_{\lambda,\bm{h}}^{+}({\gamma})$, we conclude that when we have uniqueness, that is, $ \mu_{\lambda,\bm{h}}^{+} = \mu_{\lambda,\bm{h}}^{-}$, then
\begin{align*}
    \lim_{L\to\infty} \mu_{\lambda,\bm{h}}^{L}({\gamma_L\in\underbar{0}}) &= \lim_{L\to \infty} ( \mu_{\lambda,\bm{h}}^{L}(\sigma_0=-1) -  \sum_{o\in Int(\gamma)} \mu_{\lambda,\bm{h}}^{L}({\gamma}))
    = \mu_{\lambda,\bm{h}}^{-}(\sigma_0=-1) -  \sum_{o\in Int(\gamma)} \mu_{\lambda,\bm{h}}^{-}({\gamma}))\\
    & =  \mu_{\lambda,\bm{h}}^{-}(\sigma_0=-1) -  \sum_{o\in Int(\gamma)} \mu_{\lambda,\bm{h}}^{+}({\gamma})) =  \mu_{\lambda,\bm{h}}^{-}(\sigma_0=-1) -  \mu_{\lambda,\bm{h}}^{+}(\sigma_0=-1) = 0.
\end{align*}
\end{proof}  


\chapter{Semi-infinite Ising model with inhomogeneous external fields}
In this chapter, we consider some choices of external fields that vary according to the distance to the wall. For the nearest neighbor Ising model with external field $\bm{h}^* = (h_i^*)_{i\in\Z^d}$ given by
\begin{equation}\label{particular.external.field}
    h_i^* = \begin{cases}
            h^* &\text{ if }i=0,\\
            \frac{h^*}{|i|^\delta} &\text{ otherwise},\\
            \end{cases}
\end{equation}
it was proved in \cite{Bissacot_Cass_Cio_Pres_15} that when $\delta<1$ we have uniqueness for temperatures below a critical one. In \cite{Cioletti_Vila_2016}, it was shown that this critical temperature must be zero. The proof in \cite{Bissacot_Cass_Cio_Pres_15} involves contour arguments and the one in \cite{Cioletti_Vila_2016} uses a generalization of the Edwards–Sokal representation. 

For the semi-infinite Ising model, we first consider external fields $\bm{\rmh}$ induced by a summable sequence $\rmh = (\rmh_i)_{i\in\H+}$, that is, $\bm{\rmh}_i=\rmh_{i_d}$ for all $i\in\H+$. We show in Section 1 that such an external field preserves the phase transition, as long as the $\ell_1$-norm of $\rmh$ is small compared to $\lambda$.   

A more natural choice of the external field is one decaying as it gets further from the wall, that is, $h_i \leq h_j$ whenever $j_d\leq i_d$. Hence, we will consider the external field given by $h_i=\lambda i^{-\delta}$. In Section 2 we prove that, when $\delta>1$, the model behaves as the model with no field, so there is a critical value $\overline{\lambda}_c(J, \delta)$ such that there are multiple Gibbs states when $0\leq\lambda<\overline{\lambda}_c(J, \delta)$, and there is uniqueness otherwise. We are also able to show that $0<\overline{\lambda}_c(J, \delta)\leq\lambda_c$ whenever $J<J_c$. At last, we show that when $\delta<1$, the semi-infinity Ising model with this choice of external field presents only one Gibbs state.

\section{External field decaying with \texorpdfstring{$\lambda$}{l}}   
In order to simplify the notation, we make a slight change to the lattice defined previously. We consider now the model takes place in $\H+ = \mathbb{Z}^{d-1}\times\mathbb{N}$, with the natural numbers starting at $1$. All the definitions made previously can be easily adapted to this lattice. In particular, given $h\in \mathbb{R}$, the external field we are interested in is
$\widehat{\bm{h}}=(h_i)_{i\in\H+}$ with 
\begin{equation}
    h_i=\frac{h}{i_d^\delta},
\end{equation}
for all $i\in\H+$.  Figures \ref{EF.on.Lambda} and \ref{graph.EF} shows how this external field behaves. A particularly interesting choice of $h$ is $h=\lambda$. This particular case will be denoted  $\widehat{\bm{\lambda}}=(\lambda_i)_{i\in\H+}$, hence 
\begin{equation}
    \lambda_i=\frac{\lambda}{i_d^\delta}.
\end{equation}
We will always assume $\lambda \geq0$. 

\subsection{Critical behavior when  \texorpdfstring{$\delta>1$}{d>1}}
In the same steps of the case with no external field, we prove that there exists a critical value $\overline{\lambda}_c$ such that, for $\lambda>\overline{\lambda}_c$ there is a unique state, and for $\lambda<\overline{\lambda}_c$, there is a phase transition. To do so, we need to use a modified notion of wall-free energy. 

As we shifted the lattice, we reintroduce some previously defined regions. Consider the sequence that invades $\H+$ given by $\Lambda_n \coloneqq \Lambda_{n,n} = [-n,n]^{d-1}\times [1,n]$. Take  $\Lambda^\prime_n$ as the reflection of $\Lambda_n$ with respect to the line ${\mathcal{L} \coloneqq \{ (i_1,\dots,i_d)\in \mathbb{Z}^d : i_d = \frac{1}{2}\}}$. Similarly, define the walls as $\calW_n\coloneqq [-n,n]^{d-1}\times\{1\}$ and reflection of the walls $\mathcal{W}_n$ as $\mathcal{W}^\prime_n \coloneqq [-n,n]\times\{0\}$. Moreover, we denote $\Delta_n \coloneqq \Lambda_n \cup \Lambda^\prime_n$. For any summable sequence of positive real numbers $\mathrm{h}=(\mathrm{h}_\ell)_{\ell\geq 1}$, let $\bm{\mathrm{h}} = \{\mathrm{h}_{i}\}_{i\in\H+^d}$ be the external field induced by $\mathrm{h}$, that is, $\mathrm{h}_i\coloneqq\mathrm{h}_{i_d}$, with $i_d$ being the last coordinate of $i\in\Z^d$. We also denote $\overline{\bm{\rmh}} = \{\overline{\rmh}_i\}_{i\in\Z^d}$ the natural extension on $\bm{\rmh}$ to $\Z^d$, defined by
\begin{equation*}
    \overline{\rmh}_i = \begin{cases}
                            \rmh_i & \text{ if }i\in\H+ \\
                            \rmh_{-i + e_d} & \text{ if }i\in\Z^d\setminus\H+,
                        \end{cases}
\end{equation*}
where $e_d=(0,\dots,0,1)$ is a canonical base vector. Given any $J>0$ and summable  $\mathrm{h}=(\mathrm{h}_\ell)_{\ell\geq 0}$, the \textit{free surface energy} for the $+$-boundary condition and $-$-boundary condition are, respectively, 
\begin{equation*}
    \Tilde{F}^+(J,\bm{\mathrm{h}}) \coloneqq \lim_{n\to\infty} -\frac{1}{2|\calW_n|} \ln{\left[\frac{\left(\mathcal{Z}_{n;\bm{\rmh}}^+\right)^2}{Q_{\Delta_n; J}^+}\right]}
\end{equation*}
and 
\begin{equation*}
    \Tilde{F}^-(J,\bm{\mathrm{h}}) \coloneqq \lim_{n\to\infty} -\frac{1}{2|\calW_n|} \ln{\left[\frac{\left(\mathcal{Z}_{n;\bm{\rmh}}^-\right)^2}{Q_{\Delta_n; J}^-}\right]}.
\end{equation*}
The difference between this definition and the one introduced previously, and in \cite{FP-II}, is that we are also erasing the external field in the partition functions of the Ising model. We first prove that this limits are well defined. 

\begin{proposition}
    For any $J\geq 0$ and any summable sequence of positive real numbers $\mathrm{h}=(\mathrm{h}_\ell)_{\ell\geq 1}$, let $\bm{\rmh}$ be the external field induced by $\mathrm{h}$. The limits $ \Tilde{F}^+(J,\bm{\mathrm{h}})$ and $ \Tilde{F}^-(J,\bm{\mathrm{h}})$ are well defined.
\end{proposition}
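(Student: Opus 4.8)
The plan is to mimic the proof of existence of the limits \eqref{Def.F+}--\eqref{Def.F-} from the previous section, inserting one extra ingredient to control the external field, which here is erased from the denominator $Q_{\Delta_n;J}^{\pm}$ as well. I will only write out the $+$ case; the $-$ case is identical with $-$-boundary condition. First I would write $(\mathcal{Z}_{n;\bm{\rmh}}^{+})^2$ as a single Ising-type partition function on $\Omega_{\Delta_n}^{+}$: it equals $\sum_{\sigma}\exp\{-(H_{Is,n}(\sigma)+\widetilde{\mathcal{H}}_n(\sigma))\}$, where $H_{Is,n}$ is the pure Ising Hamiltonian on $\Delta_n$ (so that $Q_{\Delta_n;J}^{+}=\sum_{\sigma}\exp\{-H_{Is,n}(\sigma)\}$) and $\widetilde{\mathcal{H}}_n(\sigma)=\sum_{i\in\mathcal{W}_n,\,j\in\mathcal{W}_n'}J\sigma_i\sigma_j-\sum_{i\in\Delta_n}\overline{\rmh}_i\sigma_i$ collects the bonds crossing the reflection line $\mathcal{L}$ and the field $\overline{\bm{\rmh}}$. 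Setting $\Xi_n(t):=\sum_{\sigma\in\Omega_{\Delta_n}^{+}}\exp\{-(H_{Is,n}(\sigma)+t\widetilde{\mathcal{H}}_n(\sigma))\}$ gives $\Xi_n(0)=Q_{\Delta_n;J}^{+}$ and $\Xi_n(1)=(\mathcal{Z}_{n;\bm{\rmh}}^{+})^2$, hence $\ln[(\mathcal{Z}_{n;\bm{\rmh}}^{+})^2/Q_{\Delta_n;J}^{+}]=\int_0^1\frac{d}{dt}\ln\Xi_n(t)\,dt$ with $\frac{d}{dt}\ln\Xi_n(t)=-\sum_{i\in\mathcal{W}_n,\,j\in\mathcal{W}_n'}J\langle\sigma_i\sigma_j\rangle_n^{+,t}+\sum_{i\in\Delta_n}\overline{\rmh}_i\langle\sigma_i\rangle_n^{+,t}$, where $\langle\cdot\rangle_n^{+,t}$ is the finite-volume Gibbs state on $\Delta_n$ with $+$ b.c.\ and Hamiltonian $H_{Is,n}+t\widetilde{\mathcal{H}}_n$. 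For $t\in[0,1]$ this interpolated Hamiltonian is still ferromagnetic with a non-negative field (all bond couplings are $(1-t)J$ or $J$, and $t\overline{\rmh}_i\ge 0$), so FKG applies.

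Next I divide by $2|\mathcal{W}_n|$ and pass to the limit $n\to\infty$. By Lemma \ref{decreasing.states} the states $\langle\cdot\rangle_n^{+,t}$ decrease along $\Delta_n$ to a limit $\langle\cdot\rangle^{+,t}$ which, by the argument of Proposition \ref{properties.from.fkg}, is invariant under translations parallel to the wall and under reflection through $\mathcal{L}$. The correlation term is handled exactly as in the previous section: $\frac{1}{2|\mathcal{W}_n|}\sum_{i\in\mathcal{W}_n,\,j\in\mathcal{W}_n'}J\langle\sigma_i\sigma_j\rangle_n^{+,t}\to\frac{J}{2}\langle\sigma_i\sigma_j\rangle^{+,t}$ for a fixed pair $i,j$ adjacent across $\mathcal{L}$. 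For the field term, write $\sum_{i\in\Delta_n}\overline{\rmh}_i\langle\sigma_i\rangle_n^{+,t}=\sum_{\ell}\overline{\rmh}_\ell\sum_{i:\,i_d=\ell}\langle\sigma_i\rangle_n^{+,t}$; the boundary-layer estimate from the proof of Proposition \ref{PF.1} (translation invariance parallel to the wall, plus a negligible correction from sites near $\partial\mathcal{W}_n$) gives $\frac{1}{|\mathcal{W}_n|}\sum_{i:\,i_d=\ell}\langle\sigma_i\rangle_n^{+,t}\to\langle\sigma_{v_\ell}\rangle^{+,t}$ for a fixed site $v_\ell$ at depth $\ell$, and since $|\langle\sigma_i\rangle|\le 1$ and $\mathrm{h}$ is summable, dominated convergence in $\ell$ yields $\frac{1}{2|\mathcal{W}_n|}\sum_{i\in\Delta_n}\overline{\rmh}_i\langle\sigma_i\rangle_n^{+,t}\to\frac12\sum_{\ell}\overline{\rmh}_\ell\langle\sigma_{v_\ell}\rangle^{+,t}$, a finite number. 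Therefore $g^{+}(t):=\lim_n\frac{1}{2|\mathcal{W}_n|}\frac{d}{dt}\ln\Xi_n(t)$ exists for every $t\in[0,1]$, and $\sup_{t\in[0,1]}|g^{+}(t)|\le\frac{J}{2}+\|\mathrm{h}\|_1<\infty$. A final application of dominated convergence on $[0,1]$ gives $\lim_n\frac{1}{2|\mathcal{W}_n|}\ln[(\mathcal{Z}_{n;\bm{\rmh}}^{+})^2/Q_{\Delta_n;J}^{+}]=\int_0^1 g^{+}(t)\,dt$, so $\widetilde{F}^{+}(J,\bm{\rmh})$ is well defined; the same computation with $-$ b.c.\ gives $\widetilde{F}^{-}(J,\bm{\rmh})$.

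I expect the main obstacle to be the field term: one must verify both that the per-layer averaged magnetizations $\frac{1}{|\mathcal{W}_n|}\sum_{i:\,i_d=\ell}\langle\sigma_i\rangle_n^{+,t}$ converge — which is where the translation invariance parallel to the wall of the interpolated limit state and the $\partial\mathcal{W}_n$ correction of Proposition \ref{PF.1} are needed — and that one may interchange the sum over depths $\ell$ with the limit $n\to\infty$; this last interchange is exactly where the summability of $\mathrm{h}$ enters, together with the uniform bound $|\langle\sigma_i\rangle|\le 1$. Everything else is a routine transcription of the $\bm{h}\equiv 0$ arguments already established, so I would keep that part brief.
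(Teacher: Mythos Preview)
Your proposal is correct and follows essentially the same route as the paper: the same interpolation $\Xi_n(t)$ between $Q_{\Delta_n;J}^{+}$ and $(\mathcal{Z}_{n;\bm{\rmh}}^{+})^2$, the same splitting of $\frac{d}{dt}\ln\Xi_n(t)$ into a correlation term across $\mathcal{L}$ and a layered field term, and the same Ces\`aro/dominated-convergence argument using summability of $\rmh$. The only point where the paper is more explicit is the correlation term: since $\sigma_i\sigma_j$ is not monotone, the paper writes $\sigma_i\sigma_j=4\eta_i\eta_j-\sigma_i-\sigma_j-1$ with $\eta_i=(\sigma_i+1)/2$ and applies the monotone-in-volume argument of Lemma~\ref{decreasing.states} to the increasing function $\eta_i\eta_j$ and to the single spins separately, whereas you defer this to ``exactly as in the previous section''.
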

\begin{proof}
    As the parameters $J$ and $\bm{\rmh}$ are fixed, we will omit then from the notation. Also, in the sums, we omit $|i- j|=1$, since this is always the case. Start by noticing that 
    \begin{align*}
        \left(\mathcal{Z}^+_{n;\bm{h}}\right)^2 &= \sum_{\sigma\in \Omega^+_{\Delta_n}}\exp{\left\{ \sum_{i,j\in\Delta_n}J\sigma_i\sigma_j - \sum_{i\in\calW_n} J\sigma_i\sigma_{i-e_d} + \sum_{i\in \Delta_n}\overline{\rmh}_i\sigma_i + \sum_{\substack{i\in \Delta_n \\ j\notin \Delta_n}}J\sigma_i\right\}} \\ 
        & = \sum_{\sigma\in \Omega^+_{\Delta_n}}\exp{\left\{ \sum_{i,j\in\Delta_n}J\sigma_i\sigma_j + \sum_{\substack{i\in \Delta_n \\ j\notin \Delta_n}}J\sigma_i - \Tilde{H}_n(\sigma)\right\}}, 
    \end{align*}
    with $\Tilde{H}_n(\sigma) = \sum_{i\in\calW_n} J\sigma_i\sigma_{i-e_d} - \sum_{i\in\Delta_n}\overline{\rmh}_i\sigma_i$. Take 
    \begin{equation*}
         \Xi_n(t) \coloneqq  \sum_{\sigma\in \Omega^+_{\Delta_n}}\exp{\left\{ \sum_{i,j\in\Delta_n}J\sigma_i\sigma_j + \sum_{\substack{i\in \Delta_n \\ j\notin \Delta_n}}J\sigma_i - t\Tilde{H}_n(\sigma)\right\}},
    \end{equation*}
    and let $\langle\cdot \rangle^{+}_{\Delta_n}(t)$ be the state with $+$-boundary condition given by the Hamiltonian $H_n(t)(\sigma) = \sum_{i,j\in\Delta_n}J\sigma_i\sigma_j + \sum_{\substack{i\in \Delta_n \\ j\notin \Delta_n}}J\sigma_i - t\Tilde{H}_n(\sigma)$. We can write 
    \begin{align}\label{Eq: F_as_integral}
        \ln{\left[\frac{\left(\mathcal{Z}_{n;\bm{\rmh}}^+\right)^2}{Q_{\Delta_n; J}^+}\right]} &= \ln{\left[\frac{\Xi_n(1)}{\Xi_n(0)}\right]} = \int_{0}^1 \frac{\d}{\d s}\left(\ln\Xi_n(s)\right)ds \nonumber \\
        &= - \int_0^1\sum_{i\in\calW_n} J \langle \sigma_i\sigma_{i-e_d} \rangle_{\Delta_n}^+(s) ds + \int_0^1 \sum_{i\in\Delta_n}\overline{\rmh_i}\langle\sigma_i \rangle_{\Delta_n}^+(s)ds.
    \end{align}
    Considering $\langle \cdot \rangle^+(t)\coloneqq\lim_{n\to\infty} \langle \cdot \rangle^+_{\Delta_n}(t)$ the limiting state, we will show that 
    \begin{equation}\label{Eq: Lim_correlation_on_wall}
        \lim_{n\to\infty} \frac{1}{|\calW_n|}\sum_{i\in\calW_n}  \langle \sigma_i\sigma_{i-e_d} \rangle_{\Delta_n}^+(s) = \langle \sigma_0\sigma_{-e_d} \rangle^+(s).
    \end{equation}
    Defining, for all $i\in\Z^d$, $\eta_i = \frac{(\sigma_i + 1)}{2}$, $\eta_i\eta_j$ is an increasing function and $\sigma_i\sigma_j = 4\eta_i\eta_j - \sigma_i - \sigma_j - 1$. To show \eqref{Eq: Lim_correlation_on_wall}, we first prove that
    \begin{equation*}
        \lim_{n\to\infty} \frac{1}{|\calW_n|}\sum_{i\in\calW_n}  \langle \eta_i\eta_{i-e_d} \rangle_{\Delta_n}^+(s) = \langle \eta_0\eta_{-e_d} \rangle^+(s).
    \end{equation*}
    Fix $m\in\mathbb{N}$. For any $n\geq m$
\begin{align*}
    \frac{1}{|\mathcal{W}_n|}\sum_{i\in \mathcal{W}_n} \langle \eta_i\eta_{i-e_d} \rangle_{\Delta_n}^+(s) &= \frac{1}{|\mathcal{W}_n|}\sum_{i\in W_{n-m}} \langle \eta_i\eta_{i-e_d} \rangle_{\Delta_n}^+(s) + \frac{1}{|\mathcal{W}_n|}\sum_{i\in \mathcal{W}_n\setminus W_{n-m}} \langle \eta_i\eta_{i-e_d} \rangle_{\Delta_n}^+(s).
\end{align*}

If $i\in W_{n-m}$ we have that $i+\Lambda_m \subset \Lambda_n$ and by Lemma \ref{decreasing.states} $\langle \eta_i\eta_{i-e_d} \rangle_{\Delta_n}^+(s)\leq \langle \eta_i\eta_{i-e_d} \rangle_{\Delta_m + i}^+(s)= \langle \eta_0\eta_{-e_d} \rangle_{\Delta_m}^+(s)$. Therefore
\begin{equation}
    \frac{1}{|\mathcal{W}_n|}\sum_{i\in W_{n-m}} \langle \eta_i\eta_{i-e_d} \rangle_{\Delta_n}^+(s)  \leq \frac{1}{|W_{n-m}|}\sum_{i\in W_{n-m}} \langle \eta_0\eta_{-e_d} \rangle_{\Delta_m}^+(s) = \langle \eta_0\eta_{-e_d} \rangle_{\Delta_m}^+(s).
\end{equation}

If $i\in \mathcal{W}_n\setminus W_{n-m}$, then $i + \Lambda_m\not\subset \Lambda_n$ and this set intersects the boundary of the wall $$\partial\mathcal{W}_n\coloneqq \{ i\in\mathcal{W}_n : \exists j\in \mathcal{W}_{n+1}\setminus \mathcal{W}_{n} \text{ s.t. }i\sim j  \}.$$ We can bound the number of such vertex by $|\Lambda_m||\partial\Lambda_n|$. Since $|\langle \eta_i\eta_{i-e_d} \rangle_{\Delta_n}^+(s) |\leq 1$, we have
\begin{equation*}
    \frac{1}{|\mathcal{W}_n|}\sum_{i\in \mathcal{W}_n\setminus W_{n-m}}\langle \eta_i\eta_{i-e_d} \rangle_{\Delta_n}^+(s)  \leq \frac{2|\Lambda_m||\partial\mathcal{W}_n|}{|\mathcal{W}_n|}
\end{equation*}
which goes to zero as $n$ increases. Putting both bounds together we get 
\begin{equation*}
    \limsup_{n}  \frac{1}{|\mathcal{W}_n|}\sum_{i\in W_{n}} \langle \eta_i\eta_{i-e_d} \rangle_{\Delta_n}^+(s)  \leq \langle \eta_0\eta_{-e_d} \rangle_{\Delta_m}^+(s) .
\end{equation*}
As $m$ is arbitrary, we can take the limit to get the upper bound in (\ref{lim.to.magnetization.plus}). The lower bound is a direct consequence of the translation invariance and Lemma \ref{decreasing.states} since

\begin{equation*}
    \langle \eta_0\eta_{-e_d} \rangle^+(s)  = \frac{1}{|\mathcal{W}_n|}\sum_{i\in \mathcal{W}_n} \langle \eta_i\eta_{i-e_d} \rangle^+(s)  \leq \frac{1}{|\mathcal{W}_n|}\sum_{i\in \mathcal{W}_n}\langle \eta_i\eta_{i-e_d} \rangle_{\Delta_n}^+(s) ,
\end{equation*}
therefore $\liminf_{n}\frac{1}{|\mathcal{W}_n|}\sum_{i\in \mathcal{W}_n} \langle \eta_i\eta_{i-e_d} \rangle_{\Delta_n}^+(s)  \geq \langle \eta_0\eta_{-e_d} \rangle^+(s)$.
In a completely analogous way, we prove that 
\begin{equation*}
    \lim_{n\to\infty} \frac{1}{|\calW_n|}\sum_{i\in\calW_n}  \langle \sigma_i \rangle_{\Delta_n}^+(s) =
    \langle \sigma_0\rangle^+(s),
\end{equation*}
    what shows \eqref{Eq: Lim_correlation_on_wall}. By the same argument, we can show that 
    \begin{equation}\label{Eq: Lim_h_i_sigma_i}
        \lim_{n\to\infty} \frac{1}{|\calW_n|}\sum_{i\in\Delta_n}  \overline{\rmh}_i\langle \sigma_{i} \rangle_{\Delta_n}^+(s) = \sum_{\ell=1}^{\infty} \rmh_\ell \left(\langle \sigma_{\ell e_d} \rangle^+(s) + \langle \sigma_{-(\ell -1)e_d} \rangle^+(s)\right).
    \end{equation}
    Notice that, since the sequence $ \left(\langle \sigma_{\ell e_d} \rangle^+(s) + \langle \sigma_{-(\ell -1)e_d} \rangle^+(s)\right)_{n\geq 0}$ is bounded by 2, the series in the right hand side of equation above is well defined. Start by noticing that, by our choice of external field, 
    \begin{equation*}
        \sum_{i\in\Delta_n} \overline{\rmh}_i\langle \sigma_{i} \rangle_{\Delta_n}^+(s) = \sum_{i\in\calW_n^\prime}\sum_{\ell=1}^{n}\rmh_\ell \left(\langle \sigma_{i + \ell e_d}  \rangle_{\Delta_n}^+(s) + \langle \sigma_{i-(\ell-1) e_d} \rangle_{\Delta_n}^+(s)\right).
    \end{equation*}
    Fixed $m\in \mathbb{N}$, for any $n\geq m$, we split 
    \begin{equation*}
        \sum_{i\in\calW_n^\prime}\sum_{\ell=1}^{n}\rmh_\ell \langle \sigma_{i + \ell e_d}  \rangle_{\Delta_n}^+(s) = \sum_{i\in\calW_{n-m}^\prime}\sum_{\ell=1}^{n}\rmh_\ell \langle \sigma_{i + \ell e_d}  \rangle_{\Delta_n}^+(s) + \sum_{i\in\calW_n^\prime \setminus \calW_{n-m}^\prime}\sum_{\ell=1}^{n}\rmh_\ell\langle \sigma_{i + \ell e_d}  \rangle_{\Delta_n}^+(s)
    \end{equation*}
    If $i\in W_{n-m}$ we have that $i+\Lambda_m \subset \Lambda_n$ and by Lemma \ref{decreasing.states} $\langle \sigma_{i + \ell e_d}  \rangle_{\Delta_n}^+(s)\leq \langle \sigma_{i + \ell e_d}  \rangle_{\Delta_m + i}^+(s)= \langle \sigma_{\ell e_d}  \rangle_{\Delta_m}^+(s)$. Therefore
    \begin{equation}
    \frac{1}{|\mathcal{W}_n|}\sum_{i\in W_{n-m}^\prime} \sum_{\ell=1}^{n} \rmh_\ell \langle \sigma_{i + \ell e_d} \rangle_{\Delta_n}^+(s)  \leq \frac{1}{|W_{n-m}|}\sum_{i\in W_{n-m}} \sum_{\ell=1}^{n}\rmh_\ell \langle \sigma_{\ell e_d}  \rangle_{\Delta_m}^+(s) = \sum_{\ell=1}^{n}\rmh_\ell \langle \sigma_{\ell e_d}  \rangle_{\Delta_m}^+(s) \leq \sum_{\ell=1}^{\infty}\rmh_\ell \langle \sigma_{\ell e_d}  \rangle_{\Delta_m}^+(s) .
\end{equation}
In the last equation, we used that $\rmh_\ell\geq0$ and $\langle \sigma_{\ell e_d}  \rangle_{\Delta_m}^+(s)$, for all $\ell\geq 1$. If $i\in \mathcal{W}_n\setminus W_{n-m}$, then $i + \Lambda_m\not\subset \Lambda_n$ and this set intersects the boundary of the wall $\partial\mathcal{W}_n.$ We can bound the number of such vertex by $|\Lambda_m||\partial\Lambda_n|$. Since $|\langle \sigma_{i + \ell e_d} \rangle_{\Delta_n}^+(s) |\leq 1$, we have
\begin{equation*}
    \frac{1}{|\mathcal{W}_n|}\sum_{i\in \mathcal{W}_n\setminus W_{n-m}}\sum_{\ell=1}^{n}\langle\sigma_{i + \ell e_d} \rangle_{\Delta_n}^+(s)  \leq \frac{2|\Lambda_m||\partial\mathcal{W}_n|}{|\mathcal{W}_n|}\sum_{\ell=1}^{\infty}h_\ell.
\end{equation*}
which goes to zero as $n$ increases. Putting both bounds together we get 
\begin{equation*}
    \limsup_{n}  \frac{1}{|\mathcal{W}_n|} \sum_{i\in\calW_n^\prime}\sum_{\ell=1}^{n}\rmh_\ell \langle \sigma_{i + \ell e_d}  \rangle_{\Delta_n}^+(s)  \leq  \sum_{\ell=1}^{\infty}\rmh_\ell \langle \sigma_{ \ell e_d}  \rangle_{\Delta_m}^+(s).
\end{equation*}
As $m$ is arbitrary, we can take the limit to get the upper bound in the sum. The lower bound is a direct consequence of the translation invariance and Lemma \ref{decreasing.states} since
\begin{equation*}
   \sum_{\ell=1}^{n}\rmh_\ell \langle \sigma_{ \ell e_d}  \rangle^+(s)  = \frac{1}{|\mathcal{W}_n|}\sum_{i\in \mathcal{W}_n^\prime} \sum_{\ell=1}^{n}\rmh_\ell \langle \sigma_{i + \ell e_d}  \rangle^+(s)  \leq \frac{1}{|\mathcal{W}_n|}\sum_{i\in \mathcal{W}_n^\prime}\sum_{\ell=1}^{n}\rmh_\ell \langle \sigma_{i + \ell e_d}  \rangle_{\Delta_n}^+(s),
\end{equation*}
therefore $\liminf_{n}\frac{1}{|\mathcal{W}_n|}\sum_{i\in\calW_n^\prime}\sum_{\ell=1}^{n}\rmh_\ell \langle \sigma_{i + \ell e_d}  \rangle_{\Delta_n}^+(s)  \geq \sum_{\ell=1}^{\infty}\rmh_\ell \langle \sigma_{ \ell e_d}  \rangle^+(s)$. This proves that 
\begin{equation}\label{Eq: Lim_perpendicular_magnatization}
    \lim_{n\to\infty}  \frac{1}{|\mathcal{W}_n|} \sum_{i\in\calW_n^\prime}\sum_{\ell=1}^{n}\rmh_\ell \langle \sigma_{i + \ell e_d}  \rangle_{\Delta_n}^+(s) =  \sum_{\ell=1}^{\infty}\rmh_\ell \langle \sigma_{ \ell e_d}  \rangle^+(s). 
\end{equation}
By the exact same argument, we can show that 
\begin{equation*}
    \lim_{n\to\infty}\frac{1}{|\calW_n|}\sum_{i\in\calW_n^\prime}\sum_{\ell=1}^{n}\rmh_\ell \ \langle \sigma_{i-(\ell-1) e_d} \rangle_{\Delta_n}^+(s) = \sum_{\ell=1}^{\infty} \rmh_\ell \langle \sigma_{-(\ell -1)e_d} \rangle^+(s),
\end{equation*}
and therefore we have \eqref{Eq: Lim_h_i_sigma_i}. Equations \eqref{Eq: F_as_integral}, \eqref{Eq: Lim_correlation_on_wall} and \eqref{Eq: Lim_h_i_sigma_i}, together with the dominated convergence theorem, yields
\begin{align*}
    F^+(J,\bm{\rmh}) &= \lim_{n\to\infty}-\frac{1}{|\calW_n|} \int_0^1\sum_{i\in\calW_n} J \langle \sigma_i\sigma_{i-e_d} \rangle_{\Delta_n}^+(s) ds +\frac{1}{|\calW_n|}  \int_0^1 \sum_{i\in\Delta_n}\overline{\rmh_i}\langle\sigma_i \rangle_{\Delta_n}^+(s)ds \\
    &= \int_0^1 J\langle \sigma_0\sigma_{-e_d} \rangle^+(s) ds - \sum_{\ell=1}^{\infty} \int_0^1  \rmh_\ell \left(\langle \sigma_{\ell e_d} \rangle^+(s) + \langle \sigma_{-(\ell -1)e_d} \rangle^+(s)\right) ds,
\end{align*}
so $\Tilde{F}^+(J,\bm{\rmh})$ is well defined. The proof that the limit $\Tilde{F}^-(J,\bm{\rmh})$ exists is analogous. 
\end{proof} 

To characterize the phase transition, we proceed as in \cite{FP-II} and use the \textit{wall free energy}, defined as
\begin{equation}
    \Tilde{\tau}_w(J,\bm{\rmh}) \coloneqq \Tilde{F}^-(J,\bm{\rmh}) - \Tilde{F}^+(J,\bm{\rmh}).
\end{equation}
Notice that, when we do not have an external field, $Q_{\Delta_n}^+ = Q_{\Delta_n}^-$. This simplifies the surface tension to 
\begin{equation}\label{tau_tilde_w_wo_ising_partition}
    \Tilde{\tau}_w(J,\bm{\rmh}) = \lim_{n \to \infty} -\frac{1}{|\mathcal{W}_n|}\ln\left[ \frac{\mathcal{Z}^{-}_{n; \bm{\rmh}}}{\mathcal{Z}^{+}_{n; \bm{\rmh}}} \right].
\end{equation}
First we prove that, similarly to \eqref{tau_w.as.integral}, for the external field $\widehat{\bm{\lambda}}$, we can write $ \Tilde{\tau}_w(J,\widehat{\bm{\lambda}})$ in terms of differences of the magnetization. 

\begin{proposition}\label{Prop: Tilde_tau_as_integral}
For $J>0$ and $\lambda\geq0$, the wall free energy can be written as 
\begin{equation}
     \Tilde{\tau}_w(J,\widehat{\bm{\lambda}}) = \int_0^\lambda \sum_{\ell=1}^\infty \frac{1}{\ell^{\delta}}\left( \langle \sigma_{\ell e_d} \rangle^+_{J, \widehat{\bm{s}}} - \langle \sigma_{\ell e_d} \rangle^-_{J, \widehat{\bm{s}}} \right) ds.
\end{equation}
\end{proposition}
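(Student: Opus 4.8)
The plan is to run the argument of Proposition~\ref{PF.1} essentially verbatim, the only new feature being that $\lambda$ now enters the Hamiltonian through the whole field $\widehat{\bm{\lambda}}$ and not only through a wall term. Since $\delta>1$, the sequence $(\lambda\ell^{-\delta})_{\ell\ge 1}$ is summable, so $\widehat{\bm{\lambda}}$ lies in the class for which \eqref{tau_tilde_w_wo_ising_partition} holds, and
\[
    \Tilde{\tau}_w(J,\widehat{\bm{\lambda}}) \;=\; \lim_{n\to\infty} -\frac{1}{|\calW_n|}\ln\!\left[\frac{\mathcal{Z}^{-}_{n;\widehat{\bm{\lambda}}}}{\mathcal{Z}^{+}_{n;\widehat{\bm{\lambda}}}}\right].
\]
Viewing $\mathcal{Z}^{\pm}_{n;\widehat{\bm{s}}}$ as a function of $s$ (recall $\beta=1$ is fixed), it is the logarithm of a positive smooth function and $\partial_s\ln\mathcal{Z}^{\pm}_{n;\widehat{\bm{s}}}=\sum_{i\in\Lambda_n} i_d^{-\delta}\langle\sigma_i\rangle^{\pm}_{\Lambda_n;\widehat{\bm{s}}}$, whence
\[
    -\,\partial_s\ln\!\left[\frac{\mathcal{Z}^{-}_{n;\widehat{\bm{s}}}}{\mathcal{Z}^{+}_{n;\widehat{\bm{s}}}}\right] \;=\; \sum_{i\in\Lambda_n}\frac{1}{i_d^{\delta}}\Bigl(\langle\sigma_i\rangle^{+}_{\Lambda_n;\widehat{\bm{s}}}-\langle\sigma_i\rangle^{-}_{\Lambda_n;\widehat{\bm{s}}}\Bigr).
\]
At $s=0$ the field vanishes, so spin-flip symmetry gives $\mathcal{Z}^{+}_{n;\widehat{\bm 0}}=\mathcal{Z}^{-}_{n;\widehat{\bm 0}}$, and the fundamental theorem of calculus yields
\[
    -\frac{1}{|\calW_n|}\ln\!\left[\frac{\mathcal{Z}^{-}_{n;\widehat{\bm{\lambda}}}}{\mathcal{Z}^{+}_{n;\widehat{\bm{\lambda}}}}\right] \;=\; \int_0^{\lambda}\frac{1}{|\calW_n|}\sum_{i\in\Lambda_n}\frac{1}{i_d^{\delta}}\Bigl(\langle\sigma_i\rangle^{+}_{\Lambda_n;\widehat{\bm{s}}}-\langle\sigma_i\rangle^{-}_{\Lambda_n;\widehat{\bm{s}}}\Bigr)\,ds .
\]
In contrast with Proposition~\ref{PF.1} there is no leftover term at $s=0$, precisely because $\widehat{\bm{\lambda}}\to 0$ as $\lambda\to 0$ (this is the content of the remark following Proposition~\ref{PF.1}).

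Next I would let $n\to\infty$ on both sides. The integrand is bounded in absolute value, uniformly in $n$ and $s\in[0,\lambda]$, by $\tfrac{1}{|\calW_n|}\sum_{i\in\Lambda_n}2\,i_d^{-\delta}=2\,\tfrac{(2n+1)^{d-1}}{|\calW_n|}\sum_{\ell=1}^{n}\ell^{-\delta}\le 2\sum_{\ell\ge 1}\ell^{-\delta}<\infty$ (here $\delta>1$ is used), so by dominated convergence it suffices to prove that, for each fixed $s$,
\[
    \lim_{n\to\infty}\frac{1}{|\calW_n|}\sum_{i\in\Lambda_n}\frac{1}{i_d^{\delta}}\langle\sigma_i\rangle^{\pm}_{\Lambda_n;\widehat{\bm{s}}}\;=\;\sum_{\ell=1}^{\infty}\frac{1}{\ell^{\delta}}\,\langle\sigma_{\ell e_d}\rangle^{\pm}_{J,\widehat{\bm{s}}} .
\]
Slicing $\Lambda_n$ into its layers $\{i\in\Lambda_n:i_d=\ell\}$, $\ell=1,\dots,n$, each a translate of $\calW_n$ by $(\ell-1)e_d$, and using that $\widehat{\bm{s}}$ is constant on each layer, the left side equals $\sum_{\ell=1}^{n}\ell^{-\delta}\bigl(|\calW_n|^{-1}\sum_{j\in\calW_n}\langle\sigma_{j+(\ell-1)e_d}\rangle^{\pm}_{\Lambda_n;\widehat{\bm{s}}}\bigr)$. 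For each fixed $\ell$ the inner Cesàro average converges to $\langle\sigma_{\ell e_d}\rangle^{\pm}_{J,\widehat{\bm{s}}}$ by the same squeeze argument used in the proof of Proposition~\ref{PF.1} and in the proof that $\Tilde{F}^{\pm}$ exist, combining translation invariance of the limiting state in directions parallel to the wall with the volume-monotonicity of $\Lambda\mapsto\langle\sigma_i\rangle^{\pm}_{\Lambda;\widehat{\bm{s}}}$ furnished by Lemma~\ref{decreasing.states}. Since each summand is dominated by $\ell^{-\delta}$ and $\delta>1$, splitting the $\ell$-sum at a cutoff $m$, letting first $n\to\infty$ and then $m\to\infty$, exchanges $\sum_\ell$ with $\lim_n$ and gives the displayed identity; plugging it back into the integral formula and using the dominated-convergence bound once more produces
\[
    \Tilde{\tau}_w(J,\widehat{\bm{\lambda}}) \;=\; \int_0^{\lambda}\sum_{\ell=1}^{\infty}\frac{1}{\ell^{\delta}}\Bigl(\langle\sigma_{\ell e_d}\rangle^{+}_{J,\widehat{\bm{s}}}-\langle\sigma_{\ell e_d}\rangle^{-}_{J,\widehat{\bm{s}}}\Bigr)\,ds .
\]

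The computational steps (differentiating $\ln\mathcal{Z}^{\pm}_n$, the fundamental theorem of calculus, the layer slicing) are routine, and all analytic inputs — the layerwise Cesàro convergence, translation invariance parallel to the wall, and the volume-monotonicity of $\langle\cdot\rangle^{\pm}$ — are already available from Lemma~\ref{decreasing.states}, Proposition~\ref{properties.from.fkg} and the arguments reproduced in this section. The only delicate point is the triple interchange of $\lim_n$, $\sum_\ell$ and $\int_0^{\lambda}$, and this is exactly where the hypothesis $\delta>1$ is needed: it makes $(\ell^{-\delta})_\ell$ summable, which simultaneously dominates the $\ell$-sum and furnishes the uniform bound for dominated convergence in the $s$-integral. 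So the main obstacle is bookkeeping rather than a genuinely new idea.
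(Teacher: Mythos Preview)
Your proposal is correct and follows essentially the same route as the paper: start from \eqref{tau_tilde_w_wo_ising_partition}, differentiate the log-ratio of partition functions in the field parameter, use spin-flip symmetry at $s=0$ and the fundamental theorem of calculus, slice $\Lambda_n$ into layers, and invoke dominated convergence together with the layerwise Ces\`aro limits already established in the proof that $\Tilde{F}^{\pm}$ exist (equation \eqref{Eq: Lim_perpendicular_magnatization}). The paper's proof is slightly terser but structurally identical.
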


\begin{proof}
    Using \eqref{tau_tilde_w_wo_ising_partition}, the wall free energy simplifies to
\begin{equation}
    \tau_w(J,\lambdadec) = \Tilde{F}^-(J,\lambdadec) - \Tilde{F}^+(J,\lambdadec) = \lim_{n \to \infty} -\frac{1}{|\mathcal{W}_n|}\ln\left[ \frac{\mathcal{Z}^{-}_{n; \lambdadec}}{\mathcal{Z}^{+}_{n; \lambdadec}} \right].
\end{equation}
Differentiating each term in the limit w.r.t. $\lambda$ we get
\begin{align*}
    -\partial_\lambda \left( \ln\left[ \frac{\mathcal{Z}_{n;\lambdadec}^{-, J}}{\mathcal{Z}_{n;\lambdadec}^{+, J}} \right] \right) 
        &= \partial_\lambda \left( \ln \mathcal{Z}_{n;\lambdadec}^{+, J} - \ln{\mathcal{Z}_{n;\lambdadec}^{-, J}} \right) \\
        &= \frac{1}{\mathcal{Z}_{n;\lambdadec}^{+, J}} \partial_\lambda \left( \mathcal{Z}_{n;\lambdadec}^{+, J} \right) - \frac{1}{\mathcal{Z}_{n;\lambda}^{-, J}} \partial_\lambda \left( \mathcal{Z}_{n;\lambdadec}^{-, J} \right).
\end{align*}
As 
\begin{equation*}
    \partial_\lambda\left(\mathcal{Z}_{n;\lambdadec}^{+, J}\right)=\sum_{i\in \Lambda_n}\sum_{\sigma\in\Sigma_{\Lambda_n}^{+}} \frac{1}{i_d^\delta}\sigma_i e^{-\mathcal{H}_{\Lambda_n; \lambdadec}^J(\sigma)} = \sum_{i\in \calW_n^\prime}\sum_{\ell=0}^{n} \frac{1}{\ell^\delta}\sigma_{i+\ell e_d} e^{-\mathcal{H}_{\Lambda_n; \lambdadec}^J(\sigma)} ,
\end{equation*}
we conclude that
\begin{equation}\label{d.lambda.of.ln[Z+/Z-].decaying.field}
     -\partial_\lambda \left( \ln\left[ \frac{\mathcal{Z}_{n;\lambdadec}^{-, J}}{\mathcal{Z}_{n;\lambdadec}^{+, J}} \right] \right) = \sum_{i\in \mathcal{W}_n^\prime} \sum_{\ell=1}^{n} \frac{1}{\ell^\delta}\left(\langle \sigma_{i+\ell e_d} \rangle^{+,J}_{n;\lambdadec} - \langle \sigma_{i+\ell e_d} \rangle^{-,J}_{n;\lambdadec}\right).
\end{equation}
All of the above functions are continuous and bounded since they are the logarithm of positive polynomials. Moreover, $\mathcal{Z}^{+}_{n; 0} = \mathcal{Z}^{-}_{n; 0}$. Hence we can write
\begin{equation*}
     \Tilde{\tau}_w(J,\lambdadec) = \lim_{n\to \infty} \frac{1}{|\mathcal{W}_n|}\sum_{i\in \mathcal{W}_n^\prime}\sum_{\ell=1}^{n}\int_0^\lambda \frac{1}{\ell^\delta}\left(\langle \sigma_{i+\ell e_d} \rangle^{+,J}_{n;\widehat{\bm{s}}} - \langle \sigma_{i+\ell e_d} \rangle^{-,J}_{n;\widehat{\bm{s}}}\right)ds.
\end{equation*}
The result follows from the dominated convergence theorem once we note that, for any $0< s$,
\begin{equation*}
    \lim_{n\to \infty} \frac{1}{|\mathcal{W}_n|}\sum_{i\in \mathcal{W}_n^\prime}\sum_{\ell=1}^{n} \frac{1}{\ell^\delta}\langle \sigma_{i+\ell e_d} \rangle^{+,J}_{n;\widehat{\bm{s}}} = \sum_{\ell=1}^\infty  \frac{1}{\ell^\delta}\langle \sigma_{\ell e_d} \rangle^{+,J}_{\widehat{\bm{s}}} 
\end{equation*}
and 
\begin{equation*}
    \lim_{n\to \infty} \frac{1}{|\mathcal{W}_n|}\sum_{i\in \mathcal{W}_n^\prime}\sum_{\ell=1}^{n} \frac{1}{\ell^\delta}\langle \sigma_{i+\ell e_d} \rangle^{-,J}_{n;\widehat{\bm{s}}}= \sum_{\ell=1}^\infty  \frac{1}{\ell^\delta}\langle \sigma_{\ell e_d} \rangle^{-,J}_{\widehat{\bm{s}}}. 
\end{equation*}
These limits are proved in the same steps we proved \eqref{Eq: Lim_perpendicular_magnatization}. 
\end{proof}

This new wall free energy also presents the monotonicity and convexity properties of the previous one. Such properties are described in the next proposition. 

\begin{proposition}\label{Prop: Monotonicity_and_conv_of_tilde_tau}
    For $J>0$, and an external field $\bm{\rmh}$ induced by a positive, summable sequence $\rmh = (\rmh_\ell)_{\ell=1}^\infty$, and $\lambda>0$, we have
    \begin{itemize}
        \item[(a)] $\Tilde{\tau}_w(J,\bm{\rmh})$ is non-decreasing in $J$ and $\rmh_\ell$, for all $\ell\geq 1$; 
        \item[(b)]$\Tilde{\tau}_w(J, \lambdadec)$ is a concave function of $\lambda >0$. 
    \end{itemize}
\end{proposition}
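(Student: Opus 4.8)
The plan is to mimic the corresponding proofs for the field-free wall free energy $\tau_w$ in Theorem~\ref{PF.2}, now using the representation of $\Tilde{\tau}_w$ obtained in Proposition~\ref{Prop: Tilde_tau_as_integral} together with the differentiation identity \eqref{d.lambda.of.ln[Z+/Z-].decaying.field} and the correlation inequalities (FKG/GKS and the duplicated-variable inequalities, via Proposition~\ref{Consequence.of.DVI}). The key simplification is that, since $Q^{+}_{\Delta_n} = Q^{-}_{\Delta_n}$ when there is no field in the Ising partition functions (our definition erases it), we have the clean formula \eqref{tau_tilde_w_wo_ising_partition}, so it suffices to control the single ratio $\ln[\mathcal{Z}^{-}_{n;\bm{\rmh}}/\mathcal{Z}^{+}_{n;\bm{\rmh}}]$ and its $\lambda$- and $J$-derivatives.

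For part (a), monotonicity in $\rmh_\ell$: I would fix $n$ and differentiate $-\frac{1}{|\calW_n|}\ln[\mathcal{Z}^{-}_{n;\bm{\rmh}}/\mathcal{Z}^{+}_{n;\bm{\rmh}}]$ with respect to a single coordinate $\rmh_\ell$. Exactly as in \eqref{d.lambda.of.ln[Z+/Z-].decaying.field}, this derivative equals a finite sum (over the sites at height $\ell$ and their reflections) of differences $\langle \sigma_i\rangle^{+}_{n;\bm{\rmh}} - \langle \sigma_i\rangle^{-}_{n;\bm{\rmh}}$, which is nonnegative by the second part of Lemma~\ref{extremality.of.+.and.-.bc}. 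Passing to the limit $n\to\infty$ (the functions in the limit are logarithms of positive polynomials, hence continuous, and the limit is monotone, so one can invoke the convexity/convergence theorem as in Proposition~\ref{Uniqueness.for.h.zero}) gives that $\Tilde{\tau}_w$ is non-decreasing in each $\rmh_\ell$. For monotonicity in $J$, differentiate \eqref{tau_tilde_w_wo_ising_partition} in $J$: the finite-$n$ derivative is $|\calW_n|^{-1}\sum_{i\sim j}(\langle\sigma_i\sigma_j\rangle^{+}_{n;\bm{\rmh}} - \langle\sigma_i\sigma_j\rangle^{-}_{n;\bm{\rmh}})$, which is nonnegative by Proposition~\ref{Consequence.of.DVI} (inequality \eqref{increasing.correlations}); take $n\to\infty$.

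For part (b), concavity in $\lambda$ when the field is $\lambdadec$: starting from \eqref{d.lambda.of.ln[Z+/Z-].decaying.field}, differentiate once more in $\lambda$. Since $\partial_\lambda (h_\lambda)_i = i_d^{-\delta}$ is a fixed nonnegative vector, the chain rule produces
\begin{equation*}
    -\partial_\lambda^2\ln\!\left[\frac{\mathcal{Z}_{n;\lambdadec}^{-,J}}{\mathcal{Z}_{n;\lambdadec}^{+,J}}\right] = \sum_{i,j}\frac{1}{i_d^\delta j_d^\delta}\Big( \langle\sigma_i\sigma_j\rangle^{+}_{n;\lambdadec} - \langle\sigma_i\rangle^{+}_{n;\lambdadec}\langle\sigma_j\rangle^{+}_{n;\lambdadec} - \langle\sigma_i\sigma_j\rangle^{-}_{n;\lambdadec} + \langle\sigma_i\rangle^{-}_{n;\lambdadec}\langle\sigma_j\rangle^{-}_{n;\lambdadec}\Big),
\end{equation*}
summed over $i,j$ ranging over $\Lambda_n$ and its reflection. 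Each bracket is the difference of a $+$-truncated correlation and a $-$-truncated correlation, which is $\leq 0$ by the second inequality of Proposition~\ref{Consequence.of.DVI}; since the coefficients $i_d^{-\delta}j_d^{-\delta}$ are nonnegative, the whole expression is $\leq 0$, so each finite-$n$ function $\lambda\mapsto -\frac{1}{|\calW_n|}\ln[\mathcal{Z}^{-}_{n;\lambdadec}/\mathcal{Z}^{+}_{n;\lambdadec}]$ is concave. A pointwise limit of concave functions is concave, so $\Tilde{\tau}_w(J,\lambdadec)$ is concave in $\lambda$.

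The only genuine subtlety — the ``main obstacle'' — is making the interchange of $\partial_\lambda$ (or $\partial_J$) with the limit $n\to\infty$ rigorous: one does not know a priori that $\Tilde{\tau}_w$ is differentiable, only that it is a pointwise limit of smooth (concave, in the $\lambda$ case) functions. This is handled exactly as in the proof of Proposition~\ref{Uniqueness.for.h.zero}: for monotonicity one only needs the sign of the finite-$n$ derivatives plus monotone convergence of the values, and for concavity one only needs concavity of each finite-$n$ function, which is stable under pointwise limits; no differentiability of the limit is required for either claim. Everything else is a routine repetition of the bounds used to prove \eqref{Eq: Lim_perpendicular_magnatization} to justify that the relevant limits exist and identify them with the infinite-volume states.
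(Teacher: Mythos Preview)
Your proposal is correct and follows essentially the same approach as the paper: differentiate the finite-volume expression \eqref{tau_tilde_w_wo_ising_partition} in $J$ and in $\rmh_\ell$ to get sums of $\langle\sigma_i\sigma_j\rangle^+-\langle\sigma_i\sigma_j\rangle^-$ and $\langle\sigma_i\rangle^+-\langle\sigma_i\rangle^-$, nonnegative by Proposition~\ref{Consequence.of.DVI} and Lemma~\ref{extremality.of.+.and.-.bc} respectively, and differentiate \eqref{d.lambda.of.ln[Z+/Z-].decaying.field} once more in $\lambda$ to obtain a weighted sum of differences of truncated correlations, nonpositive by Proposition~\ref{Consequence.of.DVI}, so each finite-$n$ function is concave and the pointwise limit is concave. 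One small imprecision: the sums run only over sites in $\Lambda_n\subset\H+$, not over the reflected box (the $Q^\pm_{\Delta_n}$ have already cancelled in \eqref{tau_tilde_w_wo_ising_partition}), but this does not affect the argument.
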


\begin{proof}
    Item \textit{(a)} follows from the representation \eqref{tau_tilde_w_wo_ising_partition} after we differentiate the liming term with respect to the appropriate variable. Differentiating the term in the limit with respect to $J$ we get
\begin{equation*}
    -\partial_J\left(\frac{1}{|\mathcal{W}_n|}\ln\left[ \frac{\mathcal{Z}^{-}_{n; \lambda,\bm{\rmh}}}{\mathcal{Z}^{+}_{n; \lambda,\bm{\rmh}}} \right] \right) = |\mathcal{W}_n|^{-1} \sum_{\substack{i\sim J \\ \{ i,j \} \cap \Lambda_n \neq \emptyset}}   \langle \sigma_i\sigma_j \rangle_{n; \bm{\rmh}}^+ -  \langle \sigma_i\sigma_j \rangle_{n; \bm{\rmh}}^-,
\end{equation*}
that is positive by Proposition \ref{Consequence.of.DVI}, a consequence of the duplicate variables inequalities. Differentiating the same term we respect to $\rmh_\ell$ for a fixed $\ell\geq 1$ we have
\begin{equation*}
    -\partial_{\rmh_\ell}\left(\frac{1}{|\mathcal{W}_n|}\ln\left[ \frac{\mathcal{Z}^{-}_{n; \lambda,\bm{\rmh}}}{\mathcal{Z}^{+}_{n; \lambda,\bm{\rmh}}} \right] \right) = |\mathcal{W}_n|^{-1} \sum_{i\in\calW_n^\prime}   \langle \sigma_{i+\ell e_d} \rangle_{n; \bm{\rmh}}^+ -  \langle \sigma_{i+ \ell e_d} \rangle_{n; \bm{\rmh}}^-,
\end{equation*}
that is positive by Lemma \ref{extremality.of.+.and.-.bc}. 
To prove claim \textit{(b)}, we use a similar reasoning. By equation \eqref{d.lambda.of.ln[Z+/Z-].decaying.field}, we have 
\begin{multline*}
    -\partial^2_\lambda \left( \ln\left[ \frac{Z_{n;\lambda}^{-, J}}{Z_{n;\lambda}^{+, J}} \right] \right) = \sum_{i\in \mathcal{W}_n^\prime} \sum_{\ell=1}^{n} \frac{1}{\ell^\delta}\partial_{\lambda}\left(\langle \sigma_{i+\ell e_d} \rangle^{+,J}_{n;\lambdadec} - \langle \sigma_{i+\ell e_d} \rangle^{-,J}_{n;\lambdadec}\right) \\
    = \sum_{i,j\in \mathcal{W}_n^\prime} \sum_{\ell,k=1}^{n} \frac{1}{\ell ^\delta}\frac{1}{k^\delta}\left(\langle \sigma_{i+\ell e_d} \sigma_{j+k e_d}\rangle^{+,J}_{n;\lambdadec} - \langle \sigma_{i+\ell e_d} \rangle^{+,J}_{n;\lambdadec}\langle \sigma_{j+k e_d}\rangle^{+,J}_{n;\lambdadec}   - \langle \sigma_{i+\ell e_d} \rangle^{-,J}_{n;\lambdadec} + \langle \sigma_{i+\ell e_d} \rangle^{-,J}_{n;\lambdadec}\langle \sigma_{j+k e_d}\rangle^{-,J}_{n;\lambdadec} \right),
\end{multline*}
that is smaller or equal to zero by Proposition \ref{Consequence.of.DVI}. So $\Tilde{\tau}_w$ is the limit of concave functions, and therefore it is concave.
\end{proof}

To relate the wall free energy and the phase-transition or uniqueness, we introduce the critical quantity
\begin{equation*}
    \overline{\lambda}_{c}(J)\coloneqq \inf\{\lambda : \Tilde{\tau}_w(J,\lambdadec) = \max_{s\geq 0}\Tilde{\tau}_w(J,\widehat{\bm{s}})\}.
\end{equation*}
Using Proposition \ref{Prop: Monotonicity_and_conv_of_tilde_tau} we can show that the wall free energy reaches a maximum and therefore $\overline{\lambda}_c$ is finite.  
\begin{lemma}\label{Lemma: Comparing_taus}
    For $J>0$, $\overline{\lambda}_c$ is finite and $\overline{\lambda}_c\leq\lambda_c$. Moreover, $\overline{\lambda}_c>0$ whenever  $J>J_c$. 
\end{lemma}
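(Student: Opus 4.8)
Write $g(s):=\partial_s\Tilde{\tau}_w(J,\widehat{\bm s})$. By Proposition~\ref{Prop: Tilde_tau_as_integral},
\[
  g(s)=\sum_{\ell=1}^{\infty}\frac{1}{\ell^{\delta}}\Big(\langle\sigma_{\ell e_d}\rangle^{+}_{J,\widehat{\bm s}}-\langle\sigma_{\ell e_d}\rangle^{-}_{J,\widehat{\bm s}}\Big),
\]
and $\Tilde{\tau}_w(J,\widehat{\bm\lambda})=\int_0^{\lambda}g(s)\,ds$. The plan is to read off all three claims from the two elementary facts that $g\ge0$ (Lemma~\ref{extremality.of.+.and.-.bc}) and that $g$ is non‑increasing in $s$ (it is the first derivative of the concave function in Proposition~\ref{Prop: Monotonicity_and_conv_of_tilde_tau}(b)). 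Thus $\Tilde{\tau}_w(J,\widehat{\bm\lambda})$ is non‑decreasing and concave, and everything reduces to: (i) $g\equiv 0$ beyond some finite value of $s$, and (ii) $g(s_0)>0$ for some $s_0>0$ when $J>J_c$.

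For (i) I would compare $\widehat{\bm s}$ with the wall influence of Chapter~1. On the shifted lattice the field $\widehat{\bm s}$ puts the value $s$ on the boundary layer $\calW$ (which plays exactly the role of wall influence $s$ in the Hamiltonian) and the extra non‑negative field $(s\,\ell^{-\delta})_{\ell\ge2}$ on the inner layers. Each gap $\langle\sigma_{\ell e_d}\rangle^{+}_{J,\widehat{\bm s}}-\langle\sigma_{\ell e_d}\rangle^{-}_{J,\widehat{\bm s}}$ is non‑increasing in every coordinate of the field: differentiating in $h_k$ turns it into $\big(\langle\sigma_{\ell e_d}\sigma_k\rangle^{+}-\langle\sigma_{\ell e_d}\rangle^{+}\langle\sigma_k\rangle^{+}\big)-\big(\langle\sigma_{\ell e_d}\sigma_k\rangle^{-}-\langle\sigma_{\ell e_d}\rangle^{-}\langle\sigma_k\rangle^{-}\big)\le0$ by Proposition~\ref{Consequence.of.DVI} (in finite volume, then passing to the limit). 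Setting the inner field to $0$ therefore only enlarges the gaps, so $0\le\langle\sigma_{\ell e_d}\rangle^{+}_{J,\widehat{\bm s}}-\langle\sigma_{\ell e_d}\rangle^{-}_{J,\widehat{\bm s}}\le\langle\sigma_{\ell e_d}\rangle^{+}_{J,s}-\langle\sigma_{\ell e_d}\rangle^{-}_{J,s}$ for the no‑field model with wall influence $s$. For $s>\lambda_c$ this last model has a unique Gibbs state (Proposition~\ref{Uniqueness.for.h.zero} gives equality of the wall magnetizations, and then Proposition~\ref{basta.comparar.magnetizacoes} propagates it to all local functions), so the right‑hand side vanishes for every $\ell$. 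Hence $g(s)=0$ for $s>\lambda_c$, so $\Tilde{\tau}_w(J,\widehat{\bm\lambda})$ is constant for $\lambda\ge\lambda_c$; in particular $\max_{s\ge0}\Tilde{\tau}_w(J,\widehat{\bm s})$ is attained, $\overline{\lambda}_c$ is well defined, and $\overline{\lambda}_c\le\lambda_c\le J<\infty$ (the last inequality by Theorem~\ref{PF.2}(d)).

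For (ii) I would use continuity of the $\pm$ states under $\ell^1$ perturbations of the field. Since $0\le\partial_{h_k}\langle\sigma_i\rangle^{\pm}_{J,\bm h}=\langle\sigma_i\sigma_k\rangle^{\pm}-\langle\sigma_i\rangle^{\pm}\langle\sigma_k\rangle^{\pm}\le1$ (GKS for the lower bound, $\sigma^2\equiv1$ for the upper bound), uniformly in the volume, and the $\pm$ states are monotone limits of finite‑volume ones (Lemma~\ref{decreasing.states}), one gets $|\langle\sigma_i\rangle^{\pm}_{J,\widehat{\bm s}}-\langle\sigma_i\rangle^{\pm}_{J,0}|\le s\sum_{\ell\ge1}\ell^{-\delta}\to0$ as $s\downarrow0$ (here $\delta>1$ enters). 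Consequently $g(s)\to g(0)=\sum_{\ell}\ell^{-\delta}\big(\langle\sigma_{\ell e_d}\rangle^{+}_{J,0}-\langle\sigma_{\ell e_d}\rangle^{-}_{J,0}\big)$. When $J>J_c$ one has $\lambda_c(J)\ge\tau(J)/2>0$, so wall influence $0<\lambda_c$ lies in the phase‑transition regime of Chapter~1; hence the $\pm$ states of the no‑field model differ, and by Proposition~\ref{basta.comparar.magnetizacoes} every magnetization gap is strictly positive, so $g(0)\ge\langle\sigma_{e_d}\rangle^{+}_{J,0}-\langle\sigma_{e_d}\rangle^{-}_{J,0}>0$. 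Thus $g(s_0)>0$ for some small $s_0>0$; since $g$ is non‑increasing, $\Tilde{\tau}_w(J,\widehat{\bm s})=\int_0^{s}g$ is strictly increasing on $[0,s_0]$, so it stays $<\max_{u\ge0}\Tilde{\tau}_w(J,\widehat{\bm u})$ for all $s<s_0$, forcing $\overline{\lambda}_c\ge s_0>0$.

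The main obstacle is the monotonicity‑in‑the‑field step of part~(i): showing that switching on an extra non‑negative field cannot enlarge the $(+,-)$ magnetization gap. This is what couples the present model to the wall‑influence model of Chapter~1 and hence delivers both $\overline{\lambda}_c<\infty$ and $\overline{\lambda}_c\le\lambda_c$. It is a second‑order (Lebowitz/GHS‑type) statement rather than a positivity‑of‑correlations one, so it must be squeezed out of the duplicated‑variables inequalities of Section~2 and then transported to infinite volume, together with the (routine but genuinely needed) $\ell^1$‑continuity of the limiting states used in part~(ii).
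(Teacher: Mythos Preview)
Your argument for part~(i) is correct and coincides with the paper's: both use that the magnetization gap is non-increasing in every coordinate of the field (Proposition~\ref{Consequence.of.DVI}) to compare $\widehat{\bm s}$ with the wall-only field, and then invoke uniqueness of the no-field model for $s>\lambda_c$.

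Your argument for part~(ii), however, has a genuine gap. The claimed Lipschitz bound
\[
\bigl|\langle\sigma_i\rangle^{\pm}_{J,\widehat{\bm s}}-\langle\sigma_i\rangle^{\pm}_{J,0}\bigr|\le s\sum_{\ell\ge1}\ell^{-\delta}
\]
does not follow from $0\le\partial_{h_k}\langle\sigma_i\rangle^{\pm}\le1$. In finite volume,
\[
\partial_s\langle\sigma_i\rangle^{\pm}_{\Lambda_n;\widehat{\bm s}}=\sum_{k\in\Lambda_n}\frac{1}{k_d^{\delta}}\bigl(\langle\sigma_i\sigma_k\rangle^{\pm}-\langle\sigma_i\rangle^{\pm}\langle\sigma_k\rangle^{\pm}\bigr),
\]
and bounding each covariance by $1$ gives $\sum_{k\in\Lambda_n}k_d^{-\delta}=|\calW_n|\sum_{\ell=1}^n\ell^{-\delta}$, which diverges as $n\to\infty$: every layer contributes $|\calW_n|$ sites, not one. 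To salvage continuity you would need summability of the covariances (finite susceptibility of the extremal states), which is not available from the tools developed in Chapter~1.

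The paper bypasses this entirely. Instead of continuity, it uses the coordinate-wise monotonicity of $\Tilde{\tau}_w(J,\bm\rmh)$ in the field (Proposition~\ref{Prop: Monotonicity_and_conv_of_tilde_tau}(a)): since $\lambdadec\ge\bm{\lambda}_0$ pointwise (where $\bm{\lambda}_0$ is the wall-only field), one gets $\Tilde{\tau}_w(J,\lambdadec)\ge\Tilde{\tau}_w(J,\bm{\lambda}_0)=\tau_w(J,\lambda)$. Evaluating at $\lambda=\lambda_c$ and using $\tau_w(J,\lambda_c)>0$ (which holds because $\lambda_c\ge\tau(J)/2>0$ when $J>J_c$) gives $\max_s\Tilde{\tau}_w(J,\widehat{\bm s})>0=\Tilde{\tau}_w(J,\widehat{\bm 0})$, hence $\overline{\lambda}_c>0$ by continuity of the concave function $\lambda\mapsto\Tilde{\tau}_w(J,\lambdadec)$. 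This is a one-line comparison of free energies rather than a pointwise estimate on magnetizations, and it requires nothing beyond what is already proved.
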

\begin{proof}
  To prove that $\overline{\lambda}_c\leq\lambda_c$, it is enough to show that for $J>0$ and $\lambda\geq \lambda_c$, 
    \begin{equation*}
    \Tilde{\tau}_w(J,\lambdadec) = \Tilde{\tau}_w(J,\widehat{\bm{\lambda_c}}).    
    \end{equation*}
     Indeed, for all $i,j\in \H+$, $n\in\mathbb{N}$ and positive external field $\bm{h}$, $\langle \sigma_i\rangle_{n;\bm{h}}^+ - \langle \sigma_i\rangle_{n;\bm{h}}^-$ is decreasing in $h_j$, since 
    \begin{equation}\label{Eq: Diff_mag_is_decreasing}
       \partial_{h_j}(\langle \sigma_i\rangle_{n;\bm{h}}^+ - \langle \sigma_i\rangle_{n;\bm{h}}^-) = \langle \sigma_i\sigma_j\rangle_{n;\bm{h}}^+  -\langle \sigma_i\rangle_{n;\bm{h}}^+\langle \sigma_j\rangle_{n;\bm{h}}^+ - \langle \sigma_i\sigma_j\rangle_{n;\bm{h}}^- + \langle \sigma_i\rangle_{n;\bm{h}}^-\langle \sigma_j\rangle_{n;\bm{h}}^- \leq 0
    \end{equation}
    by Proposition \ref{Consequence.of.DVI}. In particular, for any $\lambda\geq 0$, $\langle \sigma_i\rangle_{n;\lambdadec}^+ - \langle \sigma_i\rangle_{n;\lambdadec}^- \leq \langle \sigma_i\rangle_{n;\lambda}^+ - \langle \sigma_i\rangle_{n;\lambda}^-$, and the same inequality holds for the limit states. As $\langle \sigma_i\rangle_{\lambda}^+ - \langle \sigma_i\rangle_{\lambda}^- = 0$ for all $\lambda > \lambda_c$ and $i\in\H+$, using Proposition \ref{Prop: Tilde_tau_as_integral} we conclude that 
    \begin{equation*}
        \Tilde{\tau}_w(J,\lambdadec) = \int_0^\lambda \sum_{\ell=1}^\infty \frac{1}{\ell^{\delta}}\left( \langle \sigma_{\ell e_d} \rangle^+_{J, \widehat{\bm{s}}} - \langle \sigma_{\ell e_d} \rangle^-_{J, \widehat{\bm{s}}} \right) ds = \int_0^{\lambda_c} \sum_{\ell=1}^\infty \frac{1}{\ell^{\delta}}\left( \langle \sigma_{\ell e_d} \rangle^+_{J, \widehat{\bm{s}}} - \langle \sigma_{\ell e_d} \rangle^-_{J, \widehat{\bm{s}}} \right) ds = \Tilde{\tau}_w(J,\widehat{\bm{\lambda_c}}).
    \end{equation*}
  By the monotonicity on the external field, given $\lambda\geq 0$ and taking $\bm{\lambda}_0 \coloneqq \{\lambda \mathbbm{1}_{\{i\in\calW\}}\}$, the external field that is zero outside of $\calW$, we have
     \begin{equation}\label{Eq: tau_smaller_than_tilde_tau}
        \tau_w(J,\lambda) = \Tilde{\tau}_w(J,\bm{\lambda}_0)\leq \Tilde{\tau}_w(J,\lambdadec)
    \end{equation}   
     For $J> J_c$, it was shown in \cite{Lebowitz_Pfister_81} that $\tau(J)>0$. The lower bound \eqref{Eq: Lower_bound_lambda_c} yields $\lambda_c>0$, hence $\tau_w(J,\lambda_c)>0$. Then, inequality \eqref{Eq: tau_smaller_than_tilde_tau} implies that 
\begin{equation*}
    0<\tau_w(J,\lambda_c)\leq \Tilde{\tau}_w(J,\widehat{\bm{\lambda_c}}),
\end{equation*}
and therefore $\overline{\lambda}_c>0$ whenever $J>J_c$.
\end{proof}

We end this section by proving that $\overline{\lambda}_c$ is the critical value for phase transition. 
\begin{proposition}
    For any $0\leq \lambda<\overline{\lambda}_c$, $\langle \cdot \rangle_{J,\lambdadec}^+ \neq  \langle \cdot \rangle_{J,\lambdadec}^-$. And for $\lambda>\overline{\lambda}_c$,  $\langle \cdot \rangle_{J,\lambdadec}^+ =  \langle \cdot \rangle_{J,\lambdadec}^-$.
\end{proposition}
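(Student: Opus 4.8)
The plan is to reproduce, at the level of the new wall free energy $\Tilde{\tau}_w$, the two‑sided argument that established the $\bm{h}\equiv 0$ phase diagram (Proposition~\ref{Uniqueness.for.h.zero} and its non‑uniqueness counterpart), now fed by Proposition~\ref{Prop: Tilde_tau_as_integral}, Proposition~\ref{Prop: Monotonicity_and_conv_of_tilde_tau} and the correlation inequality \eqref{Eq: Diff_mag_is_decreasing}. I treat the two regimes separately.

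For the non‑uniqueness regime $0\le\lambda<\overline{\lambda}_c$ I would argue by contradiction. Put $g(t):=\sum_{\ell\ge 1}\ell^{-\delta}\big(\langle\sigma_{\ell e_d}\rangle^+_{J,\widehat{\bm{t}}}-\langle\sigma_{\ell e_d}\rangle^-_{J,\widehat{\bm{t}}}\big)$, so that $\Tilde{\tau}_w(J,\widehat{\bm{s}})=\int_0^{s}g(t)\,dt$ by Proposition~\ref{Prop: Tilde_tau_as_integral}. Each term of $g$ is $\ge 0$ by GKS/FKG (Lemma~\ref{extremality.of.+.and.-.bc}) and non‑increasing in $t$ by \eqref{Eq: Diff_mag_is_decreasing} pushed to the infinite volume, so $g\ge 0$ is non‑increasing. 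If $\langle\cdot\rangle^+_{J,\lambdadec}=\langle\cdot\rangle^-_{J,\lambdadec}$ then $g(\lambda)=0$, hence $g\equiv 0$ on $[\lambda,\infty)$, so $\Tilde{\tau}_w(J,\widehat{\bm{s}})$ is constant for $s\ge\lambda$ and, being non‑decreasing, equals $\max_{s\ge 0}\Tilde{\tau}_w(J,\widehat{\bm{s}})$; thus $\lambda\ge\overline{\lambda}_c$, a contradiction. Hence $\langle\cdot\rangle^+_{J,\lambdadec}\ne\langle\cdot\rangle^-_{J,\lambdadec}$.

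For the uniqueness regime $\lambda>\overline{\lambda}_c$ I would first observe that, by the definition of $\overline{\lambda}_c$, the monotonicity of $\Tilde{\tau}_w$ in $\lambda$, and the fact that the maximum is attained (Lemma~\ref{Lemma: Comparing_taus}), $\Tilde{\tau}_w(J,\lambdadec)$ is constant on $(\overline{\lambda}_c,\infty)$, hence differentiable at $\lambda$ with $\partial_\lambda\Tilde{\tau}_w(J,\lambdadec)=0$. Since $\Tilde{\tau}_w(J,\lambdadec)$ is the pointwise limit of the functions $f_n(\lambda):=-|\calW_n|^{-1}\ln[\mathcal{Z}^-_{n;\lambdadec}/\mathcal{Z}^+_{n;\lambdadec}]$ by \eqref{tau_tilde_w_wo_ising_partition}, each concave in $\lambda$ (second derivative $\le 0$, exactly as in the proof of Proposition~\ref{Prop: Monotonicity_and_conv_of_tilde_tau}(b)), the convergence theorem for concave functions \cite[Theorem~B.12]{FV-Book}, together with \eqref{d.lambda.of.ln[Z+/Z-].decaying.field} and the limit identity proved exactly as \eqref{Eq: Lim_perpendicular_magnatization}, gives
\[
0=\partial_\lambda\Tilde{\tau}_w(J,\lambdadec)=\lim_{n\to\infty}\frac{1}{|\calW_n|}\sum_{i\in\calW_n^\prime}\sum_{\ell=1}^{n}\frac{1}{\ell^\delta}\big(\langle\sigma_{i+\ell e_d}\rangle^+_{n;\lambdadec}-\langle\sigma_{i+\ell e_d}\rangle^-_{n;\lambdadec}\big)=\sum_{\ell=1}^{\infty}\frac{1}{\ell^\delta}\big(\langle\sigma_{\ell e_d}\rangle^+_{J,\lambdadec}-\langle\sigma_{\ell e_d}\rangle^-_{J,\lambdadec}\big).
\]
All summands being non‑negative, $\langle\sigma_{\ell e_d}\rangle^+_{J,\lambdadec}=\langle\sigma_{\ell e_d}\rangle^-_{J,\lambdadec}$ for every $\ell\ge 1$, and by translation invariance parallel to the wall (Proposition~\ref{properties.from.fkg}) this forces $\langle\sigma_i\rangle^+_{J,\lambdadec}=\langle\sigma_i\rangle^-_{J,\lambdadec}$ for all $i\in\H+$. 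To conclude $\langle\cdot\rangle^+_{J,\lambdadec}=\langle\cdot\rangle^-_{J,\lambdadec}$ I would then invoke that $\mu^-_{\lambdadec}$ is stochastically dominated by $\mu^+_{\lambdadec}$ (Lemma~\ref{extremality.of.+.and.-.bc} in the limit), so a monotone coupling of two measures with identical single‑site marginals is supported on the diagonal — equivalently, one may run the duplicated‑variable argument of Proposition~\ref{basta.comparar.magnetizacoes} on the connected graph $\H+$.

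The only step requiring genuine care — and the one I expect to be the main obstacle — is this last upgrade from "all single‑site magnetizations coincide" to "the Gibbs states coincide": everything preceding it is real‑variable bookkeeping around Propositions~\ref{Prop: Tilde_tau_as_integral} and \ref{Prop: Monotonicity_and_conv_of_tilde_tau} (well‑posedness of $\partial_\lambda\Tilde{\tau}_w$, passage of the GKS/FKG monotonicities to infinite volume, and the interchange of $\lim_n$ with $\sum_\ell$, all of which already appear in those proofs), whereas the final implication is where ferromagneticity is essential and must be argued through stochastic domination rather than through convexity of $\Tilde{\tau}_w$.
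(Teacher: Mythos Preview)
Your proposal is correct and follows essentially the same route as the paper's proof: contradiction via \eqref{Eq: Diff_mag_is_decreasing} and Proposition~\ref{Prop: Tilde_tau_as_integral} for the non-uniqueness regime, and the convex-function derivative theorem applied to \eqref{tau_tilde_w_wo_ising_partition} together with \eqref{d.lambda.of.ln[Z+/Z-].decaying.field} for the uniqueness regime. The step you single out as the main obstacle is handled in the paper exactly as you anticipate, by a direct appeal to Proposition~\ref{basta.comparar.magnetizacoes} (whose proof needs only connectedness of the underlying graph, so it carries over to $\H+$); your concern there is therefore well placed but already resolved by that proposition.
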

\begin{proof}
    Fixed $0\leq \lambda\leq \overline{\lambda}_c$, lets assume by contradiction that $\langle \cdot \rangle_{J,\lambdadec}^+ =  \langle \cdot \rangle_{J,\lambdadec}^-$. As we argued before, inequality \eqref{Eq: Diff_mag_is_decreasing} shows that the difference $\langle \sigma_i \rangle_{n;J,\lambdadec}^+ - \langle \sigma_i \rangle_{n;J,\lambdadec}^-$ is decreasing in $\lambda$ for all $i\in\H+$. Hence, for every $\lambda^\prime>\lambda$, 
    \begin{equation*}
        \langle \sigma_i \rangle_{n;J,\widehat{\bm{\lambda^\prime}}}^+ - \langle \sigma_i \rangle_{n;J,\widehat{\bm{\lambda^\prime}}}^- \leq \langle \sigma_i \rangle_{n;J,\lambdadec}^+ - \langle \sigma_i \rangle_{n;J,\lambdadec}^- = 0.
    \end{equation*}
By Proposition \ref{Prop: Tilde_tau_as_integral}, this implies that 
\begin{equation*}
     \Tilde{\tau}_w(J,\lambdadec) = \int_0^\lambda \sum_{\ell=1}^\infty \frac{1}{\ell^{\delta}}\left( \langle \sigma_{\ell e_d} \rangle^+_{J, \widehat{\bm{s}}} - \langle \sigma_{\ell e_d} \rangle^-_{J, \widehat{\bm{s}}} \right) ds = \int_0^{\overline{\lambda}_c} \sum_{\ell=1}^\infty \frac{1}{\ell^{\delta}}\left( \langle \sigma_{\ell e_d} \rangle^+_{J, \widehat{\bm{s}}} - \langle \sigma_{\ell e_d} \rangle^-_{J, \widehat{\bm{s}}} \right) ds = \max_{s\geq 0}\Tilde{\tau}_w(J,\widehat{\bm{s}}).
\end{equation*}
In the last equation, we are using that the maximum is reached at $\overline{\lambda}_c$, since all concave functions are continuous. This shows that $\lambda\geq \overline{\lambda}_c$. 

Since $\Tilde{\tau}_w$ is non-decreasing in $\lambda$, for all $\lambda>\overline{\lambda}_c$, $\Tilde{\tau}_w(J,\lambdadec) = \max_{s\geq 0}\Tilde{\tau}_w(J,\widehat{\bm{s}})$. Moreover, it is differentiable in $\lambda$ and is the point-wise limit of the sequence $|\mathcal{W}_n|^{-1}\left(\ln{\mathcal{Z}^{+}_{n; \lambdadec}} - \ln {\mathcal{Z}^{-}_{n; \lambdadec}}\right)$, which is concave. We can then use a known theorem for convex functions, see for example \cite[Theorem B.12 ]{FV-Book}, to conclude that
\begin{align*}
    0 = \partial_\lambda\Tilde{\tau}_w(J,\lambdadec) = \lim_{n\to\infty} \frac{1}{\left|\calW_n\right|}\partial_\lambda\left(\ln{\mathcal{Z}^{+}_{n; \bm{\rmh}}} - \ln {\mathcal{Z}^{-}_{n; \bm{\rmh}}}\right) &= \lim_{n\to\infty} \frac{1}{|\calW_n|}\sum_{i\in \mathcal{W}_n^\prime} \sum_{\ell=1}^{n} \frac{1}{\ell^\delta}\left(\langle \sigma_{i+\ell e_d} \rangle^{+,J}_{n;\lambdadec} - \langle \sigma_{i+\ell e_d} \rangle^{-,J}_{n;\lambdadec}\right) \\
    &= \sum_{\ell=1}^\infty \frac{1}{\ell^{\delta}}\left( \langle \sigma_{\ell e_d} \rangle^+_{J, \lambdadec} - \langle \sigma_{\ell e_d} \rangle^-_{J, \lambdadec} \right).
\end{align*}
By Lemma \ref{extremality.of.+.and.-.bc}, all the terms in the sum are non-negative, therefore $\langle \sigma_{\ell e_d} \rangle^+_{J, \lambdadec}= \langle \sigma_{\ell e_d} \rangle^-_{J, \lambdadec}$ for all $\ell\geq 1$. By translation invariance, we conclude that $\langle \sigma_{i} \rangle^+_{J, \lambdadec} = \langle \sigma_{i} \rangle^-_{J, \lambdadec}$ for all $i\in\H+$, what show uniqueness by Proposition \ref{basta.comparar.magnetizacoes}.
\end{proof}

\subsection{Uniqueness for  \texorpdfstring{$\delta<1$}{d<1}}
In this section we will prove uniqueness for the semi-infinite Ising model with external field $\bm{\lambda}$, for any inverse temperature $\beta>0$ and ferromagnetic interaction. To do this, we will first prove uniqueness for Ising model in $\Z^d$ with external field $\bm{h}^*$ given by \eqref{particular.external.field} and interaction $\bm{J}_{\lambda}=(J_{ij})_{i,j\in\Z^d}$ given by
\begin{equation}\label{Eq: Definition_of_J_Lambda}
    (\bm{J}_{\lambda})_{i,j} = \begin{cases}
                                            \frac{\lambda}{2} &\text{ if }i\in L_0 \text{ and }  j\in L_{-1}\cup L_{1},\\
                                            J &\text{ otherwise.}
                                        \end{cases}
\end{equation}
for $|i-j|=1$. As we are always considering short-range interactions, $J_{ij}=0$ whenever $|i-j|\neq 1$. We will then show how uniqueness for this model implies uniqueness for our model of interest. 


The proof of uniqueness given by \cite{Bissacot_Cass_Cio_Pres_15} together with \cite{Cioletti_Vila_2016} only considers constant interactions. The extension to the interaction $\bm{J_\lambda}$ is a direct consequence of the monotonicity properties of the Random cluster representation, proved first by \cite{Biskup_Borgs_Chayes_Kotecky_00} for constant external fields and extended by \cite{Cioletti_Vila_2016} to more general models. 

\subsubsection{Random Cluster Representation and Edward-Sokal coupling}
    
In this section, following \cite{Biskup_Borgs_Chayes_Kotecky_00} and \cite{Cioletti_Vila_2016}, we define the Random Cluster model (RC), then we introduce the Edward-Sokal (ES) coupling between the RC model and the Ising model. Next present a result showing that uniquiness for the ES model implies uniquiness for the Ising model. We conclude the section introducing some monotonicity properties of the RC model and proving that there is only one RC measure.

In \cite{Biskup_Borgs_Chayes_Kotecky_00} and \cite{Cioletti_Vila_2016}, they consider the Potts model and the General Random Cluster model, so their setting is more general.  We will restrict the results presented here to a particular case of interest.  

\subsubsection*{The Random Cluster model}
    Given $\E = \{ \{i,j\}\subset\Z^d : \ |i-j|=1\}$, $(\Z^d,\E)$ defines a graph. The configuration space of the RC model is $\{\}^\E$. A general configuration will be denoted $\omega$ and called an \textit{edge configuration}. An edge $e\in\E$ is \textit{open} (in a configuration $\omega$) if $\omega_e=1$, and it is \textit{closed} otherwise. A path $(e_0,e_1,\dots,e_n)$ is an \textit{open path} if $\omega_{e_k} = 1$ for all $k=0,\dots,n$. Vertices $i,j\in\Z^d$ are \textit{connected in} $\omega$ if there is an open path $(e_0,\dots,e_n)$ connecting $i$ and $j$, that is, $i\in e_0$ and $j\in e_n$. We denote $x\longleftrightarrow y$ when $x$ and $y$ are connected in $\omega$. The open connected component of $x\in\Z^d$ is $C_x(\omega) = \{\{i,j\}\in\E : x \longleftrightarrow i\}\cup\{x\}$. An arbitrary connected component of $\omega$ is denoted $C(\omega)$. Moreover, for any $E\subset \E$ $\V(E) = \{x\in\Z^d :  x\in e \textit{ for some }e\in E\}$ is the set of vertices touched by $E$. 

    Given $\Lambda\Subset \Z^d$, consider $E(\Lambda) = \{\{i,j\}\in\E : x\in\Lambda \}$ the edges with at least one endpoint in $\Lambda$. Consider also $E_0(\Lambda) = \{e\in \E : e\subset \Lambda\}$, the edges with both endpoints in $\Lambda$. For any $G=(V,E)$ finite sub-graph of $(\Z^d,\E)$, the probability measure of the Random Cluster model in $E\Subset \E$ with ferromagnetic interaction $J = (J_{ij})_{i,j}$, external field $\bm{h}=(h)_{x\in\Z^d}$ and boundary condition $\omega_{E^c}\in\{0,1\}^{E^c}$ is 
    \begin{equation}
    \phi_{E;\bm{J}, \bm{h}}(\omega_E|\omega_{E^c}) 
    = \frac{1}{Z^{RC, \omega_{E^c}}_{E; \bm{J},\bm{H}}}B_{\bm{J}}(\omega_E)\prod_{\substack{C(\omega): C(\omega)\cap \V(E) \neq \emptyset}}(1+ e^{-2\beta\sum_{i\in C(\omega)}h_i}) ,     
    \end{equation}
    where the product is taken over connected open clusters only, with the convention that $e^{-\infty}=0$. The term in the denominator is the usual partition function
    \begin{equation*}
        Z^{RC, \omega_{E^c}}_{E; \bm{J},\bm{h}} = \sum_{\omega_E\in\{0,1\}^E} B_{\bm{J}}(\omega_E)\prod_{\substack{C(\omega): C(\omega)\cap \V(E) \neq \emptyset}}(1+ e^{-2\beta\sum_{i\in C(\omega)}h_i}).
    \end{equation*}
and $B_{\bm{J}}$ is the Bernoulli like factor
    \begin{equation*}
        B_{\bm{J}}(\omega) \coloneqq \prod_{e:\omega_e=1}(e^{2\beta J_e} -1).
    \end{equation*}
This is not a Bernoulli factor since the weights can be bigger than one. Moreover, the interaction of an edge $e=\{i,j\}$ is, as expected, $J_e\coloneqq J_{ij}$. For $i=0,1$,  let $\omega_{E}^{(i)}$ be the configuration satisfying $\omega_e^{(i)}=i$ for all $e\in E^c$. Two particularly important measures are the \textit{RC model with free boundary condition} in $\Lambda\Subset\Z^d$, given by
\begin{equation*}
    \phi_{\Lambda; \bm{J}, \bm{h}}^{0} \coloneqq \phi_{E_0(\Lambda); \beta;\bm{J}, \bm{h}}(\omega_{E_0(\Lambda)}|\omega_{E_0(\Lambda)}^{(0)}),
\end{equation*}
and the \textit{RC model with wired boundary condition} in $\Lambda\Subset\Z^d$, given by
\begin{equation*}
     \phi_{\Lambda; \bm{J}, \bm{h}}^{1} \coloneqq \phi_{E_0(\Lambda); \beta;\bm{J}, \bm{h}}(\omega_{E_0(\Lambda)}|\omega_{E_0(\Lambda)}^{(1)}).
\end{equation*}
The RC model is related to the Ising model through the Edwards-Sokal coupling, introduced next.
\subsubsection{The Edwards-Sokal model}

Given $\Lambda\Subset\Z^d$ and $E\Subset\E$, two configurations $\sigma\in\Omega$, $\omega\in\{0,1\}^\E$ and weights
\begin{equation*}
    \mathcal{W}(\sigma_\Lambda, \omega_E | \sigma_{\Lambda^c}, \omega_{E^c}) = \prod_{\substack{\{i,j\}\in E: \\ \omega_{i,j}=1}} \delta_{\sigma_i,\sigma_j}(e^{2\beta J_{ij}}-1)\prod_{ i\in \Lambda}e^{\beta h_i\sigma_i},
\end{equation*}
the Edwards-Sokal (ES) measure in $\Lambda\Subset\Z^d$ and $E\Subset \E$ is given by
\begin{equation*}
    \phi^{ES}_{\Lambda, E; \bm{J}, \bm{h}}(\sigma_\Lambda, \omega_E | \sigma_{\Lambda^c}, \omega_{E^c}) \coloneqq \frac{\mathcal{W}(\sigma_\Lambda, \omega_E | \sigma_{\Lambda^c}, \omega_{E^c})}{Z^{ES}_{\Lambda, E; \bm{J}, \bm{h}}(\sigma_{\Lambda^c}, \omega_{E^c})},
\end{equation*}
with 
\begin{equation*}
    Z^{ES}_{\Lambda, E; \bm{J}, \bm{h}}(\sigma_{\Lambda^c}, \omega_{E^c}) \coloneqq \sum_{\substack{\eta_\Lambda\in\Omega_\Lambda \\ \xi_E\in \{0,1\}^E}}\mathcal{W}(\eta_\Lambda, \xi_E | \sigma_{\Lambda^c}, \omega_{E^c}).
\end{equation*}
If $Z^{ES}_{\Lambda, E; \bm{J}, \bm{h}}(\sigma_{\Lambda^c}, \omega_{E^c}) = 0$, we simply take $\phi^{ES}_{\Lambda, E; \bm{J}, \bm{h}}(\cdot | \sigma_{\Lambda^c}, \omega_{E^c}) = 0$. To simplify the notation, as we are considering arbitrary interactions and external fields, we will omit them from the notation. We also highlight two particularly important ES-measures, the \textit{ES-measure with free boundary condition} in $\Lambda\Subset \Z^d$, given by
\begin{equation*}
    \phi^{ES, 0}_{\Lambda; \beta, \bm{J}, \bm{h}}(\cdot) \coloneqq  \phi^{ES}_{\Lambda, E_0(\Lambda)}(\cdot | \sigma_{\Lambda^c}, \omega^{(0)}_{E_0(\Lambda)}),
\end{equation*}
 and the \textit{ES-measure with wired boundary condition} in $\Lambda\Subset \Z^d$, given by
 \begin{equation*}
     \phi^{ES, 1}_{\Lambda; \beta, \bm{J}, \bm{h}}(\cdot) \coloneqq  \phi^{ES}_{\Lambda, E(\Lambda)}(\cdot | \sigma_{\Lambda^c}^+, \omega^{(1)}_{E(\Lambda)}).
 \end{equation*}
 
\begin{remark}
    The measure $\phi^{ES, 0}_{\Lambda; \beta, \bm{J}, \bm{h}}(\cdot)$ does not depend on the choice of $\sigma_{\Lambda^c}$. Moreover, the measures $\phi^{ES}_{\Lambda, E(\Lambda)}(\cdot|\sigma_{\Lambda^c}, \omega_{E^{c}(\Lambda)})$ do not depend on the choice of configuration $\omega_{E^{c}(\Lambda)}$. We choose to keep it  in the notation since, later on, we will want to see $\phi^{ES}_{\Lambda, E(\Lambda)}$ as an specification. 
\end{remark}

\begin{remark}
    To simplify the notation, we will omit the dependency of the RC and ES measures on $\beta$, $\bm{J}$ and $\bm{h}$. They will appear again only on results concerning a specific interaction or external field. 
\end{remark}
The following two lemmas guarantee that the ES model is indeed a coupling between the Ising and the RC model. 
\begin{lemma}[Spin Marginals]
    Given $\Lambda \Subset \Z^d$ and $f:\Omega \longrightarrow \mathbb{R}$ with $\supp(f)\subset \Lambda$, 
    \begin{equation*}
        \phi^{ES}_{\Lambda, E(\Lambda)}(f | \sigma_{\Lambda^c}, \omega_{E(\Lambda)^c}) = \mu_{\Lambda; \bm{J}, \bm{h}}^{\sigma}(f).  
    \end{equation*}
\end{lemma}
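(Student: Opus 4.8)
The plan is to integrate out the edge variables from the Edwards--Sokal weights and check that the resulting spin marginal is exactly the Ising Gibbs weight. Fix a spin configuration $\sigma_\Lambda\in\Omega_\Lambda$ compatible with the boundary condition $\sigma_{\Lambda^c}$, and write $E=E(\Lambda)$. Since $\mathcal{W}(\sigma_\Lambda,\omega_E\mid\sigma_{\Lambda^c},\omega_{E^c})$ factorizes over the edges of $E$ --- each open edge $\{i,j\}$ contributing $\delta_{\sigma_i,\sigma_j}(e^{2\beta J_{ij}}-1)$, each closed edge contributing $1$, and the field factor $\prod_{i\in\Lambda}e^{\beta h_i\sigma_i}$ not depending on $\omega$ --- the sum over $\omega_E\in\{0,1\}^E$ factorizes as
\begin{equation*}
\sum_{\omega_E\in\{0,1\}^E}\mathcal{W}(\sigma_\Lambda,\omega_E\mid\sigma_{\Lambda^c},\omega_{E^c}) = \Big(\prod_{\{i,j\}\in E}\big(1+\delta_{\sigma_i,\sigma_j}(e^{2\beta J_{ij}}-1)\big)\Big)\prod_{i\in\Lambda}e^{\beta h_i\sigma_i}.
\end{equation*}
Note that the boundary edge configuration $\omega_{E^c}$ never enters $\mathcal{W}$, so it plays no role in the marginal.

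Next I would use the elementary identity $1+\delta_{\sigma_i,\sigma_j}(e^{2\beta J_{ij}}-1)=e^{\beta J_{ij}(1+\sigma_i\sigma_j)}$, which holds because $1+\sigma_i\sigma_j$ equals $2$ when $\sigma_i=\sigma_j$ and $0$ otherwise. Substituting, the edge sum becomes
\begin{equation*}
\sum_{\omega_E}\mathcal{W}(\sigma_\Lambda,\omega_E\mid\cdot) = e^{\beta\sum_{\{i,j\}\in E}J_{ij}}\exp\Big\{\beta\sum_{\{i,j\}\in E}J_{ij}\sigma_i\sigma_j+\beta\sum_{i\in\Lambda}h_i\sigma_i\Big\} = C\,e^{-\beta\mathcal{H}^{\bm{J}}_{\Lambda;\bm{h}}(\sigma)},
\end{equation*}
where $C=e^{\beta\sum_{\{i,j\}\in E}J_{ij}}$ is a strictly positive constant that does not depend on the free spins $\sigma_\Lambda$. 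Here one uses that $E(\Lambda)$ is precisely the set of edges $\{i,j\}$ with $\{i,j\}\cap\Lambda\neq\emptyset$, i.e. the index set of $\mathcal{H}^{\bm{J}}_{\Lambda;\bm{h}}$. Since every summand is positive, $Z^{ES}_{\Lambda,E(\Lambda)}>0$, so the degenerate case of the definition does not occur.

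It then remains to normalize. Summing the displayed identity over $\sigma_\Lambda\in\Omega_\Lambda$ gives $Z^{ES}_{\Lambda,E(\Lambda)}=C\sum_{\eta_\Lambda\in\Omega_\Lambda}e^{-\beta\mathcal{H}^{\bm{J}}_{\Lambda;\bm{h}}(\eta)}$, so $C$ cancels and the spin marginal of $\phi^{ES}_{\Lambda,E(\Lambda)}(\cdot\mid\sigma_{\Lambda^c},\omega_{E(\Lambda)^c})$ assigns to $\sigma_\Lambda$ the weight $e^{-\beta\mathcal{H}^{\bm{J}}_{\Lambda;\bm{h}}(\sigma)}/\sum_{\eta_\Lambda}e^{-\beta\mathcal{H}^{\bm{J}}_{\Lambda;\bm{h}}(\eta)}$, which is exactly $\mu^{\sigma}_{\Lambda;\bm{J},\bm{h}}(\sigma_\Lambda)$. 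Finally, for any $f$ with $\supp(f)\subset\Lambda$ --- hence depending only on $\sigma_\Lambda$ --- one first sums over $\omega_E$ and then over $\sigma_\Lambda$ against $f$, obtaining $\phi^{ES}_{\Lambda,E(\Lambda)}(f\mid\sigma_{\Lambda^c},\omega_{E(\Lambda)^c})=\sum_{\sigma_\Lambda}f(\sigma)\,\mu^{\sigma}_{\Lambda;\bm{J},\bm{h}}(\sigma_\Lambda)=\mu^{\sigma}_{\Lambda;\bm{J},\bm{h}}(f)$.

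There is no serious obstacle here; this is the classical Edwards--Sokal computation specialized to the Ising case. The only points requiring a little care are the case-check identity $1+\delta_{\sigma_i,\sigma_j}(e^{2\beta J_{ij}}-1)=e^{\beta J_{ij}(1+\sigma_i\sigma_j)}$ and the verification that the prefactor $C$ is genuinely independent of the summation variable $\sigma_\Lambda$ (so that it drops out upon normalization); both are immediate once the edge sum has been factorized.
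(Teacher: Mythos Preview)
Your proof is correct and follows essentially the same approach as the paper: both rely on the identity $1+\delta_{\sigma_i,\sigma_j}(e^{2\beta J_{ij}}-1)=e^{2\beta J_{ij}\delta_{\sigma_i,\sigma_j}}=e^{\beta J_{ij}(1+\sigma_i\sigma_j)}$ to relate the Edwards--Sokal edge weights to the Ising Boltzmann factor, then normalize by taking $f\equiv 1$. The only cosmetic difference is direction: the paper starts from $e^{-\beta\mathcal{H}}$ and expands the product over edges into a sum over $\omega_E$, whereas you start from $\mathcal{W}$ and sum out $\omega_E$ to recover $Ce^{-\beta\mathcal{H}}$; these are the same computation read backwards.
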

\begin{proof}
    We first write the Boltzmann factor as 
    \begin{align*}
        e^{-\beta \mathcal{H}_{\Lambda; \bm{h}}^{\bm{J}}(\sigma) } &= \prod_{\{i,j\}\in E(\Lambda)}e^{\beta J_{i,j}\sigma_i\sigma_j}\prod_{i\in\Lambda}e^{\beta h_i\sigma_i}\\
        &=\prod_{\{i,j\}\in E(\Lambda)}e^{\beta J_{i,j}(2\delta_{\sigma_i,\sigma_j} - 1)}\prod_{i\in\Lambda}e^{\beta h_i\sigma_i}\\
        &= \prod_{\{i,j\}\in E(\Lambda)} e^{-\beta J_{i,j}}\prod_{\{i,j\}\in E(\Lambda)}e^{2\beta J_{i,j}\delta_{\sigma_i,\sigma_j}}\prod_{i\in\Lambda}e^{\beta h_i\sigma_i}.
    \end{align*}
    Moreover, writing $e^{2\beta J_{i,j}\delta_{\sigma_i,\sigma_j}} = 1 + (e^{\beta J_{i,j}} - 1)\delta_{\sigma_i,\sigma_j}$, we have
    \begin{align*}
        e^{-\beta \mathcal{H}_{\Lambda; \bm{h}}^{\bm{J}}(\sigma) } &= \prod_{\{i,j\}\in E(\Lambda)} e^{-\beta J_{i,j}}\prod_{\{i,j\}\in E(\Lambda)}\left(  1 + (e^{\beta J_{i,j}} - 1)\delta_{\sigma_i,\sigma_j}\right)\prod_{i\in\Lambda}e^{\beta h_i\sigma_i}\\
        &= \prod_{\{i,j\}\in E(\Lambda)} e^{-\beta J_{i,j}}\sum_{\omega\in \{0,1\}^{E(\Lambda)}}\prod_{\substack{\{i,j\}\in E(\Lambda): \\ \omega_{i,j}=1}}(e^{\beta J_{i,j}} - 1)\delta_{\sigma_i,\sigma_j}\prod_{i\in\Lambda}e^{\beta h_i\sigma_i}\\
        &= \prod_{\{i,j\}\in E(\Lambda)} e^{-\beta J_{i,j}}\sum_{\omega\in \{0,1\}^{E(\Lambda)}}\mathcal{W}(\sigma_\Lambda, \omega_E | \sigma_{\Lambda^c}, \omega_{E^c}).
    \end{align*}
    Multiplying by the normalizing factors, we get
    \begin{equation*}
        \mu_{\Lambda; \bm{J}, \bm{h}}^{\sigma}(f) =  \frac{Z^{ES}_{\Lambda, E; \bm{J}, \bm{h}}(\sigma_{\Lambda^c}, \omega_{E^c})}{\mathcal{Z}^{\eta, \bm{J}}_{\Lambda; \lambda,\bm{h}}} \prod_{\{i,j\}\in E(\Lambda)} e^{-\beta J_{i,j}}\phi^{ES}_{\Lambda, E(\Lambda)}(f | \sigma_{\Lambda^c}, \omega_{E(\Lambda)^c}). 
    \end{equation*}
    Applying this for $f\equiv 1$, we conclude that $\frac{Z^{ES}_{\Lambda, E; \bm{J}, \bm{h}}(\sigma_{\Lambda^c}, \omega_{E^c})}{\mathcal{Z}^{\eta, \bm{J}}_{\Lambda; \lambda,\bm{h}}} \prod_{\{i,j\}\in E(\Lambda)} =1$, and the lemma follows. 
\end{proof}

\begin{lemma}[RC Marginals]
    Given $f,g : \{0,1\}^{\E}\longrightarrow \mathbb{R}$ and $\Lambda\Subset\Z^d$ with $\supp(f)\subset E_0(\Lambda)$ and $\supp(g)\subset E(\Lambda)$,
    \begin{equation}\label{Eq: ES-RC marginal - Free b.c.}
        \phi^{ES, 0}_{\Lambda}(f) = \phi_{\Lambda}^0(f)
    \end{equation}
    and 
    \begin{equation}\label{Eq: ES-RC marginal - Wired b.c.}
        \phi^{ES, 1}_{\Lambda}(g) = \phi_{\Lambda}^1(g)
    \end{equation}
\end{lemma}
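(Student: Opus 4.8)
The plan is to compute the edge marginal of the Edwards--Sokal measure by summing the weight $\mathcal{W}$ over the spin configuration with the edge configuration held fixed, and to check that the outcome is exactly the unnormalised Random Cluster weight. For any edge function $g$ supported in the relevant edge set one has $\phi^{ES}_{\Lambda,E}(g\,|\,\sigma_{\Lambda^c},\omega_{E^c})=\big(Z^{ES}\big)^{-1}\sum_{\omega_E}g(\omega_E)\sum_{\sigma_\Lambda}\mathcal{W}(\sigma_\Lambda,\omega_E\,|\,\sigma_{\Lambda^c},\omega_{E^c})$, so everything reduces to evaluating $\sum_{\sigma_\Lambda}\mathcal{W}$ for a fixed edge configuration $\omega_E$.

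Fix $\omega_E$. The Kronecker factors $\delta_{\sigma_i,\sigma_j}$ over the open edges force $\sigma$ to be constant on each open cluster, and any open cluster that reaches $\Lambda^c$ through an open edge is additionally pinned to the boundary value. Hence the sum over $\sigma_\Lambda$ factorises over the clusters of $\omega$: a cluster $C$ not pinned by the boundary contributes $\sum_{s\in\{-1,+1\}}e^{\beta s\sum_{i\in C\cap\Lambda}h_i}=e^{\beta\sum_{i\in C\cap\Lambda}h_i}\big(1+e^{-2\beta\sum_{i\in C\cap\Lambda}h_i}\big)$, whereas a cluster pinned to $+1$ contributes $e^{\beta\sum_{i\in C\cap\Lambda}h_i}$. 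Pulling out the Bernoulli-like factor $B_{\bm J}(\omega_E)=\prod_{e:\,\omega_e=1}(e^{2\beta J_e}-1)$, which is already present in $\mathcal{W}$, and using that the clusters partition $\Lambda$ (so that the product of the factors $e^{\beta\sum_{i\in C\cap\Lambda}h_i}$ is the $\omega$-independent constant $\prod_{i\in\Lambda}e^{\beta h_i}$), one gets
\[
  \sum_{\sigma_\Lambda}\mathcal{W}=\Big(\prod_{i\in\Lambda}e^{\beta h_i}\Big)\,B_{\bm J}(\omega_E)\prod_{C}\big(1+e^{-2\beta\sum_{i\in C\cap\Lambda}h_i}\big),
\]
the product running over the clusters of $\omega$ that are not pinned to the boundary. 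Dividing numerator and denominator of the formula for $\phi^{ES}_{\Lambda,E}(g\,|\,\cdot)$ by this constant removes it.

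It remains to match the right-hand side with the defining weight of the RC measure. For the free boundary condition all edges outside $E_0(\Lambda)$ are closed, so no cluster is pinned and the product above is precisely the product over clusters meeting $\V(E_0(\Lambda))$ appearing in the definition of $\phi^0_\Lambda$; this gives \eqref{Eq: ES-RC marginal - Free b.c.}. For the wired boundary condition, the clusters pinned to $+1$ on the Edwards--Sokal side are exactly the ones amalgamated into the single external cluster of the wired RC configuration, which carries no cluster weight, while the non-pinned (interior) clusters are exactly those counted in the definition of $\phi^1_\Lambda$; together with the displayed identity this yields \eqref{Eq: ES-RC marginal - Wired b.c.}. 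The main obstacle is this last boundary bookkeeping: one must verify that the edges of $E(\Lambda)\setminus E_0(\Lambda)$ joining $\Lambda$ to $\Lambda^c$ and the externally wired edges are accounted for consistently on both sides, so that $B_{\bm J}$ matches and the set of $+1$-pinned clusters corresponds precisely to the external cluster receiving no $(1+e^{-2\beta\sum h})$ factor; the factorisation of the spin sum and the cancellation of the global constant against the partition function are then routine.
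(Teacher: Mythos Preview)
Your proposal is correct and follows essentially the same approach as the paper: fix the edge configuration, sum over spins using that the Kronecker factors force constancy on open clusters, factor the spin sum cluster-by-cluster to produce $B_{\bm J}(\omega_E)\prod_C(1+e^{-2\beta\sum_{i\in C}h_i})$ times the global constant $\prod_{i\in\Lambda}e^{\beta h_i}$, and then cancel the constant by the $f\equiv 1$ normalisation trick. The paper carries this out separately for free and wired boundary conditions just as you do, and your remark about the wired-boundary bookkeeping (pinned clusters on the ES side corresponding to the infinite cluster with trivial weight $1+e^{-\infty}=1$ on the RC side) is exactly the point the paper handles in the last two displayed lines of its proof.
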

\begin{proof}
    Given any $\omega\in\{0,1\}^E$, let $\Omega_\Lambda(\omega) \coloneqq \{\sigma\in\Omega_\Lambda : \forall C(\omega) \text{ open cluster and }i,j\in C(\omega)\cap \Lambda, \sigma_i=\sigma_j \}$ be the configurations that are constant in the open clusters. Then, 
    \begin{align*}
         Z^{ES}_{\Lambda, E}(\sigma_{\Lambda^c}, \omega^{(0)}_{E})\phi^{ES, 0}_{\Lambda}(f) &= \sum_{\omega\in \{0,1\}^{E_0(\Lambda)}}\sum_{\sigma\in\Omega_\Lambda}f(\omega)\prod_{\substack{\{i,j\}\in E_0(\Lambda): \\ \omega_{i,j}=1}} \delta_{\sigma_i,\sigma_j}(e^{2\beta J_{ij}}-1)\prod_{ i\in \Lambda}e^{\beta h_i\sigma_i}\\
         &= \sum_{\omega\in \{0,1\}^{E_0(\Lambda)}}\sum_{\sigma\in\Omega_\Lambda(\omega)}f(\omega)\prod_{\substack{\{i,j\}\in E_0(\Lambda): \\ \omega_{i,j}=1}} (e^{2\beta J_{ij}}-1)\prod_{ i\in \Lambda}e^{\beta h_i\sigma_i} \\
         &= \sum_{\omega\in \{0,1\}^{E_0(\Lambda)}}f(\omega)B_{\bm{J}}(\omega_{E_0(\Lambda)}) \sum_{\sigma\in\Omega_\Lambda(\omega)}\prod_{ i\in \Lambda}e^{\beta h_i\sigma_i}. 
    \end{align*}
As the sum above is only over configurations that are constant in the open clusters, we have
\begin{align*}
    \sum_{\sigma\in\Omega_\Lambda(\omega)}\prod_{ i\in \Lambda}e^{\beta h_i\sigma_i} &= \sum_{\sigma\in\Omega_\Lambda(\omega)}\prod_{C(\omega)\subset \Lambda}e^{\beta \sum_{i\in C(\omega)}h_i\sigma_i} \\
    &=e^{\beta \sum_{i\in \Lambda}h_i}\prod_{C(\omega)\subset \Lambda}(e^{\beta \sum_{i\in C(\omega)}h_i} + e^{ - \beta \sum_{i\in C(\omega)}h_i}) = \prod_{C(\omega)\subset \Lambda}(1 + e^{ - 2\beta \sum_{i\in C(\omega)}h_i}).
\end{align*}
This shows that 
\begin{align*}
     Z^{ES}_{\Lambda, E}(\sigma_{\Lambda^c}, \omega^{(0)}_{E})\phi^{ES, 0}_{\Lambda}(f) &= e^{\beta \sum_{i\in \Lambda}h_i} \sum_{\omega\in \{0,1\}^{E_0(\Lambda)}}f(\omega)B_{\bm{J}}(\omega_{E_0(\Lambda)}) \prod_{C(\omega)\subset \Lambda}(1 + e^{ - 2\beta \sum_{i\in C(\omega)}h_i}) \\
     &= e^{\beta \sum_{i\in \Lambda}h_i}Z^{RC, \omega_{E_0(\Lambda)}^{(0)}}_{E_0(\Lambda)}\phi_\Lambda^0(f).  
\end{align*}
We get equation \eqref{Eq: ES-RC marginal - Free b.c.} by noticing that $\phi^{ES, 0}_{\Lambda}(1) = \phi_\Lambda^0(1) = 1$, hence $Z^{RC, \omega_{E_0(\Lambda)}^{(0)}}_{E_0(\Lambda)}e^{\beta \sum_{i\in \Lambda}h_i}\left( Z^{ES}_{\Lambda, E}(\sigma_{\Lambda^c}, \omega^{(0)}_{E})\right)^{-1} = 1$. For the other equation, we take $\Omega_\Lambda^+(\omega) \coloneqq \Omega_\Lambda(\omega) \cap \Omega_\Lambda^+$. This is the set of configurations with constant configurations in the clusters, with the restriction that  clusters connecting $\Lambda$ and $\Lambda^c$ must have sign $+$. Proceeding in the same steps as before, we can write
\begin{equation*}
       Z^{ES}_{\Lambda, E}(\sigma_{\Lambda^c}^+, \omega^{(1)}_{E})\phi^{ES, 1}_{\Lambda}(g) =  \sum_{\omega\in \{0,1\}^{E(\Lambda)}}g(\omega)B_{\bm{J}}(\omega_{E(\Lambda)}) \sum_{\sigma\in\Omega_\Lambda^+(\omega)}\prod_{ i\in \Lambda}e^{\beta h_i\sigma_i}. 
\end{equation*}
As we are considering wired boundary conditions, we can write
\begin{align*}
    \sum_{\sigma\in\Omega_\Lambda^+(\omega)}\prod_{ i\in \Lambda}e^{\beta h_i\sigma_i} &= \sum_{\sigma\in\Omega_\Lambda^+(\omega)}\prod_{ C(\omega): C(\omega)\subset \Lambda}e^{\beta \sum_{i\in C(\omega)}h_i\sigma_i}\prod_{ C(\omega): C(\omega)\cap \Lambda^c\neq \emptyset}e^{\beta \sum_{i\in C(\omega)\cap \Lambda}h_i} \\
    &= \prod_{ C(\omega): C(\omega)\subset \Lambda}\left( e^{\beta \sum_{i\in C(\omega)}h_i} + e^{-\beta \sum_{i\in C(\omega)}h_i}\right)\prod_{ C(\omega): C(\omega)\cap \Lambda^c\neq \emptyset}e^{\beta \sum_{i\in C(\omega)\cap \Lambda}h_i} \\
    &= e^{\beta \sum_{i\in\Lambda}h_i}\prod_{ C(\omega): C(\omega)\subset \Lambda}\left( 1 + e^{-2\beta \sum_{i\in C(\omega)}h_i}\right)\\
    &= e^{\beta \sum_{i\in\Lambda}h_i}\prod_{ C(\omega): C(\omega)\cap \Lambda\neq \emptyset}\left( 1 + e^{-2\beta \sum_{i\in C(\omega)}h_i}\right).
\end{align*}
This shows that $ Z^{ES}_{\Lambda, E}(\sigma_{\Lambda^c}^+, \omega^{(1)}_{E})\phi^{ES, 1}_{\Lambda}(f) =  e^{\beta \sum_{i\in\Lambda}h_i} Z^{RC, \omega_{E(\Lambda)}^{(1)}}_{E(\Lambda)}\phi_\Lambda^1(g)$. Again, this proves equation \eqref{Eq: ES-RC marginal - Wired b.c.} once we take $g=1$ to conclude that $\left(Z^{ES}_{\Lambda, E}(\sigma_{\Lambda^c}^+, \omega^{(1)}_{E})\right)^{-1} e^{\beta \sum_{i\in\Lambda}h_i} Z^{RC, \omega_{E(\Lambda)}^{(1)}}_{E(\Lambda)} = 1$.
\end{proof}

To define the infinity volume measures, we use the DLR equations. Let $\mathcal{F}_1$ be the cylinders $\sigma$-algebra of $\{0,1\}^\E$, and $\mathcal{F}_2$ the cylinders $\sigma$-algebra of $\Omega\times \{0,1\}^\E$. We take $\mathcal{P}(\{0,1\}^\E)$ the set of probability measures in $(\{0,1\}^\E, \mathcal{F}_1)$ and $\mathcal{P}(\Omega\times\{0,1\}^\E)$ the set of probability measures in $(\Omega\times\{0,1\}^\E, \mathcal{F}_2)$. The set of RC measures is 
\begin{equation*}
    \mathcal{G}^{RC}_{\beta, \bm{J},\bm{h}} \coloneqq \left\{\phi\in \mathcal{P}(\{0,1\}^\E) : \phi(f) = \int \phi_E(f|\omega_E^c)\}d\phi(\omega), \text{ whenever } \supp(f) \subset E \text{ and } E\Subset \E\right\}. 
\end{equation*}
Analogously, the set of ES measures is 
\begin{multline}
    \mathcal{G}^{ES}_{\beta, \bm{J}, \bm{h}} \coloneqq \left\{ \nu\in \mathcal{P}(\Omega\times\{0,1\}^\E) :  \nu(f) = \int \phi^{ES}_{\Lambda, E(\Lambda)}(f|\sigma_\Lambda^c, \omega_{E(\Lambda)^c})d\nu(\sigma, \omega), \right. \\ \left. \text{ whenever } \supp(f) \subset \Lambda\times E(\Lambda) \text{ and } \Lambda\Subset \Z^d \right\}.
\end{multline}
We will often omit the parameters $\beta$ and $\bm{h}$ in statements that hold for an arbitrary choice of them. So $ \mathcal{G}^{RC}_{\bm{J}}$ denotes $ \mathcal{G}^{RC}_{\beta, \bm{J}, \bm{h}}$ and $ \mathcal{G}^{ES}_{\bm{J}}$ denotes $\mathcal{G}^{ES}_{\beta, \bm{J}, \bm{h}}$. 
\begin{remark}
    Since the families $\{\phi_E\}_{E\Subset\E}$ and $\{\phi^{ES}_{\Lambda, E(\Lambda)}\}_{\Lambda\Subset\Z^d}$ are specifications, the sets $\mathcal{G}^{RC}_{\bm{J}}$ and $\mathcal{G}^{ES}_{\bm{J}}$ are the usual set of DLR Gibbs measures. 
\end{remark}

At first, it is unclear if the spin marginal of an infinity ES measure is a spin Gibbs measure in $\mathcal{G}_{\bm{J}}$. In fact, an even stronger statement holds. The following theorem was proved in \cite{Biskup_Borgs_Chayes_Kotecky_00} and extended to general external fields in \cite{Cioletti_Vila_2016}.

\begin{theorem}\label{Theo: ES_to_Ising_isomosphism}
Let $\Pi_S: \mathcal{G}^{ES}_{\bm{J}}: \longrightarrow \mathcal{G}^{IS}_{\bm{J}}$ be the application that takes an ES - measure to its spin marginal, that is, for any $\nu\in\mathcal{G}^{ES}_{\bm{J}}$ and $f:\Omega\longrightarrow \mathbb{R}$ with $\supp(f)\Subset \Z^d$, 
\begin{equation*}
    \Pi_S(\nu)(f) \coloneqq \int f(\sigma) d\nu(\sigma,\omega).
\end{equation*}
Then, $\Pi_S$ is a linear isomorphism. In particular, $|\mathcal{G}^{ES}_{\bm{J}}| = 1$ if and only if $|\mathcal{G}_{\bm{J}}^{IS}|=1$. 
\end{theorem}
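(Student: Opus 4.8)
The plan is to prove the theorem in four movements: (1) $\Pi_S$ is well-defined, i.e.\ the spin marginal of an ES measure actually lies in $\mathcal{G}^{IS}_{\bm{J}}$; (2) $\Pi_S$ is linear and injective; (3) $\Pi_S$ is surjective, by constructing an explicit inverse $\mu \mapsto \mu\otimes\kappa$ where $\kappa$ is the conditional edge law; and (4) deduce the cardinality statement, which is then purely formal. The non-trivial work is concentrated in (1) and (3), and among these the genuinely delicate point is the behaviour of infinite open clusters under the ES/RC coupling — this is where the argument of \cite{Biskup_Borgs_Chayes_Kotecky_00}, extended in \cite{Cioletti_Vila_2016}, does its real work.

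For (1): given $\nu\in\mathcal{G}^{ES}_{\bm{J}}$, set $\mu=\Pi_S(\nu)$. To check $\mu\in\mathcal{G}^{IS}_{\bm{J}}$ I would verify the Ising DLR equations directly. Fix $\Lambda\Subset\Z^d$ and a local $f$ with $\supp(f)\subset\Lambda$. Using the ES-DLR equation for $\nu$ with the region $\Lambda\times E(\Lambda)$, then integrating out the edge variables inside $E(\Lambda)$ via the Spin Marginals Lemma (which gives $\phi^{ES}_{\Lambda,E(\Lambda)}(f\mid\sigma_{\Lambda^c},\omega_{E(\Lambda)^c})=\mu^{\sigma}_{\Lambda;\bm{J},\bm{h}}(f)$, a quantity not depending on $\omega_{E(\Lambda)^c}$), one gets $\mu(f)=\int \mu^{\sigma}_{\Lambda;\bm{J},\bm{h}}(f)\,d\mu(\sigma)$, which is exactly the Ising DLR equation. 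Hence $\mu\in\mathcal{G}^{IS}_{\bm{J}}$, so $\Pi_S$ maps into the right space. Linearity is immediate from the definition of $\Pi_S$ as integration against $f(\sigma)$.

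For (2) and (3): the key is to produce the inverse. Given $\mu\in\mathcal{G}^{IS}_{\bm{J}}$, I would define a measure $\nu$ on $\Omega\times\{0,1\}^{\E}$ as follows: sample $\sigma\sim\mu$, then, conditionally on $\sigma$, sample $\omega$ by opening each edge $e=\{i,j\}$ independently with probability $1-e^{-2\beta J_{ij}}$ if $\sigma_i=\sigma_j$ and with probability $0$ otherwise. One must check this $\nu$ satisfies the ES-DLR equations; this is a local computation using the explicit form of the ES weights $\mathcal{W}$, combined with the fact that $\mu$ satisfies the Ising DLR equations — essentially running the Spin Marginals computation in reverse, edge by edge. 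Then $\Pi_S(\nu)=\mu$ by construction, giving surjectivity. For injectivity, one shows any $\nu'\in\mathcal{G}^{ES}_{\bm{J}}$ with $\Pi_S(\nu')=\mu$ must equal this $\nu$: the ES-DLR equations pin down the conditional law of $\omega$ given $\sigma$ on every finite $E(\Lambda)$ — conditionally on the spins, the edge variables are the independent Bernoulli variables described above, with no dependence on boundary edge configuration — so $\nu'=\mu\otimes\kappa=\nu$ is forced. This simultaneously gives that $\Pi_S$ is a bijection; linearity of the inverse then follows since $\mu\mapsto\mu\otimes\kappa$ is affine in $\mu$.

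The main obstacle is the subtlety hidden in the ES-DLR / RC-DLR consistency when infinite clusters are present: the RC weight per cluster $1+e^{-2\beta\sum_{i\in C}h_i}$ degenerates (to $1$) on infinite clusters, and the ``sign flip on finite clusters'' picture that makes everything transparent in finite volume requires care to transfer to the infinite-volume DLR setting with a genuinely inhomogeneous field $\bm{h}$. This is precisely the technical content of \cite{Biskup_Borgs_Chayes_Kotecky_00} and \cite{Cioletti_Vila_2016}, and in a write-up following those references I would either cite the relevant lemmas there or reproduce the finite-volume identities (Spin Marginals and RC Marginals Lemmas above) and pass to the limit along the specification, checking that no mass escapes to infinite clusters in a way that breaks the correspondence. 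Once the bijection is established, the final sentence ``$|\mathcal{G}^{ES}_{\bm{J}}|=1 \iff |\mathcal{G}^{IS}_{\bm{J}}|=1$'' is immediate, since a linear isomorphism of the (convex, compact) sets of DLR measures in particular preserves cardinality.
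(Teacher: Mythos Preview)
The paper does not give its own proof of this theorem: it is stated as a result imported from \cite{Biskup_Borgs_Chayes_Kotecky_00} (constant field) and \cite{Cioletti_Vila_2016} (general external fields), with no argument beyond the preceding Spin Marginals and RC Marginals lemmas. Your four-step plan (well-definedness via the Spin Marginals lemma, linearity, construction of the inverse $\mu\mapsto\mu\otimes\kappa$ with $\kappa$ the independent-Bernoulli edge kernel conditioned on $\sigma$, and then the cardinality consequence) is exactly the standard proof from those references, and you have correctly identified both where the real work lies and the subtlety concerning infinite clusters in the inhomogeneous-field setting that \cite{Cioletti_Vila_2016} handles. There is nothing to compare against in the paper itself; your outline is essentially the proof those citations contain.
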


This shows that uniqueness for the ES model implies uniqueness for the Ising model. It is left to relate the uniqueness of the RC model with the uniqueness of the ES model. To do so, we use the FKG property, and some consequences of it, of the RC and ES models. The main contribution of \cite{Cioletti_Vila_2016} was the extension of these properties from the models with constant external fields to models with non-constant external fields. These results are described next.

As we did for the configuration space, we can consider a partial order on $\{0,1\}^\E$ defining $\omega\leq \omega^\prime$ when $\omega_e\leq\omega^\prime_e$ for all $e\in\E$. The first key property of the RC model is that it satisfies the so-called \textit{strong FKG}.

\begin{theorem}[Strong FKG]
    Given $E \subset E^\prime \subset \E$ and $\xi\in\{0,1\}^\E$, take $\Upsilon_{E^\prime\setminus E}^{\xi} \coloneqq \{\omega\in\{0,1\}^\E : \omega_e=\xi_e \ \forall e\in E^\prime\setminus E\}$. Then, for any $\Lambda\Subset \Z^d$ and non-decreasing functions $f$ and $g$,
    \begin{equation*}
        \phi_\Lambda^0(f.g|\Upsilon_{E^\prime\setminus E}^{\xi}) \geq  \phi_\Lambda^0(f|\Upsilon_{E^\prime\setminus E}^{\xi})  \phi_\Lambda^0(g|\Upsilon_{E^\prime\setminus E}^{\xi}) 
    \end{equation*}
    whenever $ \phi_\Lambda^0(\Upsilon_{E^\prime\setminus E}^{\xi}) >0$. The same result holds for $\phi_\Lambda^1$.
\end{theorem}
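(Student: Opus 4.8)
The plan is to derive the Strong FKG inequality from the \emph{FKG lattice condition} (Holley's condition) for random-cluster weights, via two standard reductions and one elementary inequality. First I recall that a strictly positive probability measure $\mu$ on a finite product of two-point chains, such as $\{0,1\}^{E_0(\Lambda)}$, satisfies $\mu(fg)\geq\mu(f)\mu(g)$ for all non-decreasing $f,g$ (Holley's inequality, i.e.\ the FKG theorem) as soon as $\mu(\omega\vee\omega')\mu(\omega\wedge\omega')\geq\mu(\omega)\mu(\omega')$ for all $\omega,\omega'$, and that this last condition need only be verified for pairs $\omega,\omega'$ that differ in exactly two coordinates. Second, I observe that $\phi_\Lambda^0(\,\cdot\mid\Upsilon^{\xi}_{E'\setminus E})$ is again a random-cluster measure: conditioning on an edge being closed amounts to deleting it, and conditioning on an edge being open amounts to contracting its endpoints, and after these operations the weight keeps the form $B_{\bm J}(\omega)\prod_{C(\omega)}(1+e^{-2\beta\sum_{i\in C}h_i})$ on the modified graph (for $\phi_\Lambda^1$ the only change is that the boundary-touching cluster carries weight $1$, exactly as in the proof of the RC Marginals lemma). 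Hence the whole statement reduces to checking the two-edge lattice condition for a generic weight of this type, under the standing assumption $\bm h\geq 0$, which covers $\bm h=\lambdadec$ and $\bm h=\bm h^*$ with $\lambda,h^*\geq 0$.

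For the two-edge check, fix distinct free edges $e\neq f$ and a background configuration $\eta$ on the remaining free edges; write $\eta_\emptyset,\eta_e,\eta_f,\eta_{ef}$ for the four extensions (both closed, only $e$ open, only $f$ open, both open). Since $B_{\bm J}$ is a product over open edges, $B_{\bm J}(\eta_{ef})B_{\bm J}(\eta_\emptyset)=B_{\bm J}(\eta_e)B_{\bm J}(\eta_f)$ exactly, so it suffices to prove $W(\eta_{ef})W(\eta_\emptyset)\geq W(\eta_e)W(\eta_f)$ for the cluster factor $W(\omega)=\prod_{C(\omega)}(1+z_{C(\omega)})$, where $z_C\coloneqq\prod_{i\in C}e^{-2\beta h_i}\in[0,1]$ because $\bm h\geq0$. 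Toggling $e$ and $f$ affects only those clusters of $\eta_\emptyset$ that contain an endpoint of $e$ or $f$; running through the possibilities for how these endpoints sit in the cluster partition (endpoints already connected, connected only after opening one of the two edges, $e$ and $f$ sharing an endpoint, one edge internal to a cluster, both edges joining the same pair of clusters, etc.) yields equality in every case except the one where opening $e$ and $f$ chains together three distinct clusters $A,B,C$ of $\eta_\emptyset$. In that case, cancelling the common factors reduces the inequality to $(1+z_Az_Bz_C)(1+z_B)\geq(1+z_Az_B)(1+z_Bz_C)$, equivalently $z_B(1-z_A)(1-z_C)\geq 0$, which holds since $z_A,z_B,z_C\in[0,1]$.

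Once the FKG lattice condition is established for the conditioned (strictly positive) measure, Holley's inequality gives $\phi_\Lambda^0(fg\mid\Upsilon^{\xi}_{E'\setminus E})\geq\phi_\Lambda^0(f\mid\Upsilon^{\xi}_{E'\setminus E})\phi_\Lambda^0(g\mid\Upsilon^{\xi}_{E'\setminus E})$ for all non-decreasing $f,g$, which is the claim; the computation for $\phi_\Lambda^1$ is identical. I expect the main obstacle to be the second paragraph: correctly and exhaustively enumerating how two toggled edges can interact with an arbitrary cluster partition — in particular the degenerate configurations listed above — and verifying that none of them produces an inequality harder than the displayed three-cluster case. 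Everything else is routine given the reduction to two coordinates, the log-additivity of $B_{\bm J}$, and the stability of the random-cluster form under conditioning.
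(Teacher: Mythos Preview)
The paper does not prove this theorem; it simply states it as a known result, imported from \cite{Biskup_Borgs_Chayes_Kotecky_00} (constant field) and \cite{Cioletti_Vila_2016} (inhomogeneous non-negative field). Your outline reproduces the standard proof found in those references: reduce to the FKG lattice condition via Holley's theorem, observe that conditioning on a cylinder event preserves the random-cluster form (deletion/contraction), and verify the lattice condition pairwise on two free edges using the multiplicativity of $B_{\bm J}$ and the factorization of the cluster weight. The key three-cluster inequality and its reduction to $z_B(1-z_A)(1-z_C)\geq 0$ under $\bm h\geq 0$ is exactly the crux of the argument in \cite{Cioletti_Vila_2016}.

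One minor inaccuracy: the parallel case, where $e$ and $f$ both join the \emph{same} pair of clusters $A,B$, does not give equality. There $W(\eta_e)=W(\eta_f)=W(\eta_{ef})$ (all have $A\cup B$ merged) while $W(\eta_\emptyset)$ keeps $A,B$ separate, so the lattice condition becomes $(1+z_A)(1+z_B)\geq 1+z_Az_B$, which is a genuine (though trivial) inequality rather than an identity. You listed this case among those that ``yield equality''; just move it next to the chain case and note it is immediate. With that adjustment your case analysis is complete and the argument goes through.
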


\begin{remark}
    Choosing $E=E_0(\Lambda)$, $E^\prime = E_0(\Lambda)^c$ and $\xi = \omega^{(0)}$ in the definition above, we get $  \phi_\Lambda^0(f.g) \geq  \phi_\Lambda^0(f)  \phi_\Lambda^0(g)$ for any non-decreasing functions $f$ and $g$. Similarly, taking $E=E(\Lambda)$, $E^\prime = E(\Lambda)^c$ and $\xi = \omega^{(1)}$ we conclude that $  \phi_\Lambda^1(f.g) \geq  \phi_\Lambda^1(f)  \phi_\Lambda^1(g)$. This resembles the usual FKG property for spin systems \eqref{Eq: FKG_Inequality_Spins}. 
\end{remark}
Two consequences of the FKG property are particularly important for us. One of them is the existence and extremality of the limit measures with free and wired boundary conditions. 

\begin{theorem}\label{Theo: Limmiting_states_RC_and_ES}
    Let $\beta\geq 0$, $\bm{J}=\{J_{i,j}\}_{i,j\in\Z^d}$ be any ferromagnetic nearest neighbor interaction, and $\bm{h} = \{h_i\}_{i\in\Z^d}$ be a non-negative external field. Then, for any $f$ and $g$ quasi-local function, 
    \begin{itemize}
        \item[(I)] The limits
                \begin{align*}
                      \phi^0(f) \coloneqq   \lim_{\Lambda\nearrow\Z^d}\phi_\Lambda^0(f) \qquad \text{and} \qquad  \phi^1(f) \coloneqq   \lim_{\Lambda\nearrow\Z^d}\phi_\Lambda^1(f)
                \end{align*}
                exists.
        \item[(II)] The limits
                \begin{align*}
                     \phi^{ES, 0}(g)\coloneqq \lim_{\Lambda\nearrow\Z^d} \phi^{ES, 0}_{\Lambda}(g) \qquad \text{and} \qquad   \phi^{ES, 1}(g)\coloneqq \lim_{\Lambda\nearrow\Z^d} \phi^{ES, 1}_{\Lambda}(g)
                \end{align*}
                exists.
        \item[(III)]For any $\phi\in\mathcal{G}^{RC}$, if  $f$ is non-decreasing then
                \begin{equation}\label{Eq: Extremality_of_wired_and_free}
                    \phi^0(f) \leq \phi(f)\leq \phi^1(f).
                \end{equation}
    \end{itemize}
    All limits are taken over sequences invading $\Z^d$.
\end{theorem}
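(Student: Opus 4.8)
The plan is to run the classical monotonicity-in-volume argument driven by Strong FKG, treating the three items in turn. For item (I), the crucial input is the stochastic domination $\phi^0_\Lambda \preceq \phi^0_{\Lambda'}$ and $\phi^1_\Lambda \succeq \phi^1_{\Lambda'}$ whenever $\Lambda \subset \Lambda' \Subset \Z^d$. This follows from Strong FKG together with the elementary fact that conditioning a random-cluster measure on an edge being closed (resp.\ open) amounts to deleting (resp.\ forcing into the wired cluster) that edge: thus $\phi^0_\Lambda$ is $\phi^0_{\Lambda'}$ conditioned on the decreasing event $\{\omega_e=0\ \forall e\in E_0(\Lambda')\setminus E_0(\Lambda)\}$, and conditioning on a decreasing event lowers the expectation of every non-decreasing function; symmetrically $\phi^1_\Lambda$ is $\phi^1_{\Lambda'}$ conditioned on the increasing event $\{\omega_e=1\ \forall e\in E_0(\Lambda')\setminus E_0(\Lambda)\}$. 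Hence, for non-decreasing local $f$, the sequence $\phi^0_\Lambda(f)$ is non-decreasing and bounded by $\|f\|_\infty$, and $\phi^1_\Lambda(f)$ is non-increasing and bounded, so both converge. Since every local function on $\{0,1\}^\E$ is a finite linear combination of indicators of increasing cylinder events (M\"obius inversion on the finite Boolean lattice of its support), the limits exist for all local $f$; a $3\varepsilon$-argument based on $|\phi^i_\Lambda(f)-\phi^i_\Lambda(g)|\le\|f-g\|_\infty$ extends them to all quasi-local $f$, and the monotonicity makes the limits independent of the chosen invading sequence since any two are mutually cofinal.

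For item (II), I would repeat the argument of (I) verbatim for the ES specification on $\Omega\times\{0,1\}^\E$ with the coordinatewise order ($+1>-1$ and $1>0$). The only nonroutine ingredient is that $\{\phi^{ES}_{\Lambda,E(\Lambda)}\}_\Lambda$ satisfies Strong FKG and that $\phi^{ES,0}_\Lambda$ is non-decreasing and $\phi^{ES,1}_\Lambda$ non-increasing in $\Lambda$; this is precisely the monotonicity established for the ES model in \cite{Biskup_Borgs_Chayes_Kotecky_00} and extended to space-dependent fields in \cite{Cioletti_Vila_2016}, which I would invoke. With it in hand, non-decreasing $g$ give monotone bounded sequences, local $g$ are treated by the same M\"obius decomposition, and quasi-local $g$ by uniform approximation. (One could instead deduce the ES limits from the RC limits via the RC-marginals lemma, but this needs care, since the conditional spin-average of a local ES function is a function of the edges that is only quasi-local, not local; the direct FKG route sidesteps this.)

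For item (III), fix a non-decreasing quasi-local $f$; by the approximation step above it suffices to treat local $f$ with $\supp(f)\subset E_0(\Lambda)$ for $\Lambda$ large. For any $\phi\in\mathcal{G}^{RC}$ the DLR equation gives $\phi(f)=\int \phi_{E_0(\Lambda)}(f\mid\omega_{E_0(\Lambda)^c})\,d\phi(\omega)$. Since $\omega^{(0)}\le\omega_{E_0(\Lambda)^c}\le\omega^{(1)}$ pointwise, the comparison of boundary conditions (a consequence of the lattice/Holley condition underlying Strong FKG) yields $\phi^0_\Lambda(f)=\phi_{E_0(\Lambda)}(f\mid\omega^{(0)})\le\phi_{E_0(\Lambda)}(f\mid\omega_{E_0(\Lambda)^c})\le\phi_{E_0(\Lambda)}(f\mid\omega^{(1)})=\phi^1_\Lambda(f)$ for every $\omega$; integrating against $\phi$ gives $\phi^0_\Lambda(f)\le\phi(f)\le\phi^1_\Lambda(f)$, and letting $\Lambda\nearrow\Z^d$ and invoking item (I) yields \eqref{Eq: Extremality_of_wired_and_free}.

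The step I expect to be the main obstacle is item (II): unlike the random-cluster measure, the joint ES measure is not a priori an FKG-friendly object, and verifying the lattice condition for the ES specification with a space-dependent external field — whence its monotonicity in volume and in boundary condition — is the genuinely technical point, and the core of the \cite{Cioletti_Vila_2016} contribution; the rest is bookkeeping. A secondary point to get right is that the boundary-condition comparisons used in (I) and (III) rely on the full lattice condition, slightly more than the pairwise positive-association inequality phrased as ``Strong FKG'' in the statement.
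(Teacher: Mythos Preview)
Your proposal is correct and is precisely the standard argument. In the paper this theorem is stated without proof, as a known consequence of Strong FKG drawn from \cite{Biskup_Borgs_Chayes_Kotecky_00} and \cite{Cioletti_Vila_2016}; what you have written is essentially the proof one finds there. A few minor comments: the claim that $\phi^0_{\Lambda'}$ conditioned on $\{\omega_e=0:\,e\in E_0(\Lambda')\setminus E_0(\Lambda)\}$ equals $\phi^0_\Lambda$ is correct but uses that the cluster weights of the isolated vertices in $\Lambda'\setminus\Lambda$ factor out as a constant, which is worth making explicit since the cluster term here is not the usual $q^{\#\text{clusters}}$; and your remark that Strong FKG as stated (positive association conditional on cylinder events) is ``slightly less'' than the lattice condition is overly cautious --- for strictly positive measures on $\{0,1\}^E$ they are equivalent, so the boundary-condition comparisons you need in (I) and (III) are genuinely available. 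Your identification of (II) as the substantive step is accurate: the FKG lattice condition for the ES specification with inhomogeneous field is exactly the content of \cite{Cioletti_Vila_2016}, and once granted, your volume-monotonicity plus M\"obius-plus-approximation scheme goes through verbatim.
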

 The next result allows us to compare the models with interaction $\bm{J}_\lambda$ and constant interaction $\bm{J}\equiv J$. The proof is a straightforward adaptation of \cite[Theorem 7]{Cioletti_Vila_2016}.

\begin{proposition}\label{Prop: RC_is_increasing_in_J}
   Let $\bm{J}=\{J_{i,j}\}_{i,j\in \Z^d}$ and $\bm{J}^\prime=\{J^\prime_{i,j}\}_{i,j\in \Z^d}$ be nearest-neighbor interactions with $0\leq J_{i,j}\leq J^\prime_{i,j}$, for all $i,j\in\Z^d$. Then, for any $\Lambda\Subset\Z^d$ and $f$ local non-decreasing function,
   \begin{equation*}
       \phi^0_{\Lambda; \bm{J}}(f)\leq \phi^0_{\Lambda; \bm{J}^\prime}(f) \qquad \text{ and } \qquad   \phi^1_{\Lambda; \bm{J}}(f)\leq \phi^1_{\Lambda; \bm{J}^\prime}(f).
   \end{equation*}
\end{proposition}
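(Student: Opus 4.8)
The plan is to derive both families of inequalities from a stochastic-domination statement between the two random-cluster measures on a common finite edge set, obtained via Holley's inequality — the same mechanism as in \cite[Theorem 7]{Cioletti_Vila_2016}. Fix $\Lambda\Subset\Z^d$ and view $\phi^0_{\Lambda;\bm{J}}$ and $\phi^0_{\Lambda;\bm{J}'}$ as probability measures on $\{0,1\}^{E_0(\Lambda)}$. Since a local non-decreasing $f$ is in particular non-decreasing, it suffices to prove $\phi^0_{\Lambda;\bm{J}}\preceq_{\mathrm{st}}\phi^0_{\Lambda;\bm{J}'}$, and likewise for the wired measures. Writing the unnormalized weight as $W_{\bm{J}}(\omega)=B_{\bm{J}}(\omega)\,Q(\omega)$ with $B_{\bm{J}}(\omega)=\prod_{e:\omega_e=1}(e^{2\beta J_e}-1)$ and $Q(\omega)=\prod_C\bigl(1+e^{-2\beta\sum_{i\in C}h_i}\bigr)$ (the product over open clusters meeting $\V(E_0(\Lambda))$, so that $Q$ does not depend on $\bm{J}$), the normalizations cancel in the two ratios defining the probabilities, so by Holley's criterion the domination follows once one checks
\[
W_{\bm{J}'}(\omega\vee\omega')\,W_{\bm{J}}(\omega\wedge\omega')\ \ge\ W_{\bm{J}}(\omega)\,W_{\bm{J}'}(\omega')\qquad\text{for all }\omega,\omega'.
\]

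To verify this I would split the weight into its two factors. The connectivity factor obeys the log-supermodularity $Q(\omega\vee\omega')\,Q(\omega\wedge\omega')\ge Q(\omega)\,Q(\omega')$: this is exactly the FKG lattice condition for the general random-cluster model with Ising-type connectivity weights, established for constant fields in \cite{Biskup_Borgs_Chayes_Kotecky_00} and for arbitrary non-negative $\bm{h}$ in \cite{Cioletti_Vila_2016}, and it is $\bm{J}$-independent. For the Bernoulli-type factor one checks edge by edge that $B_{\bm{J}'}(\omega\vee\omega')\,B_{\bm{J}}(\omega\wedge\omega')\ge B_{\bm{J}}(\omega)\,B_{\bm{J}'}(\omega')$: an edge with $\omega_e=\omega'_e$ contributes identical factors to both sides; an edge with $\omega_e=1$, $\omega'_e=0$ contributes $e^{2\beta J'_e}-1$ on the left against $e^{2\beta J_e}-1$ on the right, which is favourable since $J_e\le J'_e$; and an edge with $\omega_e=0$, $\omega'_e=1$ contributes $e^{2\beta J'_e}-1$ to both sides. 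Multiplying the two inequalities (all factors are non-negative) gives Holley's condition and hence $\phi^0_{\Lambda;\bm{J}}(f)\le\phi^0_{\Lambda;\bm{J}'}(f)$ for all non-decreasing $f$, in particular the local ones. Holley's theorem is usually phrased for strictly positive measures, but when some $J_e=0$ the corresponding edge is simply forced closed; one first runs the argument under $J_e>0$ for every $e$ and then lets $J_e\downarrow 0$, which is legitimate because the random-cluster partition function on $E_0(\Lambda)$ is bounded below by $Q(\underline{0})>0$ uniformly in $\bm{J}$, so $\bm{J}\mapsto\phi^0_{\Lambda;\bm{J}}(f)$ is continuous.

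The wired case is entirely analogous: one works with the same edge set but with the boundary edges declared open (the configuration $\omega^{(1)}$), so that the open clusters, hence $Q$, are formed accordingly; the connectivity factor is still $\bm{J}$-independent and log-supermodular by \cite{Biskup_Borgs_Chayes_Kotecky_00,Cioletti_Vila_2016}, and the edgewise computation for $B_{\bm{J}}$ is unchanged, giving $\phi^1_{\Lambda;\bm{J}}(f)\le\phi^1_{\Lambda;\bm{J}'}(f)$.

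The only non-routine ingredient is the log-supermodularity of the connectivity factor $Q$ — equivalently, the FKG lattice condition for this general random-cluster model — which is precisely what is imported from \cite{Biskup_Borgs_Chayes_Kotecky_00,Cioletti_Vila_2016}; granting it, the remaining work is the elementary edgewise comparison of the Bernoulli weights. As an alternative that stays inside the present framework and uses only the Strong FKG theorem stated above, one may interpolate $\bm{J}_t=(1-t)\bm{J}+t\bm{J}'$ and note that $\frac{d}{dt}\phi^0_{\Lambda;\bm{J}_t}(f)=\mathrm{Cov}_{\phi^0_{\Lambda;\bm{J}_t}}\!\bigl(f,g_t\bigr)$ with $g_t(\omega)=\sum_e\frac{2\beta(J'_e-J_e)e^{2\beta(J_t)_e}}{e^{2\beta(J_t)_e}-1}\,\mathbbm{1}_{\{\omega_e=1\}}$ a non-decreasing function with non-negative coefficients; the covariance is $\ge 0$ by FKG, and integrating over $t\in[0,1]$ yields the claim (the degenerate case $J_e=0$ again handled by the same continuity remark).
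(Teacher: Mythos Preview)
Your proposal is correct, but the paper takes a more direct route that avoids both Holley's criterion and interpolation. The paper simply observes that the Radon--Nikodym density of $\phi^0_{\Lambda;\bm J}$ with respect to $\phi^0_{\Lambda;\bm J'}$ is proportional to
\[
g(\omega)=\prod_{e}\Bigl(\tfrac{e^{2\beta J_e}-1}{e^{2\beta J'_e}-1}\Bigr)^{\omega_e},
\]
which is non-increasing since each factor is in $[0,1]$; hence $\phi^0_{\Lambda;\bm J}(f)=\phi^0_{\Lambda;\bm J'}(fg)/\phi^0_{\Lambda;\bm J'}(g)\le \phi^0_{\Lambda;\bm J'}(f)$ by a single application of the Strong FKG inequality already stated in the paper. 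This argument needs only positive association of $\phi^0_{\Lambda;\bm J'}$, not the full lattice condition, and it sidesteps the continuity/degeneracy issue at $J_e=0$ entirely (the ratio $g$ is still well-defined and non-increasing). Your Holley approach works but imports more machinery (the log-supermodularity of $Q$ is equivalent to the FKG lattice condition, so you are essentially reproving Strong FKG along the way); your interpolation alternative is closer in spirit to the paper's, but the paper's one-step tilting is cleaner still.
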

\begin{proof}
    Consider a function $g:\{0,1\}^{E(\Lambda)}\longrightarrow \mathbb{R}$ given by
    \begin{equation*}
    g(\omega)= \prod_{e\in E(\Lambda)}\left(\frac{e^{2\beta J_e}-1}{e^{2\beta J^\prime_e} - 1}\right)^{\omega_e}.
    \end{equation*}
    By the restriction on $\bm{J}$ and $\bm{J}^\prime$, the all the fractions above are at most $1$, so $g$ is non-increasing. Given a non-decreasing local function $f$, 
    \begin{equation*}
        \phi^0_{\Lambda;\bm{J}}(f) = \frac{1}{Z^{0}_{\Lambda;\bm{J}}}\sum_{\omega\in\{0,1\}^{E_0(\Lambda)}}f(\omega)g(\omega)\prod_{e:\omega_e=1}\left(e^{2\beta J_e^\prime}-1\right)\prod_{C(\omega)}\left(1+e^{-2\beta\sum_{i\in C(\omega)}h_i}\right) =  \frac{Z^{0}_{\Lambda;\bm{J}^\prime}}{Z^{0}_{\Lambda;\bm{J}}}\phi^0_{\Lambda;\bm{J}^\prime}(f.g).
    \end{equation*}
    Taking, in particular, $f\equiv 1$, we get $\phi^0_{\Lambda;\bm{J}^\prime}(g) = \frac{Z^{0}_{\Lambda;\bm{J}}}{Z^{0}_{\Lambda;\bm{J}^\prime}}$. Using the FKG property, we conclude that 
    \begin{equation*}
        \phi^0_{\Lambda;\bm{J}}(f)  = \frac{\phi^0_{\Lambda;\bm{J}^\prime}(f.g)}{\phi^0_{\Lambda;\bm{J}^\prime}(g)}\leq \phi^0_{\Lambda;\bm{J}^\prime}(f).
    \end{equation*}
This exact same argument can be done for the wired boundary condition, what concludes the proof.
\end{proof}

To guarantee uniqueness for the RC model, we can use the quantity
\begin{equation*}
    P_\infty(\beta, \bm{J}, \bm{h}) \coloneqq \sup_{x\in\Z^d}\sup_{\phi\in \mathcal{G}^{RC}}\phi\left(|C_x| = +\infty \right). 
\end{equation*}
This next theorem was proved in \cite{Cioletti_Vila_2016}.
\begin{theorem}\label{Theo: Uniqueness_with_P_infinity}
    For any $\beta\geq 0$, ferromagnetic nearest-neighbor interacion $\bm{J}=\{J_{i,j}\}_{i,j\in\Z^d}$ and non-negative external field $\bm{h}=(h_{i})_{i\in\Z^d}$, if $P_\infty(\beta, \bm{J}, \bm{h})=0$, then $\left|\mathcal{G}^{ES}\right|=\left|\mathcal{G}^{RC}\right|=1$
\end{theorem}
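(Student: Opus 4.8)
The plan is to push everything down to the random-cluster level and to prove the single statement $\phi^0=\phi^1$. Once that is available, the extremality bound \eqref{Eq: Extremality_of_wired_and_free} of Theorem~\ref{Theo: Limmiting_states_RC_and_ES} gives $\phi^0(f)\le\phi(f)\le\phi^1(f)$ for every non-decreasing local $f$ and every $\phi\in\mathcal{G}^{RC}$; since the increasing cylinder events $\{\omega:\omega_{e}=1\ \forall e\in S\}$ ($S\Subset\E$) form a $\pi$-system generating the product $\sigma$-algebra, agreement of $\phi$, $\phi^0$ and $\phi^1$ on them forces $\phi=\phi^0=\phi^1$, i.e.\ $|\mathcal{G}^{RC}|=1$. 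For the ES side I would use the hypothesis $P_\infty(\beta,\bm J,\bm h)=0$: then $\phi$-a.s.\ no RC configuration has an infinite open cluster, so conditionally on the edge configuration $\omega$ an ES measure is forced to colour each (finite) cluster $C(\omega)$ independently with the canonical weights $1:e^{-2\beta\sum_{i\in C(\omega)}h_i}$ (no boundary condition enters, since there are no infinite clusters). Hence any ES Gibbs measure equals its RC marginal composed with this fixed colouring kernel; as the RC marginal of $\nu\in\mathcal{G}^{ES}$ lies in $\mathcal{G}^{RC}$ (the finite-volume identities being exactly the Spin and RC marginal lemmas above), it must be the unique RC measure, and therefore $|\mathcal{G}^{ES}|=1$ as well — and, through Theorem~\ref{Theo: ES_to_Ising_isomosphism}, $|\mathcal{G}^{IS}_{\bm J}|=1$.

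It remains to establish $\phi^0=\phi^1$. First, $P_\infty=0$ gives in particular $\phi^1(|C_x|=+\infty)=0$ for every $x\in\Z^d$, since the wired limit $\phi^1$ (whose existence is Theorem~\ref{Theo: Limmiting_states_RC_and_ES}(I)) belongs to $\mathcal{G}^{RC}$. Fix $\Lambda\Subset\Z^d$ and an increasing cylinder event $A$ with support in $E_0(\Lambda)$. For $\Delta\supset\Lambda$, explore under $\phi^1_\Delta$ the open cluster $\mathcal C$ of the wired boundary $\partial\Delta$: by the domain-Markov property, conditionally on $\mathcal C$ the law inside $E_0(\Delta\setminus\mathcal C)$ is that of $\phi^0_{\Delta\setminus\mathcal C}$, and on the event $\{\mathcal C\cap\Lambda=\emptyset\}$ this, combined with the monotonicity of the free-boundary RC measures in the volume (a consequence of the Strong FKG property stated above), yields $\phi^1_\Delta\big(A\cap\{\mathcal C\cap\Lambda=\emptyset\}\big)\le\phi^0_\Delta(A)$. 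Together with $\phi^0_\Delta(A)\le\phi^1_\Delta(A)$ (monotonicity in the boundary condition) this gives
\[
 0\ \le\ \phi^1_\Delta(A)-\phi^0_\Delta(A)\ \le\ \phi^1_\Delta\big(\Lambda\longleftrightarrow\partial\Delta\big).
\]
Letting $\Delta\nearrow\Z^d$, the left-hand side converges to $\phi^1(A)-\phi^0(A)$, while the right-hand side tends to $\phi^1(\Lambda\longleftrightarrow\infty)$, which is $0$ by the union bound over $x\in\Lambda$ and the first sentence of this paragraph. Hence $\phi^1(A)=\phi^0(A)$ for all such $A$, so $\phi^0=\phi^1$.

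The step I expect to be the main obstacle is the domain-Markov manipulation in the second paragraph: one must verify that conditioning the wired-boundary RC measure on the cluster of $\partial\Delta$ genuinely leaves a free-boundary RC measure on the complementary region, and that this — as well as the monotonicities in volume and in boundary condition, and the convergence $\phi^1_\Delta(\Lambda\longleftrightarrow\partial\Delta)\to\phi^1(\Lambda\longleftrightarrow\infty)$ — remains valid in the presence of the inhomogeneous field $\bm h$ and of the cluster-weight factor $\prod_{C(\omega)}\big(1+e^{-2\beta\sum_{i\in C(\omega)}h_i}\big)$. Here one needs precisely the space-dependent-field versions of Strong FKG and of the existence/extremality of the wired and free limits, i.e.\ the statements imported above from \cite{Cioletti_Vila_2016} and \cite{Biskup_Borgs_Chayes_Kotecky_00}; because that cluster factor factorises over connected components, these properties do survive, and the classical Grimmett-type uniqueness argument then goes through essentially unchanged. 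Modulo that verification the remaining steps are routine bookkeeping.
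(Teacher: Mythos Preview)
The paper does not give its own proof of this theorem; it simply attributes the result to \cite{Cioletti_Vila_2016} and states it without argument. Your outline is essentially the standard proof (the Grimmett-style exploration/domain-Markov argument, extended to the field-dependent random-cluster model as in \cite{Biskup_Borgs_Chayes_Kotecky_00,Cioletti_Vila_2016}), and the RC half---proving $\phi^0=\phi^1$ via the boundary-cluster exploration and then invoking \eqref{Eq: Extremality_of_wired_and_free}---is correct. Your identification of the key technical point (that conditioning on the wired-boundary cluster leaves a genuine free-boundary RC measure, which holds because the cluster weight $\prod_C(1+e^{-2\beta\sum_{i\in C}h_i})$ factorises over components and the infinite cluster contributes the trivial factor $1$) is exactly right.

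One point deserving more care is your ES step. You assert that the RC marginal of an arbitrary $\nu\in\mathcal G^{ES}$ lies in $\mathcal G^{RC}$, but the marginal lemmas quoted in the paper are only stated for the specific free/wired ES measures, not for general ES Gibbs measures; the ES specification is indexed by $(\Lambda,E(\Lambda))$ while the RC specification is indexed by arbitrary $E\Subset\E$, and passing between the two DLR conditions is not entirely automatic. A cleaner route, and the one taken in \cite{Cioletti_Vila_2016}, is to prove $\phi^{ES,0}=\phi^{ES,1}$ directly: their edge marginals are $\phi^0=\phi^1$, and on the $\phi^1$-full-measure event of no infinite cluster the conditional colouring of spins given $\omega$ is the same product law under both boundary conditions, so the two ES measures coincide. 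One then needs an ES-level analogue of the extremality bound \eqref{Eq: Extremality_of_wired_and_free} (also established in \cite{Cioletti_Vila_2016}) to trap every $\nu\in\mathcal G^{ES}$ between $\phi^{ES,0}$ and $\phi^{ES,1}$. With that adjustment your argument is complete.
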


\subsubsection{Uniqueness for $\delta$<1}
To prove uniqueness for the semi-infinite Ising model, we first prove uniqueness for the usual Ising model with interaction given by \eqref{Eq: Definition_of_J_Lambda}. To do so, we use the RC and ES models presented in the previous section. 
\begin{theorem}\label{Theo: Uniquiness_Ising_J_lambda}
    The Ising model in $\Z^d$ with interaction $\bm{J}_\lambda$ defined in \eqref{Eq: Definition_of_J_Lambda}, $\lambda>0$ and external field $\bm{h}^*=(h_i^*)_{i\in\Z^d}$ given by \eqref{particular.external.field} has a unique state, for any inverse temperature $\beta>0$. 
\end{theorem}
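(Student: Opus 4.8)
The plan is to transport the uniqueness question to the random cluster (RC) side and rule out percolation there. By spin-flip symmetry we may assume $h^*>0$, so $\bm{h}^*$ of \eqref{particular.external.field} is a non-negative external field and all the FKG-type results of the previous section are available. By Theorem~\ref{Theo: ES_to_Ising_isomosphism}, the Ising model with interaction $\bm{J}_\lambda$ of \eqref{Eq: Definition_of_J_Lambda} and field $\bm{h}^*$ has a unique Gibbs state iff $|\mathcal{G}^{ES}_{\bm{J}_\lambda}|=1$, and by Theorem~\ref{Theo: Uniqueness_with_P_infinity} this holds as soon as
\[
    P_\infty(\beta,\bm{J}_\lambda,\bm{h}^*)=\sup_{x\in\Z^d}\ \sup_{\phi\in\mathcal{G}^{RC}_{\bm{J}_\lambda}}\phi\bigl(|C_x|=+\infty\bigr)=0 .
\]
So the whole proof reduces to establishing $P_\infty(\beta,\bm{J}_\lambda,\bm{h}^*)=0$ for every $\beta>0$, with no low-temperature restriction.

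\textbf{Reduction to the wired measure and monotonicity in the interaction.} Fix $x$, a box $B_n$ centred at $x$, and let $A_n$ be the (local, non-decreasing, cylinder) event that $x$ is joined to $\partial B_n$ by an open path contained in $B_n$, so $\{|C_x|=+\infty\}=\bigcap_n A_n$. By extremality (Theorem~\ref{Theo: Limmiting_states_RC_and_ES}(III)), for any $\phi\in\mathcal{G}^{RC}_{\bm{J}_\lambda}$ one has $\phi(A_n)\le \phi^1_{\bm{J}_\lambda}(A_n)$, hence $\phi(|C_x|=+\infty)\le \phi^1_{\bm{J}_\lambda}(|C_x|=+\infty)$; thus it suffices to control the wired measure. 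Now observe that $(\bm{J}_\lambda)_{i,j}\in\{J,\lambda/2\}$ for $|i-j|=1$, so $\bm{J}_\lambda\le \bm{J}'$ pointwise, where $\bm{J}'\equiv J':=\max\{J,\lambda/2\}$ is the constant nearest-neighbour interaction. Applying Proposition~\ref{Prop: RC_is_increasing_in_J} with $f=\mathbbm{1}_{A_n}$ and letting $\Lambda\nearrow\Z^d$ (the limiting measures exist by Theorem~\ref{Theo: Limmiting_states_RC_and_ES}(I)) gives $\phi^1_{\bm{J}_\lambda}(A_n)\le \phi^1_{\bm{J}'}(A_n)$, and then $n\to\infty$ yields $\phi^1_{\bm{J}_\lambda}(|C_x|=+\infty)\le \phi^1_{\bm{J}'}(|C_x|=+\infty)$.

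\textbf{Invoking the constant-interaction result.} For the constant interaction $J'$, the field $\bm{h}^*$ with $h^*>0$ and exponent $\delta<1$, the analysis of \cite{Bissacot_Cass_Cio_Pres_15,Cioletti_Vila_2016} gives uniqueness of the Ising Gibbs state at every temperature; concretely this is the statement $P_\infty(\beta,\bm{J}',\bm{h}^*)=0$ for all $\beta>0$, i.e. $\phi^1_{\bm{J}'}(|C_x|=+\infty)=0$. Chaining the three inequalities, $\phi(|C_x|=+\infty)=0$ for every $x$ and every $\phi\in\mathcal{G}^{RC}_{\bm{J}_\lambda}$, so $P_\infty(\beta,\bm{J}_\lambda,\bm{h}^*)=0$. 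Theorem~\ref{Theo: Uniqueness_with_P_infinity} then gives $|\mathcal{G}^{ES}_{\bm{J}_\lambda}|=|\mathcal{G}^{RC}_{\bm{J}_\lambda}|=1$, and Theorem~\ref{Theo: ES_to_Ising_isomosphism} gives $|\mathcal{G}^{IS}_{\bm{J}_\lambda}|=1$, which is the claim.

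\textbf{Where the difficulty lies.} The genuinely hard input is not in this argument: it is the all-temperature absence-of-percolation estimate for a field decaying like $|i|^{-\delta}$ with $\delta<1$, which is imported from \cite{Cioletti_Vila_2016}. Within the present proof the only points needing care are (i) verifying that the connectivity events $A_n$ are local non-decreasing cylinder events so that Theorem~\ref{Theo: Limmiting_states_RC_and_ES}(III) and Proposition~\ref{Prop: RC_is_increasing_in_J} apply verbatim, and that the limits $\Lambda\nearrow\Z^d$ and then $n\to\infty$ may be taken in that order; and (ii) noting that Proposition~\ref{Prop: RC_is_increasing_in_J} compares two models with the \emph{same} external field $\bm{h}^*$, only the interaction changing, so no rescaling of the field is incurred when passing from $\bm{J}_\lambda$ to the constant $J'$.
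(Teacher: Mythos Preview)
Your proof is correct and follows essentially the same route as the paper: reduce Ising uniqueness to $|\mathcal{G}^{ES}|=1$ via Theorem~\ref{Theo: ES_to_Ising_isomosphism}, then to $P_\infty=0$ via Theorem~\ref{Theo: Uniqueness_with_P_infinity}, use the extremality of the wired measure (Theorem~\ref{Theo: Limmiting_states_RC_and_ES}(III)) together with the monotonicity in $\bm{J}$ from Proposition~\ref{Prop: RC_is_increasing_in_J} to dominate by a constant-interaction RC model, and finally invoke \cite{Cioletti_Vila_2016} for the latter. Your version is in fact slightly tidier in two respects: you make the comparison constant $J'=\max\{J,\lambda/2\}$ explicit (the paper writes ``any constant $\bm{J}\equiv J$'' without naming it), and you pass through the local increasing events $A_n=\{x\leftrightarrow\partial B_n\}$ before sending $n\to\infty$, which is the rigorous way to apply Proposition~\ref{Prop: RC_is_increasing_in_J} since that proposition is stated for local functions whereas $\mathbbm{1}_{\{|C_x|=\infty\}}$ is not.
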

\begin{proof}
    Fixed $\beta>0$, by Theorem \ref{Theo: ES_to_Ising_isomosphism}, it is enough to show that $\left| \mathcal{G}^{ES}_{\beta, \bm{J}_\lambda, \bm{h}^*}\right|=1$. It was shown in \cite{Cioletti_Vila_2016} that, for any constant nearest-neighbor ferromagnetic interaction $\bm{J}\equiv J$, $P_\infty(\beta, \bm{J}, \bm{h}^*)=0$ and, in particular, $\phi^1_{\bm{J}, \bm{h}^*}\left( |C_x| = +\infty \right) = 0$ for all $x\in \Z^d$. 
    For any $x\in\Z^d$, the function $\mathbbm{1}_{\{|C_x| = +\infty\}}$ is increasing. Then, Proposition \ref{Prop: RC_is_increasing_in_J} yields
    \begin{equation*}
        \phi^1_{\Lambda; \bm{J}_\lambda, \bm{h}^*}\left( |C_x| = +\infty \right) \leq    \phi^1_{\Lambda; \bm{J}, \bm{h}^*}\left( |C_x| = +\infty \right)
    \end{equation*}
    for any $\Lambda\Subset\Z^d$ and $x\in\Z^d$. By Theorem \ref{Theo: Limmiting_states_RC_and_ES}, we can take the limit $\Lambda\nearrow\Z^d$ to get $ \phi^1_{\bm{J}_\lambda, \bm{h}^*}\left( |C_x| = +\infty \right) =0$ for all $x\in\Z^d$. Since $\phi^1_{\bm{J}_\lambda, \bm{h}^*}$ is extremal, in the sense of \eqref{Eq: Extremality_of_wired_and_free}, we conclude that $P_\infty(\beta, \bm{J}_\lambda, \bm{h}^*)=0$, and therefore we have $\left| \mathcal{G}^{ES}_{\beta, \bm{J}_\lambda, \bm{h}^*}\right|=1$ by Theorem \ref{Theo: Uniqueness_with_P_infinity}. 
\end{proof}

Now we prove the main result of this section. We prove uniqueness for the semi-infinite Ising with external field given by \eqref{particular.external.field} and $\delta<1$ at any temperature, by comparing it with the model of Theorem \ref{Theo: Uniquiness_Ising_J_lambda}.
\begin{theorem}
    The semi-infinite Ising model with interaction $J>0$ and external field $\bm{\lambda}=(\lambda_i)_i\in\H+$ with $\lambda_i=\frac{\lambda}{i_d^\delta}$
for all $i\in\H+$ and $\delta<1$ has a unique Gibbs state. 
\end{theorem}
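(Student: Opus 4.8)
The plan is to control the difference of semi-infinite magnetizations $\langle\sigma_x\rangle^{+}_{J,\bm{\lambda}}-\langle\sigma_x\rangle^{-}_{J,\bm{\lambda}}$ by the corresponding difference for the $\Z^d$ model with interaction $\bm{J}_\lambda$ and field $\bm{h}^*$ of Theorem~\ref{Theo: Uniquiness_Ising_J_lambda}, which vanishes, and then to absorb the mismatch between the two models by means of the monotonicity~\eqref{Eq: Diff_mag_is_decreasing} of that difference in the external field. By Proposition~\ref{basta.comparar.magnetizacoes} (which rests on the duplicated-variable inequalities, Proposition~\ref{Consequence.of.DVI}) it suffices to prove $\langle\sigma_x\rangle^{+}_{J,\bm{\lambda}}=\langle\sigma_x\rangle^{-}_{J,\bm{\lambda}}$ for every $x\in\H+$.

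First I would produce an auxiliary semi-infinite model that is already known to be unique. In the Ising representation~\eqref{J.for.the.semi.infinite}, the $\Z^d$ model with interaction $\bm{J}_\lambda$ of~\eqref{Eq: Definition_of_J_Lambda} joins the layer $L_0$ to $L_1$ and to $L_{-1}$ by bonds of strength $\lambda/2$. In a finite box $\Lambda$, conditioning the measure on the increasing event $\{\sigma_i=+1\text{ for all }i\in L_0\cap\Lambda\}$ leaves, on $\{i_d\ge 1\}$, precisely the finite-volume semi-infinite model with wall influence $\lambda/2$ and field $\bm{h}^*$ restricted to $\{i_d\ge 1\}$ (each wall site has a single neighbour in $L_0$, which turns the $\lambda/2$-bond into a $\lambda/2$-field). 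Since conditioning on an increasing event can only raise a finite-volume Ising state (a standard consequence of FKG), letting $\Lambda\nearrow\Z^d$ gives
\begin{equation*}
\langle\sigma_x\rangle^{+}_{\Z^d,\,\bm{J}_\lambda,\,\bm{h}^*}\le\langle\sigma_x\rangle^{+}_{\lambda/2,\,\bm{h}^*},\qquad \langle\sigma_x\rangle^{-}_{\lambda/2,\,\bm{h}^*}\ge\langle\sigma_x\rangle^{-}_{\Z^d,\,\bm{J}_\lambda,\,\bm{h}^*},
\end{equation*}
where $\langle\cdot\rangle^{\pm}_{\lambda/2,\,\bm{h}^*}$ are the extremal states of the semi-infinite model with wall influence $\lambda/2$ and field $\bm{h}^*|_{\{i_d\ge 1\}}$. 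By Theorem~\ref{Theo: Uniquiness_Ising_J_lambda} the two right-hand sides coincide, so together with the trivial FKG bound $\langle\sigma_x\rangle^{-}\le\langle\sigma_x\rangle^{+}$ we obtain $\langle\sigma_x\rangle^{+}_{\lambda/2,\,\bm{h}^*}=\langle\sigma_x\rangle^{-}_{\lambda/2,\,\bm{h}^*}$ for every $x$; that is, this auxiliary semi-infinite model has a unique Gibbs state.

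Second, I would reduce the model of the theorem to the auxiliary one by a pointwise comparison of fields. A semi-infinite wall influence $\mu$ is just an external field of strength $\mu$ on $\calW=L_1$ (a bond with a frozen $+1$ neighbour), so both $[\,J,\bm{\lambda}\,]$ and the auxiliary model become \emph{half-space Ising models with the same constant interaction} $J$, carrying fields $f$ and $g$ on $\{i_d\ge 1\}$. Taking the free parameter of $\bm{h}^*$ equal to $\lambda$ and using $|i|\ge i_d$ for $i\in\H+$, one checks $f\ge g$ site by site: on $L_1$, $\lambda+\lambda\ge \lambda/|i|^{\delta}+\lambda/2$, and on $L_k$ with $k\ge 2$, $\lambda/k^{\delta}\ge \lambda/|i|^{\delta}$. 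As the two half-space models share the same interaction, \eqref{Eq: Diff_mag_is_decreasing} applies and yields, for every $x\in\H+$ and every $\beta>0$,
\begin{equation*}
0\le\langle\sigma_x\rangle^{+}_{J,\bm{\lambda}}-\langle\sigma_x\rangle^{-}_{J,\bm{\lambda}}\le\langle\sigma_x\rangle^{+}_{\lambda/2,\,\bm{h}^*}-\langle\sigma_x\rangle^{-}_{\lambda/2,\,\bm{h}^*}=0,
\end{equation*}
which is exactly the equality the first paragraph reduced the theorem to. Note that the hypothesis $\delta<1$ enters only here, through the range of validity of Theorem~\ref{Theo: Uniquiness_Ising_J_lambda}, and that the conclusion holds at all temperatures because that theorem does.

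The step I expect to be the main obstacle is the conditioning in the second paragraph: one is ultimately conditioning on the event that the \emph{entire} infinite layer $L_0$ equals $+1$, which has probability zero in the limit, so the argument has to be carried out in finite volume, with the FKG domination established before passing to the limit, and one must check that the conditioned finite-volume measure restricted to $\{i_d\ge 1\}$ is \emph{exactly} the semi-infinite specification with wall influence $\lambda/2$. The remaining ingredients --- the elementary field inequalities and the distinguished value of $\bm{h}^*$ at the origin --- are routine once this is set up.
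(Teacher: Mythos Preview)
Your second step (the field comparison on the half-space) is fine, but the first step contains a genuine gap. From FKG conditioning on the increasing event $\{L_0\equiv+1\}$ you correctly obtain
\[
\langle\sigma_x\rangle^{+}_{\Z^d,\,\bm{J}_\lambda,\,\bm{h}^*}\le\langle\sigma_x\rangle^{+}_{\lambda/2,\,\bm{h}^*}
\quad\text{and}\quad
\langle\sigma_x\rangle^{-}_{\Z^d,\,\bm{J}_\lambda,\,\bm{h}^*}\le\langle\sigma_x\rangle^{-}_{\lambda/2,\,\bm{h}^*},
\]
because conditioning on an increasing event raises \emph{both} extremal states. Together with Theorem~\ref{Theo: Uniquiness_Ising_J_lambda} this only says that the common full-lattice value lies below both semi-infinite extremal values; combined with $\langle\sigma_x\rangle^{-}\le\langle\sigma_x\rangle^{+}$ you get
\[
\langle\sigma_x\rangle_{\Z^d}\le\langle\sigma_x\rangle^{-}_{\lambda/2,\,\bm{h}^*}\le\langle\sigma_x\rangle^{+}_{\lambda/2,\,\bm{h}^*},
\]
which does not force equality. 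To make a sandwich you would need one inequality pointing the other way, e.g.\ $\langle\sigma_x\rangle^{+}_{\lambda/2,\,\bm{h}^*}\le\langle\sigma_x\rangle^{+}_{\Z^d}$, and FKG conditioning cannot give that.

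The paper circumvents this by never attempting to prove uniqueness for an auxiliary semi-infinite model. Instead it compares the \emph{difference} $\langle\sigma_{e_d}\rangle^{+}-\langle\sigma_{e_d}\rangle^{-}$ directly, using Proposition~\ref{Consequence.of.DVI} (the duplicated-variable inequality, not FKG) to show this difference is monotone non-increasing in each field coordinate $h_i$. Concretely, the semi-infinite model is written as a $\Z^d$ model on two disconnected half-spaces; one then interpolates by placing a field $h$ on the layer $L_0$ and observes that the difference of magnetizations is decreasing in $h$, recovering the semi-infinite difference as the $h\to\infty$ limit and the connected full-lattice difference at finite $h=\lambda$. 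A further monotonicity step in the remaining field coordinates reduces to the model of Theorem~\ref{Theo: Uniquiness_Ising_J_lambda}. The crucial point is that monotonicity of the \emph{difference} (from the duplicated system) gives exactly the one-sided inequality needed, whereas your FKG conditioning controls each state separately and pushes both the same way.
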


\begin{proof}
        Proposition \ref{basta.comparar.magnetizacoes} guarantees that it is enough to prove $\langle \sigma_{e_d} \rangle^{+}_{\bm{\lambda}} = \langle \sigma_{e_d} \rangle^{-}_{\bm{\lambda}}$, where $e_d=(0,\dots,0,1)$ is a base vector of $\H+$. By spin-flip symmetry, we can assume without loss of generality that $\lambda\geq 0$. Split $\mathbb{Z}^d$ in layers $L_k \coloneqq \mathbb{Z}^{d-1}\times\{k\}$, with $k\in\mathbb{Z}$. For any $\Lambda_n=[-n,n]^{d-1}\times[1, n]$, we rewrite the semi-infinite model as the usual Ising model but now with interaction $\bm{J}_{\lambda}$ and external field $\bm{\lambda}_0$ given by 
    $$(\bm{\lambda}_0)_i = \begin{cases} 
                        \lambda/2 &\text{ if } i\in L_1,\\
                        \lambda\mid i_d \mid^{-\delta}, &\text{ if } i\in L_k, k>1,\\
                        0 &\text{ otherwise}, \\
                        \end{cases}$$ 
    for any $i,j\in\mathbb{Z}^d$. Hence, 
    \begin{align}
        &\langle \sigma_{e_d} \rangle^{+}_{\Lambda_n; \bm{\lambda}} = \langle \sigma_{e_d} \rangle^{+, \bm{J}_{\lambda}}_{\Lambda_n; \bm{\lambda}_0} &\text{ and } &&\langle \sigma_{e_d} \rangle^{-}_{\Lambda_n; \bm{\lambda}} = \langle \sigma_{e_d} \rangle^{\mp, \bm{J}_{\lambda}}_{\Lambda_n; \bm{\lambda}_0}
    \end{align}
where $(\mp)_i = \mathbbm{1}_{\{i\in\mathbb{Z}^d\setminus \H+\}} -  \mathbbm{1}_{\{i\in\H+\}}$. Consider $\bm{\lambda}_0^\prime$ an extension of $\bm{\lambda}_0$ to $\Z^d$ given by $(\lambda_0^\prime)_{i} \coloneqq (\lambda_0)_{i}$ when $i\in \H+$ and $(\lambda_0^\prime)_{i} \coloneqq (\lambda_0)_{i^\prime}$ when $i\in \mathbb{Z}^d\setminus\H+$, where $i^\prime = (i_1,\dots,i_{d-1}, -i_d)$. Since the boxes $\Lambda_n$ and $\Lambda_n^\prime \coloneqq [-n,n]^{d-1}\times [-n, -1]$ are not connected, taking $\Delta_n^\prime\coloneqq \Lambda_n \cup \Lambda_n^\prime$ we have 

\begin{align}
    &\langle \sigma_{e_d} \rangle^{+, \bm{J}_{\lambda}}_{\Lambda_n; \bm{\lambda}_0} = \langle \sigma_{e_d} \rangle^{+, \bm{J}_{\lambda}}_{\Delta_n^\prime; \bm{\lambda}^\prime_0} &\text{ and } && \langle \sigma_{e_d} \rangle^{\mp, \bm{J}_{\lambda}}_{\Lambda_n; \bm{\lambda}_0} = \langle \sigma_{e_d} \rangle^{\mp, \bm{J}_{\lambda}}_{\Delta_n^\prime; \bm{\lambda}^\prime_0}.
\end{align}
For every $h\geq0$, let $\bm{h}^w$ be an external field acting only on $L_0$, that is  $h^w_i=h\mathbbm{1}_{\{i\in L_0\}}$ for all $i\in\Z^d$. Then, denoting $\Delta_n=\Delta_n^*\cup L_0$,

\begin{align}
    &\langle \sigma_{e_d} \rangle^{+, \bm{J}_{\lambda}}_{\Delta_n^\prime; \bm{\lambda}^\prime_0} = \lim_{h\to\infty}\langle \sigma_{e_d} \rangle^{+, \bm{J}_{\lambda}}_{\Delta_n; \bm{\lambda}^\prime_0 + \bm{h}^w} &\text{ and } &&  \langle \sigma_{e_d} \rangle^{\mp, \bm{J}_{\lambda}}_{\Delta_n^\prime; \bm{\lambda}^\prime_0} = \lim_{h\to \infty}  \langle \sigma_{e_d} \rangle^{\mp, \bm{J}_{\lambda}}_{\Delta_n; \bm{\lambda}^\prime_0 + \bm{h}^w}.
\end{align}
Differentiating in $h$ and using Proposition  \ref{Consequence.of.DVI}, we see that the difference $\langle \sigma_{e_d} \rangle^{+, \bm{J}_{\lambda}}_{\Delta_n; \bm{\lambda}^\prime_0 + \bm{h}^0} - \langle \sigma_{e_d} \rangle^{\mp, \bm{J}_{\lambda}}_{\Delta_n; \bm{\lambda}^\prime_0 + \bm{h}^0}$ is decreasing in $h$ for $h\geq 0$. Proposition \ref{Consequence.of.DVI} was stated for the semi-infinite states, but it also holds for Ising states since the DVI are in this generality. We can then bound the difference of the states by choosing the particular case $h=\lambda$. Denoting $\bm{\lambda}_l = \bm{\lambda}^\prime_0 + \bm{\lambda}^0$, we conclude that 
\begin{equation}\label{eq: uniqueness}
    \langle \sigma_{e_d} \rangle^{+}_{\Lambda_n; \bm{\lambda}}- \langle \sigma_{e_d} \rangle^{-}_{\Lambda_n; \bm{\lambda}}\leq \langle \sigma_{e_d} \rangle^{+, \bm{J}_{\lambda}}_{\Delta_n; \bm{\lambda}_l} - \langle \sigma_{e_d} \rangle^{\mp, \bm{J}_{\lambda}}_{\Delta_n; \bm{\lambda}_l}.
\end{equation}
Again by Proposition \ref{Consequence.of.DVI}, the RHS of equation \eqref{eq: uniqueness} in non-increasing in $h_i$, the external field on the site $i$, for any $i\in\mathbb{Z}^d$. So, denoting $\bm{\lambda}^{Is}$ the external field \eqref{particular.external.field} considered in \cite{Bissacot_Cass_Cio_Pres_15} with $h^*=\lambda/2$, we have that, for any $i\in\mathbb{Z}^d$, $h_i^{Is}=\frac{\lambda}{2}|i|^{-\delta}\leq \frac{\lambda}{2}|i_d|^{\delta}\leq (\bm{\lambda}_l)_i$, hence 
\begin{equation}
    \langle \sigma_{e_d} \rangle^{+}_{\Lambda_n; \bm{\lambda}}- \langle \sigma_{e_d} \rangle^{-}_{\Lambda_n; \bm{\lambda}}\leq \langle \sigma_{e_d} \rangle^{+, \bm{J}_{\lambda}}_{\Delta_n; \bm{\lambda}^{Is}} - \langle \sigma_{e_d} \rangle^{\mp, \bm{J}_{\lambda}}_{\Delta_n; \bm{\lambda}^{Is}}.
\end{equation}
Taking the limit in $n$, the RHS of equation above goes to $\langle \sigma_{e_d} \rangle^{+, \bm{J}_{\lambda}}_{\bm{\lambda}^{Is}} - \langle \sigma_{e_d} \rangle^{\mp, \bm{J}_{\lambda}}_{\bm{\lambda}^{Is}}$, that is equal to $0$ by Theorem \ref{Theo: Uniquiness_Ising_J_lambda}. We conclude that $\langle \sigma_{e_d} \rangle^{+}_{\bm{\lambda}} - \langle \sigma_{e_d} \rangle^{-}_{\bm{\lambda}}=0$, so there is only one Gibbs state. When $\lambda > 2J$, we replace $\bm{J}_\lambda$ by $\bm{J}\equiv J$ in the argument above and the same proof holds with minor adjustments. 
\end{proof}

\chapter{Random Field Ising Model}
This chapter follows the argument of \cite{Ding2021} and proves phase transition for the nearest-neighbor Ising model with a random field. In Section 1 we present the model and the overall strategy of the Peierls' argument. In Section 2, we present the Ding and Zhuang approach to prove phase transition and define the bad event. In Section 3, we follow the work of \cite{FFS84} and use a coarse-graining argument to upper bound the probability of the bad event and complete the proof of phase transition for the RFIM.

\section{The model}
The random field Ising model (RFIM) consists of the usual Ising model, previously introduced in Chapter 1, but with an external field that is random. The \textit{local Hamiltonian of the random field Ising model} in $\Lambda\Subset\Z^d$ with $\eta$-boundary condition is $H_{\Lambda, \varepsilon h}^{\eta, \IS}:  \Omega_\Lambda^\eta \to \mathbb{R}$, given by 
\begin{equation}
    H_{\Lambda; \varepsilon h}^{\eta, \IS}(\sigma)\coloneqq -\sum_{\substack{x,y\in\Lambda \\ |x-y|=1}} J\sigma_x\sigma_y - \sum_{\substack{x\in \Lambda, y\in\Lambda^c \\ |x-y|=1}} J\sigma_x\eta_y - \sum_{x\in\Lambda} \varepsilon h_x\sigma_x,
\end{equation}
where the external field is a family $\{h_x\}_{x\in\Z^d}$ of i.i.d. random variables in $(\widetilde{\Omega}, \mathcal{A}, \mathbb{P})$, and every $h_x$ has a standard normal distribution\footnote{ Our results also hold for more general distributions of $h_x$, see Remarks \ref{Rmk: Bernoulli_external_field} and \ref{Rmk: More.general.h_x}. }. The parameter $\varepsilon >0$ controls the variance of the external field. Given $\Lambda\Subset\Z^d$, consider $\mathscr{F}_\Lambda$ the $\sigma$-algebra generated by the cylinders sets supported in $\Lambda$ and $\mathscr{F}$ the $\sigma$-algebra generated by finite union of cylinders. One of the main objects of study in classical statistical mechanics is the \textit{finite volume Gibbs measures}, which are probability measures in $(\Omega, \mathscr{F})$, given by 
    \begin{equation}
        \mu_{\Lambda;\beta, \varepsilon h}^{\eta, \IS}(\sigma) \coloneqq \mathbbm{1}_{\Omega_\Lambda^\eta}(\sigma)\frac{e^{-\beta H_{\Lambda, \varepsilon h}^{\eta, \IS}(\sigma)}}{Z_{\Lambda; \beta, \varepsilon}^{\eta, \IS}(h)},
    \end{equation}
where $\beta>0$ is the inverse temperature and $Z_{\Lambda; \beta, \varepsilon}^{\eta, \IS}$ is called \textit{partition function}, defined as 

\begin{equation}
    Z_{\Lambda; \beta, \varepsilon}^{\eta, \IS}(h)\coloneqq \sum_{\sigma\in\Omega_\Lambda^\eta} e^{-\beta H_{\Lambda, \varepsilon h}^{\eta, \IS}(\sigma)}.
\end{equation}
One important remark is that, since the external field is random, the Gibbs measures are random variables. To explicit the dependence of $\mu_{\Lambda;\beta, \varepsilon h}^{\eta, \IS}$ on $\widetilde{\Omega}$, we write $\mu_{\Lambda;\beta, \varepsilon h}^{\eta, \IS}[\omega]$, with $\omega$ being a general element of $\widetilde{\Omega}$. Two particularly important boundary conditions are given by the configurations $\eta_{+} \equiv +1$ and $\eta_{-} \equiv -1$, and are called $+$ and $-$ boundary conditions, respectively. For these boundary conditions, we can $\mathbb{P}$-almost surely define the infinite volume measures by taking the weak*-limit
\begin{equation}
    \mu_{\beta,\varepsilon h}^{\pm, \IS}[\omega] \coloneqq \lim_{n\to\infty} \mu_{\Lambda_n;\beta, \varepsilon h}^{\pm, \IS}[\omega],
\end{equation}
where $(\Lambda_n)_{n\in\mathbb{N}}$ is any sequence invading $\Z^d$, that is, for any subset $\Lambda\Subset\mathbb{Z}^d$, there exists $N=N(\Lambda)>0$ such that $\Lambda\subset\Lambda_n$ for every $n>N$. By Lemma \ref{extremality.of.+.and.-.bc}, for any fixed external field, the measures $\mu_{\Lambda_n;\beta, \varepsilon h}^{\pm, \IS}[\omega]$ are monotone, which guarantees the existence of the limits over sequences invading $\Z^d$. 
To have more than one Gibbs measure, it is enough to show that $\mu_{\beta,\varepsilon h}^{+, \IS}[\omega]\neq  \mu_{\beta,\varepsilon h}^{-, \IS}[\omega]$, with $\mathbb{P}$-probability 1, see \cite[Theorem 7.2.2]{Bovier.06}.

The standard strategy to prove phase transition in the Ising model is to use the Peierls' argument, which based on the idea of erasing contours. Contours are geometric objects in the dual lattice $\mathbb{Z}^d_*$ defined as: denoting $C_x$ the closed unit cube in $\mathbb{R}^d$ centered in $x$, $\mathbb{Z}^d_*$ is the union of all faces $C_x\cap C_y$ with $|x-y| = 1$. Given a configuration, its contours are the maximal connected components of the union of the faces $C_x\cap C_y$ satisfying $\sigma_x \neq \sigma_y$. The set of contours of $\sigma$ is denoted by $\Gamma(\sigma)$, and $\gamma$ denotes a generic element of $\Gamma(\sigma)$. Moreover, $\Gamma(\Omega)$ denotes all family of contours that can be associated to a configuration, so $\Gamma(\Omega)\coloneqq \cup_{\sigma\in\Omega}\Gamma(\sigma)$. The \textit{interior} of a contour $\gamma$, denoted $\I(\gamma)$, is the set of points connected to $\infty$ only by paths crossing $\gamma$. Given $n\in\mathbb{N}$, take
\begin{equation*}
    \Gamma_0(n) \coloneqq \{\gamma\in\Gamma(\Omega) \ : \ 0\in\I(\gamma), \  |\gamma| = n\}
\end{equation*}
and $\Gamma_0 = \cup_{n\geq 1}\Gamma_0(n)$. The operation $\tau_{\gamma}$ used to remove a contour $\gamma\in\Gamma(\sigma)$ can be written as a particular case of the following one: given $A\subset\Z^d$, take $\tau_A:\mathbb{R}^{\Z^d} \xrightarrow{} \mathbb{R}^{\Z^d}$ as 
\begin{equation}\label{Def: tau_A}
    (\tau_A(\sigma))_i \coloneqq \begin{cases}
                        -\sigma_i &\text{if }i\in A,\\
                        \sigma_i   &\text{otherwise},
                      \end{cases}
\end{equation}
for every $i\in\Z^d$. The transformation that erases a contour $\gamma$ is $\tau_\gamma(\sigma) \coloneqq \tau_{\I(\gamma)}(\sigma)$. The key property of this contour system is that we can bound the difference in the Hamiltonian after erasing a contours, when there is no external field.
\begin{proposition}
    There is a constant $c_1(d)>0$ such that, for any $\sigma\in\Omega^+$ and $\gamma\in\Gamma(\sigma)$,
    \begin{equation}\label{Eq: Energy_cost_erasing_contour_SR}
        H_{\Lambda, 0}^{+, \IS}(\tau_{\gamma}(\sigma)) - H_{\Lambda, 0}^{+, \IS}(\sigma) \leq - J c_1(d) |\gamma|.
    \end{equation}
\end{proposition}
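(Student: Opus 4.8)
The plan is to pass to the low-temperature (contour) representation of the Hamiltonian and then to show that the operation $\tau_\gamma$ removes exactly the contour $\gamma$ from the interface, leaving every other contour untouched, so that the energy gain is literally proportional to $|\gamma|$.

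First I would rewrite the energy in terms of disagreeing edges. For $\sigma\in\Omega^+$, using $\sigma_x\sigma_y = 1 - 2\mathbbm{1}_{\{\sigma_x\neq\sigma_y\}}$ on each edge $\{x,y\}$ with $\{x,y\}\cap\Lambda\neq\emptyset$ (with the convention $\sigma_y=+1$ for $y\notin\Lambda$), one gets $H_{\Lambda,0}^{+,\IS}(\sigma) = c_\Lambda + 2J\,|\Gamma(\sigma)|$, where $c_\Lambda$ depends only on $\Lambda$ and $|\Gamma(\sigma)|$ is the number of plaquettes of the interface; every disagreeing edge touches $\Lambda$ since all spins outside $\Lambda$ equal $+1$, so none is missed in the count. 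Next one checks that $\I(\gamma)$ is a bounded subset of $\Lambda$: any site of $\Lambda^c$ reaches infinity along a path inside $\Lambda^c$, none of whose plaquettes lies in the interface, hence it is not separated from infinity by $\gamma$. Consequently $\tau_\gamma(\sigma)=\tau_{\I(\gamma)}(\sigma)$ again lies in $\Omega^+$, the same identity applies to it, and the difference reduces to a plaquette count:
\begin{equation*}
  H_{\Lambda,0}^{+,\IS}(\tau_\gamma(\sigma)) - H_{\Lambda,0}^{+,\IS}(\sigma) = 2J\bigl(|\Gamma(\tau_\gamma(\sigma))| - |\Gamma(\sigma)|\bigr).
\end{equation*}

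The core step is the combinatorial identity $\Gamma(\tau_\gamma(\sigma)) = \Gamma(\sigma)\setminus\{\gamma\}$. Since $(\tau_\gamma(\sigma))_x = (-1)^{\mathbbm{1}_{\{x\in\I(\gamma)\}}}\sigma_x$, a dual plaquette $C_x\cap C_y$ lies in the interface of $\tau_\gamma(\sigma)$ iff $\mathbbm{1}_{\{\sigma_x\neq\sigma_y\}} + \mathbbm{1}_{\{x\in\I(\gamma)\}} + \mathbbm{1}_{\{y\in\I(\gamma)\}}$ is odd. Two facts then settle it: (i) the edges with exactly one endpoint in $\I(\gamma)$ are precisely those whose dual plaquette lies on $\gamma$ — the separation property built into the definition of $\I(\gamma)$ — and such plaquettes are disagreeing; (ii) a disagreeing edge whose plaquette is not on $\gamma$ has both endpoints on the same side of $\gamma$, because distinct connected components of the interface are disjoint and a single plaquette off $\gamma$ cannot straddle $\I(\gamma)$ and $\Ext(\gamma)$. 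Hence the interface plaquettes outside $\gamma$ survive, those of $\gamma$ disappear, and no new ones are created, so the plaquette count drops by exactly $|\gamma|$. Plugging this in,
\begin{equation*}
  H_{\Lambda,0}^{+,\IS}(\tau_\gamma(\sigma)) - H_{\Lambda,0}^{+,\IS}(\sigma) = -2J|\gamma|,
\end{equation*}
so the proposition holds with $c_1(d) = 2$; I keep the notation $c_1(d)$ only for uniformity with the coarser estimates used later for long-range interactions.

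The main obstacle is the topological content of (i)–(ii): that $\gamma$, a maximal connected union of unit plaquettes of $\Z^d_*$, partitions $\Z^d$ into $\I(\gamma)$ and $\Ext(\gamma)$, and that its plaquettes are exactly those straddling this partition. In $d=2$ this is essentially the Jordan curve theorem applied to the contour; for $d\geq 3$ I would either invoke the standard treatment of Peierls contours (e.g.\ \cite{FV-Book}) or argue directly that the interface is a $\mathbb{Z}/2$-cycle — every $(d-2)$-cell of the cubical complex lies in an even number of interface plaquettes — so that each connected component is the coboundary of the set of sites it encloses, which is exactly the statement needed. Everything else is bookkeeping.
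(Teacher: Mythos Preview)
Your argument is correct and is precisely the classical Peierls computation. The paper does not actually supply a proof of this proposition: it is stated as a well-known ingredient of the Peierls argument, with the surrounding text simply saying ``This bound on the energy cost of erasing a contour is the first ingredient of a Peierls' argument,'' and moving on to the counting estimate $|\Gamma_0(n)|\leq e^{c_2(d)n}$. So there is nothing to compare beyond noting that your proof is the standard one the paper is implicitly invoking; in fact you obtain the sharp equality $H_{\Lambda,0}^{+,\IS}(\tau_\gamma(\sigma)) - H_{\Lambda,0}^{+,\IS}(\sigma) = -2J|\gamma|$, so $c_1(d)=2$, whereas the paper only records the inequality with an unspecified $c_1(d)$. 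The one point worth tightening is notational: you use $|\Gamma(\sigma)|$ for the total plaquette count of the interface rather than the number of contours, which clashes slightly with the paper's convention where $\Gamma(\sigma)$ is the \emph{set} of contours; writing $\sum_{\gamma'\in\Gamma(\sigma)}|\gamma'|$ would match better.
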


This bound on the energy cost of erasing a contour is the first ingredient of a Peierls' argument. The second key ingredient in to bound the number of contours with a fixed size. It is well-known that $|\Gamma_0(n)|\leq e^{c_2(d)n}$ for a suitable constant $c_2(d)>0$. The best bound for this constant is due to Balister and Bollob{\'a}s, \cite{Balister_Bollobas_07}.

\section{Ding and Zhuang approach}\label{Sec: Ding and Zhuang approach}
The main idea used in Ding and Zhuang's proof of phase transition in \cite{Ding2021} is to make the Peierls' argument on the joint space of the configurations and the external field, and when erasing a contour, perform in the external field the same flips you do in the configuration. Doing this, the part on the Hamiltonian that depends on the external field does not change, but the partition function does. The complication of this method is to control such differences. \\

Given $\Lambda\subset\Z^d$, define the local joint measure for $(\sigma, h)$ as
\begin{equation*}
    \mathbb{Q}_{\Lambda; \beta, \varepsilon}^{+, \IS}(\sigma \in A, h\in B) = \int_{B} \mu_{\Lambda;\beta, \varepsilon h}^{+, \IS}(A) d\mathbb{P}(h),
\end{equation*}
for $A\subset\Omega$ measurable and $B\subset \mathbb{R}^{\Lambda}$ borelian. Since $\beta, \varepsilon $ and $\Lambda$ are fixed, we will omit then from the notation. 
This measure $\mathbb{Q}$ has density
\begin{equation*}
    g_{\Lambda; \beta, \varepsilon}^{+, \IS}(\sigma, h) = \prod_{u\in\Lambda}\frac{1}{\sqrt{2\pi}}e^{-\frac{1}{2}h_u^2} \times \mu_{\Lambda;\beta, \varepsilon h}^{+, \IS}(\sigma).
\end{equation*}

The main idea used in the proof of phase transition in \cite{Ding2021} is to make the Peierls' argument on the measure $\mathbb{Q}$, and perform in the external field the same flips you do in the configuration when erasing a contour. Formally, in \cite{Ding2021} they compare the density $g_{\Lambda; \beta, \varepsilon}^{+, \IS}(\sigma, h)$ with the density after erasing a contour $\gamma\in\Gamma(\sigma)$, and performing the same flips on the external field, getting

\begin{align}\label{Eq: quotient.of.gs_SR}
    \frac{g_{\Lambda; \beta, \varepsilon}^{+, \IS}(\sigma, h)}{g_{\Lambda; \beta, \varepsilon}^{+, \IS}(\tau_{\gamma}(\sigma),\tau_{\gamma}(h))} \nonumber
    &= \exp{\{\beta H_{\Lambda, 0}^{+, \IS}(\tau_{\gamma}(\sigma)) - \beta H_{\Lambda, 0}^{+, \IS}(\sigma)\}}\frac{Z_{\Lambda; \beta, \varepsilon}^{+, \IS}(\tau_{\gamma}(h))}{Z_{\Lambda; \beta, \varepsilon}^{+, \IS}(h)}.  \nonumber \\ 
\end{align}

For some realizations of the external field, the quotient of the partition functions can be bigger than the exponential term. Denoting
\begin{equation}\label{Def: Delta_A_SR}
\Delta_A(h) \coloneqq -\frac{1}{\beta}\ln{\frac{Z_{\Lambda; \beta, \varepsilon}^{+, \IS}(h)}{Z_{\Lambda; \beta, \varepsilon}^{+, \IS}(\tau_{A}(h))}}
\end{equation}
 for every $A\subset \Z^d$, the bad event is
\begin{equation}\label{Def: Bad_event_SR}
\mathcal{E}^c\coloneqq \left\{\sup_{\substack{\gamma\in\Gamma_0}} \frac{|\Delta_{\I(\gamma)}(h)|}{c_1|\gamma|} > \frac{1}{4}\right\}.    
\end{equation}
To control the probability of this bad event, we need a concentration result for Gaussian random variables. The following one is due to M. Ledoux and M. Talagrand, and a proof can be found in \cite{Ledoux.Talagrand.91}.

\begin{theorem}\label{Theo: Gaussian.concentration}
    Let $f:\mathbb{R}^M \xrightarrow[]{} \mathbb{R}$ be a uniform Lipschitz continuous function with constant $C_{Lip}$, that is, for any $X,Y\in\mathbb{R}^M$, $$|f(X) - f(Y)| \leq C_{Lip} || X - Y ||_2 .$$ 
    
    Then, if $X_1,\dots, X_M$ are i.i.d. Gaussian random variables with variance 1,
    \begin{equation}\label{Eq: Tail.concentration}
        \mathbb{P}\left(|f(X_1,\dots, X_M) - \mathbb{E}(f(X_1, \dots, X_M))|\geq z\right) \leq 2\exp{\left\{\frac{-z^2}{2C_{Lip}^2}\right\}}.
    \end{equation}
\end{theorem}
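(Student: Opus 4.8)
The plan is to deduce \eqref{Eq: Tail.concentration} from a subgaussian bound on the moment generating function of $f(X)-\mathbb{E}[f(X)]$, followed by the Chernoff inequality and a union bound over the two tails. First I would reduce to a convenient normal form: by convolving $f$ with a Gaussian kernel of vanishing width one may assume $f$ is smooth with $\|\nabla f\|_\infty\le C_{Lip}$ (a merely Lipschitz $f$ is differentiable $\gamma$-a.e. with $|\nabla f|\le C_{Lip}$ by Rademacher's theorem, and the mollifications converge pointwise with Lipschitz constant $\le C_{Lip}$, so any exponential moment bound passes to the limit by Fatou); rescaling $f\mapsto f/C_{Lip}$ reduces to $C_{Lip}=1$, and subtracting a constant to $\mathbb{E}[f(X)]=0$, where $X=(X_1,\dots,X_M)$ has law the standard Gaussian $\gamma$ on $\mathbb{R}^M$.

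The engine is the Gaussian logarithmic Sobolev inequality: for every smooth $g$,
\begin{equation*}
    \int g^2\log g^2\,d\gamma-\Big(\int g^2\,d\gamma\Big)\log\Big(\int g^2\,d\gamma\Big)\le 2\int|\nabla g|^2\,d\gamma .
\end{equation*}
I would apply this with $g=e^{\lambda f/2}$, $\lambda>0$. Writing $H(\lambda)\coloneqq\mathbb{E}[e^{\lambda f}]$, so $H'(\lambda)=\mathbb{E}[f e^{\lambda f}]$, the left-hand side equals $\lambda H'(\lambda)-H(\lambda)\log H(\lambda)$, while $|\nabla g|^2=\tfrac{\lambda^2}{4}|\nabla f|^2 e^{\lambda f}\le\tfrac{\lambda^2}{4}e^{\lambda f}$ since $f$ is $1$-Lipschitz. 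Hence
\begin{equation*}
    \lambda H'(\lambda)-H(\lambda)\log H(\lambda)\le\tfrac{\lambda^2}{2}H(\lambda).
\end{equation*}
Setting $K(\lambda)\coloneqq\tfrac1\lambda\log H(\lambda)$, this rearranges to $K'(\lambda)\le\tfrac12$; since $K(\lambda)\to\mathbb{E}[f]=0$ as $\lambda\downarrow 0$, integration gives $K(\lambda)\le\lambda/2$, i.e. $\mathbb{E}[e^{\lambda f}]\le e^{\lambda^2/2}$ for all $\lambda\in\mathbb{R}$ (the case $\lambda<0$ is identical). The Chernoff bound $\mathbb{P}(f\ge z)\le e^{-\lambda z}H(\lambda)\le e^{-\lambda z+\lambda^2/2}$, optimized at $\lambda=z$, yields $\mathbb{P}(f\ge z)\le e^{-z^2/2}$; applying this to $-f$ too and undoing the normalization gives $\mathbb{P}(|f(X)-\mathbb{E}[f(X)]|\ge z)\le 2e^{-z^2/(2C_{Lip}^2)}$, which is \eqref{Eq: Tail.concentration}.

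It remains to supply the Gaussian log-Sobolev inequality itself. I would get it by tensorization — the log-Sobolev inequality tensorizes over independent coordinates, so it suffices to treat $M=1$ — and then prove the one-dimensional case either from Gross's two-point inequality on $\{-1,+1\}$ via the central limit theorem, or through the Ornstein–Uhlenbeck semigroup $(P_t)_{t\ge 0}$ by differentiating $t\mapsto\mathrm{Ent}_\gamma(P_t g^2)$ along the flow and using the commutation $\nabla P_t=e^{-t}P_t\nabla$.

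\textbf{Main obstacle.} The delicate point is precisely the \emph{dimension-free} functional inequality with the \emph{sharp} constant: the factor $2$ in the log-Sobolev inequality is exactly what produces the optimal exponent $\tfrac{1}{2C_{Lip}^2}$. A different route through the Gaussian isoperimetric inequality (half-spaces minimize the boundary measure) delivers the same sharp tail, but centered at a median of $f$; promoting that to concentration around the mean with the clean prefactor $2$ needs an extra comparison of mean and median and is less transparent than the Herbst argument above. Either way the non-elementary ingredient — LSI or Gaussian isoperimetry — is imported, just as the paper does by citing \cite{Ledoux.Talagrand.91}.
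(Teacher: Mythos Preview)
Your proof is correct: this is the standard Herbst argument deriving sharp Gaussian concentration from the log-Sobolev inequality, and all the steps (mollification, the differential inequality for $K(\lambda)=\lambda^{-1}\log H(\lambda)$, Chernoff optimization) are sound. Note, however, that the paper does not actually prove this theorem at all --- it simply states the result and refers the reader to \cite{Ledoux.Talagrand.91} for a proof --- so there is no ``paper's own proof'' to compare against; you have supplied a complete argument where the paper only gives a citation.
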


\begin{remark}\label{Rmk: MVT.Lipschitz}
    If $f$ is differentiable and $||\nabla f(\cdot)||_2$ is bounded, the mean value theorem guarantees that $\sup_{Z\in\mathbb{R}^M}||\nabla f(Z)||_2$ is a uniform Lipschitz constant for $f$. 
\end{remark}
\begin{remark}\label{Rmk: Bernoulli_external_field}
    If $f$ has a compact support and convex level sets, an equation similar to \eqref{Eq: Tail.concentration} holds, with some adjustments on the constants and replacing the mean by the median, see \cite[Theorem 7.1.3]{Bovier.06}. Therefore, our results hold when $h_i$ has a Bernoulli distribution $\mathbb{P}(h_i=+1) =\mathbb{P}(h_i=-1)= \frac{1}{2}$. 
\end{remark}

Given $A\subset\Z^d$, $h_A\coloneqq (h_x)_{x\in A}$ denotes the restriction of the external field to the subset $A$. The next Lemma was proved in \cite{Ding2021} and is a direct consequence of Theorem \ref{Theo: Gaussian.concentration}. 

\begin{lemma}\label{Lemma: Concentration.for.Delta.General}
    For any $A, A^\prime \Subset \mathbb{Z}^d$ and $\lambda>0$, we have 
\begin{equation}\label{Eq: Tail.of.Delta_A}
    \mathbb{P}\left(|\Delta_A(h)| \geq \lambda \vert h_{A^c}\right) \leq2e^{\frac{-\lambda^2}{8\varepsilon^2 |A|}},
\end{equation}
and 
\begin{equation}\label{Eq: Tail.of.the.diff.of.Deltas}
     \mathbb{P}(|\Delta_{A}(h) - \Delta_{A^\prime}(h)|>\lambda|h_{{A \cup A^\prime}^c}) \leq  2e^{-\frac{{\lambda^2}}{{8\varepsilon^2|A \Delta A^\prime|}}},
\end{equation}
where $A\Delta A^\prime$ is the symmetric difference. 
\end{lemma}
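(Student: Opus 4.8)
The plan is to recognize $\Delta_A(h)$ as an \emph{increment} of the smooth, deterministic function $f(h)\coloneqq\frac{1}{\beta}\ln Z_{\Lambda;\beta,\varepsilon}^{+,\IS}(h)$ under the coordinate map $\tau_A$ flipping the signs of $(h_x)_{x\in A}$, and then feed this increment into the Gaussian concentration inequality of Theorem~\ref{Theo: Gaussian.concentration}. The only $h$-dependent term in the Hamiltonian is $-\varepsilon\sum_{x\in\Lambda}h_x\sigma_x$, so $\partial_{h_x}f(h)=\varepsilon\langle\sigma_x\rangle^{+,\IS}_{\Lambda;\varepsilon h}$ and hence $|\partial_{h_x}f(h)|\leq\varepsilon$ for every $x$ and every $h$; moreover, from~\eqref{Def: Delta_A_SR} one has $\Delta_A(h)=f(\tau_A(h))-f(h)$.

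I would prove both estimates at once by studying $\Delta_A(h)-\Delta_{A'}(h)=f(\tau_A(h))-f(\tau_{A'}(h))$ for arbitrary finite $A,A'$, since~\eqref{Eq: Tail.of.Delta_A} is the special case $A'=\emptyset$ (recall $\Delta_\emptyset\equiv 0$ and $A\Delta\emptyset=A$). Set $B\coloneqq A\Delta A'$. The key structural fact is that $\tau_A(h)$ and $\tau_{A'}(h)$ agree on every coordinate outside $B$: for $x\notin A\cup A'$ neither touches $h_x$, for $x\in A\cap A'$ both send $h_x\mapsto-h_x$, and they differ only on $B$. Consequently, freezing $h_{B^c}$ — which already fixes the common flip on $A\cap A'\subseteq B^c$ — the map $h_B\mapsto\Delta_A(h)-\Delta_{A'}(h)$ on $\mathbb{R}^{B}$ has, by the chain rule and $|\partial_{h_x}f|\leq\varepsilon$, every partial derivative bounded in modulus by $2\varepsilon$; so by Remark~\ref{Rmk: MVT.Lipschitz} it is uniformly Lipschitz with constant $2\varepsilon\sqrt{|A\Delta A'|}$.

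It then remains to verify that the conditional mean vanishes. Conditionally on $h_{B^c}$, the action of $\tau_A$ on the free block is precisely the sign-flip of the coordinates in $A\setminus A'\subseteq B$; since the law of $h_B$ is a product of symmetric Gaussians it is invariant under that flip, so $\mathbb{E}[f(\tau_A(h))\mid h_{B^c}]$ equals the average of $f$ over $h_B$ with only the frozen $A\cap A'$ coordinates flipped — and the same computation with $A'$ gives the identical value, whence $\mathbb{E}[\Delta_A(h)-\Delta_{A'}(h)\mid h_{B^c}]=0$. Applying Theorem~\ref{Theo: Gaussian.concentration} to the conditional law of $h_B$ gives
\[
\mathbb{P}\!\left(|\Delta_A(h)-\Delta_{A'}(h)|\geq\lambda \;\middle|\; h_{B^c}\right)\leq 2\exp\!\left(-\tfrac{\lambda^2}{8\varepsilon^2|A\Delta A'|}\right),
\]
and, since $(A\cup A')^c\subseteq B^c$, taking conditional expectation over $h_{B^c}$ given $h_{(A\cup A')^c}$ (the tower property) yields exactly~\eqref{Eq: Tail.of.the.diff.of.Deltas}, with $A'=\emptyset$ giving~\eqref{Eq: Tail.of.Delta_A}.

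The main obstacle — and the only place where care is needed — is the bookkeeping that makes the \emph{symmetric difference} $A\Delta A'$, rather than $A\cup A'$ or all of $\Lambda$, the effective dimension in the concentration bound. A naive Lipschitz estimate for $\Delta_A$ would involve all of $h_\Lambda$ and produce a useless $|\Lambda|$ in the exponent; the remedy is to conditon on the \emph{larger} $\sigma$-algebra generated by $h_{B^c}$ (not on $h_{(A\cup A')^c}$ directly), so that the increment becomes a function of $h_B$ alone, and only afterwards pass to the coarser conditioning via the tower property. Once this is set up, the derivative bound and the Gaussian symmetry argument for the centering are routine.
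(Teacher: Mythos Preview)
Your proof is correct, but it takes a somewhat different route from the paper's. The paper first establishes~\eqref{Eq: Tail.of.Delta_A} exactly as you do (mean zero by Gaussian symmetry, derivative bound $|\partial_{h_v}\Delta_A|\le 2\varepsilon$ for $v\in A$, then Theorem~\ref{Theo: Gaussian.concentration}), and then obtains~\eqref{Eq: Tail.of.the.diff.of.Deltas} by a one-line distributional identity: since $\tau_A\circ\tau_{A'}=\tau_{A\Delta A'}$ and the law of $h$ is invariant under any $\tau_A$, one has $\Delta_A(h)-\Delta_{A'}(h)\stackrel{d}{=}\Delta_{A\Delta A'}(h)$, so the second bound is an immediate corollary of the first. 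By contrast, you prove~\eqref{Eq: Tail.of.the.diff.of.Deltas} directly---freezing $h_{(A\Delta A')^c}$, bounding the Lipschitz constant on the remaining block, checking conditional centering by hand, and then passing to the coarser conditioning via the tower property---and recover~\eqref{Eq: Tail.of.Delta_A} as the special case $A'=\emptyset$. Your approach is self-contained and makes the role of $A\Delta A'$ fully explicit, at the cost of some bookkeeping; the paper's distributional trick bypasses all of that in one stroke, and is worth knowing since it shows more: the \emph{entire law} of $\Delta_A-\Delta_{A'}$ coincides with that of $\Delta_{A\Delta A'}$, not just its tail.
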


\begin{proof}
        Start by noticing that $$ \mathbb{E}(\Delta_A(h) \vert h_{A^c}) = 0,$$ since $h=^d \tau_A(h)$ by the symmetry of the Gaussian distribution. Now, for any $v\in A$, 
    \begin{equation*}
        \left|\frac{d}{dh_v}(\Delta_A(\h))\right| = \varepsilon\left|\mu_{\Lambda;\beta, \varepsilon \tau_A(h)}^{+, \IS}(\sigma_v) + \mu_{\Lambda;\beta, \varepsilon h}^{+, \IS}(\sigma_v)\right| \leq 2\varepsilon.
    \end{equation*}
Hence, for any $z\in\mathbb{R}^{A}$, $||\nabla f(z)||_2 < 2e|A|^{\frac{1}{2}}$, which, together with Theorem \ref{Theo: Gaussian.concentration} and Remark \ref{Rmk: MVT.Lipschitz}, concludes the proof of equation \eqref{Eq: Tail.of.Delta_A}. 

For the second equation, notice that $\Delta_{A}(h) - \Delta_{A^\prime}(h) = \frac{1}{\beta} Z_{\Lambda; \beta, \varepsilon}^{+, \IS}(\tau_{A}(h)) - \frac{1}{\beta} Z_{\Lambda; \beta, \varepsilon}^{+, \IS}(\tau_{A^\prime(h)})$. By the symmetry of the Gaussian distribution, $(\tau_{A}(h), \tau_{A^\prime}(h))$ and ${(\tau_A\circ\tau_{A}(h), \tau_A\circ\tau_{A^\prime}(h))}$ have the same distribution. Moreover, $\tau_A\circ\tau_{A^\prime} = \tau_{A\Delta A^\prime}$, hence 
$$\Delta_{A}(h) - \Delta_{A^\prime}(h) = \frac{1}{\beta} Z_{\Lambda; \beta, \varepsilon}^{+, \IS}(\tau_{A \Delta A^\prime}(h)) - \frac{1}{\beta} Z_{\Lambda; \beta, \varepsilon}^{+, \IS}(h) =^d \Delta_{A \Delta A^\prime}(h)$$
and equation \eqref{Eq: Tail.of.the.diff.of.Deltas} follows from \eqref{Eq: Tail.of.Delta_A} applied to $\Delta_{A \Delta A^\prime}(h)$. 
\end{proof}

\begin{remark}\label{Rmk: More.general.h_x}
    This lemma holds whenever $h=(h_x)_{x\in\Z^d}$ satisfy equation \eqref{Eq: Tail.concentration}.As a consequence, our results can be stated for more general external fields. 
\end{remark}

\section{Controlling \texorpdfstring{$\mathbb{P}\left(\mathcal{E}^c\right)$}{P(Ec)}}
    
    To control the probability of $\mathcal{E}^c$ we use a multi-scale analysis method presented in \cite{FFS84}. This section is dedicated to prove
    
    \begin{proposition}\label{Prop: Bound.bad.event_SR}       
        There exists $C_1\coloneqq C_1(\alpha, d)$ such that $\mathbb{P}(\mathcal{E}^c)\leq e^{-\frac{C_1}{\varepsilon^2}}$. 
    \end{proposition}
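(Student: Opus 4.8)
I would organise the estimate around the size $n=|\gamma|$ of the contour, writing
\[
\mathbb{P}(\mathcal{E}^c)\ \le\ \sum_{n\ge 1}\,\mathbb{P}\Big(\exists\,\gamma\in\Gamma_0(n)\ :\ |\Delta_{\I(\gamma)}(h)|>\tfrac{c_1}{4}\,n\Big).
\]
For a single contour $\gamma$ of size $n$ the discrete isoperimetric inequality gives $|\I(\gamma)|\le C(d)\,n^{d/(d-1)}$, so Lemma~\ref{Lemma: Concentration.for.Delta.General} yields
\[
\mathbb{P}\big(|\Delta_{\I(\gamma)}(h)|>\tfrac{c_1}{4}n\big)\ \le\ 2\exp\!\Big(-\frac{c(d)}{\varepsilon^{2}}\,n^{\frac{d-2}{d-1}}\Big),
\]
and the positivity of the exponent $\tfrac{d-2}{d-1}$ is precisely where $d\ge 3$ enters (for $d=2$ it vanishes, consistently with the statement being false there). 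Combined with the Peierls bound $|\Gamma_0(n)|\le e^{c_2(d)n}$, a crude union bound already settles the range $n\le N_0(\varepsilon):=\big(c(d)/2c_2(d)\,\varepsilon^{2}\big)^{d-1}$, because then $c_2 n\le\tfrac{1}{2}c\,\varepsilon^{-2}n^{(d-2)/(d-1)}$, so
\[
\sum_{n\le N_0}|\Gamma_0(n)|\,2e^{-c\varepsilon^{-2}n^{(d-2)/(d-1)}}\ \le\ \sum_{n\ge 1}2e^{-\tfrac{c}{2}\varepsilon^{-2}n^{(d-2)/(d-1)}}\ \le\ C\,e^{-c'/\varepsilon^{2}};
\]
since we may (and do) assume $\varepsilon$ small throughout — the only regime in which the Proposition is used — this is $\le e^{-C_1/\varepsilon^{2}}$ for any $C_1<c'$.

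The contribution of the large contours, $n>N_0(\varepsilon)$ — where the entropy $e^{c_2 n}$ defeats the single-contour estimate — is what the coarse-graining of \cite{FFS84} is designed to control. Fix dyadic scales $L_k=2^{k}$; for $V=\I(\gamma)$ let $V^{(k)}$ be the union of the cubes $Q$ of the grid $L_k\Z^{d}$ with $|Q\cap V|\ge\tfrac{1}{2}L_k^{d}$ (a majority rule — this matters below), so that $V^{(0)}=V$ and $V^{(k)}=\varnothing$ once $L_k^{d}\ge 2|V|$, i.e.\ for $k>K(n)\asymp\tfrac{1}{d-1}\log_2 n$. Since $\Delta_{\varnothing}=0$, telescoping gives $\Delta_{\I(\gamma)}=\sum_{k=0}^{K(n)}D_k$ with $D_k:=\Delta_{V^{(k)}}-\Delta_{V^{(k+1)}}$, and by the symmetry argument already used in the proof of Lemma~\ref{Lemma: Concentration.for.Delta.General}, $D_k$ is, conditionally on the outside field, sub-Gaussian with $\mathbb{P}(|D_k|>t)\le 2e^{-t^{2}/(8\varepsilon^{2}v_k)}$, where $v_k:=|V^{(k)}\bigtriangleup V^{(k+1)}|$. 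Two geometric estimates then carry the proof: whenever a site is classified differently at two consecutive scales, both enclosing cubes are ``substantially cut'' by $\gamma$, so (localising the isoperimetric inequality) each forces an area $\gtrsim L_k^{d-1}$ of $\gamma$; this gives both $v_k\le C(d)\,n\,L_{k+1}$ and a bound of $\exp\!\big(C(d)\,(1+n\,L_k^{-(d-1)})\big)$ on the number of coarse interiors $V^{(k)}$ (hence of pairs $(V^{(k)},V^{(k+1)})$) realisable by contours of size $n$ — genuinely smaller than $e^{c_2 n}$ at every scale. Distributing the budget $\tfrac{c_1}{4}n=\sum_k\tau_k$ with weights essentially of the form $\tau_k\gtrsim\varepsilon\,n\,2^{-k(d-2)/2}$ on the bulk of the scales (this choice beats the coarse entropy $n\,L_k^{-(d-1)}$ precisely because $\sum_k 2^{-k(d-2)/2}<\infty$ for $d\ge 3$), reserving a fixed fraction of $n$ for the $O(1)$ coarsest scales where $v_k$ saturates at $\asymp n^{d/(d-1)}$, one checks that each scale contributes at most a summable factor times $e^{-c\,\varepsilon^{-2(d-1)}}$ for $n>N_0$; summing over $k$ and over $n>N_0$ gives $\le e^{-c''/\varepsilon^{2}}$. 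Adding the two ranges of $n$ yields $\mathbb{P}(\mathcal{E}^c)\le e^{-C_1/\varepsilon^{2}}$.

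The crux — and where I expect most of the work to be — is the multiscale bookkeeping of this second range. One has to establish the two geometric estimates, in particular that the majority-rule coarsening (unlike the naive ``cubes entirely inside'', which can be wrecked by many small dimples of $V^{c}$ near $\partial V$) keeps the coarse surface area comparable to $|\gamma|$, so that the entropy of coarse shapes at scale $k$ is only $e^{C\,n/L_k^{d-1}}$; and one has to verify that the fixed total budget $\tfrac{c_1}{4}n$ can be split so that, simultaneously at every scale, the concentration gain $\tau_k^{2}/(8\varepsilon^{2}v_k)$ dominates the coarse entropy at that scale while still leaving, after summation over scales and over contour sizes, a residual of order $\varepsilon^{-2}$. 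It is exactly in making these two ingredients compatible that $d\ge 3$ is used a second time — through the convergence of $\sum_k 2^{-k(d-2)/2}$ — and for $d=2$ both the argument and the conclusion fail.
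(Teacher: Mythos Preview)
Your approach is sound and takes a genuinely different route from the paper. The paper packages the coarse-graining through Dudley's entropy bound: it first proves $\gamma_2(\Gamma_0(n),\d_2)\le \varepsilon L_1 n$ (Proposition~\ref{Prop: Bound.gamma_2_SR}) using the FFS estimates on $|B_\ell(\Gamma_0(n))|$ and $|B_\ell\triangle B_{\ell+1}|$, feeds this into Talagrand's deviation inequality (Theorem~\ref{Theo: Theo_2.2.27_Talagrand}) to get a one-shot tail bound on $\sup_\gamma \Delta_{\I(\gamma)}$ for each $n$, and then sums over $n$. You instead telescope $\Delta_{\I(\gamma)}=\sum_k(\Delta_{V^{(k)}}-\Delta_{V^{(k+1)}})$ and run a scale-by-scale union bound directly --- this is the original FFS presentation and avoids the abstract Talagrand machinery, at the price of more explicit bookkeeping. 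Both routes rest on the same two geometric inputs (your $v_k\lesssim n\,2^k$ is Proposition~\ref{Proposition1_SR}, your coarse-entropy bound is Proposition~\ref{Prop: Proposition_2_FFS}, which actually carries an extra factor $\ell$ you omitted but which is harmless). One caveat on your budget allocation: taking $\tau_k\sim\varepsilon\, n\,2^{-k(d-2)/2}$ on the bulk scales only \emph{matches} the coarse entropy, leaving a residual $\sim n/2^{k(d-1)}$ that degenerates to $O(1)$ near $k=K(n)$; reserving a fixed fraction of $n$ only at the last $O(1)$ scales does not repair the intermediate ones. The clean choice is $\tau_k=c\,n\,2^{-\theta k}$ with any fixed $0<\theta<(d-2)/2$: this sums to $O(n)$, beats the entropy at every scale for small $\varepsilon$ (since $2\theta+1<d-1$), and leaves residual $\gtrsim n^{(d-2-2\theta)/(d-1)}/\varepsilon^2$ uniformly in $k$, after which the double sum over $k$ and $n$ is routine. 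With this adjustment the split at $N_0$ is also unnecessary --- the telescoping handles all $n$ at once.
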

    
    As pointed out by \cite{Ding2021}, the proof presented in \cite{FFS84}, despite being self-contained, is an indirect application of Dudley's entropy bound. Here we adapt the proof presented in \cite{FFS84} using this entropy bound. For the detailed argument of the original proof, see \cite{Bovier.06}. First, we need to introduce some probability tools, then we introduce the coarse-graining procedure and prove Proposition \ref{Prop: Bound.bad.event_SR}.
    
    \subsection{Probability Results}\label{Sec: Probability results}
    
To control the probability of $\mathcal{E}^c$, we use some results on majorizing measures. For an extensive overview, we refer to \cite{Talagrand_14}. Consider $(T,\d)$ a finite metric space and a process $(X_t)_{t\in T}$ such that, for every $\lambda>0$ and $t,s\in T$,
\begin{equation}\label{Eq: Sub_gaussian_def}
    \mathbb{P}\left( |X_t - X_s| \geq \lambda \right) \leq 2\exp{\frac{-\lambda^2}{2\d(s,t)^2}}.
\end{equation}
Assume also that $\mathbb{E}\left(X_t\right) = 0$ for every $t\in T$. One example of such process is $( |\Delta_{\I(\gamma)}|)_{\gamma\in\Gamma_0(n)}$, with the distance $\d_2(\gamma,\gamma^\prime) = 2\varepsilon |\I(\gamma)\Delta \I(\gamma^\prime)|^{\frac{1}{2}}$ over the set $\Gamma_0(n)$. For $n\in \mathbb{N}$, consider the quantities $N_n = 2^{2^n}$ and $N_0=1$. 

\begin{definition}
    Given a set $T$, a sequence $(\mathcal{A}_n)_{n\geq 0}$ of partitions of $T$ is \textit{admissible} when $|\mathcal{A}_n|\leq N_n$ and $\mathcal{A}_{n+1}\preceq \mathcal{A}_n$ for all $n\geq 0$.
\end{definition}

Given $t\in T$ and an admissible sequence $(\mathcal{A}_n)_{n\geq 0}$, $A_n(t)$ denotes the element of $\mathcal{A}_n$ that contains $t$. 

\begin{definition}
    Given $\theta > 0$ and a metric space $(T,d)$, we define
    \begin{equation*}
        \gamma_\theta(T,\d) \coloneqq \inf_{(\mathcal{A}_n)_{n\geq 0}}\sup_{t\in T}\sum_{n\geq 0}2^{\frac{n}{\theta}}\diam(A_n(t)),
    \end{equation*}
where the infimum is taken over all admissible sequences. 
\end{definition}

\begin{theorem}[Majorizing Measure Theorem \cite{Talagrand_87}]\label{MMT} There is a universal constant $L>0$ such that
\begin{equation*}
    \frac{1}{L}\gamma_2(T,\d) \leq \mathbb{E}\left( \sup_{t\in T} X_t \right) \leq L\gamma_2(T,\d).
\end{equation*}
\end{theorem}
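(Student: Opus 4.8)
The two inequalities in the theorem are of completely different depth, so the plan is to treat them separately.

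\emph{The upper bound} $\mathbb{E}\sup_{t}X_{t}\le L\gamma_{2}(T,\d)$ is the generic chaining estimate and is the routine half. I would fix an arbitrary admissible sequence $(\mathcal{A}_{n})_{n\ge 0}$; since $\mathcal{A}_{0}=\{T\}$ one pins a base point $t_{0}$, and in each $A\in\mathcal{A}_{n}$ one picks a representative, obtaining maps $\pi_{n}\colon T\to T$ with $\pi_{n}(t)\in A_{n}(t)$ and $\pi_{0}\equiv t_{0}$. Telescoping gives $X_{t}-X_{t_{0}}=\sum_{n\ge 1}\big(X_{\pi_{n}(t)}-X_{\pi_{n-1}(t)}\big)$ with $\d(\pi_{n}(t),\pi_{n-1}(t))\le\diam A_{n-1}(t)$. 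As $t$ varies there are at most $N_{n}N_{n-1}\le N_{n+1}$ distinct links at level $n$, so the sub-Gaussian hypothesis \eqref{Eq: Sub_gaussian_def} together with a union bound yields, for $u$ larger than an absolute constant,
\[
\mathbb{P}\Big(\exists t:\ |X_{\pi_{n}(t)}-X_{\pi_{n-1}(t)}|>u\,2^{n/2}\diam A_{n-1}(t)\Big)\le 2\exp\!\big(-c\,u^{2}2^{n}\big).
\]
Summing over $n$, integrating the resulting tail in $u$, and using $\mathbb{E}X_{t_{0}}=0$, one bounds $\mathbb{E}\sup_{t}X_{t}$ by $L\sup_{t}\sum_{n\ge 0}2^{n/2}\diam A_{n}(t)$; taking the infimum over admissible sequences gives the claim.

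\emph{The lower bound} $\gamma_{2}(T,\d)\le L\,\mathbb{E}\sup_{t}X_{t}$ is the substantial direction (Talagrand's partitioning scheme). The plan is to build an admissible sequence greedily, driven by the functional $F(A)\coloneqq\mathbb{E}\sup_{t\in A}X_{t}$. The engine is a \emph{growth lemma}: if $A$ contains points $t_{1},\dots,t_{m}$ that are pairwise $a$-separated, then writing $B_{\ell}$ for the ball of radius $a/4$ about $t_{\ell}$ one has $F(A)\ge \tfrac{a}{L}\sqrt{\log m}+\min_{\ell}F(B_{\ell}\cap A)$; this combines Sudakov minoration applied to the $m$ far-apart centres with a concentration/gluing argument to reinstate the infimum over the sub-balls. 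Iterating: at scale $n$ one splits each current piece by extracting balls around a maximal family of $\sim 2^{-n}\diam$-separated points plus a remainder, keeping $|\mathcal{A}_{n}|\le N_{n}$; the growth lemma shows each split pays for its entropy cost $2^{n/2}\diam$ by a matching drop of $F$, and summing the telescoping over all scales yields $\sup_{t}\sum_{n}2^{n/2}\diam A_{n}(t)\le L\,F(T)=L\,\mathbb{E}\sup_{t}X_{t}$, hence the bound on $\gamma_{2}$.

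\emph{The main obstacle} is entirely on the lower bound. First, the growth lemma rests on a genuinely Gaussian phenomenon (Sudakov minoration), which is \emph{not} a consequence of the sub-Gaussian increment hypothesis \eqref{Eq: Sub_gaussian_def} alone; to stay inside the present framework one must either treat $(X_{t})$ as a canonical Gaussian process or exploit extra structure of the family $\Delta_{\I(\gamma)}$ (for instance the Gaussian-concentration property supplied by Theorem \ref{Theo: Gaussian.concentration}). Second, the combinatorial bookkeeping — choosing the nets, controlling diameters across scales, and verifying $|\mathcal{A}_{n}|\le N_{n}$ while keeping the functional inequality exact — is where the real work of \cite{Talagrand_87, Talagrand_14} lies. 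Since only the upper (chaining) half is actually needed downstream, to bound $\mathbb{E}\sup_{\gamma\in\Gamma_{0}(n)}|\Delta_{\I(\gamma)}|$ via the explicit admissible sequence coming from the coarse-graining of \cite{FFS84}, one may legitimately cite the lower bound and carry out only the chaining estimate in full detail.
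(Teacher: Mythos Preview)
The paper does not prove this theorem at all; it is stated as a known result with a citation to \cite{Talagrand_87} and used as a black box. Your sketch --- generic chaining for the upper bound, Talagrand's greedy partitioning driven by $F(A)=\mathbb{E}\sup_{t\in A}X_t$ for the lower bound --- is the standard route and goes well beyond what the paper does.

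Your caveat about the lower bound is correct and worth flagging: Sudakov minoration genuinely requires the Gaussian law, so as literally stated here --- for processes merely satisfying the sub-Gaussian increment condition \eqref{Eq: Sub_gaussian_def} --- the lower bound is false in general. The paper only invokes that direction to pass from Dudley's entropy integral to a bound on $\gamma_2$, namely inequality \eqref{Eq: gamma_2_bounded_by_Dudley_integral}. But that inequality, $\gamma_2(T,\d)\le L'\int_0^\infty\sqrt{\log N(T,\d,\epsilon)}\,d\epsilon$, is a purely metric statement about admissible partitions and covering numbers, provable directly from the definition of $\gamma_2$ without any stochastic process at all (one builds an admissible sequence from nested $\epsilon_n$-nets with $\epsilon_n$ chosen so that $N(T,\d,\epsilon_n)\le N_n$). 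So the imprecision in the hypotheses of the stated theorem is harmless for the downstream application, and your instinct to cite the hard direction rather than reprove it is exactly what the paper does.
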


    Given $\epsilon>0$, let $N(T,\d, \epsilon)$ be the minimal number of balls with radius $\epsilon$ necessary to cover $T$, using the distance $\d$. 
\begin{proposition}[Dudley's Entropy Bound \cite{Dudley67}]
    Let $(X_t)_{t\in T}$ be a family of centered random variables satisfying \eqref{Eq: Sub_gaussian_def} for some distance $\d$. Then there exists a constant $L>0$ such that 
    \begin{equation*}
        \mathbb{E}\left[\sup_{t\in T}X_t\right]\leq L\int_{0}^\infty \sqrt{\log N(T,\d,\epsilon)}d\epsilon.
    \end{equation*}
\end{proposition}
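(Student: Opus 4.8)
The plan is to prove Dudley's bound by a \emph{chaining} argument over dyadic scales. Since $(T,\d)$ is finite, set $\Delta=\diam(T)$ and, for each integer $n\geq 0$, $\epsilon_n=\Delta 2^{-n}$; pick a minimal $\epsilon_n$-net $T_n\subset T$, so that $|T_n|=N(T,\d,\epsilon_n)$, with $|T_0|=1$, say $T_0=\{t_0\}$. For $t\in T$ let $\pi_n(t)\in T_n$ be a nearest point, so $\d(t,\pi_n(t))\leq\epsilon_n$ and, by finiteness, $\pi_n(t)=t$ for all large $n$. The telescoping identity
\[
  X_t-X_{t_0}=\sum_{n\geq 1}\bigl(X_{\pi_n(t)}-X_{\pi_{n-1}(t)}\bigr)
\]
then holds for every $t$; taking $\sup_t$, then expectations, and using $\mathbb{E}[X_{t_0}]=0$ gives
\[
  \mathbb{E}\Bigl[\sup_{t\in T}X_t\Bigr]\leq\sum_{n\geq 1}\mathbb{E}\Bigl[\max_{t\in T}\bigl(X_{\pi_n(t)}-X_{\pi_{n-1}(t)}\bigr)\Bigr].
\]

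First I would record the elementary maximal inequality that is the engine of the argument: if $Y_1,\dots,Y_M$ are centered with $\mathbb{P}(|Y_i|\geq\lambda)\leq 2e^{-\lambda^2/(2\sigma^2)}$, then $\mathbb{E}[\max_i Y_i]\leq L_0\,\sigma\sqrt{\log(M+1)}$. This is proved by deducing the moment-generating bound $\mathbb{E}[e^{sY_i}]\leq e^{cs^2\sigma^2}$ from the tail bound, then $e^{s\mathbb{E}[\max_i Y_i]}\leq\sum_i\mathbb{E}[e^{sY_i}]\leq Me^{cs^2\sigma^2}$, and optimizing in $s$. For fixed $n$, the differences $X_{\pi_n(t)}-X_{\pi_{n-1}(t)}$ take at most $|T_n|\,|T_{n-1}|\leq N(T,\d,\epsilon_n)^2$ distinct values, each centered and (by \eqref{Eq: Sub_gaussian_def}) sub-Gaussian with parameter $\d(\pi_n(t),\pi_{n-1}(t))\leq\epsilon_n+\epsilon_{n-1}=3\epsilon_n$; the maximal inequality therefore gives $\mathbb{E}[\max_t(X_{\pi_n(t)}-X_{\pi_{n-1}(t)})]\leq L_1\,\epsilon_n\sqrt{\log N(T,\d,\epsilon_n)}$ (the bound being $0$ when $N(T,\d,\epsilon_n)=1$). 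Summing over $n\geq1$ and comparing with the integral — since $N(T,\d,\cdot)$ is non-increasing, $\int_{\epsilon_{n+1}}^{\epsilon_n}\sqrt{\log N(T,\d,\epsilon)}\,d\epsilon\geq\tfrac{1}{2}\epsilon_n\sqrt{\log N(T,\d,\epsilon_n)}$ — yields $\mathbb{E}[\sup_t X_t]\leq 2L_1\int_0^\infty\sqrt{\log N(T,\d,\epsilon)}\,d\epsilon$, which is the claim with $L=2L_1$.

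An alternative route, available because Theorem \ref{MMT} has already been stated, is to prove only the purely geometric inequality $\gamma_2(T,\d)\leq L\int_0^\infty\sqrt{\log N(T,\d,\epsilon)}\,d\epsilon$ and then quote Theorem \ref{MMT}. For this one builds, from the nets $T_n$, an admissible sequence $(\mathcal{A}_k)_{k\geq0}$ whose cells are the nearest-point (Voronoi) cells of the $T_n$ — of diameter $\leq 2\epsilon_n$ and number $\leq N(T,\d,\epsilon_n)$ — after reindexing the scales so that $|\mathcal{A}_k|\leq N_k=2^{2^k}$, and then uses the same sum-to-integral comparison. I expect the main obstacle in either route to be bookkeeping rather than ideas: tracking universal constants, the net/partition indexing (especially the reindexing in the $\gamma_2$ route), and giving a clean self-contained statement of the sub-Gaussian maximal inequality used as a black box; the conceptual core — chaining across geometrically shrinking scales — is standard.
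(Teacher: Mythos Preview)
Your chaining argument is correct and is the standard proof of Dudley's bound. The paper, however, does not prove this proposition at all: it is stated as a classical result with a citation to \cite{Dudley67} and used as a black box, so there is no ``paper's own proof'' to compare against. Your alternative route via $\gamma_2$ is also fine, though note that in the paper the implication actually runs the other way --- Dudley's bound combined with Theorem~\ref{MMT} is used to deduce the geometric inequality $\gamma_2(T,\d)\leq L'\int_0^\infty\sqrt{\log N(T,\d,\epsilon)}\,d\epsilon$, rather than proving Dudley via $\gamma_2$.
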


Dudley's entropy bound together with the Majorizing Measure Theorem yields that there is a constant $L^\prime>0$ such that,
\begin{equation}\label{Eq: gamma_2_bounded_by_Dudley_integral}   
\gamma_2(T,\d)\leq L^\prime\int_{0}^\infty \sqrt{\log N(T,\d,\epsilon)}d\epsilon.
\end{equation}
We also need the following result. 
\begin{theorem}\label{Theo: Theo_2.2.27_Talagrand} Given a metric space $(T,\d)$ and a family $(X_t)_{t\in T}$ of centered random variables satisfying \eqref{Eq: Sub_gaussian_def}, there is a universal constant $L>0$ such that, for any $u>0$,
\begin{equation*}
\mathbb{P}\left( \sup_{t\in T}X_t > L(\gamma_2(T,\d) + u\diam(T)) \right)\leq e^{-{u^2}},
\end{equation*}
where the $\diam(T)$ is the diameter taken with respect to the distance $\d$
\end{theorem}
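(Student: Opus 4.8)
The statement is Talagrand's generic chaining tail bound (Theorem~2.2.27 in \cite{Talagrand_14}), and the plan is to prove it by the generic chaining method, run along the admissible sequences that define $\gamma_2(T,\d)$. Fix $u>0$ and take an admissible sequence $(\mathcal{A}_n)_{n\geq 0}$ that is near-optimal, say $\sup_{t\in T}\sum_{n\geq 0}2^{n/2}\diam(A_n(t))\leq 2\gamma_2(T,\d)$. For each $n$ choose a map $\pi_n\colon T\to T$, constant on every atom of $\mathcal{A}_n$, with $\pi_n(t)\in A_n(t)$; since $N_0=1$ there is a unique point $t_0$ with $\pi_0\equiv t_0$, and since $T$ is finite $\pi_n(t)=t$ for all large $n$, so $X_t-X_{t_0}=\sum_{n\geq 1}(X_{\pi_n(t)}-X_{\pi_{n-1}(t)})$ is a finite sum. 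The problem then reduces to a union bound over these chain increments.

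The combinatorics are elementary: at level $n$ the number of linked pairs $(\pi_n(t),\pi_{n-1}(t))$, $t\in T$, is at most $|\mathcal{A}_n|\,|\mathcal{A}_{n-1}|\leq N_n^2=N_{n+1}=2^{2^{n+1}}$, while $\d(\pi_n(t),\pi_{n-1}(t))\leq\diam(A_{n-1}(t))$ because both endpoints lie in $A_{n-1}(t)$. Applying \eqref{Eq: Sub_gaussian_def} to each linked pair with threshold $L(2^{n/2}+u)\,\d(\pi_n(t),\pi_{n-1}(t))$ and using $(a+b)^2\geq a^2+b^2$ bounds that pair's failure probability by $2\exp(-\frac{1}{2}L^2 2^n)\exp(-\frac{1}{2}L^2u^2)$. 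Summing over the $\leq 2^{2^{n+1}}$ pairs and over $n\geq 1$, and taking $L$ large enough that $\frac{1}{2}L^2 2^n$ dominates the entropy term $2(\ln 2)2^n$, the series converges, so outside an event of probability at most (a universal constant times) $e^{-u^2}$ every increment obeys $|X_{\pi_n(t)}-X_{\pi_{n-1}(t)}|\leq L(2^{n/2}+u)\diam(A_{n-1}(t))$ for all $n$ and all $t$ simultaneously.

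On that event the telescoping estimate reads $X_t-X_{t_0}\leq L\sum_{n\geq 1}2^{n/2}\diam(A_{n-1}(t))+Lu\sum_{n\geq 1}\diam(A_{n-1}(t))$, and the first sum is $\leq L'\gamma_2(T,\d)$ by the choice of $(\mathcal{A}_n)$. The main obstacle is the second sum: a priori $\sum_{n}\diam(A_{n-1}(t))$ is only of order $\gamma_2(T,\d)$ rather than $\diam(T)$, so the plain threshold overcharges the large-deviation part by a spurious logarithmic factor. Eliminating it is the delicate point of Talagrand's argument, and the plan is to follow it: distribute the $u$-budget across the scales in a scale-dependent way, charging $u$ only at the coarse scales where $\diam(A_n(t))$ is still comparable to $\diam(T)$, so that the union bound still yields a clean $e^{-u^2}$ while the reassembled large-deviation cost collapses to $Lu\,\diam(T)$. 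Combining the two parts gives $\sup_t(X_t-X_{t_0})\leq L(\gamma_2(T,\d)+u\,\diam(T))$ off the good event; since each $X_t$ is centered and \eqref{Eq: Sub_gaussian_def} constrains only increments, it is really the supremum of increments from the base point $t_0$ that one estimates, and the passage to $\sup_t X_t$ together with the exact form of the prefactor (a genuine universal constant, or equivalently the harmless restriction $u\geq 1$) is routine bookkeeping.

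Finally, a word on applicability: in this paper the theorem is used for the process $(|\Delta_{\I(\gamma)}(h)|)_{\gamma\in\Gamma_0(n)}$ equipped with $\d_2(\gamma,\gamma')=2\varepsilon|\I(\gamma)\Delta\I(\gamma')|^{1/2}$, and hypothesis \eqref{Eq: Sub_gaussian_def} is not automatic but is precisely the content of the conditional Gaussian concentration estimate \eqref{Eq: Tail.of.the.diff.of.Deltas} (together with $\bigl||a|-|b|\bigr|\leq|a-b|$); so once the chaining bound above is in place it applies verbatim, with $\diam(T)$ the $\d_2$-diameter of $\Gamma_0(n)$.
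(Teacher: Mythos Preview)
The paper does not prove this theorem; it states it and refers the reader to \cite[Theorem~2.2.27]{Talagrand_14}. Your sketch goes considerably further, correctly outlining the generic chaining argument that is Talagrand's own proof: telescope along a near-optimal admissible sequence, union-bound over the at most $N_{n+1}$ linked pairs at level $n$, and take $L$ large enough that $e^{-L^{2}2^{n}/2}$ beats the entropy factor $2^{2^{n+1}}$. You also correctly isolate the one genuinely nontrivial step, namely that the naive threshold $L(2^{n/2}+u)\,\d(\pi_n(t),\pi_{n-1}(t))$ produces $Lu\sum_{n}\diam(A_{n-1}(t))$ on the good event, which is only $O(u\,\gamma_2(T,\d))$ rather than the desired $O(u\,\diam(T))$.

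Your description of the fix (``distribute the $u$-budget across scales, charging $u$ only at the coarse scales'') points in the right direction but is not carried out; the precise mechanism in Talagrand is a specific scale-dependent choice of thresholds that makes the summation collapse, and it is the technical heart of his proof. Since the paper itself simply defers to the reference, your treatment is already strictly more informative than the paper's, and the honest acknowledgment of where the real work lies is appropriate. The closing paragraph on applicability --- verifying hypothesis \eqref{Eq: Sub_gaussian_def} for the process $(|\Delta_{\I(\gamma)}(h)|)_\gamma$ via \eqref{Eq: Tail.of.the.diff.of.Deltas} and the elementary inequality $\bigl||a|-|b|\bigr|\leq|a-b|$, and flagging that the chaining really controls increments from a base point rather than $\sup_t X_t$ itself --- is a useful addition not made explicit in the paper.
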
 
A proof can be found in  \cite[Theorem 2.2.27]{Talagrand_14}. Using these results, the bound on the bad event $\mathcal{E}^c$ follows from the next proposition.
\begin{proposition}\label{Prop: Bound.gamma_2_SR}
    Given $n\geq 0$ and $d\geq 3$ there is a constant $L_1 \coloneqq L_1(d,\alpha)>0$  such that $$\gamma_2(\Gamma_0(n),\d_2) \leq \varepsilon L_1 n.$$
\end{proposition}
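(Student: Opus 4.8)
The plan is to estimate $\gamma_2(\Gamma_0(n),\d_2)$ through the entropy bound \eqref{Eq: gamma_2_bounded_by_Dudley_integral}, so that the whole problem reduces to a bound on the covering numbers $N(\Gamma_0(n),\d_2,\epsilon)$. Two ingredients feed the integral: a crude ``bulk'' estimate of the diameter of $(\Gamma_0(n),\d_2)$ coming from the isoperimetric inequality, and a ``refined'' estimate of $N(\Gamma_0(n),\d_2,\epsilon)$ for $\epsilon$ away from $0$, obtained by a multiscale coarse‑graining of the interiors $\I(\gamma)$ à la \cite{FFS84}. The hypothesis $d\geq 3$ is used only at the very end, to make the resulting integral produce exactly the linear power $n$.

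\textbf{Step 1: the diameter.} For $\gamma,\gamma'\in\Gamma_0(n)$ one has $\d_2(\gamma,\gamma')^2 = 4\varepsilon^2|\I(\gamma)\Delta\I(\gamma')|\leq 8\varepsilon^2\max_{\gamma\in\Gamma_0(n)}|\I(\gamma)|$. Since $\gamma$ is the plaquette boundary of $\I(\gamma)$ and $|\gamma|=n$, the isoperimetric inequality on $\Z^d$ gives $|\I(\gamma)|\leq c_d n^{d/(d-1)}$. Hence $D\coloneqq\diam(\Gamma_0(n),\d_2)\leq C_d\,\varepsilon\, n^{d/(2(d-1))}$, and in \eqref{Eq: gamma_2_bounded_by_Dudley_integral} it suffices to integrate $\epsilon$ over $(0,D]$, because $N(\Gamma_0(n),\d_2,\epsilon)=1$ for $\epsilon>D$.

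\textbf{Step 2: covering numbers via coarse-graining.} Fix $\epsilon\in(0,D]$ and put $m\coloneqq\lceil(\epsilon/(2\varepsilon))^2\rceil$, so that $\d_2(\gamma,\gamma')\leq\epsilon$ whenever $|\I(\gamma)\Delta\I(\gamma')|\leq m$. For small $\epsilon$ the only input is the trivial Peierls count $N(\Gamma_0(n),\d_2,\epsilon)\leq|\Gamma_0(n)|\leq e^{c_2(d)n}$. For larger $\epsilon$ I would adapt the coarse-graining of \cite{FFS84}: decompose $\I(\gamma)$ into dyadic cubes whose side length is comparable to the distance to the contour $\gamma$ (a Whitney-type decomposition), and define the coarse interior $\widehat\I_\ell(\gamma)$ at resolution $\ell$ by keeping only the cubes of side $\geq\ell$. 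The two properties one needs from this construction are: (i) if $\widehat\I_\ell(\gamma)=\widehat\I_\ell(\gamma')$ then $|\I(\gamma)\Delta\I(\gamma')|$ is at most the volume of the $\ell$-neighborhood of $\gamma$, which for $|\gamma|=n$ is $O(n\ell)$; and (ii) because the number of decomposition cubes of a given side $2^k\ell_0$ is $O(n/(2^k\ell_0)^{d-1})$ — geometrically decreasing, driven by $|\gamma|=n$ — the coarse interior $\widehat\I_\ell(\gamma)$ consists of $O(n/\ell^{d-1})$ cubes, so the number of distinct coarse shapes among $\gamma\in\Gamma_0(n)$ is at most $\exp(C\,n/\ell^{d-1})$ by the usual entropy count of such families of cubes (possible disconnectedness of $\I(\gamma)$ costs only a bounded factor, since the number of components is controlled by the number of skin cubes). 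Choosing $\ell=\ell(\epsilon)$ with $n\,\ell(\epsilon)\asymp m$ (and $\ell(\epsilon)=1$, i.e.\ the trivial count, as soon as $m\lesssim n$) gives
\begin{equation*}
\log N(\Gamma_0(n),\d_2,\epsilon)\ \leq\ C\,\min\Big\{\,n\,,\ \frac{n^{d}\,\varepsilon^{2(d-1)}}{\epsilon^{2(d-1)}}\,\Big\}.
\end{equation*}

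\textbf{Step 3: the integral, and where $d\geq 3$ enters.} Let $\epsilon_\ast\asymp\varepsilon\sqrt n$ be the value at which the two terms of the minimum coincide. Splitting \eqref{Eq: gamma_2_bounded_by_Dudley_integral} at $\epsilon_\ast$ gives
\begin{equation*}
\gamma_2(\Gamma_0(n),\d_2)\ \leq\ L'\!\int_0^{\epsilon_\ast}\!\!\sqrt{C n}\;d\epsilon\;+\;L'\!\int_{\epsilon_\ast}^{D}\!\!\sqrt{C}\,\frac{n^{d/2}\varepsilon^{d-1}}{\epsilon^{d-1}}\;d\epsilon .
\end{equation*}
The first integral equals $L'\sqrt{Cn}\,\epsilon_\ast\asymp\varepsilon n$. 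For the second, $d\geq 3$ makes the exponent $d-1\geq 2$, so the integral is dominated by its lower endpoint, $\int_{\epsilon_\ast}^{D}\epsilon^{-(d-1)}d\epsilon\leq\frac{1}{d-2}\epsilon_\ast^{-(d-2)}$; substituting $\epsilon_\ast\asymp\varepsilon\sqrt n$ makes the second integral $O(\varepsilon n)$ as well, the powers of $n$ and $\varepsilon$ matching exactly. This yields $\gamma_2(\Gamma_0(n),\d_2)\leq\varepsilon L_1 n$ for some $L_1=L_1(d)$ (the dependence on $\alpha$ being vacuous in this chapter), after enlarging $L_1$ to absorb the finitely many small $n$.

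\textbf{Main obstacle.} The delicate part is Step 2: making the \cite{FFS84} coarse-graining rigorous for Peierls contours, i.e.\ producing, for each resolution $\ell$, a cube decomposition of $\I(\gamma)$ such that the ``skin'' at scale $\ell$ occupies only $O(n/\ell^{d-1})$ cubes (this is what forces the exponent $d-1$ in the covering bound, and ultimately the power $n$ rather than $n^{5/4}$), while the cubes of side $<\ell$ sweep volume only $O(n\ell)$, and organizing the entropy count of coarse shapes. Everything else — the diameter bound, the cap by the Peierls count, and the final integration — is routine once these two geometric estimates are in place; in particular the case $d=2$ genuinely fails here, as the second integral then contributes an extra $\log n$.
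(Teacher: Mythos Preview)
Your proposal is correct and follows essentially the same approach as the paper: bound $\gamma_2$ by the Dudley entropy integral \eqref{Eq: gamma_2_bounded_by_Dudley_integral}, control the covering numbers via the FFS coarse-graining (the paper uses the ``admissible cube'' majority-vote version rather than a Whitney decomposition, but the two key estimates --- $|\I(\gamma)\Delta B_\ell(\gamma)|=O(n\,2^\ell)$ and $\log|B_\ell(\Gamma_0(n))|=O(\ell\, n/2^{\ell(d-1)})$ --- are the same), and sum over dyadic scales, the resulting series $\sum_\ell \ell^{1/2}2^{-\ell(d-2)/2}$ converging exactly when $d\geq 3$. One small point: the paper's entropy bound (Proposition~\ref{Prop: Proposition_2_FFS}) carries an extra factor $\ell$ in the exponent, coming from the count of \emph{positions} of the connected components of the coarse boundary; your sketch suppresses this, but it is harmless for the final sum.
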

As a direct consequence of this Proposition, we can control the probability of the bad event $\mathcal{E}^c$.

\begin{proof}[Proof of Proposition \ref{Prop: Bound.bad.event_SR}]   
   To apply Theorem \ref{Theo: Theo_2.2.27_Talagrand}, notice that, by the isoperimetric inequality $\diam(\Gamma_0(n)) = \sup_{\gamma_1,\gamma_2\in\Gamma_0(n)} \leq 4\varepsilon\sqrt{|\I(n)|}\leq 4\varepsilon n^{\frac{1}{2}(1 + \frac{1}{d-1})}$. Hence, for $\varepsilon$ small enough
\begin{align*}
    \mathbb{P}\left( \sup_{\gamma}\Delta_{\I(\gamma)}(h) \geq \frac{c_2}{2}n \right) \leq \mathbb{P}\left( \sup_{\gamma}\Delta_{\I(\gamma)}(h) \geq L(b_5 \varepsilon n + n^{\frac{1}{2}(1 - \frac{1}{d-1})}n^{\frac{1}{2}(1 + \frac{1}{d-1})}) \right)\leq e^{-\frac{n^{(1 - \frac{1}{d-1})}}{\varepsilon^2}}.
\end{align*}
The union bound yields $\mathbb{P}(\mathcal{E}^c)\leq e^{-\frac{C}{\varepsilon^2}}$ for $C>0$ large enough.
\end{proof}

The next subsections are dedicated to proving Proposition \ref{Prop: Bound.gamma_2_SR}.

    \subsection{Coarse-graining Procedure}\label{Sec: Coarse-graining_SR}

We will apply these results for the family $(|\Delta_{\I_-(\gamma)}|)_{\gamma\in\Gamma_0(n)}$. To construct the covering by balls in Dudley's entropy bound, we use the coarse-graining idea introduced in \cite{FFS84}. For each $0<\ell$ and each contour ${\gamma\in\Gamma_0}$, we will associate a region $B_\ell(\gamma)$ that approximates the interior $\I(\gamma)$ in a scaled lattice, with the scale growing with $\ell$. This is done in a way that two contours that have the same representation are in a ball with fixed radius, depending on $\ell$.

For any $x\in\Z^d$ and $m\geq 0$,
\begin{equation}
    C_{m}(x) \coloneqq \left(\prod_{i=1}^d{\left[2^{m}x_i , \ 2^{m}(x_i+1) \right)}\right)\cap \Z^d,
\end{equation}
is the cube of $\mathbb{Z}^d$ centered at $2^{m}x + 2^{m-1} - \frac{1}{2}$ with side length $2^{m} -1$. Any such cube is called an $m$-cube. As all cubes in this paper are of this form, with centers $2^{m}x + 2^{m-1} - \frac{1}{2}$ and $x \in \mathbb{Z}^d$, we will often omit the point $x$ in what follows, writing $C_m$ for an $m$-cube instead of $C_m(x)$. An arbitrary collection of $m$-cubes will be denoted $\mathscr{C}_m$ and $B_{\mathscr{C}_m}\coloneqq \cup_{C\in\mathscr{C}_m}C$ is the region covered by $\mathscr{C}_m$. 
We denote by $\mathscr{C}_m(\Lambda)$ the covering of $\Lambda\Subset\Z^d$ with the smallest possible number of $m$-cubes.

\input{Figures/Figura.0}

Fix $n \in \mathbb{N}$, $\gamma\in \Gamma_0(n)$, and $\ell\in\{0, 1, \dots, k\}$. An $\ell$-cube $C_{\ell}$ is \textit{admissible} if it more than a  half of its points are inside $\I(\gamma)$. Thus, the set of admissible cubes is
\begin{equation*}
    \mathfrak{C}_\ell(\gamma) \coloneqq \{C_{\ell} : |C_{\ell}\cap \I(\gamma)| \geq \frac{1}{2}|C_{\ell}|\}.
\end{equation*}
We choose $B_\ell(\gamma) \coloneqq B_{\mathfrak{C}_{\ell}(\gamma)}$, the region covered by the admissible cubes.
Notice that $B_\ell(\gamma)$ is uniquely determined by $\partial B_\ell(\gamma)$. Moreover, $\partial B_\ell(\gamma)$ is uniquely determined by
$$
\partial \mathfrak{C}_\ell(\gamma) \coloneqq \{ \{C_{\ell}, C^\prime_{\ell}\} : C_{\ell} \in \mathfrak{C}_\ell(\gamma), \ C_\ell^\prime \notin \mathfrak{C}_\ell, \  C_\ell^\prime \text{ shares a face with }C_\ell\}.
$$ 
We will now control the number of cubes in $\mathfrak{C}_\ell(\gamma)$ by proving a proposition similar to \cite[Proposition 2]{FFS84}. This proposition was written for $d=3$ and $\gamma$ simply connected, but it can clearly be extended to $d\geq 2$ with no restriction in $\gamma$, see \cite{Bovier.06}. As we could not find a detailed proof anywhere, we provide one here.

Given a rectangle $\mathcal{R} = [1,r_1]\times[1,r_2]\times\dots\times[1,r_d]$, consider $\R_i\coloneqq\{x\in \R : x_i=1\}$ the face of $\R$ that is perpendicular to the direction $e_i$, for $i=1,\dots,d$. The line that connects a point $x\in \R_i$ to a point in the opposite face of $\R_i$ is $\ell_x^i \coloneqq \{ x + ke_i : 1\leq k\leq r_i\}$. Given $A\subset \Z^d$, the projection of $A\cap \R$ into the face $\R_i$ is
\begin{equation*}
    \calP_i(A\cap\R) \coloneqq \{x\in\R_i : \ell_x^i \cap A \neq \emptyset\}.
\end{equation*}
\input{Figures/Figura.4.1}
In many situations, we will split the projections into \textit{good} and \textit{bad} points. The set of good points is $\calP_i(A\cap\R)^{G} \coloneqq \{x\in \calP_i(A\cap \R) : \ell_x^i \cap (\R\setminus A) \neq \emptyset\}$, that is, there exist a point in $\ell_x^i\cap \R$ that is not in $A$.  The bad points are defined as $\calP^{B}_i(A\cap\R) \coloneqq \calP_i(A\cap\R)\setminus \calP_i^G(A\cap\R)$.

\begin{figure}[H] 
	\centering

\tikzset{every picture/.style={line width=0.75pt}} 

\begin{tikzpicture}[x=0.75pt,y=0.75pt,yscale=-0.9,xscale=0.9]

\draw   (308.2,253.9) -- (168.2,253.9) -- (168.2,31.4) -- (308.2,31.4) -- cycle ;
\draw  [fill={rgb, 255:red, 155; green, 155; blue, 155 }  ,fill opacity=1 ] (307.8,192.3) .. controls (307,237.2) and (277.09,194.1) .. (259.2,186.4) .. controls (241.31,178.7) and (169.4,225) .. (168.6,201.4) .. controls (167.8,177.8) and (168.2,156.2) .. (168.2,134.6) .. controls (168.2,113) and (201.4,128.6) .. (209.2,132.4) .. controls (217,136.2) and (277.2,152.4) .. (257.2,122.4) .. controls (237.2,92.4) and (210.67,64.15) .. (237.2,66.4) .. controls (263.73,68.65) and (249.33,76.09) .. (263.8,83.4) .. controls (278.27,90.71) and (285.66,94.94) .. (287.4,101) .. controls (289.14,107.06) and (307.55,98.48) .. (307.8,104.2) .. controls (308.05,109.92) and (308.6,147.4) .. (307.8,192.3) -- cycle ;
\draw  [dash pattern={on 0.84pt off 2.51pt}]  (167.8,88.2) -- (307.8,88.2) ;
\draw [shift={(167.8,88.2)}, rotate = 0] [color={rgb, 255:red, 0; green, 0; blue, 0 }  ][fill={rgb, 255:red, 0; green, 0; blue, 0 }  ][line width=0.75]      (0, 0) circle [x radius= 3.35, y radius= 3.35]   ;
\draw  [dash pattern={on 0.84pt off 2.51pt}]  (168.6,156.2) -- (308.6,156.2) ;
\draw [shift={(168.6,156.2)}, rotate = 0] [color={rgb, 255:red, 0; green, 0; blue, 0 }  ][fill={rgb, 255:red, 0; green, 0; blue, 0 }  ][line width=0.75]      (0, 0) circle [x radius= 3.35, y radius= 3.35]   ;

\draw (255,168) node [anchor=north west][inner sep=0.75pt]    {$A\cap\R$};
\draw (151.2,79.2) node [anchor=north west][inner sep=0.75pt]  [font=\small]  {$p$};
\draw (150.4,145.6) node [anchor=north west][inner sep=0.75pt]  [font=\small]  {$p^{\prime }$};

\end{tikzpicture}

	\caption{Considering $A\cap\R$ the gray region, both points $p,p^\prime \in \mathcal{P}_1(A\cap \R)$ are in the projection, but $p$ is a good point and $p^\prime$ is a bad point. The doted lines represent $\ell_p^1$ and $\ell_{p^\prime}^1$.} \label{fig: Figura5}
\end{figure}
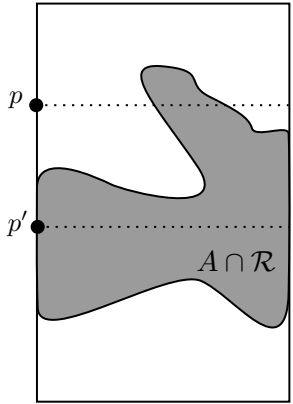
Given $x\in \calP_i(A\cap\R)^{G}$, by definition of the projection, there exists a point in $\ell_x^i\cap A$. Therefore, there exists a point $p\in \ell_x^i$ such that $p\in\fext A \cap \R$. As all lines are disjoint, we conclude that 
\begin{equation}\label{Eq: upper.bound.good.points}
     |\calP_i^{G}(A\cap\R)|\leq |\fext A \cap \R|.
\end{equation}
 We now prove two auxiliary lemmas.
 
\begin{lemma}\label{Lemma: Geo.discreta.1}
    Given $d\geq 2$, for any family of positive integers $\bm{r}=(r_i)_{i=1}^d$ with $R\leq r_i \leq 2R$ for some $R\geq 2$, $0<\lambda < 1$ and $A\subset\Z^d$, there exists a constant $c\coloneqq c(d, \lambda)$ such that, if 
    \begin{equation}\label{Eq: hypothesis.lemma.1}
         |\calP_i(A\cap \R)| \leq \lambda|\R_i|
    \end{equation}
    for all $i= 1,\dots, d$, then 
    \begin{equation*}
        \sum_{i=1}^d |\calP_{i}(A\cap \R)|\leq c|\fext A\cap \R|,
    \end{equation*}
    where $\R=[1,r_1]\times\dots\times [1,r_d]$.
\end{lemma}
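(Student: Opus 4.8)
The plan is to split each projection $\calP_i(A\cap\R)$ into its ``good'' part $\calP_i(A\cap\R)^G$ and its ``bad'' part $\calP_i^B(A\cap\R)$ and to treat the two separately, handling the bad part by induction on the dimension $d$. The good part is already under control: by \eqref{Eq: upper.bound.good.points} one has $|\calP_i(A\cap\R)^G|\le|\fext A\cap\R|$ for every $i$, so $\sum_{i=1}^d|\calP_i(A\cap\R)^G|\le d\,|\fext A\cap\R|$, and it remains only to bound $\sum_{i=1}^d|\calP_i^B(A\cap\R)|$ by a constant (depending on $d$ and $\lambda$) times $|\fext A\cap\R|$. It will be convenient to strengthen the inductive statement so that it also asserts the ``thinness'' bound $|A\cap\R|\le c(d,\lambda)\cdot 2R\cdot|\fext A\cap\R|$, since this complementary estimate is what feeds the recursion.

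For the base case $d=2$ the bad part vanishes: a bad fiber in direction $i$ is a full segment of $A\cap\R$, and its projection onto the face $\R_j$ perpendicular to the other direction $j$ would be all of $\R_j$, contradicting $|\calP_j(A\cap\R)|\le\lambda|\R_j|<|\R_j|$. For the inductive step, fix $i$ and consider the cylinder $U_i\coloneqq\bigcup_{x\in\calP_i^B(A\cap\R)}\ell_x^i$, the cylinder over the bad base $\calP_i^B(A\cap\R)\subset\R_i$ in the direction $e_i$; it is contained in $A\cap\R$. Since $U_i$ is a product in the $e_i$–direction, for every $j\ne i$ its projection in the direction $e_j$ is the product of $r_i$ copies of the projection of $\calP_i^B(A\cap\R)$ forgetting the $j$–coordinate, and it sits inside $\calP_j(A\cap\R)$; comparing cardinalities and cancelling the common factor $r_i$ shows that $\calP_i^B(A\cap\R)$, viewed inside the $(d-1)$–dimensional box $\R_i$, again satisfies the hypothesis \eqref{Eq: hypothesis.lemma.1} with the same $\lambda$. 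Hence the (strengthened) inductive hypothesis applies to $\calP_i^B(A\cap\R)$ inside $\R_i$ and gives $|\calP_i^B(A\cap\R)|\le c(d-1,\lambda)\cdot 2R\cdot|\partial_{\R_i}\calP_i^B(A\cap\R)|$, the boundary being taken inside the face $\R_i$.

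The remaining ingredient is to compare $\partial_{\R_i}\calP_i^B(A\cap\R)$ with $\fext A\cap\R$. The idea is that a point of $\fext U_i\cap\R$ lying outside $A$ is adjacent to $U_i\subset A$ in some direction $j\ne i$, hence lies in $\fext A\cap\R$, and this correspondence is injective; the end caps of $U_i$ in the $e_i$–direction fall outside $\R$ and contribute nothing; the remaining points of $\fext U_i\cap\R$ lie in $A$, sit on ``good'' $e_i$–fibers hugging $U_i$, and must be charged to exit points of those fibers. Carrying out this bookkeeping while keeping the factor $r_i\in[R,2R]$ in sight should produce $r_i\,|\partial_{\R_i}\calP_i^B(A\cap\R)|\le C(d)\,|\fext A\cap\R|$, and since $r_i\ge R$ the two powers of $R$ cancel against the inductive estimate above, giving $|\calP_i^B(A\cap\R)|\le c'(d,\lambda)\,|\fext A\cap\R|$ uniformly in $R$. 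Summing over $i$, adding the good part, and finally deducing the strengthened thinness bound from $|A\cap\R|\le r_i\,|\calP_i(A\cap\R)|$ (or, if needed, from the discrete Loomis--Whitney/isoperimetric inequality in the box) closes the induction.

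The hard part—indeed essentially the only real work—is the boundary comparison in the last step. A crude injective charging from $\partial_{\R_i}\calP_i^B(A\cap\R)$ into $\fext A\cap\R$ loses a factor of order $R$, because an entire run of $A$–points along a good fiber can be adjacent to a single exit point; beating this is exactly why the induction must propagate the ``volume $\le$ diameter $\times$ boundary'' bound alongside the projection inequality. Fibers meeting $\partial\R$ need a little extra care but introduce no essential difficulty, and the constant $c(d,\lambda)$ is allowed to blow up as $\lambda\uparrow1$, which is harmless here.
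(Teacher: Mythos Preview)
Your base case $d=2$ is correct (and in fact slicker than the paper's two-case argument). Your observation that $\calP_i^B(A\cap\R)\subset\R_i$ again satisfies the small-projection hypothesis in dimension $d-1$ is also correct and is a nice structural fact. The gap is in the ``boundary comparison'' step, which you flag as the hard part but then assert should follow from bookkeeping: the inequality
\[
r_i\,\bigl|\partial_{\R_i}\calP_i^B(A\cap\R)\bigr|\ \le\ C(d)\,\bigl|\fext A\cap\R\bigr|
\]
is \emph{false} in general, so no amount of bookkeeping will rescue it. Take $d=3$, $\R=[1,R]^3$, $\lambda=\tfrac12$, and let $A$ be the block $[1,R/4]^2\times[1,R]$ with one horizontal slice of holes: remove the points $(j,k,h_0)$ with $j$ even and $j,k\le R/4$. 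All three projection hypotheses hold (each projection has density at most $1/4$). Here $\calP_3^B$ consists of the odd-$j$ columns in $[1,R/4]^2$, so its exterior boundary in $\R_3$ is essentially the even-$j$ columns, of size $\sim R^2/32$; thus $r_3|\partial_{\R_3}\calP_3^B|\sim R^3/32$. But $|\fext A\cap\R|\sim R^2$ (dominated by the two outer lateral faces of the block plus the $\sim R^2/32$ holes at height $h_0$). The ratio blows up like $R$. The point is that when $\calP_i^B$ is fragmented its boundary can be as large as its volume, so your inductive ``thinness'' bound $|\calP_i^B|\le cR\,|\partial\calP_i^B|$ is slack by a factor $R$ exactly when the boundary comparison would need to gain a factor $R$; the two losses do not cancel, they compound.

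The paper's proof avoids this trap by never passing through $\partial_{\R_i}\calP_i^B$. Instead it slices $\R$ into layers $\R^k=\R\cap\{x_d=k\}$ and works layer by layer: within each $(d-1)$-dimensional slice $\R^k$ it bounds $\sum_{i<d}|\calP_i(A\cap\R^k)|$ in terms of $|\fext A\cap\R^k|$, then sums over $k$. The key observation is that a bad point of $\calP_j(A\cap\R^k)$ (for $j<d$) gives a full line inside the slice, all of whose points project down into $\calP_d(A\cap\R)$; combined with $|\calP_d|\le\lambda|\R_d|$ this forces $|\calP_j^B(A\cap\R^k)|\le\lambda|(\R^k)_j|$, so in each slice either the hypothesis holds with parameter $\tfrac{\lambda+1}{2}$ (and induction applies) or some projection is large, forcing many good points and hence plenty of $\fext A$ in that slice. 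A symmetrisation over the choice of slicing direction handles the leftover $|\calP_d|$ term.
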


\begin{proof}
The proof will be done by induction on the dimension. For $d=2$, take a rectangle ${\R=[1,r_1]\times[1,r_2]}$. If there is no bad points in $\calP_1(A\cap\R)$, then 
\begin{align}\label{Eq: Bound.1.on.P.1}
    |\calP_1(A\cap\R)| = |\calP_1^G(A\cap \R)| \leq |\fext A \cap \R|.
\end{align}

If there is a bad point $p=(1,p_2)\in \calP_1^B(A\cap\R)$, $\ell_p^1\subset A\cap \R$  by definition of bad point. As $|\calP_1(A\cap \R)| \leq \lambda|\R_1| < |\R_1|$, there is a point $p^\prime = (1,p_2^\prime)\in \R_1\setminus \calP_1(A\cap \R)$ that is in the face $\R_1$ but not in the projection. By definition of the projection, $\ell_{p^\prime}^1\in A^c\cap \R$. Therefore, for any $1\leq k\leq r_1$, $(k,p_2)\in  A\cap \R$ and $(k,p^\prime_2)\in  A^c\cap \R$, we can find a point $p^k=(k, p^k_2) \in \fext A \cap \R$. Since $p^{k_1}\neq p^{k_2}$ for every $k_1\neq k_2$, we have $r_1 \leq |\fext A \cap \R|$, hence
\begin{equation}\label{Eq: Bound.2.on.P.1}
   |P_1(A\cap \R)| \leq  |\R_1| = {r_2}\leq  2R \leq 2r_1 \leq  2|\fext A \cap \R|.
\end{equation}

A completely analogous argument can be done to bound $|P_2(A\cap \R)|$, and we conclude that
\begin{equation*}
    \sum_{i=1}^2|\calP_i(A\cap \R)|\leq 4|\fext A \cap \R|,
\end{equation*}
and take $c(2,\lambda)=4$. Suppose the lemma holds for $d-1$ and fix a rectangle $\R=[1,r_1]\times\dots\times[1,r_d]$. We split $\R$ into layers $L_k = \{x\in\Z^d : x_d = k\}$, for $k=1,\dots, r_d$. We can then partition the projection and write
\begin{equation*}
|\calP_i(A\cap \R)| = \sum_{k=1}^{r_d} |\calP_i(A\cap \R)\cap L_k|,    
\end{equation*}
for any $i\in\{1,\dots, d-1\}$. This yields
\begin{align}\label{Eq: Partition.sum.proj.}
    \sum_{i=1}^d|\calP_i(A\cap \R)| &= \sum_{i=1}^{d-1}\sum_{k=1}^{r_d}|\calP_i(A\cap \R)\cap L_k| + |\calP_d(A\cap \R)| \nonumber \\
    &=  \sum_{k=1}^{r_d}\sum_{i=1}^{d-1}|\calP_i(A\cap \R)\cap L_k| + |\calP_d(A\cap \R)|.
\end{align}

Notice now that $\calP_i(A\cap \R)\cap L_k = \calP_i(A\cap (\R\cap L_k))$. Defining the rectangle $\R^k \coloneqq \R\cap L_k$, for every point $p\in \calP_j^B(A\cap \R^k)$, $\ell_p^j \subset A\cap \R^k$. Moreover, we can associate every point $x\in \ell_p^j$ in the line with a point $x^\prime\in \calP_d(A\cap\R)$ by taking $x_m^\prime = x_m$ for $m \leq d-1$ and $x_d^\prime = 1$, therefore

\begin{equation*}
    r_j|\calP_j^B(A\cap \R^k)| = \sum_{p\in \calP_j^B(A\cap \R^k)}|\ell_p^j| \leq |\calP_d(A\cap\R)|.
\end{equation*}

\begin{figure}[H] 
	\centering
	\includegraphics[scale=0.15]{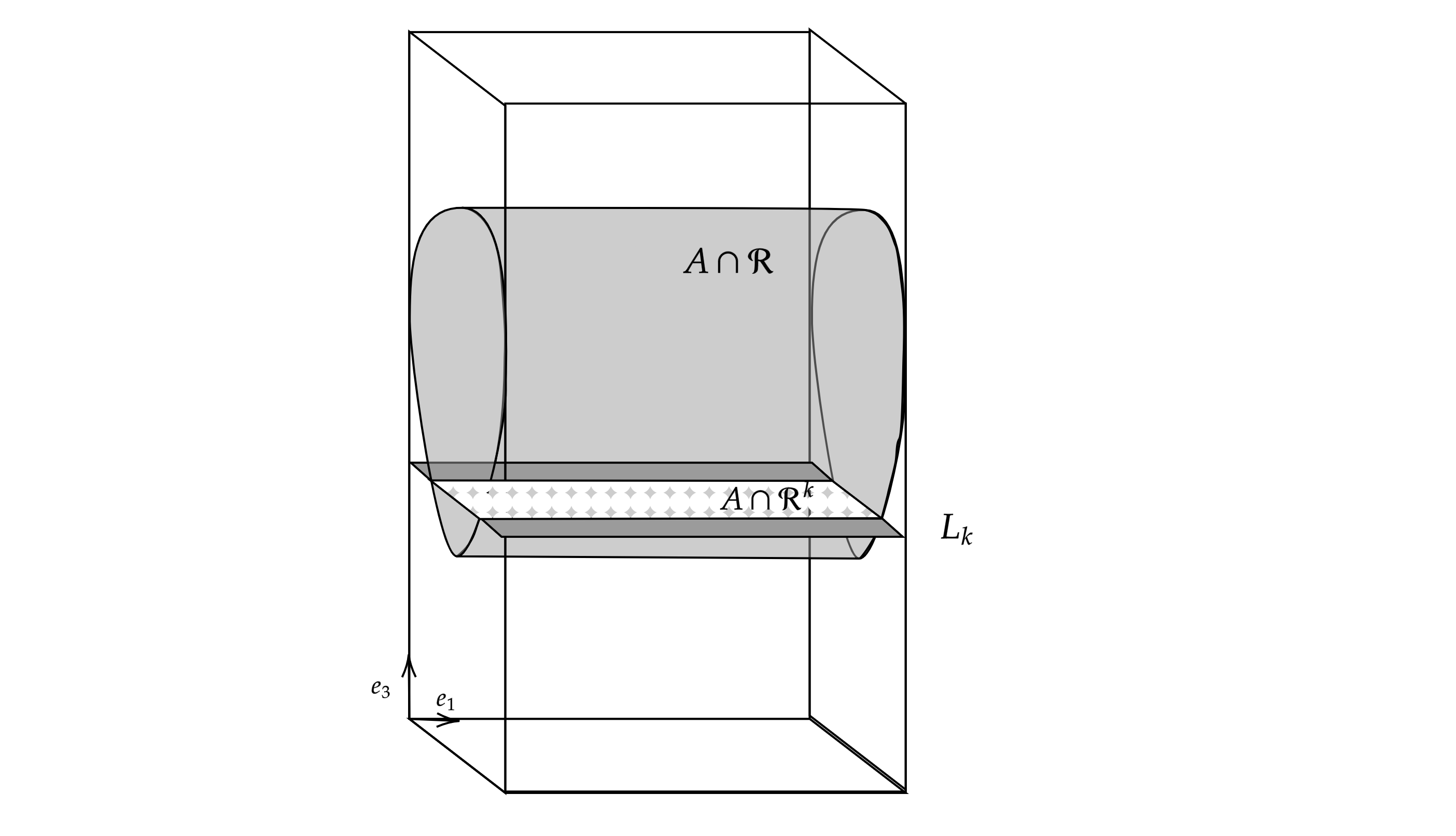}
	\caption{Considering $A\cap\R$ the light gray region, the dark gray region is a layer $L_k$  and the dotted region is $A\cap\R^k$, the restriction of $A$ to the layer $L_k$.} \label{fig: Figura6}
\end{figure}

Using the hypothesis \eqref{Eq: hypothesis.lemma.1} we conclude that

\begin{equation}\label{Eq: upper.bound.projection.i.bad.points}
     |\calP_j^B(A\cap \R^k)| \leq \lambda\frac{|\R_d|}{r_j} = \lambda \frac{ \prod_{q\neq d}r_q}{r_j} =  \lambda \prod_{q\neq j,d}r_q = \lambda |(\R^k)_j|.
\end{equation}
We consider know two cases:
    \begin{itemize}
        \item[(a)] If $|\calP_i(A\cap \R^k)| \leq \frac{\lambda +1}{2}|(\R^k)_i|$, for all $i\leq d-1$, then we are in the hypothesis of the lemma in $d-1$ and therefore
\begin{equation}\label{Eq: Primeiro.bound.soma.projecoes}
    \sum_{i=1}^{d-1} |\calP_i(A\cap \R^k)| \leq c\left(d-1, \frac{\lambda + 1}{2}\right)|\fext A\cap \R^k|.
\end{equation}
    \item [(b)] If there exists $j\in\{1,\dots,d-1\}$ satisfying $|\calP_j(A\cap \R^k)| > \frac{\lambda +1}{2}|(\R^k)_j|$, by \eqref{Eq: upper.bound.projection.i.bad.points} we have $|\calP_j^G(A\cap \R^k)| = |\calP_j(A\cap \R^k)| - |\calP_j^B(A\cap \R^k)| \geq \frac{1-\lambda}{2}|(\R^k)_j|$, hence
\begin{equation*}
    |(\R^k)_j| \leq  \frac{2}{1 - \lambda}|\fext A \cap \R^k|.
\end{equation*}
    Using that $|(\R^k)_i| \leq (2R)^{d-2} \leq 2^{d-2}|(\R^k)_j|$ for every $i\in\{1,\dots,d\}$, we conclude that 
\begin{equation}\label{Eq: Segundo.bound.soma.projecoes}
    \sum_{i=1}^{d-1} |P_i(A\cap\R^k)| \leq \sum_{i=1}^{d-1} |(\R^k)_i| \leq (d-1)2^{d-2}|(\R^k)_j| \leq \frac{(d-1)2^{d-1}}{1 - \lambda}|\fext A \cap \R^k|.
\end{equation}
\end{itemize}

In both cases we were able to bound the sum of projections by a constant times the size of the boundary of $A$ in $\R^k$. Applying \eqref{Eq: Primeiro.bound.soma.projecoes} and \eqref{Eq: Segundo.bound.soma.projecoes} back in \eqref{Eq: Partition.sum.proj.} we get
\begin{align*}
    \sum_{i=1}^d|\calP_i(A\cap \R)| &\leq
    \sum_{k=1}^{r_d}\left[c\left(d-1, \frac{\lambda + 1}{2}\right)+ \frac{(d-1)2^{d-1}}{1 - \lambda}\right]|\fext A \cap \R \cap L_k| + |\calP_d(A\cap \R)|\\
    &=\left[c\left(d-1, \frac{\lambda + 1}{2}\right)+ \frac{(d-1)2^{d-1}}{1 - \lambda}\right]|\fext A \cap \R| + |\calP_d(A\cap \R)|.
\end{align*}
We finish the proof by noticing that we can repeat this exact same argument but now splitting $\R$ into layers $L_k = \{x\in \R : x_j = k\}$. By doing so, we have that  
\begin{equation*}
    \sum_{i=1}^d|\calP_i(A\cap \R)| \leq
     \left[c\left(d-1, \frac{\lambda + 1}{2}\right)+ \frac{(d-1)2^{d-1}}{1 - \lambda}\right]|\fext A \cap \R| + |\calP_j(A\cap \R)|
\end{equation*}
for any $j\in\{1,\dots, d\}$. Summing both sides in $j$ we conclude

\begin{equation}
    \sum_{i=1}^d|\calP_i(A\cap \R)| \leq \frac{d}{d-1}\left[c\left(d-1, \frac{\lambda + 1}{2}\right)+ \frac{(d-1)2^{d-1}}{1 - \lambda}\right]|\fext A \cap \R|,
\end{equation}
which proves our claim if we take $c(d,\lambda) \coloneqq \frac{d}{d-1}\left[c(d-1, \frac{\lambda + 1}{2})+ \frac{(d-1)2^{d-1}}{1 - \lambda}\right] = 2d + \frac{(d-2)d2^{d-1}}{1-\lambda}$.
\end{proof}

\begin{remark}
This lemma can be proved when $R\leq r_i \leq \kappa R$ for any $\kappa>1$. When applying the lemma, we will choose $\lambda=\frac{7}{8}$ to simplify the notation. All the proofs work as long as we choose $\lambda> \frac{3}{4}$.
\end{remark}

\begin{lemma}\label{Lemma: Proposicao1.Aux1}
    Given $A\subset \Z^d$, $\ell\geq 0$ and $U= C_{\ell}\cup C_{\ell}^\prime$ with $C_{\ell}$ and $C_{\ell}^\prime$ being two $\ell$-cubes sharing a face, there exists a constant $b\coloneqq b(d)$ such that, if 
    
\begin{align}\label{Eq. U.condition}
    \frac{1}{2}|C_{\ell}| \leq |C_{\ell}\cap A| \qquad \text{and} \qquad |C_{\ell}^\prime\cap A|< \frac{1}{2}|C_{\ell}^\prime|
\end{align}
then $2^{\ell(d-1)}\leq b|\fext A\cap U|$.
\end{lemma}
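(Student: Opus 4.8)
The plan is to work entirely inside the box $U$ and run a line–counting argument, splitting on whether $A$ has a large axis projection in $U$. I will assume $\ell\geq 1$; the case $\ell=0$ is immediate, since then $C_\ell=\{v\}$ and $C_\ell'=\{v'\}$ are single adjacent points with $v\in A$ and $v'\notin A$, so $v'\in\fext A\cap U$ and $2^{\ell(d-1)}=1\leq|\fext A\cap U|$. After permuting coordinates I may assume $C_\ell$ and $C_\ell'$ are glued along a face orthogonal to $e_d$, so that $U$ is a rectangle with side lengths $r_i=2^\ell$ for $i<d$ and $r_d=2^{\ell+1}$; with $R:=2^\ell\geq 2$ this gives $R\leq r_i\leq 2R$ (the hypothesis of Lemma \ref{Lemma: Geo.discreta.1}) and $|U|=2^{\ell d+1}$. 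From the two hypotheses, $|A\cap U|\geq|A\cap C_\ell|\geq 2^{\ell d-1}$ and $|A^c\cap U|\geq|A^c\cap C_\ell'|>2^{\ell d-1}$. Set $N:=2^{\ell(d-1)}$. In both cases below I will bound $N$ by a dimensional constant times $|\fext A\cap U|$.

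\emph{Case 1: $|\calP_i(A\cap U)|<\tfrac78|U_i|$ for every $i=1,\dots,d$.} Then Lemma \ref{Lemma: Geo.discreta.1}, applied with $\lambda=\tfrac78$ and $\R=U$, gives $\sum_{i=1}^d|\calP_i(A\cap U)|\leq c(d,\tfrac78)\,|\fext A\cap U|$. On the other hand, writing $A\cap U$ as the union over $x\in\calP_i(A\cap U)$ of its slices along $e_i$, we get $|A\cap U|\leq r_i\,|\calP_i(A\cap U)|\leq 2^{\ell+1}|\calP_i(A\cap U)|$, so every projection has at least $2^{\ell d-1}/2^{\ell+1}=N/4$ points; hence $|\fext A\cap U|\geq \frac{d}{4\,c(d,7/8)}\,N$.

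\emph{Case 2: $|\calP_j(A\cap U)|\geq\tfrac78|U_j|$ for some $j$.} I will count the $|U_j|=|U|/r_j$ lines $\ell_x^j$, $x\in U_j$, in direction $e_j$. Of these, $|\calP_j(A^c\cap U)|\geq|A^c\cap U|/r_j\geq 2^{\ell d-1}/r_j=\tfrac14|U_j|$ meet $A^c$, so by inclusion–exclusion at least $|\calP_j(A\cap U)|+|\calP_j(A^c\cap U)|-|U_j|\geq(\tfrac78+\tfrac14-1)|U_j|=\tfrac18|U_j|\geq\tfrac18 N$ of them meet both $A$ and $A^c$. Each such line contains two consecutive lattice points, one in $A$ and one in $A^c$; the $A^c$–endpoint of that pair then lies in $\fext A$, and it lies in $U$ because the whole line does. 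Since distinct lines are disjoint, this produces $\geq\tfrac18 N$ distinct points of $\fext A\cap U$.

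Taking $b(d):=\max\{8,\ 4c(d,7/8)/d,\ 1\}$ covers all three cases. The only genuinely delicate point is the dichotomy itself: a single–direction line count is worthless when $A$ and $A^c$ each fill only a bounded fraction of $U$ along every axis — then the overlap $|\calP_j(A\cap U)|+|\calP_j(A^c\cap U)|-|U_j|$ may be negative — and this is exactly the regime where the geometric input of Lemma \ref{Lemma: Geo.discreta.1} is indispensable; it also explains why the threshold must be taken strictly above $3/4$ (here $7/8$). Everything else is routine bookkeeping with dyadic cube sizes.
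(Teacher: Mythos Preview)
Your proof is correct and follows the same dichotomy as the paper: split on whether some projection exceeds $\tfrac{7}{8}|U_j|$, and in the small-projection case invoke Lemma~\ref{Lemma: Geo.discreta.1}. Your Case~2 is equivalent to the paper's case~(a): the paper bounds the bad points via $|\calP_j^B(A\cap U)|\leq |A\cap U|/r_j\leq\tfrac{3}{2}\cdot 2^{\ell d}/r_j$ and subtracts, whereas you bound $|\calP_j(A^c\cap U)|\geq|A^c\cap U|/r_j$ and use inclusion--exclusion; these are dual formulations of the same count of ``good'' lines. Your Case~1 is cleaner than the paper's case~(b): the paper reaches the lower bound on $\sum_i|\calP_i(A\cap U)|$ via the isoperimetric inequality applied to $A\cap U$ and a chain relating $|\fext(A\cap U)|$ to $|\fint(A\cap U)|$ to projections plus $|\fext A\cap U|$, while you obtain it in one line from $|\calP_i(A\cap U)|\geq|A\cap U|/r_i$. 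This saves the isoperimetric step entirely and gives a shorter argument at no cost.
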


\begin{proof}
For $\ell=0$, \eqref{Eq. U.condition} guarantees that $C_{\ell} = \{x\} \subset A$ and $C_{\ell}^\prime = \{y\}\subset A^c$, hence $|\fext A\cap \{x,y\}| = 1$ and it is enough to take $b\geq 1$. For $\ell \geq 1$, \eqref{Eq. U.condition} yields
\begin{equation}\label{Eq. A.cap.U.volume}
    \frac{1}{2}2^{\ell d} \leq |A\cap U| \leq \frac{3}{2}2^{\ell d}.
\end{equation}

To simplify the notation, we can assume wlog that ${U=[1,2^{\ell}]^{d-1}\times [1, 2^{\ell+1}]}$. As discussed before, for each point $p\in\calP^B_j(A\cap U)$ in the projection, $\ell_p^j\subset A\cap U$ and the lines are disjoint. Moreover, the size of the lines  is constant $r_j\coloneqq |\ell_p^j|$, hence $|\calP_{j}^B(A\cap U)|r_j = \sum_{p\in\calP_{j}^B(A\cap U)} |\ell_p^j| \leq |A\cap U|$. Together with the upper bound \eqref{Eq. A.cap.U.volume}, this yields 
\begin{equation}\label{Eq: Upper.bound.bad.points}
    |\calP_{j}^B(A\cap U)| \leq \frac{3}{2}2^{\ell d}r_j^{-1}.
\end{equation}
Using the isometric inequality, the lower bound on \eqref{Eq. A.cap.U.volume} yields $d2^{\frac{1}{d}}2^{\ell(d-1)}\leq |\fext (A\cap U)|$. As 
\begin{align*}
\frac{1}{2d}|\fext (A\cap U)| &\leq |\fint (A\cap U)| = |\fint(A\cap U) \cap \fint U| + |\fint(A\cap U) \cap(U\setminus \fint U)|\\
&\leq 2\sum_{i=1}^d |\calP_{i}(A\cap U)| + |\fint A\cap U| \leq 2\sum_{i=1}^d |\calP_{i}(A\cap U)| + |\fext A\cap U|,
\end{align*}
we get
\begin{equation}\label{Eq: Lemma.geo.discreta.3}
    2^{\frac{1}{d}-1}2^{\ell(d-1)}\leq 2\sum_{i=1}^d |\calP_{i}(A\cap U)| + |\fext A\cap U|
\end{equation}

We again consider two cases:
\begin{itemize}
    \item[(a)] If $|\calP_{j}(A\cap U)|> \frac{7}{8}|U_j|$ for some $j=1,\dots, d$, by \eqref{Eq: Upper.bound.bad.points} and \eqref{Eq: upper.bound.good.points} we get
    \begin{align*}
    \frac{7}{8}|U_j| < |\calP_{j}(A\cap U)| \leq |\fext A \cap U| + \frac{3}{2}2^{\ell d}r_j^{-1}.
    \end{align*}
    A simple calculation shows that $\frac{1}{8}2^{\ell(d-1)}\leq \frac{7}{8}|U_j| - \frac{3}{2}2^{\ell d}r_j^{-1}$, therefore 
    \begin{equation}\label{Eq: upper.bound.big.projections.1}
        \frac{1}{8}2^{\ell(d-1)} \leq |\fext A \cap U|.
    \end{equation}

    \item[(b)] If $|\calP_{i}(A\cap U)|\leq \frac{7}{8} |U_i|$ for all $i$, by Lemma \ref{Lemma: Geo.discreta.1}, there is a constant $c= c(d)$ such that
    \begin{equation}\label{Eq: Lemma.geo.discreta.2}
        \sum_{i=1}^d |\calP_{i}(A\cap U)|\leq c|\fext A\cap U|.
    \end{equation}
    Together with \eqref{Eq: Lemma.geo.discreta.3}, this yields
    \begin{equation}\label{Eq: upper.bound.big.projections.2}
        2^{\ell(d-1)}\leq \frac{2c+1}{2^{\frac{1}{d}-1}}|\fext A\cap U|.
    \end{equation}
\end{itemize}

Equations \eqref{Eq: upper.bound.big.projections.1} and \eqref{Eq: upper.bound.big.projections.2} shows the desired results taking $b\coloneqq \max \{8, {(2c+1)}{2^{1-\frac{1}{d}}}\}$.
\end{proof}

\begin{proposition}\label{Proposition1_SR}For the functions $B_0,\dots,B_k$ defined above, there exists constants $b_1,b_2$ depending only on $d$ and $r$ such that 
\begin{equation}\label{Eq: Prop.1.FFS.i_SR}
    |\partial\mathfrak{C}_\ell(\gamma)| \leq b_1\frac{|\fext \I(\gamma)|}{2^{\ell(d-1)}} \leq b_1 \frac{|\gamma|}{2^{\ell(d-1)}}
\end{equation}
    and 
\begin{equation}\label{Eq: Prop.1.FFS.ii_SR}
    |B_\ell(\gamma)\Delta B_{\ell+1}(\gamma)| \leq b_2 2^{\ell} |\gamma|
\end{equation}
for every $\ell\in\{0,\dots,k\}$ and $\gamma\in\Gamma_0(n)$.
\end{proposition}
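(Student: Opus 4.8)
The plan is to deduce both inequalities from the isoperimetric-type estimate of Lemma \ref{Lemma: Proposicao1.Aux1} applied with $A=\I(\gamma)$, together with the elementary bound $|\fext\I(\gamma)|\le|\gamma|$. The latter holds because every vertex of $\fext\I(\gamma)$ is incident to a plaquette of $\partial\I(\gamma)$ separating it from a neighbour in $\I(\gamma)$, this plaquette lies in $\gamma$ (if it did not, that neighbour would reach $\infty$ without crossing $\gamma$), and the plaquette determines the vertex; hence $|\fext\I(\gamma)|\le|\partial\I(\gamma)|\le|\gamma|$. This already gives the second inequality in \eqref{Eq: Prop.1.FFS.i_SR}, so it remains to prove the first inequality in \eqref{Eq: Prop.1.FFS.i_SR} and the bound \eqref{Eq: Prop.1.FFS.ii_SR}.

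For \eqref{Eq: Prop.1.FFS.i_SR}, I would fix a pair $\{C_\ell,C_\ell'\}\in\partial\mathfrak{C}_\ell(\gamma)$, with $C_\ell$ admissible, $C_\ell'$ not, sharing a face, and apply Lemma \ref{Lemma: Proposicao1.Aux1} to $U=C_\ell\cup C_\ell'$ and $A=\I(\gamma)$ to get $2^{\ell(d-1)}\le b(d)\,|\fext\I(\gamma)\cap U|$. Summing over all pairs in $\partial\mathfrak{C}_\ell(\gamma)$ and using that each vertex of $\fext\I(\gamma)$ lies in exactly one $\ell$-cube, which occurs in at most $2d$ such pairs (one per face-neighbour), yields $\sum_{\{C_\ell,C_\ell'\}}|\fext\I(\gamma)\cap U|\le 2d\,|\fext\I(\gamma)|$, hence $|\partial\mathfrak{C}_\ell(\gamma)|\le 2d\,b(d)\,2^{-\ell(d-1)}|\fext\I(\gamma)|$, i.e. \eqref{Eq: Prop.1.FFS.i_SR} with $b_1=2d\,b(d)$.

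For \eqref{Eq: Prop.1.FFS.ii_SR}, I would use that each $(\ell+1)$-cube $D$ is a disjoint union of $2^d$ $\ell$-cubes and that the $(\ell+1)$-cubes tile $\Z^d$; call $D$ \emph{relevant} if $D\cap\bigl(B_\ell(\gamma)\Delta B_{\ell+1}(\gamma)\bigr)\ne\emptyset$, so that $|B_\ell(\gamma)\Delta B_{\ell+1}(\gamma)|\le 2^{(\ell+1)d}\cdot\#\{\text{relevant }D\}$. The key combinatorial point is that a relevant $D$ contains both an admissible and a non-admissible $\ell$-subcube: if $D$ is admissible it contains a non-admissible $\ell$-subcube by relevance, and it must contain an admissible one since otherwise $|D\cap\I(\gamma)|<2^d\cdot\tfrac12 2^{\ell d}=\tfrac12|D|$; the case of non-admissible $D$ is symmetric. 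Since the $2^d$ $\ell$-subcubes of $D$ form a connected adjacency graph, $D$ then contains a face-adjacent pair $C_\ell,C_\ell'$ with $C_\ell$ admissible and $C_\ell'$ not, so Lemma \ref{Lemma: Proposicao1.Aux1} gives $2^{\ell(d-1)}\le b(d)\,|\fext\I(\gamma)\cap(C_\ell\cup C_\ell')|\le b(d)\,|\fext\I(\gamma)\cap D|$. Summing over the disjoint relevant cubes gives $\#\{\text{relevant }D\}\le b(d)\,2^{-\ell(d-1)}|\fext\I(\gamma)|\le b(d)\,2^{-\ell(d-1)}|\gamma|$, whence $|B_\ell(\gamma)\Delta B_{\ell+1}(\gamma)|\le b(d)\,2^{(\ell+1)d-\ell(d-1)}|\gamma|=b(d)\,2^{d}\cdot 2^{\ell}|\gamma|$, i.e. \eqref{Eq: Prop.1.FFS.ii_SR} with $b_2=b(d)2^{d}$.

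The only genuine content beyond bookkeeping is routing everything through Lemma \ref{Lemma: Proposicao1.Aux1}: bounding the number of relevant $(\ell+1)$-cubes naively by $|\fint\I(\gamma)|$ would cost the full factor $2^{(\ell+1)d}$ and destroy the exponent, whereas the isoperimetric lemma upgrades "$D$ meets the boundary" to "$D$ carries at least $\gtrsim 2^{\ell(d-1)}$ of the exterior boundary", which is exactly what converts $2^{(\ell+1)d}$ into $2^{\ell}$. I expect the points needing the most care to be the justification of the bichromatic-adjacent-pair claim (connectivity of the $2\times\cdots\times 2$ arrangement of $\ell$-subcubes) and the clean verification of $|\fext\I(\gamma)|\le|\gamma|$ from the definition of contour.
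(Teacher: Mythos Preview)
Your proposal is correct and follows essentially the same approach as the paper: both parts are reduced to Lemma \ref{Lemma: Proposicao1.Aux1} applied to a face-adjacent admissible/non-admissible pair of $\ell$-cubes, then summed with bounded overlap. For \eqref{Eq: Prop.1.FFS.ii_SR} you handle the full symmetric difference in one pass via ``relevant'' $(\ell+1)$-cubes, whereas the paper treats $B_{\ell+1}\setminus B_\ell$ and $B_\ell\setminus B_{\ell+1}$ separately, but the argument (existence of an adjacent mixed pair inside each such cube, then summing the lemma over disjoint cubes) is the same and yields the same constants up to a factor of $2$.
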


    \begin{proof} Fix $\ell\in\{0,\dots,k\}$. Given a pair $(C_{\ell}, C_\ell^\prime)$, we will write $C_{\ell} \sim C_{\ell}^\prime$ when $(C_{\ell}, C_{\ell}^\prime) \in  \partial\mathfrak{C}_\ell(\gamma)$, and the union we be denoted by $U = C_{\ell} \cup C_{\ell}^\prime$. Then, 
   defining $\overline{\mathscr{C}}_{\ell} = \{C_{\ell} \in \partial \mathfrak{C}_\ell(\gamma): C_{\ell} \sim C_{\ell}^\prime \text{ for some } C_{\ell}^prime \notin  \mathfrak{C}_\ell(\gamma)\}$, for any $A\Subset\Z^d$, 
    
    \begin{align*}
        \sum_{\substack{(C_{\ell}, C^\prime_\ell)\in \partial \mathfrak{C}_\ell(\gamma)}}|A \cap \{C_{\ell} \cup C_{\ell}^\prime\}| &\leq \sum_{\substack{C_{\ell}\in \overline{\mathscr{C}}_{\ell}}}\sum_{\substack{C_{\ell}^\prime\notin  \mathfrak{C}_\ell(\gamma)\\ C_{\ell} \sim C_{\ell}^\prime}}  \left(|A \cap C_{\ell}| + |A \cap C_{\ell}^\prime|\right)\\
        &\leq \sum_{\substack{C_{\ell}\in\overline{\mathscr{C}}_{\ell}}} 2d |A \cap C_{\ell}| + \sum_{C_{\ell}^\prime\notin \mathfrak{C}_\ell(\gamma)} 2d|A \cap C_{\ell}^\prime| \leq  2d |A|\\
    \end{align*}

 Any pair of cubes $C_{\ell}\sim C_{\ell}^\prime$ are in the hypothesis of Lemma \ref{Lemma: Proposicao1.Aux1}, hence $ b 2^{\ell(d-1)} \leq |\fext \I(\gamma) \cap U|$. Applying equation above for $A=\fext\I(\gamma)$ we get that
    \begin{equation*}
       \frac{b}{2d} 2^{\ell(d-1)}| \partial \mathfrak{C}_\ell(\gamma)| \leq \frac{1}{2d} \sum_{\substack{(C_{\ell}, C^\prime_\ell)\in \partial \mathfrak{C}_\ell(\gamma)}} |\fext \I(\gamma) \cap \{C_{\ell} \cup C_{\ell}^\prime\}| \leq |\fext \I(\gamma)|,
    \end{equation*}
that concludes \eqref{Eq: Prop.1.FFS.i_SR} for $b_1\coloneqq 2d/b$.

Given $C_{(\ell+1)}\in \mathscr{C}_{(\ell+1)}(B_{\ell+1}(\gamma)\setminus B_{\ell}(\gamma))$, there is a $\ell$-cube $C_{\ell}^\prime\subset C_{r(\ell + 1)}$ with $C_{\ell}^\prime\notin \mathfrak{C}_\ell(\gamma)$, otherwise  $(B_{\ell+1}(\gamma)\setminus B_{\ell}(\gamma))\cap C_{(\ell+1)} = \emptyset$. There is also a $\ell$-cube $C_{\ell}\subset C_{r(\ell + 1)}$ with $C_{\ell}\in \mathfrak{C}_\ell(\gamma)$, otherwise we would have 
\begin{align*}
    |\I(\gamma)\cap C_{(\ell+1)}| &= \sum_{C_{\ell}\subset C_{(\ell+1)}} |\I(\gamma)\cap C_{\ell}| \leq \frac{1}{2} |C_{(\ell+1)}|.
\end{align*}

Moreover, we can assume that $C_{\ell}$ and $C_{\ell}^\prime$ share a face. Again, we use Lemma \ref{Lemma: Proposicao1.Aux1} to get,
\begin{align}\label{Eq: bound.on.c.bar}
    |B_{\ell+1}(\gamma)\setminus B_\ell(\gamma)\cap C_{(\ell+1)}| &\leq |C_{(\ell+1)}| =2^{d}2^{\ell}2^{\ell(d-1)} \nonumber\\
                &\leq 2^{d}2^{\ell} b|\fext \I(\gamma) \cap \{C_{\ell} \cup C_{\ell}^\prime\}| \nonumber\\
                &\leq 2^{d}2^{\ell}b|\fext \I(\gamma) \cap C_{(\ell+1)}|.
\end{align}
Therefore, 
\begin{align*}
    |B_{\ell+1}(\gamma)\setminus B_\ell(\gamma)| &= \sum_{C_{(\ell+1)}\in \mathscr{C}_{(\ell+1)}(B_{\ell+1}(\gamma)\setminus B_\ell(\gamma))} |B_{\ell+1}(\gamma)\setminus B_\ell(\gamma)\cap C_{(\ell+1)}| \\
    &\leq \sum_{C_{(\ell+1)}\in \mathscr{C}_{(\ell+1)}(B_{\ell+1}(\gamma)\setminus B_\ell(\gamma))} 2^{d}2^{\ell}b|\fext \I(\gamma) \cap C_{(\ell+1)}| \leq  \frac{b_2}{2}2^{\ell}|\fext \I(\gamma)|.
\end{align*}
with $b_2=b2^{d+1}$. To get the same bound for $|B_{\ell}(\gamma)\setminus B_{\ell+1}(\gamma)|$ we repeat a similar argument, covering $B_{\ell}(\gamma)\setminus B_{\ell+1}(\gamma)$ with $(\ell+1)$-cubes. 
    \end{proof}

\begin{corollary}
    For any $\ell>0$ and any two contours $\gamma_1,\gamma_2 \in \Gamma_0(n)$ such that $B_\ell(\gamma_1)=B_{\ell}(\gamma_2)$, there exists a constant $b_3>0$ such that 
    \begin{equation*}
        \d_2(\gamma_1,\gamma_2)\leq 4 \varepsilon b_3 2^{\frac{\ell}{2}} n^{\frac{1}{2}}. 
    \end{equation*} 
\end{corollary}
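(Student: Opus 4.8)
The plan is to reduce the whole statement to the telescoping estimate already contained in Proposition \ref{Proposition1_SR}. First I would record the elementary fact that a $0$-cube is a single site, so the admissibility condition $|C_0\cap\I(\gamma)|\ge\frac12|C_0|$ holds exactly when that site belongs to $\I(\gamma)$; consequently $B_0(\gamma)=\I(\gamma)$ for every contour $\gamma$. This identifies the two sets whose symmetric difference governs $\d_2$, namely $\I(\gamma_1)\Delta\I(\gamma_2)=B_0(\gamma_1)\Delta B_0(\gamma_2)$, as the two endpoints of a chain of coarse-grainings that, by hypothesis, pass through the common region $B_\ell(\gamma_1)=B_\ell(\gamma_2)$ at level $\ell$.

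Next I would invoke the triangle inequality for the symmetric difference, $A\Delta C\subseteq(A\Delta B)\cup(B\Delta C)$, iterated along the chain $B_0(\gamma_1),B_1(\gamma_1),\dots,B_\ell(\gamma_1)=B_\ell(\gamma_2),B_{\ell-1}(\gamma_2),\dots,B_0(\gamma_2)$. Since the middle link $B_\ell(\gamma_1)\Delta B_\ell(\gamma_2)$ is empty, and using symmetry of $\Delta$, this yields
$$|\I(\gamma_1)\Delta\I(\gamma_2)|\le\sum_{j=0}^{\ell-1}|B_j(\gamma_1)\Delta B_{j+1}(\gamma_1)|+\sum_{j=0}^{\ell-1}|B_j(\gamma_2)\Delta B_{j+1}(\gamma_2)|.$$
I would then substitute the bound \eqref{Eq: Prop.1.FFS.ii_SR}, $|B_j(\gamma)\Delta B_{j+1}(\gamma)|\le b_2\,2^j|\gamma|=b_2\,2^j n$ for $\gamma\in\Gamma_0(n)$, and sum the geometric series, obtaining $|\I(\gamma_1)\Delta\I(\gamma_2)|\le 2b_2 n(2^\ell-1)\le 2b_2\,2^\ell n$. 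Recalling $\d_2(\gamma_1,\gamma_2)=2\varepsilon|\I(\gamma_1)\Delta\I(\gamma_2)|^{1/2}$, this gives $\d_2(\gamma_1,\gamma_2)\le 2\varepsilon\sqrt{2b_2}\,2^{\ell/2}n^{1/2}=4\varepsilon b_3\,2^{\ell/2}n^{1/2}$ with $b_3\coloneqq\sqrt{b_2/2}$.

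I do not anticipate a genuine obstacle: the corollary is essentially a formal consequence of part (ii) of Proposition \ref{Proposition1_SR} together with sub-additivity of the symmetric difference, and the geometric sum is what produces the factor $2^{\ell/2}$ rather than $2^{k/2}$. The only points meriting a line of care are (a) justifying $B_0(\gamma)=\I(\gamma)$ directly from the definition of admissible $0$-cubes, and (b) noting that one needs $\ell\le k$ so that every coarse-graining $B_j(\gamma_i)$ with $0\le j\le\ell$ appearing in the chain is defined and the estimate \eqref{Eq: Prop.1.FFS.ii_SR} applies to each link; under the standing convention $\ell\in\{1,\dots,k\}$ this is automatic.
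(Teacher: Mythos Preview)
Your proof is correct and follows essentially the same approach as the paper: identify $B_0(\gamma)=\I(\gamma)$, telescope along the chain $B_0,\dots,B_\ell$, and plug in the bound \eqref{Eq: Prop.1.FFS.ii_SR}. The only cosmetic difference is that the paper applies the triangle inequality directly in the metric $\d_2$ (summing $\sum_i 2\varepsilon\sqrt{|B_i\Delta B_{i-1}|}$ and getting a geometric series in $2^{i/2}$), whereas you apply it at the level of the symmetric difference first and take the square root afterward; your route actually yields the slightly smaller constant $b_3=\sqrt{b_2/2}$ versus the paper's $b_3=\sqrt{b_2}(\sqrt{2}+1)$.
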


\begin{proof}
    This is a simple application of the triangular inequality, since $d_2(\gamma_1,\gamma_2) \leq d_2(\gamma_1,B_\ell(\gamma_1)) + d_2(\gamma_2,B_\ell(\gamma_2))$ and 
    \begin{align*}
        d_2(\gamma_1,B_\ell(\gamma_1)) &\leq \sum_{i=1}^\ell d_2(B_i(\gamma_1),B_{i-1}(\gamma_1)) = \sum_{i=1}^\ell 2\varepsilon\sqrt{B_i(\gamma_1)\Delta B_{i-1}(\gamma_1)} \\
        & \leq \sum_{i=1}^\ell 2\varepsilon\sqrt{b_2} 2^{\frac{i}{2}} \sqrt{n}  \leq 2\varepsilon\sqrt{b_2}(\sqrt{2}+1)2^{\frac{\ell}{2}} \sqrt{n} 
    \end{align*}
    where in the second to last equation used \eqref{Eq: Prop.1.FFS.ii_SR}. As the same bound holds for $d_2(\gamma_2,B_\ell(\gamma_2))$, the corollary is proved by taking ${b_3 = \sqrt{b_2}(\sqrt{2}+1)}$
\end{proof}

\begin{remark}\label{Rmk: Bounding_N_by_B_ell_SR}
    This corollary shows that we can create a covering of $\Gamma_0(n)$, indexed by $B_\ell(\Gamma_0(n))$, of ball with radius $4 \varepsilon b_3 2^{\frac{\ell}{2}} n^{\frac{1}{2}}$. Therefore $N(\Gamma_0(n), \d_2, 4\varepsilon b_3 2^{\frac{\ell}{2}} n^{\frac{1}{2}}) \leq |B_\ell(\Gamma_0(n))|$. 
\end{remark}
In the next proposition we bound $|B_\ell(\Gamma_0(n))|$, again following \cite{FFS84}.

\begin{proposition}\label{Prop: Proposition_2_FFS}
    There exists a constant $b_4\coloneqq b_4(d)$ such that, for any $n\in\mathbb{N}$,
    \begin{equation}\label{Eq: Proposition_2_FFS}
        |B_\ell(\Gamma_0(n))|\leq \exp{\left\{\frac{b_4\ell n}{2^{\ell(d-1)}}\right\}},
    \end{equation}
    that is, the number of coarse-grained contours in $B_\ell(\Gamma_0(n))$ is bounded above by an exponential term. 
\end{proposition}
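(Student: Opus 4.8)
The plan is to turn the statement into a Peierls-type enumeration of the coarse-grained contours, and then to control that enumeration using the geometric input of Proposition~\ref{Proposition1_SR} together with the connectedness of $\I(\gamma)$. Since a finite union of cubes is recovered from its boundary, $B_\ell(\gamma)$ is determined by $\partial\mathfrak{C}_\ell(\gamma)$, so it suffices to bound the number of sets of the form $\partial\mathfrak{C}_\ell(\gamma)$ with $\gamma\in\Gamma_0(n)$. Regard $\partial\mathfrak{C}_\ell(\gamma)$ as a family of $(d-1)$-faces in the rescaled lattice $2^\ell\Z^d$; by \eqref{Eq: Prop.1.FFS.i_SR} it contains at most $m_\ell\coloneqq b_1\,n/2^{\ell(d-1)}$ of them. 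I would also record two elementary reductions: first, if $B_\ell(\gamma)\neq\varnothing$ then an admissible $\ell$-cube forces $\tfrac12 2^{\ell d}\le|\I(\gamma)|$, and the isoperimetric inequality gives $|\I(\gamma)|\le C_d|\gamma|^{d/(d-1)}=C_d n^{d/(d-1)}$, so that $2^{\ell(d-1)}\le C_d' n$ — this is what will later let one trade a factor $\log n$ for a factor $\ell$; second, since $0\in\I(\gamma)$ and $\I(\gamma)$ is connected with $|\partial\I(\gamma)|=n$, all of $B_\ell(\gamma)$ lies within distance $O(n)$ of the origin, hence within $O(n/2^\ell)$ rescaled cubes of the origin-cube. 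The cases $\ell=0$, $n$ too small for $\Gamma_0(n)\neq\varnothing$, and $B_\ell(\Gamma_0(n))=\{\varnothing\}$ are immediate, so assume $B_\ell(\gamma)\neq\varnothing$.

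For the enumeration I would follow \cite{FFS84} (see also \cite{Bovier.06}). Write $\partial\mathfrak{C}_\ell(\gamma)=\sigma_1\sqcup\cdots\sqcup\sigma_j$ as its connected components; these are closed $(d-1)$-surfaces with $\sum_i|\sigma_i|\le m_\ell$ and $|\sigma_i|\ge 2d$, so $j\le m_\ell/2d$. One first counts the \emph{shapes}: once one reference face of each $\sigma_i$ is fixed, the standard connectivity bound for surfaces \cite{Balister_Bollobas_07} gives at most $\prod_i C(d)^{|\sigma_i|}\le C(d)^{m_\ell}$ possibilities, and a further bounded factor per component records which side is admissible; together this is $\exp\{O(m_\ell)\}=\exp\{O(n/2^{\ell(d-1)})\}$. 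It remains to count the \emph{relative positions} of the $j$ reference faces. Here the key point is that the $\sigma_i$ bound admissible $\ell$-cubes — large lumps of the single connected set $\I(\gamma)$ — joined pairwise inside $\I(\gamma)$ by thin filaments; fixing a spanning tree on the $j$ lumps and following its edges through $\I(\gamma)$, each filament of ($\Z^d$-)length $L_i$ passes through $O(L_i/2^\ell)$ rescaled cubes, hence contributes $O((L_i/2^\ell)^d)$ choices of direction and length to reach the next lump, while (at scale $\ell\ge1$) filament cubes are non-admissible and every unit of filament length consumes at least a constant amount of the surface budget $|\gamma|=n$, so $\sum_i L_i\le Cn$. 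The positions are thus bounded by $(Cn/2^\ell)^d\cdot\prod_i(CL_i/2^\ell)^d$ with at most $m_\ell/2d$ factors and $\sum_i L_i\le Cn$; maximizing the product by AM--GM and invoking $2^{\ell(d-1)}\le C_d' n$ yields $\exp\{O(\ell\,n/2^{\ell(d-1)})\}$, the factor $\ell$ entering only through terms of size $2^{O(\ell)}$. Multiplying the shape and position counts, and summing over the at most $Cm_\ell$ choices of $j$ and of the $|\sigma_i|$, gives \eqref{Eq: Proposition_2_FFS}.

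The main obstacle is exactly this position count. If $\partial\mathfrak{C}_\ell(\gamma)$ were connected one would have a routine anchored Peierls count, even giving $\exp\{O(n/2^{\ell(d-1)})\}$; but $\I(\gamma)$ may split into many fat lumps joined by thin filaments, and the naive estimate ``place each lump anywhere in the $O(n)$-ball'' overcounts by a factor $\exp\{O((n/2^{\ell(d-1)})\log n)\}$, which breaks the bound precisely for small $\ell$ and large $n$. The resolution — the technical heart, carried out in \cite{FFS84} — is to use connectedness of $\I(\gamma)$ and $|\gamma|=n$ to force the total filament length to be $O(n)$, so the lumps cannot be spread out freely, and then to book-keep the choices via the spanning tree with the AM--GM optimization, so that the logarithms cancel and only an extra factor $\ell$ survives. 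In writing this up I would isolate it as two geometric lemmas paralleling \cite[Proposition~2]{FFS84}: one showing that each face of $\partial\mathfrak{C}_\ell(\gamma)$ lies within $O(2^\ell)$ of $\gamma$ and that filament cubes are non-admissible, and one carrying out the tree/AM--GM bound on the number of admissible-lump configurations with prescribed total filament length.
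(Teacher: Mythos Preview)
Your proposal follows the same route as the paper's proof (both adapting \cite{FFS84}): decompose $\partial\mathfrak{C}_\ell(\gamma)$ into its $j\le m_\ell/(2d)$ connected components, bound the shapes by the standard Peierls count $\exp\{O(m_\ell)\}$, bound the relative positions via a total-displacement estimate $\sum_i d_i\le O(n)$ followed by AM--GM, and absorb the anchor and the sum over $j$ and over the $|\sigma_i|$.

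One caveat on your heuristic for the displacement bound. ``Filament cubes are non-admissible and every unit of filament length consumes surface budget, so $\sum_i L_i\le Cn$'' is a little loose as written: a non-admissible $\ell$-cube meeting $\I(\gamma)$ is mixed and hence contains a point of $\fext\I(\gamma)$, but there are only $\le n$ such points and each accounts for an entire cube (i.e.\ $2^\ell$ units of $\Z^d$-length), which yields $\sum_i L_i/2^\ell\le n$ rather than $\sum_i L_i\le Cn$; moreover filaments in a spanning tree may reuse cubes. The paper's phrasing of the same step avoids this: it picks reference points $x_i\in 2^\ell\Z^d$ for each component, projects to $y_i$ lying within $O(2^\ell)$ of the \emph{connected} surface $\gamma$ of size $n$, reorders so that $\sum_i|y_i-y_{i-1}|\le 2d\,n$ via a tour along $\gamma$, and absorbs the jumps $|x_i-y_i|\le d\,2^\ell$ using the lower bound $|y_i-y_{i-1}|>2^\ell$, to conclude $\sum_i d_i\le (2d+1)^2 n$. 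Either version of the displacement bound feeds into the same AM--GM computation and produces \eqref{Eq: Proposition_2_FFS}, so this is a cosmetic difference in a step you already flag as the technical heart and defer to \cite{FFS84}.
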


\begin{proof}
Start by noticing that $|B_\ell(\Gamma_0(n))| = |\partial B_\ell(\Gamma_0(n))|$, and to each $B_\ell(\gamma)$ we can associate a contour $\xi_\ell(\gamma)$ with $\I(\xi_\ell) = B_{\ell}(\gamma)$. Given $\xi_\ell\in\xi_\ell(\Gamma_0(n))$ for $\ell\in\{1,\dots,k\}$, let $\{\xi_\ell^{(1)}, \xi_\ell^{(2)},\dots,\xi_\ell^{(m)}\}$ be the connected components of $\xi_\ell$. To connected component $\xi_\ell^{(i)}$ we can uniquely associate a pair $(C_\ell,C^\prime_\ell)\in \partial\mathfrak{C}_\ell(\gamma)$ with $B_{C_\ell}\subset\I(\xi_\ell^{(i)})$. By Lemma \ref{Lemma: Proposicao1.Aux1}, there are at least $b^{-1}2^{\ell (d-1)}$ points of $\fext\I(\gamma)$ in $C_\ell\cup C_\ell^\prime$. Hence $\xi_\ell$ has at most $\frac{b n}{2^{\ell(d-1)}}$ connected components and we take $M_n \coloneqq \frac{b n}{2^{\ell(d-1)}}$.
Moreover, by Lemma \ref{Proposition1_SR}, $ |\xi_\ell| = \sum_{i=1}^{M_n} |\xi_\ell^{(i)}| = 2^{\ell(d-1)}|\partial\mathfrak{C}_\ell(\gamma)| \leq b_1 n$.

Fixed $x_1,x_2,\dots, x_{M_n}\in 2^\ell\Z^d$, and $s^1,s^2,\dots, s^{M_n}\in 2^{\ell (d-1)}\mathbb{N}$, if $\Gamma(\{x_i\}_{i=1}^{M_n},\{s_i\}_{i=1}^{M_n})$ is the number of coarse-grained contours with $x_i\in \I(\xi_\ell^{(i)})$ and $|\xi_\ell^{(i)}| = s^i$ for every $i$, then
\begin{align}\label{Eq: Proposition_2_FFS_Aux_1}
        \Gamma(\{x_i\}_{i=1}^{M_n},\{s_i\}_{i=1}^{M_n}) \leq \exp{\left( \ln{(4d)} \sum_{i=1}^{M_n} \frac{s^i}{2^{\ell (d-1)}}\right)} \leq \exp{\left( b_1\ln{(4d)}   \frac{n}{2^{\ell (d-1)}}\right)}.
\end{align}
The number of choices of $s^1,s^2,\dots, s^1\in 2^{\ell d}\mathbb{N}$ with $\sum_{i=1}^q s^i \leq b_1 n$ is less then $2^{\frac{b_1 n}{2^{\ell(d-1)}}}$. This is a simple bound on the number of ways of putting up to $\frac{b_1 n}{2^{\ell(d-1)}}$ balls on $M_n$ spaces. 

It remains know to bound the number of choices for $({x_i})_{i=1}^{M_n}$. Set $d_1 = |x_1|$ and $d_i = |x_i - x_{i-1}|$ for $i=2,\dots, M_n$. For every $i=1,\dots, M_n$, we choose $y_1,\dots,y_{M_n}\in \I({\gamma})$ such that $|x_i - y_i| = d(x_i, \I({\gamma}))$. With this choice we get $d(x_i,y_i) \leq d2^{\ell}$. Since all the cubes are not connected, $d(y_i,y_{i-1})>2^\ell$, hence
\begin{equation*}
    d(x_i,y_i) \leq d |y_i - y_{i-1}|.
\end{equation*}
As all $y_i$ are in $\I(\gamma)$ and $y_0=x_0=0$, we can reorder the terms to minimize the sum of distances, getting 
\begin{equation*}
    \sum_{i=1}^{M_n} d(y_i, y_{i-1}) \leq 2d |\fext\I(\gamma)| = 2dn.
\end{equation*}
This yields 
\begin{equation}\label{Eq: bound.sum.of.d_i}
    \sum_{i=1}^{M_n} d_i \leq 2\sum_{i=1}^{M_n} d(x_i,y_i) + \sum_{i=1}^{M_n} d(y_i,y_{i-1}) \leq (2d + 1)\sum_{i=1}^{M_n} d(y_i, y_{i-1}) \leq (2d + 1)^2 n
\end{equation}
 Fixing $d_1, d_2,\dots, d_q$, the number of ways of choosing $x_1,\dots,x_q$ is bounded by $\prod_{i=1}^{M_n} (2d_i)^{d}$. The maximum of this quantity is reached when all the distances are the same. Assuming $d_1=\dots =d_q=d^*$, equation \eqref{Eq: bound.sum.of.d_i} yields 
 \begin{equation*}
     d^* \leq \frac{(2d + 1)^2 n}{M_n} = \frac{(2d+1)^2 2^{\ell(d-1)}}{b},
 \end{equation*}
so we have at most 
\begin{equation}\label{Eq: Proposition_2_FFS_Aux_2}
    (\frac{(2d+1)^2 2^{\ell(d-1)}}{b})^{\frac{d b n}{2^{\ell(d-1)}}} \leq \exp\left\{(2d(d-1)\ln{(2)}b\ln{(\frac{(2d+1)}{b})}) \frac{\ell n}{2^{\ell(d-1)}}\right\}
\end{equation}
 ways of choose $x_1,\dots,x_{M_n}$ given $d_1,\dots,d_{M_n}$. The number of solutions $(d_1,\dots,d_{M_n})$ to $\sum_{i=1}^{M_n} d_i =N $ is $\binom{N-1}{M_n}$. As $\binom{N-1}{M_n} < \frac{N^{M_n}}{M_n!}\leq (\frac{eN}{M_n})^{M_n}$, the number of solutions of \eqref{Eq: bound.sum.of.d_i} is bounded by 
 \begin{align}\label{Eq: Proposition_2_FFS_Aux_3}
     \sum_{N=1}^{(2d+1)^2n}\left(\frac{eN}{M_n}\right)^{M_n} &\leq\int_{0}^{(2d+1)^2n +1} \left(\frac{ex}{M_n}\right)^{M_n}dx =  e^{M_n}\frac{(2d+1)^2n +1}{M_n + 1}\left(\frac{(2d+1)^2n +1}{M_n} \right)^{M_n} \nonumber \\
     &\leq \left(\frac{2e(2d+1)^2 2^{\ell(d-1)}}{b} \right)^{\frac{b n}{2^{\ell(d-1)}}} \leq \exp{\left\{4eb^{-1}\ln(2)(2d+1)^2 \frac{\ell n}{2^{\ell(d-1)}} \right\}}.
 \end{align}
 Taking $b_4=\max{\{ b_1\ln{(4d)}, 2d(d-1)\ln{(2)}b\ln{(\frac{(2d+1)}{b})}, eb^{-1}\ln(2)(2d+1)^2 \}}$, equations \eqref{Eq: Proposition_2_FFS_Aux_1}, \eqref{Eq: Proposition_2_FFS_Aux_2} and \eqref{Eq: Proposition_2_FFS_Aux_3} proves the proposition. 
\end{proof}

We are ready to prove the main proposition.

\textit{Proof of Proposition \ref{Prop: Bound.gamma_2_SR}:} As $N(\Gamma_0(n), \d_2, \epsilon)$ is decreasing in $\epsilon$, we can use Dudley's integral bound to bound
    \begin{align}\label{Eq: Prop_bound_Ec_1}
        {\mathbb{E}\left[\sup_{\gamma\in\Gamma_0(n)}{\Delta_{\I(\gamma)}(h)}\right]} &\leq \int_{0}^\infty \sqrt{\log N(\Gamma_0(n), \d_2, \epsilon)}d\epsilon \nonumber\\
        &\leq 2\varepsilon b_3 n^{\frac{1}{2}}\sum_{\ell=1}^\infty (2^{\frac{\ell}{2}} - 2^{\frac{\ell-1}{2}})\sqrt{\log N(\Gamma_0(n), \d_2,\varepsilon b_3 2^{\frac{\ell}{2}}n^{\frac{1}{2}})}.
    \end{align}

Remember that, as discussed in Remark \ref{Rmk: Bounding_N_by_B_ell_SR}, $N(\Gamma_0(n), \d_2,\varepsilon b_3 2^{\frac{\ell}{2}}n^{\frac{1}{2}})\leq |B_{\ell}(\Gamma_0(n))|$. Therefore, by Proposition \ref{Prop: Proposition_2_FFS},
\begin{align}\label{Eq: Prop_bound_Ec_2}
    \sum_{\ell=1}^\infty (2^{\frac{\ell}{2}} - 2^{\frac{\ell-1}{2}})\sqrt{\log N(\Gamma_0(n), \d_2,\varepsilon b_3 2^{\frac{\ell}{2}}n^{\frac{1}{2}})} &\leq \sum_{\ell=1}^{\infty} (2^{\frac{\ell}{2}} - 2^{\frac{\ell-1}{2}}) \sqrt{\frac{b_4\ell n}{2^{\ell(d-1)}}} \nonumber \\
    &\leq \sqrt{b_4}n^{\frac{1}{2}}(1 - \frac{\sqrt{2}}{2})\sum_{\ell=1}^{\infty}\sqrt{\frac{\ell}{2^{\ell(d-2)}}}.
\end{align}
Denoting $\tau(d) = \sum_{\ell=1}^{\infty}\sqrt{\frac{\ell}{2^{\ell(d-2)}}}$, and $b_5 = 2\tau(d)b_3\sqrt{b_4}(1 - \frac{\sqrt{2}}{2})$, equation \eqref{Eq: Prop_bound_Ec_1} and \eqref{Eq: Prop_bound_Ec_2} yields
\begin{equation*}
    {\mathbb{E}\left[\sup_{\gamma\in\Gamma_0(n)}{\Delta_{\I(\gamma)}(h)}\right]} \leq b_5 \varepsilon n.
\end{equation*}

    \subsection{Phase transition for the RFIM}
    Let us first prove an auxiliary lemma showing that if $\mu_{\Lambda; \beta, \varepsilon h}^{+, \IS}(\sigma_0 = -1) > c $ has positive probability, then $\mu_{\Lambda; \beta, \varepsilon h}^{+, \IS}\neq \mu_{\Lambda; \beta, \varepsilon h}^{-, \IS}$ $\mathbb{P}$-almost surely.

\begin{lemma}
    If there exists a constant $0\leq c<\frac{1}{2}$ such that, for all $\Lambda\Subset\Z^d$, 
    \begin{equation*}
        \mathbb{P}\left(  \mu_{\Lambda; \beta, \varepsilon h}^{+, \IS}(\sigma_0 = -1) < c \right) > 1-c,
    \end{equation*}
    then $ \mu_{\beta, \varepsilon h}^{+, \IS}\neq  \mu_{\beta, \varepsilon h}^{-, \IS}$ $\mathbb{P}$-almost surely.
\end{lemma}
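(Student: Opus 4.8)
The plan is to transfer the finite-volume hypothesis to the infinite-volume measures, combine it with spin-flip symmetry to produce a positive-probability event on which $\mu^{+}_{\beta,\varepsilon h}\neq\mu^{-}_{\beta,\varepsilon h}$, and then upgrade ``positive probability'' to ``probability one'' by an ergodicity argument. Recall that, as noted above, $\mu^{+,\IS}_{\beta,\varepsilon h}\neq\mu^{-,\IS}_{\beta,\varepsilon h}$ already implies non-uniqueness of the Gibbs measure.

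\textbf{Passage to infinite volume.} The function $\mathbbm 1_{\{\sigma_0=-1\}}$ is local and non-increasing, so by Lemma~\ref{decreasing.states} applied to $-\mathbbm 1_{\{\sigma_0=-1\}}$ the quantity $n\mapsto \mu^{+,\IS}_{\Lambda_n;\beta,\varepsilon h}(\sigma_0=-1)$ is non-decreasing along any sequence $\Lambda_n\nearrow\Z^d$ and hence converges, for every fixed realization $h$, to $\mu^{+,\IS}_{\beta,\varepsilon h}(\sigma_0=-1)=\sup_n \mu^{+,\IS}_{\Lambda_n;\beta,\varepsilon h}(\sigma_0=-1)$. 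Consequently
$$
\big\{\mu^{+,\IS}_{\beta,\varepsilon h}(\sigma_0=-1)\le c\big\}=\bigcap_n\big\{\mu^{+,\IS}_{\Lambda_n;\beta,\varepsilon h}(\sigma_0=-1)\le c\big\}\supseteq\bigcap_n\big\{\mu^{+,\IS}_{\Lambda_n;\beta,\varepsilon h}(\sigma_0=-1)< c\big\},
$$
and since the events on the far right decrease in $n$, continuity of $\mathbb P$ together with the hypothesis gives
\begin{equation*}
\mathbb P\big(\mu^{+,\IS}_{\beta,\varepsilon h}(\sigma_0=-1)\le c\big)\;\ge\;\lim_{n\to\infty}\mathbb P\big(\mu^{+,\IS}_{\Lambda_n;\beta,\varepsilon h}(\sigma_0=-1)< c\big)\;\ge\;1-c.
\end{equation*}

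\textbf{Spin-flip symmetry.} Since $(h_x)_{x\in\Z^d}$ is i.i.d.\ with a symmetric law (Gaussian, or symmetric Bernoulli), the map $h\mapsto -h$ preserves $\mathbb P$; moreover $H^{-,\IS}_{\Lambda,\varepsilon(-h)}(-\sigma)=H^{+,\IS}_{\Lambda,\varepsilon h}(\sigma)$, so the pushforward of $\mu^{+,\IS}_{\Lambda;\beta,\varepsilon h}$ under $\sigma\mapsto-\sigma$ is $\mu^{-,\IS}_{\Lambda;\beta,\varepsilon(-h)}$. Passing to the limit, $\mu^{-,\IS}_{\beta,\varepsilon(-h)}(\sigma_0=-1)=\mu^{+,\IS}_{\beta,\varepsilon h}(\sigma_0=+1)=1-\mu^{+,\IS}_{\beta,\varepsilon h}(\sigma_0=-1)$, so as random variables $\mu^{-,\IS}_{\beta,\varepsilon h}(\sigma_0=-1)\overset{d}{=}1-\mu^{+,\IS}_{\beta,\varepsilon h}(\sigma_0=-1)$, whence $\mathbb P\big(\mu^{-,\IS}_{\beta,\varepsilon h}(\sigma_0=-1)\ge 1-c\big)\ge 1-c$. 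Intersecting the two events of probability at least $1-c$ and using $c<\tfrac12$, on an event of probability at least $1-2c>0$ we have $\mu^{+,\IS}_{\beta,\varepsilon h}(\sigma_0=-1)\le c<1-c\le\mu^{-,\IS}_{\beta,\varepsilon h}(\sigma_0=-1)$, and in particular $\mu^{+,\IS}_{\beta,\varepsilon h}\neq\mu^{-,\IS}_{\beta,\varepsilon h}$.

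\textbf{From positive probability to full probability.} Let $\mathcal D:=\{\mu^{+,\IS}_{\beta,\varepsilon h}\neq\mu^{-,\IS}_{\beta,\varepsilon h}\}$; by the above $\mathbb P(\mathcal D)>0$. For each $a\in\Z^d$ the Hamiltonian is covariant under the joint translation $(\sigma,h)\mapsto(\Theta_a\sigma,\theta_a h)$, so $\mu^{\pm,\IS}_{\beta,\varepsilon\,\theta_a h}=\Theta_a^{*}\mu^{\pm,\IS}_{\beta,\varepsilon h}$; since $\Theta_a^{*}$ is a bijection on probability measures, $\mathcal D$ is invariant under every field-shift $\theta_a$. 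As the product measure $\mathbb P$ is ergodic (indeed mixing) for the $\Z^d$-shift action, every such shift-invariant event is $\mathbb P$-trivial, so $\mathbb P(\mathcal D)=1$, which is the claim. The step I expect to require the most care is this last one: one must check that the weak-$*$ limits $\mu^{\pm,\IS}_{\beta,\varepsilon h}$ depend measurably on $h$ and are translation-covariant in the precise sense above, so that $\mathcal D$ is a bona fide shift-invariant event; the earlier steps are routine monotonicity and symmetry computations.
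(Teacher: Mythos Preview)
Your proof is correct and follows essentially the same three-step scheme as the paper: monotone passage to infinite volume, spin-flip symmetry to compare $\mu^+$ with $\mu^-$, and an ergodicity upgrade from positive to full probability. The only cosmetic difference is in the last step: the paper works with the event $\{\exists\,x:\mu^{+,\IS}_{\beta,\varepsilon h}(\sigma_x)\neq\mu^{-,\IS}_{\beta,\varepsilon h}(\sigma_x)\}$, whose shift-invariance is immediate, whereas you use $\{\mu^{+,\IS}_{\beta,\varepsilon h}\neq\mu^{-,\IS}_{\beta,\varepsilon h}\}$ and invoke the translation covariance $\mu^{\pm,\IS}_{\beta,\varepsilon\,\theta_a h}=\Theta_a^{*}\mu^{\pm,\IS}_{\beta,\varepsilon h}$; both are valid.
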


 \begin{proof}
     Notice first that, as the function $\mathbbm{1}_{\{\sigma_0 = -1\}}$ is non-increasing, by FKG $\mu_{\Lambda; \beta, \varepsilon h}^{+, \IS}(\sigma_0 = -1) \leq \mu_{\Delta; \beta, \varepsilon h}^{+, \IS}(\sigma_0 = -1)$ for all $\Lambda\subset \Delta \Subset \Z^d$. As $\mu_{\beta, \varepsilon h}^{+, \IS}$ is the weak limit of the local measures $\mu_{\Lambda_N; \beta, \varepsilon h}^{+, \IS}$, by continuity of $\mathbb{P}$ we have
     \begin{equation*}
          \mathbb{P}\left(  \mu_{\beta, \varepsilon h}^{+, \IS}(\sigma_0 = -1) < c \right) = \lim_{N\to\infty} \mathbb{P}\left(  \mu_{\Lambda_N; \beta, \varepsilon h}^{+, \IS}(\sigma_0 = -1) < c \right) > 1-c.
     \end{equation*}
     Like in the usual Ising model, we can write 
     \begin{equation*}
          \mu_{\beta, \varepsilon h}^{+, \IS}(\sigma_0 ) =  1 - 2\mu_{\beta, \varepsilon h}^{+, \IS}(\sigma_0 = -1), 
     \end{equation*}
     so $\{h: \mu_{\beta, \varepsilon h}^{+, \IS}(\sigma_0) > 0\} \supset \{h: \mu_{\beta, \varepsilon h}^{+, \IS}(\sigma_0 = -1) < c\}$ and therefore $\mathbb{P}\left(  \mu_{\beta, \varepsilon h}^{+, \IS}(\sigma_0)>0 \right) > 1-c$. Moreover,
     \begin{multline*}
        \mathbb{P}\left( \mu_{\beta, \varepsilon h}^{+, \IS}(\sigma_x) \neq \mu_{\beta, \varepsilon h}^{-, \IS}(\sigma_x) \text{ for some }x\in\Z^d\right) \geq  \mathbb{P}\left( \mu_{\beta, \varepsilon h}^{+, \IS}(\sigma_0)>0,  \mu_{\beta, \varepsilon h}^{-, \IS}(\sigma_0)<0\right) \\
        \geq \mathbb{P}\left( \mu_{\beta, \varepsilon h}^{+, \IS}(\sigma_0)>0\right) + \mathbb{P}\left( \mu_{\beta, \varepsilon h}^{-, \IS}(\sigma_0)<0\right) - 1 \geq 1-2c>0,
     \end{multline*}
  where in the second inequality we used that $\mathbb{P}(A\cap B)\geq \mathbb{P}(A) + \mathbb{P}(B) - 1$ for any events $A$ and $B$. 
  Considering the translation map $T(h)=(h_{x + e_1})_{x\in\Z^d}$ and $\mathcal{B}(\mathbb{R}^{\Z^d})$ the Borel $\sigma$-algebra of $\mathbb{R}^{\Z^d}$, $\mathbb{P}$ is mixing in the dynamical system $(R^{\Z^d}, \mathcal{B}, \mathbb{P}, T)$ and $\{h: \mu_{\beta, \varepsilon h}^{+, \IS}(\sigma_x) \neq \mu_{\beta, \varepsilon h}^{-, \IS}(\sigma_x) \text{ for some }x\in\Z^d\}$ is a translation invariant event with positive probability, therefore $\mathbb{P}\left( \mu_{\beta, \varepsilon h}^{+, \IS}(\sigma_x) \neq \mu_{\beta, \varepsilon h}^{-, \IS}(\sigma_x) \text{ for some }x\in\Z^d\right) =1$. We conclude the proof by noticing that $ \mathbb{P}\left( \mu_{\beta, \varepsilon h}^{+, \IS} \neq \mu_{\beta, \varepsilon h}^{-, \IS}\right) \geq  \mathbb{P}\left( \mu_{\beta, \varepsilon h}^{+, \IS}(\sigma_x) \neq \mu_{\beta, \varepsilon h}^{-, \IS}(\sigma_x) \text{ for some }x\in\Z^d\right)$.
 \end{proof}
\begin{theorem}
For $d\geq 3$, there exists a constant $C\coloneqq C(d,\alpha)$ such that, for all $\beta>0$, $e\leq C$ and $\Lambda\Subset\Z^d$, the event 
    \begin{equation}\label{Eq: PTLR_SR}
        \mu_{\Lambda; \beta, \varepsilon h}^{+, \IS}(\sigma_0 = -1) \leq e^{-C\beta} + e^{-C/\varepsilon^2} 
    \end{equation}
    has $\mathbb{P}$-probability bigger then $1 - e^{-C\beta} - e^{-C/\varepsilon^2}$.\\
    
In particular, for $\beta>\beta_c$ and $\varepsilon$ small enough, there is a phase transition for the long-range Ising model.  
\end{theorem}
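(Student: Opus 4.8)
The plan is to run a Peierls argument directly on the joint measure $\mathbb{Q}_{\Lambda;\beta,\varepsilon}^{+,\IS}$, using the Ding--Zhuang device of erasing an outermost contour \emph{simultaneously} in the spin configuration and in the random field, and then to split the resulting estimate according to whether the field realization lies in the good event $\mathcal{E}$ of Section~\ref{Sec: Ding and Zhuang approach}.

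First I would fix $\Lambda\Subset\Z^d$ and decompose $\mathbb{Q}_{\Lambda;\beta,\varepsilon}^{+,\IS}(\sigma_0=-1)=\mathbb{Q}_{\Lambda;\beta,\varepsilon}^{+,\IS}(\sigma_0=-1,\mathcal{E})+\mathbb{Q}_{\Lambda;\beta,\varepsilon}^{+,\IS}(\sigma_0=-1,\mathcal{E}^c)$. The second term is bounded by $\mathbb{P}(\mathcal{E}^c)\le e^{-C_1/\varepsilon^2}$ via Proposition~\ref{Prop: Bound.bad.event_SR}; this is precisely the step that needs $d\ge 3$ and the smallness condition $\varepsilon\le C$. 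For the first term, with $+$-boundary condition the event $\{\sigma_0=-1\}$ forces an outermost contour $\gamma$ with $0\in\I(\gamma)$, so
\[
\mathbb{Q}_{\Lambda;\beta,\varepsilon}^{+,\IS}(\sigma_0=-1,\mathcal{E})\le\sum_{\gamma:\,0\in\I(\gamma)}\int \mathbbm{1}_{\{\gamma\text{ outermost in }\sigma\}}\,\mathbbm{1}_{\mathcal{E}}(h)\,g_{\Lambda;\beta,\varepsilon}^{+,\IS}(\sigma,h)\,d\sigma\,dh .
\]

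On $\mathcal{E}$, combining the energy-cost bound \eqref{Eq: Energy_cost_erasing_contour_SR}, the complement of the definition \eqref{Def: Bad_event_SR} of $\mathcal{E}^c$, and the identity $Z_{\Lambda;\beta,\varepsilon}^{+,\IS}(\tau_\gamma(h))/Z_{\Lambda;\beta,\varepsilon}^{+,\IS}(h)=e^{\beta\Delta_{\I(\gamma)}(h)}$ coming from \eqref{Def: Delta_A_SR}, the density quotient in \eqref{Eq: quotient.of.gs_SR} is at most $e^{-c_3\beta|\gamma|}$ for some $c_3=c_3(d,J)>0$. Replacing $g_{\Lambda;\beta,\varepsilon}^{+,\IS}(\sigma,h)$ by $e^{-c_3\beta|\gamma|}g_{\Lambda;\beta,\varepsilon}^{+,\IS}(\tau_\gamma(\sigma),\tau_\gamma(h))$ and changing variables through the measure-preserving involution $(\sigma,h)\mapsto(\tau_\gamma(\sigma),\tau_\gamma(h))$ — the Gaussian law of $h$ is invariant under $\tau_\gamma$, and $\tau_\gamma$ is a bijection of $\Omega_\Lambda^+$ since $\I(\gamma)\subset\Lambda$ — bounds each integral by $e^{-c_3\beta|\gamma|}$. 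Using $|\Gamma_0(n)|\le e^{c_2(d)n}$ and summing gives $\mathbb{Q}_{\Lambda;\beta,\varepsilon}^{+,\IS}(\sigma_0=-1,\mathcal{E})\le\sum_{n\ge 1}e^{(c_2(d)-c_3\beta)n}\le e^{-C\beta}$, valid once $\beta$ exceeds a threshold $\beta_*(d,J)$ and $C<c_3$ is chosen small (the small-$\beta$ regime is uninteresting and follows from trivial bounds).

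Combining the two estimates yields $\mathbb{E}_{\mathbb{P}}\!\left[\mu_{\Lambda;\beta,\varepsilon h}^{+,\IS}(\sigma_0=-1)\right]=\mathbb{Q}_{\Lambda;\beta,\varepsilon}^{+,\IS}(\sigma_0=-1)\le e^{-C\beta}+e^{-C_1/\varepsilon^2}$ for every $\Lambda$, and Markov's inequality then produces \eqref{Eq: PTLR_SR} after shrinking $C$ once more (to absorb the square root from Markov and to make the two exponents coincide). For $\beta$ large and $\varepsilon$ small the right-hand side of \eqref{Eq: PTLR_SR} is $<\tfrac12$ uniformly in $\Lambda$, so the hypothesis of the preceding lemma holds and gives $\mu_{\beta,\varepsilon h}^{+,\IS}\ne\mu_{\beta,\varepsilon h}^{-,\IS}$ $\mathbb{P}$-almost surely, i.e.\ a phase transition; the same argument goes through verbatim for the long-range model with the Fr\"ohlich--Spencer-type contours of \cite{Affonso.2021} in place of the Peierls contours, since the only inputs are \eqref{Eq: Energy_cost_erasing_contour_SR}, the exponential bound on $|\Gamma_0(n)|$, and Proposition~\ref{Prop: Bound.bad.event_SR}. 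The step I expect to require the most care is the simultaneous-flip bookkeeping: checking that the change of variables is genuinely measure preserving, that the outermost-contour and $\mathcal{E}$ indicators are only ever discarded and never created under it, and that the $\beta$-threshold is tracked honestly through the geometric series — the genuinely hard analytic input, namely $\mathbb{P}(\mathcal{E}^c)\le e^{-C_1/\varepsilon^2}$, being already available as Proposition~\ref{Prop: Bound.bad.event_SR}.
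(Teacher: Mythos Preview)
Your proposal is correct and follows essentially the same route as the paper: split $\mathbb{Q}_{\Lambda;\beta,\varepsilon}^{+,\IS}(\sigma_0=-1)$ according to $\mathcal{E}$ versus $\mathcal{E}^c$, control the bad part by Proposition~\ref{Prop: Bound.bad.event_SR}, and on $\mathcal{E}$ run Peierls on the joint density using \eqref{Eq: quotient.of.gs_SR} together with the bound $|\Gamma_0(n)|\le e^{c_2(d)n}$, finishing with Markov's inequality. Your change-of-variables phrasing of the key step is in fact a bit cleaner than the paper's ``divide by a denominator equal to $1$'' formulation (which really only gives $\le 1$), but the content is identical; the only caveat is that the long-range case does \emph{not} go through verbatim---there the erasing map flips spins on $\I_-(\gamma)\cup\Sp^-(\gamma,\sigma)$ while the field is flipped only on $\I_-(\gamma)$, producing an extra term $-2\beta\varepsilon\sum_{x\in\Sp^-}h_x$ that must be handled via a Gaussian integral and the constraint $\beta\varepsilon^2\lesssim c_2$ (see Theorem~\ref{Theo: Transicao_de_fase}).
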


\begin{proof}
        The proof is an application of the Peierls' argument, but now on the joint measure $\mathbb{Q}$. By Proposition \ref{Prop: Bound.bad.event_SR}, we have
        \begin{align}\label{Eq: Upper.bound.on.Q.1_SR}
            \mathbb{Q}_{\Lambda; \beta, \varepsilon}^{+, \IS}(\sigma_0 = -1) &=  \mathbb{Q}_{\Lambda; \beta, \varepsilon}^{+, \IS}(\{\sigma_0 = -1\} \cap \mathcal{E}) + \mathbb{Q}_{\Lambda; \beta, \varepsilon}^{+, \IS}(\{\sigma_0 = -1\}\cap \mathcal{E}^c) \nonumber \\
            & \leq \mathbb{Q}_{\Lambda; \beta, \varepsilon}^{+, \IS}(\{\sigma_0 = -1\} \cap \mathcal{E}) +  e^{-C_1/\varepsilon^2} 
        \end{align}
since $\mathbb{Q}_{\Lambda; \beta, \varepsilon}^{+, \IS}(\{\sigma_0 = -1\}\cap \mathcal{E}^c) \leq \mathbb{Q}_{\Lambda; \beta, \varepsilon}^{+, \IS}(\mathcal{E}^c) = \mathbb{P}(\mathcal{E}^c)$.  When $\sigma_0 = -1$, there must exist a contour $\gamma$ with $0\in V(\gamma)$, hence
\begin{equation*}
    \mu_{\Lambda; \beta, \varepsilon h}^{+, \IS}(\sigma_0 = -1) \leq \sum_{\gamma \in \mathcal{C}_0}\mu_{\Lambda; \beta, \varepsilon h}^{+, \IS}(\Omega(\gamma)),
\end{equation*}
where $\Omega(\gamma) \coloneqq \{\sigma\in\Omega : \gamma \subset \Gamma(\sigma)\}$. So we can write

\begin{align}\label{Eq: Upper.bound.on.Q.2_SR}
    \mathbb{Q}_{\Lambda; \beta, \varepsilon}^{+, \IS}(\{\sigma_0 = -1\} \cap \mathcal{E}) &= \int_{\mathcal{E}}\sum_{\sigma : \sigma_0 = -1}g_{\Lambda; \beta, \varepsilon}^{+, \IS}(\sigma, h)dh \nonumber \\
    &\leq  \sum_{\gamma\in\mathcal{C}_0} \int_{\mathcal{E}}\sum_{ \sigma\in\Omega(\gamma)}g_{\Lambda; \beta, \varepsilon}^{+, \IS}(\sigma, h)dh \nonumber \\
    &\leq  \sum_{\gamma \in \mathcal{C}_0} \frac{\int_{\mathcal{E}}\sum_{\sigma\in\Omega(\gamma)}g_{\Lambda; \beta, \varepsilon}^{+, \IS}(\sigma, h)dh}{\int_{\mathcal{E}}\sum_{\sigma\in\Omega(\gamma)}g_{\Lambda; \beta, \varepsilon}^{+, \IS}(\tau_{\gamma}(\sigma), \tau_{\I(\gamma)}(h))dh} \nonumber \\
    & \leq \sum_{\gamma\in\mathcal{C}_0}\sup_{\substack{h\in\mathcal{E}\\ \sigma\in\Omega(\gamma)}}\frac{g_{\Lambda; \beta, \varepsilon}^{+, \IS}(\sigma, h)}{g_{\Lambda; \beta, \varepsilon}^{+, \IS}(\tau_{\gamma}(\sigma), \tau_{\I(\gamma)}(h))}.
\end{align}

In the second inequality we used that $\int_{\mathcal{E}}\sum_{\sigma\in\Omega(\gamma)}g_{\Lambda; \beta, \varepsilon}^{+, \IS}(\tau_{\gamma}(\sigma), \tau_{\I(\gamma)}(h))dh =1$. By \eqref{Def: Delta_A_SR} and the definition of the event $\mathcal{E}$, 
\begin{align}\label{Eq: Upper.bound.on.Q.3_SR}
    \sup_{\substack{h\in\mathcal{E}\\ \sigma\in\Omega(\gamma)}}\frac{g_{\Lambda; \beta, \varepsilon}^{+, \IS}(\sigma, h)}{g_{\Lambda; \beta, \varepsilon}^{+, \IS}(\tau_{\gamma}(\sigma), \tau_{\I(\gamma)}(h))} &\leq \sup_{\substack{h\in\mathcal{E}\\ \sigma\in\Omega(\gamma)}}  \exp{\{{- \beta c_1 |\gamma|}\}}\frac{Z_{\Lambda; \beta, \varepsilon}^{+, \IS}(\tau_{\I(\gamma)}(h))}{Z_{\Lambda; \beta, \varepsilon}^{+, \IS}(h)} \nonumber\\
    &= \sup_{\substack{h\in\mathcal{E}\\ \sigma\in\Omega(\gamma)}}  \exp{\{{- \beta c_1 |\gamma| + \beta \Delta_{\gamma}(h)}\}} \nonumber\\
    &\leq  \exp{\{{- \beta \frac{c_1}{2} |\gamma| }\}},
\end{align}
since $\Delta_{\gamma}(h) \leq \frac{1}{2}(c_1|\gamma|)$, for all $h\in\mathcal{E}$. Equations \eqref{Eq: Upper.bound.on.Q.1_SR}, \eqref{Eq: Upper.bound.on.Q.2_SR} and \eqref{Eq: Upper.bound.on.Q.3_SR} yields
\begin{align*}
     \mathbb{Q}_{\Lambda; \beta, \varepsilon}^{+, \IS}(\sigma_0 = -1) &\leq  \sum_{\substack{\gamma\in \mathcal{E}_\Lambda^{+, \IS}\\ 0\in V(\gamma)}} 2^{|\gamma|}\exp{\{{- \beta \frac{c_1}{2} |\gamma| }\}} + e^{-c_0/\varepsilon^2}\\
     &\leq \sum_{n\geq 1}\sum_{\substack{\gamma\in \mathcal{E}_\Lambda^+, |\gamma|=n \\ 0\in V(\gamma)}} \exp{\{{(-\beta \frac{c_1}{2} + \ln2)n}\}} + e^{-c_0/\varepsilon^2}\\
     &\leq \sum_{n\geq 1}|\mathcal{C}_0(n)| \exp{\{{(-\beta \frac{c_1}{2} +\ln2)n}\}} + e^{-c_1/\varepsilon^2} \leq \sum_{n\geq 1} e^{(c_2 -\beta \frac{c_1}{2} +\ln2)n} + e^{-c_0/\varepsilon^2}. \\
\end{align*}
When $\beta$ is large enough, the sum above converges and there exists a constant $C$ such that   
\begin{equation*}
    \mathbb{Q}_{\Lambda; \beta, \varepsilon}^{+, \IS}(\sigma_0 = -1) \leq e^{-\beta 2C} + e^{-2C / \varepsilon^2}.
\end{equation*}
The Markov Inequality finally yields
\begin{align*}
    \mathbb{P}\left( \mu_{\Lambda; \beta, \varepsilon h}^{+, \IS}(\sigma_0 = -1) \geq e^{-C\beta} + e^{-C/\varepsilon^2}\right) &\leq \frac{\mathbb{Q}_{\Lambda; \beta, \varepsilon}^{+, \IS}(\sigma_0 = -1)}{e^{-C\beta} - e^{-C/\varepsilon^2}} \\
    &\leq \frac{e^{-\beta 2C} + e^{-2C / \varepsilon^2}}{e^{-C\beta} + e^{-C/\varepsilon^2}} \leq e^{-C\beta} + e^{-C/\varepsilon^2},
\end{align*}
what proves our claim.
\end{proof}


\chapter{Long-range Random Field Ising Model}

In this chapter, we prove phase transition in the long-range random field Ising model. First, we use an argument by Ginibre, Grossman, and Ruelle \cite{Ginibre.Grossmann.Ruelle.66} to extend the results from the previous chapter in the region $\alpha>d+1$. When $d<\alpha \leq d+1$, we need to work with a different contour system.  By adapting the contour argument presented in \cite{Affonso.2021},  we extend the arguments of \cite{Ding2021} to the long-range model, proving the main result:
\begin{theorem*}Given $d\geq 3$, $\alpha>d$, there exists $\beta_c\coloneqq\beta(d, \alpha)$ and $\varepsilon_c\coloneqq\varepsilon(d, \alpha)$ such that, for $\beta> \beta_c$ and $\varepsilon\leq \varepsilon_c$, the extremal Gibbs measures $\mu_{\beta, \varepsilon}^+$ and $\mu_{\beta, \varepsilon}^-$ are distinct, that is, $\mu_{\beta, \varepsilon}^+ \neq \mu_{\beta, \varepsilon}^-$ $\mathbb{P}$-almost surely. Therefore the long-range random field Ising model presents phase transition.
\end{theorem*}
We now present a sketch of the main steps of the proof. \\

\textit{Ideas of the proof:} We first introduce a suitable notion of contour, for which we can control both the energy cost of erasing a contour (Proposition \ref{Prop: Cost_erasing_contour}) and the number of contours of a fixed size that surrounds the origin (Corollary \ref{Cor: Bound_on_C_0_n}). Our contours, as in the Pirogov-Sinai theory, are composed of a support, that represents the incorrect points of a configuration, and the plus and minus interior, which are the regions inside the contours. We denote $\mathcal{C}_0$ the set of all contours surrounding the origin and $\mathcal{C}_0(n)$ the set of contours in $\mathcal{C}_0$ with $n$ points in the support. We also denote $\I_-(n)$ the set of all subsets of $\Z^d$ that are the minus interior of a contour in $\mathcal{C}_0(n)$. 

By the Ding-Zhuang method, we can show that phase transition follows from controlling the probability of the bad event
$$\mathcal{E}^c\coloneqq \left\{\sup_{\substack{\gamma\in\mathcal{C}_0}} \frac{|\Delta_{\I_-(\gamma)}(h)|}{c_2|\gamma|} > \frac{1}{4}\right\},$$
where $|\gamma|$ is the size of the support of a contour $\gamma$ and $(\Delta_A)_{A\Subset \Z^d}$ is a family of functions that, by Lemma \ref{Lemma: Concentration.for.Delta.General}, have the same tail of $\sum_{x\in A}h_x$, and the distribution of $\Delta_A(h) - \Delta_{A^\prime}(h)$ is the same as $\Delta_{A\Delta A^\prime}(h)$, for all $A, A^\prime\in\Z^d$ finite, see \cite{Ding2021}. 

When the contours are connected, all the interiors $\I_-(\gamma)$ are connected, and the two properties in Lemma \ref{Lemma: Concentration.for.Delta.General} together with the coarse-graining procedure introduced by Fisher, Fr\"ohlich, and Spencer in \cite{FFS84} is enough to control $\mathbb{P}(\mathcal{E}^c)$. Given a family of scales $(2^{r\ell})_{\ell\geq 0}$, with $r$ being a suitable constant, we can partition $\Z^d$ into disjoint fitting cubes with sides $2^{r\ell}$, see Figure \ref{Fig: Cubes}. Each such cube is called an $r\ell$-cube, and all cubes throughout our analysis will be of this form unless stated otherwise. The strategy of the coarse-graining argument is to, at each scale, approximate each interior $\I_-(\gamma)$ by a simpler region $B_\ell(\gamma)$, formed by the union of disjoint cubes with side length $2^{r\ell}$, see Figure \ref{Fig: Figura7}. The argument follows once you have two estimations: on the error of this approximation and on the size of the set $B_\ell (\mathcal{C}_0(n)) \coloneqq \{B\subset\Z^d: B=B_\ell(\gamma), \text{ for some }\gamma \in\mathcal{C}_0(n)\}$ containing all regions that are approximations of contours in $\mathcal{C}_0(n)$. 

\begin{figure}[ht]
     \centering
     \input{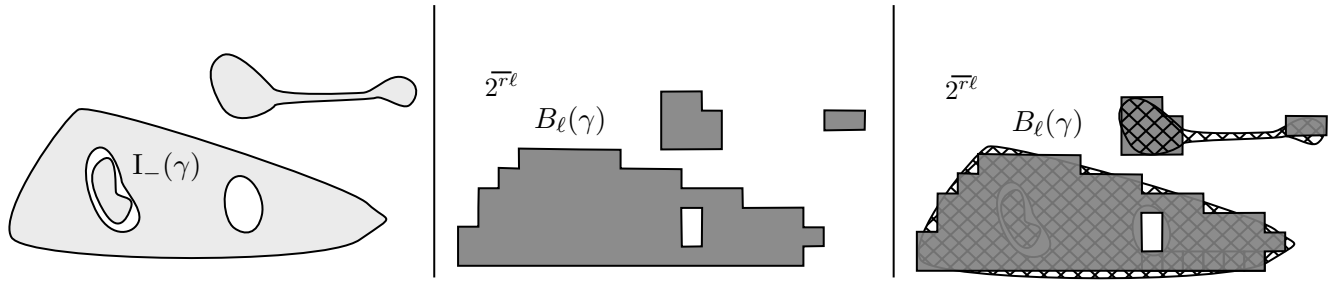}
        \caption{The figure on the left represents the minus interior of a contour $\gamma$, while the central figure represents its approximation. The picture on the right depicts the error of the approximation, that is both the cross-hatched regions not covered by the gray area and the gray areas not intersected by the cross-hatch region.}
    \label{Fig: Figura7}
\end{figure}

In Corollary \ref{Cor: Bound_diam_B_ell}, we show that $|B_\ell(\gamma)\Delta \I_-(\gamma)|\leq c2^{r\ell}|\gamma|$, so the error in the approximation is not too large. This bound follows \cite{FFS84} closely since we approximate the interiors in the same way. Then, we need to estimate $|B_\ell (\mathcal{C}_0(n))|$, which is done by a fairly distinct argument. The difficulty of the proof comes from the fact that our contours may be disconnected. As the regions $B_\ell(\gamma)$ are the union of disjoint cubes, they are, up to a constant, determined by $\fint \mathfrak{C}_{\ell}(\gamma)$, the collection of cubes needed to cover $B_{\ell}(\gamma)$ that share a face with a cube that does not intersect $B_{\ell}(\gamma)$, see Figure \ref{Fig: Figura8}. 

\begin{figure}[h]
    \centering
    \input{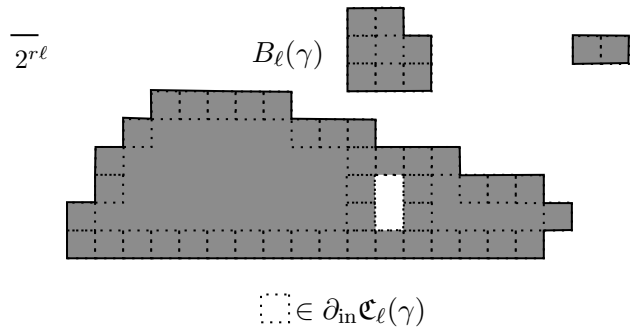}
    \caption{The region $B_\ell(\gamma)$ is the approximation of the interior in Figure \ref{Fig: Figura7}, and $\fint \mathfrak{C}_{r\ell}(\gamma)$ is the collection containing all the doted cubes.}
    \label{Fig: Figura8}
\end{figure}

Following the argument of Fisher, Fr\"ohlich, and Spencer we show that $|\fint \mathfrak{C}_{r\ell}(\gamma)|\leq M_{|\gamma|,\ell}$ (see Proposition \ref{Proposition1}), so we can bound $|B_{\ell}(\mathcal{C}_0(n))|$ by counting all possible choices of $M_{n,\ell}$ non-intersecting cubes with side length $2^{r\ell}$ in $\Z^d$, with a suitable restriction. When the contours are connected, this restriction is that all cubes must be close to a surface with size $n$, and the proof follows once you can use that the minimal path connecting all the cubes has length at most $cn$. This is not true for our contours, so we need a different strategy. 

Let $\C_{rL}(\gamma)$ be the smallest collection of cubes, in the $rL$ scale, needed to cover $\gamma$. The property we will use is that, for all $L\geq \ell$, every cube in $\fint \mathfrak{C}_\ell(\gamma)$ is covered or is next to a cube in  $\C_{rL}(\gamma)$, see Figure \ref{Fig: Figura9}. As every cube with side length $2^{rL}$ contains $2^{rd(L-\ell)}$ cubes with side length $2^{r\ell}$, any fixed collection $\C_{rL}(\gamma)$ covers $2^{rd(L-\ell)}|\C_{rL}(\gamma)|$ cubes in the $r\ell$ scale.

\begin{figure}[ht]
    \centering
    \input{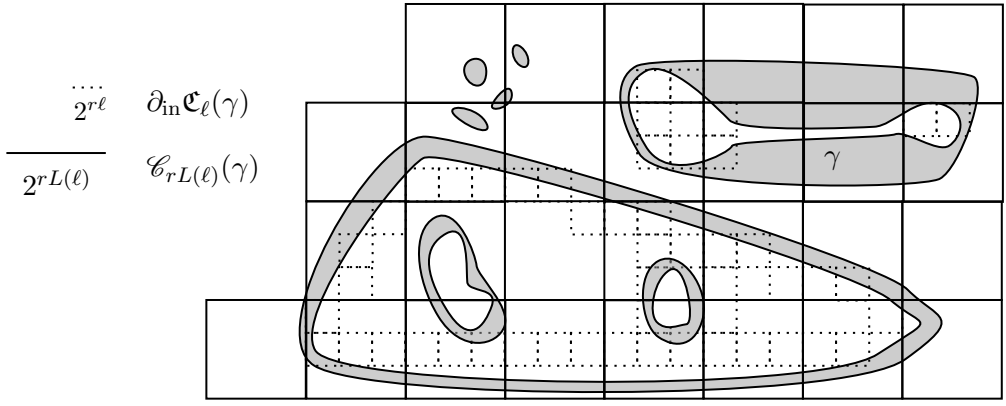}
    \caption{The contour $\gamma$ is the one the originates the interior $\I_-(\gamma)$ in Figure \ref{Fig: Figura7}. The dotted cubes are the cubes in $\fint \mathfrak{C}_\ell(\gamma)$ and the larger cubes are the ones needed to cover the contour $\gamma$.}
    \label{Fig: Figura9}
\end{figure}

So, roughly, we can bound $|B_{\ell}(\mathcal{C}_0(n))|$ by counting all the possible choices of $M_{n,\ell}$ cubes in the $2^{r\ell}$ scale that are covered by a collection in $\C_{rL}(\mathcal{C}_0(n))\coloneqq \{\C_{rL} : \C_{rL} = \C_{rL}(\gamma) \text{ for some }\gamma\in\mathcal{C}_0(n)\}$, that is, we can bound 

\begin{equation*}
    |B_{\ell}(\mathcal{C}_0(n))|\leq \sum_{\C_{rL}\in \C_{rL}(\mathcal{C}_0(n))}\binom{2^{rd(L-\ell)}|\C_{rL}|}{M_{n,\ell}}.
\end{equation*}

The construction of our contours allow us to upper bound $|\C_{rL}(\gamma)|$  and $|\C_{rL}(\mathcal{C}_0(n))|$, see Proposition \ref{Prop. Bound.on.C_rl(gamma)} and Proposition \ref{Prop: Bound_on_rl_coverings}, and we are able to get the same bound for $ |B_{\ell}(\mathcal{C}_0(n))|$ as in \cite{FFS84} by making an appropriate choice of the large scale $L = L(\ell)$ depending on the smaller one. From this, the control on the probability of the bad event follows as an application of Dudley's entropy bound.

In Section 1, we define the model of interest and the notion of contour we will use. Then, in Section 2, we present the Ding and Zhuang method \cite{Ding2021} of proofing phase transition and the bad events for the  RIFM with long-range interaction. After that, we control the probability of the bad events by extending the arguments of \cite{FFS84} to our case. Finally, in the last section, we prove the phase transition for the long-range RFIM. 

\section{The Long-range RFIM}  
The set of configurations of the long-range Ising model is, as usual, $\Omega \coloneqq \{-1,1\}^{\Z^d}$. However, each spin interacts with all others, not only its nearest neighbors, so the interaction $\{J_{xy}\}_{x,y\in\Z^d}$ is defined as

\begin{equation}\label{Long-Range Interaction}
    J_{xy} = \begin{cases}
                   \frac{J}{|x-y|^\alpha} &\text{ if }x\neq y,\\
                   0                &\text{otherwise,} 
              \end{cases}
\end{equation}
where $J >0$, $\alpha>d$ and the distance $|x-y|$ is given by the $\ell_1$-norm. We write $\Lambda\Subset \Z^d$ to denote a finite subset of $\Z^d$
. Fixed such $\Lambda$, the \textit{local configurations} are $\Omega_\Lambda\coloneqq \{-1,1\}^\Lambda$. Moreover, given ${\eta\in\Omega}$, the set of local configurations with $\eta$ boundary conditions is ${\Omega_\Lambda^\eta\coloneqq \{\sigma\in\Omega : \sigma_x=\eta_x, \text{ }\forall x\in\Lambda^c\}}$. The \textit{local Hamiltonian of the random field long-range Ising model} in $\Lambda\Subset\Z^d$ with $\eta$-boundary condition is $H_{\Lambda; \varepsilon h}^{\eta}:  \Omega_\Lambda^\eta \to \mathbb{R}$, given by 

\begin{equation}
    H_{\Lambda; \varepsilon h}^{\eta}(\sigma) \coloneqq -\sum_{x,y\in\Lambda} J_{xy}\sigma_x\sigma_y - \sum_{x\in \Lambda, y\in\Lambda^c} J_{xy}\sigma_x\eta_y - \sum_{x\in\Lambda} \varepsilon h_x\sigma_x,
\end{equation}
where the external field is a family $\{h_x\}_{x\in\Z^d}$ of i.i.d. random variables in $(\widetilde{\Omega}, \mathcal{A}, \mathbb{P})$, and every $h_x$ has a standard normal distribution\footnote{ Our results also hold for more general distributions of $h_x$, see Remarks \ref{Rmk: Bernoulli_external_field} and \ref{Rmk: More.general.h_x}. }. The parameter $\varepsilon >0$ controls the variance of the external field. Given $\Lambda\Subset\Z^d$, consider $\mathscr{F}_\Lambda$ the $\sigma$-algebra generated by the cylinders sets supported in $\Lambda$ and $\mathscr{F}$ the $\sigma$-algebra generated by finite union of cylinders. One of the main objects of study in classical statistical mechanics are the \textit{finite volume Gibbs measures}, which are probability measures in $(\Omega, \mathscr{F})$, given by 
    \begin{equation}
        \mu_{\Lambda;\beta, \varepsilon h}^\eta(\sigma) \coloneqq \mathbbm{1}_{\Omega_\Lambda^\eta}(\sigma)\frac{e^{-\beta H_{\Lambda, \varepsilon h}^{\eta}(\sigma)}}{Z_{\Lambda; \beta, \varepsilon}^{\eta}(h)},
    \end{equation}
where $\beta>0$ is the inverse temperature and $Z_{\Lambda; \beta, \varepsilon}^{\eta}$ is the \textit{partition function}, defined as 

\begin{equation}
    Z_{\Lambda; \beta, \varepsilon}^{\eta}(h)\coloneqq \sum_{\sigma\in\Omega_\Lambda^\eta} e^{-\beta H_{\Lambda, \varepsilon h}^{\eta}(\sigma)}.
\end{equation}
As in the short-range case, since the external field is random, the Gibbs measures are random variables. To make the dependence of $\mu_{\Lambda;\beta, \varepsilon h}^\eta$ on $\widetilde{\Omega}$ explicit, we write $\mu_{\Lambda;\beta, \varepsilon h}^\eta[\omega]$, with $\omega$ being a general element of $\widetilde{\Omega}$. Two particularly important boundary conditions are given by the configurations $\eta_{+} \equiv +1$ and $\eta_{-} \equiv -1$, and are called $+$ and $-$ boundary conditions, respectively. For these boundary conditions, we can $\mathbb{P}$-almost surely define the infinite volume measures by taking the weak*-limit
\begin{equation}
    \mu_{\beta,\varepsilon h}^{\pm}[\omega] \coloneqq \lim_{n\to\infty} \mu_{\Lambda_n;\beta, \varepsilon h}^{\pm}[\omega],
\end{equation}
where $(\Lambda_n)_{n\in\mathbb{N}}$ is any sequence invading $\Z^d$, that is, for any subset $\Lambda\Subset\mathbb{Z}^d$, there exists $N=N(\Lambda)>0$ such that $\Lambda\subset\Lambda_n$ for every $n>N$.
    By Lemma \ref{extremality.of.+.and.-.bc}, for any fixed external field, the measures $\mu_{\Lambda_n;\beta, \varepsilon h}^{\pm}[\omega]$ are monotone, which guarantees the existence of the limits over sequences invading $\Z^d$. 
To have more than one Gibbs measure, it is enough to show that $\mu_{\beta,\varepsilon h}^{+}[\omega]\neq  \mu_{\beta,\varepsilon h}^{-}[\omega]$, with $\mathbb{P}$-probability 1, see \cite[Theorem 7.2.2]{Bovier.06}.

    \subsection{Phase Transition for \texorpdfstring{$\alpha>d+1$}{a>d+1}} 
     When $\alpha>d+1$, Ginibre, Grossman and Ruelle \cite{Ginibre.Grossmann.Ruelle.66} showed that we can use the standard Peierls' argument, with the short-range contours, to prove phase transition for the long-range model. This happens since we can extend the bound \eqref{Eq: Energy_cost_erasing_contour_SR} a class of models with long-range interactions. This is shown in the next proposition

 \begin{proposition}\label{Prop: Erasing_contours_alpha>d+1}
    For the long-range Ising model with $\alpha > d+1$, there is a constant $c_1(\alpha, d)>0$ such that, for any $\sigma\in\Omega$ and $\gamma\in\Gamma(\sigma)$
    \begin{equation}
        H_{\Lambda, 0}^{+}(\tau_{\gamma}(\sigma)) - H_{\Lambda, 0}^{+}(\sigma) \leq - J c_1(\alpha) |\gamma|.
    \end{equation}
    Here, $\Gamma(\sigma)$ is the collection of short-range contours defined in the previous chapter.
\end{proposition}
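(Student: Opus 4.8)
The plan is to follow the classical enhanced-Peierls computation of Ginibre, Grossmann and Ruelle, isolating the nearest-neighbour part of the energy (which reproduces the short-range bound of the previous chapter) from the genuinely long-range part, which is controlled by the hypothesis $\alpha>d+1$. Write $A=\I(\gamma)$, so that $\tau_{\gamma}(\sigma)=\tau_{A}(\sigma)$. Flipping the spins in $A$ changes a pair product $\sigma_x\sigma_y$ only when exactly one of $x,y$ lies in $A$, and then it flips sign; since $\bm h\equiv 0$ and $A\Subset\Lambda$, so that no boundary term is affected, one gets the identity (up to a harmless overall factor absorbed in the convention for the double sum)
\begin{equation*}
H_{\Lambda,0}^{+}(\tau_{\gamma}(\sigma))-H_{\Lambda,0}^{+}(\sigma)\;=\;2\sum_{x\in A,\ y\notin A}J_{xy}\,\sigma_x\sigma_y\;=\;2J\sum_{x\in A,\ y\notin A}\frac{\sigma_x\sigma_y}{|x-y|^{\alpha}}.
\end{equation*}
This is exactly the short-range computation, now with all pairs present rather than only the nearest-neighbour ones.

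Next I would separate the nearest-neighbour contribution. A nearest-neighbour pair $\{x,y\}$ with $x\in A$, $y\notin A$ corresponds to a plaquette $C_x\cap C_y$ that must belong to $\gamma$ (otherwise a path from $y$ to infinity avoiding $\gamma$, prolonged by the step $y\to x$, would be such a path from $x$, contradicting $x\in\I(\gamma)$), conversely each plaquette of $\gamma$ arises from exactly one such pair, and any such plaquette lies in the interface $\Gamma(\sigma)$, so $\sigma_x\neq\sigma_y$. Hence the nearest-neighbour part of the sum equals $-2J|\gamma|$, and bounding every remaining term in absolute value by $J_{xy}=J|x-y|^{-\alpha}$ gives
\begin{equation*}
H_{\Lambda,0}^{+}(\tau_{\gamma}(\sigma))-H_{\Lambda,0}^{+}(\sigma)\;\le\;-2J|\gamma|\;+\;2J\sum_{\substack{x\in A,\ y\notin A\\ |x-y|\ge 2}}\frac{1}{|x-y|^{\alpha}}\;=\;-2J|\gamma|+2J\,T(A).
\end{equation*}
So it remains to prove the geometric estimate $T(A)\le C(\alpha,d)\,|\gamma|$ with a constant $C(\alpha,d)<1$; then one may take $c_1(\alpha,d)=2\bigl(1-C(\alpha,d)\bigr)>0$.

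For that estimate the plan is: first, for $\alpha>d$, a one-point bound $\sum_{y\notin A}|x-y|^{-\alpha}\le c_d\,\mathrm{dist}(x,A^{c})^{\,d-\alpha}$ (group the $y$'s by their distance $k\ge\mathrm{dist}(x,A^{c})$ to $x$, use that there are $O(k^{d-1})$ of them, and sum the convergent tail); second, summing over $x\in A$,
\begin{equation*}
T(A)\ \le\ \sum_{x\in A}\sum_{y\notin A}\frac{1}{|x-y|^{\alpha}}\ \le\ c_d\sum_{x\in A}\mathrm{dist}(x,A^{c})^{\,d-\alpha}\ =\ c_d\sum_{j\ge 1}j^{\,d-\alpha}\,\bigl|\{x\in A:\ \mathrm{dist}(x,A^{c})=j\}\bigr|,
\end{equation*}
and one needs $\bigl|\{x\in A:\ \mathrm{dist}(x,A^{c})=j\}\bigr|\le c_d'\,|\partial_{\mathrm{in}}A|\le c_d'\,|\gamma|$ uniformly in $j$ — the discrete analogue of ``erosion does not increase the perimeter'', which is the technical heart of the Ginibre–Grossmann–Ruelle argument; granting it, $T(A)\le c_dc_d'\,\zeta(\alpha-d)\,|\gamma|$, finite precisely because $\alpha>d+1$. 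When $\alpha$ is only slightly above $d+1$ and this bound does not already yield $C(\alpha,d)<1$, the refinement is to keep in the previous display the negative contribution of the longer-range crossing pairs with $\sigma_x\neq\sigma_y$ and to use that a crossing pair with $\sigma_x=\sigma_y$ has its connecting segment cross the interface $\Gamma(\sigma)$ at least twice — once in $\gamma$ and (at least) once outside — which supplies the additional decay needed for the leftover positive terms; this is where ``enhanced Peierls'' enters. The main obstacle is thus this uniform-in-$j$ surface bound on the depth shells of $A$, equivalently the control of how much mass a power-law pair interaction can transmit across the boundary of an arbitrarily irregular region $A=\I(\gamma)$; everything else is the bookkeeping of the nearest-neighbour Peierls estimate. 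Combining the three steps, $H_{\Lambda,0}^{+}(\tau_{\gamma}(\sigma))-H_{\Lambda,0}^{+}(\sigma)\le -Jc_1(\alpha,d)|\gamma|$, as claimed.
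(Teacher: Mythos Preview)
Your setup matches the paper's: the energy difference equals $2J\sum_{x\in A,\,y\notin A}\sigma_x\sigma_y|x-y|^{-\alpha}$ with $A=\I(\gamma)$, the nearest-neighbour crossing pairs all have $\sigma_x\neq\sigma_y$ and are in bijection with the faces of $\gamma$, contributing $-2J|\gamma|$, and what remains is to show $T(A)=\sum_{x\in A,\,y\notin A,\,|x-y|\ge2}|x-y|^{-\alpha}\le C|\gamma|$. The divergence is in how $T(A)$ is estimated.

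The paper (following Ginibre--Grossmann--Ruelle) slices $T(A)$ by the displacement $k=x-y$: the number of crossing pairs with a fixed $k$ is at most $\sum_i|k_i|\,|\gamma_i|\le|k|\,|\gamma|$, since the axis-parallel segment from $x$ to $y$ must cross a face of $\gamma$ and a given face perpendicular to $e_i$ can serve at most $|k_i|$ starting points for that $k$; summing yields $T(A)\le|\gamma|\sum_{|k|\ge2}|k|^{1-\alpha}$ in one line, convergent exactly for $\alpha>d+1$. Your route instead bounds the one-point sum by $c_d\,d(x,A^c)^{d-\alpha}$ and then appeals to $|\{x:d(x,A^c)=j\}|\le c_d'|\gamma|$ uniformly in $j$. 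That depth-shell bound is \emph{not} the GGR lemma---you have misidentified the technical heart of their argument---and you do not prove it; for general $A$ it fails (a large box peppered with well-separated single-site holes has $j$-shells of size $\sim j^{d-1}$ times the number of holes, while $|\gamma|$ is only linear in that number), and although $A=\I(\gamma)$ has connected complement, which rules out that particular example, an argument is still owed. Your one-point bound is also lossy (it replaces $A^c$ by the full exterior of a ball), so even granting the shell lemma your constant is worse than the paper's, and the ``enhanced Peierls'' refinement you sketch for $\alpha$ near $d+1$ is absent from the paper and left vague. The concrete gap is the unproven depth-shell lemma; swapping it for the displacement count above closes the proof immediately.
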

\begin{proof}
    A straightforward computation shows that
    \begin{align*}
        H_{\Lambda, 0}^{+}(\tau_\gamma(\sigma)) &= -\sum_{\substack{x,y\in \I(\gamma)}} J_{xy}\tau_\gamma(\sigma)_x\tau_\gamma(\sigma)_y - \sum_{\substack{x,y\in \I(\gamma)^c}} J_{xy}\tau_\gamma(\sigma)_x\tau_\gamma(\sigma)_y - \sum_{\substack{x\in \I(\gamma)\\ y\in \I(\gamma)^c}} J_{xy}\tau_\gamma(\sigma)_x\tau_\gamma(\sigma)_y \\
        &= -\sum_{\substack{x,y\in \I(\gamma)}} J_{xy}\sigma_x\sigma_y -\sum_{\substack{x,y\in \I(\gamma)^c}} J_{xy}\sigma_x\sigma_y + \sum_{\substack{x\in \I(\gamma)\\ y\in \I(\gamma)^c}} J_{xy}\sigma_x\sigma_y\\
        & = H_{\Lambda, 0}^{+}(\sigma) + 2\sum_{\substack{x\in \I(\gamma)\\ y\in \I(\gamma)^c}} J_{xy}\sigma_x\sigma_y.
    \end{align*}
    
    Therefore, the difference can be bounded in the following way 
    \begin{align*}
        \frac{1}{2}(H_{\Lambda, 0}^{+}(\tau_{\gamma}(\sigma)) - H_{\Lambda, 0}^{+}(\sigma)) &= \sum_{\substack{x\in \I(\gamma)\\ y\in \I(\gamma)^c}} J_{xy}\sigma_x\sigma_y
        =\sum_{\substack{x\in \fint\I(\gamma), \\   y\in \fext\I(\gamma)^c}} J\sigma_x\sigma_y + \sum_{\substack{x\in \I(\gamma), \\  y\in \I(\gamma)^c \\ |x-y|\geq 2}} J_{xy}\sigma_x\sigma_y\\
        &\leq -J|\gamma| + \sum_{\substack{x\in \I(\gamma) \\  y\in \I(\gamma)^c \\ |x-y|\geq 2}} J_{xy} \\
        &= -J|\gamma| + \sum_{\substack{k\in \Z^d \\ |k|\geq 2}} J_{0,k}|\{\{x,y\} : x\in \I(\gamma), \  y\in \I(\gamma)^c, x-y = k\}|.
    \end{align*}
    For $i=1,\dots, d$, let $\gamma_i$ be the faces of $\gamma$ perpendicular to the direction $e_i$. Using that $$|{\{x,y\} : x\in \I(\gamma), \  y\in \I(\gamma)^c, x-y = k}| \leq \sum_{i=1}^d |k_i||\gamma_i|,$$ 
we get     
\begin{align*}
    H_{\Lambda, 0}^{+}(\tau_{\gamma}(\sigma)) - H_{\Lambda, 0}^{+}(\sigma) &\leq  -2J|\gamma| + 2\sum_{\substack{k\in \Z^d \\ |k|\geq 2}} \frac{J}{|k|^{\alpha-1}}|\gamma|. 
\end{align*}
Taking $c_1(\alpha)=2(1- \sum_{\substack{k\in \Z^d \\ |k|\geq 2}}\frac{1}{|k|^{\alpha-1}})$, we conclude our proof by noticing that $c_1(\alpha)>0$ if and only if $\alpha > d+1$.
\end{proof}

\begin{remark}
    Notice that the proof above holds for any interaction  $\bm{J} = \{J_{xy}\}_{x,y\in\Z^d}$ that is translation invariant and satisfy
\begin{equation*}
    \sum_{\substack{x\in\Z^d \\ |x|>1}} |x_i|J_{0,x} < J_{0,e_i}
\end{equation*}
for every $i=1,\dots, d$, where $e_i$ is a base vector in the $i$-th direction and $x_i$ is the $i$-th coordinate of $x$.
\end{remark}

With Proposition \ref{Prop: Erasing_contours_alpha>d+1}, we can use the exact same argument from the previous section, only replacing the measure $\mu_{\Lambda;\beta, \varepsilon h}^{+, \IS}(\sigma)$ by $ \mu_{\Lambda;\beta, \varepsilon h}^+(\sigma)$, to prove the next theorem. 

\begin{theorem}
For $d\geq 3$ and $\alpha>d+1$, there exists a constant $C\coloneqq C(d,\alpha)$ such that, for all $\beta>0$ and  $\varepsilon\leq C$, the event 
    \begin{equation}
        \mu_{\Lambda; \beta, \varepsilon h}^+(\sigma_0 = -1) \leq e^{-C\beta} + e^{-C/\varepsilon^2} 
    \end{equation}
    has $\mathbb{P}$-probability bigger then $1 - e^{-C\beta} - e^{-C/\varepsilon^2}$.\\
    
In particular, for $\beta>\beta_c$ and $\varepsilon$ small enough, there is phase transition for the long-range Ising model.  
\end{theorem}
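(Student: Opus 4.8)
The plan is to transfer the entire Peierls-plus-coarse-graining machinery already established in Chapter 3 for the short-range RFIM almost verbatim, using Proposition~\ref{Prop: Erasing_contours_alpha>d+1} to supply the one ingredient that is genuinely dimension/range-dependent, namely the linear-in-$|\gamma|$ lower bound on the energy cost of erasing a short-range contour. Since for $\alpha>d+1$ we keep the \emph{short-range} contour system $\Gamma(\sigma)$ and its interiors $\I(\gamma)$ (which are connected), every geometric estimate from Section~3.3 — the concentration Lemma~\ref{Lemma: Concentration.for.Delta.General}, the coarse-graining bounds of Proposition~\ref{Proposition1_SR}, the counting bound $|B_\ell(\Gamma_0(n))|\le \exp\{b_4\ell n/2^{\ell(d-1)}\}$ of Proposition~\ref{Prop: Proposition_2_FFS}, and hence Proposition~\ref{Prop: Bound.gamma_2_SR} and Proposition~\ref{Prop: Bound.bad.event_SR} — goes through unchanged, because none of them used anything about the Hamiltonian beyond the combinatorics of $\Z^d_*$ and the Gaussian (or Bernoulli) law of $h$. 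The only place the interaction entered the short-range proof was \eqref{Eq: Energy_cost_erasing_contour_SR}, and Proposition~\ref{Prop: Erasing_contours_alpha>d+1} is its exact analogue with a constant $c_1(\alpha,d)>0$ valid precisely when $\alpha>d+1$.

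Concretely, first I would define the joint measure $\mathbb{Q}_{\Lambda;\beta,\varepsilon}^+$ and density $g_{\Lambda;\beta,\varepsilon}^+$ exactly as in Section~3.2 but with $\mu_{\Lambda;\beta,\varepsilon h}^{+,\IS}$ replaced by $\mu_{\Lambda;\beta,\varepsilon h}^{+}$, and the quantities $\Delta_A(h)=-\tfrac1\beta\ln\big(Z^{+}_{\Lambda;\beta,\varepsilon}(h)/Z^{+}_{\Lambda;\beta,\varepsilon}(\tau_A(h))\big)$ and the bad event $\mathcal E^c=\{\sup_{\gamma\in\Gamma_0}|\Delta_{\I(\gamma)}(h)|/(c_1|\gamma|)>1/4\}$. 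Lemma~\ref{Lemma: Concentration.for.Delta.General} holds verbatim since its proof only used $|\tfrac{d}{dh_v}\Delta_A(h)|\le 2\varepsilon$ (a statement about the $\mu^+$ magnetizations, which are still bounded by $1$) and the symmetry $h=^d\tau_A(h)$. Consequently Proposition~\ref{Prop: Bound.bad.event_SR} gives $\mathbb P(\mathcal E^c)\le e^{-C_1/\varepsilon^2}$ with the \emph{same} proof. Then the Peierls estimate reproduces \eqref{Eq: Upper.bound.on.Q.1_SR}–\eqref{Eq: Upper.bound.on.Q.3_SR} line by line: on the good event $\mathcal E$ one has, for each $\gamma\in\mathcal C_0$,
\[
\frac{g_{\Lambda;\beta,\varepsilon}^{+}(\sigma,h)}{g_{\Lambda;\beta,\varepsilon}^{+}(\tau_\gamma(\sigma),\tau_{\I(\gamma)}(h))}
=\exp\{-\beta c_1|\gamma|+\beta\Delta_{\I(\gamma)}(h)\}\le \exp\{-\beta\tfrac{c_1}{2}|\gamma|\},
\]
where the energy cancellation in the numerator now uses Proposition~\ref{Prop: Erasing_contours_alpha>d+1} in place of \eqref{Eq: Energy_cost_erasing_contour_SR}; summing over $\mathcal C_0(n)$ with the entropy bound $|\Gamma_0(n)|\le e^{c_2 n}$ and adding $\mathbb P(\mathcal E^c)$ yields $\mathbb Q^{+}_{\Lambda;\beta,\varepsilon}(\sigma_0=-1)\le e^{-2C\beta}+e^{-2C/\varepsilon^2}$ for $\beta$ large. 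A Markov inequality then gives the claimed high-probability bound $\mu^{+}_{\Lambda;\beta,\varepsilon h}(\sigma_0=-1)\le e^{-C\beta}+e^{-C/\varepsilon^2}$ with probability $\ge 1-e^{-C\beta}-e^{-C/\varepsilon^2}$, and the auxiliary lemma from Section~3.4 (which only used FKG monotonicity, weak-* convergence of $\mu^\pm$, and mixing of $\mathbb P$ under the shift — all still available) upgrades this to $\mu_{\beta,\varepsilon h}^{+}\neq\mu_{\beta,\varepsilon h}^{-}$ $\mathbb P$-a.s. for $\beta>\beta_c$ and $\varepsilon$ small.

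Honestly, there is almost no obstacle here: the theorem is a corollary of the machinery already built, and the single new input, Proposition~\ref{Prop: Erasing_contours_alpha>d+1}, has been proved just above. The one point worth stating carefully is \emph{why} nothing in Section~3.3's coarse-graining used the short-range structure of the Hamiltonian — the answer is that $\Delta_A$ is defined purely through ratios of partition functions and the geometry of contour interiors, so the argument is interaction-agnostic once \eqref{Eq: Energy_cost_erasing_contour_SR} is replaced. I would therefore write the proof as: ``The proof is identical to that of Theorem~\ref{...} (the short-range case), replacing $\mu^{+,\IS}_{\Lambda;\beta,\varepsilon h}$ by $\mu^{+}_{\Lambda;\beta,\varepsilon h}$ and invoking Proposition~\ref{Prop: Erasing_contours_alpha>d+1} instead of \eqref{Eq: Energy_cost_erasing_contour_SR}; since $\Gamma(\sigma)$, $\Gamma_0(n)$, the functions $\Delta_A$, the bad event $\mathcal E^c$, and all coarse-graining estimates of Section~3.3 are unchanged, Proposition~\ref{Prop: Bound.bad.event_SR} and the Peierls bound carry over verbatim.'' The genuinely harder regime $d<\alpha\le d+1$, where the short-range contours no longer suffice and the $(M,a,r)$-partition contours of \cite{Affonso.2021} must be used, is the subject of the remainder of the chapter and is not needed for this particular theorem.
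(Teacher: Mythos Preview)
Your proposal is correct and matches the paper's own treatment essentially verbatim: the paper states that ``we can use the exact same argument from the previous section, only replacing the measure $\mu_{\Lambda;\beta,\varepsilon h}^{+,\IS}(\sigma)$ by $\mu_{\Lambda;\beta,\varepsilon h}^{+}(\sigma)$,'' and then notes that the only change is the definition of $\Delta_A$ via the long-range partition functions, with Lemma~\ref{Lemma: Concentration.for.Delta.General} and all of Section~\ref{Sec: Ding and Zhuang approach} carrying over unchanged. Your identification of Proposition~\ref{Prop: Erasing_contours_alpha>d+1} as the sole new input, and your explanation that the coarse-graining estimates are interaction-agnostic, is precisely the paper's reasoning.
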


The only difference between the two cases is that, for the long-range model, the error function is 

\begin{equation}\label{Def: Delta_A}
\Delta_A(h) \coloneqq -\frac{1}{\beta}\log{\frac{Z_{\Lambda; \beta, \varepsilon}^{+}(h)}{Z_{\Lambda; \beta, \varepsilon}^{+}(\tau_{A}(h))}},
\end{equation}
for any $A\Subset \Z^d$. We are abusing the notation here since $\Delta_A(h)$ denotes two different random variables,  \eqref{Def: Delta_A_SR} and \eqref{Def: Delta_A}. We hope this does not cause any confusion, since the only property we use from $\Delta_A$ is that it satisfies Lemma \ref{Lemma: Concentration.for.Delta.General}. Moreover, all remarks and claims made in Section \ref{Sec: Ding and Zhuang approach} hold for both definitions of $\Delta_A$. 

For $d<\alpha\leq d+1$, we need to use a different contours system, introduced next.

\section{Long-range Contours}
Contours were first defined in the seminal paper of R. Peierls \cite{Peierls.1936}, where he introduced these geometrical objects to prove phase transition in the Ising model for $d\geq 2$. This technique is known nowadays as the \textit{Peierls' Argument}. One of the most successful extensions of this argument was made by S. Pirogov and Y. Sinai \cite{Pirogov.Sinai.75}, and extended by Zahradnik \cite{Zahradnik.84}. This is known as the \textit{Pirogov-Sinai} Theory, which can be used in models with short-range interactions and finite state spaces, even without symmetries. The Pirogov-Sinai Theory was one of the achievements cited when Yakov Sinai received the Abel Prize \cite{Sinai_Abel_Prize}.

For long-range models, using the usual Peierls' contours with plaquettes of dimension $d-1$, Ginibre, Grossman, and Ruelle, in \cite{Ginibre.Grossmann.Ruelle.66}, proved phase transition for $\alpha > d+1$.  Park, in \cite{Park.88.I,Park.88.II}, considered systems with two-body interactions satisfying $|J_{xy}|\leq |x-y|^{-\alpha}$ for $\alpha > 3d+1$, and extended the Pirogov-Sinai theory for this class of models.  Fr{\"o}hlich and Spencer, in \cite{Frohlich.Spencer.82}, proposed a different contour definition for the one-dimensional long-range Ising models. Roughly speaking, collections of intervals are the new contours but arranged in a particular way. When they are sufficiently far apart, the collections of intervals are deemed as different contours, while collections of intervals close enough are considered a single contour. Note that this definition drastically contrasts with the notion of contour in the multidimensional setting since now they are not necessarily connected objects of the lattice. This fact implies that the control of the number of contours for a fixed size could be much more challenging. 

Inspired by such contours, Affonso, Bissacot, Endo, and Handa proposed a definition of contour extending the contours of Fr{\"o}hlich and Spencer to any dimension $d\geq 2$, see \cite{Affonso.2021}. With these contours, they were able to use Peierls' argument to show phase transition in the whole region $\alpha>d$, with $d\geq 2$. Note that such contours can be very sparse, in the sense that its diameter can be much larger than its size. We modify the contour definition of \cite{Affonso.2021} using a similar partition through multiscale methods. We choose to use the new definition of contours for two main reasons: the definition is simpler, and we can improve the control of the number of cubes needed to cover a contour, from a polynomial bound to an exponential bound, see Propositions \ref{Prop. Bound.on.C_rl(gamma)_Lucas} and \ref{Prop. Bound.on.C_rl(gamma)}. We would like to stress that the main results of this paper still hold if we adopt the notion of contour presented in \cite{Affonso.2021}. In Remark \ref{Rmk: Adaptation_for_Mar_partition} we describe the key adaptations that must be made in our arguments.  In this section, we describe our contours.

\begin{definition}\label{def1}
	Given $\sigma \in \Omega$, a point $x \in \Z^d$ is called \emph{+ (or - resp.)} \emph{correct} if $\sigma_y = +1$, (or $-1$, resp.) for all points $y$ such that $|x-y|\leq 1$. The \emph{boundary} of $\sigma$, denoted by $\partial \sigma$, is the set of all points in $\Z^d$ that are neither $+$ nor $-$ correct.
\end{definition}

The boundary of a configuration is not finite in general, it can even be the whole lattice $\Z^d$. To avoid this problem, we will restrict our attention to configurations with finite boundaries. Such configurations, by definition of incorrectness, satisfy $\sigma \in \Omega^+_\Lambda$ or $\sigma \in \Omega^-_\Lambda$ for some $\Lambda\Subset \Z^d$. We also defined, for each $\Lambda\Subset \Z^d$, $\Lambda^{(0)}$ as the unique unbounded connected component of $\Lambda^c$. The \textit{volume} of $\Lambda$ is defined as $V(\Lambda)\coloneqq \Z^d\setminus \Lambda^{(0)}$. The \textit{interior} of $\Lambda$ is $\I(\Lambda)\coloneqq \Lambda^c\setminus \Lambda^{(0)}$.

    The usual definition of contours in Pirogov-Sinai theory considers only the connected subsets of the boundary $\partial \sigma$. We have to proceed differently for long-range models since every point in the lattice interacts with all the others. The definition below, which was presented in \cite{Affonso.2021}, allows contours to be disconnected, and in return, we can control the interaction between two contours.
    
\begin{definition}\label{def:d_condition}
	Fix real numbers $M,a,r>0$. For each $A\Subset\Z^d$, a set $\overline{\Gamma}(A) \coloneqq \{\overline{\gamma} : \overline{\gamma} \subset A\}$ is called an $(M,a,r)$-\emph{partition} when the following conditions are satisfied:
	\begin{enumerate}[label=\textbf{(\Alph*)}, series=l_after] 
		\item They form a partition of $A$, i.e.,  $\bigcup_{\overline{\gamma} \in \Gamma^1(A)}\overline{\gamma}=A$ and $\overline{\gamma} \cap \overline{\gamma}' = \emptyset$ for distinct elements of $\Gamma^1(A)$. Moreover, each $\overline{\gamma}'$ is contained in only one connected component of $(\overline{\gamma})^c$. 
		
		\item For all $\overline{\gamma} \in \Gamma^1(A)$ there exist $1\leq n \leq 2^r-1$ and a family of subsets $(\overline{\gamma}_{k})_{1\leq k \leq n}$ satisfying 
		
		\begin{enumerate}[label=\textbf{(B\arabic*)}]
			\item  $\overline{\gamma} = \bigcup_{1\leq k \leq n}\overline{\gamma}_{k}$,
			\item For all distinct $\overline{\gamma},\overline{\gamma}' \in \Gamma^1(A)$,
			\be\label{B_distance}
			\dis(\overline{\gamma},\overline{\gamma}') > M \min\left \{\underset{1\leq k \leq n}{\max}\diam(\overline{\gamma}_k),\underset{1\leq j\leq n'}{\max}\diam(\overline{\gamma}'_{j})\right\}^a,
			\ee
			where $(\overline{\gamma}'_j)_{1\leq j \leq n'}$ is the family given by item $\textbf{(B1)}$ for $\overline{\gamma}'$.
		\end{enumerate} 
	\end{enumerate}
\end{definition}

This partition was used to define the contours in \cite{Affonso.2021}, and they are enough to control the energy of erasing a contour. Moreover, by suitably constructing these partitions, with a proper choice of the constants $M,a$, and $r$, the authors can control the number of contours with a fixed size surrounding the origin. These are the two main ingredients of the Peierls' argument. We introduce a new partition, that has a much simpler definition. This partition will also be constructed by a multiscale procedure, described next.
    
\begin{definition}\label{Def: delta-partiton}
    Let $M>0$ and $a,\delta >d$. For each $A\Subset\Z^d$, a set $\Gamma(A) \coloneqq \{\overline{\gamma} : \overline{\gamma} \subset A\}$ is called a $(M,a,\delta)$-\emph{partition} when the following two conditions are satisfied.
	\begin{enumerate}[label=\textbf{(\Alph*)}, series=l_after] 
		\item They form a partition of $A$, i.e.,  $\bigcup_{\overline{\gamma} \in \Gamma(A)}\overline{\gamma}=A$ and $\overline{\gamma} \cap \overline{\gamma}' = \emptyset$ for distinct elements of $\Gamma(A)$.  
		
		\item For all $\overline{\gamma}, \overline{\gamma}^\prime \in \Gamma(A)$ 
			\be\label{B_distance_2}
			\d(\overline{\gamma},\overline{\gamma}') > M\min\left \{|V(\overline{\gamma})|,|V(\overline{\gamma}')|\right\}^\frac{a}{\delta}.
			\ee
	\end{enumerate}
\end{definition}

\begin{remark}\label{Rmk: choice_of_a_and_delta}
    Our construction and the control of the energy works for any $d<\delta<\frac{a(\alpha - d)}{2}$ and  $a>\frac{2(d+1)}{(\alpha-d) \wedge 1}$. To simplify the calculations, we will take $\delta = d+1$ and  $a \coloneqq a(\alpha,d) = \frac{3(d+1)}{(\alpha-d) \wedge 1}$ from now on, so $\frac{a}{\delta} = \frac{3}{(\alpha-d) \wedge 1}$ and a $(M,a,\delta)$-partition will be called $(M,a)$-partition.
\end{remark}

In Figure \ref{Fig: Figura10} we give an example of a region $A\Subset \Z^d$ that is only one contour using the $(M,a,r)$-partition of \cite{Affonso.2021}, but can be partitioned into multiple components to form a $(M,a)$-partition. 

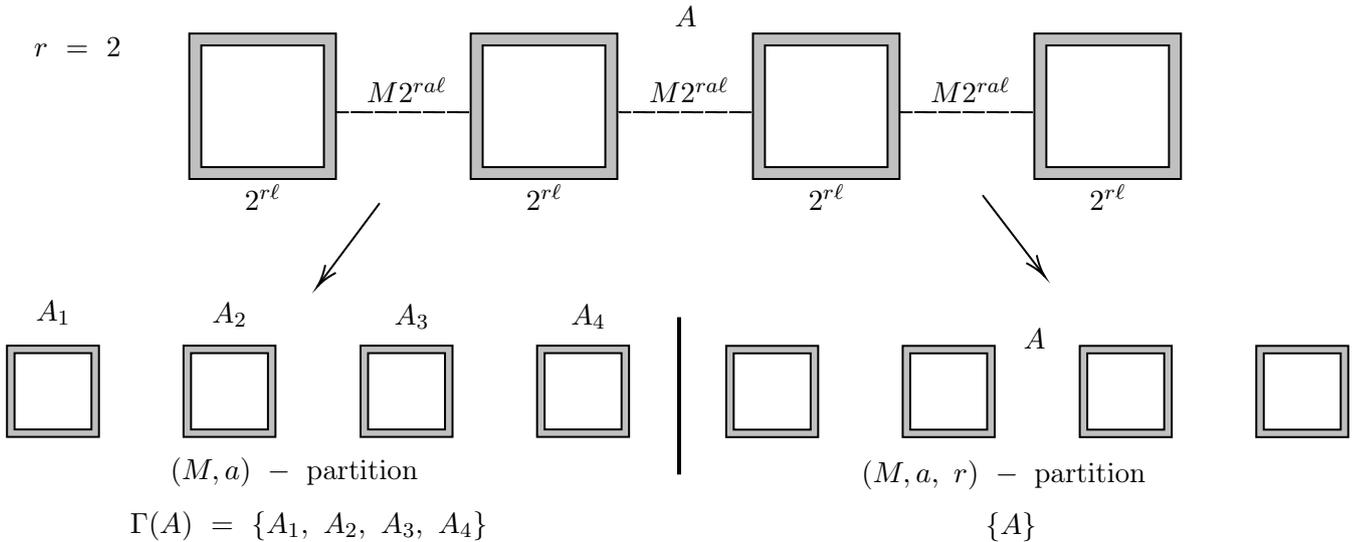
\begin{figure}[H]
    \centering
    \tikzset{every picture/.style={line width=0.75pt}} 

\begin{tikzpicture}[x=0.75pt,y=0.75pt,yscale=-1,xscale=1]

\draw  [fill={rgb, 255:red, 155; green, 155; blue, 155 }  ,fill opacity=0.63 ] (386.36,19.67) -- (459.53,19.67) -- (459.53,92.84) -- (386.36,92.84) -- cycle ;
\draw  [fill={rgb, 255:red, 255; green, 255; blue, 255 }  ,fill opacity=1 ] (392.34,25.65) -- (453.55,25.65) -- (453.55,86.86) -- (392.34,86.86) -- cycle ;
\draw  [dash pattern={on 4.5pt off 4.5pt}] (319.59,59.27) -- (386.01,59.27) -- (386.01,59.22) -- (319.59,59.22) -- cycle ;
\draw  [fill={rgb, 255:red, 155; green, 155; blue, 155 }  ,fill opacity=0.63 ] (526.82,19.67) -- (600,19.67) -- (600,92.84) -- (526.82,92.84) -- cycle ;
\draw  [fill={rgb, 255:red, 255; green, 255; blue, 255 }  ,fill opacity=1 ] (532.8,25.65) -- (594.02,25.65) -- (594.02,86.86) -- (532.8,86.86) -- cycle ;
\draw  [dash pattern={on 4.5pt off 4.5pt}] (460.05,59.27) -- (526.48,59.27) -- (526.48,59.22) -- (460.05,59.22) -- cycle ;
\draw  [fill={rgb, 255:red, 155; green, 155; blue, 155 }  ,fill opacity=0.63 ] (105,19.67) -- (178.18,19.67) -- (178.18,92.84) -- (105,92.84) -- cycle ;
\draw  [fill={rgb, 255:red, 255; green, 255; blue, 255 }  ,fill opacity=1 ] (110.98,25.65) -- (172.2,25.65) -- (172.2,86.86) -- (110.98,86.86) -- cycle ;
\draw  [fill={rgb, 255:red, 155; green, 155; blue, 155 }  ,fill opacity=0.63 ] (245.47,19.67) -- (318.64,19.67) -- (318.64,92.84) -- (245.47,92.84) -- cycle ;
\draw  [fill={rgb, 255:red, 255; green, 255; blue, 255 }  ,fill opacity=1 ] (251.45,25.65) -- (312.66,25.65) -- (312.66,86.86) -- (251.45,86.86) -- cycle ;
\draw  [dash pattern={on 4.5pt off 4.5pt}] (178.7,59.27) -- (245.13,59.27) -- (245.13,59.22) -- (178.7,59.22) -- cycle ;
\draw  [fill={rgb, 255:red, 155; green, 155; blue, 155 }  ,fill opacity=0.63 ] (190.49,176.67) -- (236.39,176.67) -- (236.39,222.57) -- (190.49,222.57) -- cycle ;
\draw  [fill={rgb, 255:red, 255; green, 255; blue, 255 }  ,fill opacity=1 ] (194.24,180.42) -- (232.64,180.42) -- (232.64,218.82) -- (194.24,218.82) -- cycle ;
\draw  [fill={rgb, 255:red, 155; green, 155; blue, 155 }  ,fill opacity=0.63 ] (278.6,176.67) -- (324.5,176.67) -- (324.5,222.57) -- (278.6,222.57) -- cycle ;
\draw  [fill={rgb, 255:red, 255; green, 255; blue, 255 }  ,fill opacity=1 ] (282.35,180.42) -- (320.75,180.42) -- (320.75,218.82) -- (282.35,218.82) -- cycle ;
\draw  [fill={rgb, 255:red, 155; green, 155; blue, 155 }  ,fill opacity=0.63 ] (14,176.67) -- (59.9,176.67) -- (59.9,222.57) -- (14,222.57) -- cycle ;
\draw  [fill={rgb, 255:red, 255; green, 255; blue, 255 }  ,fill opacity=1 ] (17.75,180.42) -- (56.15,180.42) -- (56.15,218.82) -- (17.75,218.82) -- cycle ;
\draw  [fill={rgb, 255:red, 155; green, 155; blue, 155 }  ,fill opacity=0.63 ] (102.11,176.67) -- (148.01,176.67) -- (148.01,222.57) -- (102.11,222.57) -- cycle ;
\draw  [fill={rgb, 255:red, 255; green, 255; blue, 255 }  ,fill opacity=1 ] (105.86,180.42) -- (144.26,180.42) -- (144.26,218.82) -- (105.86,218.82) -- cycle ;
\draw  [fill={rgb, 255:red, 155; green, 155; blue, 155 }  ,fill opacity=0.63 ] (549.49,176.77) -- (595.39,176.77) -- (595.39,222.67) -- (549.49,222.67) -- cycle ;
\draw  [fill={rgb, 255:red, 255; green, 255; blue, 255 }  ,fill opacity=1 ] (553.24,180.52) -- (591.64,180.52) -- (591.64,218.92) -- (553.24,218.92) -- cycle ;
\draw  [fill={rgb, 255:red, 155; green, 155; blue, 155 }  ,fill opacity=0.63 ] (637.6,176.77) -- (683.5,176.77) -- (683.5,222.67) -- (637.6,222.67) -- cycle ;
\draw  [fill={rgb, 255:red, 255; green, 255; blue, 255 }  ,fill opacity=1 ] (641.35,180.52) -- (679.75,180.52) -- (679.75,218.92) -- (641.35,218.92) -- cycle ;
\draw  [fill={rgb, 255:red, 155; green, 155; blue, 155 }  ,fill opacity=0.63 ] (373,176.77) -- (418.9,176.77) -- (418.9,222.67) -- (373,222.67) -- cycle ;
\draw  [fill={rgb, 255:red, 255; green, 255; blue, 255 }  ,fill opacity=1 ] (376.75,180.52) -- (415.15,180.52) -- (415.15,218.92) -- (376.75,218.92) -- cycle ;
\draw  [fill={rgb, 255:red, 155; green, 155; blue, 155 }  ,fill opacity=0.63 ] (461.11,176.77) -- (507.01,176.77) -- (507.01,222.67) -- (461.11,222.67) -- cycle ;
\draw  [fill={rgb, 255:red, 255; green, 255; blue, 255 }  ,fill opacity=1 ] (464.86,180.52) -- (503.26,180.52) -- (503.26,218.92) -- (464.86,218.92) -- cycle ;
\draw   (349,162.84) -- (350,162.84) -- (350,241) -- (349,241) -- cycle ;
\draw    (200,105) -- (170.19,145.24) ;
\draw [shift={(169,146.84)}, rotate = 306.53] [color={rgb, 255:red, 0; green, 0; blue, 0 }  ][line width=0.75]    (10.93,-3.29) .. controls (6.95,-1.4) and (3.31,-0.3) .. (0,0) .. controls (3.31,0.3) and (6.95,1.4) .. (10.93,3.29)   ;
\draw    (501,101) -- (530.79,140.25) ;
\draw [shift={(532,141.84)}, rotate = 232.8] [color={rgb, 255:red, 0; green, 0; blue, 0 }  ][line width=0.75]    (10.93,-3.29) .. controls (6.95,-1.4) and (3.31,-0.3) .. (0,0) .. controls (3.31,0.3) and (6.95,1.4) .. (10.93,3.29)   ;

\draw (26,20.4) node [anchor=north west][inner sep=0.75pt]    {$r\ =\ 2$};
\draw (333.04,40.2) node [anchor=north west][inner sep=0.75pt]    {$M2^{ra\ell }$};
\draw (412.57,94.1) node [anchor=north west][inner sep=0.75pt]    {$2^{r\ell }$};
\draw (473.5,40.2) node [anchor=north west][inner sep=0.75pt]    {$M2^{ra\ell }$};
\draw (553.03,94.1) node [anchor=north west][inner sep=0.75pt]    {$2^{r\ell }$};
\draw (131.21,94.1) node [anchor=north west][inner sep=0.75pt]    {$2^{r\ell }$};
\draw (192.15,40.2) node [anchor=north west][inner sep=0.75pt]    {$M2^{ra\ell }$};
\draw (271.68,94.1) node [anchor=north west][inner sep=0.75pt]    {$2^{r\ell }$};
\draw (94,231.4) node [anchor=north west][inner sep=0.75pt]    {$( M,a) \ -\ \text{partition}$};
\draw (74,258.4) node [anchor=north west][inner sep=0.75pt]    {$\Gamma ( A) \ =\ \{A_{1} ,\ A_{2} ,\ A_{3} ,\ A_{4}\}$};
\draw (439,233.06) node [anchor=north west][inner sep=0.75pt]    {$( M,a,\ r) \ -\ \text{partition}$};
\draw (501,259.06) node [anchor=north west][inner sep=0.75pt]    {$\{A\}$};
\draw (27,152.4) node [anchor=north west][inner sep=0.75pt]    {$A_{1}$};
\draw (115,153.4) node [anchor=north west][inner sep=0.75pt]    {$A_{2}$};
\draw (206,154.4) node [anchor=north west][inner sep=0.75pt]    {$A_{3}$};
\draw (294,154.4) node [anchor=north west][inner sep=0.75pt]    {$A_{4}$};
\draw (346,4.4) node [anchor=north west][inner sep=0.75pt]    {$A$};
\draw (520,166.4) node [anchor=north west][inner sep=0.75pt]    {$A$};

\end{tikzpicture}
    \caption{We wish to partition the gray area $A$. Each cube has side $2^{r\ell}$ and the distance between each other is $M2^{ra\ell}$. With $r=2$, $2^{r}-1 = 3$ and therefore no partition into smaller parts is a $(M,a,r)$-partition. However, the partition into connected components $\{A_1, A_2, A_3, A_4\}$ is an $(M,a)$-partition, since $V(A_i)^{\frac{a}{\delta}} = 2^{ra\frac{d}{\delta}\ell}<2^{ra\ell}$ whenever $\delta>d$.}
    \label{Fig: Figura10}
\end{figure}

The existence of a $(M,a)$-partition for any $A\Subset\Z^d$ does not depend on the choice of $M,a>0$. However, to guarantee the existence of phase transition, we have to choose particular values for these parameters, see Remark \ref{Rmk: choice_of_a_and_delta}. Later on, in Proposition \ref{Prop: Cost_erasing_contour}, $M$ will be taken large enough. 

We write $\Gamma(\sigma) \coloneqq \Gamma(\partial\sigma)$ for a $(M,a)$-partition of $\partial\sigma$. In general, there is more than one $(M,a)$-partition for each region $A\in\Z^d$. Given two partitions $\Gamma$ and $\Gamma^\prime$ of a set $A$, we say that $\Gamma$ \emph{is finer than} $\Gamma'$, and denote $\Gamma\preceq\Gamma^{\prime}$, if for every $\overline{\gamma} \in \Gamma$ there is $\overline{\gamma}' \in \Gamma'$ with $\overline{\gamma} \subseteq \overline{\gamma}'$. The next proposition shows that the finest $(M,a)$-partition exists.

 \begin{proposition}
     For every $A\Subset\Z^d$, there is a finest $(M,a)$-partition.
 \end{proposition}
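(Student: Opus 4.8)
The plan is to define the finest partition as the common refinement of all $(M,a)$-partitions of $A$ and then verify that it is itself a $(M,a)$-partition. Concretely, let $\{\Gamma^{(i)}\}_{i\in I}$ be the (finite, since $A$ is finite) collection of all $(M,a)$-partitions of $A$, and define an equivalence relation on $A$ by declaring $x\sim y$ whenever $x$ and $y$ lie in the same block of $\Gamma^{(i)}$ for \emph{every} $i\in I$. Let $\Gamma^\ast$ be the partition of $A$ into $\sim$-equivalence classes. By construction $\Gamma^\ast\preceq\Gamma^{(i)}$ for all $i$, so once we show $\Gamma^\ast$ is a $(M,a)$-partition it is automatically the finest one. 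Property \textbf{(A)} is immediate: $\Gamma^\ast$ is a partition of $A$ by definition.

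The content is in verifying the separation condition \eqref{B_distance_2}. First I would observe a monotonicity fact: if $\Gamma\preceq\Gamma'$ and $\Gamma'$ is a $(M,a)$-partition, then $\Gamma$ satisfies \eqref{B_distance_2}. Indeed, take distinct $\overline{\gamma}_1,\overline{\gamma}_2\in\Gamma$. If they lie in the same block $\overline{\gamma}'$ of $\Gamma'$, there is nothing forcing a contradiction directly, so the argument must be run more carefully — this is the main obstacle, see below. If they lie in distinct blocks $\overline{\gamma}_1',\overline{\gamma}_2'$ of $\Gamma'$, then $\d(\overline{\gamma}_1,\overline{\gamma}_2)\geq \d(\overline{\gamma}_1',\overline{\gamma}_2') > M\min\{|V(\overline{\gamma}_1')|,|V(\overline{\gamma}_2')|\}^{a/\delta} \geq M\min\{|V(\overline{\gamma}_1)|,|V(\overline{\gamma}_2)|\}^{a/\delta}$, where the last inequality uses $\overline{\gamma}_i\subseteq\overline{\gamma}_i'$ hence $V(\overline{\gamma}_i)\subseteq V(\overline{\gamma}_i')$ hence $|V(\overline{\gamma}_i)|\leq|V(\overline{\gamma}_i')|$. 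So the only thing to rule out is two blocks of $\Gamma^\ast$ sitting inside one block of some $\Gamma^{(i)}$.

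The key step to close this gap is to show that $\Gamma^\ast$ is obtained from the $\Gamma^{(i)}$ by iterating the operation ``superpose two partitions and take connected-component-wise common refinement'', and that this binary operation preserves the $(M,a)$-property. Given two $(M,a)$-partitions $\Gamma,\Gamma'$, define $\Gamma\wedge\Gamma'$ to have blocks the nonempty sets $\overline{\gamma}\cap\overline{\gamma}'$ with $\overline{\gamma}\in\Gamma$, $\overline{\gamma}'\in\Gamma'$. For distinct blocks $\overline{\gamma}_1\cap\overline{\gamma}_1'$ and $\overline{\gamma}_2\cap\overline{\gamma}_2'$ of $\Gamma\wedge\Gamma'$, either $\overline{\gamma}_1\neq\overline{\gamma}_2$, in which case \eqref{B_distance_2} for $\Gamma$ together with the volume-monotonicity $V(\overline{\gamma}_i\cap\overline{\gamma}_i')\subseteq V(\overline{\gamma}_i)$ gives the bound; or $\overline{\gamma}_1=\overline{\gamma}_2$ and hence necessarily $\overline{\gamma}_1'\neq\overline{\gamma}_2'$, in which case \eqref{B_distance_2} for $\Gamma'$ gives it. Thus $\Gamma\wedge\Gamma'$ is again a $(M,a)$-partition. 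Since $\Gamma^\ast=\bigwedge_{i\in I}\Gamma^{(i)}$ (a finite meet, as $I$ is finite because $A$ is finite), an induction on $|I|$ shows $\Gamma^\ast$ is a $(M,a)$-partition, completing the proof. I expect the only subtlety to be bookkeeping the claim $\Gamma^\ast=\bigwedge_i\Gamma^{(i)}$: two points lie in a common block of the iterated meet iff they lie in a common block of each $\Gamma^{(i)}$, which is exactly the defining relation of $\Gamma^\ast$, so this is routine.
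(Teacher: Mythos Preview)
Your proposal is correct and takes essentially the same approach as the paper: the paper also defines the common refinement $\Gamma\cap\Gamma'=\{\overline{\gamma}\cap\overline{\gamma}':\overline{\gamma}\in\Gamma,\ \overline{\gamma}'\in\Gamma',\ \overline{\gamma}\cap\overline{\gamma}'\neq\emptyset\}$, checks it is again an $(M,a)$-partition via the same volume-monotonicity $V(\overline{\gamma}\cap\overline{\gamma}')\subseteq V(\overline{\gamma})$, and then intersects all finitely many $(M,a)$-partitions. Your case split (either $\overline{\gamma}_1\neq\overline{\gamma}_2$ or $\overline{\gamma}_1'\neq\overline{\gamma}_2'$) is in fact more explicit than the paper's writeup, which silently assumes $\overline{\gamma}_1\neq\overline{\gamma}_2$ when bounding the distance.
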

\begin{proof}
    Given any two $(M,a)$-partitions $\Gamma(A)$ and $\Gamma^\prime(A)$, consider
\begin{equation*}
    \Gamma\cap\Gamma^\prime\coloneqq \{\overline{\gamma}\cap\overline{\gamma}^\prime : \overline{\gamma} \in \Gamma(A), \ \overline{\gamma}^\prime\in\Gamma^\prime(A), \ \overline{\gamma}\cap\overline{\gamma}^\prime\neq \emptyset\}.
\end{equation*}

Then, $\Gamma\cap\Gamma^\prime$ is a $(M,a)$-partition of $A$ finer than $\Gamma(A)$ and $\Gamma^\prime(A)$. Indeed, given $\overline{\gamma}_1\cap\overline{\gamma}_1^\prime, \overline{\gamma}_2\cap\overline{\gamma}_2^\prime\in \Gamma\cap\Gamma^\prime$, 
\begin{align*}
    \d(\overline{\gamma}_1\cap\overline{\gamma}_1^\prime, \overline{\gamma}_2\cap\overline{\gamma}_2^\prime)\geq \d(\overline{\gamma}_1, \overline{\gamma}_2) &\geq M\min{\{|V(\overline{\gamma}_1)|, |V(\overline{\gamma}_2)|\}}^{\frac{a}{d+1}}\\
    &\geq  M\min{\{|V(\overline{\gamma}_1\cap\overline{\gamma}_1^\prime)|, |V(\overline{\gamma}_2\cap \overline{\gamma}_2^\prime)|\}}^{\frac{a}{d+1}}.
\end{align*}
As the number of $(M,a)$-partitions is finite, we construct the finest one by intersecting all of them. 
\end{proof}

 From now on, when taking a $(M,a)$-partition $\Gamma(A)$, we will always assume it is the finest. It is easy to see that the finest $(M,a)$-partition $\Gamma(A)$ satisfies the following property:

\begin{itemize}
    \item[\textbf{(A1)}] For any $\overline{\gamma},\overline{\gamma}^\prime\in \Gamma(A)$, $\overline{\gamma}'$ is contained in only one connected component of $(\overline{\gamma})^c$.
\end{itemize}

Property \textbf{(A1)} is essential to define labels as in \cite{Affonso.2021}. It implies in particular that $\overline{\gamma}^\prime$ is contained in the unbounded component of $\overline{\gamma}^c$ if and only if $V(\overline{\gamma})\cap V(\overline{\gamma}^\prime) = \emptyset$. See Figure \ref{Fig. Exemple_A1} for an example of partition not satisfying \textbf{(A1)}.

 
\tikzset{
pattern size/.store in=\mcSize, 
pattern size = 5pt,
pattern thickness/.store in=\mcThickness, 
pattern thickness = 0.3pt,
pattern radius/.store in=\mcRadius, 
pattern radius = 1pt}
\makeatletter
\pgfutil@ifundefined{pgf@pattern@name@_2ysuecq82}{
\makeatletter
\pgfdeclarepatternformonly[\mcRadius,\mcThickness,\mcSize]{_2ysuecq82}
{\pgfpoint{-0.5*\mcSize}{-0.5*\mcSize}}
{\pgfpoint{0.5*\mcSize}{0.5*\mcSize}}
{\pgfpoint{\mcSize}{\mcSize}}
{
\pgfsetcolor{\tikz@pattern@color}
\pgfsetlinewidth{\mcThickness}
\pgfpathcircle\pgfpointorigin{\mcRadius}
\pgfusepath{stroke}
}}
\makeatother

 
\tikzset{
pattern size/.store in=\mcSize, 
pattern size = 5pt,
pattern thickness/.store in=\mcThickness, 
pattern thickness = 0.3pt,
pattern radius/.store in=\mcRadius, 
pattern radius = 1pt}
\makeatletter
\pgfutil@ifundefined{pgf@pattern@name@_p46h6u1xc}{
\makeatletter
\pgfdeclarepatternformonly[\mcRadius,\mcThickness,\mcSize]{_p46h6u1xc}
{\pgfpoint{-0.5*\mcSize}{-0.5*\mcSize}}
{\pgfpoint{0.5*\mcSize}{0.5*\mcSize}}
{\pgfpoint{\mcSize}{\mcSize}}
{
\pgfsetcolor{\tikz@pattern@color}
\pgfsetlinewidth{\mcThickness}
\pgfpathcircle\pgfpointorigin{\mcRadius}
\pgfusepath{stroke}
}}
\makeatother

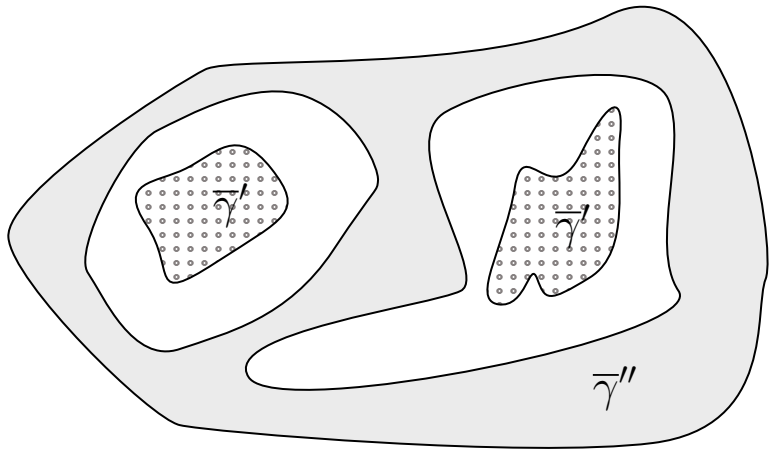
\begin{figure}[H]
	\centering
	
    \tikzset{every picture/.style={line width=0.75pt}} 

\begin{tikzpicture}[x=0.75pt,y=0.75pt,yscale=-0.75,xscale=0.75]

\draw  [fill={rgb, 255:red, 234; green, 234; blue, 234 }  ,fill opacity=0.94 ] (186,64) .. controls (206,54) and (372,68.5) .. (455,27.5) .. controls (538,-13.5) and (567,184) .. (559,204.5) .. controls (551,225) and (566.42,303.92) .. (484,314.5) .. controls (401.58,325.08) and (189.03,307.2) .. (169,302.5) .. controls (148.97,297.8) and (57,206.5) .. (55,176.5) .. controls (53,146.5) and (166,74) .. (186,64) -- cycle ;
\draw  [fill={rgb, 255:red, 255; green, 255; blue, 255 }  ,fill opacity=1 ] (356,212.5) .. controls (361.71,209.65) and (360.41,200.35) .. (356.46,188.03) .. controls (346.6,157.21) and (320.24,107.48) .. (346,92.5) .. controls (382.06,71.54) and (480,53.5) .. (494,83.5) .. controls (508,113.5) and (481,182.5) .. (501,212.5) .. controls (521,242.5) and (235,300.5) .. (215,270.5) .. controls (195,240.5) and (336,222.5) .. (356,212.5) -- cycle ;
\draw  [fill={rgb, 255:red, 255; green, 255; blue, 255 }  ,fill opacity=1 ] (154,105) .. controls (174,95) and (220,69) .. (254,82.5) .. controls (288,96) and (301.65,130.51) .. (301,138.5) .. controls (300.35,146.49) and (301,140) .. (271,185.5) .. controls (241,231) and (200.19,240.89) .. (169,251.5) .. controls (137.81,262.11) and (118.4,216.1) .. (108,200.5) .. controls (97.6,184.9) and (134,115) .. (154,105) -- cycle ;
\draw  [pattern=_2ysuecq82,pattern size=5.25pt,pattern thickness=0.75pt,pattern radius=0.75pt, pattern color={rgb, 255:red, 133; green, 128; blue, 128}] (173,130.5) .. controls (193,120.5) and (207,101) .. (230,130.5) .. controls (253,160) and (239,163.5) .. (193,193.5) .. controls (147,223.5) and (167,198.5) .. (147,168.5) .. controls (127,138.5) and (153,140.5) .. (173,130.5) -- cycle ;
\draw  [pattern=_p46h6u1xc,pattern size=5.25pt,pattern thickness=0.75pt,pattern radius=0.75pt, pattern color={rgb, 255:red, 133; green, 128; blue, 128}] (436,120.5) .. controls (456,80.5) and (465,81.5) .. (462,109.5) .. controls (459,137.5) and (470.24,185.85) .. (442,203.5) .. controls (413.76,221.15) and (412.21,218.61) .. (407,204.5) .. controls (401.79,190.39) and (400,224.5) .. (381,221.5) .. controls (362,218.5) and (386,181.5) .. (392,143.5) .. controls (398,105.5) and (416,160.5) .. (436,120.5) -- cycle ;

\draw (442,261.4) node [anchor=north west][inner sep=0.75pt]  [font=\LARGE]  {$\overline{\gamma}^{\prime\prime} $};
\draw (189,138.4) node [anchor=north west][inner sep=0.75pt]  [font=\LARGE]  {$\overline{\gamma} '$};
\draw (417,152.4) node [anchor=north west][inner sep=0.75pt]  [font=\LARGE]  {$\overline{\gamma} '$};

\end{tikzpicture}
\caption{An example of how Condition \textbf{(A1)} works: considering $\overline{\gamma}'$ the dotted region and $\overline{\gamma}^{\prime \prime}$ the grey region, one can readily see that $\overline{\gamma}'$ intersects two different connected components of $(\overline{\gamma}^{\prime\prime})^c$. To turn this into a partition satisfying condition \textbf{(A1)}, one should separate $\overline{\gamma}'$ in two different sets of $\Gamma(A)$.}
\label{Fig. Exemple_A1}
\end{figure}

Counting the number of contours surrounding zero using the finest $(M,a)$-partition may be troublesome since the definition provides very little information on these objects. To extract good properties of these contours, we establish a multiscale procedure, depending on a parameter $r$, that creates a $(M,a)$-partition of any given set. To define this procedure, we introduce some notation. 

 For any $x\in\Z^d$ and $m\geq 0$,
\begin{equation}
    C_{m}(x) \coloneqq \left(\prod_{i=1}^d{\left[2^{m}x_i , \ 2^{m}(x_i+1) \right)}\right)\cap \Z^d,
\end{equation}
is the cube of $\mathbb{Z}^d$ centered at $2^{m}x + 2^{m-1} - \frac{1}{2}$ with side length $2^{m} -1$, see Figure \ref{Fig: Cubes}. Any such cube is called an $m$-cube. As all cubes in this paper are of this form, with centers $2^{m}x + 2^{m-1} - \frac{1}{2}$ and $x \in \mathbb{Z}^d$, we will often omit the point $x$ in what follows, writing $C_m$ for an $m$-cube instead of $C_m(x)$. An arbitrary collection of $m$-cubes will be denoted $\mathscr{C}_m$ and $B_{\mathscr{C}_m}\coloneqq \cup_{C\in\mathscr{C}_m}C$ is the region covered by $\mathscr{C}_m$. We denote by $\mathscr{C}_m(\Lambda)$ the covering of $\Lambda\Subset\Z^d$ with the smallest possible number of $m$-cubes. 

For each $n\geq 0$, define the graph $G_n(\Lambda) = (V_n(\Lambda), E_n(\Lambda))$ with vertex set $V_n(\Lambda) = \mathscr{C}_n(\Lambda)$ and $E_n(\Lambda) = \{ (C_n, C_n^\prime) : d(C_n,C_n^\prime) \leq M2^{an}\}$. Let $\mathscr{G}_n(\Lambda)$ be the connected components of $G_n(\Lambda)$. Given ${G = (V,E) \in \mathscr{G}_n(\Lambda)}$, we denote $\Lambda^G \coloneqq \Lambda \cap B_V$ the area of $\Lambda$ covered by $G$. In the next proposition, we will introduce a procedure that can generate a $(M,a)$-partition that can be nontrivial.

\begin{proposition}\label{Prop:Construction_(M,a,delta)_partition}
    For any $r> 0$ and $A\Subset\Z^d$, there is a possibly non-trivial $(M,a)$-partition $\Gamma^r(A)$.
\end{proposition}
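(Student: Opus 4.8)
## Proof proposal

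The plan is to describe an explicit multiscale (coarse-graining) construction of the partition $\Gamma^r(A)$ and then verify it satisfies conditions \textbf{(A)} and \textbf{(B)} of Definition \ref{Def: delta-partiton}. The construction proceeds bottom-up through the scales $n = 0, 1, 2, \dots$. At scale $n = 0$ we start with the trivial covering $\mathscr{C}_0(A)$ and the graph $G_0(A)$; each connected component $G \in \mathscr{G}_0(A)$ gives a candidate piece $A^G = A \cap B_V$. We then pass to the next scale: having a current collection of candidate pieces at scale $n$, we build the graph $G_{n+1}$ on the $(n+1)$-cubes needed to cover these pieces, with edges joining cubes at distance $\leq M 2^{a(n+1)}$, take its connected components, and \emph{merge} any candidate pieces that now lie in a common component of $G_{n+1}$. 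A piece is \emph{frozen} (declared to be an element $\overline{\gamma}$ of the final partition) at the first scale $n$ at which it is a singleton component of $G_n$ that does not merge with anything at scale $n+1$ — equivalently, at the first scale where it is separated by distance $> M 2^{a(n+1)}$ from all other current pieces. Since $A$ is finite, there is a scale $N$ with $2^N > \diam(A)$ at which everything has collapsed to a single component, so the procedure terminates and every point of $A$ ends up in exactly one frozen piece; this gives condition \textbf{(A)} (it is a partition of $A$).

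The core of the argument is condition \textbf{(B)}, the separation inequality \eqref{B_distance_2}. Suppose $\overline{\gamma}$ is frozen at scale $n$, meaning it is a singleton component of $G_n$ separated by more than $M 2^{an}$ (actually $M2^{a(n+1)}$, but $M2^{an}$ suffices) from every other current piece. Let $\overline{\gamma}'$ be another element of the final partition. If $\overline{\gamma}'$ was already frozen before scale $n$, then at scale $n$ it is one of those ``other current pieces'' that $\overline{\gamma}$ is far from, so $\d(\overline{\gamma},\overline{\gamma}') > M 2^{an}$. If $\overline{\gamma}'$ freezes at scale $n$ or later, it contains one of the current pieces at scale $n$, again one of those from which $\overline{\gamma}$ is separated, so the same bound holds: $\d(\overline{\gamma},\overline{\gamma}') > M 2^{an}$. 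To convert this into the required form, I need a lower bound on the scale $n$ in terms of $|V(\overline{\gamma})|$: a piece frozen at scale $n$ must be covered by at least one $n$-cube but its whole component at every earlier scale $n-1$ had diameter small enough not to have been frozen — more precisely, because it was a \emph{single} cube at scale $n$ that is the merge of a connected component of cubes at scale $n-1$, its diameter is at most $\mathrm{const}\cdot 2^n$, so $|V(\overline{\gamma})| \leq |B_V| \leq \mathrm{const}\cdot 2^{nd}$, giving $2^n \geq c\, |V(\overline{\gamma})|^{1/d}$ for a dimensional constant $c$. Hence
\begin{equation*}
\d(\overline{\gamma},\overline{\gamma}') > M 2^{an} \geq M c^a |V(\overline{\gamma})|^{a/d} \geq M' \min\{|V(\overline{\gamma})|,|V(\overline{\gamma}')|\}^{a/\delta},
\end{equation*}
where the last step uses $\delta = d+1 > d$ so that $|V|^{a/d} \geq |V|^{a/\delta}$ for $|V|\geq 1$, and $M'$ absorbs the constant $c^a$ (here I may need to either shrink $M$ to $M'$ or, since $M$ is a free parameter taken large later, simply note the inequality holds after rescaling $M$; either way the existence statement is unaffected as the constant in front is arbitrary). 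By symmetry in $\overline{\gamma},\overline{\gamma}'$ (whichever freezes first at the smaller scale governs), we get \eqref{B_distance_2}.

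The main obstacle I anticipate is making the bookkeeping of the ``merge and freeze'' dynamics precise enough that (i) the output is genuinely a partition — no point is orphaned or double-counted — and (ii) the key geometric lemma, that a piece frozen at scale $n$ has $|V(\overline{\gamma})| \lesssim 2^{nd}$, is correctly stated and proved. For (ii) the subtlety is that a frozen piece at scale $n$ is not literally a single $n$-cube but rather $A$ intersected with a connected cluster of $n$-cubes that has shrunk to one component; one must argue that this cluster, precisely because it failed to be separated from itself at earlier scales, cannot be too spread out. I expect this to follow from a telescoping estimate on the growth of component diameters across scales: if a component at scale $n$ has diameter $D$, it was assembled from components at scale $n-1$ pairwise within distance $M2^{a(n-1)}$ of each other, and bounding the number of such merges controls $D$ in terms of $2^n$ (this is essentially the content of the analysis done later in Proposition \ref{Prop. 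Bound.on.C_rl(gamma)}, which I would invoke or mirror). A secondary, more cosmetic point is the word ``possibly non-trivial'': one should exhibit that for suitable $A$ (e.g. two well-separated cubes as in Figure \ref{Fig: Figura10}) the procedure does \emph{not} return the single-block partition $\{A\}$ — this is immediate from the construction since two cubes at mutual distance $> M 2^{a\cdot 0}$ are already separate components at scale $0$ and freeze before merging, provided $M < \d(\text{the two cubes})$.
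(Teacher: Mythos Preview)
Your construction differs from the paper's in the freezing criterion, and this difference is the source of both your acknowledged difficulty and the paper's simplicity. You freeze a piece when it becomes \emph{isolated} (its own connected component of $G_n$, far from all other current pieces) and then attempt to \emph{derive} the volume bound $|V(\overline{\gamma})|\lesssim 2^{nd}$ needed for \eqref{B_distance_2}. The paper instead freezes a piece at step $n$ precisely when its volume is small, namely $|V(A_n^G)|\le 2^{rn(d+1)}$, and continues the procedure with the remainder $A_{n+1}=A_n\setminus\bigcup_{\gamma\in\Gamma_n^r(A)}\gamma$ at scale $r(n+1)$. With this volume-threshold criterion, condition \textbf{(B)} is a two-line check: if $\overline{\gamma}\in\Gamma_n^r(A)$ and $\overline{\gamma}'\in\Gamma_m^r(A)$ with $m\ge n$, then they lie in distinct components of $G_{rn}(A_n)$, so $\d(\overline{\gamma},\overline{\gamma}')>M2^{rna}\ge M(2^{rn(d+1)})^{a/(d+1)}\ge M|V(\overline{\gamma})|^{a/(d+1)}$, and if $m>n$ then $\overline{\gamma}'$ survived step $n$, forcing $|V(\overline{\gamma}')|>2^{rn(d+1)}\ge|V(\overline{\gamma})|$, so the minimum is $|V(\overline{\gamma})|$.

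Your approach has a genuine gap at exactly the point you flag. A piece that is ``its own component'' of $G_n$ can span many $n$-cubes chained together at mutual distance $\le M2^{an}$, so its diameter need not be $O(2^n)$; the bound $|V(\overline{\gamma})|\lesssim 2^{nd}$ does not follow from isolation alone. You propose to recover it by a telescoping diameter estimate ``essentially the content of Proposition~\ref{Prop. Bound.on.C_rl(gamma)}'', but that proposition is proved \emph{using} the volume-threshold structure of the paper's $\Gamma^r(A)$ (via Lemma~\ref{Lemma: Big.clusters_2}, which says $|V(\gamma^{G_{r\ell}})|>2^{r(d+1)\ell}$ for $\ell$ below the removal step), so invoking it here is circular. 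If instead you intended ``singleton'' to mean literally one $n$-cube, the volume bound is trivial and the whole telescoping discussion is unnecessary; but then your text is internally inconsistent. Either way, the paper's volume-based criterion sidesteps the issue entirely and, crucially, produces the specific $r$-dependent object whose survival property $|V(\gamma^{G_{r\ell}})|>2^{r(d+1)\ell}$ drives the later counting estimates (Lemma~\ref{Lemma: Big.clusters_2}, Proposition~\ref{Prop. Bound.on.C_rl(gamma)}). Your construction does not visibly use $r$ and would not furnish that property.
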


\begin{proof}
  Given $r> 0$ and $A\Subset\Z^d$, $\Gamma^r(A)$ is the partition of $A$ created by the following procedure. In the first step we consider $A_1 \coloneqq A$ and we take the connected components of $G\in \mathscr{G}_r(A_1)$ such that $A_1^G$ have small density, that is, consider
\begin{equation*}
   \mathscr{P}_1 \coloneqq \{G \in  \mathscr{G}_r(A_1) : |V(A_1^G)|\leq {2^{r(d+1)}}\}.
\end{equation*}
    Then, the subsets to be removed in the first step are $\Gamma_1^r(A) \coloneqq \{A_1^G : G\in \mathscr{P}_1\}$ and the set left to partition is $A_2 \coloneqq A_1\setminus \bigcup\limits_{\gamma \in \Gamma_1^r(A)}\gamma$.
 We can repeat this procedure inductively by taking 
\begin{equation*}
   \mathscr{P}_n \coloneqq \{G \in  \mathscr{G}_{rn}(A_{n}) : |V(A_n^G)|\leq {2^{rn{(d+1)}}}\},
\end{equation*}
then define $\Gamma_n^r(A) \coloneqq \{A_n^G : G \in \mathscr{P}_n\}$ and $A_{n+1} \coloneqq A_n \setminus \bigcup\limits_{ \gamma\in \Gamma_n^r(A)}\gamma$. As the cubes invade the lattice, this procedure stops, in the sense that for some $N$ large enough, $\mathscr{P}_n=\emptyset$ for all $n\geq N$. We then define $\Gamma^r(A) \coloneqq \cup_{n\geq 0} \Gamma_n^r(A)$.
 By this construction, $\Gamma^r(A)$ is clearly a partition of $A$, so condition \textbf{(A)} follows. To show condition \textbf{(B)}, take $\overline{\gamma},\overline{\gamma}^\prime\in \Gamma^r(A)$. Let $m\geq n\geq 1$ be such that $\overline{\gamma}\in\Gamma_n^r(A)$ and $\overline{\gamma}^\prime\in\Gamma_m^r(A)$. Then, 
\begin{equation*}
\d(\overline{\gamma},\overline{\gamma}^\prime)\geq M2^{rna}\geq M\left(2^{rn(d+1)}\right)^{\frac{a}{d+1}} \geq M|V(\overline{\gamma})|^{\frac{a}{d+1}}.
\end{equation*}
If $m=n$, the same inequality holds for $|V(\overline{\gamma}^\prime)|$ and condition \textbf{(B)} holds. When $m>n$, $\overline{\gamma}^\prime$ was not removed at step $n$, so $|V(\overline{\gamma}^\prime)|> 2^{rn(d+1)} \geq |V(\overline{\gamma})|$, so $|V(\overline{\gamma})| = \min\{|V(\overline{\gamma})|,|V(\overline{\gamma}^\prime)|\}$ and again we get condition \textbf{(B)}. 
\end{proof}

The construction in Proposition \ref{Prop:Construction_(M,a,delta)_partition} works for any $r>0$, but we need to take $r$ large enough for the computations in Section 3 to work. So we fix $r\coloneqq 4\lceil\log_2(a+1) \rceil + d +1$, where $\lceil x \rceil$ is the smallest integer greater than or equal to $x$. This $r$ is taken larger than the one in \cite{Affonso.2021} to simplify some calculations. All our computations should work with the previous choice of $r$, with some adaptation. Next, we define the label of a contour. 
 
\begin{definition}
    For $\Lambda\subset\Z^d$, the \textit{edge boundary} of $\Lambda$ is $\partial\Lambda = \{\{x,y\} \subset \Z^d: |x-y|=1, x \in \Lambda, y \in \Lambda^c\}$. The \textit{inner boundary} of $\Lambda$ is $\fint\Lambda\coloneqq\{x\in\Lambda : \exists y\in \Lambda^c \text{ such that }|x-y|=1\}$ and the \textit{external boundary} is $\fext\Lambda\coloneqq\{x\in\Lambda^c : \exists y\in \Lambda \text{ such that }|x-y|=1\}$
\end{definition}

\begin{remark}
    The usual isoperimetric inequality states that $2d|\Lambda|^{\frac{d-1}{d}}\leq |\partial \Lambda|$. The inner boundary and the edge are related by $|\fint \Lambda|\leq |\partial \Lambda|\leq 2d|\fint \Lambda|$, so we can write the inequality as $|\Lambda|^{\frac{d-1}{d}}\leq |\fint \Lambda|$.
\end{remark}

To define the label of a contour, the naive definition would be to take the sign of the inner boundary of the set $\overline{\gamma}$. However, this cannot be done since this inner boundary may have different signs, see Figure \ref{fig: Figura2}.

\input{Figures/Figura.2}

For any $\Lambda\Subset\Z^d$, its connected components are denoted $\Lambda^{(1)}, \dots, \Lambda^{(n)}$. Given $\overline{\gamma} \in\Gamma(\sigma)$, a connected component $\overline{\gamma}^{(k)}$ is \textit{external} if $V(\overline{\gamma}^{(j)})\subset V(\overline{\gamma}^{(k)})$, for all other connected components $\overline{\gamma}^{(j)}$ satisfying $V(\overline{\gamma}^{(j)})\cap V(\overline{\gamma}^{(k)}) \neq \emptyset$. Denoting 
\begin{equation*}
    \overline{\gamma}_\mathrm{ext} = \hspace{-0.5cm}\bigcup_{\substack{k\geq 1 \\ \overline{\gamma}^{(k)} \text{ is external}}}\hspace{-0.5cm}\overline{\gamma}^{(k)},
\end{equation*} 
it is shown in \cite[Lemma 3.8]{Affonso.2021} that the sign of $\sigma$ is constant in $\fint V(\overline{\gamma}_{\mathrm{ext}})$. The \textit{label} of $\overline{\gamma}$ is the function $\lab_{\overline{\gamma}} :\{(\overline{\gamma})^{(0)}, \I(\overline{\gamma})^{(1)}\dots, \I(\overline{\gamma})^{(n)}\} \rightarrow \{-1,+1\}$ defined as: $\lab_{\overline{\gamma}}(\I(\overline{\gamma})^{(k)})$ is the sign of the configuration $\sigma$ in $\fint V(\I(\overline{\gamma})^{(k)})$, for $k\geq 1$, and $\lab_{\overline{\gamma}}((\overline{\gamma})^{(0)})$ is the sign of $\sigma$ in $\fint V(\overline{\gamma}_\mathrm{ext})$.  We then define the contours.

\begin{definition}
Given a configuration $\sigma$ with finite boundary, its \emph{contours} $\gamma$ are pairs $(\overline{\gamma},\lab_{\overline{\gamma}})$,  where $\overline{\gamma} \in \Gamma(\sigma)$ and $\lab_{\overline{\gamma}}$ is the label of $\overline{\gamma}$ as defined above. The \emph{support of the contour}  $\gamma$ is defined as $\Sp(\gamma)\coloneqq \overline{\gamma}$ and its \emph{size} is given by $|\gamma| \coloneqq |\Sp(\gamma)|$.
\end{definition}

Another important definition is of the \textit{interior} of a contour $\gamma$, given by $\I(\gamma) \coloneqq \I(\Sp(\gamma))$. This notion of interior and volume works as expected since $\I(\gamma) = V(\Sp(\gamma)) \setminus \Sp(\gamma)$. We also split the interior according to its labels as

\begin{equation*}
    \I_\pm(\gamma) = \hspace{-0.7cm}\bigcup_{\substack{k \geq 1, \\ \lab_{\overline{\gamma}}(\I(\gamma)^{(k)})=\pm 1}}\hspace{-0.7cm}\I(\gamma)^{(k)}.
\end{equation*}
To simplify the notation, we write $V(\gamma)\coloneqq V(\Sp(\gamma))$. Different from Pirogov-Sinai theory, where the interiors of contours are a union of simply connected sets, the interior $\I(\gamma)$ is at most the union of connected sets, that is, they may have holes. 

Moreover, there is no bijection between families of contours $\Gamma = \{\gamma_1,\dots,\gamma_n\}$ and configurations. Usually, more than one configuration can have the same boundary. First, $\Gamma$ may not even form a $(M,a)$-partition. Even so, their labels may not be compatible. We say that $\Gamma$ is \textit{compatible} when there exists a configuration $\sigma$ with contours precisely $\Gamma$.

\input{Figures/Figura.3}

A contour $\gamma$ in $\Gamma$ is \textit{external} if its external connected components are not contained in any other $V(\gamma')$, for $\gamma' \in \Gamma\setminus\{\gamma\}$. Taking $\I_\pm(\Gamma) \coloneqq \cup_{\gamma\in\Gamma}\I_\pm(\gamma)$ and $V(\Gamma)\coloneqq\cup_{\gamma\in\Gamma}V(\gamma)$, for each $\Lambda\Subset\Z^d$ we consider the sets\\
\begin{equation*}
\mathcal{E}^\pm_\Lambda \coloneqq\{\Gamma= \{\gamma_1, \ldots, \gamma_n\}: \Gamma \text{ is compatible,} \gamma_i \text{ is external}, \lab_{\gamma_i}((\gamma_i)^{(0)})=\pm1, V(\Gamma) \subset \Lambda\}, \vspace{0.2cm}
\end{equation*}
of all external compatible families of contours with external label $\pm$ contained in $\Lambda$.  When we write $\gamma \in \mathcal{E}^\pm_\Lambda$ we mean $\{\gamma\} \in \mathcal{E}^\pm_\Lambda$. Most of the time the set $\Lambda$ will play no role, so we will often omit the subscript. 

The first step for a Peierls-type argument to hold is to control the number of contours with a fixed size. Consider $\mathcal{C}_0(n) \coloneqq \{\gamma \in \mathcal{E}^+_\Lambda: 0 \in V(\gamma), |\gamma|=n\}$, the set of contours with fixed size with the origin in its volume, and  $\mathcal{C}_0 \coloneqq \cup_{n\geq 1}\mathcal{C}_0(n)$. 

We will later show in Corollary \ref{Cor: Bound_on_C_0_n} that the size of the set $\mathcal{C}_0(n)$ is exponentially bounded depending on $n$. 
Before we start the estimations of the Peierls arguments, let us present a property of the new contours that, despite not being necessary in our results, highlights a major difference between our construction and the contours of \cite{Affonso.2021}.

\begin{proposition}
     There exists a constant $c\coloneqq c(d,a)$ such that all $\sigma\in\Omega$ and $\gamma\in\Gamma(\sigma)$,
    \begin{equation*}
        \diam(\gamma)\leq c|\gamma|^{1 + \frac{a}{d+1}(1 + \frac{1}{d-1})}
    \end{equation*}     
\end{proposition}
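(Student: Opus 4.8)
The plan is to bound the diameter of a contour $\gamma$ by walking through its support scale by scale, using the structure imposed by the multiscale construction of the $(M,a)$-partition. Write $\overline{\gamma} = \Sp(\gamma)$ and recall that $\overline{\gamma} \in \Gamma^r(\partial\sigma)$, so $\overline{\gamma}$ was removed at some step, say step $n$, of the procedure in Proposition \ref{Prop:Construction_(M,a,delta)_partition}; this means $\overline{\gamma} = A_n^G$ for a connected component $G \in \mathscr{G}_{rn}(A_n)$ with $|V(\overline{\gamma})| \leq 2^{rn(d+1)}$. The key structural fact is that $G$ is connected in the graph $G_{rn}(A_n)$, whose edges link $rn$-cubes at distance at most $M2^{arn}$. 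Therefore any two points of $\overline{\gamma}$ can be joined by a chain of at most $|\mathscr{C}_{rn}(\overline{\gamma})|$ cubes, consecutive ones being within distance $M2^{arn}$, and each cube having side $2^{rn}$; hence
\begin{equation*}
    \diam(\overline{\gamma}) \leq |\mathscr{C}_{rn}(\overline{\gamma})|\left(M2^{arn} + 2^{rn}\right) \leq 2M\, |\mathscr{C}_{rn}(\overline{\gamma})|\, 2^{arn}.
\end{equation*}

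Next I would estimate the two factors on the right in terms of $|\gamma|$. For the number of $rn$-cubes needed to cover $\overline{\gamma}$: since $\overline{\gamma}$ is the incorrect set of a configuration, each nonempty $rn$-cube in the cover contains at least one point of $\overline{\gamma}$, so $|\mathscr{C}_{rn}(\overline{\gamma})| \leq |\overline{\gamma}| = |\gamma|$ trivially, but this crude bound is wasteful; a better route, which is what is really needed, is to note that $|\mathscr{C}_{rn}(\overline{\gamma})| \leq |\gamma|/$ (minimal number of points of $\overline\gamma$ per occupied cube). Actually the cleanest path is via the scale bound $2^{rn}$: because $\overline\gamma$ was \emph{not} removed at steps $<n$, every component of $G_{rm}(A_m)$ meeting $\overline\gamma$ at scales $m < n$ had large volume, which forces $\overline\gamma$ itself to be "spread out" enough that $|V(\overline\gamma)| \geq c\, 2^{rn \cdot \frac{d-1}{d} \cdot \text{(something)}}$; combined with the isoperimetric inequality $|V(\overline\gamma)|^{\frac{d-1}{d}} \leq |\fint V(\overline\gamma)|$ and the fact that $|\fint V(\overline\gamma)| \leq c|\gamma|$ (incorrect points border the interior), one gets $2^{rn} \leq c|\gamma|^{\frac{1}{d}(1 + \frac{1}{d-1})} = c|\gamma|^{\frac{1}{d-1}}$ — this identifies the scale at which $\overline\gamma$ lives in terms of its size. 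Similarly $|\mathscr{C}_{rn}(\overline\gamma)| 2^{rn(d-1)} \lesssim |\fint V(\overline\gamma)| \lesssim |\gamma|$ gives $|\mathscr{C}_{rn}(\overline\gamma)| \leq c|\gamma|\, 2^{-rn(d-1)} \leq c|\gamma|$.

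Plugging these into the displayed bound: $\diam(\overline\gamma) \leq 2M\,|\gamma|\cdot 2^{arn}$, and since $2^{rn} \leq c|\gamma|^{\frac{1}{d-1}}$ wait — I need $2^{rn}$ bounded by a power of $|\gamma|$ that gives the stated exponent. More carefully, from $|\mathscr{C}_{rn}(\overline\gamma)| 2^{rn(d-1)} \leq c|\gamma|$ and $|\mathscr{C}_{rn}(\overline\gamma)| \geq 1$ we get $2^{rn} \leq (c|\gamma|)^{\frac{1}{d-1}}$, hence $2^{arn} \leq (c|\gamma|)^{\frac{a}{d-1}}$. That would give exponent $1 + \frac{a}{d-1}$, which is too large. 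The fix is to use the partition bound $|V(\overline\gamma)| \leq 2^{rn(d+1)}$ together with $2^{rn(d+1)} \geq |V(\overline\gamma)| \geq |\overline\gamma|$... this only bounds $2^{rn}$ from below. The correct bound on $2^{rn}$ from above must come from: $\overline\gamma$ contains a cube's worth of incorrect points at scale $rn$, so the surface of $V(\overline\gamma)$ restricted near that cube is $\gtrsim 2^{rn(d-1)}$, giving $2^{rn(d-1)} \leq |\fint V(\overline\gamma)| \leq 2d\,|\gamma|$ modulo checking $\partial$-vs-incorrect points, so $2^{rn} \leq (2d|\gamma|)^{\frac{1}{d-1}}$; then $2^{arn} \leq (2d|\gamma|)^{\frac{a}{d-1}}$ and $\diam(\overline\gamma) \leq 2M|\gamma|(2d|\gamma|)^{\frac{a}{d-1}} = c|\gamma|^{1 + \frac{a}{d-1}}$. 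Since $\frac{a}{d-1} = \frac{a}{d+1}\cdot\frac{d+1}{d-1} = \frac{a}{d+1}(1 + \frac{2}{d-1})$, which is slightly more than $\frac{a}{d+1}(1 + \frac{1}{d-1})$; the sharper statement in the proposition requires being more economical, presumably counting $|\mathscr{C}_{rn}(\overline\gamma)|$ via $|V(\overline\gamma)|^{\frac{d-1}{d}} \leq |\fint V(\overline\gamma)| \leq c|\gamma|$ giving $|V(\overline\gamma)| \leq c|\gamma|^{\frac{d}{d-1}}$, then $|\mathscr{C}_{rn}(\overline\gamma)| \leq |V(\overline\gamma)|/2^{rnd} \cdot (\text{const}) \leq c|\gamma|^{\frac{d}{d-1}} 2^{-rnd}$, hence $\diam(\overline\gamma) \leq 2M\cdot c|\gamma|^{\frac{d}{d-1}}2^{-rnd}\cdot 2^{arn} = c|\gamma|^{\frac{d}{d-1}} 2^{-rn(d-a)}$ and bounding $2^{rn}$ appropriately.

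\textbf{Main obstacle.} The delicate point, and where I expect to spend the real effort, is pinning down the precise exponent: one must carefully track \emph{two} competing estimates for the relevant scale $2^{rn}$ — an upper bound coming from the isoperimetric inequality applied to $V(\overline\gamma)$ (which says $\overline\gamma$ cannot live at too coarse a scale relative to $|\gamma|$) and the structural bound $|V(\overline\gamma)| \leq 2^{rn(d+1)}$ from the $(M,a)$-partition (which says the volume is not too big relative to the scale) — and combine them with the covering-number estimate $|\mathscr{C}_{rn}(\overline\gamma)| \cdot 2^{rn(d-1)} \lesssim |\fint V(\overline\gamma)| \leq 2d|\gamma|$ so that the product $|\mathscr{C}_{rn}(\overline\gamma)|\cdot 2^{arn}$ comes out to exactly $c|\gamma|^{1 + \frac{a}{d+1}(1+\frac{1}{d-1})}$. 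Getting the bookkeeping of these three inequalities to yield the stated exponent rather than a weaker one is the crux; everything else (the chain-of-cubes argument for $\diam$, the relation $|\fint V(\overline\gamma)| \leq 2d|\gamma|$ via Definition \ref{def1}) is routine.
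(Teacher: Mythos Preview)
Your chain-of-cubes bound
\[
\diam(\overline\gamma)\;\leq\; |\mathscr{C}_{rn}(\overline\gamma)|\bigl(M2^{arn}+d2^{rn}\bigr)\;\leq\; 2Md\,|\gamma|\,2^{arn}
\]
is correct and is exactly how the paper starts (it uses the trivial bound $|\mathscr{C}_{rn}(\overline\gamma)|\le|\gamma|$, nothing sharper). The gap is in the second half: you are looking at the \emph{wrong} inequality from the construction. You repeatedly try to use $|V(\overline\gamma)|\leq 2^{rn(d+1)}$ (the criterion for being removed at step $n$), and you correctly observe this only gives a useless lower bound on $2^{rn}$. The inequality that actually does the work is the opposite one: since $\overline\gamma$ was \emph{not} removed at step $n-1$, the construction forces
\[
|V(\overline\gamma)|\;>\;2^{r(n-1)(d+1)},
\]
which inverts to $2^{r(n-1)}\leq |V(\overline\gamma)|^{\frac{1}{d+1}}$ and hence $2^{arn}\leq 2^{ar}\,|V(\overline\gamma)|^{\frac{a}{d+1}}$. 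Now the isoperimetric inequality applied to $V(\overline\gamma)$, together with $\partial_{\mathrm{in}}V(\overline\gamma)\subset\Sp(\gamma)$, gives $|V(\overline\gamma)|\leq |\gamma|^{\frac{d}{d-1}}=|\gamma|^{1+\frac{1}{d-1}}$, and plugging in yields the stated exponent $1+\tfrac{a}{d+1}\bigl(1+\tfrac{1}{d-1}\bigr)$ directly.

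So the three ingredients you list in your ``main obstacle'' paragraph are not quite the right ones: you do not need any nontrivial bound on $|\mathscr{C}_{rn}(\overline\gamma)|$, and the relevant volume inequality is the lower bound from the previous step of the construction, not the upper bound from the current one. Once you make that switch the bookkeeping is a two-line computation.
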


\begin{proof}
    Let $j\geq 1$ be such that $\gamma\in \Gamma^r_j(\sigma)$, that is, $\gamma$ is removed in the $j$-th step of the construction. Then, $|V(\gamma)|\geq 2^{r(d+1)(j-1)}$. For any $\Lambda,\Lambda^\prime\Subset \Z^d$, 
    \begin{equation*}
        \diam(\Lambda\cup \Lambda) \leq \diam(\Lambda) + \diam(\Lambda^\prime) + \dis(\Lambda,\Lambda^\prime),
    \end{equation*}
    and we can always extract a vertex from a connected graph in a way that the induced sub-graph is still connected, by removing a leaf of a spanning tree. Using this and the fact that $G_{rj}=(\C_{rj}(\gamma), E_{rj}(\gamma))$ is connected, we can bound
    \begin{align*}
        \diam(\gamma)&\leq \diam(B_{\C_{rj}}(\gamma)) \leq |\C_{rj}(\gamma)|(d2^{rj} + M2^{raj}) \\ 
        &\leq 2Md2^{raj}|\gamma| = 2Md2^{ra}2^{ra(j-1)}|\gamma| \leq 2Md2^{ra}|V(\gamma)|^{\frac{a}{d+1}}|\gamma| \\
        &\leq 2Md2^{ra}|\gamma|^{1 + \frac{a}{d+1}(1 + \frac{1}{d-1})},
    \end{align*}
    what concludes our proof for $c=2Md2^{ra}$.
\end{proof}

Repeating this argument for the contours defined by $(M,a,r)$-partition, we would get a sub-exponential bound on the diameter. The only difference in the argument is that we must replace $|V(\gamma)|\geq 2^{r(d+1)(j-1)}$ by $|V(\gamma)|\geq 2^{r}$. We also need to use that $j\leq n^{\log_2(a)/r-d-1}$, otherwise, by Proposition \ref{Prop. Bound.on.C_rl(gamma)_Lucas}, we would have only one $r(j-1)$-cube covering $\gamma$, and therefore $j$ would not be the step the contour was removed.

A key step of a Peierls-type argument is to control the energy cost of erasing a contour. Given $\Gamma\in \mathcal{E}^+$, the configurations compatible with $\Gamma$ are $\Omega(\Gamma)\coloneqq \{\sigma\in\Omega^+_\Lambda : \Gamma\subset \Gamma(\sigma)\}$. The map $\tau_\Gamma:\Omega(\Gamma) \rightarrow \Omega_{\Lambda}^+$ defined as 
\be
\tau_\Gamma(\sigma)_x = 
\begin{cases}
	\;\;\;\sigma_x &\text{ if } x \in \I_+(\Gamma)\cup V(\Gamma)^c, \\
	-\sigma_x &\text{ if } x \in \I_-(\Gamma),\\
	+1 &\text{ if } x \in \Sp(\Gamma),
\end{cases}
\ee
erases a family of compatible contours, since the spin-flip preserves incorrect points but transforms $-$-correct points into $+$-correct points. Define, for $B\Subset\Z^d$, the interaction
\begin{equation*}
    F_B\coloneqq \sum_{\substack{x\in B \\ y\in  B^c}}J_{xy}.
\end{equation*}

In \cite{Affonso.2021} it was proved the following proposition:
\begin{proposition}\label{prop2}
	For $M$ large enough, there exists a  constant $c_2\coloneqq c_2(\alpha,d)>0$, such that for any configuration $\sigma\in\Omega$ and $\overline{\gamma} \in \Gamma(\sigma)$, with $\overline{\gamma}$ external and $0\in V(\overline{\gamma})$, it holds that 
	\be
	H_{\Lambda; 0}^+(\sigma)- H_{\Lambda;0}^+(\tau_{\overline{\gamma}}(\sigma))\geq c_2|\overline{\gamma}|.
	\ee
\end{proposition}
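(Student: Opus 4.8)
The plan is to follow the scheme of \cite{Affonso.2021}: decompose the energy difference into a dominant \emph{local} gain, proportional to $|\overline{\gamma}|$, plus a long-range remainder that is forced to be small once $M$ is taken large, by exploiting the separation built into the $(M,a)$-partition and the choices of $a$, $\delta$, $r$ fixed in Remark \ref{Rmk: choice_of_a_and_delta}.

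First I would set $D \coloneqq \Sp(\overline{\gamma})\cup\I_-(\overline{\gamma})$, the set of sites on which $\sigma$ and $\tau_{\overline{\gamma}}(\sigma)$ disagree, and write
\begin{equation*}
	H_{\Lambda;0}^{+}(\sigma) - H_{\Lambda;0}^{+}(\tau_{\overline{\gamma}}(\sigma)) = \sum_{\substack{\{x,y\}\subset\Z^d\\ \{x,y\}\cap D\neq\emptyset}} J_{xy}\bigl(\sigma_x\sigma_y - \tau_{\overline{\gamma}}(\sigma)_x\,\tau_{\overline{\gamma}}(\sigma)_y\bigr),
\end{equation*}
since a pair with both endpoints outside $D$ contributes $0$; note moreover that a pair with both endpoints in $\I_-(\overline{\gamma})$ contributes $0$ as well, both spins being flipped. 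I would then split off the distance-one part of this sum. Using the facts recalled just before the proposition --- that $\tau_{\overline{\gamma}}$ makes every point of $\Sp(\overline{\gamma})$ a $+$-correct point and creates no new incorrect point inside $V(\overline{\gamma})$ --- every frustrated nearest-neighbour edge incident to $\Sp(\overline{\gamma})$ is repaired; since each site of $\Sp(\overline{\gamma})\subset\partial\sigma$ is incident to at least one such edge and each edge is incident to at most $2d$ sites, the distance-one part is bounded below by $c_1(d)\,J\,|\overline{\gamma}|$ for some $c_1(d)>0$.

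It remains to bound the distance-$\ge 2$ contribution $E$ from below, which is where the geometry enters. One ingredient is elementary: a point of $V(\overline{\gamma})$ adjacent to $V(\overline{\gamma})^c$ necessarily lies in $\Sp(\overline{\gamma})$ (otherwise it would lie in the unbounded component of $\Sp(\overline{\gamma})^c$), so $|\fint V(\overline{\gamma})| \le |\overline{\gamma}|$, which keeps the near-field interactions (pairs $\{x,y\}$ with $|x-y|\ge 2$ bounded and touching $\Sp(\overline{\gamma})$) of order $|\overline{\gamma}|$; since $\alpha>d$ the relevant lattice sums $\sum_{|k|\ge 2}|k|^{-\alpha}$ converge, and a careful accounting of these terms against the distance-one gain (as in \cite{Frohlich.Spencer.82, Affonso.2021}) still leaves a definite fraction of $c_1(d)\,J\,|\overline{\gamma}|$. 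The genuinely long-range part --- interactions between distinct pieces of $\overline{\gamma}$, and between $\overline{\gamma}$ and the exterior of $V(\overline{\gamma})$ --- is organised along the scales of the construction of $\Gamma^r$ (Proposition \ref{Prop:Construction_(M,a,delta)_partition}): a piece removed at step $\ell$ is covered by $r\ell$-cubes, has volume at most $2^{r\ell(d+1)}$, and is at distance $> M2^{ra\ell}$ from any other piece (and from $V(\overline{\gamma})^c$), so with $J_{xy}=J|x-y|^{-\alpha}$, $\alpha>d$, $\delta=d+1<a(\alpha-d)/2$ and $r$ large the geometric sums telescope into a bound $C(\alpha,d)\,M^{-\kappa}|\overline{\gamma}|$ with $\kappa=\kappa(\alpha,d)>0$. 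Choosing $M=M(\alpha,d)$ so that $C(\alpha,d)M^{-\kappa}$ is smaller than that fraction yields the statement with $c_2 = c_2(\alpha,d)>0$.

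The main obstacle is precisely this last estimate: because $\I(\overline{\gamma})$ may be extremely sparse and disconnected, one cannot control the long-range interactions through connectivity of the contour, and must instead exhaust them scale by scale, using that at scale $\ell$ the separation $M2^{ra\ell}$ beats the volume bound $2^{r\ell(d+1)}$ through the decay exponent $\alpha$ --- which is exactly what the conditions on $a$ and $\delta$ guarantee. I would reproduce the corresponding estimate of \cite{Affonso.2021} with the simpler $(M,a)$-partition in place of the $(M,a,r)$-partition; as observed in Remark \ref{Rmk: Adaptation_for_Mar_partition}, the argument is insensitive to that replacement.
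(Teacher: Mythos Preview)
The paper does not give its own proof of Proposition~\ref{prop2}; it is quoted as a result of \cite{Affonso.2021}. The closest in-paper argument is the proof of the strengthened Proposition~\ref{Prop: Cost_erasing_contour} (for the $(M,a)$-contours), via Lemma~\ref{Lemma: Aux_1} and Corollary~\ref{Cor: Corrolary_of_Lemma_aux}, and that is what your sketch should be measured against.

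Your overall scheme --- a positive local gain of order $|\overline{\gamma}|$, plus a long-range remainder bounded by $C(\alpha,d)M^{-\kappa}|\overline{\gamma}|$ --- is correct and matches the paper. But your identification of the dangerous long-range terms is off in two places. First, ``interactions between distinct pieces of $\overline{\gamma}$'' are harmless: in the exact decomposition~\eqref{prop2:eq2} the pairs $\{x,y\}\subset\Sp(\overline{\gamma})$ contribute $J_{xy}(1-\sigma_x\sigma_y)\ge 0$. Second, you omit the cross term $\sum_{x\in\I_-(\gamma),\,y\in\I_+(\gamma)\cup V(\gamma)^c}J_{xy}\sigma_x\sigma_y$, which is where most of the work lies. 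In the paper, the negative terms in~\eqref{prop2:eq2} are nonzero only when the off-site $y$ has the ``wrong'' sign, i.e.\ lies in $V(\gamma')$ for some \emph{other} contour $\gamma'\in\Gamma(\sigma)\setminus\{\gamma\}$; what must be bounded is $\sum_{x\in B,\,y\in V(\Gamma(\sigma)\setminus\{\gamma\})}J_{xy}$ for $B=\Sp(\gamma)$ or $B=\I_-(\gamma)$. Lemma~\ref{Lemma: Aux_1} does this by stratifying the \emph{other} contours according to $|V(\gamma')|$ and invoking Condition~\textbf{(B)} directly --- not by the internal multiscale structure of $\overline{\gamma}$ itself, which is what your ``piece removed at step~$\ell$'' language suggests. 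Once the scale-by-scale argument is redirected to the neighbouring contours $\gamma'$ rather than to sub-pieces of $\overline{\gamma}$, your outline coincides with the paper's.
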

We now need to extend this proposition to our contours. To do so, we need some auxiliary lemmas. 

Given $B\subset\Z^d$ and $\sigma\in\Omega$ with $\partial\sigma$ finite, let $\Gamma_{\Int}(\sigma, B)$ be the contours $\gamma^\prime$ with $\Sp(\gamma^\prime) \in \Gamma(\sigma)$ and enclosed by $B$, that is, $\Sp(\gamma^\prime)\subset B$. Define also $\Gamma_{\Ext}(\sigma, B)$ as the contours $\gamma^\prime$ with $\Sp(\gamma^\prime) \in \Gamma(\sigma)$ outside $B$, that is, $\Sp(\gamma^\prime)\subset B^c$.

\begin{lemma}\label{Lemma: Aux_1}
Given $\sigma\in\Omega$ with $\partial \sigma$ finite and a contour $\gamma$ with $\Sp(\gamma)\in \Gamma(\sigma)$ , there is a constant $\kappa^{(1)}_\alpha\coloneqq \kappa^{(1)}_\alpha(\alpha, d)$, such that, for  $B = \Sp(\gamma)$ or $B=\I_-(\gamma)$ we have 

\begin{equation}\label{Eq: Lemma_aux_2}
    \sum_{\substack{x\in B \\ y\in V(\Gamma_{\Ext}(\sigma, B)\setminus\{\gamma\})}} J_{xy} \leq \kappa^{(1)}_\alpha \left[ \frac{|B|}{M^{\alpha - d}}|V(\gamma)|^{\frac{a}{d+1}(d-\alpha)} + \frac{F_B}{M} \right],  
\end{equation}

\end{lemma}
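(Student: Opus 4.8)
The proof rests on two geometric features of the $(M,a)$-partition — the separation condition \textbf{(B)} of Definition~\ref{Def: delta-partiton}, and the ``screening'' fact that the bounded connected components of $\Sp(\gamma')^c$ (that is, $\I(\gamma')$) are surrounded by $\Sp(\gamma')$ — together with the elementary long-range tail estimate $\sum_{z\in\Z^d,\,|z|\ge R}|z|^{-\alpha}\le C_d\,R^{d-\alpha}$, valid for $\alpha>d$ and $R\ge1$. The plan is to split $\Gamma_{\Ext}(\sigma,B)\setminus\{\gamma\}$ into the contours $\gamma'$ with $|V(\gamma')|\ge|V(\gamma)|$ (``large'') and those with $|V(\gamma')|<|V(\gamma)|$ (``small''), and to estimate the two resulting sums separately; the large part produces the first term of \eqref{Eq: Lemma_aux_2} and the small part produces $\kappa^{(1)}_\alpha F_B/M$.

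\textbf{Step 1: a uniform distance bound.} In both admissible cases one has $B\subseteq V(\gamma)$. If $y\in V(\gamma')$, $x\in B$, and $x\notin V(\gamma')$, then every nearest-neighbour path from $x$ to $y$ must meet $\Sp(\gamma')$ (because $\I(\gamma')$ consists of bounded components of $\Sp(\gamma')^c$), so $|x-y|\ge\dist(x,\Sp(\gamma'))\ge\dist(B,\Sp(\gamma'))$; combining with \eqref{B_distance_2} and $B\subseteq V(\gamma)$ gives
\begin{equation*}
|x-y|> M\min\{|V(\gamma)|,|V(\gamma')|\}^{\frac{a}{d+1}}\qquad\text{whenever }\gamma'\neq\gamma .
\end{equation*}
The only leftover configuration is $x\in V(\gamma')$, which by property \textbf{(A1)} forces $V(\gamma)\subsetneq V(\gamma')$, hence $\gamma'$ is large; for such a contour the points of $V(\gamma')$ that violate the estimate are exactly those lying inside $V(\gamma)\supseteq B$, and their interaction with $B$ is charged to the $F_B$ term. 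Disentangling these nested contours (using \textbf{(A1)} and the component structure of $\Sp(\gamma)$) is the fiddliest bookkeeping.

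\textbf{Step 2: the large contours.} Here Step 1 yields $|x-y|>M|V(\gamma)|^{a/(d+1)}$ for every relevant pair, so with $W_L=\bigcup_{|V(\gamma')|\ge|V(\gamma)|}V(\gamma')$,
\begin{align*}
\sum_{x\in B}\ \sum_{y\in W_L}J_{xy}\ \le\ |B|\!\!\sum_{\substack{z\in\Z^d\\|z|>M|V(\gamma)|^{a/(d+1)}}}\!\!\frac{J}{|z|^{\alpha}}\ \le\ C_d\,J\,\frac{|B|}{M^{\alpha-d}}\,|V(\gamma)|^{\frac{a}{d+1}(d-\alpha)},
\end{align*}
which is exactly the first term on the right-hand side of \eqref{Eq: Lemma_aux_2}, since $\delta=d+1$.

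\textbf{Step 3: the small contours, and the main obstacle.} For $\gamma'$ small, Step 1 gives $\dist(B,V(\gamma'))\ge\dist(B,\Sp(\gamma'))>M|V(\gamma')|^{a/(d+1)}\ge M$, so every point of $W_S:=\bigcup_{|V(\gamma')|<|V(\gamma)|}V(\gamma')$ is at distance $>M$ from all of $B$, and moreover $W_S\subseteq B^c$. I would fix $x\in B$, group the small exterior contours into dyadic shells $\{2^k\le\dist(x,\cdot)<2^{k+1}\}$, and bound the number of them in a given shell using that distinct elements of the partition are $M$-separated, hence the total volume of small contours in the $k$-th shell is at most $C_d(2^k/M)^{d}(2^k/M)^{(d+1)/a}$; summing $J2^{-k\alpha}$ against these volumes and then over $k$ (the resulting geometric series being controlled because the choice $a/(d+1)=3/((\alpha-d)\wedge1)$ makes all exponents negative) gains a power of $M^{-1}$. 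The delicate point, which is the heart of the proof, is that the outcome must be compared with $F_B$, \emph{not} with $|B|$: for $x$ near $\fint B$ this is immediate (the $M$-neighbourhood of $\fint B$ inside $B$ has volume $\lesssim M|\fint B|\lesssim M F_B$, while $\alpha>d+1-1$... more precisely $\alpha>2$), while for $x$ deep inside $B$ one uses that $W_S\subseteq B^c$ forces $\dist(x,W_S)\ge\dist(x,B^c)+M$ and that $B^c$ seen from such an $x$ is essentially codimension one, so the extra factor gained over the contribution of $x$ to $F_B$ is $\rho_x^{(d+1)/a-1}M^{-(d+1)/a}\to0$. Assembling Steps 2 and 3 and taking $\kappa^{(1)}_\alpha$ to dominate the constants $C_d$, $J$, and those from the covering argument (of the type in \cite{FFS84,Affonso.2021}) gives \eqref{Eq: Lemma_aux_2}. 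I expect the near-boundary-versus-deep dichotomy in Step 3, i.e.\ charging the small contours to the surface energy $F_B$ uniformly over the possibly irregular geometry of $B$, to be the real work.
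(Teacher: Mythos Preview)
Your decomposition into large and small exterior contours, and your handling of the large ones (Step~2), matches the paper exactly. The divergence is in Step~3, and there your sketch has a genuine gap.

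The paper does \emph{not} use dyadic shells around each $x$ followed by a near-boundary/deep dichotomy. Instead it stratifies the small contours by volume, setting $\Upsilon_{2,m}=\{\gamma'\in\Upsilon_2:|V(\gamma')|=m\}$, and applies a \emph{path trick}: for each $x\in B$ and each $\gamma'\in\Upsilon_{2,m}$, let $y_{\gamma',x}\in\Sp(\gamma')$ be the nearest point to $x$, and take the last $\tfrac{M}{3}m^{a/(d+1)}$ steps of a geodesic from $x$ to $y_{\gamma',x}$. Since $\d(B,\Sp(\gamma'))\ge\d(\Sp(\gamma),\Sp(\gamma'))>Mm^{a/(d+1)}$, this tail lies entirely in $B^c$; since the endpoints $y_{\gamma',x}$ are pairwise $Mm^{a/(d+1)}$-separated, the tails for different $\gamma'\in\Upsilon_{2,m}$ are disjoint. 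Because $J_{xz}\ge J_{x,y_{\gamma',x}}$ for every $z$ on the tail, one obtains
\[
\sum_{\gamma'\in\Upsilon_{2,m}}J_{x,y_{\gamma',x}}\le\frac{3}{Mm^{a/(d+1)}}\sum_{z\in B^c}J_{xz},
\]
and summing over $x\in B$ produces $F_B$ \emph{directly}. The crude bound $\sum_{y\in V(\gamma')}J_{xy}\le m\,J_{x,y_{\gamma',x}}$ then leaves a factor $m^{1-a/(d+1)}$, summable in $m$ since $a/(d+1)>2$, giving the term $\tfrac{3\zeta(a/(d+1)-1)}{M}F_B$.

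Your shell argument instead produces, per point $x$, a quantity of order $M^{-d-(d+1)/a}(\rho_x+M)^{d+(d+1)/a-\alpha}$ with $\rho_x=\d(x,B^c)$, and to turn $\sum_{x\in B}$ of this into $F_B/M$ you invoke two unproved geometric claims: that $|\{x\in B:\rho_x\le M\}|\lesssim M|\fint B|$, and that ``$B^c$ seen from deep $x$ is essentially codimension one'', i.e.\ $\sum_{y\in B^c}|x-y|^{-\alpha}\gtrsim\rho_x^{d-\alpha}$. The first claim is a Minkowski-content bound that is not automatic for irregular $B$; the second can fail outright (take $B$ a large ball with a single interior point removed --- the removed point gives $\rho_x$ small for nearby $x$, but contributes only $\rho_x^{-\alpha}\ll\rho_x^{d-\alpha}$ to the sum). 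Even restricting to $B=\Sp(\gamma)$ or $\I_-(\gamma)$, establishing these uniformly would be substantial extra work. The path trick sidesteps all of this: it never estimates what $B^c$ looks like from $x$, it routes each contribution through \emph{actual} points of $B^c$ along the geodesic, which is why $F_B$ appears for free.
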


\begin{proof}

Fixed $\sigma$ and $B$, we drop them from the notation, so $\Gamma_{\Ext} \coloneqq \Gamma_{\Ext}(\sigma, B)$.  Splitting $\Gamma_{\Ext}\setminus\{\gamma\}$ into $\Upsilon_1 \coloneqq \{\gamma^\prime \in \Gamma_{\Ext}\setminus\{\gamma\} : |V(\gamma^\prime)| \geq |V(\gamma)|\}$ and $\Upsilon_2 = \Gamma_{\Ext} \setminus (\Upsilon_1\cup \gamma)$ we get 
\begin{equation*}
     \sum_{\substack{x\in B \\ y\in V(\Gamma_{\Ext}\setminus\{\gamma\})}} J_{xy} \leq  \sum_{\substack{x\in B \\ y\in V(\Upsilon_1)}} J_{xy} +  \sum_{\substack{x\in B \\ y\in V(\Upsilon_2)}} J_{xy}.
\end{equation*}

    For any  $\gamma^\prime \in \Upsilon_1$, $\d(\gamma,\gamma^\prime)> M |V(\gamma)|^{\frac{a}{d+1}}$. Moreover, $V(\gamma^\prime)\subset V(\gamma)^c$, otherwise we would have a region contained in $V(\gamma)$ with volume bigger than $|V(\gamma)|$. Hence, for any $x\in B$ and $y\in V(\gamma^\prime)$, with $\gamma^\prime\in \Upsilon_1$, 
    \begin{equation*}
        d(x,y)\geq d(\gamma,\gamma^\prime)\geq M |V(\gamma)|^{\frac{a}{d+1}}.
    \end{equation*}
    We can then bound
\begin{equation}\label{Eq: Lemma_aux_2_1}
    \sum_{\substack{x\in B \\ y\in V(\Upsilon_1)}} J_{xy} \leq \sum_{\substack{x\in B \\ y : |y-x| > R}} J_{xy} = |B|\sum_{y : |y|>R}J_{0y},
\end{equation}
with $R\coloneqq M |V(\gamma)|^{\frac{a}{d+1}}$. 

Defining $s_d(n) \coloneqq |\{x\in \Z^d : |x|=n \}|$, it is known that $s_d(n)\leq 2^{2d - 1}e^{d-1}n^{d-1}$, see for example \cite[Lemma 4.2]{Affonso.2021}. Using the integral bound of the sum, we can show that

    \begin{equation}\label{Eq: Interaction_outside_ball}
 \begin{split}
 \sum_{y : |y|>R}J_{0y} &= J \sum_{n>R} \frac{s_d(n)}{n^\alpha} \leq J2^{2d - 1}e^{d-1} \sum_{n>R} \frac{1}{n^{\alpha-d+1}} \\
 &\leq J2^{2d - 1}e^{d-1} \int_{\floor{R}}^{\infty}\frac{1}{x^{\alpha - d +1}}dx \leq \frac{J2^{2d - 1}e^{d-1}}{(\alpha - d)} \floor{R}^{d-\alpha}\leq \frac{J2^{d - 1 + \alpha}e^{d-1}}{(\alpha - d)} {R}^{d-\alpha}.
 \end{split}
\end{equation}

Together with \eqref{Eq: Lemma_aux_2_1}, this yields 
    \begin{align}\label{Eq: Lemma_aux_2.Part1}
    \sum_{\substack{x\in B \\ y\in V(\Upsilon_1)}} J_{xy} &\leq |B| \frac{J2^{d-1+\alpha}e^{d-1}}{(\alpha - d)}\left[ M |V(\gamma)|^{\frac{a}{d+1}} \right]^{d-\alpha} \nonumber \\
    &\leq \frac{J2^{d-1 + \alpha }e^{d-1}}{(\alpha - d)}\frac{|B|}{M^{\alpha - d}}|V(\gamma)|^{\frac{a}{d+1}(d-\alpha)}.
\end{align}

To bound the other term, split $\Upsilon_2$ into layers $\Upsilon_{2,m} \coloneqq \{ \gamma^\prime \in \Upsilon_2 : |V(\gamma^\prime)|=m\}$, for $1\leq m\leq |V(\gamma)|-1$. Denoting  $y_{\gamma^\prime,x} \in \Sp(\gamma^\prime)$ the point satisfying $\d(x, \Sp(\gamma^\prime)) = \d(x, y_{\gamma^\prime, x})$, we can bound 

\begin{align*}
   \sum_{\substack{x\in B \\ y\in V(\Upsilon_{2,m})}} J_{xy} &\leq  \sum_{\substack{x\in B \\ \gamma^\prime \in \Upsilon_{2,m}}} |V(\gamma^\prime)| J_{x,y_{\gamma^\prime, x}} \leq m \sum_{\substack{x\in B \\ \gamma^\prime \in \Upsilon_{2,m}}} J_{x,y_{\gamma^\prime, x}}.
\end{align*}

    Consider the minimal paths $\lambda_{x,y_{\gamma^\prime, x}}$ that connects $x$ and $y_{\gamma^\prime, x}$. Denoting $\lambda_{x,y_{\gamma^\prime, x}}^\prime$ the path restricted to the last $\frac{M}{3}m^{\frac{a}{d+1}}$ steps, we have
    \begin{equation}\label{Eq: Lemma_aux_2.Tese_1}
         \frac{M}{3}m^{\frac{a}{d+1}}J_{x,y_{\gamma^\prime, x}}\leq \sum_{y\in \lambda_{x,y_{\gamma^\prime, x}}^\prime} J_{xy}.
    \end{equation}
    Since $\d(x,y_{\gamma^\prime, x}) \geq \d(\gamma, \gamma^\prime)>Mm^{\frac{a}{d+1}}$, and $\d(y_{\gamma^\prime, x}, y_{\gamma^{\prime\prime}, x}) > \d(\gamma^\prime, \gamma^{\prime\prime})>Mm^{\frac{a}{d+1}}$ for any $\gamma^\prime, \gamma^{\prime\prime}\in\Upsilon_{2,m}$, the balls with radius $\frac{M}{3}m^{\frac{a}{d+1}}$ centered in $y_{\gamma^\prime, x}$, for all $\gamma^\prime\in \Upsilon_{2,m}$ are disjoint and are contained in $B^c$. Hence, the paths $\lambda_{x,y_{\gamma^\prime, x}}^\prime$ and $\lambda_{x,y_{\gamma^{\prime\prime}, x}}^\prime$ are disjoint, for all $\gamma^\prime, \gamma^{\prime\prime}\in\Upsilon_{2,m}$, and are in $B^c$. Therefore, summing equation \eqref{Eq: Lemma_aux_2.Tese_1} over all points contours in $\Upsilon_{2,m}$ we get
    \begin{equation*}
       \sum_{\gamma^\prime\in\Upsilon_{2,m}}J_{x,y_{\gamma^\prime, x}}\leq  \frac{3}{Mm^{\frac{a}{d+1}}}\sum_{\gamma^\prime\in\Upsilon_{2,m}}\sum_{y\in \lambda_{x,y_{\gamma^\prime, x}}^\prime} J_{xy} \leq \frac{3}{Mm^{\frac{a}{d+1}}}\sum_{y\in B^c}J_{xy},
    \end{equation*}
    and we recuperate the desired inequality by summing over $x\in B$ we conclude that 
    \begin{equation*}
    \sum_{\substack{x\in B \\ \gamma^\prime \in\Upsilon_{2,m}}} J_{x,y_{\gamma^\prime, x}} \leq \frac{3}{Mm^{\frac{a}{d+1}}}F_B.
    \end{equation*}

That gives us
\begin{align}\label{Eq: Lemma_aux_2.Part2}
     \sum_{\substack{x\in B \\ y\in V(\Upsilon_{2})}} J_{xy} \leq \sum_{m=1}^{|V(\gamma)|-1} \frac{3}{Mm^{\frac{a}{d+1}-1}}F_B\leq \frac{3\zeta({\frac{a}{d+1}-1})}{M}F_B,
\end{align}
what concludes the proof for $\kappa_\alpha^{(1)}\coloneqq \frac{J2^{d-1 + \alpha}e^{d-1}}{(\alpha - d)} + {3\zeta({\frac{a}{d+1}-1})}$.
\end{proof}

\begin{corollary}\label{Cor: Corrolary_of_Lemma_aux}
For any configuration $\sigma\in\Omega$ and $\gamma\in\Gamma(\sigma)$, 

\begin{equation}
  \sum_{\substack{x\in \Sp(\gamma) \\ y\in V(\Gamma(\sigma)\setminus\{\gamma\})}} J_{xy} \leq  2\kappa^{(1)}_\alpha\frac{F_{\Sp(\gamma)}}{M^{(\alpha - d)\wedge 1}},  \\
\end{equation}
 
\begin{equation}
    \sum_{\substack{x\in \I_-(\gamma) \\ y\in V(\Gamma_{\Ext}(\sigma, \I_-(\gamma))\setminus\{\gamma\})}} J_{xy}  \leq 2\kappa^{(1)}_\alpha\frac{F_{\I_-(\gamma)}}{M^{(\alpha - d)\wedge 1}},
\end{equation}
and 
\begin{equation}
    \sum_{\substack{x\in \I_-(\gamma)^c \\ y\in V(\Gamma_{\Int}(\sigma, \I_-(\gamma)))}} J_{xy} \leq  \kappa^{(1)}_\alpha\frac{F_{\I_-(\gamma)}}{M}
\end{equation}
\end{corollary}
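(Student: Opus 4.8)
The plan is to obtain all three bounds from Lemma \ref{Lemma: Aux_1} and a re-run of its proof, using two elementary observations about the parameters fixed in Remark \ref{Rmk: choice_of_a_and_delta}: first, $1+\tfrac{a}{d+1}(d-\alpha)=1-\tfrac{3(\alpha-d)}{(\alpha-d)\wedge 1}\le -2<0$, so $|V(\gamma)|^{1+\frac{a}{d+1}(d-\alpha)}\le 1$ for every contour; second, $\tfrac{a}{d+1}-1=\tfrac{3}{(\alpha-d)\wedge1}-1\ge 2>1$, so $\zeta\!\big(\tfrac{a}{d+1}-1\big)<\infty$; and third, since $M\ge 1$ one has $M^{\alpha-d}\ge M^{(\alpha-d)\wedge1}$ and $M\ge M^{(\alpha-d)\wedge1}$. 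Throughout, if $B=\I_-(\gamma)=\emptyset$ both sides of the relevant inequality vanish, so we may assume $B\neq\emptyset$, in which case $F_B\ge J|\partial B|\ge 1$ (with $J$ normalized, as elsewhere in this chapter; otherwise the factor $J^{-1}$ is absorbed into the constant).

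For the first inequality, since $\Gamma(\sigma)$ is a partition of $\partial\sigma$, every $\gamma'\in\Gamma(\sigma)$ with $\gamma'\neq\gamma$ has $\Sp(\gamma')\subset\Sp(\gamma)^c$, hence $\Gamma_{\Ext}(\sigma,\Sp(\gamma))\setminus\{\gamma\}=\Gamma(\sigma)\setminus\{\gamma\}$; applying Lemma \ref{Lemma: Aux_1} with $B=\Sp(\gamma)$ bounds the sum by $\kappa_\alpha^{(1)}\big[\tfrac{|\gamma|}{M^{\alpha-d}}|V(\gamma)|^{\frac{a}{d+1}(d-\alpha)}+\tfrac{F_{\Sp(\gamma)}}{M}\big]$. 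Because $\Sp(\gamma)\subseteq V(\gamma)$ one has $|\gamma|\le|V(\gamma)|$, so the first term is at most $\tfrac1{M^{\alpha-d}}|V(\gamma)|^{1+\frac{a}{d+1}(d-\alpha)}\le \tfrac1{M^{(\alpha-d)\wedge1}}\le \tfrac{F_{\Sp(\gamma)}}{M^{(\alpha-d)\wedge1}}$, and the second term is $\le \tfrac{F_{\Sp(\gamma)}}{M^{(\alpha-d)\wedge1}}$; adding gives the claimed $2\kappa_\alpha^{(1)}F_{\Sp(\gamma)}/M^{(\alpha-d)\wedge1}$. The second inequality is proved the same way: here $\gamma\in\Gamma_{\Ext}(\sigma,\I_-(\gamma))$ since $\Sp(\gamma)\cap\I(\gamma)=\emptyset$, so Lemma \ref{Lemma: Aux_1} applies verbatim with $B=\I_-(\gamma)$, and the first term is absorbed using $\I_-(\gamma)\subseteq V(\gamma)$ exactly as above.

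The third inequality does not follow formally from Lemma \ref{Lemma: Aux_1}; the plan is to reproduce the ``$\Upsilon_2$'' step of its proof with the roles of inside and outside swapped via $J_{xy}=J_{yx}$. The geometric inputs are: (i) for $\gamma'\in\Gamma_{\Int}(\sigma,\I_-(\gamma))$ one has $V(\gamma')\subseteq\I_-(\gamma)$ (by property \textbf{(A1)} applied to the nesting of $\Sp(\gamma')\subset\I_-(\gamma)$), hence $|V(\gamma')|\le|V(\gamma)|$ and condition \textbf{(B)} gives $\d(\Sp(\gamma),\Sp(\gamma'))>M|V(\gamma')|^{\frac{a}{d+1}}$; (ii) since $\fext\I_-(\gamma)\subseteq\Sp(\gamma)$, any path from $V(\gamma')$ to $\I_-(\gamma)^c$ first crosses $\Sp(\gamma')$ and then $\Sp(\gamma)$, so $\d\big(V(\gamma'),\I_-(\gamma)^c\big)>M|V(\gamma')|^{\frac{a}{d+1}}$; (iii) two distinct $\gamma',\gamma''\in\Gamma_{\Int}(\sigma,\I_-(\gamma))$ with $|V(\gamma')|=|V(\gamma'')|=m$ cannot have nested volumes, so $\d\big(V(\gamma'),V(\gamma'')\big)>Mm^{\frac{a}{d+1}}$. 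Splitting $\Gamma_{\Int}(\sigma,\I_-(\gamma))$ into layers $\Gamma_{\Int,m}=\{\gamma':|V(\gamma')|=m\}$, bounding $\sum_{y\in V(\gamma')}J_{xy}\le m\,J_{x,y_{\gamma',x}}$ where $y_{\gamma',x}\in V(\gamma')$ is closest to $x$, and packing the last $\tfrac M3 m^{\frac{a}{d+1}}$ vertices of the minimal $x$–$y_{\gamma',x}$ path (which by (ii)–(iii) are pairwise disjoint and contained in $\I_-(\gamma)$) yields $\sum_{\gamma'\in\Gamma_{\Int,m}}J_{x,y_{\gamma',x}}\le \tfrac{3}{Mm^{a/(d+1)}}\sum_{y\in\I_-(\gamma)}J_{xy}$; summing over $x\in\I_-(\gamma)^c$ and then over $m\ge1$ gives $\tfrac{3\zeta(a/(d+1)-1)}{M}F_{\I_-(\gamma)}\le \kappa_\alpha^{(1)}\tfrac{F_{\I_-(\gamma)}}{M}$, since $\kappa_\alpha^{(1)}\ge 3\zeta(a/(d+1)-1)$. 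The main obstacle is precisely this third part: unlike the first two it is not a black-box application of Lemma \ref{Lemma: Aux_1}, and the delicate points are verifying (i)–(iii) — in particular that the packed path-segments remain inside $\I_-(\gamma)$ and that equal-volume interior contours have disjoint volumes — which must be checked against the label/nesting lemmas of \cite{Affonso.2021}.
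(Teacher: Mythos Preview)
Your proposal is correct and follows essentially the same approach as the paper: the first two inequalities are direct applications of Lemma~\ref{Lemma: Aux_1} with $B=\Sp(\gamma)$ and $B=\I_-(\gamma)$ respectively (using $|B|\le|V(\gamma)|$ and $\tfrac{a}{d+1}(\alpha-d)\ge 1$ to absorb the first bracket), and the third re-runs the $\Upsilon_2$ step of that lemma's proof with $B=\I_-(\gamma)^c$, after observing that every $\gamma'\in\Gamma_{\Int}(\sigma,\I_-(\gamma))$ has $|V(\gamma')|<|V(\gamma)|$ so $\Upsilon_1=\emptyset$. Your treatment is somewhat more explicit than the paper's about the geometric verifications (your points (i)--(iii)) and about needing $F_B\ge 1$ to absorb the constant term; the paper simply asserts that ``the proof works in similar steps'' and that the relevant ratio is $\le 1$, leaving these checks implicit.
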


\begin{proof}
    The first inequality is a direct application of the Lemma \ref{Lemma: Aux_1} for $B=\Sp(\gamma)$, once we note that $\Gamma_{\Ext}(\sigma, B) = \Gamma(\sigma)\setminus\{\gamma\}$ and, our choice of $a$,  $\frac{|\gamma|}{|V(\gamma)|^{\frac{a}{d+1}(\alpha-d)} } \leq \frac{|\gamma|}{|V(\gamma)|}\leq 1$. The second inequality is likewise a direct application of Lemma \ref{Lemma: Aux_1} for $B=V(\gamma)$, since  $\frac{|V(\gamma)|}{|V(\gamma)|^{\frac{a}{d+1}(\alpha-d)} } \leq \frac{|V(\gamma)|}{|V(\gamma)|}\leq 1$.
For the last inequality, we cannot apply Lemma \ref{Lemma: Aux_1} directly. However, the proof works in the similar steps when we take $B = \I_-(\gamma)^c$. Moreover, notice that $V(\Gamma_{\Int}(\sigma, \I_-(\gamma))) = V(\Gamma_{\Ext}(\sigma, \I_-(\gamma)^c))$ and, for all $\gamma^\prime\in \Gamma_{\Int}(\sigma, \I_-(\gamma))$, $|V(\gamma^\prime)| < |V(\gamma)|$.  In the notation of the proof of Lemma \ref{Lemma: Aux_1}, this means that $\Upsilon_{2} = \Gamma_{\Ext}(\sigma, \I_-(\gamma)^c)$, so equation \eqref{Eq: Lemma_aux_2.Part2} yields
\begin{align*}
      \sum_{\substack{x\in \I_-(\gamma)^c \\ y\in V(\Gamma_{\Ext}(\sigma, \I_-(\gamma)^c))}} J_{xy} \leq \zeta\Big({\frac{a}{d+1}-1}\Big)\frac{F_{\I_-(\gamma)^c}}{M}.
\end{align*}
Since $F_{\I_-(\gamma)^c} = F_{\I_-(\gamma)}$ and $\zeta({\frac{a}{d+1}-1})\leq \kappa^{(1)}_\alpha  $, we get the desired bound.
\end{proof}

We are ready to prove the main proposition of this section:

\begin{proposition}\label{Prop: Cost_erasing_contour}
	For $M$ large enough, there exists a constant $c_2\coloneqq c_2(\alpha,d)>0$, such that for  any $\Lambda \Subset \Z^d$, $\gamma\in \mathcal{E}^+_\Lambda$, and $\sigma \in \Omega(\gamma)$ it holds that
	\be
	H_{\Lambda; 0}^+(\sigma)- H_{\Lambda;0}^+(\tau_{\gamma}(\sigma))\geq c_2\left(|\gamma| + F_{\I_-(\gamma)} + F_{\Sp(\gamma)} \right).
	\ee
\end{proposition}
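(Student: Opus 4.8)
The plan is to decompose the energy difference $H_{\Lambda;0}^+(\sigma) - H_{\Lambda;0}^+(\tau_\gamma(\sigma))$ into a ``main term'' coming from the interaction across the support $\Sp(\gamma)$ together with the bulk interaction across $\I_-(\gamma)$, and a collection of ``error terms'' coming from the interactions between $\gamma$ and all the other contours of $\sigma$. The key observation is that $\tau_\gamma$ only flips spins in $\I_-(\gamma)\cup\Sp(\gamma)$, so a standard rewriting of the Hamiltonian (as in the proof of Proposition \ref{Prop: Erasing_contours_alpha>d+1}, writing the interaction as a sum over pairs and tracking which bonds change sign) gives
\[
H_{\Lambda;0}^+(\sigma) - H_{\Lambda;0}^+(\tau_\gamma(\sigma)) = 2\sum_{\substack{x\in \I_-(\gamma)\\ y\in\I_-(\gamma)^c}} J_{xy}\sigma_x\sigma_y + (\text{terms involving }\Sp(\gamma)).
\]
First I would carefully organize these contributions. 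The contribution of the bonds leaving $\Sp(\gamma)$ splits into bonds to $+$-correct exterior points (which contribute favorably, being $\ge$ something proportional to $|\gamma|$ after we isolate nearest-neighbor bonds as in Proposition \ref{Prop: Erasing_contours_alpha>d+1}) and bonds to other contours' volumes; similarly the $\I_-(\gamma)$ bonds split according to whether the far endpoint lies in $V(\Gamma_{\Ext})$, in $V(\Gamma_{\Int})$, or in a correct region. The favorable part should produce, as in Proposition \ref{prop2}, a term bounded below by $c|\gamma| + c' F_{\I_-(\gamma)} + c'' F_{\Sp(\gamma)}$ — here the $F_{\I_-(\gamma)}$ and $F_{\Sp(\gamma)}$ contributions come precisely from the long-range bonds crossing these sets to $+$-correct regions, whose spin product is favorable.

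The error terms are exactly the quantities estimated in Corollary \ref{Cor: Corrolary_of_Lemma_aux}: the interaction between $\Sp(\gamma)$ and $V(\Gamma(\sigma)\setminus\{\gamma\})$ is $\le 2\kappa^{(1)}_\alpha F_{\Sp(\gamma)}/M^{(\alpha-d)\wedge 1}$, the interaction between $\I_-(\gamma)$ and $V(\Gamma_{\Ext}(\sigma,\I_-(\gamma))\setminus\{\gamma\})$ is $\le 2\kappa^{(1)}_\alpha F_{\I_-(\gamma)}/M^{(\alpha-d)\wedge 1}$, and the interaction between $\I_-(\gamma)^c$ and $V(\Gamma_{\Int}(\sigma,\I_-(\gamma)))$ is $\le \kappa^{(1)}_\alpha F_{\I_-(\gamma)}/M$. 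Since each error term is bounded by a constant times $F_{\I_-(\gamma)}$ or $F_{\Sp(\gamma)}$ divided by a positive power of $M$, choosing $M$ large enough makes the total error strictly smaller than half of the favorable lower bound, which yields the claimed inequality with a new constant $c_2 = c_2(\alpha,d)>0$. I would also need the elementary fact that $F_{\I_-(\gamma)}$ and $F_{\Sp(\gamma)}$ are finite (they are, since $\alpha>d$ and the sets are finite), and that every pair $(x,y)$ with $x\in\Sp(\gamma)\cup\I_-(\gamma)$ and $y$ in some other contour's volume is counted in one of the three Corollary \ref{Cor: Corrolary_of_Lemma_aux} sums — this is a bookkeeping point requiring one to check that $\gamma$ being external means no $V(\gamma')$ with $\gamma'\neq\gamma$ contains $V(\gamma)$, so the partition of the far endpoints into $\Gamma_{\Ext}$ and $\Gamma_{\Int}$ relative to $\I_-(\gamma)$ is exhaustive.

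The main obstacle I anticipate is not any single estimate but rather the careful sign-and-region bookkeeping: one must verify that the ``favorable'' long-range bonds crossing $\Sp(\gamma)$ and $\I_-(\gamma)$ genuinely have $\sigma_x\sigma_y$ of the right sign (because $\gamma$ is external with label $+$ and by the structure of the interiors $\I_\pm$), and simultaneously that every ``unfavorable'' bond is absorbed into one of the three Corollary \ref{Cor: Corrolary_of_Lemma_aux} terms rather than being double-counted or missed. In particular the bonds crossing $\I_-(\gamma)$ whose far endpoint lies inside $\I_-(\gamma)$ but outside $\Sp(\gamma)$, i.e.\ in the $+$-labelled interior components or in holes, need separate attention; these should either cancel or be controlled by $F_{\Sp(\gamma)}$ via $\Gamma_{\Int}$. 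Once the decomposition is set up cleanly, the conclusion follows by the same ``take $M$ large'' mechanism already used for Proposition \ref{prop2}, so the novelty here is purely in extending that argument to control the two extra quantities $F_{\I_-(\gamma)}$ and $F_{\Sp(\gamma)}$ on the right-hand side, which are needed later to absorb the random-field fluctuations in the coarse-graining step.
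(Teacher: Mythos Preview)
Your proposal is correct and follows essentially the same route as the paper: decompose the Hamiltonian difference according to whether bonds touch $\Sp(\gamma)$, $\I_-(\gamma)$, or $B(\gamma)=\I_+(\gamma)\cup V(\gamma)^c$, isolate favorable terms giving $|\gamma|$, $F_{\Sp(\gamma)}$, $F_{\I_-(\gamma)}$, bound all cross-contour error terms via Corollary~\ref{Cor: Corrolary_of_Lemma_aux}, and absorb those errors by taking $M$ large. One point where the paper is slightly more delicate than your sketch suggests: the $F_{\I_-(\gamma)}$ contribution does not appear directly as a manifestly favorable term but is extracted by retaining a small constant multiple of the $\I_-(\gamma)\times B(\gamma)$ interaction (after subtracting the bonds into $\Sp(\gamma)$, which costs an $F_{\Sp(\gamma)}$ that is later recouped from the $\mathbbm{1}_{\{\sigma_x\neq\sigma_y\}}$ terms on $\Sp(\gamma)$ via a pointwise triangle-inequality lower bound on $J_{xy}$); your anticipated ``sign-and-region bookkeeping'' is exactly where this shows up.
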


\begin{proof}
    Fix some $\sigma \in \Omega(\gamma)$. We will denote $\tau_\gamma(\sigma)\coloneqq \tau$ and $\Gamma(\sigma) \coloneqq \Gamma$ throughout this proposition. The difference between the Hamiltonians is
  \begin{equation}\label{prop2:eq1}
  \begin{split}
  H_{\Lambda}^+(\sigma)-H_{\Lambda}^+(\tau) &= \sum_{\{x,y\}\subset \Lambda}J_{xy}(\tau_x\tau_y-\sigma_x\sigma_y) +\sum_{\substack{x\in \Lambda \\ y \in \Lambda^c}}J_{xy}(\tau_x - \sigma_x) \\
  &= \sum_{\{x,y\}\subset V(\Gamma)}J_{xy}(\tau_x\tau_y-\sigma_x\sigma_y) +\sum_{\substack{x\in V(\Gamma) \\ y \in \Lambda\setminus V(\Gamma)}}J_{xy}(\tau_x - \sigma_x) +\sum_{\substack{x\in V(\Gamma) \\ y \in \Lambda^c}}J_{xy}(\tau_x - \sigma_x) \\
  &= \sum_{\{x,y\}\subset V(\Gamma)}J_{xy}(\tau_x\tau_y - \sigma_x\sigma_y) + \sum_{\substack{x \in V(\Gamma)\\ y \in V(\Gamma)^c}}J_{xy}(\tau_x - \sigma_x),
  \end{split}
  \end{equation}
  where the second equality is due the fact that outside $V(\Gamma)$ the configurations $\sigma$ and $\tau$ are equal to $+1$.  Since $\tau_x=\sigma_x$ for $x\in V(\Gamma\setminus\{\gamma\})$ we have
  \begin{equation*}
  \begin{split}
  \sum_{\{x,y\}\subset V(\Gamma)}J_{xy}(\tau_x\tau_y - \sigma_x\sigma_y) + \sum_{\substack{x \in V(\Gamma)\\ y \in V(\Gamma)^c}}J_{xy}(\sigma_x - \tau_x)= \sum_{\{x,y\}\subset V(\gamma)}J_{xy}(\tau_x\tau_y-\sigma_x\sigma_y)+\sum_{\substack{x\in V(\gamma) \\ y \in V(\gamma)^c}}J_{xy}(\tau_x\sigma_y-\sigma_x\sigma_y).
  \end{split}
  \end{equation*}
  We focus now on the sum involving the terms $\{x,y\} \subset V(\gamma)$. We can split it accordingly with $V(\gamma)= \Sp(\gamma) \cup \I(\gamma)$. Then,
  \begin{equation*}
  \begin{split}
  \sum_{\{x,y\}\subset V(\gamma)}J_{xy}(\tau_x\tau_y-\sigma_x\sigma_y) &= \hspace{-0.3cm}\sum_{\{x,y\}\subset \Sp(\gamma)}\hspace{-0.3cm}J_{xy}(\tau_x\tau_y-\sigma_x\sigma_y)+\hspace{-0.3cm}\sum_{\{x,y\}\subset \I(\gamma)}\hspace{-0.3cm}J_{xy}(\tau_x\tau_y-\sigma_x\sigma_y)+\sum_{\substack{x\in \Sp(\gamma) \\ y \in \I(\gamma)}}\hspace{-0.1cm}J_{xy}(\tau_x\tau_y-\sigma_x\sigma_y) \\
  &=\hspace{-0.3cm}\sum_{\{x,y\}\subset \Sp(\gamma)}\hspace{-0.3cm}J_{xy}(1-\sigma_x\sigma_y)-2\sum_{\substack{x\in \I_-(\gamma) \\ y \in \I_+(\gamma)}}J_{xy}\sigma_x\sigma_y +\sum_{\substack{x\in \Sp(\gamma) \\ y \in \I(\gamma)}}J_{xy}((-1)^{\mathbbm{1}_{\I_-(\gamma)}(y)}\sigma_y-\sigma_x\sigma_y),
  \end{split}
  \end{equation*}
  where for the second equality we used the definition of the map $\tau_\gamma$ and $\mathbbm{1}_{\I_-(\gamma)}(y)= 1$ if $y \in \I_-(\gamma)$ and $\mathbbm{1}_{\I_-(\gamma)}(y)=0$ if $y \in \I_+(\gamma)$. For the same reason, we have
  \[
  \sum_{\substack{x\in V(\gamma) \\ y \in V(\gamma)^c}}J_{xy}(\tau_x\sigma_y-\sigma_x\sigma_y) = \sum_{\substack{x\in \Sp(\gamma) \\ y \in V(\gamma)^c}}J_{xy}(\sigma_y-\sigma_x\sigma_y)-2\sum_{\substack{x\in \I_-(\gamma) \\ y \in V(\gamma)^c}}J_{xy}\sigma_x\sigma_y
  \]
  Putting everything together and using that $\pm\sigma_y - \sigma_x\sigma_y = 2\mathbbm{1}_{\{\sigma_x\neq \sigma_y\}}-2\mathbbm{1}_{\{\sigma_y=\mp 1\}}$ we get
		\begin{align}\label{prop2:eq2}
			H_{\Lambda}^+(\sigma)-H_{\Lambda}^+(\tau) &= \sum_{\substack{x \in \Sp(\gamma)\\ y \in \Z^d}}J_{xy}\mathbbm{1}_{\{\sigma_x \neq \sigma_y\}}+\sum_{\substack{x \in \Sp(\gamma)\\ y \in \Sp(\gamma)^c}}J_{xy}\mathbbm{1}_{\{\sigma_x \neq \sigma_y\}}-2\sum_{\substack{x \in \I_-(\gamma) \\ y \in B(\gamma)}}J_{xy}\sigma_x\sigma_y \nonumber\\
			&- 2\sum_{\substack{x \in \Sp(\gamma) \\ y \in B(\gamma)}}J_{xy}\mathbbm{1}_{\{\sigma_y = -1\}}-2\sum_{\substack{x \in \Sp(\gamma) \\ y \in \I_-(\gamma)}}J_{xy}\mathbbm{1}_{\{\sigma_y = +1\}},
		\end{align}
  with $B(\gamma)  \coloneqq \I_+(\gamma)\cup V(\gamma)^c$.

We first analyze the last two negative terms. It holds that 
\begin{equation*}
    \sum_{\substack{x \in \Sp(\gamma) \\ y \in B(\gamma)}} J_{xy}\mathbbm{1}_{ \{ \sigma_y = -1 \}} + \sum_{\substack{x \in \Sp(\gamma) \\ y \in \I_-(\gamma)}} J_{xy}\mathbbm{1}_{ \{ \sigma_y = +1\}} \leq \sum_{\substack{x \in \Sp(\gamma) \\ y \in V(\Gamma(\sigma)\setminus\{\gamma\})}} J_{xy},
\end{equation*}
since the characteristic function can only be non-zero in the volume of other contours. By Corollary \ref{Cor: Corrolary_of_Lemma_aux},
\begin{equation}\label{Eq: Aux_1_Diferenca_de_Hamiltonianos}
    \sum_{\substack{x \in \Sp(\gamma) \\ y \in V(\Gamma(\sigma)\setminus\{\gamma\})}} J_{xy}  \leq 2{\kappa^{(1)}_\alpha}\frac{F_{\Sp(\gamma)}}{M^{(\alpha - d)\wedge 1}}.
\end{equation}

    For the negative term left, taking $\Gamma^\prime$ the contours inside $\I_-(\gamma)$ and $\Gamma^{\prime\prime} = \Gamma(\sigma)\setminus {(\Gamma^\prime\cup\gamma)}$ we can write
    \begin{align*}
    \sum_{\substack{x \in \I_-(\gamma) \\ y \in B(\gamma)}} J_{xy}\sigma_x\sigma_y &= \sum_{\substack{x \in V(\Gamma^{\prime}) \\ y \in  V(\Gamma^{\prime\prime}) }} J_{xy}\sigma_x\sigma_y + \sum_{\substack{x \in V(\Gamma^{\prime}) \\ y \in  B(\gamma)\setminus V(\Gamma^{\prime\prime}) }} J_{xy}\sigma_x \nonumber  \\
    &  - \sum_{\substack{x \in \I_-(\gamma)\setminus V(\Gamma^{\prime}) \\ y \in  V(\Gamma^{\prime\prime}) }} J_{xy}\sigma_y - \sum_{\substack{x \in \I_-(\gamma)\setminus V(\Gamma^{\prime}) \\ y \in  B(\gamma)\setminus V(\Gamma^{\prime\prime}) }} J_{xy}.
\end{align*}
We can write the first term as
\begin{equation*}
    \sum_{\substack{x \in V(\Gamma^{\prime}) \\ y \in  V(\Gamma^{\prime\prime}) }} J_{xy}\sigma_x\sigma_y = \sum_{\substack{x \in V(\Gamma^{\prime}) \\ y \in  V(\Gamma^{\prime\prime}) }} J_{xy} - \sum_{\substack{x \in V(\Gamma^{\prime}) \\ y \in  V(\Gamma^{\prime\prime}) }} 2J_{xy}\mathbbm{1}_{\{\sigma_x\neq \sigma_y\}}. 
\end{equation*}
Similarly, we have 
\begin{multline*}
    \sum_{\substack{x \in V(\Gamma^{\prime}) \\ y \in  B(\gamma)\setminus V(\Gamma^{\prime\prime}) }} J_{xy}\sigma_x \nonumber  - \sum_{\substack{x \in \I_-(\gamma)\setminus V(\Gamma^{\prime}) \\ y \in  V(\Gamma^{\prime\prime}) }} J_{xy}\sigma_y =  \sum_{\substack{x \in V(\Gamma^{\prime}) \\ y \in  B(\gamma)\setminus V(\Gamma^{\prime\prime}) }} 2J_{xy}\mathbbm{1}_{\{\sigma_x=+1\}} -  \sum_{\substack{x \in V(\Gamma^{\prime}) \\ y \in  B(\gamma)\setminus V(\Gamma^{\prime\prime}) }}J_{xy}\\
     + \sum_{\substack{x \in \I_-(\gamma)\setminus V(\Gamma^{\prime}) \\ y \in  V(\Gamma^{\prime\prime}) }} 2J_{xy}\mathbbm{1}_{\{\sigma_y=-1\}} - \sum_{\substack{x \in \I_-(\gamma)\setminus V(\Gamma^{\prime}) \\ y \in  V(\Gamma^{\prime\prime}) }}  J_{xy}.
\end{multline*}
Putting the equations together we have
\begin{multline}\label{Eq: Interaction_between_interior_and_B}
     \sum_{\substack{x \in \I_-(\gamma) \\ y \in B(\gamma)}} J_{xy}\sigma_x\sigma_y = \sum_{\substack{x \in V(\Gamma^{\prime}) \\ y \in  V(\Gamma^{\prime\prime}) }} J_{xy} + \sum_{\substack{x \in \I_-(\gamma)\setminus V(\Gamma^{\prime}) \\ y \in  V(\Gamma^{\prime\prime}) }} 2J_{xy}\mathbbm{1}_{\{\sigma_y=-1\}} + \sum_{\substack{x \in V(\Gamma^{\prime}) \\ y \in  B(\gamma)\setminus V(\Gamma^{\prime\prime}) }} 2J_{xy}\mathbbm{1}_{\{\sigma_x=+1\}} \\
       - \sum_{\substack{x \in \I_-(\gamma)\setminus V(\Gamma^{\prime}) \\ y \in  V(\Gamma^{\prime\prime}) }}  J_{xy}  - \sum_{\substack{x \in V(\Gamma^{\prime}) \\ y \in  V(\Gamma^{\prime\prime}) }} 2J_{xy}\mathbbm{1}_{\{\sigma_x\neq \sigma_y\}} -  \sum_{\substack{x \in \I_-(\gamma) \\ y \in  B(\gamma)\setminus V(\Gamma^{\prime\prime}) }}J_{xy}.
\end{multline}

We can bound the first two terms by
\begin{equation}\label{Eq: Aux_1_Interaction_between_interior_and_B}
 \sum_{\substack{x \in V(\Gamma^{\prime}) \\ y \in  V(\Gamma^{\prime\prime}) }} J_{xy} +  \sum_{\substack{x \in \I_-(\gamma)\setminus V(\Gamma^{\prime}) \\ y \in  V(\Gamma^{\prime\prime}) }} 2J_{xy}\mathbbm{1}_{\{\sigma_y=-1\}}  \leq  2\sum_{\substack{x \in \I_-(\gamma) \\ y \in V(\Gamma^{\prime\prime}) }} J_{xy} \leq 4\kappa^{(1)}_\alpha\frac{F_{\I_-(\gamma)}}{M^{(\alpha - d)\wedge 1}}.
 \end{equation}
In the second inequality, we are applying Corollary \ref{Cor: Corrolary_of_Lemma_aux}. For the next term, since $B(\gamma)\setminus V(\Gamma^{\prime\prime})\subset \I_-(\gamma)^c$, we can bound
\begin{equation}\label{Eq: Aux_2_Interaction_between_interior_and_B}
    \sum_{\substack{x \in V(\Gamma^{\prime}) \\ y \in  B(\gamma)\setminus V(\Gamma^{\prime\prime}) }} 2J_{xy}\mathbbm{1}_{\{\sigma_y=+1\}}\leq \sum_{\substack{x \in V(\Gamma^{\prime}) \\ y \in  \I_-(\gamma)^c}} 2J_{xy}\leq 2\kappa_\alpha^{(1)}\frac{F_{\I_-(\gamma)}}{M}.
\end{equation}
 In the last inequality, we are again applying Corollary \ref{Cor: Corrolary_of_Lemma_aux}.

For the negative terms in \eqref{Eq: Interaction_between_interior_and_B}, we bound the term containing $\mathbbm{1}_{\{\sigma_x\neq\sigma_x\}}$ by $0$ and multiply the remaining terms by $\frac{1}{(2d+1)2^{\alpha+2}}$, getting 
\begin{equation*}
      \sum_{\substack{x \in \I_-(\gamma)\setminus V(\Gamma^{\prime}) \\ y \in  V(\Gamma^{\prime\prime}) }}  J_{xy} +  \sum_{\substack{x \in V(\Gamma^{\prime}) \\ y \in  V(\Gamma^{\prime\prime}) }} 2J_{xy}\mathbbm{1}_{\{\sigma_x\neq \sigma_y\}} +  \sum_{\substack{x \in \I_-(\gamma) \\ y \in  B(\gamma)\setminus V(\Gamma^{\prime\prime}) }}J_{xy} \geq \frac{1}{(2d+1)2^{\alpha + 2}}\left[ \sum_{\substack{x \in \I_-(\gamma)\setminus V(\Gamma^{\prime}) \\ y \in  V(\Gamma^{\prime\prime}) }}  J_{xy}  +  \sum_{\substack{x \in \I_-(\gamma) \\ y \in  B(\gamma)\setminus V(\Gamma^{\prime\prime}) }}J_{xy} \right].
\end{equation*}
Using the second inequality of \eqref{Eq: Aux_1_Interaction_between_interior_and_B}, we have 
\begin{equation}\label{Eq: Aux_4_Interaction_between_interior_and_B}
\begin{split}
  \sum_{\substack{x \in \I_-(\gamma)\setminus V(\Gamma^{\prime}) \\ y \in  V(\Gamma^{\prime\prime}) }} J_{xy} + \sum_{\substack{x \in \I_-(\gamma) \\ y \in  B(\gamma)\setminus V(\Gamma^{\prime\prime}) }} J_{xy} 
    &= F_{\I_-(\gamma)} - \sum_{\substack{x \in V(\Gamma^{\prime}) \\ y \in  V(\Gamma^{\prime\prime})}} J_{xy} -  \sum_{\substack{x \in \I_-(\gamma)\\ y \in  \Sp(\gamma)}} J_{xy}  \\
    &\geq \Big(1 -  \frac{2\kappa^{(1)}_\alpha}{M^{(\alpha - d)\wedge 1}}\Big)F_{\I_-(\gamma)} - F_{\Sp(\gamma)}.
    \end{split}
\end{equation}
 Plugging inequalities \eqref{Eq: Aux_1_Interaction_between_interior_and_B}, \eqref{Eq: Aux_2_Interaction_between_interior_and_B} and \eqref{Eq: Aux_4_Interaction_between_interior_and_B} back in \eqref{Eq: Interaction_between_interior_and_B} we get
\begin{align}\label{Eq: Aux_2_Diferenca_de_Hamiltonianos}
    \sum_{\substack{x \in \I_-(\gamma) \\ y \in B(\gamma)}} J_{xy}\sigma_x\sigma_y &\leq 6\kappa^{(1)}_\alpha\frac{F_{\I_-(\gamma)}}{M^{(\alpha - d)\wedge 1}} +  \frac{2\kappa_\alpha^{(1)}}{(2d+1)2^{\alpha + 1}}\frac{F_{\I_-(\gamma)}}{M^{(\alpha - d)\wedge 1}} - \frac{1}{(2d+1)2^{\alpha+1}}F_{\I_-(\gamma)} + \frac{1}{(2d+1)2^{\alpha+1}}F_{\Sp(\gamma)}\nonumber \\
    &\leq \left(6\kappa^{(1)}_\alpha +  \frac{2\kappa_\alpha^{(1)}}{(2d+1)2^{\alpha + 1}}\right)\frac{F_{\I_-(\gamma)}}{M^{(\alpha - d)\wedge 1}} - \frac{F_{\I_-(\gamma)}}{(2d+1)2^{\alpha+1}} + \frac{F_{\Sp(\gamma)}}{(2d+1)2^{\alpha+1}} \nonumber \\
    &\leq 8\kappa^{(1)}_\alpha\frac{F_{\I_-(\gamma)}}{M^{(\alpha - d)\wedge 1}} + \frac{F_{\Sp(\gamma)} - F_{\I_-(\gamma)}}{(2d+1)2^{\alpha+1}}
\end{align}

For the positive terms in \eqref{prop2:eq2}, we can use the triangular inequality to bound, for every $x,y\in\Z^d$,
\begin{align*}
    \sum_{|x-x^\prime|\leq 1}\frac{J_{x^\prime y}}{J_{xy}} &= \sum_{|x-x^\prime|\leq 1} \frac{|x-y|^\alpha}{|x^\prime -y|^\alpha} \leq \sum_{|x-x^\prime|\leq 1} \left(\frac{|x-x^\prime| + |x^\prime - y|}{|x^\prime -y|}\right)^\alpha \\
    &\leq \sum_{|x-x^\prime|\leq 1} \left(\frac{1}{|x^\prime -y|} + 1\right)^\alpha \leq (2d+1)2^\alpha.
\end{align*}
This shows that 
\begin{equation*}
    J_{xy} \geq \frac{1}{(2d+1)2^\alpha}\sum_{|x-x^\prime|\leq 1}J_{x^\prime y},
\end{equation*}
and therefore 
\begin{equation}\label{Eq: Aux_3_Diferenca_de_Hamiltonianos}
    \sum_{\substack{x \in \Sp(\gamma) \\ y \in \Z^d}} J_{xy}\mathbbm{1}_{ \{ \sigma_x \neq \sigma_y \}} +       \sum_{\substack{x \in \Sp(\gamma) \\ y \in \Sp(\gamma)^c}} J_{xy}\mathbbm{1}_{ \{ \sigma_x \neq \sigma_y \}} \geq \frac{1}{(2d+1)2^\alpha}\left(Jc_\alpha |\gamma| + F_{\Sp(\gamma)}\right),
\end{equation}
with $c_\alpha = \sum_{\substack{y\in\Z^d\setminus{0}}}|y|^{-\alpha}$. Plugging \eqref{Eq: Aux_1_Diferenca_de_Hamiltonianos}, \eqref{Eq: Aux_2_Diferenca_de_Hamiltonianos} and \eqref{Eq: Aux_3_Diferenca_de_Hamiltonianos} back in \eqref{prop2:eq2} we get
\begin{align*}
     H_\Lambda^+(\sigma) - H_\Lambda^+(\tau) \geq \frac{Jc_\alpha }{(2d+1)2^\alpha} |\gamma| + \left(\frac{1}{(2d+1)2^{\alpha + 1}} -  \frac{16\kappa^{(1)}_\alpha}{M^{(\alpha - d)\wedge 1}}\right)F_{\I_-(\gamma)} + \left( \frac{1 }{(2d+1)2^{\alpha+1}} -  \frac{4\kappa^{(1)}_\alpha}{M^{(\alpha - d)\wedge 1}}\right)F_{\Sp(\gamma)},
\end{align*}
what proves the proposition for $M^{(\alpha - d)\wedge 1}>16\kappa^{(1)}_\alpha2^{\alpha + 1}(2d+1)$. 
\end{proof}

    \section{Joint measure and bad events}       
    In the short-range case, the spins that need to be flipped to erase a contour are precisely the ones in the interior of it. This is not the case for the long-range model, so we make a slight modification in the argument, and instead of performing the same flips in both spaces, we flip the external field only on $\I_-(\gamma)$. Doing this, not only does the partition function change, but we also get an extra cost when comparing the original energy with the energy after performing such a transformation. This extra term depends only on the external field in $\Sp{(\gamma)}$.  

In this section, we define the measure in the joint space and show that, with high probability, the change of partition function resulting from such flipping is upper-bounded by the size of the support $|\gamma|$, with high probability. 

Given $\Lambda\subset\Z^d$, a contour associated with a configuration in $\Omega_\Lambda^+$ is not always inside $\Lambda$. To avoid this, we consider the event $\Theta_{\Lambda} \coloneqq \{ \sigma : \sigma_x \text{ is } +\text{-correct for all }x\in\fint \Lambda \}$ and the conditional measure 
\begin{equation}
    \nu_{\Lambda; \beta, \varepsilon h}^{+}(A) \coloneqq \mu_{\Lambda; \beta,\varepsilon h}^+(A | \Theta_{\Lambda})
\end{equation}
for any $A\subset \Omega$ measurable. First, we show that $\nu_{\Lambda; \beta, \varepsilon h}^{+}$ is also a local Gibbs measure. To do this, we need the so-called Markov property of the local Gibbs measures, which states that, for any $\Lambda_1 \subset \Lambda_2 \Subset \Z^d$, $\eta\in\Omega$ and $\omega\in \Omega_{\Lambda_2}^\eta$,  
\begin{equation*}
    \mu_{\Lambda_2; \beta,\varepsilon h}^\eta( \ \cdot \  | \  \sigma_x=\omega_x, \ \forall x\in \Lambda_2\setminus\Lambda_1) = \mu_{\Lambda_1; \beta,\varepsilon h}^\omega( \cdot ).
\end{equation*}

Consider the set $\Lambda^\prime = \{x\in \Lambda : d(x, \Lambda^c)>2\}$, that is the set 
$\Lambda$ after we remove the sites at the inner boundary and all of its neighbors. Then, $\Theta_{\Lambda} = \{\sigma\in \Omega_\Lambda^+ : \sigma_x = +1, \ \forall x\in \Lambda \setminus \Lambda^\prime\}$. Using the Markov property, for every $A\subset \Omega$ measurable we have
\begin{align*}
      \nu_{\Lambda; \beta, \varepsilon h}^{+}(A) = \mu_{\Lambda; \beta,\varepsilon h}^+(A |  \sigma_x = +1, \ \forall x\in \Lambda \setminus \Lambda^\prime) = \mu_{\Lambda^\prime; \beta,\varepsilon h}^+(A).
\end{align*}
This not only shows that  $\nu_{\Lambda; \beta, \varepsilon h}^{+}$ is a local Gibbs measure, but it also that $\lim_{n\to\infty} \nu_{\Lambda_n; \beta, \varepsilon h}^{+} =  \mu_{\beta,\varepsilon h}^+$, with the limit being taken over any sequence $(\Lambda_n)_{n\geq 0}$ invading $\Z^d$.
Define the joint measure for $(\sigma, h)$ as

\begin{equation*}
    \mathbb{Q}_{\Lambda; \beta, \varepsilon}^+(\sigma \in A, h\in B) \coloneqq \int_{B} \nu_{\Lambda;\beta, \varepsilon h}^+(A) d\mathbb{P}(h),
\end{equation*}
for $A\subset\Omega$ measurable and $B\subset \mathbb{R}^{\Lambda}$ a Borel set. This measure $\mathbb{Q}_{\Lambda;\beta,\varepsilon}$ has density
\begin{equation*}
    g_{\Lambda; \beta, \varepsilon}^+(\sigma, h) \coloneqq \prod_{x\in\Lambda}\frac{1}{\sqrt{2\pi}}e^{-\frac{1}{2}h_x^2} \times \nu_{\Lambda;\beta, \varepsilon h}^+(\sigma).
\end{equation*}

The operation $\tau_{\gamma}$ used to remove a contour $\gamma\in\Gamma(\sigma)$ can be written as a particular case of the following one: given $A\subset\Z^d$, take $\tau_A:\mathbb{R}^{\Z^d} \xrightarrow{} \mathbb{R}^{\Z^d}$ as 
\begin{equation}
    (\tau_A(\sigma))_x \coloneqq \begin{cases}
                        -\sigma_x &\text{if }x\in A,\\
                        \sigma_x   &\text{otherwise},
                      \end{cases}
\end{equation}
for every $x\in\Z^d$. Defining $\s(\gamma, \sigma)^\pm\coloneqq \{ x\in \s(\gamma): \sigma_x = \pm 1\}$, the transformation that erases a contour $\gamma$ is $\tau_\gamma(\sigma) = \tau_{\I_-(\gamma)\cup \s^-(\gamma,\sigma)}(\sigma)$.

The main idea used in the proof of phase transition in \cite{Ding2021} is to make the Peierls' argument on the measure $\mathbb{Q}_{\Lambda;\beta,\varepsilon}$, and perform in the external field the same flips one does in the configuration when erasing a contour. Formally, in \cite{Ding2021} they compare the density $g_{\Lambda; \beta, \varepsilon}^+(\sigma, h)$ with the density after erasing a contour $\gamma\in\Gamma(\sigma)$, and performing the same flips on the external field. For the short-range model, the spins that need to be flipped to erase a contour are precisely the ones in the interior of it. This is not the case for the long-range setting, so we compare $g_{\Lambda; \beta, \varepsilon}^+(\sigma, h)$ with the density after erasing $\gamma$ and flipping the external field only in $\I_-(\gamma)$, getting

\begin{align}\label{Eq: quotient.of.gs}
    \frac{g_{\Lambda; \beta, \varepsilon}^+(\sigma, h)}{g_{\Lambda; \beta, \varepsilon}^+(\tau_{\gamma}(\sigma),\tau_{\I_-(\gamma)}(h))} 
    &= \exp{\{\beta H_{\Lambda, \varepsilon \tau_{\I_-(\gamma)}(h)}^{+}(\tau_{\gamma}(\sigma)) - \beta H_{\Lambda, \varepsilon h}^{+}(\sigma)\}}\frac{Z_{\Lambda; \beta, \varepsilon}^{+}(\tau_{\I_-(\gamma)}(h))}{Z_{\Lambda; \beta, \varepsilon}^{+}(h)}  \nonumber \\ 
    &\leq \exp{\{- \beta c_2 |\gamma| -2\beta\sum_{x\in \Sp^-(\gamma,\sigma)}\varepsilon h_x\}}\frac{Z_{\Lambda; \beta, \varepsilon}^{+}(\tau_{\I_-(\gamma)}(h))}{Z_{\Lambda; \beta, \varepsilon}^{+}(h)}.
\end{align}
where the constant $c_2$ is the one given by Proposition \ref{Prop: Cost_erasing_contour}.

The sum of the external field in $\Sp^-(\gamma,\sigma)$ can be shown to be of order $|\Sp^-(\gamma,\sigma)|$, and do not influence the Peierls' argument. However, the quotient of the partition functions can be bigger than the exponential term. Again, the bad event is

$$\mathcal{E}^c\coloneqq \left\{\sup_{\substack{\gamma\in\mathcal{C}_0}} \frac{|\Delta_{\I_-(\gamma)}(h)|}{c_2|\gamma|} > \frac{1}{4}\right\},$$

where $\Delta_A(h)$ is the function previously defined in \eqref{Def: Delta_A}. Let us remind again that we are abusing the notation once $\Delta_A(h)$ denotes two different random variables,  \eqref{Def: Delta_A_SR} and \eqref{Def: Delta_A}. We do so since the only property we use from $\Delta_A$ is that it satisfies Lemma \ref{Lemma: Concentration.for.Delta.General}. Moreover, all remarks and claims made in Section \ref{Sec: Ding and Zhuang approach} hold for both definitions of $\Delta_A$. 

To control the probability of the bad event, we the concentration inequalities presented in Section \ref{Sec: Probability results}. Using then, the bound on the bad event $\mathcal{E}^c$ follows from the next proposition.
\begin{proposition}\label{Prop: Bound.gamma_2}
    Given $n\geq 0$, $d\geq 3$ and $\alpha > d$, there is a constant $L_1 \coloneqq L_1(d,\alpha)>0$  such that $$\gamma_2(\I_-(n),\d_2) \leq \varepsilon L_1 n.$$
\end{proposition}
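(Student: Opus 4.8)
The plan is to mimic the short-range argument (Proposition \ref{Prop: Bound.gamma_2_SR}) but using the long-range contours and their coarse-graining. The process $(|\Delta_{\I_-(\gamma)}(h)|)_{\gamma\in\mathcal{C}_0(n)}$ is centered and, by Lemma \ref{Lemma: Concentration.for.Delta.General}, satisfies the sub-gaussian condition \eqref{Eq: Sub_gaussian_def} with respect to the distance $\d_2(\gamma,\gamma')=2\varepsilon|\I_-(\gamma)\Delta\I_-(\gamma')|^{1/2}$ over $\mathcal{C}_0(n)$. By \eqref{Eq: gamma_2_bounded_by_Dudley_integral}, it therefore suffices to bound Dudley's integral $\int_0^\infty\sqrt{\log N(\I_-(n),\d_2,\epsilon)}\,d\epsilon$. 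First I would set up the covering: for each $\gamma\in\mathcal{C}_0(n)$ and each scale $\ell$, let $B_\ell(\gamma)$ be the coarse-grained approximation of $\I_-(\gamma)$ built from the admissible $r\ell$-cubes (those more than half-filled by $\I_-(\gamma)$). Corollary \ref{Cor: Bound_diam_B_ell} gives $|B_\ell(\gamma)\Delta\I_-(\gamma)|\leq c\,2^{r\ell}|\gamma|$, hence by the triangle inequality two contours with $B_\ell(\gamma_1)=B_\ell(\gamma_2)$ are within $\d_2$-distance of order $\varepsilon\,2^{r\ell/2}n^{1/2}$ of each other, so $N(\I_-(n),\d_2,c'\varepsilon 2^{r\ell/2}n^{1/2})\leq |B_\ell(\mathcal{C}_0(n))|$.

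Next I would invoke the key combinatorial estimate: Proposition \ref{Prop: Proposition_2_FFS} (its long-range analogue established via Propositions \ref{Proposition1}, \ref{Prop. Bound.on.C_rl(gamma)}, \ref{Prop: Bound_on_rl_coverings} and the choice $L=L(\ell)$) which yields $|B_\ell(\mathcal{C}_0(n))|\leq \exp\{b_4\ell n\,2^{-r\ell(d-1)}\}$ for a constant $b_4=b_4(d,\alpha)$. Plugging this into Dudley's integral, written as a sum over dyadic scales,
\begin{align*}
\gamma_2(\I_-(n),\d_2) &\leq L'\int_0^\infty\sqrt{\log N(\I_-(n),\d_2,\epsilon)}\,d\epsilon\\
&\leq L''\varepsilon n^{1/2}\sum_{\ell\geq 1}\big(2^{r\ell/2}-2^{r(\ell-1)/2}\big)\sqrt{\log N(\I_-(n),\d_2,c'\varepsilon 2^{r\ell/2}n^{1/2})}\\
&\leq L''\varepsilon n^{1/2}\sum_{\ell\geq 1}\big(2^{r\ell/2}-2^{r(\ell-1)/2}\big)\sqrt{\frac{b_4\,\ell\,n}{2^{r\ell(d-1)}}}\\
&\leq L''\varepsilon n\sqrt{b_4}\,(1-2^{-r/2})\sum_{\ell\geq 1}\sqrt{\frac{\ell}{2^{r\ell(d-2)}}}.
\end{align*}
Since $d\geq 3$ we have $d-2\geq 1$, so the series $\sum_{\ell\geq 1}\sqrt{\ell\,2^{-r\ell(d-2)}}$ converges to a finite constant $\tau(d)$, and setting $L_1=L_1(d,\alpha)$ to absorb all constants gives $\gamma_2(\I_-(n),\d_2)\leq\varepsilon L_1 n$, as claimed.

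The main obstacle is the combinatorial bound $|B_\ell(\mathcal{C}_0(n))|\leq \exp\{b_4\ell n\,2^{-r\ell(d-1)}\}$: unlike the short-range case, the contours (and hence their interiors $\I_-(\gamma)$) may be highly disconnected, so one cannot directly count coarse-grained regions by walking along a surface of length $\lesssim n$. The resolution, carried out in the preceding propositions, is to exploit the multiscale structure of the $(M,a)$-partition: the cubes in $\fint\mathfrak{C}_\ell(\gamma)$ are all covered by (or adjacent to) cubes of a coarser collection $\C_{rL}(\gamma)$ with $L=L(\ell)$ chosen appropriately, and both $|\C_{rL}(\gamma)|$ and the total number of such coarse collections over $\mathcal{C}_0(n)$ are controlled (exponentially in $n\,2^{-rL(d-1)}$ and similar). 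Counting $M_{n,\ell}$ non-intersecting $r\ell$-cubes inside such a coarse cover then reproduces the Fisher--Fröhlich--Spencer bound. Once this estimate is in hand, everything else is the routine Dudley-integral computation above; one just needs to keep track that the constant $r$ has been fixed large enough (as in Remark following Proposition \ref{Prop:Construction_(M,a,delta)_partition}) for the geometric series to converge and for the auxiliary propositions to apply.
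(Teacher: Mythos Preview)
Your approach is essentially the same as the paper's: Dudley's entropy bound combined with the coarse-graining covering from Corollary \ref{Cor: Bound_diam_B_ell} and the combinatorial estimate on $|B_\ell(\mathcal{C}_0(n))|$. One point to correct: the long-range analogue you invoke is actually Proposition \ref{Prop: Bound_on_boundary_of_admissible_sets}, which gives $|B_\ell(\mathcal{C}_0(n))|\leq \exp\{c_4\,\ell^{\kappa+1} n\,2^{-r\ell(d-1)}\}$ with a power $\ell^{\kappa+1}$ (where $\kappa=\kappa(d,\alpha)$) rather than the linear $\ell$ you wrote; the resulting series $\sum_{\ell\geq 1}\ell^{(\kappa+1)/2}2^{-r\ell(d-2)/2}$ still converges for $d\geq 3$, so the argument goes through unchanged. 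The paper also explicitly handles the contribution near $\epsilon=0$ via $N(\I_-(n),\d_2,0)\leq 2^n|\mathcal{C}_0(n)|$ and Corollary \ref{Cor: Bound_on_C_0_n}, which your dyadic sum starting at $\ell=1$ implicitly absorbs but is worth stating.
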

As a direct consequence of this Proposition, we can control the probability of the bad event.
\begin{proposition}\label{Prop: Bound.bad.event.1}     
    There exists $C_1\coloneqq C_1(\alpha, d)$ such that $\mathbb{P}(\mathcal{E}^c)\leq e^{-\frac{C_1}{\varepsilon^2}}$ for any $\varepsilon^2<C_1$. 
\end{proposition}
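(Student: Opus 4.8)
The plan is to deduce the bound on $\mathbb{P}(\mathcal{E}^c)$ from the estimate on $\gamma_2(\I_-(n),\d_2)$ in Proposition \ref{Prop: Bound.gamma_2} together with the tail bound of Theorem \ref{Theo: Theo_2.2.27_Talagrand}, mimicking the short-range argument in the proof of Proposition \ref{Prop: Bound.bad.event_SR}. First I would fix $n\geq 1$ and work on the finite metric space $(\I_-(n),\d_2)$, where $\d_2(\gamma,\gamma')=2\varepsilon|\I_-(\gamma)\Delta\I_-(\gamma')|^{1/2}$, recalling from Lemma \ref{Lemma: Concentration.for.Delta.General} that the centered process $(\Delta_{\I_-(\gamma)}(h))_{\gamma\in\mathcal{C}_0(n)}$ satisfies the sub-Gaussian increment condition \eqref{Eq: Sub_gaussian_def} with respect to $\d_2$. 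The diameter of $\I_-(n)$ is controlled via the isoperimetric inequality: since $0\in V(\gamma)$ and $|\gamma|=n$, the volume $V(\gamma)$ — and hence $\I_-(\gamma)$ — satisfies $|\I_-(\gamma)|\leq |V(\gamma)|\leq c\,n^{1+\frac{1}{d-1}}$ for a dimensional constant $c$ (using $|V(\gamma)|^{(d-1)/d}\leq|\fint V(\gamma)|\leq 2d|\gamma|$, so $|V(\gamma)|\leq (2dn)^{d/(d-1)}$), giving $\diam(\I_-(n))\leq 4\varepsilon\,c^{1/2} n^{\frac12(1+\frac{1}{d-1})}$.

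Next I would apply Theorem \ref{Theo: Theo_2.2.27_Talagrand} with $T=\I_-(n)$, the process $X_\gamma=\Delta_{\I_-(\gamma)}(h)$, and $u=n^{\frac12(1-\frac{1}{d-1})}/\varepsilon$ (up to constants), so that $u\,\diam(T)\lesssim n$ and, combined with $\gamma_2(\I_-(n),\d_2)\leq \varepsilon L_1 n$ from Proposition \ref{Prop: Bound.gamma_2}, the threshold $L(\gamma_2+u\,\diam(T))$ can be taken below $\frac{c_2}{4}n$ once $\varepsilon$ is small enough (here is where the hypothesis $\varepsilon^2<C_1$ enters, absorbing $L$, $L_1$ and the dimensional constants). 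This yields
\begin{equation*}
\mathbb{P}\Big(\sup_{\gamma\in\mathcal{C}_0(n)}\Delta_{\I_-(\gamma)}(h)\geq \tfrac{c_2}{4}n\Big)\leq \exp\Big(-\tfrac{n^{1-\frac{1}{d-1}}}{\varepsilon^2}\Big).
\end{equation*}
Since $\frac{|\Delta_{\I_-(\gamma)}(h)|}{c_2|\gamma|}>\frac14$ on $\{|\gamma|=n\}$ means $|\Delta_{\I_-(\gamma)}(h)|>\frac{c_2}{4}n$, and by symmetry of the Gaussian field the same bound holds for $-\Delta_{\I_-(\gamma)}(h)$, a union bound over $n\geq 1$ gives $\mathbb{P}(\mathcal{E}^c)\leq 2\sum_{n\geq 1}\exp(-n^{1-\frac{1}{d-1}}/\varepsilon^2)$. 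Because $d\geq 3$ forces $1-\frac{1}{d-1}>0$, the exponents grow (sub-linearly but to infinity), so for $\varepsilon$ small the series is dominated by its first term up to a constant factor, yielding $\mathbb{P}(\mathcal{E}^c)\leq e^{-C_1/\varepsilon^2}$ after possibly shrinking $C_1$.

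The main obstacle is really packaged inside Proposition \ref{Prop: Bound.gamma_2}, not in this deduction: establishing $\gamma_2(\I_-(n),\d_2)\leq \varepsilon L_1 n$ requires the coarse-graining machinery and the counting bound $|B_\ell(\mathcal{C}_0(n))|\leq \exp(b_4\ell n/2^{r\ell(d-1)})$ for possibly disconnected contours, which is the delicate point of the whole chapter. Assuming that, the present proposition is a routine application of Dudley's bound packaged into Theorem \ref{Theo: Theo_2.2.27_Talagrand}; the only mild care needed is bookkeeping the dimensional constants and checking that $d\geq 3$ makes the summable series in the union bound work, which fails precisely at $d=2$ (consistent with \cite{Ding.Wirth.20}).
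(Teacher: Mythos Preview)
Your proposal is correct and follows essentially the same route as the paper: bound $\diam(\I_-(n))$ via the isoperimetric inequality, feed Proposition~\ref{Prop: Bound.gamma_2} together with the choice $u\asymp n^{\frac12(1-\frac{1}{d-1})}/\varepsilon$ into Theorem~\ref{Theo: Theo_2.2.27_Talagrand}, and then union bound over $n$ and sum $\sum_n \exp(-c\,n^{1-1/(d-1)}/\varepsilon^2)$. The paper carries out the last step explicitly via an integral bound on $\sum_n e^{-cn^{1/2}/\varepsilon^2}$, whereas you argue it is dominated by its first term; both are fine.
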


\begin{proof}   
 By the union bound,
\begin{align}\label{Eq: Union_bound_bad_event}
    \mathbb{P}\left({\sup_{\substack{\gamma\in\mathcal{C}_0}} \frac{|\Delta_{\I_-(\gamma)}(h)|}{c_2|\gamma|} > \frac{1}{4}}\right) \leq \sum_{n=2}^\infty \mathbb{P}\left({\sup_{\substack{\gamma\in\mathcal{C}_0(n)}} |\Delta_{\I_-(\gamma)}(h)| > \frac{c_2}{4}}|\gamma|\right). 
\end{align}
Let $\gamma,\gamma^\prime\in \mathcal{C}_0(n)$ be two contours satisfying $\diam(\I_-(n)) = \d_2(\I_-(\gamma),\I_-(\gamma^\prime))$, where the diameter is in the $\d_2$ distance. By the isoperimetric inequality,
\begin{equation*}
    \diam(\I_-(n))= 2\varepsilon{|\I_-(\gamma)\Delta \I_-(\gamma^\prime)|}^{\frac{1}{2}} \leq 2\sqrt{2}\varepsilon n^{(\frac{d}{d-1})\frac{1}{2}} = 2\sqrt{2}\varepsilon n^{(\frac{1}{2} + \frac{1}{2(d-1)})}.
\end{equation*}
Together with Proposition \ref{Prop: Bound.gamma_2}, this yields
\begin{align*}
  \frac{c_2}{4}|\gamma| &= L\left[\varepsilon L_1 n + \varepsilon L_1 \left(\frac{c_2}{4\varepsilon L_1 L} - 1\right)n\right]\\
    &\geq  L\left[\gamma_2(\I_-(n),\d_2) +  \frac{L_1}{2\sqrt{2}} \left(\frac{c_2}{4\varepsilon L_1 L} - 1\right)n^{\frac{1}{2} - \frac{1}{2(d-1)}}\diam(\I_-(n))\right]\\
    &\geq   L\left[\gamma_2(\I_-(n),\d_2) +  \frac{C_1^\prime}{\varepsilon}n^{\frac{1}{2} - \frac{1}{2(d-1)}}\diam(\I_-(n))\right],
\end{align*}
with $C_1^\prime = \frac{c_2}{16\sqrt{2}L}$ and $\varepsilon<\frac{c_2}{8L_1L}$. Applying Theorem \ref{Theo: Theo_2.2.27_Talagrand} with $u = \frac{C_1^\prime}{\varepsilon}n^{\frac{1}{2} - \frac{1}{2(d-1)}}$, we have
\begin{align*}
    \mathbb{P}\left({\sup_{\substack{\gamma\in\mathcal{C}_0(n)}} |\Delta_{\I_-(\gamma)}(h)| > \frac{c_2}{4}}|\gamma|\right) &=  \mathbb{P}\left({\sup_{\substack{\I\in\I_-(n)}} |\Delta_{\I}(h)| > \frac{c_2}{4}}n\right) \\
    &\leq \mathbb{P}\left({\sup_{\substack{\I\in\I_-(n)}} |\Delta_{\I}(h)| >    L\left[\gamma_2(\I_-(n),\d_2) +  \frac{C_1^\prime}{\varepsilon}n^{\frac{1}{2} - \frac{1}{2(d-1)}}\diam(\I_-(n))\right]}\right) \\ 
    &\leq \exp{\left\{ - \frac{C_1^{\prime2}n^{1 - \frac{1}{(d-1)}}}{\varepsilon^2}\right\}}. 
\end{align*}
Using this back in equation \eqref{Eq: Union_bound_bad_event}, we get
\begin{align*}
       \mathbb{P}\left({\sup_{\substack{\gamma\in\mathcal{C}_0}} \frac{|\Delta_{\I_-(\gamma)}(h)|}{c_2|\gamma|} > \frac{1}{4}}\right) &\leq \sum_{n=2}^\infty \exp{\left\{ - \frac{C_1^{\prime2}n^{1 - \frac{1}{(d-1)}}}{\varepsilon^2}\right\}} \\
       & \leq \sum_{n=2}^\infty \exp{\left\{ - \frac{C_1^{\prime2}n^{ \frac{1}{2}}}{\varepsilon^2}\right\}}.\\
\end{align*}
The integral bound gives us
\begin{align*}
     \sum_{n=2}^\infty \exp{\left\{ - \frac{C_1^{\prime2}n^{ \frac{1}{2}}}{2\varepsilon^2}\right\}} \leq \int_{1}^\infty \exp{\left\{ - \frac{C_1^{\prime2}x^{ \frac{1}{2}}}{2\varepsilon^2}\right\}} dx \leq 8\frac{\varepsilon^2}{C_1^{\prime2}}\exp{\left\{ - \frac{C_1^{\prime2}}{2\varepsilon^2}\right\}}.
\end{align*}
We conclude that 
\begin{align*}
           \mathbb{P}\left({\sup_{\substack{\gamma\in\mathcal{C}_0}} \frac{|\Delta_{\I_-(\gamma)}(h)|}{c_2|\gamma|} > \frac{1}{4}}\right) &\leq  \frac{8}{C_1^{\prime2}} \varepsilon^2 \exp{\left\{ - \frac{C_1^{\prime2}}{2\varepsilon^2}\right\}}\leq \exp{\left\{ - \frac{C_1}{\varepsilon^2}\right\}},
\end{align*}
for $\varepsilon^2 < C_1$ and $C_1 \coloneqq C_1(\alpha, d) = \min\{\left(\frac{c_2}{8L_1L}\right)^2, \frac{C_1^{\prime2}}{8} \}$.  The dependency on $\alpha$ is due to the dependency on $c_2(\alpha, d)$.
\end{proof}

The next two subsections are dedicated to proving Proposition \ref{Prop: Bound.gamma_2}.

    \section{Controling \texorpdfstring{$\gamma_2(\I_-(n), \d_2)$}{I-(n),d2}}\label{Sec: Controling_gamma_2} The next section is dedicated to proving Proposition \ref{Prop: Bound.gamma_2}. To construct the cover by balls in Dudley's entropy bound, we use the coarse-graining idea introduced in \cite{FFS84}.  For each $\ell>0$ and each contour ${\gamma\in\mathcal{C}_0(n)}$, we will associate a region $B_\ell(\gamma)$ that approximates the interior $\I_-(\gamma)$ in a scaled lattice, with the scale growing with $\ell$. This is done in a way that two interiors that are approximated by the same region are in a ball in distance $\d_2$ with a fixed radius, depending on $\ell$.

An $r\ell$-cube $C_{r\ell}$ is \textit{admissible} if more than a  half of its points are inside $\I_-(\gamma)$. Thus, the set of admissible cubes is
\begin{equation*}
    \mathfrak{C}_\ell(\gamma) \coloneqq \left\{C_{r\ell} : |C_{r\ell}\cap \I_-(\gamma)| \geq \frac{1}{2}|C_{r\ell}|\right\}.
\end{equation*}
With this notion of admissibility, two contours with the same admissible cubes should be close in distance $\d_2$. Consider functions $B_\ell:\mathcal{E}^+_\Lambda \xrightarrow[]{} \mathcal{P}(\Z^d)$, with $\mathcal{P}(\Z^d) \coloneqq \{A:A\Subset\Z^d\}$, that takes contours $\gamma$ to $B_\ell(\gamma) \coloneqq B_{\mathfrak{C}_{\ell}(\gamma)}$, the region covered by the admissible cubes. We will be interested in counting  the image of $B_\ell$ by $\mathcal{C}_0(n)$, that is, $|B_\ell(\mathcal{C}_{0}(n))| = |\{B:B=B_\ell(\gamma)\text{ for some }\gamma \in \mathcal{C}_0(n)\}|$. Notice that $B_\ell(\gamma)$ is uniquely determined by $\partial B_\ell(\gamma)$. Given any collection $\C_{m}$, we define the \textit{edge boundary of } $\C_m$ as 
$$
\partial \C_m \coloneqq \{ \{C_{m}, C^\prime_{m}\} : C_{m} \in \C_m, \ C_m^\prime \notin \C_m \textrm{ and} \  C_m^\prime \text{ shares a face with }C_m\}.
$$ 
We also define the \textit{inner boundary of }$\C_m$ as
$$
\fint \C_m\coloneqq \{ C_{m}\in \C_m : \exists C_m^\prime \notin \C_m \textrm{ such that }  \{C_m,C_m^\prime\}\in\partial \C_m\}.
$$ 
With this definition, it is clear that $\partial B_\ell(\gamma)$ is uniquely determined by $\partial \mathfrak{C}_\ell(\gamma)$. Hence, defining $\partial \mathfrak{C}_{r\ell}(\mathcal{C}_0(n)) \coloneqq \{\partial\C_{r\ell} : \C_{r\ell}=\mathfrak{C}_\ell(\gamma )\text{ for some }\gamma \in \mathcal{C}_0(n)\}$, we have $|B_\ell(\mathcal{C}_0(n))| = |\partial\mathfrak{C}_\ell(\mathcal{C}_0(n))|$. In a similar fashion we define $\fint \mathfrak{C}_{r\ell}(\mathcal{C}_0(n)) \coloneqq \{\fint\C_{r\ell} : \C_{r\ell}=\mathfrak{C}_\ell(\gamma )\text{ for some }\gamma \in \mathcal{C}_0(n)\}$. We will now control the number of cubes in $\mathfrak{C}_\ell(\gamma)$. 

\begin{proposition}\label{Proposition1}For the functions $(B_\ell)_{\ell\geq 0}$ defined above, there exists constants $b_1,b_2$ depending only on $d$ and $r$ such that 
\begin{equation}\label{Eq: Prop.1.FFS.i}
    |\fint\mathfrak{C}_\ell(\gamma)| \leq b_1\frac{|\fext \I_-(\gamma)|}{2^{r\ell(d-1)}} \leq b_1 \frac{|\gamma|}{2^{r\ell(d-1)}},
\end{equation}
    and 
\begin{equation}\label{Eq: Prop.1.FFS.ii}
    |B_\ell(\gamma)\Delta B_{\ell+1}(\gamma)| \leq b_2 2^{r\ell} |\gamma|,
\end{equation}
for every $\ell\geq 0$ and $\gamma\in\mathcal{C}_0(n)$.
\end{proposition}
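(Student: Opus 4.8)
The statement is essentially \cite[Proposition~1]{FFS84} transported to our long-range contours, with $\I(\gamma)$ replaced by $\I_-(\gamma)$ and the dyadic scale $2^\ell$ replaced by $2^{r\ell}$. The plan is to reduce both estimates to the two combinatorial lemmas already available in the short-range section (Lemma~\ref{Lemma: Geo.discreta.1} and Lemma~\ref{Lemma: Proposicao1.Aux1}), applied with cubes of side $2^{r\ell}$ instead of side $2^\ell$, and then to bound $|\fext \I_-(\gamma)|$ by $|\gamma|$.

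First I would establish \eqref{Eq: Prop.1.FFS.i}. Fix $\ell$ and note that every pair $\{C_{r\ell},C_{r\ell}'\}\in\partial\mathfrak C_\ell(\gamma)$ satisfies the hypothesis \eqref{Eq. U.condition} of Lemma~\ref{Lemma: Proposicao1.Aux1} with $A=\I_-(\gamma)$ and $U=C_{r\ell}\cup C_{r\ell}'$ (by definition of admissibility, $|C_{r\ell}\cap\I_-(\gamma)|\ge\frac12|C_{r\ell}|$ and $|C_{r\ell}'\cap\I_-(\gamma)|<\frac12|C_{r\ell}'|$), so there is a constant $b=b(d)$ with $b\,2^{r\ell(d-1)}\le|\fext\I_-(\gamma)\cap U|$. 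Summing over $\partial\mathfrak C_\ell(\gamma)$ and using that each $r\ell$-cube has at most $2d$ neighbours and appears in at most $2d$ such pairs — exactly the counting argument in the proof of Proposition~\ref{Proposition1_SR} — gives
\[
\frac{b}{2d}\,2^{r\ell(d-1)}\,|\fint\mathfrak C_\ell(\gamma)|\le|\fext\I_-(\gamma)|,
\]
which yields the first inequality in \eqref{Eq: Prop.1.FFS.i} with $b_1=2d/b$. The second inequality $|\fext\I_-(\gamma)|\le|\gamma|$ follows from the fact that $\I_-(\gamma)\subset\I(\gamma)=V(\gamma)\setminus\Sp(\gamma)$, hence every point of $\fext\I_-(\gamma)$ lies either in $\Sp(\gamma)$ or in $\I_+(\gamma)$, and a point of $\I_+(\gamma)$ adjacent to $\I_-(\gamma)$ must in fact be an incorrect point (the two signs meet), so $\fext\I_-(\gamma)\subset\Sp(\gamma)$ up to a bounded multiplicative factor absorbed in $b_1$; I would spell this adjacency fact out carefully since it is the one place the argument genuinely differs from \cite{FFS84}.

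For \eqref{Eq: Prop.1.FFS.ii} I would follow the second half of the proof of Proposition~\ref{Proposition1_SR} verbatim with scales $2^{r\ell}$ and $2^{r(\ell+1)}$. Given $C_{r(\ell+1)}\in\mathscr C_{r(\ell+1)}(B_{\ell+1}(\gamma)\setminus B_\ell(\gamma))$, one finds inside it an $r\ell$-cube $C_{r\ell}\in\mathfrak C_\ell(\gamma)$ and an $r\ell$-cube $C_{r\ell}'\notin\mathfrak C_\ell(\gamma)$ sharing a face (otherwise the densities force $C_{r(\ell+1)}$ either entirely in or entirely out of $B_\ell$), apply Lemma~\ref{Lemma: Proposicao1.Aux1} to bound $|C_{r(\ell+1)}|\le 2^{rd}2^{r\ell}\,b\,|\fext\I_-(\gamma)\cap C_{r(\ell+1)}|$, and sum over the disjoint $r(\ell+1)$-cubes to get $|B_{\ell+1}(\gamma)\setminus B_\ell(\gamma)|\le\frac{b_2}{2}2^{r\ell}|\fext\I_-(\gamma)|\le\frac{b_2}{2}2^{r\ell}|\gamma|$ with $b_2=b\,2^{rd+1}$. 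The symmetric bound on $|B_\ell(\gamma)\setminus B_{\ell+1}(\gamma)|$ comes from the same argument covering $B_\ell(\gamma)\setminus B_{\ell+1}(\gamma)$ by $r(\ell+1)$-cubes, and adding the two gives \eqref{Eq: Prop.1.FFS.ii}.

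The only genuine obstacle, and the point I would be most careful about, is the passage $|\fext\I_-(\gamma)|\le|\gamma|$: unlike in the short-range case where the interior is a union of simply connected sets, $\I_-(\gamma)$ can be quite irregular and disconnected, so I must argue that its external boundary is contained in $\Sp(\gamma)$ (equivalently, that a $-$-labelled interior component cannot touch a $+$-labelled component or the exterior except through incorrect points). This is precisely the content of the label structure recorded after Definition~\ref{def:d_condition} (the sign of $\sigma$ is constant on $\fint V(\overline\gamma_{\mathrm{ext}})$ and on each $\fint V(\I(\overline\gamma)^{(k)})$), so the claim follows once one notes that any lattice edge from $\I_-(\gamma)$ to $\I_-(\gamma)^c$ either stays inside $V(\gamma)$, where a sign change forces an incorrect endpoint, or exits $V(\gamma)$, in which case its $V(\gamma)$-endpoint lies in $\fint V(\gamma)\subset\Sp(\gamma)$. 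Everything else is a routine rescaling of \cite{FFS84}.
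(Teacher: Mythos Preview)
Your proposal is correct and follows exactly the paper's approach: the paper simply states that the proof ``follows the same steps of Proposition~\ref{Proposition1_SR}, since both lemmas \ref{Lemma: Geo.discreta.1} and \ref{Lemma: Proposicao1.Aux1} can be used for the new admissible regions as they do not require $A\Subset\Z^d$ to be connected,'' and your write-up is precisely that transcription with the scale $2^{r\ell}$.

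One simplification: you over-engineer the bound $|\fext\I_-(\gamma)|\le|\gamma|$. No sign or incorrectness argument is needed. By definition $\I_-(\gamma)$ is a union of connected components of $\Sp(\gamma)^c$, so any point adjacent to $\I_-(\gamma)$ but not in it cannot lie in another connected component of $\Sp(\gamma)^c$ (distinct connected components are non-adjacent); hence $\fext\I_-(\gamma)\subset\Sp(\gamma)$ exactly, with no multiplicative factor to absorb.
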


\begin{remark}\label{rmk: Upper_bound_on_ell}
    This proposition shows that when $\frac{b_1|\gamma|}{2^{r\ell(d-1)}}<1$ there are no admissible cubes. Therefore, in some propositions, we assume $\ell\leq \frac{\log_{2^r}(b_1 |\gamma|)}{d-1}$, since the relevant bounds on the complementary case follow trivially.
\end{remark}

The proof of Proposition \ref{Proposition1} follows the same steps of \ref{Proposition1_SR}, since both lemmas \ref{Lemma: Geo.discreta.1} and \ref{Lemma: Proposicao1.Aux1} can be used for the new admissible regions as they do not require $A\Subset\Z^d$ to be connected.  

The next Corollary estimates the difference between the minus interior of a contour and its approximation, see Figure \ref{Fig: Figura7}.
\begin{corollary}\label{Cor: Bound_diam_B_ell}
     There exists a constant $b_3>0$ such that, for any $\ell>0$ and any two contours $\gamma_1,\gamma_2 \in \mathcal{C}_0(n)$ with $B_\ell(\gamma_1)=B_{\ell}(\gamma_2)$, 
    \begin{equation*}
        \d_2(\I_-(\gamma_1),\I_-(\gamma_2))\leq 4 \varepsilon b_3 2^{\frac{r\ell}{2}} n^{\frac{1}{2}}. 
    \end{equation*} 
\end{corollary}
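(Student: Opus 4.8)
The plan is to mimic the argument used in the short-range case (the corollary following Proposition \ref{Proposition1_SR}), now relying on Proposition \ref{Proposition1} instead. The key point is that the approximations $B_\ell(\gamma)$ form a nested-in-scale sequence of coarse regions, so the distance from $\I_-(\gamma)$ to $B_\ell(\gamma)$ can be controlled by a telescoping sum of the successive differences $|B_i(\gamma)\Delta B_{i-1}(\gamma)|$, each of which is bounded by \eqref{Eq: Prop.1.FFS.ii}. Since $B_\ell(\gamma_1)=B_\ell(\gamma_2)$, the triangle inequality for $\d_2$ then glues the two estimates together.

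Concretely, first I would note that $\d_2(\I_-(\gamma_0),B_0(\gamma_0)) = 2\varepsilon|\I_-(\gamma_0)\Delta B_0(\gamma_0)|^{1/2}$ and, more generally, for $\gamma\in\mathcal{C}_0(n)$,
\begin{align*}
    \d_2(\I_-(\gamma),B_\ell(\gamma)) &\leq \d_2(\I_-(\gamma),B_0(\gamma)) + \sum_{i=1}^\ell \d_2(B_{i-1}(\gamma),B_i(\gamma)) \\
    &= 2\varepsilon|\I_-(\gamma)\Delta B_0(\gamma)|^{\frac{1}{2}} + \sum_{i=1}^\ell 2\varepsilon|B_{i-1}(\gamma)\Delta B_i(\gamma)|^{\frac{1}{2}}.
\end{align*}
Here I would treat the term $|\I_-(\gamma)\Delta B_0(\gamma)|$ by the same mechanism: setting $B_{-1}(\gamma)\coloneqq\I_-(\gamma)$ is consistent with the coarse-graining, or alternatively one observes directly from the definition of admissibility at scale $0$ (the $0$-cubes are single points, so $\mathfrak{C}_0(\gamma)$ consists precisely of the points of $\I_-(\gamma)$, hence $B_0(\gamma)=\I_-(\gamma)$ and this term vanishes); I would use whichever is cleanest given the paper's conventions, most likely the latter so that the sum starts at $i=1$. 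Then applying \eqref{Eq: Prop.1.FFS.ii}, $|B_{i-1}(\gamma)\Delta B_i(\gamma)|\leq b_2 2^{r(i-1)}|\gamma| = b_2 2^{r(i-1)} n$, so
\begin{align*}
    \d_2(\I_-(\gamma),B_\ell(\gamma)) &\leq \sum_{i=1}^\ell 2\varepsilon\sqrt{b_2}\, 2^{\frac{r(i-1)}{2}} \sqrt{n} \leq 2\varepsilon\sqrt{b_2}\,\sqrt{n}\,\frac{2^{\frac{r\ell}{2}}}{2^{\frac{r}{2}}-1} \leq 2\varepsilon\sqrt{b_2}\,\frac{2^{\frac{r}{2}}}{2^{\frac{r}{2}}-1}\,2^{\frac{r\ell}{2}}\sqrt{n},
\end{align*}
the geometric sum being bounded by a constant depending only on $r$ (hence only on $d,\alpha$). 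Setting $b_3 \coloneqq \sqrt{b_2}\,2^{\frac{r}{2}}/(2^{\frac{r}{2}}-1)$ (or the analogous constant if I keep the $i=0$ term), this reads $\d_2(\I_-(\gamma),B_\ell(\gamma))\leq 2\varepsilon b_3 2^{\frac{r\ell}{2}}\sqrt{n}$.

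Finally, for $\gamma_1,\gamma_2\in\mathcal{C}_0(n)$ with $B_\ell(\gamma_1)=B_\ell(\gamma_2)\eqqcolon B$, the triangle inequality gives
\begin{equation*}
    \d_2(\I_-(\gamma_1),\I_-(\gamma_2)) \leq \d_2(\I_-(\gamma_1),B) + \d_2(B,\I_-(\gamma_2)) \leq 4\varepsilon b_3 2^{\frac{r\ell}{2}}\sqrt{n},
\end{equation*}
which is exactly the claimed bound. There is no serious obstacle here; this is a routine consequence of Proposition \ref{Proposition1}. The only point requiring a little care is making sure the base case of the telescoping (the gap between $\I_-(\gamma)$ and its finest approximation $B_0(\gamma)$) is handled consistently with the definition of admissible $0$-cubes, and that $\d_2$ is indeed the metric $\d_2(A,A')=2\varepsilon|A\Delta A'|^{1/2}$ used throughout, so that the square-root estimate from $|B_{i-1}\Delta B_i|$ translates correctly into a $\d_2$-estimate.
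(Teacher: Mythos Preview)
Your proposal is correct and follows essentially the same approach as the paper: triangle inequality, telescoping in scale, and applying \eqref{Eq: Prop.1.FFS.ii} to bound each increment, then summing the geometric series. Your treatment of the base case (observing that $0$-cubes are single points so $B_0(\gamma)=\I_-(\gamma)$) and your indexing $|B_{i-1}\Delta B_i|\leq b_2 2^{r(i-1)}n$ are in fact slightly more careful than the paper's version, which simply starts the sum at $i=1$ and uses the looser $2^{ri}$; the resulting constants differ only cosmetically.
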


\begin{proof}
    This is a simple application of the triangular inequality, since $\d_2(\I_-(\gamma_1),\I_-(\gamma_2)) \leq \d_2(\I_-(\gamma_1),B_\ell(\gamma_1)) + \d_2(\I_-(\gamma_2),B_\ell(\gamma_2))$ and 
    \begin{align*}
        \d_2(\I_-(\gamma_1),B_\ell(\gamma_1)) &\leq \sum_{i=1}^\ell \d_2(B_i(\gamma_1),B_{i-1}(\gamma_1)) = \sum_{i=1}^\ell 2\varepsilon\sqrt{|B_i(\gamma_1)\Delta B_{i-1}(\gamma_1)|} \\
        & \leq 2\varepsilon\sqrt{b_2}\sqrt{n} \sum_{i=1}^\ell  2^{\frac{ir}{2}}   \leq 4\varepsilon\sqrt{b_2}2^{\frac{r\ell}{2}} \sqrt{n} 
    \end{align*}
    where in the second to last equation used \eqref{Eq: Prop.1.FFS.ii}. As the same bound holds for $d_2(\I_-(\gamma_2),B_\ell(\gamma_2))$, the corollary is proved by taking $b_3 = 2\sqrt{b_2}$.
\end{proof}

\begin{remark}\label{Rmk: Bounding_N_by_B_ell}
    Corollary \ref{Cor: Bound_diam_B_ell} shows that we can create a cover of $\I_-(n)$, indexed by $B_\ell(\mathcal{C}_0(n))$, of balls with radius $4 \varepsilon b_3 2^{\frac{r\ell}{2}} n^{\frac{1}{2}}$. Therefore $N(\I_-(n), \d_2, 4\varepsilon b_3 2^{\frac{r\ell}{2}} n^{\frac{1}{2}}) \leq |B_\ell(\mathcal{C}_0(n))|$. 
\end{remark}
We now proceed to bounding $|B_\ell(\mathcal{C}_0(n))|$. As we discussed before, in the definition of admissibility at the beginning of this subsection, $|B_\ell(\mathcal{C}_0(n))| = |\partial \mathfrak{C}_{r\ell}(\mathcal{C}_0(n))|$. In the short-range case, a key ingredient to count the admissible cubes is that despite $B_\ell(\gamma)$ not being connected, all cubes are close to a connected region with size $|\gamma|$. As the contours now may not connected, we need to change the strategy: we choose a suitable scale $L(\ell)$ and count how many $rL(\ell)$-coverings of contours there are. That is, we first control $|\C_{rL(\ell)}(\mathcal{C}_0(n))|$. Once a $rL(\ell)$-covering is fixed, we choose which $r\ell$-cubes inside this covering will be admissible. At last, we choose the scale $L(\ell)$ in a suitable way.

The first step is to bound $|\C_{rL}(\mathcal{C}_0(n))|$, for $L>0$. For $n,m\geq 0$, we say that $\mathscr{C}_n$ is \textit{subordinated} to $\C_m$, denoted by $\C_n\preceq \C_m$, if for all $C_n\in\C_n$, there exists $C_m\in \C_m$ such that $C_n\subset C_m$. Moreover, define 
\begin{equation*}
    N(\C_m, n, V) \coloneqq |\{\C_n : \C_n\preceq \C_m, |\C_n|=V\}|,
\end{equation*}
the number of collections of $n$-cubes $\C_n$ subordinated to a fixed collection $\C_m$ and with $|\C_n|=V$. Notice that every $m$-cube contains $2^d$ $(m-1)$-cubes, all of them being disjoint. Therefore, the number of $n$-cubes inside a $m$-cube is $2^{(m-n)d}$ and we have $N(\C_m, n, V) = \binom{2^{(m-n)d} |\C_{m}|}{V}$.
In particular, the bound on the binomial $\binom{n}{k}\leq \left(\frac{en}{k}\right)^k$ yields
\begin{equation}\label{Eq: Bound.on.N}
    N(\C_{r(\ell+1)}, r\ell, V) = \binom{2^{rd}|\C_{r(\ell+1)}|}{V} \leq \left(\frac{2^{rd}e|\C_{r(\ell+1)}|}{V}\right)^{V}.
\end{equation}
For any subset $\Lambda \Subset \Z^d$, define
\begin{equation*}
    V_r^\ell(\Lambda)\coloneqq \sum_{n=\ell}^{n_r(\Lambda)} |\C_{rn}(\Lambda)|,
\end{equation*}
where $n_r(\Lambda)\coloneqq \ceil{\log_{2^r}(\diam (\Lambda))}$. To control $V_r^\ell(\Lambda)$ we bound the number of coverings at a fixed step $L>0$.

\begin{proposition}\label{Prop. partition.a.graph}
Let $k\geq 1$ and $G$ be a finite, non-empty, connected simple graph with vertex set $v(G)$. Then, $G$ can be covered by $\ceil*{|v(G)|/k}$ connected sub-graphs of size at most $2k$.
\end{proposition}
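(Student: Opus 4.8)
The plan is to induct on $|v(G)|$ with a greedy ``peeling'' argument based on spanning trees and breadth-first exploration. The base case $|v(G)|\le 2k$ is trivial: the single sub-graph $G$ itself is connected, has size $|v(G)|\le 2k$, and $\ceil{|v(G)|/k}\ge 1$, so one cover piece suffices. For the inductive step, assume $|v(G)|>2k$ and that the claim holds for all connected graphs with strictly fewer vertices. First I would fix a spanning tree $T$ of $G$, root it at an arbitrary vertex, and perform a breadth-first (or depth-first) exploration of $T$, accumulating vertices into a set $S$ until $|S|$ first reaches a value in the range $\{k,k+1,\dots,2k-1\}$; because each step of the exploration adds exactly one vertex, such a stopping time is reached with $k\le |S|\le 2k-1$ (here we use $|v(G)|>2k$ so the exploration does not terminate prematurely). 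The key point is to choose the exploration order so that $S$ induces a \emph{connected} subtree of $T$ (hence a connected sub-graph of $G$): exploring $T$ from the root in BFS/DFS order and taking a prefix of the visited vertices that forms a subtree achieves this — one should peel a subtree ``from a leaf side'' so that the remainder $v(G)\setminus S$ still induces a connected subtree of $T$, which it does if $S$ is chosen as the descendant set of an appropriate vertex, pruned to the right size.

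\textbf{Key steps.} (1) Reduce to the case where $G=T$ is a tree, since covering $T$ by connected subtrees of size $\le 2k$ gives a covering of $G$ by connected sub-graphs of size $\le 2k$. (2) In the tree $T$ rooted at $\rho$, let $w$ be a deepest vertex such that the subtree $T_w$ of descendants of $w$ has $|v(T_w)|\ge k$; by maximality of depth, each child subtree of $w$ has fewer than $k$ vertices, so we can greedily merge $w$ together with a sub-collection of its child subtrees to form a connected set $S\subset v(T_w)$ with $k\le |S|\le 2k-1$ (adding child subtrees one at a time, each of size $<k$, the running total crosses from $<k$ into $[k,2k-1]$). If $|v(T_w)|<2k$ just take $S=v(T_w)$. (3) Set $G'$ to be the sub-graph induced by $v(G)\setminus S$; since $S$ is a ``bottom'' piece of the rooted tree, $v(G)\setminus S$ still induces a connected subtree $T'$ of $T$ (possibly empty, in which case we are done), and $|v(T')|=|v(G)|-|S|\le |v(G)|-k$. (4) Apply the induction hypothesis to $T'$: it is covered by $\ceil{|v(T')|/k}\le\ceil{(|v(G)|-k)/k}=\ceil{|v(G)|/k}-1$ connected sub-graphs of size $\le 2k$. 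Adding the piece $S$ yields a cover of $G$ by at most $\ceil{|v(G)|/k}$ connected sub-graphs, each of size at most $2k$.

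\textbf{Main obstacle.} The delicate point is step (2)–(3): ensuring simultaneously that the peeled piece $S$ is connected \emph{and} large enough ($|S|\ge k$) \emph{and} that the residual graph $v(G)\setminus S$ remains connected. Rooting $T$ and working with the deepest heavy subtree $T_w$ is what makes all three hold: connectivity of $S$ comes from taking $w$ together with whole child subtrees; the size bound $k\le|S|\le 2k-1$ comes from each child subtree having $<k$ vertices (by the depth-maximality of $w$) together with $|v(T_w)|\ge k$; and connectivity of the complement is automatic because removing the full ``descendants of $w$'' part (or a union of some of them) from a rooted subtree leaves a subtree. One must handle the edge cases carefully — when $|v(G)|\le 2k$ take the trivial cover; when no vertex $w$ with $|v(T_w)|\ge k$ exists this already forces $|v(G)|<k\le 2k$; and when $v(G)\setminus S=\emptyset$ the induction bottoms out with the single piece $S$. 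I expect no serious analytic difficulty beyond this combinatorial bookkeeping.
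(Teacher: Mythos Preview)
The paper omits the proof of this proposition, citing \cite{Affonso.2021}, so there is no in-text argument to compare against. Your spanning-tree peeling strategy is the standard one and is essentially right, but there is a genuine gap in steps~(2)--(3) in the case $|v(T_w)|\ge 2k$. There your set $S$ consists of $w$ together with only \emph{some} of its child subtrees (since $|S|\le 2k-1<|v(T_w)|$), so at least one child subtree of $w$ is left out of $S$. Removing $S$ then disconnects those leftover child subtrees from the rest of $T$, because their only path to the root passed through $w\in S$. Hence your claim that ``$v(G)\setminus S$ still induces a connected subtree $T'$ of $T$'' is false in this case, and the induction breaks.

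The repair uses that the statement asks for a \emph{cover}, not a partition. Keep $w$ in the remainder: set $G'$ to be the subtree induced on $v(T)\setminus(S\setminus\{w\})$. Now only full child subtrees of $w$ have been deleted, so $G'$ is connected. For the counting you then need $|S\setminus\{w\}|\ge k$, i.e.\ $|S|\ge k+1$; achieve this by stopping the greedy accumulation when the running size first exceeds $k$ (rather than when it first reaches $k$). Since each child subtree has fewer than $k$ vertices and $|v(T_w)|\ge 2k$, this yields $k+1\le |S|\le 2k-1$, so $|v(G')|\le |v(G)|-k$ and the inductive bound $\ceil{|v(G')|/k}\le\ceil{|v(G)|/k}-1$ goes through. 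The case $|v(T_w)|<2k$ (take $S=v(T_w)$ and remove the whole subtree) is already fine as you wrote it.
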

We omit the proof since it is the same as in \cite{Affonso.2021}. Remember that, given ${G = (V,E) \in \mathscr{G}_n(\Lambda)}$, $\Lambda^G \coloneqq \Lambda \cap B_V$ denotes the area of $\Lambda$ covered by $G$. Remember also that, for $A\Subset \Z^d$ and $j\geq 1$, $\Gamma^r_j(A)$ are the partition elements removed at step $j$, in the construction presented in Section 2.  Using this construction we can prove the following lemma.

\begin{lemma}\label{Lemma: Big.clusters_2}
    Let $A\Subset \Z^d$, $\gamma\in\Gamma^r(A)$ and $j \geq 1$ be such that $\gamma\in \Gamma^r_j(A)$. Then, for any $\ell < j$ and $G_{r\ell}\in \mathscr{G}_{r\ell}(\gamma)$,
    \begin{equation}\label{Eq: Lower_bound_on_the_covering_of_gamma_G}
        2^{r(1-\frac{1}{d})\ell} \leq |\C_{r\ell}(\gamma^{G_{r\ell}})| 
    \end{equation}
\end{lemma}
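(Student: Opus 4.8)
The goal is a \emph{lower} bound on the number of $r\ell$-cubes needed to cover the part $\gamma^{G_{r\ell}}$ of the contour support that lies in a given connected component $G_{r\ell}$ of the graph $G_{r\ell}(\gamma)$, valid as long as $\ell$ is strictly below the step $j$ at which $\gamma$ is removed. The intuition is exactly the multiscale one: if $\gamma$ survived until step $j>\ell$, then at scale $r\ell$ every connected component $G_{r\ell}$ of its cube-graph has $|V(\gamma^{G_{r\ell}})| > 2^{r\ell(d+1)}$ (otherwise it would have been placed in $\mathscr{P}_\ell$ and removed at step $\ell$, contradicting $\gamma\in\Gamma^r_j(A)$ with $j>\ell$). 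So the volume covered by the cubes of $\C_{r\ell}(\gamma^{G_{r\ell}})$ is large, and since each $r\ell$-cube has at most $2^{r\ell d}$ points, this forces many cubes. The plan is to make this precise and then sharpen the trivial bound $|\C_{r\ell}(\gamma^{G_{r\ell}})|\geq |V(\gamma^{G_{r\ell}})|/2^{r\ell d}> 2^{r\ell}$ to the stated $2^{r(1-1/d)\ell}$, which is in fact \emph{weaker} than $2^{r\ell}$, so a priori the trivial counting already suffices --- but I suspect the statement is phrased this way because what one really wants (and what the next results use) is a bound phrased through the \emph{inner boundary} $\fint\mathfrak{C}$ or through $\fint\C_{r\ell}$, i.e. via the isoperimetric inequality. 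Let me lay out the route I would actually take.

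First I would fix $A$, $\gamma\in\Gamma^r_j(A)$, $\ell<j$ and $G_{r\ell}\in\mathscr{G}_{r\ell}(\gamma)$, and recall from the construction in Proposition \ref{Prop:Construction_(M,a,delta)_partition} that at step $\ell$ the components removed are exactly those $G\in\mathscr{G}_{r\ell}(A_\ell)$ with $|V(A_\ell^G)|\leq 2^{r\ell(d+1)}$. Since $\gamma\subset A_j\subset A_{\ell+1}$, the set $\gamma$ was not removed at any step $\leq\ell$; in particular the component $G_{r\ell}$ (which is a connected component of the cube-graph of $\gamma\subset A_{\ell+1}\subset A_\ell$) must satisfy $|V(\gamma^{G_{r\ell}})| > 2^{r\ell(d+1)}$ --- here one has to be slightly careful because the components at step $\ell$ are components of $A_\ell$, not of $\gamma$, but $\gamma^{G_{r\ell}}$ is contained in one such component $A_\ell^{G}$ and $|V(\gamma^{G_{r\ell}})|\le|V(A_\ell^G)|$, so if $|V(\gamma^{G_{r\ell}})|\le 2^{r\ell(d+1)}$ then $G_{r\ell}$ would merge into a removed component, forcing $\gamma\subset\bigcup_{n\le\ell}\Gamma^r_n(A)$, contradiction. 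This is the one place where one must argue carefully with the inductive construction; I expect it to be the main obstacle, mostly bookkeeping about how components at scale $r\ell$ of $A_\ell$ relate to those of $\gamma$.

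Second, with $|V(\gamma^{G_{r\ell}})| > 2^{r\ell(d+1)}$ in hand, I would apply the isoperimetric inequality in the form recalled in the remark after the edge-boundary definition, $|V(\gamma^{G_{r\ell}})|^{\frac{d-1}{d}}\leq |\fint V(\gamma^{G_{r\ell}})|$, together with the elementary fact that each $r\ell$-cube contributes at most $2^{r\ell d}$ points, to get
\begin{equation*}
    |\C_{r\ell}(\gamma^{G_{r\ell}})| \geq \frac{|V(\gamma^{G_{r\ell}})|}{2^{r\ell d}} > \frac{2^{r\ell(d+1)}}{2^{r\ell d}} = 2^{r\ell} \geq 2^{r(1-\frac{1}{d})\ell},
\end{equation*}
which is the claimed inequality (in fact a bit stronger). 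If instead the intended bound is genuinely via the boundary --- e.g. $2^{r(1-1/d)\ell}\le|\fint\C_{r\ell}(\gamma^{G_{r\ell}})|$ --- then I would combine the same volume lower bound with the isoperimetric inequality applied to the cube-region: covering $\gamma^{G_{r\ell}}$ needs at least $2^{r\ell}$ cubes, so the induced $d$-dimensional "discrete volume" is at least $2^{r\ell}$ cubes, hence its cube-boundary has at least $(2^{r\ell})^{(d-1)/d}=2^{r(1-1/d)\ell}$ cubes. Either way the estimate follows from one application of isoperimetry plus the density lower bound $2^{r\ell(d+1)}$ that is built into the step-$\ell$ removal rule. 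I would close by noting that the restriction $\ell<j$ is exactly what guarantees the density lower bound, and that for $\ell\ge j$ the corresponding covering can be a single cube so no such bound holds --- which is why the hypothesis appears.
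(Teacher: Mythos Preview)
Your first step---deducing $|V(\gamma^{G_{r\ell}})|>2^{r\ell(d+1)}$ from the fact that $\gamma$ was not removed at step $\ell<j$---is exactly what the paper uses. The gap is in your second step. The inequality
\[
|\C_{r\ell}(\gamma^{G_{r\ell}})|\;\ge\;\frac{|V(\gamma^{G_{r\ell}})|}{2^{r\ell d}}
\]
is false in general: the cubes in $\C_{r\ell}(\gamma^{G_{r\ell}})$ cover $\gamma^{G_{r\ell}}$, \emph{not} $V(\gamma^{G_{r\ell}})$. Picture $\gamma^{G_{r\ell}}$ as a thin spherical shell; then $|V(\gamma^{G_{r\ell}})|$ is the volume of the full ball, while the cubes touching the shell cover only the shell. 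So the ``trivial counting'' yielding $2^{r\ell}$ is simply unavailable, and this is exactly why the lemma only claims the weaker exponent $2^{r(1-1/d)\ell}$. Your alternative route inherits the same error, since it starts from the same unjustified premise that ``covering $\gamma^{G_{r\ell}}$ needs at least $2^{r\ell}$ cubes''.

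The paper's fix is to apply the trivial counting to $\C_{r\ell}(V(\gamma^{G_{r\ell}}))$ rather than to $\C_{r\ell}(\gamma^{G_{r\ell}})$, obtaining $|\C_{r\ell}(V(\gamma^{G_{r\ell}}))|>2^{r\ell}$, and then to pass from the covering of the volume to the covering of $\gamma^{G_{r\ell}}$ via the isoperimetric inequality \emph{at the cube level}. Identifying $r\ell$-cubes with points of $\Z^d$ gives
\[
|\C_{r\ell}(V(\gamma^{G_{r\ell}}))|\;\le\;|\fint\C_{r\ell}(V(\gamma^{G_{r\ell}}))|^{\frac{d}{d-1}}.
\]
The bridge back to $\gamma^{G_{r\ell}}$ is the observation that every inner-boundary cube of $\C_{r\ell}(V(\gamma^{G_{r\ell}}))$ must contain a point of $\gamma^{G_{r\ell}}$ (since $\fint V(\Lambda)\subset\Lambda$ for any finite $\Lambda$), so $|\fint\C_{r\ell}(V(\gamma^{G_{r\ell}}))|\le|\C_{r\ell}(\gamma^{G_{r\ell}})|$. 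Chaining these three inequalities gives $2^{r\ell(d+1)}\le 2^{r\ell d}|\C_{r\ell}(\gamma^{G_{r\ell}})|^{d/(d-1)}$, which is the claim. The isoperimetric step is not a cosmetic sharpening---it is the mechanism that converts information about the volume into information about the much smaller set $\gamma^{G_{r\ell}}$.
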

\begin{proof}
        Given $G_{r\ell}\in \mathscr{G}_{r\ell}(\gamma)$, by our construction of the contour, $2^{r(d+1)\ell} < |V(\gamma^{G_{r\ell}})|$. A trivial bound gives us $|V(\gamma^{G_{r\ell}})| \leq 2^{r\ell d}|\C_{r\ell}(V(\gamma^{G_{r\ell}}))|$. Associating each cube $C_m(x)$ to $x$, we get a one-to-one correspondence between $m$-cubes and lattice points that preserves neighbors, that is, two m-cubes $C_m(x)$ and $C_m(y)$ share a face if and only if $|x-y|=1$. We can therefore apply the isoperimetric inequality to get $|\C_{r\ell}(V(\gamma^{G_{r\ell}}))| \leq |\fint \C_{r\ell}(V(\gamma^{G_{r\ell}}))|^{\frac{d}{d-1}}\leq |\C_{r\ell}(\gamma^{G_{r\ell}})|^{\frac{d}{d-1}}$, where in the last equation we are using that every cube in the boundary of cubes must cover at least one point of $\gamma^{G_{r\ell}}$. We conclude that $2^{r(d+1)\ell} \leq 2^{r\ell d}|\C_{r\ell}(\gamma^{G_{r\ell}})|^{\frac{d}{d-1}}$, and \eqref{Eq: Lower_bound_on_the_covering_of_gamma_G} follows.
\end{proof}

As a corollary, we can recuperate a key lemma of \cite{Affonso.2021}, which is the following.
\begin{lemma}\label{Lemma: Big.clusters_1}
    Given $A\Subset \Z^d$, $n> 1$ and $\gamma\in\Gamma^r(A)$, if $|\mathscr{G}_{rn}(\gamma)|\geq 2$ then $|v(G_{rn}(\gamma))| \geq 2^r$ for every $G_{rn}(\gamma)\in \mathscr{G}_{rn}(\gamma)$ 
\end{lemma}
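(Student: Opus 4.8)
The plan is to locate the step of the multiscale construction at which $\gamma$ is removed, say $\gamma\in\Gamma^r_j(A)$ with $j\ge 1$ (this step is unique since the $\Gamma^r_n(A)$ are disjoint), and then to show that the covering graph $G_{rn}(\gamma)$ is connected for \emph{every} scale $n\ge j$. The hypothesis $|\mathscr{G}_{rn}(\gamma)|\ge 2$ therefore forces $n<j$, and on scales strictly below $j$ the bound of Lemma~\ref{Lemma: Big.clusters_2} is already strong enough to conclude.

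First I would record the elementary observations that, since the $m$-cubes tile $\Z^d$, the minimal covering $\C_m(\Lambda)$ of any $\Lambda\Subset\Z^d$ is exactly the set of $m$-cubes meeting $\Lambda$, and that each $rn$-cube is contained in a unique $r(n+1)$-cube. Writing $\gamma=A_j\cap B_{v(G)}$ for the connected component $G$ of $G_{rj}(A_j)$ produced by the construction in Proposition~\ref{Prop:Construction_(M,a,delta)_partition}, and using $\gamma\subset A_j$ together with $\gamma\subset B_{v(G)}$, one checks the two inclusions $\C_{rj}(\gamma)\subseteq v(G)$ and $v(G)\subseteq \C_{rj}(\gamma)$, so $\C_{rj}(\gamma)=v(G)$. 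Since the edge set of $G_{rn}(\cdot)$ depends only on its vertex set and on the fixed threshold $M2^{arn}$, this identification gives $G_{rj}(\gamma)=G$, hence $|\mathscr{G}_{rj}(\gamma)|=1$.

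Next I would propagate connectedness upward in the scale by a coarsening argument. The map $\phi$ sending each $rn$-cube of $\C_{rn}(\gamma)$ to the unique $r(n+1)$-cube containing it is onto $\C_{r(n+1)}(\gamma)$ and does not increase distances, so any path in $G_{rn}(\gamma)$, whose edges have length at most $M2^{arn}\le M2^{ar(n+1)}$, pushes forward to a walk in $G_{r(n+1)}(\gamma)$; starting from the base case $n=j$ this yields that $G_{rn}(\gamma)$ is connected for all $n\ge j$, i.e. $|\mathscr{G}_{rn}(\gamma)|=1$ there. In particular $|\mathscr{G}_{rn}(\gamma)|\ge 2$ implies $n<j$.

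Finally, with $n<j$, Lemma~\ref{Lemma: Big.clusters_2} applied with $\ell=n$ gives, for every component $G\in\mathscr{G}_{rn}(\gamma)$, the bound $2^{r(1-1/d)n}\le |\C_{rn}(\gamma^{G})|$; and the same tiling computation as above, now with $\gamma^{G}=\gamma\cap B_{v(G)}$, shows $\C_{rn}(\gamma^{G})=v(G)$, hence $|v(G)|\ge 2^{r(1-1/d)n}$. Since $n>1$ and $d\ge 2$ we have $(1-\tfrac1d)n\ge 1$, so $|v(G)|\ge 2^{r}$, which is the claim. The main obstacle is purely organizational: carefully matching the vertex sets $v(G)$ of the graph components with the minimal cube coverings $\C_{rj}(\gamma)$ and $\C_{rn}(\gamma^{G})$ coming out of the construction, and verifying the monotonicity of connectedness of $G_{rn}(\gamma)$ in the scale $n$; once these are in place the estimate is a one-line consequence of Lemma~\ref{Lemma: Big.clusters_2}.
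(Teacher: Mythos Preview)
Your argument is correct and is exactly the derivation the paper has in mind when it states this lemma as a corollary of Lemma~\ref{Lemma: Big.clusters_2}. You have carefully filled in the two points the paper leaves implicit: that $G_{rn}(\gamma)$ is connected for every $n\ge j$ (via the identification $\C_{rj}(\gamma)=v(G)$ at the removal step and the monotone coarsening map), and that $\C_{rn}(\gamma^{G_{rn}})=v(G_{rn})$ so that the bound $2^{r(1-1/d)n}\ge 2^r$ for $n\ge 2$, $d\ge 2$ closes the argument.
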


   The next proposition bounds the partial volume.
\begin{proposition}\label{Prop. Bound.on.V_r^l(gamma)}
    There exists a constant $b_3 \coloneqq b_3(d, M, r)$ such that, for any $A\Subset \Z^d$, $\gamma\in\Gamma^r(A)$ and $\ell \geq 0$,
    
     \begin{equation*}
        V_r^\ell(\gamma)\leq b_3 (\ell\vee 1)|\mathscr{C}_{r\ell}(\gamma)|. 
    \end{equation*}

\end{proposition}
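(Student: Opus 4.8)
The plan is to decompose the covering of $\gamma$ at each scale $rn$ (for $n$ ranging from $\ell$ up to $n_r(\gamma)$) into the connected components of the graph $G_{rn}(\gamma)$ and use the multiscale structure of the $(M,a)$-partition together with the isoperimetric-type bound of Lemma \ref{Lemma: Big.clusters_2}. Concretely, fix $A \Subset \Z^d$, $\gamma \in \Gamma^r(A)$ and let $j \geq 1$ be the step at which $\gamma$ is removed, i.e. $\gamma \in \Gamma^r_j(A)$; note $V_r^\ell(\gamma) = \sum_{n=\ell}^{n_r(\gamma)} |\mathscr{C}_{rn}(\gamma)|$. For each $n$ in this range, write $|\mathscr{C}_{rn}(\gamma)| = \sum_{G \in \mathscr{G}_{rn}(\gamma)} |\mathscr{C}_{rn}(\gamma^G)|$, partitioning the $rn$-cubes by the connected component of $G_{rn}(\gamma)$ they belong to.

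First I would treat the range $\ell \leq n < j$. For such $n$, Lemma \ref{Lemma: Big.clusters_2} gives $2^{r(1-\frac{1}{d})\ell} \leq 2^{r(1-\frac{1}{d})n} \leq |\mathscr{C}_{rn}(\gamma^G)|$ for every $G \in \mathscr{G}_{rn}(\gamma)$; more importantly, since every $rn$-cube contains $2^{rd(n-\ell)}$ cubes of scale $r\ell$, and the components $\gamma^G$ are disjoint, a counting argument shows that the total number of $rn$-cubes covering $\gamma$ is controlled by $|\mathscr{C}_{r\ell}(\gamma)|$ divided by the minimal number of $r\ell$-cubes in each component's covering. Using $|\mathscr{C}_{r\ell}(\gamma^G)| \geq 2^{r(1-\frac1d)\ell}$ from Lemma \ref{Lemma: Big.clusters_2} (applied at scale $\ell$, valid since $\ell < j$) and that a component at scale $rn$ covers at least $2^{r(1-\frac1d)n}$ cubes, I expect to obtain $|\mathscr{C}_{rn}(\gamma)| \leq c(d,M,r) \, 2^{-r(1-\frac1d)(n-\ell)} |\mathscr{C}_{r\ell}(\gamma)|$, and summing the geometric series over $n \geq \ell$ yields a bound of the form $c'(d,M,r)|\mathscr{C}_{r\ell}(\gamma)|$, absorbing the geometric tail into the constant. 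The range $j \leq n \leq n_r(\gamma)$ is where the factor $(\ell \vee 1)$ enters: here $\gamma$ has already been removed, so $\mathscr{G}_{rn}(\gamma)$ need not have large components, but one can still bound $|\mathscr{C}_{rn}(\gamma)| \leq |\mathscr{C}_{r(n-1)}(\gamma)|$ trivially (each cube at scale $rn$ is covered by cubes at a coarser scale)—wait, that is the wrong direction; instead I would bound $|\mathscr{C}_{rn}(\gamma)| \leq |\mathscr{C}_{r\ell}(\gamma)|$ for all $n \geq \ell$ since coarsening can only decrease the count, and note that the number of scales between $\ell$ and $n_r(\gamma)$ is at most $O(\ell \vee 1)$ when $\diam(\gamma)$ is polynomially bounded in $|\gamma|$ — but $\diam(\gamma)$ can be as large as $|\gamma|^{1+\frac{a}{d+1}(1+\frac{1}{d-1})}$, so $n_r(\gamma) - \ell = O(\log|\gamma|)$, not $O(\ell)$.

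To get the claimed linear-in-$\ell$ factor I therefore need to be more careful: the key point is that for $n$ in the ``large-component'' regime the geometric decay handles everything with no $\ell$-factor, and for the remaining scales (those $n$ with $\mathscr{G}_{rn}(\gamma)$ having only small components, which forces $n \leq j + O(1)$ by the multiscale construction — since once all components are small the removal step has passed), the number of such scales is $O(1)$, and on each of them $|\mathscr{C}_{rn}(\gamma)| \leq 2^{rd}|\mathscr{C}_{r(n+1)\vee\ell}(\gamma)|$ is comparable to a coarse-scale covering already accounted for. The factor $\ell \vee 1$ should then appear only to absorb the at most $\ell$ (in fact $O(1)$, but stated as $\ell\vee 1$ for safety) geometric-series heads that are not yet fully decayed when $n$ is close to $\ell$. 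I would follow the corresponding estimate in \cite{Affonso.2021} for the analogous partial-volume bound, adapting constants to the present choice $r = 4\lceil \log_2(a+1)\rceil + d + 1$ and $\delta = d+1$.

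The main obstacle I anticipate is bookkeeping the two regimes cleanly: ensuring that the ``large component'' lower bound $2^{r(d+1)n} < |V(\gamma^G)|$ from the construction (available only for $n < j$) meshes with the trivial coarsening bound for $n \geq j$ without accumulating more than an $O(\ell \vee 1)$ multiplicative loss, and verifying that $n_r(\gamma) - j = O(1)$ (equivalently, that once a single $rj$-cube would cover $\gamma$, the construction has already removed it — this is essentially Proposition \ref{Prop. Bound.on.C_rl(gamma)_Lucas}, i.e. $|\mathscr{C}_{rj}(\gamma)|$ must be $\geq 2$ for $\gamma$ to survive past step $j$, combined with Lemma \ref{Lemma: Big.clusters_1}). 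Once these structural facts are in place, the summation is a routine geometric-series estimate and the constant $b_3(d,M,r)$ is explicit.
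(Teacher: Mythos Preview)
Your proposal has a genuine gap. The claim that $n_r(\gamma) - j = O(1)$ is false: if $\gamma \in \Gamma^r_j(A)$, then $G_{rj}(\gamma)$ is connected with at most $|\gamma| \leq |V(\gamma)| \leq 2^{rj(d+1)}$ vertices, each pair of adjacent cubes at distance $\leq M2^{raj}$, so $\diam(\gamma)$ can be as large as $2M\,2^{rj(a+1)}$ (take a chain of $\sim 2^{rj}$ cubes). Thus $n_r(\gamma)$ can be of order $j(a+1)$, and the number of scales $n \in [j, n_r(\gamma)]$ is $O(j)$, not $O(1)$. Since $j$ is a property of $\gamma$ and need not be comparable to $\ell$, bounding those scales trivially by $|\mathscr{C}_{r\ell}(\gamma)|$ each gives an extra factor $O(j)$, which destroys the estimate. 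Your geometric decay claim $|\mathscr{C}_{rn}(\gamma)| \leq c\,2^{-r(1-\frac{1}{d})(n-\ell)}|\mathscr{C}_{r\ell}(\gamma)|$ for $\ell \leq n < j$ is also not justified by the sketch you give: Lemma~\ref{Lemma: Big.clusters_2} lower-bounds the size of each component at a \emph{single} scale, but you need a mechanism relating coverings at two different scales, and the natural one (a packing argument as in the proof of Proposition~\ref{Prop. Bound.on.C_rl(gamma)}) only contracts by a factor per application of $f(n) \approx n/\bar{a}$, i.e.\ logarithmically many steps from $n$ to $\ell$, not $n-\ell$ steps.

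The paper's route avoids $j$ entirely. It iterates the map $g(n) = \lfloor (n - 2 - \log_{2^r}(2M))/a \rfloor$, whose validity condition is the \emph{weaker} Lemma~\ref{Lemma: Big.clusters_1} (every component has $\geq 2^r$ vertices). The key structural fact is that this condition can fail only when $\mathscr{G}_{rg^m(n)}(\gamma)$ is a single component with fewer than $2^r$ vertices, which forces $n$ to lie in the top $O(1)$ scales near $n_r(\gamma)$ (this is the $l_2(n)=0$ analysis for $n < N_0 := n_r(\gamma) - \log_{2^r}(2Md^a) - 2$). So for all $n < N_0$ one can iterate $g$ from $n$ down to $\ell$; the number of iterations is $l_1(n) \approx \log_a(n/(\bar b + \ell))$, and each gains a factor $2^{-(r-d-1)}$, giving the \emph{polynomial} bound $|\mathscr{C}_{rn}(\gamma)| \lesssim \big((\bar b+\ell)/n\big)^{(r-d-1)/\log_2 a}\,|\mathscr{C}_{r\ell}(\gamma)|$. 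Summing this over $n \geq \ell$ produces exactly the factor $\ell$ in the statement. In short: split at $N_0 = n_r(\gamma) - O(1)$, not at $j$, and expect polynomial rather than exponential decay.
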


\begin{proof}
Start by noticing that $\gamma \in \Gamma^r(A)$ implies that $\Gamma^r(\gamma) = \{\gamma\}$. Let's assume first that $\ell\geq 2$. Define $g : \mathbb{N} \xrightarrow{} \Z$ by
\begin{equation}
    g(n)\coloneqq \floor*{\frac{n - 2 - \log_{2^r}(2M)}{a}}.
\end{equation}
It was proved in \cite[Proposition 3.13]{Affonso.2021} that 
\begin{equation}\label{Eq: Bound_c_n_by_C_g(n)}
    |\C_{rn}(\gamma)| \leq \frac{1}{2^{r-d-1}}|\C_{rg(n)}(\gamma)|,
\end{equation}
whenever $g(n)>0$, and every connected component of $G_{rg(n)}(\gamma)$ has more than $2^r -1$ vertices. This is equivalent,  by Lemma \ref{Lemma: Big.clusters_1}, to $|\mathscr{G}_{rg(n)}(\gamma)|\geq 2$ or $|\mathscr{G}_{rg(n)}(\gamma)|=1$ with $|v(G_{rg(n)}(\gamma))| \geq 2^r$. 
Consider then the auxiliary quantities
\begin{align*}
    &l_1(n)\coloneqq\max\{m : g^m(n)\geq \ell\} &\text{and} &&l_2(n)\coloneqq\max\{ m : |\mathscr{G}_{rg^m(n)}(\gamma)| = 1 \text{ and } |v(G_{rg^m(n)})|\leq 2^r-1\}.
\end{align*}

We first show that $l_2(n)$ is not zero for only a constant number of scales $n$. For any $m\leq l_2(n)$, as $\Sp{\gamma}\subset B_{\C_{rg^m(n)(\gamma)}}$, $\diam(\gamma)\leq \diam(B_{\C_{rg^m(n)(\gamma)}})$. Moreover, since the graph $G_{rg^m(n)}\in \mathscr{G}_{rg^m(n)}(\gamma)$ has $v(G_{rg^m(n)}) = \C_{rg^m(n)}(\gamma)$, and $|\C_{rg^m(n)(\gamma)}|\leq 2^{r}-1$.  For any $\Lambda,\Lambda^\prime\Subset \Z^d$, 
    \begin{equation*}
        \diam(\Lambda\cup \Lambda) \leq \diam(\Lambda) + \diam(\Lambda^\prime) + \dis(\Lambda,\Lambda^\prime),
    \end{equation*}
    and we can always extract a vertex from a connected graph in a way that the induced sub-graph is still connected, by removing a leaf of a spanning tree. Using this we can bound 
\begin{align}\label{Eq: Bound_diam_removing_trees}
    \diam(\gamma)\leq \diam(B_{\C_{rg^m(n)(\gamma)}}) & \leq \sum_{C_{rg^m(n)}\in v(G_{rg^m(n)})} \diam(C_{rg^{m}(n)}) + |v(G_i)|M2^{arg^m(n)} \nonumber\\
    &\leq(d2^{rg^m(n)} + Md^a2^{arg^m(n)})|\C_{rg^m(n)(\gamma)}|\leq 2Md^a2^{arg^m(n)+r}.
\end{align}
Applying the logarithm with respect to base $2^{r}$ we get
\begin{equation*}
    \log_{2^r}(\diam(\gamma)) \leq \log_{2^r}(2Md^a) + ag^m(n)+1 \leq \log_{2^r}(2Md^a) + \frac{n}{a^{m-1}} + 1
\end{equation*}
Assuming $\diam(\gamma)>2^{2r + 1}Md^a$, we can isolate the term depending on $m$ in the equation above and take the logarithm on both sides to get
\begin{equation*}
    m \leq 1 + \frac{\log_2(n) - \log_2(\log_{2^r}(\diam(\gamma)) - \log_{2^r}(2Md^a) - 1)}{\log_2(a)}.
\end{equation*}
Equation above holds for any element of $\{m : |\mathscr{G}_{rg^m(n)}(A)| = 1, |v(G_{rg^m(n)})|\leq 2^r-1\}$ thus it also holds for $l_2(n)$. This shows in particular that $l_2(n)=0$ for $n<\log_{2^r}(\diam(\gamma)) - \log_{2^r}(2Md^a) - 1$. Taking $N_0 = n_r(\gamma) - \log_{2^r}(2Md^a) - 2$, as $N_0\leq \log_{2^r}(\diam(\gamma)) - \log_{2^r}(2Md^a) - 1$ we can bound 
\begin{equation}\label{Eq: Bound_last_terms_of_V_r_l}
    \sum_{n=N_0}^{n_r(\gamma)}|\C_{rn}(\gamma)| \leq (\log_{2^r}(2Md^a)+2)|\C_{r\ell}(\gamma)|.
\end{equation}

We consider now $n<N_0$. Knowing that $l_2(n)=0$ and $|\C_{k}(\gamma)|\leq |\C_j(\gamma)|$, for all $j\leq k$, we get 
\begin{equation}\label{Eq: bound.on.rn.covering}
    |\C_{rn}(\gamma)|\leq \frac{1}{2^{(r-d-1)l_1(n)}}|\C_{r\ell}(\gamma)|.
\end{equation}
We claim that
\begin{equation}\label{Eq: lower.bound.on.l1}
    l_1(n) \geq \begin{cases}
                        0, &\text{ if }n\leq \overline{b}+\ell\\ 
                        \left\lfloor\frac{\log_2(n) - \log_2(\overline{b} + \ell)}{\log_2(a)}\right\rfloor, & \text{ if }n > \overline{b} + \ell, 
                \end{cases}
\end{equation}
where $\overline{b} = (a+2 + \log_{2^r}(2M))(a-1)^{-1}$. Given $n > \overline{b} + \ell$, consider
\begin{equation*}
    \Tilde{g}(n) = \frac{n - 2 - \log_{2^r}(2M)}{a} - 1.
\end{equation*}
It is clear that $g(n)\geq \Tilde{g}(n)$ and both functions are increasing, therefore $g^m(n)\geq \Tilde{g}^m(n)$ for every $m\geq 0$. As
\begin{equation*}
    \Tilde{g}^m(n) = \frac{n}{a^m} - b^\prime\frac{a^m - 1}{a^{m-1}(a-1)},
\end{equation*}
with $b^\prime = (a+2 + \log_{2^r}(2M))a^{-1}$, it is sufficient to have
\begin{equation*}
    \frac{n}{a^m} -\frac{a b^\prime}{(a-1)}\geq \ell.
\end{equation*}
We get the desired bound by applying the logarithm with base two in the equation above. The bounds \eqref{Eq: bound.on.rn.covering} and \eqref{Eq: lower.bound.on.l1} yields

\begin{align*}
    V_r^\ell(\gamma) &=  \sum_{n=\ell}^{\overline{b} + \ell-1}|\C_{rn}(\gamma)| +   \sum_{n=\overline{b} + \ell}^{N_0-1}|\C_{rn}(\gamma)| +  \sum_{n=N_0}^{n_r(\gamma)}|\C_{rn}(\gamma)| \\
    &\leq 
     \overline{b}|\C_{r\ell}(\gamma)| + |\C_{r\ell}(\gamma)|2^{r-d-1}(\overline{b}+\ell)^{\frac{r-d-1}{\log_2(a)}}\sum_{n=\overline{b} + \ell}\frac{1}{n^{\frac{r-d-1}{\log_2(a)}}} + (\log_{2^r}(2Md^a)+2)|\C_{r\ell}(\gamma)|\\
    &\leq (\overline{b} + \log_{2^r}(2Md^a) +2)|\C_{r\ell}(\gamma)| +|\C_{r\ell}(\gamma)| 2^{r-d-1}(\overline{b}+1)^{\frac{r-d-1}{\log_2(a)}}\ell^{\frac{r-d-1}{\log_2(a)}}\sum_{n=\ell + 1}^{\infty}\frac{1}{n^{\frac{r-d-1}{\log_2(a)}}}\\
    &\leq \left(\overline{b} + \log_{2^r}(2Md^a) +2+ 2^{r-d-1}(\overline{b}+1)^{\frac{r-d-1}{\log_2(a)}}\frac{\log_2(a)}{r-d-1+\log_2(a)}\right)\ell|\C_{r\ell}(\gamma)|,
\end{align*}
where in the last inequality we used the integral bound $$\sum_{n=\ell + 1}^{\infty}{n^{-\frac{r-d-1}{\log_2(a)}}}\leq \int_{\ell}^\infty {x^{-\frac{r-d-1}{\log_2(a)}}} dx = \frac{\log_2(a)}{r-d-1+\log_2(a)}\ell^{1 - \frac{r-d-1}{\log_2(a)}}.$$

If $\diam(\gamma)\leq 2^{2r + 1}M$, we have
\begin{equation*}
     V_r^\ell(\gamma) \leq (n_r(\gamma) - \ell+1)|\C_{r\ell}(\gamma)| \leq (3 + \log_{2^r}(2M))|\C_{r\ell}(\gamma)|.
\end{equation*}
Taking $b_3^\prime\coloneqq \max\{{2^{r-d+2}(2 + \frac{a}{d -1})(\overline{b} + \log_{2^r}(2Md^a) +3)^{\frac{r-d-1}{\log_2(a)}}}, 3 + \log_{2^r}(2M)\}$ we get the desired bound when $\ell\geq 2$. For $\ell=0$, a trivial bound yields  $V_r^0(\gamma) = 2|\gamma| + V_r^2(\gamma)\leq (2 + b_3^\prime 2)|\gamma|$. Similarly, for $\ell =1$,  $V_r^1(\gamma) = |\C_{r}(\gamma)| + V_r^2(\gamma)\leq (1+ b_3^\prime 2)|\mathscr{C}_{r}(\gamma)|$ and we conclude the proof by taking $b_3 \coloneqq 2(b_3^\prime +1)$.
\end{proof}

We then need to bound the minimal number of $r\ell$-cubes necessary to cover a contour. Using only Lemma \ref{Lemma: Big.clusters_1}, it is possible to prove the next proposition, in the same steps as in \cite[Proposition 3.13]{Affonso.2021}.

\begin{proposition}\label{Prop. Bound.on.C_rl(gamma)_Lucas}
    There exists a constant $b_4^{\prime\prime}\coloneqq b_4^{\prime\prime}(\alpha, d)$ such that for any $A\Subset \Z^d$, $\gamma\in\Gamma(A)$ and $1 \leq \ell\leq n_r(A)$, 
     \begin{equation*}
       |\C_{r\ell}(\gamma)|\leq b_4^{\prime\prime}\frac{|\gamma|}{\ell^{\frac{r-d-1}{\log_2(a)}}}.
    \end{equation*}
\end{proposition}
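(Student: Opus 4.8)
\textbf{Proof proposal for Proposition \ref{Prop. Bound.on.C_rl(gamma)_Lucas}.}

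The plan is to mimic closely the argument of \cite[Proposition 3.13]{Affonso.2021}, adapted to the present multiscale construction of $\Gamma^r(A)$. The essential input is Lemma \ref{Lemma: Big.clusters_1}: whenever $|\mathscr{G}_{rn}(\gamma)|\geq 2$, each connected component $G_{rn}(\gamma)$ has at least $2^r$ vertices. Recall the function $g(n) = \lfloor (n-2-\log_{2^r}(2M))/a\rfloor$ from the proof of Proposition \ref{Prop. Bound.on.V_r^l(gamma)} and the key inequality \eqref{Eq: Bound_c_n_by_C_g(n)}, proved in \cite{Affonso.2021}, that $|\C_{rn}(\gamma)|\leq 2^{-(r-d-1)}|\C_{rg(n)}(\gamma)|$ whenever $g(n)>0$ and every connected component of $G_{rg(n)}(\gamma)$ has more than $2^r-1$ vertices; by Lemma \ref{Lemma: Big.clusters_1} this last condition is automatic once $|\mathscr{G}_{rg(n)}(\gamma)|\geq 2$, and when $|\mathscr{G}_{rg(n)}(\gamma)|=1$ it is the statement that the single component is large. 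So first I would fix $\gamma\in\Gamma^r(A)$ and the step $j\geq 1$ with $\gamma\in\Gamma^r_j(A)$, and split the analysis according to whether $\ell<j$ or $\ell\geq j$. For $\ell\geq j$ the set $\gamma$ occupies fewer than $2^{r(d+1)j}$ volume-points and Lemma \ref{Lemma: Big.clusters_1} fails only for a bounded number of scales, giving the bound directly.

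For the main range $1\leq \ell<j$ I would iterate \eqref{Eq: Bound_c_n_by_C_g(n)}. Define $l_1(n)=\max\{m: g^m(n)\geq \ell\}$ as in the proof of Proposition \ref{Prop. Bound.on.V_r^l(gamma)}; by the same computation carried out there (iterating the linear lower bound $\tilde g(n)=(n-2-\log_{2^r}(2M))/a-1$ on $g$), one gets the lower bound \eqref{Eq: lower.bound.on.l1}, namely $l_1(n)\geq \lfloor (\log_2 n-\log_2(\bar b+\ell))/\log_2 a\rfloor$ for $n>\bar b+\ell$, with $\bar b=(a+2+\log_{2^r}(2M))(a-1)^{-1}$. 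Applying \eqref{Eq: Bound_c_n_by_C_g(n)} exactly $l_1(n)$ times, and using $|\C_{rg^{l_1(n)}(n)}(\gamma)|\leq |\C_{r\ell}(\gamma)|$ since $g^{l_1(n)}(n)\geq \ell$, yields
\begin{equation*}
    |\C_{rn}(\gamma)| \leq 2^{-(r-d-1)l_1(n)}|\C_{r\ell}(\gamma)| \leq 2^{-(r-d-1)l_1(n)}|\gamma|,
\end{equation*}
the last step just bounding $|\C_{r\ell}(\gamma)|\leq |\C_{0}(\gamma)|=|\gamma|$ crudely — but we actually want the reverse direction, so instead I would specialize this to $n=\ell$ being replaced by the role of $n_r(A)$: more precisely, one should run the iteration starting from the top scale. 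The cleanest route: since $\gamma$ has diameter at most a constant times $2^{raj}$ and $j$ is bounded above in terms of $|\gamma|$ (as in the diameter proposition and Remark in the text), one has $n_r(A)\leq \ell + C\log_2 |\gamma|$ type control; then applying the chain $|\C_{r\ell}(\gamma)|$ against $|\C_{r\ell'}(\gamma)|$ for a suitable $\ell'$ with $\ell' \approx$ the scale where $G$ first disconnects, and combining with $|\C_{r\ell'}(\gamma)|\leq |\gamma|/2^{r(d+1)}\cdots$, produces the polynomial gain $\ell^{-(r-d-1)/\log_2 a}$. The bookkeeping is identical to \cite[Prop.~3.13]{Affonso.2021}; I would invoke it, flagging only that the present construction removes components of density $\leq 2^{rn(d+1)}$ rather than size $\leq 2^r$, which strengthens rather than weakens the needed inequalities, and that our larger choice of $r=4\lceil\log_2(a+1)\rceil+d+1$ only makes the exponent $r-d-1$ bigger.

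The main obstacle is purely organizational: one must be careful that the iteration of $g$ is applied a number of times that genuinely scales like $\log_2(\ell)/\log_2(a)$ and that at each application the hypothesis of \eqref{Eq: Bound_c_n_by_C_g(n)} (large components, or $g(n)>0$) holds; this is exactly where Lemma \ref{Lemma: Big.clusters_1} and the bound on $l_2(n)$ from the proof of Proposition \ref{Prop. Bound.on.V_r^l(gamma)} are used, so I would reuse those two auxiliary estimates verbatim. Collecting the constants, set $b_4^{\prime\prime}\coloneqq b_4^{\prime\prime}(\alpha,d)$ to absorb $2^{r-d-1}$, $\bar b$, and the constant from the diameter bound, which closes the proof.
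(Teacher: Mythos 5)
Your overall strategy is the one the paper intends: the proposition is proved by iterating the scale-reduction inequality \eqref{Eq: Bound_c_n_by_C_g(n)} a number of times of order $\log_2(\ell)/\log_2(a)$, with Lemma \ref{Lemma: Big.clusters_1} guaranteeing its hypothesis and the $l_2$-type estimate controlling the exceptional scales, exactly as in \cite[Proposition 3.13]{Affonso.2021}. The paper itself does not write this proof out, so matching the cited source is the right target.

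However, your execution has a concrete gap that you half-notice but then repair incorrectly. The chain you display,
$|\C_{rn}(\gamma)| \leq 2^{-(r-d-1)l_1(n)}|\C_{r\ell}(\gamma)|$, is the iteration used in Proposition \ref{Prop. Bound.on.V_r^l(gamma)}: it descends from a general scale $n$ and \emph{stops at} $\ell$, which compares $|\C_{rn}(\gamma)|$ to $|\C_{r\ell}(\gamma)|$ and gives nothing about $|\C_{r\ell}(\gamma)|$ versus $|\gamma|$. Your proposed fix — invoking the diameter bound, $n_r(A)$, and ``running the iteration from the top scale'' — does not produce the $\ell$-dependence either, since the gain per iteration is a fixed factor $2^{-(r-d-1)}$ and the number of iterations from the top down to $\ell$ scales with $\log_2(n_r(A)/\ell)$ rather than $\log_2(\ell)$. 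The correct move is the opposite descent: start at $n=\ell$ and iterate $g$ all the way down to scale $0$. Setting $m(\ell)\coloneqq\max\{m : g^m(\ell)\geq 0\}$ (the analogue, with $g$ in place of $f$, of the quantity $m(\ell)$ in the paper's proof of Proposition \ref{Prop. Bound.on.C_rl(gamma)}), one gets
\begin{equation*}
|\C_{r\ell}(\gamma)| \;\leq\; 2^{-(r-d-1)\,m(\ell)}\,\bigl|\C_{rg^{m(\ell)}(\ell)}(\gamma)\bigr| \;\leq\; 2^{-(r-d-1)\,m(\ell)}\,|\C_{0}(\gamma)| \;=\; 2^{-(r-d-1)\,m(\ell)}\,|\gamma|,
\end{equation*}
and the lower bound $m(\ell)\geq \lfloor(\log_2(\ell)-\log_2(\bar b))/\log_2(a)\rfloor$ for $\ell>\bar b$ (proved exactly as your bound on $l_1$) converts $2^{-(r-d-1)m(\ell)}$ into $C\,\ell^{-(r-d-1)/\log_2(a)}$, with the finitely many scales $\ell\leq\bar b$ and the $l_2$-exceptional scales absorbed into $b_4^{\prime\prime}$. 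With that correction the argument closes.
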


Next, we can improve this upper bound using our construction. This is the most relevant property of the new contours. 
\begin{proposition}\label{Prop. Bound.on.C_rl(gamma)}
    There exists constants $b_4\coloneqq b_4(\alpha, d)$ and $b_4^\prime\coloneqq b_4^\prime(\alpha, d)$  such that for any $A\Subset \Z^d$, $\gamma\in\Gamma^r_j(A)$ and $0 \leq \ell<j$,
    
     \begin{equation}\label{Eq: Bound.on.C_rl(gamma)_small_l}
       |\C_{r\ell}(\gamma)|\leq b_4\frac{(\ell \vee 1)^{\kappa}}{2^{ra^\prime\ell}}|\gamma|,
    \end{equation}
    with $a^\prime \coloneqq \frac{(1-\frac{1}{d})}{a -\frac{1}{d}}$ and $\kappa \coloneqq \frac{d+1 + r(1-\frac{1}{d}) (a+2 - d^{-1}+ \log_{2^r}(2M))(a-d^{-1})^{-1}}{\log_2(a + 1 -d^{-1})} $.
    Moreover, for $\ell\geq j$
    \begin{equation}\label{Eq: Bound.on.C_rl(gamma)_large_l}
        |\C_{r\ell}(\gamma)|\leq b_4^\prime \ell^{\kappa} \left(\frac{|\gamma|}{2^{r\frac{a^\prime}{a} \ell}} \vee 1\right).
    \end{equation}
\end{proposition}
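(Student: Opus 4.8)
Fix $\gamma \in \Gamma^r_j(A)$. Both inequalities will be obtained by iterating the coarsening estimate of Affonso--Bissacot--Endo--Handa, i.e. \eqref{Eq: Bound_c_n_by_C_g(n)}, $|\C_{rn}(\gamma)| \le 2^{-(r-d-1)}|\C_{rg(n)}(\gamma)|$, which holds whenever $g(n)>0$ and every connected component of $G_{rg(n)}(\gamma)$ has more than $2^r-1$ vertices, where $g(n)=\lfloor (n-2-\log_{2^r}(2M))/a\rfloor$. The decisive new input is Lemma \ref{Lemma: Big.clusters_2}: for every scale $m<j$ and every $G_{rm}\in\mathscr{G}_{rm}(\gamma)$ one has $|\C_{rm}(\gamma^{G_{rm}})|\ge 2^{r(1-1/d)m}$, and since $d\ge 2$ this is $\ge 2^{2r(1-1/d)}\ge 2^r$ as soon as $m\ge 2$. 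Hence, when $\ell<j$, the hypothesis of \eqref{Eq: Bound_c_n_by_C_g(n)} is automatically satisfied at every scale $n$ with $2\le g(n)<j$, so the recursion may be chained all the way down through the scales below $j$ without ever checking component fatness by hand. This is precisely the improvement over the situation in Proposition \ref{Prop. Bound.on.C_rl(gamma)_Lucas}, where the polynomial loss comes from having to stop the iteration early.

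\textbf{The bound \eqref{Eq: Bound.on.C_rl(gamma)_small_l}, $\ell<j$.} The cases $\ell\in\{0,1\}$ are absorbed by the trivial estimate $|\C_{r\ell}(\gamma)|\le 2|\gamma|$ (with $b_4$ large), so assume $2\le\ell<j$. The idea is to set up a unit-step coarsening contraction valid on the whole range $[2,j)$: using that every scale-$m$ component below $j$ is geometrically substantial (Lemma \ref{Lemma: Big.clusters_2}), together with the isoperimetric cube estimates of Lemma \ref{Lemma: Geo.discreta.1} and the accumulation bound $V_r^\ell(\gamma)\le b_3\,(\ell\vee1)\,|\C_{r\ell}(\gamma)|$ of Proposition \ref{Prop. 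Bound.on.V_r^l(gamma)}, one shows that passing from scale $\ell-1$ to scale $\ell$ loses a multiplicative factor bounded below by $2^{ra^\prime}$ up to a polynomial correction in $\ell$ (the exponent $a^\prime=(1-\tfrac1d)/(a-\tfrac1d)$ being exactly what survives when the finer scale is only ``$\tfrac1a$-fraction'' of the coarser one through the recursion). Iterating this from scale $2$ up to scale $\ell$, and using the trivial bound $|\C_{r2}(\gamma)|\le 2|\gamma|$ as the base case, produces $|\C_{r\ell}(\gamma)|\le b_4\,(\ell\vee1)^{\kappa}\,2^{-ra^\prime\ell}\,|\gamma|$; the polynomial factor $(\ell\vee1)^\kappa$ and the constant $b_4$ come from collecting the geometric-series constants of Proposition \ref{Prop. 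Bound.on.V_r^l(gamma)} exactly as in the proof there, so in particular $b_4$ and $\kappa$ depend only on $\alpha$ and $d$ through $a$, $r$ and $M$.

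\textbf{The bound \eqref{Eq: Bound.on.C_rl(gamma)_large_l}, $\ell\ge j$.} Here I would use that $\ell\mapsto|\C_{r\ell}(\gamma)|$ is non-increasing, so $|\C_{r\ell}(\gamma)|\le|\C_{r(j-1)}(\gamma)|$, combined with the fact that a contour removed at step $j$ satisfies $|V(\gamma)|\le 2^{rj(d+1)}$, whence, extracting leaves from a spanning tree of $G_{rj}(\gamma)$ as in \eqref{Eq: Bound_diam_removing_trees}, $\diam(\gamma)\le C\,|V(\gamma)|^{1/d}\le C\,2^{rj(d+1)/d}$. Thus for $\ell\ge j$ all of $\gamma$ lies in a region of this diameter, giving $|\C_{r\ell}(\gamma)|\le (C\,2^{rj(d+1)/d-r\ell}+1)^d$. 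Interpolating this with the trivial bound $|\C_{r\ell}(\gamma)|\le|\gamma|$ and with the lower bound $|\gamma|\ge 2^{r(1-1/d)(j-1)}$ furnished by Lemma \ref{Lemma: Big.clusters_2} at scale $j-1$, one reaches \eqref{Eq: Bound.on.C_rl(gamma)_large_l}: the slower decay rate $a^\prime/a$ is the price of passing through the diameter estimate (an extra scale factor $a$), and the ``$\vee\,1$'' simply covers the regime $\ell\ge n_r(\gamma)$ where a single $r\ell$-cube already suffices.

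\textbf{Main obstacle.} The genuinely delicate step is the unit-step contraction in \eqref{Eq: Bound.on.C_rl(gamma)_small_l}: one must verify that the coarsening estimate can be chained $\Theta(\ell)$ times (not merely $O(\log\ell)$ times as in Proposition \ref{Prop. Bound.on.C_rl(gamma)_Lucas}) throughout $2\le\ell<j$, and this is exactly the place where the disconnectedness of our contours has to be confronted — a contour whose support is a sparse cloud of incorrect ``clumps'' is coarsened inefficiently, and it is only the requirement imposed by the multiscale construction, namely that \emph{every} component at \emph{every} scale below $j$ has volume exceeding $2^{rm(d+1)}$ (hence, by Lemma \ref{Lemma: Big.clusters_2}, at least $2^{r(1-1/d)m}$ covering cubes), that forbids this and forces the geometric decay with rate $ra^\prime$. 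Everything else is routine bookkeeping with geometric series and the cube-lattice isoperimetric inequality, following \cite{FFS84} and \cite{Affonso.2021}.
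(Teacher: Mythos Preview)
Your plan for \eqref{Eq: Bound.on.C_rl(gamma)_small_l} misidentifies where the improvement over Proposition~\ref{Prop. Bound.on.C_rl(gamma)_Lucas} comes from. Iterating \eqref{Eq: Bound_c_n_by_C_g(n)} with $g$ gives a \emph{constant} contraction factor $2^{-(r-d-1)}$ per step and only $O(\log\ell/\log a)$ steps, which yields exactly the polynomial decay of Proposition~\ref{Prop. Bound.on.C_rl(gamma)_Lucas}; running it ``more times'' below $j$ does not help, because the number of iterates of $g$ before reaching $0$ is the same whether or not $\ell<j$. The paper instead introduces a different function $f(\ell)=\lfloor(\ell-1-\log_{2^r}(2M))/(a+1-\tfrac1d)\rfloor$ and proves a \emph{scale-dependent} contraction $|\C_{r\ell}(\gamma)|\le 2^{d+1}\,2^{-r(1-1/d)f(\ell)}\,|\C_{rf(\ell)}(\gamma)|$: by Lemma~\ref{Lemma: Big.clusters_2} each component at scale $f(\ell)<j$ has at least $2^{r(1-1/d)f(\ell)}$ cubes (not merely $2^r$), and after chopping each component via Proposition~\ref{Prop. partition.a.graph} into pieces of that size, each piece has diameter $\le 2^{r\ell}$ and is covered by $2^d$ $r\ell$-cubes. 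Iterating gives $|\C_{r\ell}|\le 2^{(d+1)m(\ell)}2^{-r(1-1/d)\sum_{i}f^i(\ell)}|\gamma|$, and since $\sum_i f^i(\ell)\approx \ell/(a-\tfrac1d)$ one obtains the exponential rate $a'$. Your ``unit-step'' contraction $|\C_{r\ell}|\le 2^{-ra'}|\C_{r(\ell-1)}|$ is neither proved nor true in general (a component at scale $\ell-1$ has diameter up to $\sim M2^{ra(\ell-1)}\gg 2^{r\ell}$, so it does not collapse into few $r\ell$-cubes), and Lemma~\ref{Lemma: Geo.discreta.1} and Proposition~\ref{Prop. Bound.on.V_r^l(gamma)} are not used here at all.

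For \eqref{Eq: Bound.on.C_rl(gamma)_large_l} the monotonicity step $|\C_{r\ell}|\le|\C_{r(j-1)}|$ is correct and indeed handles $j\le\ell\le aj$, but the diameter bound $\diam(\gamma)\le C|V(\gamma)|^{1/d}$ is false for elongated or disconnected sets, so your covering estimate $(C\,2^{rj(d+1)/d-r\ell}+1)^d$ is unfounded. The paper instead exploits that $G_{rj}(\gamma)$ is \emph{connected} (since $\gamma$ was removed at step $j$): for $\ell>aj$ it partitions $G_{rj}(\gamma)$ via Proposition~\ref{Prop. partition.a.graph} into connected subgraphs of size $\lceil 2^{r\ell}/2^{raj}\rceil$, each of diameter $\le 8Md\,2^{r\ell}$ and hence covered by $O(1)$ $r\ell$-cubes, giving $|\C_{r\ell}|\le(16Md)^d\lceil |\C_{rj}(\gamma)|\,2^{r(aj-\ell)}\rceil$; combining with the small-$\ell$ bound at $j-1$ and the inequality $2^{r(aj-\ell)}/2^{ra'j}\le 2^{-r(a'/a)\ell}$ for $\ell>aj$ then produces the claimed rate $a'/a$.
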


\begin{proof}
     Lets first consider $\ell < j$. Define $f : \mathbb{N} \xrightarrow{} \Z$ by
\begin{equation}
    f(\ell)\coloneqq \floor*{\frac{\ell - \log_{2^r}(2M) - 1}{a + (1-\frac{1}{d})}}.
\end{equation}
Following the proof of \eqref{Eq: Bound_c_n_by_C_g(n)} in \cite[Proposition 3.13]{Affonso.2021}, we can show that 
\begin{equation}\label{Eq: Bound_C_l_by_C_f(l)}
    |\C_{r\ell}(\gamma)| \leq \frac{2^{d+1}}{2^{r(1-\frac{1}{d})f(\ell)}}|\C_{rf(\ell)}(\gamma)|.
\end{equation}
 By definition, $\mathscr{G}_{rf(\ell)}(\gamma)$ is the set of all connected components of $G_{rf(\ell)}(\gamma)$, hence
    \begin{equation}\label{Eq: 3.16.Lucas}
        |\C_{rf(\ell)}(\gamma)| = 2^{r(1-\frac{1}{d})f(\ell)}\sum_{G\in \mathscr{G}_{rf(\ell)}(\gamma)}\frac{|v(G)|}{2^{r(1-\frac{1}{d})f(\ell)}}.
    \end{equation}
    Proposition \ref{Prop. partition.a.graph} guarantees that we can split $G$ into sub-graphs $G_i$, with $1\leq i\leq \ceil{v(G)/2^{r(1-\frac{1}{d})f(\ell)}}$ and $|v(G_i)|\leq 2^{r(1-\frac{1}{d})f(\ell)+1}$. Proceeding as in \eqref{Eq: Bound_diam_removing_trees}, we can bound 
    \begin{align*}
        \diam(B_{v(G_i)}) &\leq \sum_{C_{rf(\ell)}\in v(G_i)} \diam(C_{rf(\ell)}) + |v(G_i)|M2^{arf(\ell)}\\
        &\leq |v(G_i)|(d2^{rf(\ell)} + M2^{arf(\ell)}) \leq 2M2^{r[f(\ell)(1-\frac{1}{d}) + a] + 1}\\
        &\leq 2^{r\ell}.
    \end{align*}
    
    The last inequality holds since $M,a,r\geq 1$. This shows that every $G_i$ can be covered by a cube with center in $\Z^d$ and side length $2^{r\ell}$. Every such cube can be covered by at most $2^d$ $r\ell$-cubes. Indeed, it is enough to consider the simpler case when the cube is of the form
    \begin{equation}\label{Cube.Q}
        \prod_{i=1}^d[q_i , q_i + 2^{r\ell})\cap\Z^d,
    \end{equation}
    with $q_i\in\{0, 1, \dots, 2^{r\ell}-1\}$, for $1\leq i \leq d$. It is easy to see that \begin{equation*}
        [q_i , q_i + 2^{r\ell }]\subset [0, 2^{r\ell})\cup [2^{r\ell}, 2^{r\ell +1}). 
    \end{equation*} 
    Taking the products for all $1\leq i\leq d$, we get $2^d$ $r\ell$-cubes that covers \eqref{Cube.Q}. 
    We conclude that, to cover a connected component $G\in \mathscr{G}_{rf(\ell)}$, we need at most $2^d\ceil{|v(G)|/2^{r(1-\frac{1}{d})f(\ell)}}$ $rf(\ell)$-cubes, yielding us
    \begin{equation}\label{Eq: 3.18.Lucas}
        |\C_{r\ell}(\gamma)|\leq |\C_{r\ell}( B_{\C_{rf(\ell)}(\gamma)})| \leq \sum_{G\in \mathscr{G}_{rf(\ell)}} |\C_{r\ell}(v(G))| \leq \sum_{G\in \mathscr{G}_{rf(\ell)}} 2^d\left\lceil{\frac{|v(G)|}{2^{r(1-\frac{1}{d})f(\ell)}}}\right\rceil. 
    \end{equation}
    When every connected component of $G_{rf(\ell)}(\gamma)$ has more than $2^{r(1-\frac{1}{d})f(\ell)}$ vertices, we can bound 
    \begin{equation*}
        \frac{1}{2}\left\lceil{\frac{|v(G)|}{2^{r(1-\frac{1}{d})f(\ell)}}}\right\rceil \leq {\frac{|v(G)|}{2^{r(1-\frac{1}{d})f(\ell)}}}.
    \end{equation*}
    Together with Inequalities \eqref{Eq: 3.16.Lucas}  and \eqref{Eq: 3.18.Lucas}, this yields
    \begin{equation}\label{Eq: Bound_C_rl_by_gamma_with_f(l)}
         |\C_{r\ell}(\gamma)| \leq \sum_{G\in \mathscr{G}_{rf(\ell)}} 2^{d+1} {\frac{|v(G)|}{2^{r(1-\frac{1}{d})f(\ell)}}}= \frac{2^{d+1}}{2^{r(1-\frac{1}{d})f(\ell)}}|\C_{rf(\ell)}(\gamma)|.
    \end{equation}

    Equation \eqref{Eq: Bound_C_l_by_C_f(l)} can be iterated as long as $f(\ell)$ is positive. Considering then the auxiliary quantity
    \begin{equation*}
        m(\ell)\coloneqq\max\{m : f^m(\ell)\geq 0\},  
    \end{equation*}
we have 
\begin{equation}\label{Eq: Bound_C_rl_by_gamma_with_f(l)_2}
    |\C_{r\ell}(\gamma)| \leq \frac{2^{(d+1)m(\ell)}}{2^{r\left(1-\frac{1}{d}\right)\left(\sum_{i=1}^{m(\ell)}f^i(\ell)\right)}}|\gamma|,
\end{equation}
so we need upper and lower estimates for $m(\ell)$.
We claim that
\begin{equation}\label{Eq: lower.bound.on.m}
    m(\ell) \geq \begin{cases}
                        0, &\text{ if }\ell\leq \overline{b}\\ 
                        \left\lfloor\frac{\log_2(\ell) - \log_2(\overline{b})}{\log_2(a + (1-\frac{1}{d}))}\right\rfloor, & \text{ if }\ell > \overline{b}, 
                \end{cases}
\end{equation}
where $\overline{b} = (\overline{a}+1 + \log_{2^r}(2M))(\overline{a}-1)^{-1}$ and $\overline{a}\coloneqq a + (1-\frac{1}{d})$. Given $\ell > \overline{b}$, consider
\begin{equation*}
     \overline{f}(\ell) = \frac{\ell - 1 - \log_{2^r}(2M)}{a + (1 - \frac{1}{d})} - 1.
\end{equation*}
It is clear that $f(\ell)\geq \overline{f}(\ell)$ and both functions are increasing, therefore $f^m(\ell)\geq \overline{f}^m(\ell)$ for every $m\geq 0$.  As
\begin{equation*}
       \overline{f}^m(\ell) = \frac{\ell}{\overline{a}^m} - b^\prime\frac{\overline{a}^m - 1}{\overline{a}^{m-1}(\overline{a}-1)},
\end{equation*}
with $b^\prime = (\overline{a}+1 + \log_{2^r}(2M))\overline{a}^{-1}$, it is sufficient to have
\begin{equation*}
    \frac{\ell}{\overline{a}^m} -\frac{\overline{a} b^\prime}{(\overline{a}-1)}\geq 0.
\end{equation*}
We get the desired bound by applying the logarithm with base two in the equation above. Moreover, we can bound
\begin{align*}
    \sum_{i=1}^{m(\ell)}f^{i}(\ell) & \geq  \sum_{i=1}^{m(\ell)}\frac{\ell}{\overline{a}^i} - m(\ell)\frac{\overline{a}b^\prime}{\overline{a} - 1} = \frac{1}{\overline{a}}(\frac{1-\frac{1}{\overline{a}^{m(\ell)}}}{{1-\frac{1}{\overline{a}}}})\ell - m(\ell)\overline{b} \\
    &\geq \frac{1}{\overline{a}-1}(1-\frac{1}{\overline{a}^{m(\ell)}})\ell - m(\ell)\overline{b} \geq \frac{1}{\overline{a}-1}(\ell-{\overline{a}\overline{b}}) - m(\ell)\overline{b}
\end{align*}

For the upper bound on $m(\ell)$, take $\Tilde{f}(\ell) \coloneqq \frac{\ell}{a + (1-\frac{1}{d})}$. As $f(\ell)\leq \Tilde{f}(\ell)$ and $\Tilde{f}$ is increasing, for every $m\geq 0$,  $f^m(\ell)\leq \Tilde{f}^m(\ell)$. Notice that, if $\Tilde{f}^m(\ell)\leq 1$, $f^{m+1}(\ell)<0$, and therefore $m+1>m(\ell)$. As $\Tilde{f}^m(\ell)\leq 1$ if and only if $\ell \leq [a + (1-\frac{1}{d})]^m$, taking $m=\left\lceil\frac{\log_2(\ell)}{\log_2\left(a + (1-\frac{1}{d})\right)}\right\rceil$ we get $\left\lceil\frac{\log_2(\ell)}{\log_2\left(a + (1-\frac{1}{d})\right)}\right\rceil + 1 > m(\ell)$.
Applying this bound on \eqref{Eq: Bound_C_rl_by_gamma_with_f(l)_2} we conclude that 
\begin{equation}\label{Eq: bound_on_C_ell_covering_l_geq_ab}
     |\C_{r\ell}(\gamma)| \leq \frac{2^{d+1 + r(1-\frac{1}{d})\left(\frac{\overline{a}}{\overline{a}-1} + 1\right)\overline{b}}\ell^{\frac{d+1 + r(1-\frac{1}{d})\overline{b}}{\log_2(\overline{a})}}}{2^{r(1-\frac{1}{d})\frac{1}{\overline{a}-1}\ell}}|\gamma|,
\end{equation}
for $\ell>\overline{b}$. When $\ell\leq\overline{b}$, we can take $\overline{b}_4\coloneqq \min\{{ (j\vee 1)^{\frac{d+1 + r(1-\frac{1}{d})\overline{b}}{\log_2(\overline{a})}}{2^{-r(1-\frac{1}{d})\frac{1}{\overline{a}-1}j}}} : 0\leq j \leq \overline{b}\}$ and then 
\begin{equation*}
       |\C_{r\ell}(\gamma)| \leq |\gamma|\leq \frac{1}{\overline{b}_4}\frac{(\ell\vee 1)^{\frac{d+1 + r(1-\frac{1}{d})\overline{b}}{\log_2(\overline{a})}}}{2^{r(1-\frac{1}{d})\frac{1}{\overline{a}-1}\ell}}|\gamma|.
\end{equation*}
This, together with equation \eqref{Eq: bound_on_C_ell_covering_l_geq_ab}, yields inequality \eqref{Eq: Bound.on.C_rl(gamma)_small_l} with $b_4 \coloneqq \max\{ 2^{d+1 + r(1-\frac{1}{d})\left(\frac{\overline{a}}{\overline{a}-1} + 1\right)\overline{b}}, \overline{b}_4^{-1}\}$.

To prove inequality \eqref{Eq: Bound.on.C_rl(gamma)_large_l}, we first notice that for any $\ell\geq j$,
\begin{align}\label{Eq:  Bound.on.C_rl(gamma)_large_l_aux_1}
    |\C_{r\ell}(\gamma)|\leq |\C_{r(j-1)}(\gamma)| \leq  b_42^{ra^\prime}\frac{j^\kappa}{2^{ra^\prime j}}|\gamma|.
\end{align}
When $\ell \leq aj$, this already gives us \eqref{Eq:  Bound.on.C_rl(gamma)_large_l}. For $\ell > aj$, we can give a better bound once we notice that, by the construction of the contour, the graph $G_{rj}(\gamma)$ is connected and its vertices are the covering $\C_{rj}(\gamma)$. In a similar fashion as done previously, by Proposition \ref{Prop. partition.a.graph} we can split $G_{rj}(\gamma)$ into $\ceil{|v(G_{rj}(\gamma))|/k}$ connected sub-graphs $G_1,\dots, G_k$, with $k\coloneqq { \left\lceil \frac{2^{r\ell}}{2^{raj}} \right\rceil}$ and $|v(G_i)|\leq 2 \left\lceil \frac{2^{r\ell}}{2^{raj}} \right\rceil$ for all $i=1,\dots, k$. Assuming $\ell > aj$, we have $|v(G_i)|\leq 2^{r(\ell - aj) + 2}$. As
  \begin{align*}
        \diam(B_{v(G_i)}) &\leq |v(G_i)|(d2^{rj} + M2^{arj}) \leq  2Md2^{raj}|v(G_i)|\\
        &\leq 8Md2^{r\ell},
    \end{align*}
$B_{v(G_i)}$ can be covered by $(8Md)^d$ cubes centered in $\Z^d$ with side length $2^{r\ell}$. As we seen before, every such cube can be covered by at most $2^d$ $r\ell$-cubes, therefore $|\C_{r\ell}(B_{v(G_i)})| \leq (16Md)^d$ and we conclude that 
\begin{align}\label{Eq:  Bound.on.C_rl(gamma)_large_l_aux_2}
    |\C_{r\ell}(\gamma)| \leq \sum_{i=1}^{\ceil{|v(G_{rj}(\gamma))|/k}} |\C_{r\ell}(B_{v(G_i)})| &\leq (16Md)^d \left\lceil \frac{|\C_{rj}(\gamma)|}{\frac{2^{r\ell}}{2^{raj}}} \right\rceil \nonumber \\ 
    &\leq 2(16Md)^d b_42^{ra^\prime} j^\kappa\left(\frac{2^{r(aj - \ell)}}{2^{ra^\prime j}}|\gamma| \vee 1\right).
\end{align}
  As we are assuming $\ell > aj$, 
\begin{align*}
\frac{2^{r(aj - \ell)}}{2^{ra^\prime j}} &= \frac{2^{r(a - a^\prime)j}}{2^{r\ell}} \\
&\leq \frac{2^{r(1 - \frac{a^\prime}{a})\ell}}{2^{r\ell}}  = \frac{1}{2^{r\frac{a^\prime}{a}\ell}},
\end{align*}
what concludes the proof of \eqref{Eq:  Bound.on.C_rl(gamma)_large_l} with $b_4^\prime \coloneqq 2(16Md)^d b_42^{ra^\prime}$.
\end{proof}

For any non-negative $V, M, a, r$, define
\begin{equation*}
    \mathcal{F}_{V}^\ell\coloneqq \{ \C_{r\ell} : V_r^\ell(B_{\C_{r\ell}}) = V, B_{\C_{r\ell}}\subset [-\diam(B_{\C_{r\ell}}), \diam(B_{\C_{r\ell}})]^d\}.
\end{equation*}

Using equation \eqref{Eq: Bound.on.N}, in the same steps as \cite[Proposition 3.11]{Affonso.2021}, we can show that the number of collections in $\mathcal{F}_V$ is exponentially bounded by $V$.

\begin{proposition}\label{Prop. Bound.on.Fv}
    There exists $b_5\coloneqq b_5(d,r)$ such that
    \begin{equation}\label{Eq: Bound.on.F_V}
        |\mathcal{F}^\ell_V| \leq e^{b_5V}.
    \end{equation}
\end{proposition}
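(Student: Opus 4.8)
The plan is to imitate the proof of \cite[Proposition 3.11]{Affonso.2021}: I would reconstruct each collection $\C_{r\ell}\in\mathcal{F}^\ell_V$ from the coarsest scale down to scale $r\ell$, one scale at a time, exploiting that minimal covers by cubes of consecutive scales are nested. First I would record the structural facts. Since distinct $m$-cubes are pairwise disjoint and partition $\Z^d$, the minimal cover $\C_{rn}(\Lambda)$ is precisely the collection of all $rn$-cubes meeting $\Lambda$; and since the grids $2^{rn}\Z^d$ are nested, each such $rn$-cube lies inside a unique $r(n+1)$-cube meeting $\Lambda$. Hence $\C_{rn}(\Lambda)\preceq\C_{r(n+1)}(\Lambda)$ for all $n$. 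Applying this with $\Lambda=B:=B_{\C_{r\ell}}$, and writing $D:=\diam(B)$, $n_r:=\ceil{\log_{2^r}D}$, and $v_n:=|\C_{rn}(B)|$ for $\ell\le n\le n_r$, one gets that $\C_{r\ell}=\C_{r\ell}(B)$ and that the chain $\C_{r\ell}(B)\preceq\C_{r(\ell+1)}(B)\preceq\cdots\preceq\C_{r n_r}(B)$ determines $\C_{r\ell}$, with $v_\ell\ge v_{\ell+1}\ge\cdots\ge v_{n_r}\ge 1$ and $\sum_{n=\ell}^{n_r}v_n=V_r^\ell(B)=V$; in particular the number of scales satisfies $n_r-\ell+1\le V$. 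So $\C_{r\ell}\mapsto(\C_{r n_r}(B),\dots,\C_{r\ell}(B))$ is injective, and it suffices to count such chains with prescribed volume sum $V$.

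Next I would bound the number of chains. Because $B\subset[-D,D]^d\subset[-2^{r n_r},2^{r n_r}]^d$, only the cubes $C_{r n_r}(x)$ with $x\in\{-1,0,1\}^d$ can meet $B$, so there are at most $2^{3^d}$ possibilities for the top element $\C_{r n_r}(B)$. Given the chain down to scale $n+1$, the number of choices of $\C_{rn}(B)\preceq\C_{r(n+1)}(B)$ with $|\C_{rn}(B)|=v_n$ equals $N(\C_{r(n+1)}(B),rn,v_n)$, which by \eqref{Eq: Bound.on.N} is at most $(2^{rd}e\,v_{n+1}/v_n)^{v_n}$. Multiplying over $\ell\le n\le n_r-1$ and using $v_{n+1}\le v_n$,
\[
\prod_{n=\ell}^{n_r-1}\Big(\tfrac{2^{rd}e\,v_{n+1}}{v_n}\Big)^{v_n}=(2^{rd}e)^{\sum_{n=\ell}^{n_r-1}v_n}\prod_{n=\ell}^{n_r-1}\Big(\tfrac{v_{n+1}}{v_n}\Big)^{v_n}\le(2^{rd}e)^{V},
\]
since each factor $(v_{n+1}/v_n)^{v_n}\le 1$ and $\sum_{n=\ell}^{n_r-1}v_n\le V$. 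Finally, the admissible volume vectors are decreasing sequences of positive integers summing to $V$, of which there are at most $p(V)\le 2^{V}$; hence
\[
|\mathcal{F}^\ell_V|\le 2^{3^d}\cdot 2^{V}\cdot(2^{rd}e)^{V}=2^{3^d}\big(2^{rd+1}e\big)^{V}\le e^{b_5 V}\qquad(V\ge 1),
\]
with $b_5:=b_5(d,r):=3^d\ln 2+(rd+1)\ln 2+1$ (using $2^{3^d}\le e^{3^d(\ln 2)V}$ for $V\ge 1$); the case $V=0$ is trivial.

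I do not expect a genuine obstacle here: the argument is hierarchical counting once the nesting of minimal covers is in place. The two points needing care are: (i) verifying that the encoding of $\C_{r\ell}$ by its chain of coarsenings is injective and that the volume constraint forces the number of scales $n_r-\ell+1$ — and hence the whole estimate — to be controlled by $V$ uniformly in $\ell$; and (ii) invoking the containment $B_{\C_{r\ell}}\subset[-\diam(B_{\C_{r\ell}}),\diam(B_{\C_{r\ell}})]^d$ built into the definition of $\mathcal{F}^\ell_V$ to reduce the top-scale cubes to a number of positions depending only on $d$ — without this hypothesis $\mathcal{F}^\ell_V$ would be infinite.
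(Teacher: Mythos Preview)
Your proof is correct and takes essentially the same approach as the paper's: encode $\C_{r\ell}$ via the nested chain of minimal covers $\C_{r\ell}(B)\preceq\cdots\preceq\C_{rn_r}(B)$, bound the number of choices at the top scale using the containment $B\subset[-\diam B,\diam B]^d$, iterate the bound \eqref{Eq: Bound.on.N} downward, and use the monotonicity $v_{n+1}\le v_n$ to control the resulting product. Your version is marginally cleaner in that it absorbs the sum over the top scale $n_r$ into the partition count rather than treating it as a separate outer sum, but the argument is the same.
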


\begin{proof}
We start by splitting $\mathcal{F}^\ell_V$ into $\mathcal{F}^\ell_{V,m} \coloneqq \{ \C_{r\ell}\in \mathcal{F}^\ell_V : n_r(B_{\C_{r\ell}})=m \}$. Since $\ell\leq n_r(B_{\C_{r\ell}}) \leq V_r^\ell(B_{\C_{r\ell}}) +\ell$, we get

\begin{equation}
    |\mathcal{F}^\ell_V| \leq \sum_{m=\ell}^{V+\ell} |\mathcal{F}^\ell_{V,m}|. 
\end{equation}
Denoting $(V_{rn})_{n=\ell}^{m}$ an arbitrary family of natural numbers satisfying 
\begin{equation}\label{Eq: sum.of.V_rn}
    \sum_{n=\ell}^{m} V_{rn}\leq  V,
\end{equation}
with $V_{rn}\leq V_{r(n-1)}$, we can bound
\begin{align}\label{Eq: bound.FVL}
    |\mathcal{F}^\ell_{V,m}| \leq \sum_{(V_{rn})_{n=\ell}^{m}} |\{\C_{r\ell} : B_{\C_{r\ell}} \subset [-2^{rm},2^{rm}]^d, |\C_{rn}(B_{\C_{r\ell}})| = V_{rn}, \text{ for every } \ell \leq n\leq m, n_r(B_{\C_{r\ell}}) = m\}|.
\end{align}
    As  $[-2^{rm},2^{rm}]^d$ is a cube centered in $\Z^d$ with side length $2^{rm+1}+1$, it can be covered by $3^d$ $(rm+1)$-cubes. Indeed, it is enough to consider the simpler case when the cube is of the form
    \begin{equation}\label{Cube.Q_2}
        \prod_{i=1}^d[q_i , q_i + 2^{rm + 1}]\cap\Z^d,
    \end{equation}
    with $q_i\in\{0, 1, \dots, 2^{rm+1}\}$, for $1\leq i \leq d$. It is easy to see that \begin{equation*}
        [q_i , q_i + 2^{rm +1}]\subset [0, 2^{rm+1})\cup [2^{rm+1},2^{rm +2})\cup [2^{rm +2}, 2^{rm +3}). 
    \end{equation*} 
    Taking the products for all $1\leq i\leq d$, we get $3^d$ $(rm+1)$-cubes that covers \eqref{Cube.Q_2}. 
We can give an upper bound to the right-hand side of equation \eqref{Eq: bound.FVL} by counting the number of families $(\C_{rn})_{n=\ell}^m$ such that $\C_{rn}\preceq \C_{r(n+1)}$, for $n<m$, and $\C_{rm}\preceq \C^0_{rm+1}$, yielding us 
\begin{align*}
    |\mathcal{F}^\ell_{V,m}| 
    &\leq \sum_{(V_{rn})_{n=\ell}^{m-1}} |\{ (\C_{rn})_{n=\ell}^{m} : |\C_{rn}|=V_{rn}, \C_{rn}\preceq \C_{r(n+1)},\C_{rm}\preceq \C^0_{rm+1}\}| \\
    & \leq  \sum_{(V_{rn})_{n=\ell}^{m-1}} \sum_{\substack{\C_{rm}\preceq \C^0_{rm+1}\\ |\C_{rm}|=V_{rm}}}\sum_{\substack{\C_{r(m-1)} \\ |\C_{r(m-1)}|=V_{r(m-1)}\\ \C_{r(m-1)\preceq \C_{rm}}}} \cdots \sum_{\substack{\C_{r(\ell+1)} \\ |\C_{r(\ell+1)}|=V_{r(\ell+1)}\\ \C_{r(\ell+1)\preceq \C_{r(\ell+2)}}}} N(\C_{r(\ell+1)}, r\ell, V_{r\ell}).           
\end{align*}
Iterating equation \eqref{Eq: Bound.on.N} we get that
\begin{align*}
    |\mathcal{F}^\ell_{V,m}| &\leq \sum_{(V_{rn})_{n=\ell}^{m}}\left( \frac{2^{d}e |\C^0_{rm+1}|}{V_{rm}}\right)^{V_{rm}}\prod_{n=\ell}^{m-1}\left( \frac{2^{rd}e V_{r(n+1)}}{V_{rn}}\right)^{V_{rn}}\\
    &\leq  \sum_{(V_{rn})_{n=\ell}^{m-1}}\left( \frac{2^{d}e3^d}{V_{rm}}\right)^{V_{rm}}\prod_{n=\ell}^{m-1}e^{(rd\log(2) +1)V_{rn}} \\
    &\leq \sum_{(V_{rn})_{n=\ell}^{m-1}}e^{(d\log(2) + 1 + d\log(3))V_{rm}}\prod_{n=\ell}^{m-1}e^{(rd\log(2) +1)V_{rn}} \leq  \sum_{(V_{rn})_{n=\ell}^{m-1}}e^{(rd\log(2) + 1 + d\log(3))V}
\end{align*}

As the number of solutions of \eqref{Eq: sum.of.V_rn} is bounded by $2^V$, we conclude that 
\begin{equation*}
     |\mathcal{F}^\ell_{V}| \leq \sum_{m=\ell}^{V+\ell}|\mathcal{F}_{V,m}^\ell| \leq  V2^Ve^{(rd\log(2) + 1 + d\log(3))V},
\end{equation*}
therefore equation \eqref{Eq: Bound.on.F_V} holds for $b_5\coloneqq [rd +1]\log(2) + 2 + d\log(3)$.
\end{proof}

With these propositions we can control the number of coverings of contours at a given scale, that its, we can give an upper bound to $\left|\C_{r\ell}\left( \mathcal{C}_0(n) \right) \right| = \left|\{\C_{r\ell} : \C_{r\ell}=\C_{r\ell}(\gamma) \ \text{for some }\gamma\in\mathcal{C}_0(n)\} \right|$.

\begin{proposition}\label{Prop: Bound_on_rl_coverings}
    Let $n\geq 0$, $\Lambda\Subset\Z^d$. There exists a constant $b_6\coloneqq b_6(a,d)>0$ such that,
    \begin{equation*}
        |\C_{r\ell}(\mathcal{C}_0(n))|\leq \exp{\left\{ b_6 (\ell\vee 1)^{\kappa+1}\left(\frac{n}{2^{r\frac{a^\prime}{a}\ell}}\vee 1\right)    \right\}}.
    \end{equation*}
\end{proposition}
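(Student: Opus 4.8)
\textbf{Proof plan for Proposition \ref{Prop: Bound_on_rl_coverings}.}

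The plan is to count coverings $\C_{r\ell}(\mathcal{C}_0(n))$ by producing, for each contour $\gamma\in\mathcal{C}_0(n)$, a canonical piece of data from which $\C_{r\ell}(\gamma)$ can be reconstructed, and then bounding the number of possible such data. The natural data is a coarser covering $\C_{rL}(\gamma)$ at a well-chosen scale $L=L(\ell)\geq \ell$, together with the information of which $r\ell$-cubes inside $B_{\C_{rL}(\gamma)}$ actually belong to $\C_{r\ell}(\gamma)$. Concretely, first I would fix the large scale $L(\ell)$ (depending on $\ell$, and tuned exactly as in \cite{FFS84}, balancing the two error sources below) and observe that every cube in $\C_{r\ell}(\gamma)$ is contained in, or adjacent to, a cube of $\C_{rL(\ell)}(\gamma)$; enlarging $\C_{rL(\ell)}(\gamma)$ by one layer of neighbours (which costs only a factor $(2d+1)$ in its cardinality) we may assume $B_{\C_{r\ell}(\gamma)}\subset B_{\C_{rL(\ell)}(\gamma)}$. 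Thus $\C_{r\ell}(\gamma)$ is a sub-collection of the $2^{rd(L(\ell)-\ell)}|\C_{rL(\ell)}(\gamma)|$ many $r\ell$-cubes sitting inside the fixed region $B_{\C_{rL(\ell)}(\gamma)}$.

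Next I would assemble the count. Using $0\in V(\gamma)$ we may normalise the covering so that $B_{\C_{rL(\ell)}}$ lies in a box of side $2\diam$ around the origin, placing $\C_{rL(\ell)}(\gamma)$ (with its one-layer enlargement) inside a class of the type $\mathcal{F}^{L(\ell)}_{V}$ up to the correspondence between $m$-cubes and lattice points; by Proposition \ref{Prop. Bound.on.Fv}, the number of such coverings with $V^{L(\ell)}_r(B)=V$ is at most $e^{b_5 V}$, and by Proposition \ref{Prop. Bound.on.V_r^l(gamma)} together with Proposition \ref{Prop. Bound.on.C_rl(gamma)} (inequality \eqref{Eq: Bound.on.C_rl(gamma)_large_l}) we have $V^{L(\ell)}_r(\gamma)\leq b_3(L(\ell)\vee 1)|\C_{rL(\ell)}(\gamma)|\leq b_3 b_4' (L(\ell)\vee 1)^{\kappa+1}\bigl(\tfrac{n}{2^{r(a'/a)L(\ell)}}\vee 1\bigr)$. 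Then, for a fixed such covering, the number of ways to choose the sub-collection $\C_{r\ell}(\gamma)$ is bounded, via $\binom{N}{k}\leq 2^N$, by $2^{(2d+1)2^{rd(L(\ell)-\ell)}|\C_{rL(\ell)}(\gamma)|}$, and $|\C_{rL(\ell)}(\gamma)|$ is again controlled by \eqref{Eq: Bound.on.C_rl(gamma)_large_l}. Multiplying the two estimates gives
\[
    |\C_{r\ell}(\mathcal{C}_0(n))| \leq \exp\Bigl\{ c\,(L(\ell)\vee 1)^{\kappa+1}\,2^{rd(L(\ell)-\ell)}\Bigl(\tfrac{n}{2^{r(a'/a)L(\ell)}}\vee 1\Bigr)\Bigr\}
\]
for a constant $c=c(a,d)$.

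Finally I would choose $L(\ell)$. The exponent above contains the competing factors $2^{rd(L-\ell)}$ (growing in $L$) and $2^{-r(a'/a)L}$ inside the $n$-term (decaying in $L$); taking $L(\ell)=\ell+\lceil \theta\ell\rceil$ for an appropriate small $\theta=\theta(a,d)>0$ — exactly the device used in \cite{FFS84} — makes $2^{rd(L(\ell)-\ell)}=2^{rd\lceil\theta\ell\rceil}$ at most a fixed power of the polynomial prefactor while $2^{-r(a'/a)L(\ell)}\leq 2^{-r(a'/a)\ell}$, so the whole exponent collapses to $b_6(\ell\vee 1)^{\kappa+1}\bigl(\tfrac{n}{2^{r(a'/a)\ell}}\vee 1\bigr)$, which is the claimed bound. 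The main obstacle I anticipate is the bookkeeping in this last optimisation: one must verify that with the chosen $L(\ell)$ the contribution $2^{rd(L(\ell)-\ell)}$ genuinely stays polynomial in $\ell$ (so that it can be absorbed into $(\ell\vee 1)^{\kappa+1}$ after possibly increasing $\kappa$ or $b_6$), that the one-layer enlargement of $\C_{rL(\ell)}$ does not spoil either Proposition \ref{Prop. Bound.on.Fv} or Proposition \ref{Prop. Bound.on.C_rl(gamma)}, and that the case distinction $\ell<j$ versus $\ell\geq j$ (and $\ell=0$) in Proposition \ref{Prop. Bound.on.C_rl(gamma)} is handled uniformly — the small-$\ell$ regime is where \eqref{Eq: Bound.on.C_rl(gamma)_small_l} must be invoked instead, and one should check the two estimates patch together with a single constant.
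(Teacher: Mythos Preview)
Your plan imports the two-scale machinery that the paper reserves for the \emph{next} proposition (Proposition \ref{Prop: Bound_on_boundary_of_admissible_sets}), and in doing so you create a difficulty that is not actually there. The paper's proof of Proposition \ref{Prop: Bound_on_rl_coverings} is a two-line direct application of Propositions \ref{Prop. Bound.on.V_r^l(gamma)}, \ref{Prop. Bound.on.C_rl(gamma)} and \ref{Prop. Bound.on.Fv} at the scale $\ell$ itself: combining the first two gives $V_r^\ell(\gamma)\leq b_3(b_4+b_4')(\ell\vee 1)^{\kappa+1}\bigl(\tfrac{n}{2^{r(a'/a)\ell}}\vee 1\bigr)\eqqcolon R_{n,\ell}$, so every $\C_{r\ell}(\gamma)$ lies in $\bigcup_{V\leq \lceil R_{n,\ell}\rceil}\mathcal{F}^\ell_V$, and \eqref{Eq: Bound.on.F_V} finishes. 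No auxiliary scale $L$, no sub-collection count, no optimisation.

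Your approach, by contrast, has a genuine gap in the final optimisation. With $L(\ell)=\ell+\lceil\theta\ell\rceil$ the factor $2^{rd(L(\ell)-\ell)}=2^{rd\lceil\theta\ell\rceil}$ is \emph{exponential} in $\ell$, not polynomial, so it cannot be absorbed into $(\ell\vee 1)^{\kappa+1}$ as you claim. More precisely, your exponent is of order $L^{\kappa+1}2^{r(d-a'/a)(L-\ell)}\bigl(\tfrac{n}{2^{r(a'/a)\ell}}\vee 1\bigr)$, and since $d-a'/a>0$ this is bounded by a constant times the target only when $L-\ell=O(1)$, i.e.\ essentially $L=\ell$ --- at which point the sub-collection step is vacuous and your argument collapses exactly to the paper's direct proof. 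The two-scale device you describe is the right idea for Proposition \ref{Prop: Bound_on_boundary_of_admissible_sets}, where the required decay $2^{-r\ell(d-1)}$ is genuinely faster than what $\mathcal{F}^\ell_V$ alone can deliver; here the target decay $2^{-r(a'/a)\ell}$ already matches what one gets at scale $\ell$, so the extra layer only hurts.
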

\begin{proof}
    Proposition \ref{Prop. Bound.on.V_r^l(gamma)} together with Proposition \ref{Prop. Bound.on.C_rl(gamma)} yields, 
    \begin{equation}\label{Eq: Bound_partial_volume_l_between_1_and_j}
        V_r^\ell(\gamma) = V_r^\ell(B_{\C_{r\ell}(\gamma)}) \leq b_3(b_4+b_4^\prime)(\ell\vee 1)^{\kappa+1}\left(\frac{n}{2^{r\frac{a^\prime}{a}\ell}}\vee 1\right) =: R_{n,\ell}.
    \end{equation}
    Therefore, 
    \begin{multline*}
        \left\{\C_{r\ell} :\C_{r\ell}=\C_{r\ell}(\gamma) \ \textrm{for some }\gamma\in\mathcal{C}_0(n)\right\} \\ \subset \left\{ \C_{r\ell} : V_r^\ell(B_{\C_{r\ell}}) \leq R_{n,\ell} \ , B_{\C_{r\ell}}\subset  [-\diam(B_{\C_{r\ell}}), \diam(B_{\C_{r\ell}})]^d\right\} \\ \subset \bigcup_{V=1}^{\left\lceil R_{n,\ell}\right\rceil}\mathcal{F}^\ell_V
    \end{multline*}
    and Proposition \ref{Prop. Bound.on.Fv} yields 
    \begin{align*}
        |\left\{\C_{r\ell} :\C_{r\ell}=\C_{r\ell}(\gamma) \ \textrm{for some }\gamma\in\mathcal{C}_0(n)\right\}| &\leq \sum_{V=1}^{\left\lceil R_{n,\ell}\right\rceil}|\mathcal{F}^\ell_V|  
        \leq \exp{\left\{ 2b_5b_3(b_4+b_4^\prime)(\ell\vee 1)^{\kappa+1}\left(\frac{n}{2^{r\frac{a^\prime}{a}\ell}}\vee 1\right)\right\}}.
    \end{align*}
    This concludes the proof for $b_6\coloneqq 2b_5b_3(b_4 + b_4^\prime)$.
\end{proof}

A consequence of Proposition \ref{Prop: Bound_on_rl_coverings} is that we get an exponential bound on the number of contours with a fixed size.

\begin{corollary}\label{Cor: Bound_on_C_0_n}
	Let $d\ge 2$, and $\Lambda\Subset \mathbb{Z}^d$. For all $n\geq 1$, $|\mathcal{C}_0(n)| \leq e^{b_6 n}$.  
\end{corollary}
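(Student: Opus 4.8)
The final statement to prove is Corollary~\ref{Cor: Bound_on_C_0_n}: for all $n\geq 1$, $|\mathcal{C}_0(n)| \leq e^{b_6 n}$.

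The plan is to deduce this directly from Proposition~\ref{Prop: Bound_on_rl_coverings} by choosing the coarse scale $\ell$ as small as possible, namely $\ell = 0$ (equivalently $\ell \vee 1 = 1$ and $2^{r\frac{a'}{a}\ell}=1$). First I would observe that every contour $\gamma\in\mathcal{C}_0(n)$ has support of size $n$, and that the support $\Sp(\gamma) = \overline{\gamma}$ together with its label $\lab_{\overline\gamma}$ completely determines $\gamma$. Since the number of labels compatible with a given support $\overline\gamma$ is bounded by $2^{c(d)\,|\gamma|}$ for a dimensional constant (each connected component of $\I(\gamma)$ and the exterior carries a sign in $\{-1,+1\}$, and the number of such components is at most $|\fint\overline\gamma|\leq 2d|\gamma|$, using the isoperimetric-type remark), it suffices to bound the number of admissible supports $\overline\gamma = \Sp(\gamma)$ with $0\in V(\gamma)$ and $|\overline\gamma| = n$; this counting can be absorbed into the exponential constant.

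Next I would realize the support via its $0$-scale covering: with $\ell = 0$, $\C_{r\cdot 0}(\gamma) = \C_0(\gamma)$ is just the collection of unit cubes covering $\overline\gamma$, so $|\C_0(\gamma)| = |\overline\gamma| = n$, and $\overline\gamma$ is determined once we know which of these $n$ unit cubes actually belong to $\overline\gamma$ (trivially all of them) — so really $\overline\gamma$ is determined by the set $\C_0(\gamma)$ itself. Thus the number of admissible supports with $0\in V(\gamma)$ and $|\overline\gamma|=n$ is at most $|\C_0(\mathcal{C}_0(n))|$, the number of distinct $0$-scale coverings of contours in $\mathcal{C}_0(n)$. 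Proposition~\ref{Prop: Bound_on_rl_coverings} applied with $\ell=0$ gives
\[
    |\C_0(\mathcal{C}_0(n))|\leq \exp\left\{ b_6 (0\vee 1)^{\kappa+1}\left(\tfrac{n}{2^{0}}\vee 1\right)    \right\} = e^{b_6 n},
\]
since $(0\vee 1)^{\kappa+1}=1$ and $\tfrac{n}{2^{r\frac{a'}{a}\cdot 0}}\vee 1 = n\vee 1 = n$ for $n\geq 1$. Combining with the label count, $|\mathcal{C}_0(n)|\leq 2^{c(d)n}\,|\C_0(\mathcal{C}_0(n))| \leq 2^{c(d)n}e^{b_6 n}$, and after possibly enlarging the constant $b_6$ (which depends only on $a,d$, hence also absorbing the dimensional label constant) we obtain $|\mathcal{C}_0(n)|\leq e^{b_6 n}$ as stated.

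I do not expect a genuine obstacle here: the work has all been done in Propositions~\ref{Prop. Bound.on.V_r^l(gamma)}, \ref{Prop. Bound.on.C_rl(gamma)}, \ref{Prop. Bound.on.Fv} and \ref{Prop: Bound_on_rl_coverings}. The only point requiring mild care is the bookkeeping that a contour is determined by (support, label) and that the number of compatible labels is at most exponential in $|\gamma|$ — this follows because distinct labels differ only on the at most $2d|\gamma|$ connected components of the interior and exterior of $\Sp(\gamma)$, each of which is a point connected to $\infty$ or enclosed, and the isoperimetric remark $|\Lambda|^{(d-1)/d}\leq|\fint\Lambda|$ bounds the number of such components. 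After that, the statement is just Proposition~\ref{Prop: Bound_on_rl_coverings} at $\ell=0$, with the constant renamed.
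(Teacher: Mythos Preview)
Your proposal is correct and matches the paper's approach: the paper states the corollary without a detailed proof, treating it as an immediate consequence of Proposition~\ref{Prop: Bound_on_rl_coverings} at $\ell=0$ (where $\C_0(\gamma)$ is just the support $\Sp(\gamma)$, so the proposition bounds the number of supports by $e^{b_6 n}$). Your additional bookkeeping on labels is more careful than the paper's own treatment---the paper silently absorbs the label count into the constant (see the proof of Proposition~\ref{Prop: Bound.gamma_2}, where the factor $2^n$ for labels is invoked separately), so your remark that $b_6$ may need to be enlarged is well-taken; the statement as written is slightly loose on the constant, but the argument is exactly as you describe.
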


\begin{remark}
    This result was also proved in \cite{Affonso.2021} when the contours are the finest partition. However, it is only required that the contours are the partition given by the construction in \cite[Proposition 3.5]{Affonso.2021}. Our Lemma \ref{Lemma: Big.clusters_2} shows that our contours are a subset of such contours, thus the same counting holds. 
\end{remark}

By the construction of all contours, not all the graphs $(\C_{r\ell}(\gamma), E_{r\ell}(\gamma))$ are connected. However, the use of Propositions \ref{Prop. Bound.on.V_r^l(gamma)} and \ref{Prop. Bound.on.C_rl(gamma)_Lucas} together, recuperates a bound on $|\C_{r\ell}(\mathcal{C}_0(n))|$ as if it was the case. This is shown by the next proposition. Although such a Proposition is not necessary to prove our results,  we provide a proof for completeness. 
\begin{lemma}\label{Prop: Counting_spamming_trees}
    Given $\ell>0$, consider the graph $G=(\C_{r\ell}(\Z^d), E)$, with two vertices $C,C^\prime$ being connected if and only if $d(C,C^\prime)\leq M2^{ra\ell}$. There exists a constant $b_5^\prime \coloneqq b_5^\prime(d,\alpha)$ such that
    \begin{equation}
        |\{{\C_{r\ell}}: C_{r\ell}(0)\in \C_{r\ell}, \ \C_{r\ell} \emph{ is connected }, |\C_{r\ell}|=N\}| \leq e^{b_5^\prime\ell N}.
    \end{equation}
\end{lemma}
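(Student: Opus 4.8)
The statement to prove is a counting bound for connected collections of $r\ell$-cubes of size $N$ containing a fixed cube, where adjacency means being within distance $M2^{ra\ell}$. The natural approach, following the standard technique for counting connected sets in graphs, is to pass to spanning trees. First I would fix the origin cube $C_{r\ell}(0)$ and, for each admissible connected collection $\C_{r\ell}$, choose a spanning tree $T$ of the induced subgraph; the key point is that $T$ has exactly $N-1$ edges and $N$ vertices, and the collection is determined by $T$. So the count is bounded by the number of such spanning trees rooted at $C_{r\ell}(0)$, and I would bound this by a standard depth-first-search (or Cayley-type) argument: a rooted tree on $N$ labeled-by-position vertices can be encoded by a walk that traverses each edge twice, so it suffices to bound, at each of the $2(N-1)$ steps, the number of choices for the next cube.

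The crucial estimate is the following: given a cube $C_{r\ell}$, how many cubes $C'_{r\ell}$ are within distance $M2^{ra\ell}$ of it? Since each $r\ell$-cube has side length $2^{r\ell}-1$ and the centers lie on the scaled lattice $2^{r\ell}\Z^d$, a cube within distance $M2^{ra\ell}$ has its center within distance roughly $M2^{ra\ell} + d\,2^{r\ell}$ of the center of $C_{r\ell}$; the number of lattice points of $2^{r\ell}\Z^d$ in a ball of that radius is at most $c(d)\left(\frac{M2^{ra\ell} + d2^{r\ell}}{2^{r\ell}}\right)^d \leq c(d)(2M)^d 2^{rd(a-1)\ell}$ for $\ell\geq 1$ (using $a>1$ and $M\geq 1$). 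Call this bound $D_\ell$. Then the number of rooted spanning trees, hence the number of admissible collections, is at most $D_\ell^{2(N-1)} \leq \exp\{2(N-1)\log D_\ell\}$, and since $\log D_\ell \leq \log(c(d)(2M)^d) + rd(a-1)\ell \leq b_5^\prime \ell /2$ for a suitable constant $b_5^\prime = b_5^\prime(d,\alpha)$ (absorbing $r$ and $M$, which are fixed in terms of $d,\alpha$, and noting $\ell\geq 1$ on the relevant range while the $\ell=0$ case is handled by Remark~\ref{rmk: Upper_bound_on_ell} or directly since $\ell\vee 1$ appears), we obtain $e^{b_5^\prime\ell N}$ as desired.

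I would organize the writeup as: (i) reduce to counting rooted spanning trees via the observation that a connected collection of size $N$ admits a spanning tree and is recovered from it; (ii) encode a rooted spanning tree by its DFS traversal, a closed walk of length $2(N-1)$ starting at $C_{r\ell}(0)$, each step moving to a neighbor; (iii) bound the out-degree $D_\ell$ of the adjacency graph $G$ by the lattice-point count above; (iv) conclude $|\{\cdots\}| \leq D_\ell^{2(N-1)} \leq e^{b_5^\prime \ell N}$. The main obstacle — and it is a mild one — is getting the degree bound $D_\ell$ clean: one must be careful that the distance is measured with the $\ell_1$-norm between the cubes (sets), not their centers, so the effective radius is $M2^{ra\ell}$ plus the diameter contribution $\sim d2^{r\ell}$ of the cubes themselves, and then verify that dividing by the lattice spacing $2^{r\ell}$ gives the stated $2^{rd(a-1)\ell}$ growth; since $a>1$ this exponent is positive and linear in $\ell$ after taking logs, which is exactly what the claimed bound $e^{b_5^\prime\ell N}$ permits. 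No deep input beyond elementary graph theory and lattice-point counting is needed; the point of the lemma is simply to record that, even for the disconnected-contour setting, once one works at a fixed scale the coverings behave like connected lattice animals with a scale-dependent neighborhood.
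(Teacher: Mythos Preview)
Your proposal is correct and follows essentially the same route as the paper: reduce to counting rooted spanning trees containing $C_{r\ell}(0)$, bound the maximum degree of the adjacency graph by a quantity of order $M^d 2^{rd(a-1)\ell}$, and conclude an $e^{b_5'\ell N}$ bound. The only cosmetic difference is that you encode trees via a depth-first-search walk of length $2(N-1)$, while the paper fixes a degree sequence $(d_i)$ summing to $2(N-1)$, bounds the number of trees with that sequence by $\prod_i\binom{D_\ell}{d_i}$, and then sums over the $\leq 2^N$ degree sequences; both are standard, interchangeable ways to count rooted trees in a bounded-degree graph and yield the same exponent.
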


\begin{proof}
    To count $|\{{\C_{r\ell}}: C_{r\ell}(0)\in \C_{r\ell}, \ \C_{r\ell} \emph{ is connected }, |\C_{r\ell}|=N\}|$, it is enough to count the number spanning trees containing $C_{r\ell}(0)$ with $N$ vertices. Let $\mathcal{T}_0$ be the set of all such trees. Fixed $T\in\mathcal{T}_0$, for each $C_{r\ell}\in v(T)$, let $\d_T(C_{r\ell})$ be the degree of $C_{r\ell}$. As $T$ is a tree, $\sum_{C_{r\ell}\in v(T)}\d_T(C_{r\ell}) = 2(N-1)$. Moreover, as there are at most $2^{rd(a\ell + \log_{2^r}M - \ell)}$ $r\ell$-cubes inside a $r(a\ell + \log_{2^r}M)$-cube, each cube $C_{r\ell}\in T$ has at most $2^{rd(a +\log_{2^r}M -1)\ell}$ neighbours. Let $(d_i)_{i=1}^N$ denote a general solution to 
    \begin{equation}\label{Eq: sum_d_i}
        \sum_{i=1}^{N}d_i = 2(N-1),
    \end{equation}
    with $d_i\leq 2^{rd(a +\log_{2^r}M -1)\ell}$ for all $i=1,\dots, N$. Then 
    \begin{align*}
        |\{{\C_{r\ell}}: C_{r\ell}(0)\in \C_{r\ell}, \ \C_{r\ell} \emph{ is connected }, |\C_{r\ell}|=N\}| \leq \sum_{(d_i)_{i=1}^N} |\{T \in\mathcal{T}_0: d_T(C^i) = d_i\}|.
    \end{align*}
    In the set above, $\{C^1, C^2, \dots, C^N\}$ is any ordering of $v(T)$ with $C^1 = C_{r\ell}(0)$. Therefore, 
    \begin{equation*}
        |\{T \in\mathcal{T}_0: d_T(C^i) = d_i\}| \leq \prod_{i=1}^N\binom{2^{rd(a +\log_{2^r}M -1)\ell}}{d_i}\leq (e2^{rd(a +\log_{2^r}M -1)\ell})^{N}. 
    \end{equation*}
    As the number of solutions to \eqref{Eq: sum_d_i} is bounded by $2^N$, we conclude that
    \begin{align*}
        \{{\C_{r\ell}}: C_{r\ell}(0)\in \C_{r\ell}, \ \C_{r\ell} \emph{ is connected }, |\C_{r\ell}|=N\}| \leq  2^Ne^N2^{rd(a +\log_{2^r}M -1)\ell N} \leq e^{b_5^\prime \ell N}, 
    \end{align*}
    with $b_5^\prime \coloneqq \log{2} + 1 + {rd(a +\log_{2^r}M -1)\log{2}}$.
\end{proof}

We are finally ready to upper bound the number of admissible regions $ |B_\ell(\mathcal{C}_0(n))|$ at scale $r\ell$.
\begin{proposition}\label{Prop: Bound_on_boundary_of_admissible_sets}
    Let $n\geq 0$, $\Lambda\Subset\Z^d$ and $1\leq\ell\leq \log_{2^r}(b_1 n)(d-1)^{-1}$. There exists a constant $c_4\coloneqq c_4(\alpha, d)$ such that,
    \begin{equation}\label{Eq: Bound_on_boundary_of_admissible_sets}
        |B_\ell(\mathcal{C}_0(n))|\leq \exp{\left\{c_4 \frac{\ell^{\kappa + 1} n}{2^{r\ell(d-1)}}   \right\}}.
    \end{equation}
\end{proposition}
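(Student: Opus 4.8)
The goal is to bound the number of distinct coarse-grained regions $B_\ell(\gamma)$ arising from contours $\gamma\in\mathcal{C}_0(n)$, and the plan follows the strategy sketched in the introduction: fix a large scale $L=L(\ell)\geq\ell$ depending on the small scale, count the number of $rL$-coverings of contours in $\mathcal{C}_0(n)$, and for each fixed such covering count the number of ways to select the admissible $r\ell$-cubes inside it.

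First, I would record the structural observation that determines the whole count. Since $|B_\ell(\mathcal{C}_0(n))| = |\partial\mathfrak{C}_{r\ell}(\mathcal{C}_0(n))|$ and the edge boundary $\partial\mathfrak{C}_\ell(\gamma)$ is determined by $\fint\mathfrak{C}_\ell(\gamma)$, by Proposition \ref{Proposition1} we have $|\fint\mathfrak{C}_\ell(\gamma)|\leq b_1 |\gamma|/2^{r\ell(d-1)} =: M_{n,\ell}$ for every $\gamma\in\mathcal{C}_0(n)$. Next, every cube in $\fint\mathfrak{C}_\ell(\gamma)$ must lie inside a cube of $\C_{rL}(\gamma)$ or be adjacent to one (since an admissible $r\ell$-cube contains points of $\I_-(\gamma)\subset V(\gamma)$, hence is within distance $O(2^{rL})$ of $\Sp(\gamma)$, which is covered by $\C_{rL}(\gamma)$). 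Thus every admissible cube sits in the dilated region $\widehat{B}_{\C_{rL}(\gamma)}$ obtained by adding one layer of $rL$-cubes around $B_{\C_{rL}(\gamma)}$, which contains at most $3^d 2^{rd(L-\ell)}|\C_{rL}(\gamma)|$ cubes of scale $r\ell$. Therefore, writing $\mathscr{C}_{rL}(\mathcal{C}_0(n))$ for the set of possible $rL$-coverings of contours in $\mathcal{C}_0(n)$,
\begin{equation*}
    |B_\ell(\mathcal{C}_0(n))| \leq \sum_{\C_{rL}\in\mathscr{C}_{rL}(\mathcal{C}_0(n))} \binom{3^d 2^{rd(L-\ell)}|\C_{rL}|}{M_{n,\ell}} \leq |\mathscr{C}_{rL}(\mathcal{C}_0(n))| \cdot \left(\frac{3^d e\, 2^{rd(L-\ell)}\max_{\C_{rL}}|\C_{rL}|}{M_{n,\ell}}\right)^{M_{n,\ell}}.
\end{equation*}

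Now I would plug in the two estimates the new contour system provides. By Proposition \ref{Prop. Bound.on.C_rl(gamma)} (the large-$\ell$ branch, applied with $L$ in place of $\ell$, noting $L\geq\ell$ and using $|\gamma|=n$), $|\C_{rL}(\gamma)|\leq b_4' L^\kappa (n/2^{r\frac{a'}{a}L}\vee 1)$, and by Proposition \ref{Prop: Bound_on_rl_coverings}, $|\mathscr{C}_{rL}(\mathcal{C}_0(n))|\leq \exp\{b_6 L^{\kappa+1}(n/2^{r\frac{a'}{a}L}\vee 1)\}$. The factor $2^{rd(L-\ell)}|\C_{rL}|/M_{n,\ell}$ is of order $2^{rd(L-\ell)}L^\kappa(n/2^{r\frac{a'}{a}L}\vee 1)\cdot 2^{r\ell(d-1)}/n$, so its logarithm is $O(r d(L-\ell) + \log L + \text{sign term} + r\ell(d-1) - \log n)$; multiplying by $M_{n,\ell}=O(n/2^{r\ell(d-1)})$ gives a contribution of order $L \cdot n/2^{r\ell(d-1)}$ up to logarithmic and constant factors. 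The exponent from $|\mathscr{C}_{rL}(\mathcal{C}_0(n))|$ contributes $L^{\kappa+1}(n/2^{r\frac{a'}{a}L}\vee 1)$. The decisive choice is to take $L=L(\ell)$ a fixed multiple of $\ell$, say $L(\ell)=\lceil c\,\ell\rceil$ with $c$ a suitable constant depending on $a,d$, large enough that $2^{r\frac{a'}{a}L(\ell)}\geq 2^{r\ell(d-1)}$ whenever $n>2^{r\ell(d-1)}/b_1$ forces the coarse-graining to be nontrivial; this balances the two branches and makes $n/2^{r\frac{a'}{a}L}\vee 1$ at most $n/2^{r\ell(d-1)}\vee 1$, so every term is $\leq \text{const}\cdot \ell^{\kappa+1} n/2^{r\ell(d-1)}$. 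Collecting these gives exactly the bound $\exp\{c_4 \ell^{\kappa+1}n/2^{r\ell(d-1)}\}$.

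The main obstacle — and the place where the argument genuinely uses the new construction rather than just the Fisher–Fr\"ohlich–Spencer scheme — is controlling the trade-off in the choice of $L(\ell)$: one needs $L(\ell)$ large enough that the $rL(\ell)$-coverings are few and sparse (so that $|\mathscr{C}_{rL}(\mathcal{C}_0(n))|$ and $\max|\C_{rL}|$ are small, which requires the exponential-in-$V$ bound of Proposition \ref{Prop. Bound.on.Fv} together with the partial-volume bound Proposition \ref{Prop. Bound.on.C_rl(gamma)}), yet small enough that the binomial factor $\binom{2^{rd(L-\ell)}|\C_{rL}|}{M_{n,\ell}}$, which grows like $2^{rd(L-\ell)M_{n,\ell}}$, does not overwhelm the target exponent $\ell^{\kappa+1}n/2^{r\ell(d-1)}$. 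Taking $L(\ell)$ linear in $\ell$ keeps $2^{rd(L-\ell)}$ merely a power of $2^{r\ell}$, which is absorbed into the $\ell^{\kappa+1}$ polynomial prefactor and the constant $c_4$; I would carry out this optimization explicitly, checking the restriction $\ell\leq \log_{2^r}(b_1 n)/(d-1)$ ensures $M_{n,\ell}\geq 1$ and Remark \ref{rmk: Upper_bound_on_ell} applies so the complementary range is trivial. Everything else — the passage from $\fint\mathfrak{C}_\ell(\gamma)$ to $\partial\mathfrak{C}_\ell(\gamma)$ to $B_\ell(\gamma)$, the inequality $\binom{N}{k}\leq(eN/k)^k$, and the bookkeeping of constants — is routine.
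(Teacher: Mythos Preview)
Your approach is essentially the paper's: pass from $B_\ell$ to $\fint\mathfrak{C}_\ell$, observe the inner-boundary cubes are subordinated to a dilated $rL$-covering of $\Sp(\gamma)$, count the coverings by Proposition \ref{Prop: Bound_on_rl_coverings}, bound $|\C_{rL}(\gamma)|$ by Proposition \ref{Prop. Bound.on.C_rl(gamma)}, and take $L(\ell)$ linear in $\ell$ (the paper uses exactly $L(\ell)=\lfloor a(d-1)\ell/a'\rfloor$, your minimal admissible $c$).

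One justification needs correcting. You write that an admissible $r\ell$-cube ``contains points of $\I_-(\gamma)\subset V(\gamma)$, hence is within distance $O(2^{rL})$ of $\Sp(\gamma)$'' --- this is false for a general admissible cube, which can sit deep inside $\I_-(\gamma)$. The correct reason a cube $C_{r\ell}\in\fint\mathfrak{C}_\ell(\gamma)$ (or a face-neighbour) meets $\Sp(\gamma)$ is that by definition it has a non-admissible neighbour $C'_{r\ell}$, and the pair $C_{r\ell}\cup C'_{r\ell}$ then contains a point of $\fext\I_-(\gamma)\subset\Sp(\gamma)$ (this is the content of Lemma \ref{Lemma: Proposicao1.Aux1}). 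Also, the passage from $\fint\mathfrak{C}_\ell$ back to $\partial\mathfrak{C}_\ell$ which you call routine is not an equality: a fixed inner boundary does not determine which of the $\leq 2d$ neighbours of each cube lie outside, so there is a $2^{2d|\fint\mathfrak{C}_\ell(\gamma)|}$ overcount, handled explicitly in the paper and harmlessly absorbed in $c_4$.
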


\begin{proof}
    The upper bound on $\ell$ may seem artificial, but Remark \ref{rmk: Upper_bound_on_ell} shows that this is not the case. Remember that $|B_\ell(\mathcal{C}_0(n))|=|\partial\mathfrak{C}_\ell(\mathcal{C}_0(n))|$. Moreover, given $\{C_{r\ell},C_{r\ell}^\prime\}\in\partial\mathfrak{C}_\ell(\gamma)$, either $C_{r\ell}\in\fint\mathfrak{C}_{\ell}$ or $C_{r\ell}^\prime\in\fint\mathfrak{C}_{\ell}$. Using that $\sum_{k=0}^p\binom{p}{k} = 2^{p}$, we have
    \begin{equation}\label{Eq: Replacing_edge_by_inner_boundary}
       \begin{split} |\partial\mathfrak{C}_\ell(\mathcal{C}_0(n))| &= \sum_{\fint\C_{r\ell}\in \fint\mathfrak{C}_{\ell}(\mathcal{C}_0(n))} |\{\partial\C_{r\ell}^\prime : \fint\C_{r\ell}^\prime = \fint\C_{r\ell}\}| \\
        &\leq \sum_{\fint\C_{r\ell}\in \fint\mathfrak{C}_{\ell}(\mathcal{C}_0(n))} \sum_{k=1}^{2d|\fint\C_{r\ell}|}\binom{2d|\fint\C_{r\ell}|}{k}  \\
        &\leq\sum_{\fint\C_{r\ell}\in \fint\mathfrak{C}_{\ell}(\mathcal{C}_0(n))} 2^{2d|\fint\C_{r\ell}|}\leq |\fint\mathfrak{C}_\ell(\mathcal{C}_0(n))|e^{\log(2)2db_1\frac{n}{2^{r\ell(d-1)}}},
        \end{split}
    \end{equation}
     where in the last inequality we applied Proposition \ref{Proposition1}.  For every $L\geq \ell$ and an arbitrary collection $\C_{rL}$, define $\overline{\C_{rL}} = \C_{rL}\cup \{C_{rL}^\prime : \exists C_{rL}\in\C_{rL} \text{ such that } C_{rL}^\prime \text{ shares a face with } C_{rL}\}$. 
     
     Given $C_{r\ell}\in \fint \mathfrak{C}_\ell(\gamma)$, either $C_{r\ell}$ or one of its neighbouring cubes intersects $\Sp(\gamma)$. Hence, for any $L\geq \ell$, $\fint \mathfrak{C}_{\ell}(\gamma)\preceq \overline{\C_{rL}(\gamma)}$. Moreover, the number of $r\ell$-cubes inside a collection $\overline{\C_{rL}(\gamma)}$ of $rL$-cubes is bound by $|\overline{\C_{rL}(\gamma)}|2^{rd(L-\ell)} \leq 2d|\C_{rL}(\gamma)|2^{rd(L-\ell)}$. Using again Proposition \ref{Proposition1}, we can bound 
    \begin{equation}\label{Eq: Bound_on_internal_boundary}
    \begin{split}
           |\fint \mathfrak{C}_{\ell}(\mathcal{C}_0(n))| &\leq  \sum_{\substack{\C_{rL} \in \C_{rL}(\mathcal{C}_0(n))}}\sum_{k=0}^{\left\lceil{\frac{b_1n}{2^{r\ell(d-1)}}}\right\rceil}\binom{2d|\C_{rL}|2^{rd(L-\ell)}}{k} \\
           &\leq \sum_{\substack{\C_{rL} \in \C_{rL}(\mathcal{C}_0(n))}}\left(\frac{e2d|\C_{rL}|2^{rd(L-\ell)}}{{b_1n}{2^{-r\ell(d-1)}}}\right)^{\frac{b_1n}{2^{r\ell(d-1)}}} \\
           &\leq   \sum_{\substack{\C_{rL} \in \C_{rL}(\mathcal{C}_0(n))}}\left(\frac{e2d|\C_{rL}|2^{rdL}}{{b_1n}{2^{r\ell}}}\right)^{\frac{2b_1n}{2^{r\ell(d-1)}}},
           \end{split}
    \end{equation}
    where in the last equation we used that, for any $0<M\leq N$, $\sum_{p=0}^{M}\binom{N}{p}\leq \left(\frac{eN}{M}\right)^{M}$. Moreover, the restriction $\ell\leq \log_{2^r}(b_1 n)(d-1)^{-1}$ gives us $1\leq \frac{b_1 n}{2^{r\ell(d-1)}}$, so we bounded $\left\lceil \frac{b_1n}{2^{r\ell(d-1)}} \right\rceil \leq \frac{2b_1n}{2^{r\ell(d-1)}}$. Given a scale $\ell$, we choose $L(\ell) \coloneqq \left\lfloor \frac{a(d-1)\ell}{a^\prime} \right\rfloor$. The restriction $\ell\leq \log_{2^r}(b_1 n)(d-1)^{-1}$ allow us to bound $\left(\frac{n}{2^{r\frac{a^\prime}{a} L(\ell)}} \vee 1\right)\leq \left(\frac{b_1 n}{2^{r\frac{a^\prime}{a} L(\ell)}} \vee 1\right)\leq \left(b_12^{r\frac{a^\prime}{a}}\frac{n}{2^{r(d-1)\ell}} \vee 1\right) \leq b_1(2^{r\frac{a^\prime}{a}}\vee 1)\frac{n}{2^{r(d-1)\ell}}$ assuming w.l.o.g. that $b_1\geq1$, so, for any $ \C_{rL(\ell)} \in \C_{rL(\ell)}(\mathcal{C}_0(n))$, Proposition \ref{Prop. Bound.on.C_rl(gamma)} yields
    \begin{align*}
        |\C_{rL(\ell)}|2^{rdL(\ell)} &\leq (b_4+b_4^\prime)L(\ell)^{\kappa}\left(\frac{n}{2^{r\frac{a^\prime}{a} L(\ell)}}\vee 1\right) 2^{rd\frac{a(d-1)}{a^\prime}\ell} \\
        &\leq b_1 (b_4+b_4^\prime)\left(\frac{(d-1)a}{a^\prime}\right)^{\kappa}\left\lceil{2^{r\frac{a^\prime}{a}}}\right\rceil{\ell^{\kappa}}{2^{(d-1)r(\frac{ad}{a^\prime} -1)\ell}}n,
    \end{align*}
    hence 
    \begin{align*}
        \left(\frac{e2d|\C_{rL(\ell)}|2^{rdL(\ell)}}{{b_1n}{2^{r\ell}}}\right)^{\frac{2b_1n}{2^{r\ell(d-1)}}} &\leq \left(\frac{e2db_1 (b_4+b_4^\prime)\left(\frac{(d-1)a}{a^\prime}\right)^{\kappa}\left\lceil{2^{r\frac{a^\prime}{a}}}\right\rceil{\ell^{\kappa}}{2^{(d-1)r(\frac{ad}{a^\prime} -1)\ell}}n}{{b_1n}{2^{r\ell}}}\right)^{\frac{2b_1n}{2^{r\ell(d-1)}}}\\
        &\leq \left({e2d (b_4+b_4^\prime)\left(\frac{a(d-1)}{a^\prime}\right)^{\kappa}\left\lceil{2^{r\frac{a^\prime}{a}}}\right\rceil{\ell^{\kappa}}{2^{[(d-1)(\frac{ad}{a^\prime} -1) - 1]r\ell}}}\right)^{\frac{2b_1n}{2^{r\ell(d-1)}}}\\
        &\leq \exp\left\{ c_4^\prime \frac{\ell n}{2^{r\ell(d-1)}} \right\},
    \end{align*}
    with $c_4^\prime = [1  + \log(2d (b_4+b_4^\prime)\left(\frac{(d-1)a}{a^\prime}\right)^{\kappa}\left\lceil{2^{r\frac{a^\prime}{a}}}\right\rceil) + \kappa  + ((d-1)(\frac{ad}{a^\prime} -1) -1)\log(2)r]2b_1$. Moreover, by Proposition \ref{Prop: Bound_on_rl_coverings},
    \begin{align*}
         |\C_{rL(\ell)}(\mathcal{C}_0(n))| &\leq  \exp{\left\{ b_6L(\ell)^{\kappa+1} \left(\frac{n}{2^{r\frac{a^\prime}{a} L(\ell)}} \vee 1 \right) \right\}} \leq  \exp{\left\{ b_6 b_1\left(\frac{a(d-1)}{a^\prime}\right)^{\kappa +1}\left\lceil2^{r\frac{a^\prime}{a}}\right\rceil\frac{\ell^{\kappa+1}n}{2^{(d-1)r\ell}}\right\}}   \\
    \end{align*}
    so equations \eqref{Eq: Replacing_edge_by_inner_boundary} and \eqref{Eq: Bound_on_internal_boundary} yield
    \begin{align*}
         |\partial\mathfrak{C}_\ell(\mathcal{C}_0(n))| &\leq  \exp{\left\{b_6 b_1\left(\frac{a(d-1)}{a^\prime}\right)^{\kappa +1}\left\lceil2^{r\frac{a^\prime}{a}}\right\rceil\frac{\ell^{\kappa+1}n}{2^{r\ell(d-1)}} + c_4^\prime \frac{\ell n}{2^{r\ell(d-1)}} + \log(2)2db_1\frac{n}{2^{r\ell(d-1)}} \right\}}.
    \end{align*}

  that concludes our proof taking $c_4\coloneqq b_6 b_1\left(\frac{a(d-1)}{a^\prime}\right)^{\kappa +1}\left\lceil2^{r\frac{a^\prime}{a}}\right\rceil + c_4^{\prime} + \log(2)2db_1$.
\end{proof}

\begin{remark}\label{Rmk: Adaptation_for_Mar_partition}
    Using the notion of long-range contours of \cite{Affonso.2021}, we can get a worse upper bound on $|B_\ell(\mathcal{C}_0(n))|$ that is still good enough to prove phase transition in $d\geq 3$. Using Proposition \ref{Prop. Bound.on.C_rl(gamma)_Lucas}, we can prove in the same steps as Proposition \ref{Prop: Bound_on_rl_coverings} that $|\C_{r\ell}(\mathcal{C}_0(n))|\leq b_4^{\prime\prime}n \ell^{-\frac{r - d- 1 - \log_2(a)}{\log_2(a)}}$. With this, we can proceed similarly as in the proof of Proposition \ref{Prop: Bound_on_boundary_of_admissible_sets} but now choosing $L(\ell) = 2^{2r\left\lfloor \frac{\log_2(a)\ell}{r - d -1 - \log_2(a)} \right\rfloor}$, which gives us the bound 
    \begin{equation*}
        |B_\ell(\mathcal{C}_0(n))| \leq \exp{\left\{ c_4^{\prime}\frac{n}{2^{r\ell(d-1-\frac{2\log_2(a)}{r - d -1 - \log_2(a)})}} \right\}}.
    \end{equation*}
    For $r$ large enough, $d-1-\frac{2\log_2(a)}{r - d -1 - \log_2(a)}>1$ and the proof of Proposition \ref{Prop: Bound.gamma_2} follows with small adaptations. 
\end{remark}


\subsubsection*{Proof of Proposition \ref{Prop: Bound.gamma_2}}
At last, we prove the main proposition of this section. 

\begin{proof}
 As $N(\I_-(n), \d_2, \epsilon)$ is decreasing in $\epsilon$, we can use Dudley's entropy bound to get
    \begin{align*}
        {\mathbb{E}\left[\sup_{\gamma\in\mathcal{C}_0(n)}{\Delta_{\I_-(\gamma)}(h)}\right]} &=  {\mathbb{E}\left[\sup_{\I\in\I_-(n)}{\Delta_{\I}(h)}\right]}\\
        &\leq \int_{0}^\infty \sqrt{\log N(\I_-(n), \d_2, \epsilon)}d\epsilon \nonumber\\
        &\leq 2\varepsilon b_3 n^{\frac{1}{2}}\sum_{\ell=1}^\infty (2^{\frac{r\ell}{2}} - 2^{\frac{r(\ell-1)}{2}})\sqrt{\log N(\I_-(n), \d_2,\varepsilon b_3 2^{\frac{r\ell}{2}}n^{\frac{1}{2}})}.
    \end{align*}
We can bound the first term by noticing that $N(\I_-(n), d_2, 0) = |\I_-(n)|\leq 2^n|\mathcal{C}_0(n)|$, $2^n$ being an upper bound on the number of labels given a fixed support. By Corollary \ref{Cor: Bound_on_C_0_n}, $|\mathcal{C}_0(n)| \leq e^{c_1 n}$, and hence
\begin{equation*}
    4\varepsilon b_3 \overline{L} n^{\frac{1}{2}}\sqrt{\log{N(\I_-(n), d_2, 0)}} \leq 4\varepsilon b_3 \overline{L} (c_1 + \log 2)^{\frac{1}{2}} n.
\end{equation*}

Since $\d_2(\I_-(\gamma_1),\I_-(\gamma_2))\leq 2\varepsilon\sqrt{|\I_-(\gamma_1)| + |\I_-(\gamma_2)|}\leq 2\sqrt{2}\varepsilon n^{\frac{1}{2} + \frac{1}{2(d-1)}}$ for any $\gamma_1,\gamma_2\in\mathcal{C}_0(n)$, when $4\varepsilon b_3 \overline{L} 2^{\frac{r\ell}{2}}n^{\frac{1}{2}}\geq 2\sqrt{2}\varepsilon n^{\frac{1}{2} + \frac{1}{2(d-1)}}$, only one ball covers all interiors, hence all the terms in the sum above with $\ell > k(n)\coloneqq \floor{\frac{\log_{2^r}(n)}{(d-1)}}$ are zero. As $N(\I_-(n), \d_2,\varepsilon b_3 2^{\frac{r\ell}{2}}n^{\frac{1}{2}})\leq |B_{\ell}(\mathcal{C}_0(n))|$, see Remark \ref{Rmk: Bounding_N_by_B_ell}, using Proposition \ref{Prop: Bound_on_boundary_of_admissible_sets} we get
\begin{align*}
        {\mathbb{E}\left[\sup_{\gamma\in\mathcal{C}_0(n)}{\Delta_{\I(\gamma)}(h)}\right]} &\leq 4\varepsilon b_3 \overline{L} 2^{\frac{r}{2}}\sqrt{c_4} n^{\frac{1}{2}}\sum_{\ell=1}^{k(n)}2^{\frac{r\ell}{2}}\sqrt{\frac{\ell^{\kappa + 1} n }{2^{r(d-1)\ell}}} +  4\varepsilon b_3 \overline{L} (c_1 + \log 2)^{\frac{1}{2}} n\nonumber\\
        &\leq  4\varepsilon b_3 \overline{L} 2^{\frac{r}{2}} \sqrt{c_4}\left[ (c_1 + \log 2)^{\frac{1}{2}} + \sum_{\ell=1}^{\infty}\left(\frac{\ell^\frac{\kappa+1}{2}}{2^{\frac{r\ell(d-2)}{2}}} \right)\right]n.
\end{align*}

The series above converges for any $d\geq 3$, and we conclude that 
\begin{equation*}
       {\mathbb{E}\left[\sup_{\gamma\in\mathcal{C}_0(n)}{\Delta_{\I_-(\gamma)}(h)}\right]} \leq \varepsilon L_1^\prime n,
\end{equation*}
with $L_1^\prime\coloneqq   4 b_3 \overline{L} 2^{\frac{r}{2}}\sqrt{c_4}\left[ (c_1 + \log 2)^{\frac{1}{2}} + \sum_{\ell=1}^{\infty}\left(\frac{\ell^\frac{\kappa+1}{2}}{2^{\frac{r\ell(d-2)}{2}}} \right)\right]$. The desired result follows from Theorem \ref{MMT} taking the constant $L_1 \coloneqq L L_1^\prime$.
\end{proof}

\section{Phase transition}      
\begin{theorem}\label{Theo: Transicao_de_fase}
For $d\geq 3$ and $\alpha>d$, there exists a constant $C\coloneqq C(d,\alpha)$ such that, for all $\beta>0$ and $e\leq C$, the event 
    \begin{equation}\label{Eq: PTLR}
        \nu_{\Lambda; \beta, \varepsilon h}^+(\sigma_0 = -1) \leq e^{-C\beta} + e^{-C/\varepsilon^2} 
    \end{equation}
    has $\mathbb{P}$-probability bigger then $1 - e^{-C\beta} - e^{-C/\varepsilon^2}$.\\
    
In particular, for $\beta>\beta_c$ and $\varepsilon$ small enough, there is phase transition for the long-range Ising model.  
\end{theorem}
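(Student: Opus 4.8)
## Proof Proposal

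The plan is to run the Peierls argument on the joint measure $\mathbb{Q}_{\Lambda;\beta,\varepsilon}^+$ exactly as in the short-range case (Theorem \ref{Eq: PTLR_SR}), but now using the long-range contours of Section 2 and the energy estimate of Proposition \ref{Prop: Cost_erasing_contour} in place of Proposition \ref{Prop: Erasing_contours_alpha>d+1}. First I would observe that, by the discussion preceding the statement, $\nu_{\Lambda;\beta,\varepsilon h}^+$ is itself a local Gibbs measure (via the Markov property, equal to $\mu_{\Lambda';\beta,\varepsilon h}^+$), so bounding $\nu_{\Lambda;\beta,\varepsilon h}^+(\sigma_0=-1)$ is legitimate; and that if $\sigma_0=-1$ then, since $0$ is not $+$-correct, there is a contour $\gamma\in\mathcal{C}_0$ with $0\in V(\gamma)$, i.e. $\mu$-a.s.\ $\{\sigma_0=-1\}\subseteq\bigcup_{\gamma\in\mathcal{C}_0}\Omega(\gamma)$.

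The key steps, in order: (1) Split $\mathbb{Q}_{\Lambda;\beta,\varepsilon}^+(\sigma_0=-1) \le \mathbb{Q}_{\Lambda;\beta,\varepsilon}^+(\{\sigma_0=-1\}\cap\mathcal{E}) + \mathbb{P}(\mathcal{E}^c)$ and use Proposition \ref{Prop: Bound.bad.event.1} to bound the second term by $e^{-C_1/\varepsilon^2}$. (2) On the good event $\mathcal{E}$, expand the first term over contours $\gamma\in\mathcal{C}_0$ and, for each $\gamma$, insert the ratio $g_{\Lambda;\beta,\varepsilon}^+(\sigma,h)/g_{\Lambda;\beta,\varepsilon}^+(\tau_\gamma(\sigma),\tau_{\I_-(\gamma)}(h))$, using that $\int_{\mathcal{E}}\sum_{\sigma\in\Omega(\gamma)}g_{\Lambda;\beta,\varepsilon}^+(\tau_\gamma(\sigma),\tau_{\I_-(\gamma)}(h))\,dh\le 1$ (the map $(\sigma,h)\mapsto(\tau_\gamma(\sigma),\tau_{\I_-(\gamma)}(h))$ is injective). (3) Bound that ratio by \eqref{Eq: quotient.of.gs}: the Hamiltonian difference gives $e^{-\beta c_2|\gamma|}$ by Proposition \ref{Prop: Cost_erasing_contour}, the field-dependent term $e^{-2\beta\varepsilon\sum_{x\in\Sp^-(\gamma,\sigma)}h_x}$ must be controlled, and the partition-function ratio is $e^{\beta\Delta_{\I_-(\gamma)}(h)}\le e^{\beta c_2|\gamma|/4}$ on $\mathcal{E}$. (4) For the $\Sp^-$ term, either absorb it into the event $\mathcal{E}$ by enlarging it with a Gaussian concentration bound on $\sup_{\gamma\in\mathcal{C}_0}|\sum_{x\in\Sp^-(\gamma,\sigma)}h_x|/|\gamma|$ (using that $|\Sp^-(\gamma,\sigma)|\le|\gamma|$ and Theorem \ref{Theo: Gaussian.concentration}, together with $|\mathcal{C}_0(n)|\le e^{b_6 n}$ from Corollary \ref{Cor: Bound_on_C_0_n}), yielding a contribution $e^{-\beta c_2|\gamma|/4}$; this costs only another $e^{-C/\varepsilon^2}$-type term in the final probability. (5) Collect: each contour contributes at most $e^{-\beta c_2|\gamma|/2}$, sum over $\gamma\in\mathcal{C}_0(n)$ using $|\mathcal{C}_0(n)|\le e^{b_6 n}$ and over $n\ge 1$; for $\beta$ large the geometric series converges, giving $\mathbb{Q}_{\Lambda;\beta,\varepsilon}^+(\sigma_0=-1)\le e^{-2C\beta}+e^{-2C/\varepsilon^2}$. (6) Apply Markov's inequality to pass from $\mathbb{Q}$ to a statement about $\nu_{\Lambda;\beta,\varepsilon h}^+(\sigma_0=-1)$ holding with $\mathbb{P}$-probability at least $1-e^{-C\beta}-e^{-C/\varepsilon^2}$. (7) Finally, invoke the auxiliary lemma (analogue of the short-range one) showing that $\nu_{\Lambda;\beta,\varepsilon h}^+(\sigma_0=-1)<c<1/2$ with probability $>1-c$ for all $\Lambda$ implies $\mu_{\beta,\varepsilon h}^+\ne\mu_{\beta,\varepsilon h}^-$ $\mathbb{P}$-a.s., using monotonicity in $\Lambda$, the spin-flip symmetry for the minus state, mixing of the i.i.d.\ field under translations, and the $0$-$1$ law; this gives phase transition for $\beta>\beta_c(d,\alpha)$ and $\varepsilon<\varepsilon_c(d,\alpha)$.

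Most of steps (1)–(6) are a direct transcription of the short-range proof with the long-range contour machinery substituted in; the genuinely new input — the energy cost of erasing a contour (Proposition \ref{Prop: Cost_erasing_contour}), the contour counting (Corollary \ref{Cor: Bound_on_C_0_n}), and the control of the bad event (Proposition \ref{Prop: Bound.bad.event.1}) — has already been established in the excerpt. The main obstacle I expect is bookkeeping step (4): handling the extra term $\sum_{x\in\Sp^-(\gamma,\sigma)}h_x$, which has no analogue in the nearest-neighbor setting and which depends on $\sigma$ (through which sites of $\Sp(\gamma)$ carry a minus spin). One must be careful that taking a supremum over $\sigma\in\Omega(\gamma)$ of this term still yields something with a subgaussian tail in $|\gamma|$ — the cleanest route is to bound $|\sum_{x\in\Sp^-(\gamma,\sigma)}h_x|\le\sum_{x\in\Sp(\gamma)}|h_x|$ and note this is $1$-Lipschitz in $h_{\Sp(\gamma)}$ with $|\Sp(\gamma)|=|\gamma|$ coordinates, then union-bound over $\mathcal{C}_0(n)$ using the exponential counting bound; the resulting probability that this enlargement of $\mathcal{E}$ fails is again of order $e^{-C/\varepsilon^2}$, which is harmless.
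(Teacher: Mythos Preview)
Your overall strategy matches the paper's proof almost step for step, but there is one slip and one substantive difference in how the $\Sp^-$ term is handled.

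The slip is in step (2): the map $(\sigma,h)\mapsto(\tau_\gamma(\sigma),\tau_{\I_-(\gamma)}(h))$ is \emph{not} injective on $\Omega(\gamma)\times\mathbb{R}^\Lambda$, because $\tau_\gamma$ forces every spin in $\Sp(\gamma)$ to $+1$ and therefore collapses all $2^{|\gamma|}$ possible configurations on $\Sp(\gamma)$ to the same image. Consequently $\sum_{\sigma\in\Omega(\gamma)} g_{\Lambda;\beta,\varepsilon}^+(\tau_\gamma(\sigma),\tau_{\I_-(\gamma)}(h))\le 2^{|\gamma|}\prod_x \tfrac{1}{\sqrt{2\pi}}e^{-h_x^2/2}$, and the paper carries this extra $2^{|\gamma|}$ factor explicitly (it is absorbed into the exponential for $\beta$ large). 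Your bound $\le 1$ is not justified.

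The difference in step (4)/(5) is that the paper does not enlarge the bad event at all. After taking the supremum over $\sigma\in\Omega(\gamma)$ of the ratio, the only $\sigma$-dependence left is through $\Sp^-(\gamma,\sigma)\subset\Sp(\gamma)$; the paper then simply integrates the remaining Gaussian density in $h$ against $e^{-2\beta\varepsilon\sum_{x\in\Sp^-}h_x}$ and uses the elementary identity $\int_{\mathbb{R}} e^{-2\beta\varepsilon h_x - h_x^2/2}\,\tfrac{dh_x}{\sqrt{2\pi}}=e^{2(\beta\varepsilon)^2}$ coordinate by coordinate, obtaining the factor $e^{2(\beta\varepsilon)^2|\Sp^-(\gamma,\sigma)|}\le e^{2(\beta\varepsilon)^2|\gamma|}$, which is harmless once one imposes $8\beta\varepsilon^2\le c_2$. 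Your route via concentration and a union bound over $\mathcal{C}_0(n)$ also works (modulo the Lipschitz constant of $\sum_{x\in\Sp(\gamma)}|h_x|$, which is $\sqrt{|\gamma|}$, not $1$), but the paper's direct Gaussian integration is cleaner and avoids the awkward $\sigma$-dependent supremum inside a probability.
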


\begin{proof}
        The proof is an application of the Peierls' argument, but now on the joint measure $\mathbb{Q}$. By Proposition \ref{Prop: Bound.bad.event.1}, we have
        \begin{align}\label{Eq: Upper.bound.on.Q.1}
            \mathbb{Q}_{\Lambda; \beta, \varepsilon}^+(\sigma_0 = -1) &=  \mathbb{Q}_{\Lambda; \beta, \varepsilon}^+(\{\sigma_0 = -1\} \cap \mathcal{E}) + \mathbb{Q}_{\Lambda; \beta, \varepsilon}^+(\{\sigma_0 = -1\}\cap \mathcal{E}^c) \nonumber \\
            & \leq \mathbb{Q}_{\Lambda; \beta, \varepsilon}^+(\{\sigma_0 = -1\} \cap \mathcal{E}) +  e^{-C_1/\varepsilon^2}.
            \end{align}
 When $\sigma_0 = -1$, there must exist a contour $\gamma$ with $0\in V(\gamma)$, hence
\begin{equation*}
    \nu_{\Lambda; \beta, \varepsilon h}^+(\sigma_0 = -1) \leq \sum_{\gamma \in \mathcal{C}_0}\nu_{\Lambda; \beta, \varepsilon h}^+(\Omega(\gamma)),
\end{equation*}
where $\Omega(\gamma) \coloneqq \{\sigma\in\Omega : \Sp(\gamma) \subset \Gamma(\sigma)\}$. So we can write

\begin{align}\label{Eq: Upper.bound.on.Q.2}
    \mathbb{Q}_{\Lambda; \beta, \varepsilon}^+(\{\sigma_0 = -1\} \cap \mathcal{E}) &= \int_{\mathcal{E}}\sum_{\sigma : \sigma_0 = -1}g_{\Lambda; \beta, \varepsilon}^+(\sigma, h)dh \nonumber \\
    &\leq  \sum_{\gamma\in\mathcal{C}_0} \int_{\mathcal{E}}\sum_{\sigma\in\Omega(\gamma)}g_{\Lambda; \beta, \varepsilon}^+(\sigma, h)dh \nonumber \\
    &\leq  \sum_{\gamma \in \mathcal{C}_0} 2^{|\gamma|}\int_{\mathcal{E}}\sup_{\sigma\in\Omega(\gamma)}\frac{g_{\Lambda; \beta, \varepsilon}^+(\sigma, h)}{g_{\Lambda; \beta, \varepsilon}^+(\tau_{\gamma}(\sigma), \tau_{\I_-(\gamma)}(h))} \prod_{x\in\Lambda}\frac{1}{\sqrt{2\pi}}e^{-\frac{1}{2}h_x^2}dh_x 
\end{align}

In the third equation, we used that $\sum_{\sigma\in\Omega(\gamma)}g_{\Lambda; \beta, \varepsilon}^+(\tau_{\gamma}(\sigma), \tau_{\I_-(\gamma)}(h)) \leq 2^{|\gamma|}\times \prod_{x\in\Lambda}\frac{1}{\sqrt{2\pi}}e^{-\frac{1}{2}h_x^2}$, since the number of configurations that are incorrect in $\Sp(\gamma)$ are bounded by $2^{|\gamma|}$. Equation \eqref{Eq: quotient.of.gs} implies, 
\begin{align}\label{Eq: Upper.bound.on.Q.3}
    \sup_{ \sigma\in\Omega(\gamma)}\frac{g_{\Lambda; \beta, \varepsilon}^+(\sigma, h)}{g_{\Lambda; \beta, \varepsilon}^+(\tau_{\gamma}(\sigma), \tau_{\I_-(\gamma)}(h))} &\leq    \exp{\{{- \beta c_2 |\gamma| + \beta \Delta_{\gamma}(h)}\}} \sup_{ \sigma\in\Omega(\gamma)}\exp{\{ -2\beta\sum_{x\in \Sp^-(\gamma,\sigma)}\varepsilon h_x\}}\nonumber\\
    &\leq  \exp{\{- \beta \frac{3c_2}{4} |\gamma|\}} \sup_{ \sigma\in\Omega(\gamma)}\exp{\{ -2\beta\sum_{x\in \Sp^-(\gamma,\sigma)}\varepsilon h_x\}},
\end{align}
since $\Delta_{\gamma}(h) \leq \frac{c_2}{4}|\gamma|$, for all $h\in\mathcal{E}$. Moreover, notice that
\begin{align*}
\int_{\mathcal{E}} \sup_{ \sigma\in\Omega(\gamma)}\exp{\{ -2\beta\sum_{x\in \Sp^-(\gamma,\sigma)}\varepsilon h_x-\frac{1}{2}\sum_{x\in \Lambda}h_x^2\}}dh_x&\leq  \sup_{ \sigma\in\Omega(\gamma)}\int_{\mathbb{R}^\Lambda}\exp{\{ -2\beta\sum_{x\in \Sp^-(\gamma,\sigma)}\varepsilon h_x-\frac{1}{2}\sum_{x\in \Lambda}h_x^2\}}dh_x \\
&= (2\pi)^{\frac{|\Lambda|}{2}} \sup_{ \sigma\in\Omega(\gamma)}\exp{\{ 2(\beta \varepsilon)^2|\Sp^-(\gamma,\sigma)|\}},
\end{align*}
where in the equality we used the Gaussian integral formula. Notice that, since in this problem we need to take $\beta$ large and $\varepsilon$ small, we can make this choice in a way that $8\beta \varepsilon^2 \leq c_2$. This observation, together with Equations \eqref{Eq: Upper.bound.on.Q.1}, \eqref{Eq: Upper.bound.on.Q.2} and \eqref{Eq: Upper.bound.on.Q.3} and the inequality above yields
\begin{align*}
     \mathbb{Q}_{\Lambda; \beta, \varepsilon}^+(\sigma_0 = -1) &\leq  \sum_{\substack{\gamma\in \mathcal{E}_\Lambda^+\\ 0\in V(\gamma)}} 2^{|\gamma|}\exp{\{{- \beta( \frac{3c_2}{4} - 2\beta\varepsilon^2)|\gamma|}\}} + e^{-C_1/\varepsilon^2} \\
     &\leq \sum_{n\geq 1}\sum_{\substack{\gamma\in \mathcal{E}_\Lambda^+, |\gamma|=n \\ 0\in V(\gamma)}} \exp{\{{(-\beta \frac{c_2}{2} + \log2)n}\}} + e^{-C_1/\varepsilon^2}\\
     &\leq \sum_{n\geq 1}|\mathcal{C}_0(n)| \exp{\{{(-\beta \frac{c_2}{2} +\log2)n}\}} + e^{-C_1/\varepsilon^2} \\
    &\leq \sum_{n\geq 1} e^{(b_6 -\beta \frac{c_2}{2} +\log2)n} + e^{-C_1/\varepsilon^2}.
\end{align*}
When $\beta$ is large enough, the sum above converges and there exists a constant $C$ such that   
\begin{equation*}
    \mathbb{Q}_{\Lambda; \beta, \varepsilon}^+(\sigma_0 = -1) \leq e^{-\beta 2C} + e^{-2C / \varepsilon^2}.
\end{equation*}
The Markov Inequality finally yields
\begin{align*}
    \mathbb{P}\left( \nu_{\Lambda; \beta, \varepsilon h}^+(\sigma_0 = -1) \geq e^{-C\beta} + e^{-C/\varepsilon^2}\right) &\leq \frac{\mathbb{Q}_{\Lambda; \beta, \varepsilon}^+(\sigma_0 = -1)}{e^{-C\beta} - e^{-C/\varepsilon^2}} \\
    &\leq \frac{e^{-\beta 2C} + e^{-2C / \varepsilon^2}}{e^{-C\beta} + e^{-C/\varepsilon^2}} \leq e^{-C\beta} + e^{-C/\varepsilon^2},
\end{align*}
what proves our claim.
\end{proof}      

\chapter*{Concluding Remarks}
\addcontentsline{toc}{chapter}{Concluding Remarks}
\markboth{Concluding Remarks}{}
In this thesis, we studied two Ising models: the semi-infinite Ising model with a decaying field and the long-range random field Ising model. 

For the semi-infinite Ising model with an external field of the form $h_i = \lambda|i_d|^{-\delta}$, where $\lambda$ is the wall influence, we proved the existence of a critical value $\overline{\lambda}_c$ for which there is phase transition for $0\leq \lambda<\overline{\lambda}_c$ and there is uniqueness for $\lambda>\overline{\lambda}_c$. Moreover, we proved that when $\delta<1$, $\overline{\lambda}_c=0$ hence there is uniqueness for all temperatures and all wall parameters.

In the semi-infinite model, the external field $\overline{\lambda}_c$ resembles the long-range Ising model interaction. One natural question is if the phase transition results can be extended to a long-range semi-infinite Ising model. The case $\delta>1$ should be expandable with little effort. However, the proof of uniqueness when $\delta<1$ should be far from trivial. 

For the critical exponent $\delta=1$, the Ising model with external field considered in \cite{Bissacot_Cass_Cio_Pres_15} presents phase transition for a region of the parameters. This indicates that the semi-infinite model with a decaying field should also present a phase transition. However, our methods are not well suited to treat this case since the wall-free energy may be ill-defined. 

One last thing regarding this model is that, on \cite{FP-II}, they show that the macroscopic behavior of the system is as expected: there is a macroscopic layer on the wall when $\lambda>\lambda_c$, and the same do not happen when $\lambda<\lambda_c$. This macroscopic behavior is described using the Peierls' contours configuration for "defect" boundary conditions: if there is a contour surrounding the origin, the wall layer is thick, otherwise, it is thin. However, the Peierls' contours are not well suited for dealing with long-range interactions, so an interesting question is how to define the macroscopic behavior of the system in terms of the long-range contours considered in \cite{Affonso.2021, Affonso.Bissacot.Maia.23}. \\

In the second part of the thesis, we proved the existence of phase transitions for the long-range random field Ising model in $d\geq 3$ and $\alpha >d$, by following a new method of proving phase transition introduced by Ding and Zhuang \cite{Ding2021}, and using a modification of multidimensional contours defined in \cite{Affonso.2021}. The key part of the argument was to extend the results of \cite{FFS84} to contours that are not necessarily connected. This proof can be extended to other models with a contour system, as long as the probability of a bad event $\mathcal{E}^c$ decreases to zero for large $\varepsilon$.

For the long-range random field model, there is a wide range of very interesting problems to study next. The most interesting one is the presence of phase transition in dimensions $d=1,2$. In $d=1$, it is expected that the PT occurs in the entire region $\alpha\in (1, \frac{3}{2})$, without the restriction $J(1)>>1$ presented in \cite{Cassandro.Picco.09}. In two dimensions, as shown in \cite{Ding.Wirth.20}, we are unable to control the greedy animal lattice normalized by the boundary, so the PT results do not extend to this region. Also, it is expected that the long-range model random field model preserves the critical temperature as the model without an external field, in the same sense as the RFIM \cite{Ding.Liu.Xia.22}.

Regarding the decay of correlations, not much is known about the long-range model. The most prominent results are in one dimension, where Imbrie and Newman \cite{Imbrie.82, Imbrie.Newman.88} showed a polynomial decay rate, that matches the interaction at high temperatures. In \cite{Iag77}, the authors use the GHS inequality to show that, when there is no external field,  the truncated correlation of two points $x,y$ is always larger than $c(d,\alpha)J_{xy}$. The only known result for the long-range random field model is \cite{Klein.Masooman.97}, where they show a polynomial decay of correlations at high external fields or high temperatures.

A first interesting question is if this lower bound of \cite{Iag77} still holds when we have a random field. For an upper bound, a new correlation inequality \cite{Ding_Song_Sun_23} shows that the addition of any field only increases the difference of the magnetization at plus and minus boundary conditions. As an application, the bounds of Imbrie and Newman can be used to control the difference of the magnetization in the random field case. However, their results are limited to the region $\alpha\geq 2$, so it is still left to study the behavior of the correlations in the region $\alpha\in[\frac{3}{2},2)$. Similarly, in $d=2$, the RFIM presents exponential decay of correlations \cite{Aizenman_Harel_Peled_20, Ding_Xia_21}. For the long-range model the decay should be slower (polynomial) in the uniqueness region  $\alpha \geq 3$.


\backmatter \singlespacing   
\bibliographystyle{hacm}
\bibliography{bib}  

\begin{thebibliography}{10}

\bibitem{Affonso.2021}
{\sc Affonso, L., Bissacot, R., Endo, E.~O., and Handa, S.}
\newblock Long-Range Ising Models: Contours, Phase Transitions and Decaying Fields, 2021, 2105.06103.

\bibitem{Affonso.Bissacot.Maia.23}
{\sc Affonso, L., Bissacot, R., and Maia, J.}
\newblock Phase Transition for Long-Range Random Field Ising Model in Higher Dimensions, 2023, 2307.14150.

\bibitem{Aharony_Imry_Ma_76}
{\sc Aharony, A., Imry, Y., and Ma, S.-k.}
\newblock Lowering of dimensionality in phase transitions with random fields.
\newblock {\em Physical Review Letters 37}, 20 (1976), 1364.

\bibitem{Aizenman_Greenblatt_Lebowitz_2012}
{\sc Aizenman, M., Greenblatt, R.~L., and Lebowitz, J.~L.}
\newblock Proof of rounding by quenched disorder of first order transitions in low-dimensional quantum systems.
\newblock {\em Journal of Mathematical Physics 53}, 2 (2012).

\bibitem{Aizenman_Harel_Peled_20}
{\sc Aizenman, M., Harel, M., and Peled, R.}
\newblock Exponential Decay of Correlations in the 2D Random Field Ising Model.
\newblock {\em Journal of Statistical Physics 180}, 1-6 (2020), 304--331.

\bibitem{Aizenman.Wehr.90}
{\sc Aizenman, M., and Wehr, J.}
\newblock Rounding effects of quenched randomness on first-order phase transitions.
\newblock {\em Communications in Mathematical Physics 130\/} (1990), 489--528.

\bibitem{Anderson_Yuval_69}
{\sc Anderson, P.~W., and Yuval, G.}
\newblock Exact Results in the Kondo Problem: Equivalence to a Classical One-Dimensional Coulomb Gas.
\newblock {\em Phys. Rev. Lett. 23}, 2 (1969), 89--92.

\bibitem{Aoun_Ott_Velenik_23}
{\sc Aoun, Y., Ott, S., and Velenik, Y.}
\newblock Fixed-magnetization Ising model with a slowly varying magnetic field, 2023, 2307.03139.

\bibitem{Balister_Bollobas_07}
{\sc Balister, P.~N., and Bollob{\'a}s, B.}
\newblock Counting regions with bounded surface area.
\newblock {\em Communications in mathematical physics 273}, 2 (2007), 305--315.

\bibitem{Berretti.85}
{\sc Berretti, A.}
\newblock Some properties of random Ising models.
\newblock {\em Journal of Statistical Physics 38\/} (1985), 483--496.

\bibitem{Biskup_Borgs_Chayes_Kotecky_00}
{\sc Biskup, M., Borgs, C., Chayes, J.~T., and Koteck\'{y}, R.}
\newblock Gibbs states of Graphical Representations of the Potts Model with External Fields.
\newblock {\em J. Math. Phys. 41\/} (2000), 1170--1210.

\bibitem{Biskup_Chayes_Kivelson_07}
{\sc Biskup, M., Chayes, L., and Kivelson, S.~A.}
\newblock On the absence of ferromagnetism in typical 2D ferromagnets.
\newblock {\em Commun. Math. Phys. 274\/} (2007), 217--231.

\bibitem{Bissacot_Cass_Cio_Pres_15}
{\sc Bissacot, R., Cassandro, M., Cioletti, L., and Presutti, E.}
\newblock Phase Transitions in Ferromagnetic Ising Models with Spatially Dependent Magnetic Fields.
\newblock {\em Communications in Mathematical Physics 337\/} (2015), 41--53.

\bibitem{Bissacot_Cioletti_10}
{\sc Bissacot, R., and Cioletti, L.}
\newblock Phase Transition in Ferromagnetic Ising Models with Non-uniform External Magnetic Fields.
\newblock {\em J. Stat. Phys. 139\/} (2010), 769--778.

\bibitem{Bissacot_Endo_Enter_2017}
{\sc Bissacot, R., Endo, E.~O., and van Enter, A.~C.}
\newblock Stability of the phase transition of critical-field Ising model on Cayley trees under inhomogeneous external fields.
\newblock {\em Stochastic Processes and their Applications 127}, 12 (2017), 4126--4138.

\bibitem{Bissacot.Endo.18}
{\sc Bissacot, R., Endo, E.~O., van Enter, A. C.~D., Kimura, B., and Ruszel, W.~M.}
\newblock Contour methods for long-range Ising models: weakening nearest-neighbor interactions and adding decaying fields.
\newblock {\em Annales Henri Poincar{\'e} 19\/} (2018), 2557--2574.

\bibitem{Bovier.06}
{\sc Bovier, A.}
\newblock {\em Statistical Mechanics of Disordered Systems: A Mathematical Perspective}.
\newblock Cambridge Series in Statistical and Probabilistic Mathematics. Cambridge University Press, 2006.

\bibitem{Bricmont.Kupiainen.88}
{\sc Bricmont, J., and Kupiainen, A.}
\newblock Phase transition in the 3d random field Ising model.
\newblock {\em Communications in Mathematical Physics 116\/} (1988), 539--572.

\bibitem{BLP.1980}
{\sc Bricmont, J., Lebowitz, J., and Pfister, C.-E.}
\newblock On the surface tension of lattice systems.
\newblock {\em Annals of the New York Academy of Sciences 337\/} (1980), 214.

\bibitem{Camia.18}
{\sc Camia, F., Jiang, J., and Newman, C.~M.}
\newblock A note on exponential decay in the random field Ising model.
\newblock {\em Journal of Statistical Physics 173\/} (2018), 268--284.

\bibitem{Cassandro.05}
{\sc Cassandro, M., Ferrari, P.~A., Merola, I., and Presutti, E.}
\newblock Geometry of contours and Peierls estimates in d= 1 Ising models with long range interactions.
\newblock {\em Journal of Mathematical Physics 46\/} (2005), 053305.

\bibitem{Cassandro.Merola.Picco.17}
{\sc Cassandro, M., Merola, I., and Picco, P.}
\newblock Phase separation for the long range one-dimensional Ising model.
\newblock {\em Journal of Statistical Physics 167\/} (2017), 351--382.

\bibitem{Cassandro.Merola.Picco.Rozikov.14}
{\sc Cassandro, M., Merola, I., Picco, P., and Rozikov, U.}
\newblock One-dimensional Ising models with long range interactions: cluster expansion, phase-separating point.
\newblock {\em Communications in Mathematical Physics 327\/} (2014), 951--991.

\bibitem{Cassandro.Picco.09}
{\sc Cassandro, M., Orlandi, E., and Picco, P.}
\newblock Phase Transition in the 1d Random Field Ising Model with long range interaction.
\newblock {\em Communications in Mathematical Physics 288\/} (2009), 731--744.

\bibitem{Cioletti_Vila_2016}
{\sc Cioletti, L., and Vila, R.}
\newblock Graphical representations for Ising and Potts models in general external fields.
\newblock {\em Journal of Statistical Physics 162}, 1 (2016), 81--122.

\bibitem{Cox_Gandolfi_Griffin_Kesten_93}
{\sc Cox, J.~T., Gandolfi, A., Griffin, P.~S., and Kesten, H.}
\newblock Greedy Lattice Animals I: Upper Bounds.
\newblock {\em The Annals of Applied Probability 3}, 4 (1993), 1151--1169.

\bibitem{Ding.Liu.Xia.22}
{\sc Ding, J., Liu, Y., and Xia, A.}
\newblock Long range order for three-dimensional random field Ising model throughout the entire low temperature regime, 2022, 2209.13998.

\bibitem{Ding_Song_Sun_23}
{\sc Ding, J., Song, J., and Sun, R.}
\newblock {A new correlation inequality for Ising models with external fields}.
\newblock {\em Probab. Theor. Related Fields 186}, 1-2 (2023), 477--492.

\bibitem{Ding.Wirth.20}
{\sc Ding, J., and Wirth, M.}
\newblock {Correlation length of the two-dimensional random field Ising model via greedy lattice animal}.
\newblock {\em Duke Mathematical Journal 172}, 9 (2023).

\bibitem{Ding_Xia_21}
{\sc Ding, J., and Xia, J.}
\newblock Exponential decay of correlations in the two-dimensional random field Ising model.
\newblock {\em Inventiones mathematicae 224\/} (06 2021), 1--47.

\bibitem{Ding2021}
{\sc Ding, J., and Zhuang, Z.}
\newblock Long range order for random field Ising and Potts models.
\newblock {\em Communications on Pure and Applied Mathematics 77}, 1 (2024), 37--51.

\bibitem{Dudley67}
{\sc Dudley, R.~M.}
\newblock The sizes of compact subsets of Hilbert space and continuity of Gaussian processes.
\newblock {\em Journal of Functional Analysis 1\/} (1967), 290--330.

\bibitem{Dyson.69}
{\sc Dyson, F.~J.}
\newblock Existence of a phase-transition in a one-dimensional Ising ferromagnet.
\newblock {\em Communications in Mathematical Physics 12\/} (1969), 91--107.

\bibitem{FFS84}
{\sc {Fisher}, D.~S., {Fr{\"o}hlich}, J., and {Spencer}, T.}
\newblock {The Ising model in a random magnetic field}.
\newblock {\em Journal of Statistical Physics 34}, 5-6 (1984), 863--870.

\bibitem{FKG}
{\sc Fortuin, C.~M., Ginibre, J., and Kasteleyn, P.~W.}
\newblock {Correlation inequalities on some partially ordered sets}.
\newblock {\em Communications in Mathematical Physics 22}, 2 (1971), 89 -- 103.

\bibitem{FV-Book}
{\sc Friedli, S., and Velenik, Y.}
\newblock {\em Statistical Mechanics of Lattice Systems: A Concrete Mathematical Introduction}.
\newblock Cambridge University Press, 2017.

\bibitem{Frohlich.Imbre.84}
{\sc Fr{\"o}hlich, J., and Imbrie, J.~Z.}
\newblock {Improved perturbation expansion for disordered systems: beating Griffiths singularities}.
\newblock {\em Communications in Mathematical Physics 96\/} (1984), 145 -- 180.

\bibitem{FP-I}
{\sc Fr{\"o}hlich, J., and Pfister, C.-E.}
\newblock Semi-infinite Ising model. I. Thermodynamic functions and phase diagram in absence of magnetic field.
\newblock {\em Communications in Mathematical Physics 109}, 3 (1987), 493--523.

\bibitem{FP-II}
{\sc Fr{\"o}hlich, J., and Pfister, C.-E.}
\newblock Semi-infinite Ising model. II. The wetting and layering transitions.
\newblock {\em Communications in Mathematical Physics 112}, 1 (1987), 51--74.

\bibitem{FS81}
{\sc Fr{\"o}hlich, J., and Spencer, T.}
\newblock The Kosterlitz-Thouless transition in two-dimensional Abelian spin systems and the Coulomb gas.
\newblock {\em Communications in Mathematical Physics 81\/} (1981), 527--602.

\bibitem{Frohlich.Spencer.82}
{\sc Fr{\"o}hlich, J., and Spencer, T.}
\newblock The Phase Transition in the one-dimensional {I}sing model with $1/r^2$ interaction energy.
\newblock {\em Communications in Mathematical Physics 84\/} (1982), 87--101.

\bibitem{Gandolfi_Kesten_94}
{\sc Gandolfi, A., and Kesten, H.}
\newblock Greedy Lattice Animals II: Linear Growth.
\newblock {\em The Annals of Applied Probability 4}, 1 (1994), 76--107.

\bibitem{georgii.gibbs.measures}
{\sc Georgii, H.}
\newblock {\em Gibbs Measures and Phase Transitions}.
\newblock De Gruyter studies in mathematics. De Gruyter, 2011.

\bibitem{Ginibre.Grossmann.Ruelle.66}
{\sc Ginibre, J., Grossmann, A., and Ruelle, D.}
\newblock Condensation of lattice gases.
\newblock {\em Communications in Mathematical Physics 3\/} (1966), 187--193.

\bibitem{G}
{\sc Griffiths, R.~B.}
\newblock Correlations in Ising Ferromagnets. I.
\newblock {\em Journal of Mathematical Physics 8}, 3 (1967), 478--483.

\bibitem{Hammond_06}
{\sc Hammond, A.}
\newblock {Greedy lattice animals: Geometry and criticality}.
\newblock {\em The Annals of Probability 34}, 2 (2006), 593--637.

\bibitem{Iag77}
{\sc Iagolnitzer, D., and Souillard, B.}
\newblock {Decay of correlations for slowly decreasing potentials}.
\newblock {\em Physical Review A 16\/} (1977), 1700--4.

\bibitem{Imbrie.82}
{\sc Imbrie, J.~Z.}
\newblock Decay of correlations in the one dimensional Ising model with {${J_{ij}=|i-j|^{-2}}$}.
\newblock {\em Communications in Mathematical Physics 85\/} (1982), 491--515.

\bibitem{Imbrie.Newman.88}
{\sc Imbrie, J.~Z., and Newman, C.~M.}
\newblock An intermediate phase with slow decay of correlations in one dimensional {${1/|x- y|^2}$} percolation, Ising and Potts models.
\newblock {\em Communications in Mathematical Physics 118\/} (1988), 303--336.

\bibitem{Imry.Ma.75}
{\sc Imry, Y., and Ma, S.}
\newblock Random-Field Instability of the Ordered State of Continuous Symmetry.
\newblock {\em Physical Review Letters 35\/} (1975), 1399--1401.

\bibitem{IV18}
{\sc Ioffe, D., and Velenik, Y.}
\newblock Low temperature interfaces: Prewetting, layering, faceting and Ferrari-Spohn diffusions.
\newblock {\em Mark. Proc. Rel. Fields 24}, 1 (2018), 487--537.

\bibitem{Ising_25}
{\sc Ising, E.}
\newblock Beitrag zur theorie des ferromagnetismus.
\newblock {\em Zeitschrift f{\"u}r Physik 31}, 1 (1925), 253--258.

\bibitem{Johansson.91}
{\sc Johansson, K.}
\newblock Condensation of a one-dimensional lattice gas.
\newblock {\em Communications in Mathematical Physics 141\/} (1991), 41--61.

\bibitem{Kac_Thompson_69}
{\sc Kac, M., and Thompson, C.~J.}
\newblock Critical Behaviour of Several Lattice Models with Long-Range Interaction.
\newblock {\em J. Math. Phys. 10\/} (1969), 1373--1386.

\bibitem{Kelly_Sherman_68}
{\sc Kelly, D.~G., and Sherman, S.}
\newblock General Griffiths' Inequalities on Correlations in Ising Ferromagnets.
\newblock {\em Journal of Mathematical Physics 9}, 3 (1968), 466--484.

\bibitem{Sinai_Abel_Prize}
{\sc Kelly, M.}
\newblock Sinai receives Abel Prize for lifelong influence on mathematics, 2014.

\bibitem{Klein.Masooman.97}
{\sc Klein, A., and Masooman, S.}
\newblock Taming Griffiths' Singularities in Long Range Random Ising Models.
\newblock {\em Communications in Mathematical Physics\/} (1997), 497--512.

\bibitem{lebowitz1974ghs}
{\sc Lebowitz, J.}
\newblock GHS and other inequalities.
\newblock {\em Communications in Mathematical Physics 35}, 2 (1974), 87--92.

\bibitem{Lebowitz_Pfister_81}
{\sc Lebowitz, J., and Pfister, C.-E.}
\newblock Surface tension and phase coexistence.
\newblock {\em Physical Review Letters 46}, 15 (1981), 1031.

\bibitem{Ledoux.Talagrand.91}
{\sc Ledoux, M., and Talagrand, M.}
\newblock {\em Probability in Banach Spaces: Isoperimetry and Processes}.
\newblock A Series of Modern Surveys in Mathematics Series. Springer, 1991.

\bibitem{Lee-Yang.II.1952}
{\sc Lee, T.~D., and Yang, C.~N.}
\newblock Statistical Theory of Equations of State and Phase Transitions. II. Lattice Gas and Ising Model.
\newblock {\em Physical Review 87\/} (1952), 410--419.

\bibitem{Martin_02}
{\sc Martin, J.~B.}
\newblock Linear growth for greedy lattice animals.
\newblock {\em Stochastic Processes and their Applications 98}, 1 (2002), 43--66.

\bibitem{Parisi.Sourlas.79}
{\sc Parisi, G., and Sourlas, N.}
\newblock Random Magnetic Fields, Supersymmetry, and Negative Dimensions.
\newblock {\em Physical Review Letters 43\/} (1979), 744--745.

\bibitem{Park.88.I}
{\sc Park, Y.~M.}
\newblock Extension of Pirogov-Sinai theory of phase transitions to infinite range interactions I. Cluster expansion.
\newblock {\em Communications in Mathematical Physics 114\/} (1988), 187--218.

\bibitem{Park.88.II}
{\sc Park, Y.~M.}
\newblock Extension of Pirogov-Sinai theory of phase transitions to infinite range interactions. II. Phase diagram.
\newblock {\em Communications in Mathematical Physics 114\/} (1988), 219--241.

\bibitem{Peierls.1936}
{\sc Peierls, R.}
\newblock On Ising's model of ferromagnetism.
\newblock In {\em Mathematical Proceedings of the Cambridge Philosophical Society\/} (1936), vol.~32, pp.~477--481.

\bibitem{Pirogov.Sinai.75}
{\sc Pirogov, S.~A., and Sinai, Y.~G.}
\newblock Phase diagrams of classical lattice systems.
\newblock {\em Teoreticheskaya i Matematicheskaya Fizika 25\/} (1975), 358--369.

\bibitem{Talagrand_87}
{\sc Talagrand, M.}
\newblock {Regularity of gaussian processes}.
\newblock {\em Acta Mathematica 159\/} (1987), 99 -- 149.

\bibitem{Talagrand_14}
{\sc Talagrand, M.}
\newblock {\em Upper and lower bounds for stochastic processes}, vol.~60.
\newblock Springer, 2014.

\bibitem{Zahradnik.84}
{\sc Zahradn{\'\i}k, M.}
\newblock An alternate version of Pirogov-Sinai theory.
\newblock {\em Communications in Mathematical Physics 93\/} (1984), 559--581.

\end{thebibliography}

\end{document}